\documentclass[a4paper,10pt]{article}
\usepackage{etoolbox}
\robustify{\underline}
\usepackage[latin1]{inputenc}
\usepackage{enumerate}
\usepackage[english]{babel}
\usepackage[T1]{fontenc}
\usepackage{amsmath,amssymb}
\usepackage[euler]{textgreek}
\usepackage{array,booktabs}
\usepackage{graphicx}
\usepackage{amsthm}
\usepackage{bbm}
\usepackage{bm}
\usepackage{parskip}
\usepackage{mathtools}
\usepackage{mathabx}
\usepackage{tikz}
\usepackage{upgreek}
\usepackage{soul}
\usepackage{bigints}
\usepackage{hyperref}
\usepackage{
nameref,%\nameref
%hyperref,%\autoref
% n.b. \Autoref is defined by thmtools
cleveref,% \cref
% n.b. cleveref after! hyperref
}
\usetikzlibrary{positioning ,shadows,matrix,calc,decorations.pathmorphing,shapes,automata}
\usepackage{upgreek}
\usepackage{cmbright}
\usepackage{appendix}
\usepackage{thmtools}
\usepackage{wasysym}
\usepackage{slashed}
\usepackage{concmath}
\usepackage{bigints}
\usepackage{setspace}
\usepackage{scalerel,stackengine}
\stackMath
\newcommand\reallywidehat[1]{%
\savestack{\tmpbox}{\stretchto{%
  \scaleto{%
    \scalerel*[\widthof{\ensuremath{#1}}]{\kern-.6pt\bigwedge\kern-.6pt}%
    {\rule[-\textheight/2]{1ex}{\textheight}}%WIDTH-LIMITED BIG WEDGE
  }{\textheight}% 
}{0.5ex}}%
\stackon[1pt]{#1}{\tmpbox}%
}

\newtheorem{thm}{Theorem}
\newtheorem{theorem*}{Theorem}
\newtheorem{lemma}[thm]{Lemma}
\newtheorem{proposition}[thm]{Proposition}
\newtheorem{corollary}{Corollary}
\newtheorem{corollary*}{Corollary}
\newtheorem{defin}{Definition}
\theoremstyle{definition}

\newtheorem{remark}{Remark}
\newtheorem{remark*}{Remark}
\numberwithin{thm}{subsection}
\numberwithin{lemma}{subsection}
\numberwithin{proposition}{subsection}
\numberwithin{corollary}{subsection}
\numberwithin{remark}{subsection}
\numberwithin{defin}{subsection}
\newtheorem*{notation*}{Notation}

\crefname{thm}{theorem}{theorems}
\Crefname{thm}{Theorem}{Theorems}
\crefname{proposition}{proposition}{propositions}
\Crefname{proposition}{Proposition}{Propositions}
\crefname{corollary}{corollary}{corollaries}
\Crefname{corollary}{Corollary}{Corollaries}
\crefname{defin}{definition}{definitions}
\Crefname{defin}{Definition}{Definitions}
\crefname{remark}{remark}{remarks}
\Crefname{remark}{Remark}{Remarks}
\usepackage{lmodern}
\usepackage[margin=3cm]{geometry}
\parindent=0pt
\usepackage{verbatim}
\usepackage{fullpage}
\usepackage{amsthm}
\usepackage[affil-it]{authblk}
\usepackage{mathrsfs}
\usepackage{graphicx}
\usepackage[nottoc,notlot,notlof]{tocbibind}

\newcommand{\nablau}{\Omega\slashed{\nabla}_3}
\newcommand{\nablav}{\Omega\slashed{\nabla}_4}
\newcommand{\nablagml}{\slashed{\nabla}_3}

\newcommand{\fancyd}{\slashed{\mathcal{D}}}
\newcommand{\fancydstar}{\slashed{\mathcal{D}}^*}

\newcommand{\fbar}{\underline{f}}
\newcommand{\gbar}{\underline{g}}
\newcommand{\overone}{\stackrel{\mbox{\scalebox{0.4}{(1)}}}}
\newcommand{\slashednabla}{\slashed{\nabla}}
\newcommand{\fscri}{\mathscr{I}^+}
\newcommand{\TSp}{\mathop{\mathbb{TS}^+}}
\newcommand{\TSm}{\mathop{\mathbb{TS}^-}}
\newcommand{\trx}{\overone{\Omega tr\chi}}
\newcommand{\trxbar}{\overone{\Omega tr\underline\chi}}

\numberwithin{equation}{section}
\newcommand{\invertedalpha}{\raisebox{\depth}{\scalebox{1}[-1]{$\alpha$}}}
\newcommand{\invertedpsi}{\raisebox{\depth}{\scalebox{1}[-1]{$\Psi$}}}

\def\J{\ensuremath J}

\DeclareMathOperator{\tr}{tr}
\usepackage{multirow}
\usepackage{titling}
\usepackage{breqn}
\usepackage{tensor}
\graphicspath{ {/home/hm532} }
\setcounter{tocdepth}{2}
\usepackage[backend=biber,maxbibnames=99]{biblatex}
\addbibresource{biblio.bib}

\newcommand{\fourthorder}{\mathring{\fancydstar_2}\mathring{\fancydstar_1}\mathring{\overline{\fancyd_1}}\mathring{\fancyd_1}}
\def\equationautorefname~#1\null{(#1)\null}

\makeatletter
\patchcmd{\@maketitle}{\LARGE \@title}{\fontsize{16}{19.2}\selectfont\@title}{}{}
\makeatother

\setlength{\parindent}{18pt}

\title{A Scattering Theory for Linearised Gravity\\ on the Exterior of the Schwarzschild Black Hole I:\\
The Teukolsky Equations}
\date{July 27, 2020}
\author{Hamed Masaood}
\affil{ University of Cambridge, Department of Applied Mathematics and Theoretical Physics\\
Wilberforce Road, Cambridge CB3 0WA, United Kingdom}

\newcommand\bref[1]{(\ref{#1})}
\hyphenation{Schwarz-schild}
\begin{document}

%\footnotetext{$^*$hm532@cam.ac.uk}
\setcounter{tocdepth}{3}
\maketitle

\begin{abstract}
We construct a scattering theory for the spin $\pm2$ Teukolsky equations on the exterior of the Schwarzschild spacetime, as a first step towards developing a scattering theory for the linearised Einstein equations in double null gauge. This is done by exploiting a physical-space version of the Chandrasekhar transformation used by Dafermos, Holzegel and Rodnianski in \cite{DHR16} to prove the linear stability of the Schwarzschild solution. We also address the Teukolsky--Starobinsky correspondence and construct an isomorphism between scattering data for the $+2$ and $-2$ Teukolsky equations. This will allow us to state an additional mixed scattering statement for a pair of curvature components satisfying the spin $+2$ and $-2$ Teukolsky equations and connected via the Teukolsky--Starobinsky identities, completely determining the radiating degrees of freedom of solutions to the linearised Einstein equations.
\end{abstract}
\tableofcontents
\section{Introduction and overview}\label{section 1 introduction}
Scattering theory has been an important tool in the mathematical and theoretical study of black hole solutions to the Einstein equations, which in vacuum take the form
\begin{align}\label{EVE}
    R_{ab}[g]=0
\end{align}
(setting the cosmological constant to zero). Whereas there has been extensive work on scattering for scalar, electromagnetic, fermionic fields on black hole backgrounds (see already \cite{DimockKayI}, \cite{BachelotAFMaxwell}, \cite{Nicolas}, \cite{DRSR14}, \cite{DaudeNicoleau}), in the case of the scattering of gravitational perturbations much of the historic literature has been concerned with solutions to equations governing fixed frequency modes (see \cite{Chandrasekhar}, \cite{HandlerFuttermanMatzner} for an extensive survey, and the very recent \cite{SRTdC}), and comparatively little has been said about scattering theory on black holes \textit{in physical space}. The aim of this work is to address this vacancy for the case of linearised gravitational perturbations around the Schwarzschild exterior, which in familiar coordinates has the metric \cite{Schwarzschild}:
\begin{align}\label{SchwMetric}
     g=-\left(1-\frac{2M}{r}\right)dt^2+\left(1-\frac{2M}{r}\right)^{-1}dr^2+ r^2\left(d\theta^2+\sin^2\theta d\phi^2\right).
\end{align}
The subject of scattering theory is the study of perturbations evolved on scales that are large in comparison to a characteristic scale of the perturbed system. More concretely, scattering theory is relevant when the perturbations are meant to be asymptotically free from the effects of the target. In this picture, incoming and outgoing perturbations are approximated by solutions describing "free" propagation. A mathematical description of scattering hinges on an appropriate and rigorous formulation of these ideas, and much of the value of scattering theory lies in the identification of the correct candidates for spaces of "scattering states" that describe incoming and outgoing perturbations. In these terms, a satisfactory scattering theory must provide answers to the following questions:
\begin{enumerate}[I]
\item \textit{Existence of scattering states}: Is there an interesting class of initial data that evolve to solutions which can be associated with past/future scattering states?\label{QI}
\item \textit{Uniqueness of scattering states}: Is the above association injective? Do solutions that give rise to the same scattering state coincide?\label{QII}
\item \textit{Asymptotic completeness}: Does this association exhaust the class of initial data of interest?\label{QIII}
\end{enumerate}
Because of the nonlinear nature of the Einstein equations \bref{EVE}, the study of scattering in general relativity is dependent on a thorough understanding of the perturbative behaviour of the equations. As a first step, it is useful to understand the evolution of solutions to the linearised Einstein equations, which are obtained by formally expanding a family of solutions in some smallness parameter $\epsilon$ around some fixed background, e.g.~\bref{SchwMetric}, and keeping only leading order terms in $\epsilon$ in the equations \bref{EVE}. Studying the evolution of linear equations on black hole backgrounds has its own appeal, as black holes by their very nature are immune to "direct" observation and even their existence can only be inferred by examining their effects on the propagation of wave phenomena in spacetime. The linearised Einstein equations still inherit many of the features as well as the difficulties that plague the study of the nonlinear equations.\\
\indent A foundational breakthrough in the analysis of the linearised equations was discovered by Bardeen and Press \cite{Bardeen-Press} in the case of the Schwarzschild black hole \bref{SchwMetric} and Teukolsky \cite{TeukP74} in the case of the Kerr black hole \cite{Kerr}, who showed that by casting the equations of linearised gravity in the Newman--Penrose formalism, it is possible to identify gauge-invariant components of the curvature that obey 2nd order {\em decoupled} wave equations, which on the Schwarzschild spacetime take the forms
\begin{align}\label{wave equation +}
     \Box_g \Omega^2\alpha +\frac{4}{r\Omega^2}\left(1-\frac{3M}{r}\right)\partial_u \Omega^2\alpha=V(r) \Omega^2\alpha,
\end{align}
\begin{align}\label{wave equation -}
     \Box_g \Omega^2\underline\alpha -\frac{4}{r\Omega^2}\left(1-\frac{3M}{r}\right)\partial_v \Omega^2\underline\alpha=V(r) \Omega^2\underline\alpha.
\end{align}
Here, $\Box_g$ is the d'Alembertian operator of the Schwarzschild metric $g$, $\alpha, \underline\alpha$ are symmetric traceless $S^2$-tangent 2-tensor fields, $\Omega^2=1-\frac{2M}{r}$ and $V=\frac{2(3\Omega^2+1)}{r^2}$ (see already \Cref{Chandra1}). Equations \bref{wave equation -}, \bref{wave equation +} are known as the \textbf{Teukolsky equations of spin $\bm{+2}$ and $\bm{-2}$} respectively.\\
\indent In addition to the Teukolsky equations \bref{wave equation +}, \bref{wave equation -}, the quantities $\alpha, \underline\alpha$ satisfy a closed system of equations known as the Teukolsky--Starobinsky identities:
\begin{align}
\frac{\Omega^2}{r^2}\Omega\slashed{\nabla}_3 \left(\frac{r^2}{\Omega^2}\nablau\right)^3\alpha=2r^4\slashed{\mathcal{D}}^*_2\slashed{\mathcal{D}}^*_1\overline{\slashed{\mathcal{D}}}_1\slashed{\mathcal{D}}_2 r\Omega^2{\underline\alpha}+12M\partial_t\hspace{.5mm}r\Omega^2{\underline\alpha}, \label{eq:227intro1}\\
\frac{\Omega^2}{r^2}\Omega\slashed{\nabla}_4 \left(\frac{r^2}{\Omega^2}\Omega\slashed{\nabla}_4\right)^3{\underline\alpha}=2r^4\slashed{\mathcal{D}}^*_2\slashed{\mathcal{D}}^*_1\overline{\slashed{\mathcal{D}}}_1\slashed{\mathcal{D}}_2 r\Omega^2\alpha-12M\partial_t\hspace{.5mm}r\Omega^2\alpha.\label{eq:228intro1}
\end{align}
The purpose of this paper is to study the scattering theory of the Teukolsky equations \bref{wave equation +}, \bref{wave equation -} as a prelude to studying scattering for the full system of linearised Einstein equations. This is done by first developing a scattering theory for \bref{wave equation +}, \bref{wave equation -} in particular addressing points \ref{QI}, \ref{QII}, \ref{QIII} above, and then bridging this scattering theory to the full system of linearised Einstein equations by incorporating the constraints \bref{eq:227intro1} and \bref{eq:228intro1}. A complete treatment of the full system will appear in the forthcoming \cite{M2050}. \\
\indent To elaborate on the ideas involved we go through a quick survey of the history of the subject. In \Cref{RedshiftScalar} we review known scattering theory for the scalar wave equation highlighting the role of redshift as a feature of scattering on black hole backgrounds. \Cref{LinearisedGravity} is a survey of the difficulties encountered in the study of scattering for the (linearised) Einstein equations, and will motivate and introduce the main results. \Cref{IntroResults} contains a preliminary statement of the results of this paper. \Cref{subsection 1.4 outline} contains an outline of the structure of the paper.
\subsection{Scattering for the scalar wave equation and the redshift effect}\label{RedshiftScalar} It is clear that understanding scattering for the scalar wave equation
\begin{align}\label{wave equation}
    \Box_{g} \phi=0
\end{align}
on a fixed Schwarzschild background \bref{SchwMetric} is a necessary prerequisite for our scattering problem, and already at this level we see many of the difficulties that characterise the evolution of perturbations to black holes. Much of the historical literature on scattering for \bref{wave equation} concerns the Schr\"odinger-like equation that results from a formal separation of \bref{wave equation} and governs the radial part. While this leads to important insights, it does not lead on its own to a satisfactory answer to points \ref{QI}, \ref{QII}, \ref{QIII} above. \\
\indent The first result on physical-space scattering for \bref{wave equation} on \bref{SchwMetric} goes back to Dimock and Kay \cite{DimockKayI}, who applied the Lax--Philips scattering theory to the scalar wave equation on the Schwarzschild spacetime. In \cite{Friedlander}, Friedlander's use of the radiation field at null infinity to describe future scattering states initiated an alternative method from the Lax--Philips formalism to a more geometric treatment of the notion of scattering states, and subsequent works have largely adhered to this point of view, see the discussion by Nicolas \cite{Nicolas}. The state of the art in this area is the work of Dafermos, Rodnianski and Shlapentokh-Rothman \cite{DRSR14}, where a complete understanding of scattering for the wave equation \bref{wave equation} on the Kerr exterior is laid out. The scattering problem for the scalar wave equation \bref{wave equation} on the extremal Reissner--Nordstr\"om background was definitively resolved in \cite{AAG19}. In the case of asymptotically de-Sitter black holes, we note the result \cite{HafnerGerardGeorgescu} on asymptotic completeness for the Klein--Gordon equation restricting to solution of fixed azimuthal modes against a very slowly rotating Kerr--de-Sitter black hole. Scattering for \bref{wave equation} has also been considered on the interior of the Reissner--Nordstr\"om black hole by Kehle and Shlapentokh-Rothman \cite{KSR18}.\\
\indent What leads to the rich theory available to \bref{wave equation} is the fact that it comes with a natural Lagrangian structure with which we can associate conservation laws encoded in the energy-momentum tensor:
\begin{align}
    T_{\mu\nu}[\phi]=\partial_\mu \phi \; \partial_\nu \phi-\frac{1}{2}g_{\mu\nu}\;\partial_\alpha\phi \;\partial^\alpha \phi,
\end{align}
which satisfies $\nabla_\mu T^\mu{}^\nu[\phi]=0$. Since the vector field $T:=\partial_t$ generates an isometry, classical scattering theory immediately suggests the class of solutions of finite $T$-energy, defined as the flux on a spacelike or null hypersurface of the quantity
\begin{align}
     n^\mu J^T_\mu[\phi],
\end{align}
where $J^X[\phi]_\mu=T_{\mu\nu}[\phi]X^\nu$ and $n^\mu$ is the vector field normal to the hypersurface, as this flux is non-negative definite and conserved. Solutions to \bref{wave equation} arising from suitable Cauchy data have sufficiently tame asymptotics to induce smooth radiation fields on $\mathscr{I}^+$ and $\mathscr{H}^+$. The conservation of $T$-energy allows us to resolve the scattering problem by constructing an isomorphism between the space of Cauchy data of finite energy and the corresponding space of radiation fields. With this, the answer to the questions \ref{QI},  \ref{QII}, \ref{QIII} of scattering theory for equation \bref{wave equation} is in the affirmative. \\
\indent At the same time, the fact that the vector field $T$ becomes null on the event horizon points to a deficiency, since the  $T$-energy density then loses control over some derivatives and the norm on the event horizon defined by the $T$-energy,
\begin{align}\label{horizon energy}
    \int_{\mathscr{H}^+}J^T_\mu[\phi]n^\mu_{\mathscr{H}^+},
\end{align}
is degenerate. The energy density observed along a horizon-penetrating timelike curve is better described by $J^N_\mu[\phi]$ for a timelike vector field $N$, but such a vector field cannot be Killing everywhere. The flux of this quantity is therefore not conserved and new issues appear, paramount among which is the \textit{redshift effect}.\\
\indent An intuitive hint of the role played by the redshift effect is the exponential decay in frequency that affects signals originating near the event horizon by the time they reach late-time observers, which relates to the divergence of outgoing null geodesics near the event horizon towards the future. It turns out that this effect can be exploited to produce nondegenerate energies useful for evolution in the future direction, precisely by choosing a timelike $N$ to be  a time-translation invariant vector field measuring the separation of null geodesics near the event horizon, see \cite{DR05}. In addition to using $N$ as a multiplier $X=N$, key to this method is the fact that commuting the wave equation \bref{wave equation} with such $N$ produces terms of lower order derivatives that come with a good sign when estimating the solution forwards. This can be traced to the positivity of the surface gravity; the fact that on $\mathscr{H}^+$, $\nabla_T T=\kappa T$ with $\kappa>0$. See \cite{DR08} for a detailed exposition.\\
\indent Unfortunately, when it comes to backwards evolution the technique described above does not work, as the redshift effect in the forwards evolution problem turns to a deleterious blueshift effect when evolving towards the past, and it is not possible to use the energy associated with $N$ to bound the solution in the backwards direction. Furthermore, it can be shown that there exists a large class of scattering data having a finite $N$-energy on the future event horizon $\mathscr{H}^+$ whose $N$-energy blows up evolving backwards, see \cite{DSR17}.\\
\indent Note that in the case of the Kerr exterior $(a\neq0)$ there is no obvious analogue of the $T$-energy scattering theory, as the stationary Killing vector field becomes spacelike in the ergoregion and therefore its flux no longer has a definite sign. Therefore, superradiance features as an additional aspect of scattering theory. One cannot hope for a unitary map, but one can still hope for a bounded invertible map. In view of the above discussion, the $N$-energy space is not appropriate however. One of the difficulties is indeed identifying the correct notion of energy.  See \cite{DRSR14} for the detailed treatment.
\subsection{Linearised gravity and the Teukolsky equations}\label{LinearisedGravity}
The above discussion involves linear \textit{scalar} perturbations only, i.e.~solutions to \bref{wave equation}, and little is known about the scattering theory of the Einstein equations even when linearised, see \cite{Chandrasekhar} and \cite{HandlerFuttermanMatzner} for a survey. Indeed, a comprehensive study of scattering under the Einstein equations \bref{EVE} on black hole exteriors involves and subsumes major aspects of the study of black hole stability. To date, full nonlinear stability for an asymptotically flat spacetime has only been satisfactorily proven for Minkowski space, see \cite{Ch-K}, \cite{Lin-Rod} for instance. For asymptotically flat black holes, stability results against generic perturbations exist only for the linearised Einstein equations, see \cite{DHR16} for the case of the Schwarzschild spacetime, \cite{DHR18}, \cite{Ma}, \cite{AnderssonKerr} and \cite{HVH19} for the case of very slowly rotating Kerr black holes, and \cite{SRTdC} for the general subextremal case. For the case of asymptotically de-Sitter black holes, results concerning the nonlinear stability of black hole solutions with positive cosmological constant do exist, see \cite{Hintz2018}.
\subsubsection{The Bianchi equations and the lack of a Lagrangian structure}\label{subsubsection 1.2.1 no lagrangian}
 In a spacetime satisfying the Einstein equations \bref{EVE} with a vanishing cosmological constant, the components of the Weyl curvature tensor satisfy the \textit{Bianchi equations}
\begin{align}\label{Bianchi}
    \nabla^a W_{abcd}=0.
\end{align} 
These equations, along with the equations defining the connection components, comprise the evolutionary content of the Einstein equations \bref{EVE}. Importantly, the Bel--Robinson tensor
\begin{align}
Q_{abcd}=W_{aecf} W_b{}^e{}_d{}^f + {}^*W_{aecf} {}^*W_b{}^e{}_d{}^f
\end{align}
acts as an energy-momentum tensor for the Bianchi equations. Upon linearising these equations against the background of Minkowski space, this structure survives in the linearised equations and allows to estimate the curvature components using the vector field method in the same way that it was applied to study the scalar wave equation, as was done in \cite{Ch-K-linear}. In fact, the vector field method applied using the Bel--Robinson tensor was key to the proof of nonlinear stability of the Minkowski spacetime by Christodoulou and Klainerman in \cite{Ch-K}, and it is possible to use this strategy to study scattering for small perturbations to the Minkowski spacetime evolving according to the nonlinear Einstein equations \bref{EVE}.\\
\indent Unfortunately, this structure is lost in the process of linearising around black holes, where the connection components couple to the curvature in a way that destroys the Lagrangian structure of the equations \bref{Bianchi}: in terms of a formal expansion of perturbed quantities of the form
\begin{align}
    \bm{g}=g\;+\stackrel{\;\;\mbox{\scalebox{0.4}{(1)}}}{\epsilon g}, \qquad \bm{\Gamma}=\Gamma+\stackrel{\;\;\mbox{\scalebox{0.4}{(1)}}}{\epsilon\; \Gamma}, \qquad \bm{R}=R+\stackrel{\;\;\mbox{\scalebox{0.4}{(1)}}}{\epsilon R},
\end{align}
the linearised version of equations \bref{Bianchi} have the schematic form
\begin{align}\label{coupling}
   \stackrel{\;\;\;\;\mbox{\scalebox{0.4}{(1)}}}{\nabla \; W}+\stackrel{\mbox{\scalebox{0.4}{(1)}}\;\;\;\;\;\;}{\Gamma\; W}=0.
\end{align}
Therefore, it is not possible to directly use the Bianchi equations alone to prove boundedness and decay results for curvature components independently of the connection components. See the discussion in \cite{DHR16}, \cite{DHR17}. 
\subsubsection{Double null gauge}\label{subsubsection 1.2.2 double null gauge}
It is important to note that the formulation of the problem depends crucially on the choice of gauge. It turns out that working with a \textit{double null gauge} is particularly useful to manifest a special structure in the linearised Einstein equations that reveals an alternative method to control curvature. This gauge leads to a well-posed reduction of the linearised Einstein equations around Schwarzschild, arising from a well-posed reduction of the full Einstein equations (see \cite{DHR16} and \cite{Ch-K}).\\
\indent A double null gauge is a coordinate system $(\bm{u},\bm{v},\bm{\theta}^A)$ that foliates spacetime with two families of ingoing and outgoing null hypersurfaces. In this gauge we decompose the curvature and connection components in terms of $\mathcal{S}_{\bm{u},\bm{v}}$-tangent tensor fields, where $\mathcal{S}_{\bm{u},\bm{v}}$ is the compact 2-dimensional manifold where the null hypersurfaces of constant $\bm{u}, \bm{v}$ intersect (see already \Cref{section 2 preliminaries} and \Cref{Appendix B Double null guage}). On the exterior of the Schwarzschild spacetime, the Eddington--Finkelstein null coordinates $(u,v,\theta^A)$ provide an example of this gauge (where $\mathcal{S}_{u,v}$ are just standard spheres). \\
\indent For an example of the resulting equations, the linearised curvature components $\overone{\alpha}_{AB}=\overone{W}_{A4B4}$ and $\overone{\beta}_A=\overone{W}_{A434}$ obey the transport equations
\begin{align}\label{example1}
    \frac{1}{\Omega}\nablagml r\Omega^2\overone{\alpha}\;=-2r\fancydstar_2 \Omega\overone{\beta} +\frac{6M}{r^2}\Omega\overone{\hat{\chi}}, \qquad\qquad \nablav r^4\Omega\overone{\beta}-2M r^2\Omega\overone{\beta}\;= r\slashed{div}\;r^3\Omega^2\overone{\alpha},
\end{align}
where $\Omega^2=\left(1-\frac{2M}{r}\right)$,  $\slashednabla_4,\slashednabla_3$ denote the projections of the null covariant derivatives to $S^2_{u,v}$ and $\overone{\hat{\chi}}$ denotes the linearised outgoing shear. The coupling to the connection components means we must simultaneously consider the connection components like $\overone{\hat\chi}$, which satisfy transport equations of a similar form, for example:
\begin{align}\label{example2}
    \nablav\; r\Omega \overone{\hat{\chi}}+\Big(1-\frac{4M}{r}\Big)\Omega \overone{\hat{\chi}}=-r\Omega^2\overone{\alpha}.
\end{align}
\indent We note that in this formulation, we can see the presence of a \textit{blueshift} effect in the linearised Einstein equations by observing that the second equation of \bref{example1} above carries a lower order term with a sign that forces the solution to grow exponentially when evolved forward in a neighborhood of the horizon. This appears to be an essential feature of working with tensorial quantities decomposed using null frames.
\subsubsection{The Teukolsky equations}\label{subsubsection 1.2.3 Teukolsky}
A quick glance at \bref{example1}, \bref{example2} reveals that we can derive a decoupled equation for $\overone{\alpha}$ alone by acting on the first equation of \bref{example1} with $\nablav$ and following through the remaining equations to discover that $\overone{\alpha}$ obeys the $+2$ Teukolsky equation \bref{wave equation +}. The linearisation of the component $\bm{\underline\alpha}_{AB}=W_{A4B4}$ can be shown to obey \bref{wave equation -} by a similar logic, see \Cref{subsection 2.2 Linearised Einstein equations in double null gauge} for the full list of the linearised Einstein equations around the Schwarzschild background.\\
\indent The derivation of \bref{wave equation +}, \bref{wave equation -} by Bardeen and Press \cite{Bardeen-Press} for perturbations around Schwarzschild and their extension to the Kerr black holes by Teukolsky \cite{Teu73} (using the Newman--Penrose formalism) was a game changer in the study of linearised gravity. If one can estimate solutions to the Teukolsky equations (i.e.~equations \bref{wave equation +}, \bref{wave equation -} on Schwarzschild), one can hope to make use of the hierarchical nature of the linearised Einstein equations in double null gauge (as manifest in \bref{example1}, \bref{example2} for example) to estimate the remaining components. \\
\indent Unfortunately, however, having arrived at the decoupled wave equations \bref{wave equation +}, \bref{wave equation -} for the components $\overone{\alpha}, \overone{\underline\alpha}$, the essential difficulty in dealing with the linearised Einstein equations is still inherited by the Teukolsky equations \bref{wave equation +}, \bref{wave equation -}, in the sense that equations \bref{wave equation +}, \bref{wave equation -}, taken in isolation, also suffer from the lack of a variational principle, and neither \bref{wave equation +} nor \bref{wave equation -} has its own energy-momentum tensor. This is related to the 1st order null derivative term on the left hand side of \bref{wave equation +}, \bref{wave equation -}. These first order terms are reminiscent of the wave equation \bref{wave equation} when commuted with the redshift vector field $N$ (note in particular that the 1st order term in the $-2$ Teukolsky equation \bref{wave equation -} has a redshift sign near $\mathscr{H}^+$, while the $+2$ has a 1st order term with a blueshift sign near $\mathscr{H}^+$). This issue meant that the Teukolsky equations \bref{wave equation +}, \bref{wave equation -}, despite their decoupling, have remained immune to known methods for a long time. 
\subsubsection{Chandrasekhar-type transformations in physical space}\label{subsubsection 1.2.4 DHR}
In \cite{DHR16}, Dafermos, Holzegel and Rodnianski succeed in deriving boundedness and decay estimates for \bref{wave equation +} and \bref{wave equation -} and they subsequently prove the linear stability of the Schwarzschild solution in double null gauge. Key to their work is the exploitation of a physical space version of a trick due to Chandrasekhar \cite{Chandrasekhar}, which works by commuting derivatives in the null directions past the equations. This commutation removes the first order derivative terms and reduces the equations \bref{wave equation +}, \bref{wave equation -} to a familiar form:
\begin{align}\label{RWintro}
    \nablau\nablav\overone\Psi-\Omega^2\slashed{\Delta}\overone\Psi+V(r)\overone\Psi=0,
\end{align}
where $V(r)=\frac{\Omega^2(3\Omega^2+1)}{r^2}$ and
\begin{align}\label{transport}
    \overone\Psi=\left(\frac{r^2}{\Omega^2}\nablau\right)^2r\Omega^2\overone\alpha.
\end{align}
The same applies to $\overone{\underline\alpha}$ by differentiating in the $4$- direction instead and we obtain a quantity $\overone{\underline\Psi}$ satisfying \bref{RWintro} via
\begin{align}\label{transport 2}
    \overone{\underline\Psi}=\left(\frac{r^2}{\Omega^2}\nablav\right)^2r\Omega^2\overone{\underline\alpha}.
\end{align}
\indent Equation \bref{RWintro} is the well-known Regge--Wheeler equation, which first appeared in the context of the theory of metric perturbations studied by Regge and Wheeler \cite{ReggeWheeler}, Vishveshwara \cite{Vishveshwara}, and Zerilli \cite{Zerilli} to describe gauge invariant combinations of the metric perturbations. The Regge--Wheeler equation \bref{RWintro} has a very similar structure to the equation that governs the radiation field of the scalar wave equation \bref{wave equation}, and in particular the vector field method can be adapted to study \bref{RWintro}. This is what was done in \cite{DHR16} to obtain boundedness and decay estimates for solutions of \bref{RWintro}. These estimates for \bref{RWintro} can in turn be used to estimate $\overone\alpha, \overone{\underline\alpha}$ \textit{by regarding \bref{transport} and its $\overone{\underline\alpha}$ counterpart as transport equations for $\overone\alpha, \overone{\underline\alpha}$}. For this to work, it was fundamental that a sufficiently strong decay statement is available for solutions of \bref{RWintro} for a nondegenerate energy (i.e.~the analogue of the $N$-energy above).\\
\indent Note that in the case of the Kerr spacetime $a\neq 0$, the strategy outlined above suffers from the fact that the analogues of \bref{RWintro} are coupled to $\overone{\alpha}, \overone{\underline\alpha}$ via $a$. Nevertheless, it is possible to apply the same strategy to obtain boundedness and decay results for solutions to the Teukolsky equations, see \cite{DHR18} and \cite{Ma} for the case of the very slowly rotating Kerr exterior $|a|\ll M$ and the very recent \cite{SRTdC} for the full subextremal range $|a|<M$. For the case of the extremal Kerr exterior $a=M$, see \cite{Rita2019}, \cite{Lucietti_2012}.\\
\indent The first preliminary goal of our work will be to analyse the Regge--Wheeler equation \bref{RWintro} from the point of view of scattering. The fact that the conservation of the $T$-energy leads to a scattering theory for the scalar wave equation \bref{wave equation} means one can expect to prove an analogous statement for the Regge--Wheeler equation using analogous methods. This will be the content of \textbf{Theorem 1} (see \Cref{subsubsection 1.3.1 scattering for RW}). 
\subsubsection{Reconstructing curvature from the Regge--Wheeler equation}\label{subsubsection 1.2.5 RW}
\indent Starting from such a scattering theory for the Regge--Wheeler equation \bref{RWintro}, one can hope to apply the strategy used in \cite{DHR16} to construct a scattering theory for the Teukolsky equations \bref{wave equation +} and \bref{wave equation -} via the transport relations \bref{transport} and \bref{transport 2}. It is however far from clear that the transport equations \bref{transport}, \bref{transport 2} can lead to a suitable scattering theory, in particular one that could in turn lead to a scattering theory for the linearised Einstein equations. The central question we aim to address is whether the $T$-energy obtained via the Regge--Wheeler equation could define a Hilbert space of scattering states for solutions to \bref{wave equation +}, \bref{wave equation -}, for which the central questions of scattering theory (points \ref{QI}, \ref{QII}, \ref{QIII} above) could be answered. \\
\indent Adapting the strategy above to a scattering setting based on $T$-energies, we succeed in constructing such a scattering theory for the Teukolsky equations answering \ref{QI}, \ref{QII}, \ref{QIII} in the affirmative. This will lead to \textbf{Theorem 2} of this paper (see \Cref{subsubsection 1.3.2 scattering for teukolsky}).
\subsubsection{The Teukolsky--Starobinsky correspondence}\label{subsubsection 1.2.6 TS}
Finally, we treat what is known as the Teukolsky--Starobinsky correspondence. The Teukolsky--Starobinsky correspondence is the study of the relationship between $\overone{\alpha}, \overone{\underline\alpha}$ using \bref{eq:227intro1}, \bref{eq:228intro1} and the Teukolsky equations \bref{wave equation +}, \bref{wave equation -}, independently of the remaining components of a solution to the linearised Einstein system.
The idea that knowing either $\overone{\alpha}$ or $\overone{\underline\alpha}$ uniquely determines the other via \bref{eq:227intro1}, \bref{eq:228intro1} permeates the literature on the Einstein equations since the appearance of the constraints in \cite{TeukP74}, \cite{StarC}, but little has been done in the way of a systematic study of the combined system consisting of the Teukolsky equations \bref{wave equation +}, \bref{wave equation -} and the constraints \bref{eq:227intro1}, \bref{eq:228intro1}, governing a pair $\overone\alpha, \overone
{\underline\alpha}$. \\
\indent The constraints \bref{eq:227intro1}, \bref{eq:228intro1} provide a bridge between the scattering theory we construct for equations \bref{wave equation +}, \bref{wave equation -} and the full linearised Einstein equations. This is because scattering for the linearised Einstein equations would involve scattering data for the metric components, from which data for only one of $\overone\alpha$ or $\overone{\underline\alpha}$ could be constructed from the scattering data for the metric on each component of the asymptotic boundary. One can hope to use the identities \bref{eq:227intro1}, \bref{eq:228intro1} to obtain scattering data for either $\overone\alpha$ or $\overone{\underline\alpha}$ out of the other, but it is entirely unclear whether we would obtain scattering data that are compatible with the scattering theory constructed here for \bref{wave equation +}, \bref{wave equation -}, or even whether the system consisting of \bref{wave equation +}, \bref{wave equation -}, \bref{eq:227intro1}, \bref{eq:228intro1} is well-posed. In the context of scattering, we are specifically interested in whether the operators involved on each side of the identities \bref{eq:227intro1}, \bref{eq:228intro1} are invertible on the spaces of scattering states, and we would like to know whether, given scattering data for $\overone\alpha, \overone{\underline\alpha}$ related via \bref{eq:227intro1}, \bref{eq:228intro1}, the ensuing solutions to \bref{wave equation +}, \bref{wave equation -} would  in turn satisfy \bref{eq:227intro1}, \bref{eq:228intro1}.\\
\indent Interestingly, it turns out that the study of constraints \bref{eq:227intro1}, \bref{eq:228intro1} is much more transparent when done via scattering rather than directly via the Cauchy problem, and combining this with asymptotic completeness will answer the question of well-posedness for the system \bref{wave equation +}, \bref{wave equation -}, \bref{eq:227intro1}, \bref{eq:228intro1}. We also find that it is only in the context where solutions to \bref{wave equation +}, \bref{wave equation -} are studied on the entirety of the exterior region that the constraints \bref{eq:227intro1}, \bref{eq:228intro1} are sufficient to determine $\overone\alpha$ completely from $\overone{\underline\alpha}$ and vice versa. Scattering necessarily involves considering solutions globally on the exterior. These considerations are the subject of \textbf{Theorem 3}. \\
\indent A corollary to our main results is that one may formulate a scattering statement for a combined pair $(\overone\alpha,\overone{\underline\alpha})$ satisfying the Teukolsky equations \bref{wave equation +}, \bref{wave equation -} and the constraints \bref{eq:227intro1}, \bref{eq:228intro1} (this is \textbf{Corollary 1}, see \Cref{subsection 4.4 Corollary 1: mixed scattering}). One can then hope that such a scattering statement would provide a bridge towards scattering for the full linearised Einstein equations, taking into account \Cref{example2} relating $\overone{\alpha}$ to $\overone{\hat{\chi}}$ and counterpart equation relating $\overone{\underline\alpha}$ to $\overone{\hat{\underline\chi}}$. We will immediately remark at the end of this introduction on how to formally derive a conservation law at the level of the shears $\overone{\hat{\chi}}$, $\overone{\hat{\underline\chi}}$ which excludes the possibility of superradiant reflection (see \bref{conservation law} of \Cref{subsubsection 1.3.4 corollary}). This will be treated in detail again in the upcoming \cite{M2050} as part of a complete scattering theory for the linearised Einstein equation in double null gauge.
\subsection{Scattering maps}\label{IntroResults}
The following are preliminary statements of the results of this work, with detailed statements to follow in the body of the paper (see \cref{section 4 main theorems}).
\subsubsection{Scattering for the Regge--Wheeler equation}\label{subsubsection 1.3.1 scattering for RW}
We begin by stating the result for the Regge--Wheeler equation \bref{RWintro} (we omit the superscript $\overone{{}}$ in what follows). We show that a solution arising from Cauchy data with initially finite $T$-energy gives rise to a set of radiation fields in the limit towards $\mathscr{I}^+, \mathscr{H}^+$, from which the solution can be recovered. The choice of the Cauchy surface does not affect the fact that the flux of the $T$-energy defines a Hilbert space norm on Cauchy data. For the surface $\overline\Sigma=\{t=0\}$, this flux is given by
\begin{align}\label{RWfluxT}
    \big\|(\Psi|_{\overline\Sigma},\slashednabla_{n_{\overline{\Sigma}}}\Psi|_{\overline\Sigma})\big\|^2_{\mathcal{E}^T_{\overline\Sigma}}=\int_{\overline\Sigma}dr\sin\theta d\theta d\phi\;|\slashednabla_{n_{\overline\Sigma}}\Psi|^2+\Omega^2|\slashednabla_r\Psi|^2+|\slashednabla\Psi|^2+\frac{3\Omega^2+1}{r^2}|\Psi|^2.
\end{align}
Conservation of the $T$-energy suggests Hilbert space norms on $\mathscr{I}^+, \mathscr{H}^+$:
\begin{align}\label{RWfluxIH}
    \|\bm{\uppsi}_{\mathscr{I}^+}\|_{\mathcal{E}^T_{\mathscr{I}^+}}^2=\int_{\mathscr{I}^+}du\sin\theta d\theta d\phi\; |\partial_u\bm{\uppsi}_{\mathscr{I}^+}|^2,\qquad\qquad \|\Psi_{{\mathscr{H}^+}}\|_{\mathcal{E}^T_{\mathscr{H}^+}}^2=\int_{\mathscr{H}^+}du\sin\theta d\theta d\phi\;|\partial_v\bm{\uppsi}_{\mathscr{H}^+}|^2.
\end{align}
The Hilbert spaces $\mathcal{E}^T_{\overline{\Sigma}}, \mathcal{E}^T_{\overline{\mathscr{H}^+}},\mathcal{E}^T_{\overline{\mathscr{I}^+}}$ are defined to be the completion of smooth, compactly supported data under the norms defined in \bref{RWfluxT}, \bref{RWfluxIH} and the spaces $\mathcal{E}^T_{\mathscr{H}^-}, \mathcal{E}^T_{\mathscr{I}^-}$ are defined analogously.
\begin{theorem*}\label{Theorem 1}
Forward evolution under the Regge--Wheeler equation \bref{RWintro} extends to a unitary Hilbert space isomorphism
\begin{align}
    \mathscr{F}^+:\mathcal{E}^T_{\overline\Sigma} \longrightarrow \mathcal{E}^T_{\overline{\mathscr{H}^+}}\oplus\mathcal{E}^T_{\mathscr{I}^+}.
\end{align}
A similar statement holds for scattering towards $\mathscr{H}^-, \mathscr{I}^-$. As a corollary, we obtain the Hilbert space isomorphism
\begin{align}
    \mathscr{S}:\mathcal{E}^T_{\mathscr{H}^-}\oplus\mathcal{E}^T_{\mathscr{I}^-}\longrightarrow\mathcal{E}^T_{\mathscr{H}^+}\oplus\mathcal{E}^T_{\mathscr{I}^+}.
\end{align}
\end{theorem*}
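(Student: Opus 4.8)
The plan is to follow the template developed by Dafermos, Rodnianski and Shlapentokh-Rothman \cite{DRSR14} for the scalar wave equation \bref{wave equation}, exploiting the one structural advantage that \bref{RWintro} has over the Teukolsky equations themselves: it possesses a genuine energy-momentum tensor. Since the potential $V(r)=\Omega^2(3\Omega^2+1)/r^2$ is non-negative, the flux \bref{RWfluxT} of the current $J^T$ associated with the Killing field $T=\partial_t$ is coercive and conserved, and --- because the equation carries no first order term --- this conservation holds for evolution in both time directions, with no blueshift obstruction. The content of the theorem is then that the flux, conserved and coercive, induces an \emph{exact} decomposition of the Cauchy energy into horizon and null-infinity pieces, and that nothing is lost to $i^0$ or $i^+$.

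First I would set up well-posedness of the Cauchy problem for \bref{RWintro} with data in $\mathcal{E}^T_{\overline\Sigma}$, and record the basic energy identity: for smooth, compactly supported data the flux of $J^T$ through $\overline\Sigma=\{t=0\}$ equals the sum of the fluxes through the truncated portions of $\mathscr{H}^+$ and $\mathscr{I}^+$ to the future of $\overline\Sigma$, plus a flux through a far, late truncating hypersurface. The next step is to show this last term vanishes in the limit, which requires decay: an integrated local energy decay (Morawetz) estimate for \bref{RWintro}, an $r^p$-weighted hierarchy à la Dafermos--Rodnianski to propagate decay out to $\mathscr{I}^+$, and a redshift commutation near $\mathscr{H}^+$. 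These estimates are essentially already available from \cite{DHR16}, where they were derived for \bref{RWintro} precisely; here they need only be repackaged to show that the radiation fields $\bm{\uppsi}_{\mathscr{I}^+}$, $\bm{\uppsi}_{\mathscr{H}^+}$ exist in $\mathcal{E}^T_{\mathscr{I}^+}$, $\mathcal{E}^T_{\overline{\mathscr{H}^+}}$ and that the resulting map $\mathscr{F}^+$ is norm-preserving on a dense subspace, hence extends to an isometric embedding $\mathcal{E}^T_{\overline\Sigma}\hookrightarrow \mathcal{E}^T_{\overline{\mathscr{H}^+}}\oplus\mathcal{E}^T_{\mathscr{I}^+}$.

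The substantive step is surjectivity, i.e.\ asymptotic completeness. Given scattering data $(\bm{\uppsi}_{\mathscr{H}^+},\bm{\uppsi}_{\mathscr{I}^+})$, I would construct a solution of \bref{RWintro} on the exterior realising these as its radiation fields: for smooth data compactly supported away from the bifurcation sphere $\mathscr{H}^+\cap\mathscr{H}^-$ and from $i^0$, pose the characteristic initial value problem on a truncation of $\mathscr{H}^+\cup\mathscr{I}^+$ and solve backwards to $\{t=0\}$; the conserved, coercive $T$-flux provides a uniform a priori bound, and sending the truncation parameters to infinity yields, after a limiting argument, a solution on the whole exterior with finite $\mathcal{E}^T_{\overline\Sigma}$-norm. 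One then verifies, using uniqueness and exact conservation of the energy (so that no flux leaks to $i^0$ or $i^+$), that $\mathscr{F}^+$ applied to the Cauchy data of this solution returns the prescribed scattering data, and concludes by density that $\mathscr{F}^+$ is onto, hence unitary. Running the same argument to the past produces $\mathscr{F}^-:\mathcal{E}^T_{\overline\Sigma}\to\mathcal{E}^T_{\mathscr{H}^-}\oplus\mathcal{E}^T_{\mathscr{I}^-}$, and then $\mathscr{S}:=\mathscr{F}^+\circ(\mathscr{F}^-)^{-1}$ is the claimed unitary scattering isomorphism.

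I expect the main obstacle to be the limiting argument near spacelike infinity $i^0$ in the backwards construction: controlling the energy flux on the family of truncating hypersurfaces uniformly as they are pushed toward $i^0$ and ruling out that energy escapes there, so that the backwards-constructed solution has no extraneous incoming radiation and $\mathscr{F}^+$ composed with the backwards map is genuinely the identity rather than merely an isometry onto its image. This is where the $r^p$-weighted estimates, together with a careful choice of truncating hypersurfaces (truncating $\mathscr{I}^+$ at finite retarded time and capping with an ingoing null cone reaching $\{t=0\}$), carry the argument. The horizon end, by contrast, is comparatively benign for \bref{RWintro}: although the $T$-energy is degenerate on $\mathscr{H}^+$, it is still conserved, and the absence of a first order term means backwards evolution off $\mathscr{H}^+$ is as well-behaved as forwards evolution --- in sharp contrast to the situation that will arise for \bref{wave equation +}, \bref{wave equation -} themselves.
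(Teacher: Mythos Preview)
Your proposal is correct and follows essentially the same route as the paper: the forward map $\mathscr{F}^+$ is built from the $T$-energy identity together with the integrated local energy decay and $r^p$-hierarchy of \cite{DHR16}, and surjectivity is obtained by constructing a backwards map $\mathscr{B}^-$ from compactly supported scattering data via a finite-$v^\infty$ truncation (posing $\bm{\uppsi}_{\mathscr{I}^+}$ on an ingoing cone $\underline{\mathscr{C}}_{v^\infty}$), passing to the limit, and verifying $\mathscr{F}^+\circ\mathscr{B}^-=\mathrm{Id}$ on a dense subspace. One small recalibration of emphasis: the paper locates the technical work of the backwards step not at $i^0$ but in obtaining \emph{uniform pointwise} control of the approximants near $\mathscr{I}^+$ and $\mathscr{H}^+$ (Gr\"onwall on the $r^2|\Omega\slashed\nabla_4\Psi|^2$ and $\Omega^{-2}|\Omega\slashed\nabla_3\Psi|^2$ fluxes) so that Arzel\`a--Ascoli applies and the limit realises the prescribed radiation fields; once that is in hand, the subunitarity $\|\mathscr{B}^-(\cdot)\|\le\|\cdot\|$ plus the isometry of $\mathscr{F}^+$ already force surjectivity, and the equality at $i^0$ you flag as the main obstacle follows for free.
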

The precise statement of this result is contained in \Cref{forwardRW,,backwardRW,,RW isomorphisms} of \Cref{subsection 4.1 Theorem 1}.\\
\indent Note that Theorem 1 can be applied to the study of scattering for the linearised Einstein equations in the Regge--Wheeler gauge, see also the recent \cite{TruongConformal}.
\subsubsection{Scattering for the Teukolsky equations}\label{subsubsection 1.3.2 scattering for teukolsky}
Given $\alpha$ or $\underline\alpha$ solving the Teukolsky equations \bref{wave equation +}, \bref{wave equation -}, the weighted null derivatives $\Psi, \underline\Psi$ defined by \bref{transport}, \bref{transport 2} satisfy the Regge--Wheeler equation \bref{RWintro}, so we can try to use Theorem 1 to construct a scattering theory for $\alpha, \underline\alpha$ using the spaces of scattering states associated to \bref{RWintro}: \\
\indent Let $(\upalpha,\upalpha')$, $(\underline\upalpha,\underline\upalpha')$ be Cauchy data for \bref{wave equation +}, \bref{wave equation -} respectively on $\overline{\Sigma}$ and define
\begin{align}
    \|(\upalpha,\upalpha')\|^2_{\mathcal{E}^{T,+2}_{\overline\Sigma}}:=\|(\Psi,\slashednabla_{n_{\overline\Sigma}}\Psi)\|^2_{\mathcal{E}^T_{\overline{\Sigma}}},\qquad\qquad\|(\underline\upalpha,\underline\upalpha')\|^2_{\mathcal{E}^{T,-2}_{\overline\Sigma}}:=\|(\underline\Psi,\slashednabla_{n_{\overline\Sigma}}\underline\Psi)\|^2_{\mathcal{E}^T_{\overline{\Sigma}}}.
\end{align}
The expressions $ \|\;\|^2_{\mathcal{E}^{T,+2}_{\overline\Sigma}}, \|\;\|^2_{\mathcal{E}^{T,-2}_{\overline\Sigma}}$ turn out indeed to be norms on smooth, compactly supported data sets on $\overline\Sigma$ and thus they define Hilbert space norms on the completions of such data. Note that the values on $\overline\Sigma$ of $\Psi,\underline\Psi$ and their derivatives can be computed locally using the Teukolsky equations \bref{wave equation +}, \bref{wave equation -}, out of higher order derivatives of the initial data $(\upalpha,\upalpha')$, $(\underline\upalpha,\underline\upalpha')$ on $\overline\Sigma$.\\
\indent As mentioned earlier, the energies defining the Hilbert spaces of scattering states for the Teukolsky equations stem from the $T$-energy associated to the Regge--Wheeler equations. Remarkably, on $\mathscr{I}^\pm, \mathscr{H}^\pm$, the radiation fields of $\Psi, \underline\Psi$ are related to those of $\alpha, \underline\alpha$ by tangential derivatives, and it is possible to find meaningful expressions for the corresponding norms on $\mathscr{I}^\pm, \overline{\mathscr{H}^\pm}$ directly in terms of the radiation fields of $\alpha, \underline\alpha$.
\begin{theorem*}\label{Theorem 2}
For the Teukolsky equations \bref{wave equation +}, \bref{wave equation -} of spins $\pm2$, evolution from smooth, compactly supported data on a Cauchy surface extends to unitary Hilbert space isomorphisms:
\begin{align}
    {}^{(+2)}\mathscr{F}^{+}:\mathcal{E}^{T,+2}_{\overline\Sigma}\longrightarrow \mathcal{E}^{T,+2}_{\mathscr{I}^+}\oplus\mathcal{E}^{T,+2}_{\overline{\mathscr{H}^+}},\qquad\qquad{}^{(-2)}\mathscr{F}^{+}:\mathcal{E}^{T,-2}_{\overline\Sigma}\longrightarrow \mathcal{E}^{T,-2}_{\mathscr{I}^+}\oplus\mathcal{E}^{T,-2}_{\overline{\mathscr{H}^+}},\\
    {}^{(+2)}\mathscr{F}^{-}:\mathcal{E}^{T,+2}_{\overline\Sigma}\longrightarrow \mathcal{E}^{T,+2}_{\mathscr{I}^-}\oplus\mathcal{E}^{T,+2}_{\overline{\mathscr{H}^-}},\qquad\qquad{}^{(-2)}\mathscr{F}^{-}:\mathcal{E}^{T,-2}_{\overline\Sigma}\longrightarrow \mathcal{E}^{T,-2}_{\mathscr{I}^-}\oplus\mathcal{E}^{T,-2}_{\overline{\mathscr{H}^-}}.
\end{align}
The spaces of past/future scattering states  $\mathcal{E}^{T,\pm2}_{\mathscr{I}^\pm},\mathcal{E}^{T,\pm2}_{\mathscr{H}^\pm},$ are the Hilbert spaces obtained by completing suitable smooth, compactly supported data on $\mathscr{I}^\pm, \mathscr{H}^\pm$ under the corresponding norms in the following:

\begin{changemargin}{-1cm}{2cm}
\begin{center}
\setstretch{1.5}
\begin{tikzpicture}[scale=0.6,on grid]
\node (I)    at ( 0,0)   {};

\path % Four conners of the right diamond (no labels this time)
   (I) +(90:4)  coordinate (Itop) coordinate[label=90:$i^+$]
       +(-90:4) coordinate (Ibot) coordinate[label=-90:$i^-$]
       +(180:4) coordinate (Ileft)
       +(0:4)   coordinate (Iright) coordinate[label=0:$i^0$]
       ;
% No text this time in the next diagram
\draw  (Ileft) --  node[align=center,yshift=15,xshift=15]{$\Big\|(\mathring{\slashed{\Delta}}-2)(\mathring{\slashed{\Delta}}-4)\left(2M\int^{\infty}_v d\bar{v}e^{\frac{1}{2M}({v}-\bar{v})}\Omega^2\alpha\right)\Big\|^2_{L^2(\overline{\mathscr{H}^+})}\qquad\qquad\qquad\qquad\qquad\qquad\qquad\qquad\qquad\qquad\qquad\qquad$\\$+\Big\|6M\partial_v\left(2M\int^{\infty}_v d\bar{v}e^{\frac{1}{2M}({v}-\bar{v})}\Omega^2\alpha\right)\Big\|_{L^2(\overline{\mathscr{H}^+})}^2\qquad\qquad\qquad\qquad\qquad\qquad\qquad\qquad\qquad\qquad\qquad\;\qquad$} node[rotate=45,below]{$\overline{\mathscr{H}^+}$} (Itop) ;
\draw  (Ileft) -- node[yshift=-15,xshift=15]{$\Big\|2M\left(-2(2M\partial_u)+3(2M\partial_u)^2-(2M\partial_u)^3\right)2M\Omega^{-2}\alpha\Big\|^2_{L^2(\overline{\mathscr{H}^-})}\qquad\qquad\qquad\qquad\qquad\qquad\qquad\qquad\qquad\qquad\qquad\qquad\qquad$} node[rotate=-45,above]{$\overline{\mathscr{H}^-}$} (Ibot) ;
\draw[dash dot dot] (Ibot) -- node[align=center][yshift=-10,xshift=-15]{$\qquad\qquad\qquad\qquad\qquad\qquad\qquad\qquad\qquad\Big\|6M\upalpha_{\mathscr{I}^-}\Big\|^2_{L^2(\mathscr{I}^-)}$\\[1mm] $\qquad\qquad\qquad\qquad\qquad\qquad\qquad\qquad\qquad+\left\|(\mathring{\slashed{\Delta}}-2)(\mathring{\slashed{\Delta}}-4)\left(\int^{v}_{-\infty}\upalpha_{\mathscr{I}^-} d\bar{v}\right)\right\|^2_{L^2(\mathscr{I}^-)}$} node[rotate=45,above]{$\mathscr{I}^-$}(Iright) ;
\draw[dash dot dot] (Iright) -- node[yshift=10,xshift=-10]{$\qquad\qquad\;\;\;\;\;\;\;\;\;\;\;\;\;\;\;\;\;\;\;\;\qquad\left\|(\partial_u)^3\upalpha_{\mathscr{I}^+}\right\|^2_{L^2(\mathscr{I}^+)}$} node[rotate=-45,below]{$\mathscr{I}^+$}(Itop) ;

\filldraw[white] (Itop) circle (3pt);
\draw[black] (Itop) circle (3pt);

\filldraw[white] (Ibot) circle (3pt);
\draw[black] (Ibot) circle (3pt);
\draw[black] (Ileft) circle (3pt);
\filldraw[black] (Ileft) circle (3pt);
\filldraw[white] (Iright) circle (3pt);
\draw[black] (Iright) circle (3pt);
\end{tikzpicture}

\end{center}

%%%%%%%%%%%%%%%%%%%%%%%%%%
\end{changemargin}
\begin{changemargin}{-1.4cm}{2cm}

\begin{center}
\begin{tikzpicture}[scale=0.6]
\node (I)   at ( 0,0)   {};

\path % Four conners of the right diamond (no labels this time)
   (I) +(90:4)  coordinate (Itop) coordinate[label=90:$i^+$]
       +(-90:4) coordinate (Ibot) coordinate[label=-90:$i^-$]
       +(180:4) coordinate (Ileft)
       +(0:4)   coordinate (Iright) coordinate[label=0:$i^0$]
       ;
% No text this time in the next diagram
\draw  (Ileft) --  node[yshift=20,xshift=25]{$\Big\|2M\left(2(2M\partial_v)+3(2M\partial_v)^2+(2M\partial_v)^3\right)2M\Omega^{-2}\underline\alpha\Big\|^2_{L^2(\overline{\mathscr{H}^+})}\qquad\qquad\qquad\qquad\qquad\qquad\qquad\qquad\qquad\qquad\qquad\qquad\qquad\qquad$} node[rotate=45,below]{$\overline{\mathscr{H}^+}$} (Itop) ;
\draw  (Ileft) -- node[align=center][yshift=-10,xshift=10]{$\Big\|6M\partial_u\left(2M\int^{u}_{-\infty}d\bar{u}e^{\frac{1}{2M}(u-\bar{u})}\Omega^2\underline\alpha\right)\Big\|^2_{L^2(\overline{\mathscr{H}^-})}\qquad\qquad\qquad\qquad\qquad\qquad\qquad\qquad\qquad\qquad\qquad$\\[1mm] $+\left\|(\mathring{\slashed{\Delta}}-2)(\mathring{\slashed{\Delta}}-4)\left(2M\int^{u}_{-\infty}d\bar{u}e^{\frac{1}{2M}(u-\bar{u})}\Omega^2\underline\alpha\right)\right\|^2_{L^2(\overline{\mathscr{H}^-})}\qquad\qquad\qquad\qquad\qquad\qquad\qquad\qquad\qquad\qquad\qquad$} node[rotate=-45,above]{$\overline{\mathscr{H}^-}$} (Ibot) ;
\draw[dash dot dot] (Ibot) -- node[yshift=-12,xshift=-12]{$\qquad\qquad\;\;\;\;\;\;\;\;\;\;\;\;\;\;\qquad\qquad\left\|(\partial_v)^3\underline\upalpha_{\mathscr{I}^-}\right\|_{L^2(\mathscr{I}^-)}^2$} node[rotate=45,above]{$\mathscr{I}^-$}(Iright) ;
\draw[dash dot dot] (Iright) -- node[align=center][yshift=10,xshift=-20]{ $\qquad\qquad\qquad\qquad\qquad\qquad\qquad\qquad\qquad\qquad\left\|(\mathring{\slashed{\Delta}}-2)(\mathring{\slashed{\Delta}}-4)\left(\int^{u}_{-\infty}\underline\upalpha_{\mathscr{I}^+} d\bar{u}\right)\right\|^2_{L^2(\mathscr{I}^+)}$ \\$\qquad\qquad\qquad\qquad\qquad\qquad\qquad\qquad\qquad+\Big\|6M\underline\upalpha_{\mathscr{I}^+}\Big\|^2_{L^2(\mathscr{I}^+)}$} node[rotate=-45,below]{$\mathscr{I}^+$}(Itop) ;

\filldraw[white] (Itop) circle (3pt);
\draw[black] (Itop) circle (3pt);

\filldraw[white] (Ibot) circle (3pt);
\draw[black] (Ibot) circle (3pt);

\filldraw[white] (Iright) circle (3pt);
\draw[black] (Iright) circle (3pt);
\filldraw[black] (Ileft) circle (3pt);
\draw[black] (Iright) circle (3pt);
\end{tikzpicture}

\end{center}
\end{changemargin}
The maps ${}^{(\pm2)}\mathscr{F}^{\pm}$ lead to the Hilbert-space isomorphisms
\begin{align}
\begin{split}
    &\mathscr{S}^{+2}: \mathcal{E}^{T,+2}_{\mathscr{I}^+}\oplus\mathcal{E}^{T,+2}_{\overline{\mathscr{H}^+}}\longrightarrow \mathcal{E}^{T,+2}_{\mathscr{I}^-}\oplus\mathcal{E}^{T,+2}_{\overline{\mathscr{H}^-}},\\
    &\mathscr{S}^{-2}: \mathcal{E}^{T,-2}_{\mathscr{I}^+}\oplus\mathcal{E}^{T,-2}_{\overline{\mathscr{H}^+}}\longrightarrow \mathcal{E}^{T,-2}_{\mathscr{I}^-}\oplus\mathcal{E}^{T,-2}_{\overline{\mathscr{H}^-}}.
\end{split}
\end{align}
\end{theorem*}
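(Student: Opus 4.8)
The plan is to transfer the scattering theory for the Regge--Wheeler equation \bref{RWintro} provided by Theorem 1 down to the Teukolsky equations \bref{wave equation +}, \bref{wave equation -} along the transport relations \bref{transport}, \bref{transport 2}. I would run the argument for the $+2$ equation and the future map ${}^{(+2)}\mathscr{F}^+$; the other maps follow by the symmetries $u\leftrightarrow v$, $\alpha\leftrightarrow\underline\alpha$, $\mathscr{H}^+\leftrightarrow\mathscr{H}^-$, $\mathscr{I}^+\leftrightarrow\mathscr{I}^-$. First I would use well-posedness of the Teukolsky Cauchy problem (established in the preceding sections) to evolve smooth compactly supported data $(\upalpha,\upalpha')$ on $\overline\Sigma$ to a solution $\alpha$, set $P=\big(\tfrac{r^2}{\Omega^2}\nablau\big)r\Omega^2\alpha$ and $\Psi=\big(\tfrac{r^2}{\Omega^2}\nablau\big)P$, which solves \bref{RWintro}, and observe that $(\Psi,\slashednabla_{n_{\overline\Sigma}}\Psi)|_{\overline\Sigma}$ is smooth compactly supported data for \bref{RWintro} computable pointwise from $(\upalpha,\upalpha')$ using \bref{wave equation +}. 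By the very definition of $\|\cdot\|_{\mathcal{E}^{T,+2}_{\overline\Sigma}}$ this assignment is an isometry into $\mathcal{E}^T_{\overline\Sigma}$, and composing with the unitary $\mathscr{F}^+$ of Theorem 1 yields the radiation fields $(\psi_{\overline{\mathscr{H}^+}},\psi_{\mathscr{I}^+})$ of $\Psi$. Everything then reduces to (i) re-expressing these radiation fields and their $\mathcal{E}^T$-norms in terms of radiation fields of $\alpha$, matching the expressions in the Penrose diagrams, and (ii) surjectivity.

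For (i) at null infinity: near $\mathscr{I}^+$ the weight $\tfrac{r^2}{\Omega^2}$ is tuned so that $\tfrac{r^2}{\Omega^2}\nablau$ degenerates to a derivative $\partial_u$ tangent to $\mathscr{I}^+$; a direct asymptotic analysis of the transport relation shows $r\Omega^2\alpha$ has a limit $\upalpha_{\mathscr{I}^+}$ on $\mathscr{I}^+$ and, commuting the limit with two $\partial_u$-derivatives (with error terms bounded using the decay statements of \cite{DHR16}), $\psi_{\mathscr{I}^+}=(\partial_u)^2\upalpha_{\mathscr{I}^+}$ up to a constant, so $\int_{\mathscr{I}^+}|\partial_u\psi_{\mathscr{I}^+}|^2=\int_{\mathscr{I}^+}|(\partial_u)^3\upalpha_{\mathscr{I}^+}|^2$. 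On the opposite, "transverse" face $\mathscr{I}^-$, where $\nablau$ points off $\mathscr{I}^-$, one instead has to unfold the asymptotic hierarchy of $\alpha$ near $\mathscr{I}^-$: the coefficients of its expansion satisfy $v$-transport equations driven by the leading coefficient $\upalpha_{\mathscr{I}^-}$, with the Teukolsky--Starobinsky angular operator of \bref{eq:227intro1} — which on symmetric traceless $S^2$-tensors is $(\mathring{\slashed{\Delta}}-2)(\mathring{\slashed{\Delta}}-4)$ — entering, integrated with the boundary condition at $i^-$ from the compact support of the data; this produces $\psi_{\mathscr{I}^-}$ as the stated combination of $6M\upalpha_{\mathscr{I}^-}$ and $(\mathring{\slashed{\Delta}}-2)(\mathring{\slashed{\Delta}}-4)\int_{-\infty}^v\upalpha_{\mathscr{I}^-}$, and the identity $\int|\partial_v\psi_{\mathscr{I}^-}|^2=(6M)^2\int|\upalpha_{\mathscr{I}^-}|^2+\int\big|(\mathring{\slashed{\Delta}}-2)(\mathring{\slashed{\Delta}}-4)\int_{-\infty}^v\upalpha_{\mathscr{I}^-}\big|^2$ follows by integrating the cross term by parts in $v$, the boundary contributions vanishing by the decay of $\alpha$ towards $i^-$, $i^0$.

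For (i) at the horizon I expect the \textbf{main obstacle}, since this is where the redshift/blueshift asymmetry distinguishing the $+2$ and $-2$ equations has to be met. Restricting \bref{transport} to $\mathscr{H}^\pm$ one uses the transport system of \cite{DHR16} tying $\Psi$, $P$ and $r\Omega^2\alpha$ together in both null directions to trade the transverse $\nablau$-derivatives for transport equations along the horizon generators, the surface gravity $\tfrac1{4M}$ (equivalently the factor $\tfrac1{2M}$ in the $(u,v)$ normalisation) appearing in their zeroth order terms. On the redshift face — $\mathscr{H}^-$ for spin $+2$, $\mathscr{H}^+$ for spin $-2$ — these operators are stable in the relevant direction, the composition collapses to the cubic $\pm\,2M\partial\,(2M\partial\mp1)(2M\partial\mp2)$ applied to $2M\Omega^{-2}\alpha$, and one recovers the polynomial norms directly. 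On the blueshift face — $\mathscr{H}^+$ for spin $+2$, $\mathscr{H}^-$ for spin $-2$ — these operators carry an exponentially growing mode, so one cannot differentiate and must instead invert them, obtaining $\Omega^2\alpha|_{\mathscr{H}^+}$ (modulo angular factors) as $2M\int_v^\infty d\bar v\,e^{\frac1{2M}(v-\bar v)}(\cdots)$; the boundary condition at $i^+$ is legitimate precisely because the nondegenerate (i.e.~$N$-type) decay statement for \bref{RWintro} forces $\Psi$, hence all intermediate quantities, to decay there. Expanding $\int_{\mathscr{H}^+}|\partial_v\psi_{\overline{\mathscr{H}^+}}|^2$ through these expressions and integrating cross terms by parts in $v$ then reproduces the $\overline{\mathscr{H}^+}$ norm of the diagram; the delicate part is checking that the lower-order pieces assemble into $(\mathring{\slashed{\Delta}}-2)(\mathring{\slashed{\Delta}}-4)$ and $6M$ with the coefficients shown.

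Finally, surjectivity and asymptotic completeness. The composite map is an isometry on smooth compactly supported data — the $\overline\Sigma$-step by definition, the Regge--Wheeler step by Theorem 1, the radiation-field step by (i) — hence extends by density to an isometry of the completions. For ontoness, given scattering states in the target spaces one forms the corresponding Regge--Wheeler radiation fields by applying the operators of (i), which are bounded and explicitly invertible in the pertinent direction ($(\mathring{\slashed{\Delta}}-2)(\mathring{\slashed{\Delta}}-4)$ has trivial kernel on symmetric traceless $S^2$-tensors, $\ell\geq2$; the cubic and the integrating-factor operators invert in the stable direction), applies $(\mathscr{F}^+)^{-1}$ of Theorem 1 to produce $\Psi$ on the whole exterior, and recovers $\alpha$ by integrating \bref{transport} twice along the ingoing null geodesics, the two constants of integration being pinned down on $\mathscr{H}^+$ by the reconstructed horizon data together with the transport hierarchy of \cite{DHR16}. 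One checks that the resulting $\alpha$ solves \bref{wave equation +}, has the prescribed radiation fields, and lies in $\mathcal{E}^{T,+2}_{\overline\Sigma}$ (automatic, since its $\Psi$ has finite Regge--Wheeler energy); this shows ${}^{(+2)}\mathscr{F}^+$ is onto, and $\mathscr{S}^{+2}={}^{(+2)}\mathscr{F}^+\circ({}^{(+2)}\mathscr{F}^-)^{-1}$ is the corresponding scattering isomorphism. The remaining maps ${}^{(+2)}\mathscr{F}^-$, ${}^{(-2)}\mathscr{F}^\pm$ and $\mathscr{S}^{-2}$ are handled identically, up to the symmetries noted at the outset.
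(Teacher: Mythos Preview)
Your overall strategy matches the paper's: pass to the Regge--Wheeler equation via the Chandrasekhar transport, invoke Theorem~1, and translate the radiation-field identities back to $\alpha,\underline\alpha$; for surjectivity, go backwards through $\mathscr{B}^-$ and reconstruct $\alpha$ by integrating the transport hierarchy from $\mathscr{H}^+$. Two points need correction.

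First, a slip in the weights: the radiation field of $\alpha$ at $\mathscr{I}^+$ is $\upalpha_{\mathscr{I}^+}=\lim_{v\to\infty} r^5\Omega^{-2}\alpha$, not $\lim r\Omega^2\alpha$ (which vanishes). The operator $\tfrac{r^2}{\Omega^2}\nablau$ does not literally degenerate to $\partial_u$ near $\mathscr{I}^+$ --- it carries a divergent $r^2$ factor. What actually happens is that the transport equation, rewritten for the $r^5$-weighted quantity, reads $\nablau(r^5\Omega^{-1}\psi)+\tfrac{3\Omega^2-1}{r}\,r^5\Omega^{-1}\psi=\Psi$, and it is the \emph{zeroth-order coefficient} $\tfrac{3\Omega^2-1}{r}$ that vanishes at $\mathscr{I}^+$, giving $\partial_u\psi_{\mathscr{I}^+}=\bm{\uppsi}_{\mathscr{I}^+}$; iterating yields $\bm{\uppsi}_{\mathscr{I}^+}=\partial_u^2\upalpha_{\mathscr{I}^+}$ with the correctly weighted fields. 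This does not affect the structure of your argument but your heuristic for it is wrong.

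Second, and more substantively: in the surjectivity step you assert that membership of the reconstructed $(\alpha|_{\overline\Sigma},\slashednabla_{n_{\overline\Sigma}}\alpha|_{\overline\Sigma})$ in $\mathcal{E}^{T,+2}_{\overline\Sigma}$ is ``automatic, since its $\Psi$ has finite Regge--Wheeler energy''. This is not automatic. The space $\mathcal{E}^{T,+2}_{\overline\Sigma}$ is the \emph{completion of compactly supported data} under the $\Psi$-energy norm, so you must show the (non-compactly-supported) pair $(\alpha,\alpha')$ is approximable in that norm by compactly supported pairs. The paper treats this as a separate step: cutting off $\alpha$ by $\xi_n$ at radius $R_n$ gives
\[
\Psi_n=\xi_n\Psi+r^2(r^2\xi_n')'\,r\Omega^2\alpha-2r^2\xi_n'\,r^3\Omega\psi,
\]
and proving $\Psi_n\to\Psi$ in $\mathcal{E}^T_{\overline\Sigma}$ requires the pointwise decay $r^{7/2}\Omega\psi|_{\overline\Sigma}\to 0$, $r^{7/2}\Omega^2\alpha|_{\overline\Sigma}\to 0$ (and the corresponding first derivatives) toward $i^0$. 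That decay is obtained by integrating the transport hierarchy along ingoing cones and using the decay of $F^T_v[\Psi](u,u')$ together with the support of the scattering data near $\mathscr{I}^+$. Without this density argument ${}^{(+2)}\mathscr{B}^-$ is only known to land in the set of finite-norm smooth data, which is not a priori the same as $\mathcal{E}^{T,+2}_{\overline\Sigma}$, and the isomorphism does not close.
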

\begin{remark*}
The scattering maps of Theorem 2 answer the questions \ref{QI}, \ref{QII}, \ref{QIII} posed at the beginning of the introduction. In particular, the issue of asymptotic completeness is answered in the sense that the spaces $\mathcal{E}^{T,\pm2}_{\overline{\Sigma}}$ include all smooth, compactly supported Cauchy data for \bref{wave equation +}, \bref{wave equation -} as dense subspaces.
\end{remark*}
\begin{remark*}\label{introduction regular frame norm}
As the Eddington--Finkelstein coordinate system degenerates at the bifurcation sphere $\mathcal{B}$, it is necessary to use a regular coordinate system, such as the Kruskal coordinates $U=e^{-\frac{u}{2M}}, V=e^{\frac{v}{2M}}$. In this coordinate system we see that $\overone{W}_{AVBV}\sim V^{-2}\Omega^2\alpha \sim U^2\Omega^{-2}\alpha$ and $\overone{W}_{AUBU}\sim  V^{2}\Omega^{-2}\alpha \sim U^{-2}\Omega^{2}\alpha$ extend regularly to the bifurcation sphere. The integrands defining $\mathcal{E}^{T,\pm2}_{\mathscr{H}^\pm}$ also extend regularly to the bifurcation sphere $\mathcal{B}$. For example,
\begin{align}\label{introduction regular frame expression H-}
     &-2(2M\partial_u)+3(2M\partial_u)^2-(2M\partial_u)^3\Omega^{-2}\alpha=U\partial_U^3 U^{2}\Omega^{-2}\alpha,
\end{align}
\begin{align}\label{introduction regular frame expression I-}
    \int^{\infty}_v e^{\frac{1}{2M}({v}-\bar{v})}\Omega^2\alpha\;d\bar{v}=V\int_{V}^{\infty} \overline{V}^{-2}\Omega^2\alpha\; d\overline{V}.
\end{align}
We take $L^2(\overline{\mathscr{H}^+})$ to be defined with respect to the measure $dv\sin\theta d\theta d\phi$, and we define $ L^2({\mathscr{I}^+})$ via the measure $du\sin\theta d\theta d\phi$. Analogous statements apply to $\mathscr{I}^-, \overline{\mathscr{H}^-}$.\\
\indent The detailed statement of Theorem 2 is contained in \Cref{+2 future forward scattering,,+2 future backward scattering,,+2 past forward scattering,,scatteringthm+2} of \Cref{subsubsection 4.2.1 scattering for the +2 equation}, and \Cref{-2 future forward scattering,,-2 future backward scattering,,-2 past forward scattering,,scatteringthm-2} of \Cref{subsubsection 4.2.2 Scattering for the -2 equation}.
\end{remark*}
\subsubsection{Teukolsky--Starobinsky correspondence}\label{subsubsection 1.3.3 TS}
\indent Finally, concerning the Teukolsky--Starobinsky correspondence relating $\alpha, \underline\alpha$, we may summarise our result as follows:
\begin{theorem*}\label{Theorem 3}
The constraints \bref{eq:227intro1}, \bref{eq:228intro1} can be used to define unitary Hilbert space isomorphisms:
\begin{align}
    \mathcal{TS}_{\mathscr{I}^+}:\mathcal{E}^{T,+2}_{\mathscr{I}^+}\longrightarrow\mathcal{E}^{T,-2}_{\mathscr{I}^+},\qquad\qquad\mathcal{TS}_{\mathscr{H}^+}:\mathcal{E}^{T,+2}_{\overline{\mathscr{H}^+}}\longrightarrow\mathcal{E}^{T,-2}_{\overline{\mathscr{H}^+}},
\end{align}
\begin{align}
    \mathcal{TS}=\mathcal{TS}_{\mathscr{H}^+}\oplus\mathcal{TS}_{\mathscr{I}^+}: \mathcal{E}^{T,+2}_{\overline{\mathscr{H}^+}}\oplus\mathcal{E}^{T,+2}_{\mathscr{I}^+}\longrightarrow\mathcal{E}^{T,-2}_{\overline{\mathscr{H}^+}}\oplus\mathcal{E}^{T,-2}_{\mathscr{I}^+}.
\end{align}
Applying $\mathcal{TS}$ to scattering data, one can associate to a solution to the $+2$ Teukolsky equation \bref{wave equation +} arising from smooth scattering data in $\mathcal{E}^{T,+2}_{\mathscr{I}^+}\oplus\mathcal{E}^{T,+2}_{\overline{\mathscr{H}^+}}$ a unique solution $\underline\alpha$ of the $-2$ Teukolsky equation \bref{wave equation -} with smooth scattering data in $\mathcal{E}^{T,-2}_{\mathscr{I}^+}\oplus\mathcal{E}^{T,-2}_{\overline{\mathscr{H}^+}}$ such that \bref{eq:227intro1}, \bref{eq:228intro1} are satisfied everywhere on the exterior region of Schwarzschild. 
\end{theorem*}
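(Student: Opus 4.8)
The plan is to reduce the identities \bref{eq:227intro1}, \bref{eq:228intro1} to transparent relations on the radiation fields, using that the Chandrasekhar quantities $\Psi, \underline\Psi$ of \bref{transport}, \bref{transport 2} both satisfy the Regge--Wheeler equation \bref{RWintro}, and then invoke Theorem 1 together with the explicit forms of the scattering norms in Theorem 2. First I would record how $\Psi$ and $\underline\Psi$ are obtained from $\alpha$ and $\underline\alpha$ and how the Teukolsky--Starobinsky operators act: the left-hand side of \bref{eq:227intro1} is, up to the transport weights of \bref{transport}, a third $\nablagml$-derivative of $\left(\frac{r^2}{\Omega^2}\nablau\right)^3\alpha$, which is one more null derivative than what defines $\Psi$; the right-hand side is a fourth-order angular operator $\fourthorder$ acting on $r\Omega^2\underline\alpha$ plus $12M\partial_t\, r\Omega^2\underline\alpha$. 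The key algebraic fact (to be verified once, on Schwarzschild, by commuting the Teukolsky equation and the transport relations) is that both sides reduce to an expression built out of $\Psi$ and $\underline\Psi$ and their $T$-derivatives: concretely, modulo the constant angular eigenvalue factors $(\mathring{\slashed\Delta}-2)(\mathring{\slashed\Delta}-4)$ coming from $\fancyd_1\fancydstar_1$ etc. on symmetric traceless $S^2$ tensors, the Teukolsky--Starobinsky identities become, schematically, $\mathfrak{L}\Psi = \mathfrak{L}'\underline\Psi$ for commuting constant-coefficient operators in $T$. This is the step I expect to be the main obstacle: disentangling the interplay between the null-transport weights, the fourth-order Chandrasekhar differentiation, and the angular operators so that the identities collapse to a clean relation between $\Psi$ and $\underline\Psi$, with the mode $\ell=2$ (kernel of $(\mathring{\slashed\Delta}-2)$) handled separately since the angular operators degenerate there.

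Next I would propagate this relation to $\mathscr{H}^+$ and $\mathscr{I}^+$. On $\mathscr{I}^+$, the radiation field of $\Psi$ is $(\partial_u)^3$ applied (up to $r$-weights) to the radiation field $\upalpha_{\mathscr{I}^+}$ of $\alpha$ — this is exactly the norm appearing in the Penrose diagram for $\mathcal{E}^{T,+2}_{\mathscr{I}^+}$ — and similarly on $\mathscr{H}^+$ one gets the integrated expressions $2M\int^\infty_v e^{\frac{1}{2M}(v-\bar v)}\Omega^2\alpha\, d\bar v$ that figure in $\mathcal{E}^{T,+2}_{\overline{\mathscr{H}^+}}$. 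I would take the limit of the Teukolsky--Starobinsky identities towards each piece of the boundary, using the decay/regularity of solutions arising from the scattering data established in Theorem 2, to obtain boundary relations of the form $(\mathring{\slashed\Delta}-2)(\mathring{\slashed\Delta}-4)\,\Phi_{\mathscr{I}^+}[\alpha] = c\,\Phi_{\mathscr{I}^+}[\underline\alpha]$ (and likewise on $\mathscr{H}^+$), where $\Phi$ denotes the relevant primitive/derivative of the radiation field. Matching this against the two norms listed in Theorem 2 then shows that the map $\alpha_{\mathscr{I}^+}\mapsto \underline\alpha_{\mathscr{I}^+}$ defined by these boundary relations is an isometry, because the fourth-order angular factor contributes to both sides of the norm identically and cancels; the $6M\partial_v$-term and the $6M\upalpha$-term are arranged precisely to make the two norms equal. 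The only subtlety is the invertibility: $(\mathring{\slashed\Delta}-2)(\mathring{\slashed\Delta}-4)$ has nontrivial kernel ($\ell=2$), so I would treat the $\ell=2$ projection by hand using the $12M\partial_t$-term, which is nondegenerate there, to recover $\underline\alpha$ from $\alpha$ on that mode, and argue the resulting map is still bounded with bounded inverse — the extra $6M$-terms in the norms are what absorb this mode.

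Having established that $\mathcal{TS}_{\mathscr{I}^+}$ and $\mathcal{TS}_{\mathscr{H}^+}$ are unitary isomorphisms between the respective one-sided scattering spaces, I would assemble $\mathcal{TS} = \mathcal{TS}_{\mathscr{H}^+}\oplus\mathcal{TS}_{\mathscr{I}^+}$ and verify the global statement. Given smooth scattering data for $\alpha$ in $\mathcal{E}^{T,+2}_{\mathscr{I}^+}\oplus\mathcal{E}^{T,+2}_{\overline{\mathscr{H}^+}}$, apply ${}^{(+2)}\mathscr{F}^+$ of Theorem 2 to get a solution $\alpha$ on the whole exterior; apply $\mathcal{TS}$ to produce data in $\mathcal{E}^{T,-2}_{\mathscr{I}^+}\oplus\mathcal{E}^{T,-2}_{\overline{\mathscr{H}^+}}$, and apply $({}^{(-2)}\mathscr{F}^+)^{-1}$ (asymptotic completeness for the $-2$ equation) to obtain a solution $\underline\alpha$ on the exterior. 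It remains to check that $\alpha,\underline\alpha$ so obtained actually satisfy \bref{eq:227intro1}, \bref{eq:228intro1} pointwise in the exterior, not merely in the boundary limit. For this I would argue as follows: both sides of \bref{eq:227intro1} define quantities solving the same (Regge--Wheeler-type, after the Chandrasekhar reduction) equation; by construction they agree in the scattering limit on $\mathscr{I}^+$ and $\mathscr{H}^+$; by the uniqueness half of Theorem 1 (a solution is determined by its future radiation fields) the two sides coincide everywhere; undoing the transport relations \bref{transport}, \bref{transport 2} — which are invertible once the data on a fixed sphere is pinned down, and here it is pinned down by the requirement that the identities hold on the boundary — then gives \bref{eq:227intro1}, \bref{eq:228intro1} in the full exterior, and symmetrically for \bref{eq:228intro1}. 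The point emphasised in \Cref{subsubsection 1.2.6 TS}, that one must work globally, enters precisely here: the constants of integration in inverting the fourth-order null operators on the left of \bref{eq:227intro1}, \bref{eq:228intro1} are fixed only by imposing the relation on the entire boundary $\mathscr{H}^+\cup\mathscr{I}^+$.
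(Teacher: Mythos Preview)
Your broad strategy --- take boundary limits of the constraints, invert them there to define $\mathcal{TS}_{\mathscr{I}^+}$ and $\mathcal{TS}_{\mathscr{H}^+}$, verify unitarity against the norms of Theorem~2, then propagate --- is the paper's strategy. But two of your key steps would not survive the computation.

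\textbf{The boundary inversion.} You identify the angular operator in the constraints with $(\mathring{\slashed\Delta}-2)(\mathring{\slashed\Delta}-4)=\mathcal{A}_2(\mathcal{A}_2-2)$ and propose to invert it mode by mode, handling an alleged $\ell=2$ kernel via the $12M\partial_t$ term. But the operator in \bref{eq:227intro1}, \bref{eq:228intro1} is $2r^4\fancydstar_2\fancydstar_1\overline{\fancyd}_1\fancyd_2$, and the \emph{conjugate} $\overline{\fancyd}_1$ is not a function of the Laplacian: it flips the sign of the magnetic scalar. Taking the limit of \bref{eq:227intro1} towards $\mathscr{I}^+$ and scalarising via $\underline\upalpha_{\mathscr{I}^+}=r^2\fancydstar_2\fancydstar_1(\fbar,\gbar)$ as in \Cref{alphabar out of alpha on scri}, the constraint becomes the pair
\[
\partial_u\fbar-\tfrac{1}{6M}\mathring{\slashed\Delta}(\mathring{\slashed\Delta}+2)\fbar=F,\qquad
\partial_u\gbar+\tfrac{1}{6M}\mathring{\slashed\Delta}(\mathring{\slashed\Delta}+2)\gbar=G,
\]
two fourth-order parabolic equations well-posed in \emph{opposite} directions along $\mathscr{I}^+$. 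This parity splitting is the actual inversion mechanism; a mode-by-mode inversion of an angular multiplier does not see it. (There is also no $\ell=2$ kernel: $\mathring{\slashed\Delta}(\mathring{\slashed\Delta}+2)$ is strictly positive on $\ell\ge2$ scalars, and $\mathcal{A}_2(\mathcal{A}_2-2)$ is strictly positive on symmetric traceless $2$-tensors.) On $\overline{\mathscr{H}^+}$ the analogous equations degenerate at $V=0$, and an elliptic identity at $\mathcal{B}$ fixes the missing datum; see \Cref{alpha out of alphabar on H}.

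\textbf{The propagation.} You propose to reduce both sides of each constraint to Regge--Wheeler quantities and invoke Theorem~1, then undo the transport relations. The paper instead shows directly (\Cref{propagation lemma}) that $\TSm[\alpha,\underline\alpha]$ solves the $-2$ Teukolsky equation and $\TSp[\alpha,\underline\alpha]$ the $+2$, and applies the uniqueness half of Theorem~2. More importantly, the construction only directly enforces the limit of $\TSm$ on $\mathscr{I}^+$ (this is \bref{constraint null infinity section 4}) and the limit of $\TSp$ on $\overline{\mathscr{H}^+}$ (this is \bref{constraint horizon section 4}); to conclude that both constraints vanish one still needs the radiation field of $\TSp$ on $\mathscr{I}^+$ and of $\TSm$ on $\overline{\mathscr{H}^+}$ to vanish. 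This is supplied by a separate algebraic identity (\Cref{not independent}): the constraint quantities themselves satisfy a Teukolsky--Starobinsky relation, so that $\TSp\big|_{\overline{\mathscr{H}^+}}=0$ forces $\partial_V^4\big(V^2\Omega^{-2}\TSm\big)\big|_{\overline{\mathscr{H}^+}}=0$ and hence, with decay, $\TSm\big|_{\overline{\mathscr{H}^+}}=0$ (and symmetrically at $\mathscr{I}^+$). Your claim that ``by construction they agree in the scattering limit on $\mathscr{I}^+$ and $\mathscr{H}^+$'' skips this step.
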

The map $\mathcal{TS}_{\mathscr{I}^+}$ is realised by taking the limit of constraint \bref{eq:227intro1} near $\mathscr{I}^+$ and inverting either side of the constraint on smooth, compactly supported scattering data, which are by definition dense subsets of  $\mathcal{E}^{T,\pm2}_{\mathscr{I}^+}$. The map $\mathcal{TS}_{\mathscr{H}^+}$ is obtained analogously by studying constraint \bref{eq:228intro1} near $\overline{\mathscr{H}^+}$. Note that in order to obtain a unique smooth radiation field $\upalpha_{\mathscr{H}^+}$ for the +2 Teukolsky equation \bref{wave equation +} on the event horizon starting from a radiation field  $\underline\upalpha_{\mathscr{H}^+}$ for the $-2$ equation \bref{wave equation -}, it is necessary to specify $\underline\upalpha_{\mathscr{H}^+}$ on the entirety of $\overline{\mathscr{H}^+}$, and vice versa for $\mathscr{I}^+$. Thus the isomorphisms $\mathcal{TS}_{\mathscr{I}^+}$, $\mathcal{TS}_{\mathscr{H}^+}$ can only be defined on spaces of scattering data that determine solutions to \bref{wave equation +}, \bref{wave equation -} \textit{globally} on the Schwarzschild exterior.\\
\indent In particular, note that spacetimes of Robinson--Trautman type are excluded from our scattering theory, see \Cref{section 9 TS correspondence} and  \Cref{Appendix A Robinson--Trautman}. The Robinson--Trautman spacetimes have the property that one of $\overone\alpha$ or $\overone{\underline\alpha}$ is non-trivial while the other is vanishing, and as such they pose a counterexample to the Teukolsky--Starobinsky correspondence if not properly formulated. We show that this possibility is eliminated when finite-energy scattering is considered globally on the entirety of the exterior of the Schwarzschild solution.\\
\indent The detailed statement of Theorem 3 is contained in \Cref{Theorem 3 detailed statement} of \Cref{subsection 4.3 the Teukolsky--Starobinsky identities}. See \Cref{section 9 TS correspondence} for the detailed treatment.
\subsubsection{A preview of scattering for the full linearised Einstein equations}\label{subsubsection 1.3.4 corollary}
In reference to Theorem 2, Theorem 3 allows us to bridge the scattering theory we build for the Teukolsky equations to develop scattering for the full system of linearised Einstein equations in double null gauge via the following corollary:
\begin{corollary*}\label{Corollary 1}
Given a smooth, compactly supported $\upalpha_{\mathscr{I}^-}$ on $\mathscr{I}^-$ such that $\int_{-\infty}^\infty d\bar{v} \; \upalpha_{\mathscr{I}^-}=0$, and an $\underline\upalpha_{\mathscr{H}^-}$ such that $U^{-2}\underline\upalpha_{\mathscr{H}^-}$ is smooth, compactly supported on $\overline{\mathscr{H}^-}$, there exists a unique smooth pair $(\alpha, \underline\alpha)$ on the exterior region of Schwarzschild, satisfying equations \bref{wave equation +}, \bref{wave equation -} respectively, where $\alpha$ realises $\upalpha_{\mathscr{H}^+}$ as its radiation field on $\overline{\mathscr{H}^+}$, $\underline\alpha$ realises $\underline\upalpha_{\mathscr{I}^+}$ as its radiation field on ${\mathscr{I}^+}$, such that constraints \bref{eq:227intro1} and \bref{eq:228intro1} are satisfied. Moreover, $\alpha, \underline\alpha$ induce smooth radiation fields $\underline{\upalpha}_{\mathscr{I}^+}, \upalpha_{\mathscr{H}^+}$ in $\mathcal{E}^{T,-2}_{{\mathscr{I}^+}}, \mathcal{E}^{T,+2}_{\overline{\mathscr{H}^+}}$ respectively. This extends to a unitary Hilbert-space isomorphism:
\begin{align}
    \mathscr{S}^{-2,+2}:\mathcal{E}^{T,+2}_{\mathscr{I}^-}\oplus\mathcal{E}^{T,-2}_{\overline{\mathscr{H}^-}}\longrightarrow \mathcal{E}^{T,-2}_{\mathscr{I}^+}\oplus\mathcal{E}^{T,+2}_{\overline{\mathscr{H}^+}}.
\end{align}
\end{corollary*}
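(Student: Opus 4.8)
The plan is to assemble $\mathscr{S}^{-2,+2}$ by composing the scattering maps of Theorem 2 with the Teukolsky--Starobinsky isomorphisms of Theorem 3, and then to upgrade the resulting abstract Hilbert-space statement to the concrete claim about smooth solutions with the stated support/vanishing conditions. First I would note that Theorem 3 furnishes, on each component of null infinity and the horizon, the unitary maps $\mathcal{TS}_{\mathscr{I}^+}$ and $\mathcal{TS}_{\mathscr{H}^+}$; together with the backward and forward Teukolsky scattering maps ${}^{(+2)}\mathscr{F}^{\pm}$, ${}^{(-2)}\mathscr{F}^{\pm}$ of Theorem 2, the composite map on $\mathcal{E}^{T,+2}_{\mathscr{I}^-}\oplus\mathcal{E}^{T,-2}_{\overline{\mathscr{H}^-}}$ is built as follows: use the $+2$ data on $\mathscr{I}^-$ together with the $-2$ data on $\overline{\mathscr{H}^-}$ to reconstruct a pair $(\alpha,\underline\alpha)$ — here one first evolves the given $+2$ data on $\mathscr{I}^-$ backward/forward via ${}^{(+2)}\mathscr{F}^{-}$ (respectively its inverse), obtaining in particular the $+2$ radiation field on $\overline{\mathscr{H}^+}$, and separately evolves the $-2$ data on $\overline{\mathscr{H}^-}$ via ${}^{(-2)}\mathscr{F}^{-}$, obtaining the $-2$ radiation field on $\mathscr{I}^+$. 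Unitarity of the composite is then immediate since each factor is unitary; the content is that the two halves are \emph{compatible}, i.e.\ the resulting $\alpha$ and $\underline\alpha$ actually satisfy the constraints \bref{eq:227intro1}, \bref{eq:228intro1}, which is exactly what Theorem 3 guarantees once the scattering data are matched via $\mathcal{TS}$.

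The key steps, in order, are: (i) start from $\upalpha_{\mathscr{I}^-}$ with $\int_{-\infty}^\infty \upalpha_{\mathscr{I}^-}\,d\bar v=0$ and $\underline\upalpha_{\mathscr{H}^-}$ with $U^{-2}\underline\upalpha_{\mathscr{H}^-}$ smooth and compactly supported; (ii) apply the $+2$ past scattering isomorphism to produce a solution $\alpha$ of \bref{wave equation +} on the whole exterior, whose radiation field on $\overline{\mathscr{H}^+}$ is some $\upalpha_{\mathscr{H}^+}$; (iii) apply $\mathcal{TS}_{\mathscr{H}^+}$ to $\upalpha_{\mathscr{H}^+}$ to obtain $-2$ horizon data $\underline\upalpha_{\mathscr{H}^+}$, and use the forward $-2$ scattering map to reconstruct $\underline\alpha$ on the exterior; (iv) check that this $\underline\alpha$ has, on $\overline{\mathscr{H}^-}$, precisely the prescribed radiation field $\underline\upalpha_{\mathscr{H}^-}$ (this is where the compatibility of $\mathcal{TS}_{\mathscr{H}^+}$ with the horizon scattering, plus the fact that $\overline{\mathscr{H}^+}\cup\overline{\mathscr{H}^-}$ together with $\mathcal{B}$ carries a complete set of data, must be invoked — this matching is one of the delicate points); (v) similarly use $\mathcal{TS}_{\mathscr{I}^+}$ to identify the $-2$ radiation field $\underline\upalpha_{\mathscr{I}^+}$ and confirm $\alpha$ realises $\upalpha_{\mathscr{I}^-}$ on $\mathscr{I}^-$; (vi) invoke Theorem 3 to conclude that the pair $(\alpha,\underline\alpha)$ satisfies the constraints everywhere on the exterior, and that smoothness propagates; (vii) read off that the assignment of $(\underline\upalpha_{\mathscr{I}^+},\upalpha_{\mathscr{H}^+})$ to $(\upalpha_{\mathscr{I}^-},\underline\upalpha_{\mathscr{H}^-})$ is a unitary Hilbert-space isomorphism by composition of unitaries, and extends from the smooth compactly supported (respectively mean-zero / $U^{-2}$-smooth) dense subspaces to the full completions.

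The role of the hypotheses $\int \upalpha_{\mathscr{I}^-}=0$ and the $U^{-2}$-regularity of $\underline\upalpha_{\mathscr{H}^-}$ deserves its own step: these are precisely the conditions singling out the image of genuine $\pm2$ Teukolsky solutions inside the radiation-field spaces, reflecting the order-three transport relations \bref{transport}, \bref{transport 2} and the $(\mathring{\slashed\Delta}-2)(\mathring{\slashed\Delta}-4)$ factors appearing in the $\mathcal{E}^{T,\pm2}$ norms; I would verify that under $\mathcal{TS}$ these conditions are preserved, so that the reconstructed $\underline\alpha$ (respectively $\alpha$) indeed lies in the correct space and induces \emph{smooth} radiation fields on the remaining boundary components, and that the Robinson--Trautman obstruction (discussed around \Cref{Appendix A Robinson--Trautman}) is not triggered because the data are being prescribed on the full past boundary $\mathscr{I}^-\cup\overline{\mathscr{H}^-}$.

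The main obstacle I expect is step (iv)/(v): showing that the two independently reconstructed objects $\alpha$ and $\underline\alpha$ are mutually consistent on \emph{all four} boundary components simultaneously — that is, that feeding the $\mathcal{TS}$-image of $\alpha$'s horizon radiation field into the $-2$ scattering map returns a $\underline\alpha$ whose $\mathscr{I}^-$ data and whose $\overline{\mathscr{H}^-}$ data are the ones dictated by $\mathcal{TS}$ applied to $\alpha$'s $\mathscr{I}^-$ data and by the prescription, with no overdetermination. This is really the assertion that the diagram formed by $\mathscr{S}^{+2}$, $\mathscr{S}^{-2}$ and $\mathcal{TS}$ on the four boundary components commutes, which in turn rests on the fact — established in the proof of Theorem 3 — that the constraints \bref{eq:227intro1}, \bref{eq:228intro1} are \emph{propagated} by the Teukolsky flow: if they hold on a single null boundary component they hold throughout the exterior, and in particular on the other components. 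Once that propagation statement is in hand, commutativity of the diagram is automatic and the corollary follows; absent it, one would have to check the matching by an explicit (and delicate) limit computation of both constraints near each of $\mathscr{I}^\pm$ and $\overline{\mathscr{H}^\pm}$.
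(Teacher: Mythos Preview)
Your plan has a structural gap in step~(ii): the $+2$ past scattering isomorphism ${}^{(+2)}\mathscr{B}^+$ requires data on \emph{both} $\mathscr{I}^-$ and $\overline{\mathscr{H}^-}$, but you only possess $+2$ data on $\mathscr{I}^-$ (on $\overline{\mathscr{H}^-}$ you were handed $-2$ data). You cannot ``evolve the given $+2$ data on $\mathscr{I}^-$'' in isolation, so there is no $\alpha$ yet and hence no $\upalpha_{\mathscr{H}^+}$ to feed into step~(iii). Your subsequent route through $\mathcal{TS}_{\mathscr{H}^+}$ on the \emph{future} horizon is therefore circular, and the consistency check you anticipate in~(iv) never gets off the ground.

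The paper avoids this by applying the Teukolsky--Starobinsky maps on the \emph{past} boundary first: $\mathcal{TS}_{\mathscr{I}^-}$ turns $\upalpha_{\mathscr{I}^-}$ into $\underline\upalpha_{\mathscr{I}^-}$, and $\mathcal{TS}_{\mathscr{H}^-}^{-1}$ turns $\underline\upalpha_{\mathscr{H}^-}$ into $\upalpha_{\mathscr{H}^-}$. Now \emph{both} $\alpha$ and $\underline\alpha$ have complete past scattering data, they are evolved forward independently, and the constraints hold on the past boundary by construction and propagate---so your ``diagram commutativity'' concern evaporates without any matching on $\mathscr{H}^+$ or $\mathscr{I}^+$.

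This, however, introduces a genuine analytic difficulty you do not anticipate: the past data produced by $\mathcal{TS}^-$ are \emph{not} compactly supported (e.g.\ $\underline\upalpha_{\mathscr{I}^-}$ is only supported away from the past end of $\mathscr{I}^-$, and $\upalpha_{\mathscr{H}^-}$ decays exponentially toward the past end of $\overline{\mathscr{H}^-}$), so the scattering theorems as proved do not directly apply. The bulk of the paper's proof is precisely this extension: deriving an $r^p$-estimate on $\underline\Psi$ starting from $\mathscr{I}^-$ (see the estimates culminating in \eqref{rp estimate from past null infinity}), using it to show that $r\underline\alpha$ and $r^3\underline\psi$ attain radiation fields on $\mathscr{I}^+$ which decay toward \emph{both} ends, and finally verifying $\int_{-\infty}^\infty\underline\upalpha_{\mathscr{I}^+}\,du=0$ by passing the Teukolsky equation to $\mathscr{I}^+$ (equation \eqref{858585}). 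Your ``composition of unitaries'' sketch captures the Hilbert-space statement but misses this analytic core of the smooth-solutions claim.
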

\begin{center}
\begin{tikzpicture}[->,scale=0.7, arrow/.style={
            color=black,
            draw=blue,thick,
            -latex,
                font=\fontsize{8}{8}\selectfont},
        ]
\node (I)    at ( 0,0)   {};

\path % Four corners of the right diamond (no labels this time)
   (I) +(90:4)  coordinate (Itop) coordinate [label={$i^+$}]
       +(180:4) coordinate (Ileft) coordinate [label=180:{$\mathcal{B}\;$}]
       +(0:4)   coordinate (Iright) coordinate [label=0:{$\;i^0$}]
       +(270:4) coordinate (Ibot) coordinate [label=-90:{$i^-$}]
       ;
%\draw[arrow] ($(Itop)+(-135:2.5cm)$) to [out=-25,in=90]($(Itop)+(-100:3.6cm)$);
\draw[arrow] ($(Itop)+(-90:3.6cm)$) to [in=-25,out=90] ($(Itop)+(-135:2.5cm)$);

\draw[arrow] ($(Itop)+(-90:3.6cm)$) to [in=205,out=90]($(Itop)+(-45:2.5cm)$);

%\draw[arrow] ($(Ibot)+(100:3.6cm)$) to [in=45,out=-90] ($(Ibot)+(135:2.5cm)$);
\draw[arrow] ($(Ibot)+(135:2.7cm)$) to [out=10,in=-90] ($(Ibot)+(90:3.6cm)$);

\draw[arrow] ($(Ibot)+(45:2.7cm)$) to [out=170,in=-90] ($(Ibot)+(90:3.6cm)$);

%\draw[arrow] ($(Itop)+(-45:2.5cm)$) to [out=180,in=45]($(Itop)+(-75:3.9cm)$);

% No text this time in the next diagram
\draw  (Ileft) -- node[yshift=4mm,xshift=-1mm]{$\upalpha_{\mathscr{H}^+}$} (Itop) ;
\draw[dash dot dot] (Iright) --  node[yshift=4mm,xshift=1.mm]{$\underline\upalpha_{\mathscr{I}^+}$}(Itop) ;
\node[draw] at ($(Itop)+(-90:4cm)$) {$(\alpha,\underline\alpha)$};
\draw[dash dot dot] (Iright) --  node[yshift=-4mm,xshift=1.mm]{$\upalpha_{\mathscr{I}^-}$}(Ibot) ;
\draw  (Ileft) -- node[yshift=-4mm,xshift=-1mm]{$\underline\upalpha_{\mathscr{H}^-}$} (Ibot) ;

%\draw ($(Ileft)$)  to[out=0, in=180, edge node={node [below] {$\overline{\Sigma}$}}] ($(Iright)$);

\filldraw[white] (Itop) circle (3pt);
\draw[black] (Itop) circle (3pt);

\filldraw[white] (Ibot) circle (3pt);
\draw[black] (Ibot) circle (3pt);

\filldraw[white] (Iright) circle (3pt);
\draw[black] (Iright) circle (3pt);
\filldraw[black] (Ileft) circle (3pt);
\draw[black] (Ileft) circle (3pt);
\end{tikzpicture}
\end{center}
\Cref{Corollary 1} is stated again as \Cref{corollary to be proven} of \Cref{subsection 4.4 Corollary 1: mixed scattering}. The proof is contained in \Cref{subsection 9.4 mixed scattering}.\\
\indent To apply this result to scattering for the linearised Einstein equations, the strategy will be to start from data for the metric on $\mathscr{H}^-, \mathscr{I}^-$ (or $\mathscr{H}^+, \mathscr{I}^+$), obtain data for the shears $\overone{\hat{\chi}}$ and hence $\overone{\alpha}$ on $\mathscr{H}^+$, $\overone{\hat{\underline{\chi}}}$ and hence $\overone{\underline\alpha}$ on $\mathscr{I}^+$, then use \Cref{Corollary 1} to obtain scattering data and solutions to \cref{wave equation +} and \cref{wave equation -}, and conclude by constructing the remaining quantities using the linearised Bianchi and null structure equations. This will be the subject of a forthcoming sequel to this paper \cite{M2050}.\\
\indent We can give a preview of the scattering results of the full system: assume we have a solution to the linearised Einstein equations defined on the whole of the exterior region (see \Cref{subsection 2.2 Linearised Einstein equations in double null gauge} for a full list of equations), such that $\overone\alpha, \overone{\underline\alpha}$ induce radiation fields $\overone{\upalpha}_{\mathscr{I}^+}\in\mathcal{E}^{T,-2}_{\mathscr{I}^+}$, $\overone{\upalpha}_{\mathscr{I}^-}\in\mathcal{E}^{T,+2}_{\mathscr{I}^-}$, $\overone{\upalpha}_{{\mathscr{H}^+}}\in\mathcal{E}^{T,+2}_{\overline{\mathscr{H}^+}}$, $\overone{\underline\upalpha}_{{\mathscr{H}^-}}\in\mathcal{E}^{T,-2}_{\overline{\mathscr{H}^-}}$. Using \bref{example2} and its counterpart in the 4-direction, we can assert that the radiation fields belonging to the linearised shears $\overone{\hat\chi}, \overone{\hat{\underline\chi}}$ must satisfy
\begin{align}\label{boundedness shears}
    \begin{split}
        &\left\|\left(\mathring{\slashed{\Delta}}-2\right)\left(\mathring{\slashed{\Delta}}-4\right)\overone{\hat{\underline\upchi}}_{\mathscr{I}^+}\right\|_{L^2(\mathscr{I}^+)}^2+\left\|6M\partial_u\overone{\hat{\underline\upchi}}_{\mathscr{I}^+}\right\|_{L^2(\mathscr{I}^+)}^2\\
        &\qquad\qquad +\left\|\left(\mathring{\slashed{\Delta}}-2\right)\left(\mathring{\slashed{\Delta}}-4\right)\overone{\hat{\upchi}}_{\mathscr{H}^+}\right\|_{L^2(\overline{\mathscr{H}^+})}^2+\left\|6M\partial_v\overone{\hat{\upchi}}_{\mathscr{H}^+}\right\|_{L^2(\overline{\mathscr{H}^+})}^2\\
        &\qquad\qquad\qquad\qquad\qquad=\left\|\left(\mathring{\slashed{\Delta}}-2\right)\left(\mathring{\slashed{\Delta}}-4\right)\overone{\hat{\upchi}}_{\mathscr{I}^-}\right\|_{L^2(\mathscr{I}^-)}^2+\left\|6M\partial_v\overone{\hat{\upchi}}_{\mathscr{I}^-}\right\|_{L^2(\mathscr{I}^-)}^2\\
        &\qquad\qquad\qquad\qquad\qquad\qquad\qquad+\left\|\left(\mathring{\slashed{\Delta}}-2\right)\left(\mathring{\slashed{\Delta}}-4\right)\overone{\hat{\underline\upchi}}_{\mathscr{H}^-}\right\|_{L^2(\overline{\mathscr{H}^-})}^2+\left\|6M\partial_u\overone{\hat{\underline\upchi}}_{\mathscr{H}^-}\right\|_{L^2(\overline{\mathscr{H}^-})}^2.
    \end{split}
\end{align}
%\begin{changemargin}{-0.7cm}{2cm}
%\begin{align}\label{boundedness shears}
%    \scalebox{0.5}{
%    \begin{split}
%        &\left\|\left(\mathring{\slashed{\Delta}}-2\right)\left(\mathring{\slashed{\Delta}}-4\right)\overone{\hat{\underline\upchi}}_{\mathscr{I}^+}\right\|_{L^2(\mathscr{I}^+)}^2+\left\|6M\partial_u\overone{\hat{\underline\upchi}}_{\mathscr{I}^+}\right\|_{L^2(\mathscr{I}^+)}^2+\left\|\left(\mathring{\slashed{\Delta}}-2\right)\left(\mathring{\slashed{\Delta}}-4\right)\overone{\hat{\upchi}}_{\mathscr{H}^+}\right\|_{L^2(\overline{\mathscr{H}^+})}^2+\left\|6M\partial_v\overone{\hat{\upchi}}_{\mathscr{H}^+}\right\|_{L^2(\overline{\mathscr{H}^+})}^2\\
%        &=\left\|\left(\mathring{\slashed{\Delta}}-2\right)\left(\mathring{\slashed{\Delta}}-4\right)\overone{\hat{\upchi}}_{\mathscr{I}^-}\right\|_{L^2(\mathscr{I}^-)}^2+\left\|6M\partial_v\overone{\hat{\upchi}}_{\mathscr{I}^-}\right\|_{L^2(\mathscr{I}^-)}^2+\left\|\left(\mathring{\slashed{\Delta}}-2\right)\left(\mathring{\slashed{\Delta}}-4\right)\overone{\hat{\underline\upchi}}_{\mathscr{H}^-}\right\|_{L^2(\overline{\mathscr{H}^-})}^2+\left\|6M\partial_u\overone{\hat{\underline\upchi}}_{\mathscr{H}^-}\right\|_{L^2(\overline{\mathscr{H}^-})}^2.
%    \end{split}
%    }%
%\end{align}
%\end{changemargin}
\indent The fact that time translation and angular momentum operators commute with $\Box_g$ means that we can project scattering data on individual azimuthal modes and consider solutions in frequency space. Since $\overone{\hat{\chi}}, \overone{\hat{\underline\chi}}$ are supported on $\ell\geq2$, and in view of the unitarity of \bref{boundedness shears}, we can translate \bref{boundedness shears} in terms of fixed frequency, fixed azimuthal mode solutions to the following statement:
\begin{align}
    \Big\|\overone{\hat{\upchi}}_{\mathscr{H}^+,\;\omega,m,\ell}\Big\|_{L^2_\omega }^2+\Big\|\overone{\hat{\underline\upchi}}_{\mathscr{I}^+,\;\omega,m,\ell}\Big\|^2_{L^2_\omega }\;=\;\Big\|\overone{\hat{\underline\upchi}}_{\mathscr{H}^-,\;\omega,m,\ell}\Big\|_{L^2_\omega }^2+\Big\|\overone{\hat{\upchi}}_{\mathscr{I}^-,\;\omega,m,\ell}\Big\|^2_{L^2_\omega }.
\end{align}
Resumming in $\ell_{m,\ell}^2$ and using Plancherel, we obtain the identity 
\begin{align}\label{conservation law}
     \Big\|\overone{\hat{\upchi}}_{\mathscr{H}^+}\Big\|_{L^2(\overline{\mathscr{H}^+})}^2+\Big\|\overone{\hat{\underline\upchi}}_{\mathscr{I}^+}\Big\|_{L^2(\mathscr{I}^+)}^2\;=\; \Big\|\overone{\hat{\underline\upchi}}_{\mathscr{H}^-}\Big\|^2_{L^2(\overline{\mathscr{H}^-})}+\Big\|\overone{\hat{\upchi}}_{\mathscr{I}^-}\Big\|^2_{L^2(\mathscr{I}^-)}.
\end{align}
The statement \bref{conservation law} above ties up with the work by Holzegel \cite{Holzegel_2016}, where a set of conservation laws are derived for the full system of linearised Einstein equations on the Schwarzschild exterior \bref{SchwMetric} (using purely physical-space methods).\\
\indent Note that in particular, for past scattering data that is vanishing on $\overline{\mathscr{H}^-}$, the identity  \bref{conservation law} has the interpretation that the energy of the gravitational energy radiated to $\mathscr{I}^+$ is bounded \underline{with constant 1} by the incoming gravitational energy radiated from $\mathscr{I}^-$, i.e.~there is no superradiant amplification of reflected gravitational radiation on the Schwarzschild exterior.
\subsection{Outline of the paper}\label{subsection 1.4 outline}
This paper is organised as follows: We review the linearised Einstein equations in double null gauge around the Schwarzschild spacetime in \Cref{section 2 preliminaries}. In \Cref{TRW} we introduce the Teukolsky equations, the Regge--Wheeler equations and derive important identities connecting the equations. Detailed statements of the results of this work are presented in \Cref{section 4 main theorems}, and then the scattering theory of the Regge--Wheeler equations is studied in \Cref{section 5 scattering theory for RW}. We develop scattering for the Teukolsky equations by first working out the necessary estimates to understand the asymptotic behaviour in forward evolution for both equations in \Cref{section 6} and \Cref{section 7}. Backwards scattering for both equations is treated in \Cref{section 8 constructing the scattering maps}, followed by the study of the constraints \bref{eq:227intro1} and \bref{eq:228intro1} in \Cref{section 9 TS correspondence}. \Cref{Appendix A Robinson--Trautman} is concerned with Robinson--Trautman spacetimes, and \Cref{Appendix B Double null guage} is a brief review of the double null gauge.
\setcounter{tocdepth}{2}

\subsection*{Acknowledgements} The author would like to express his gratitude to his supervisors Mihalis Dafermos, Claude Warnick and  Malcolm J. Perry for introducing him to the fascinating area of scattering theory in general relativity, and for their unwavering support throughout the undertaking of this project. The author would like to thank Dejan Gajic, Leonhard Kehrberger, Rita Teixeira da Costa and Yakov Shlapentokh-Rothman for stimulating discussions and helpful remarks. The author acknowledges support by the EPSRC grant  EP/L016516/1. 

\section{Preliminaries}\label{section 2 preliminaries}

\subsection{The Schwarzschild exterior in a double null gauge}\label{subsection 2.1 schwarzschild in dng}

\indent Denote by $\mathscr{M}$ the exterior of the maximally extended Schwarzschild spacetime. Using Kruskal coordinates, this is the manifold with corners
\begin{align}
    \mathscr{M}=\{(U,V,\theta^A)\in(-\infty,0]\times [0,\infty)\times S^2\}
\end{align}
equipped with the metric
\begin{align}\label{metric Kruskal}
    ds^2=-\frac{32M^3}{r(U,V)}e^{-\frac{r(U,V)}{2M}}dUdV+r(U,V)^2\gamma_{AB}d\theta^A d\theta^B.
\end{align}
The function $r(U,V)$ is determined by $-UV=\left(\frac{r}{2M}-1\right)e^{\frac{r}{2M}}$, $(\theta^A)$ is a coordinate system on $S^2$ and $\gamma_{AB}$ is the standard metric on the unit sphere $S^2$. The time-orientation of $\mathscr{M}$ is defined by the vector field $\partial_U+\partial_V$. 

The boundary of $\mathscr{M}$ consists of the two null hypersurfaces
\begin{align}
    \mathscr{H}^+&=\{0\}\times(0,\infty)\times S^2,\\
    \mathscr{H}^-&=(-\infty,0)\times \{0\}\times S^2,
\end{align}
and the 2-sphere $\mathcal{B}$ where $\mathscr{H}^+$ and $\mathscr{H}^-$ bifurcate:
\begin{align}
    \mathcal{B}=\{U,V=0\}\cong S^2 .
\end{align}
We define $\overline{\mathscr{H}^+}=\mathscr{H}^+\cup \mathcal{B}$, $\overline{\mathscr{H}^-}=\mathscr{H}^-\cup \mathcal{B}$. \\
\indent The interior of $\mathscr{M}$ can be covered with the familiar Schwarzschild coordinates $(t,r,\theta^A)$ and the metric takes the form \bref{SchwMetric}, i.e.
\begin{align}
    ds^2=-\left(1-\frac{2M}{r}\right)dt^2+\left(1-\frac{2M}{r}\right)^{-1}dr^2+r^2\gamma_{AB}d\theta^Ad\theta^B.
\end{align}
Let $\Omega^2=\left(1-\frac{2M}{r}\right)$. It will be convenient to work instead in Eddington--Finkelstein coordinates
\begin{align}\label{EF null coordinates}
    u=\frac{1}{2}(t-r_*),\qquad\qquad\qquad v=\frac{1}{2}(t+r_*),
\end{align}
where $r_*$ is defined up to a constant by $\frac{dr_*}{dr}=\frac{1}{\Omega^2}$. The coordinates $(u,v,\theta^A)$ also define a double null foliation (see Appendix B) of the interior of $\mathscr{M}$ since the metric takes the form
\begin{align}
    ds^2=-4\left(1-\frac{2M}{r}\right)dudv+r(u,v)^2(d\theta^2+\sin^2\theta d\phi^2).
\end{align}
In particular the null frame defined by the coordinates \bref{EF null coordinates} is given by (see Appendix B):
\begin{align}
    e_3=\frac{1}{\Omega}\partial_u,\qquad\qquad e_4=\frac{1}{\Omega}\partial_v.
\end{align}
We may relate $U,V$ to $u,v$ after fixing the residual freedom in defining $t,r_*$ by
\begin{align}\label{Kruskal}
    U=-e^{-\frac{u}{2M}},\qquad\qquad V=e^{\frac{v}{2M}},
\end{align}
Note that the intersections of null hypersurfaces of constant $u,v$ are spheres with metric $\slashed{g}_{AB}:=r^2\gamma_{AB}$. We denote these spheres by $S^2_{u,v}$.\\
\indent The $(u,v)$-coordinate system degenerates on $\overline{\mathscr{H}^+}$ and $\overline{\mathscr{H}^-}$ where $u=\infty,v=-\infty$ respectively. To compensate for this we can use the Kruskal coordinates to introduce weighted quantities in the coordinates $(u,v,\theta^A)$ that are regular on $\mathscr{H}^\pm$. We note already at this stage that the regularity of $\partial_U,\partial_V$ on the event horizons implies that $\frac{1}{\Omega}e_3, \Omega e_4$ are regular on $\mathscr{H}^+$ and $\frac{1}{\Omega}e_4, \Omega e_3$ are regular on $\mathscr{H}^-$ (but not $\overline{\mathscr{H}^\pm}$, which include $\mathcal{B}$).\\
\indent We denote by $\mathscr{C}_{u^*}$ the ingoing null hypersurface of constant $u=u^*$, and similarly $\underline{\mathscr{C}}_{v^*}$ denotes the outgoing null hypersurface $v=v^*$; define $\mathscr{C}_{u^*}\cap[v_1,v_2]$ to be the subset of $\mathscr{C}_{u^*}$ for which $v\in[v_1,v_2]$, $\underline{\mathscr{C}}_v\cap[u_1,u_2]$ denotes the subset of  $\underline{\mathscr{C}}_v$ for which $u\in[u_-,u_+]$. Let ${\Sigma}$ be the spacelike surface $\{t=0\}$ and let $\overline{\Sigma}=\Sigma\cup\mathcal{B}$ be the topological closure of $\Sigma$ in $\mathscr{M}$. $\overline{\Sigma}$ is a smooth Cauchy surface for $\mathscr{M}$ which connects $\mathcal{B}$ with "spacelike infinity"; in Kruskal coordinates it is given by $\{U+V=0\}$. We also work with a spacelike hypersurface $\Sigma^*$
intersecting $\mathscr{H}^+$ to the future of $\mathcal{B}$, defined as follows: let 
\begin{align}
    t^*=t+2M\log\left(\frac{r}{2M}+1\right).
\end{align}
The function $t^*$ can be extended to $\mathscr{H}^\pm$ to define a smooth function on all of $\mathscr{M}$, and we define $\Sigma^*$ by
\begin{align}
    \Sigma^*=\{t^*=0\}
\end{align}
\noindent Note that $\Sigma^*$ intersects $\mathscr{H}^+$ at $v=0$ and asymptotes to spacelike infinity. Define $\mathscr{H}^+_{\geq 0}:=\mathscr{H}^+\cap  J^+(\Sigma^*)$. We will occasionally use the notation $x:=1-\frac{1}{\Omega^2}$.
We denote the spacetime region bounded by $\mathscr{C}_{u_0}\cap[v_0,v_1], \mathscr{C}_{u_1}\cap[v_0,v_1], \underline{\mathscr{C}}_{v_0}\cap[u_0,u_1], \underline{\mathscr{C}}_{v_1}\cap[u_0,u_1]$ by $\mathscr{D}^{u_1,v_1}_{u_0,v_0}$. We also denote the spacetime region bounded by $\mathscr{C}_u,\underline{\mathscr{C}}_v, \Sigma^*$ by $\mathscr{D}^{u,v}_{\Sigma^*}$.
\begin{center}
\begin{tikzpicture}[scale=1]
\node (I)    at ( 0,0)   {$\mathscr{D}^{u_1,v_1}_{u_0,v_0}$};

\path % Four corners of the right diamond (no labels this time)
   (I) +(90:4)  coordinate (Itop) coordinate[label=90:$i^+$]
       +(-90:4) coordinate (Ibot) coordinate[label=-90:$i^-$]
       +(180:4) coordinate (Ileft)
       +(0:4)   coordinate (Iright) coordinate[label=0:$i^0$]
       ;

\path % Four corners of the right diamond (no labels this time)
   (I) +(90:2)  coordinate (Ictop)
       +(-90:2) coordinate (Icbot)
       +(180:2) coordinate (Icleft)
       +(0:2)   coordinate (Icright)
       ;

% No text this time in the next diagram
\draw  (Ileft) -- node[rotate=45,below] {$u=\infty$} node[rotate=45,above]{$\mathscr{H}^+$} (Itop) ;
\draw  (Ileft) -- node[rotate=-45,above] {$v=-\infty$} node[rotate=-45,below]{$\mathscr{H}^-$}(Ibot) ;
\draw[dash dot dot] (Ibot) -- node[rotate=45,above] {$u=-\infty$} node[rotate=45,below]{$\mathscr{I}^-$}(Iright) ;
\draw[dash dot dot] (Iright) -- node[rotate=-45,below] {$v=\infty$} node[rotate=-45,above]{$\mathscr{I}^+$}(Itop) ;

%\path (Icright)+ (45:2) coordinate (Iccright)

\draw(Icleft) --node[rotate=45,above] {$\mathscr{C}_{u_1}\cap[v_0,v_1]$} (Ictop);
\draw(Ictop) -- node[rotate=-45,above] {$\underline{\mathscr{C}}_{v_1}\cap[u_0,u_1]$}(Icright);
\draw(Icright) -- node[rotate=45,below] {$\mathscr{C}_{u_0}\cap[v_0,v_1]$}(Icbot);
\draw(Icbot) -- node[rotate=-45,below] {$\underline{\mathscr{C}}_{v_0}\cap[v_0,v_1]$}(Icleft);

%\draw(Ileft) --node[above] {$\overline{\Sigma}$} (Iright) ;

\filldraw[white] (Itop) circle (3pt);
\draw[black] (Itop) circle (3pt);

\filldraw[white] (Ibot) circle (3pt);
\draw[black] (Ibot) circle (3pt);

\filldraw[white] (Iright) circle (3pt);
\draw[black] (Iright) circle (3pt);
\filldraw[black] (Ileft) circle (3pt);

\end{tikzpicture}
\end{center}

\begin{center}
\begin{tikzpicture}[scale=0.4]
\node (I)    at ( 0,0)   {};

\path % Four corners of the right diamond (no labels this time)
   (I) +(90:4)  coordinate (Itop) coordinate[label=90:$i^+$]
       +(-90:4) coordinate (Ibot) coordinate[label=-90:$i^-$]
       +(180:4) coordinate (Ileft)
       +(0:4)   coordinate (Iright) coordinate[label=0:$i^0$]
       ;

% No text this time in the next diagram
\draw  (Ileft) --  node[rotate=45,above]{$\mathscr{H}^+_{\geq0}$} (Itop) ;
\draw  (Ileft) -- (Ibot) ;
\draw[dash dot dot] (Ibot) -- (Iright) ;
\draw[dash dot dot] (Iright) --  node[rotate=-45,above]{$\mathscr{I}^+$}(Itop) ;

% node[rotate=45,below]{$\mathscr{I}^-$}      node[rotate=-45,below]{$\mathscr{H}^-$}

\draw ($(Ileft)+(45:1.2)$)  to[out=-0, in=165, edge node={node [below] {$\Sigma^*$}}] ($(Iright)$);

%\draw ($(Itop)+(-75:1)$) to ($(Itop)+(-75:1)+(-135:3.1)$);

%\draw ($(Itop)+(-75:1)$) to ($(Itop)+(-75:1)+(-45:3.9)$);

%\node (II) at ($(Itop)+(-75:1.2)+(-95:1.2)$) {$\mathscr{D}^{u,v}_{\Sigma^*}$};

\filldraw[white] (Itop) circle (3pt);
\draw[black] (Itop) circle (3pt);

\filldraw[white] (Ibot) circle (3pt);
\draw[black] (Ibot) circle (3pt);

\filldraw[white] (Iright) circle (3pt);
\draw[black] (Iright) circle (3pt);
\filldraw[black]($(Ileft)+(45:1.2)$) circle (3pt);
\end{tikzpicture}\hspace{2cm}\begin{tikzpicture}[scale=0.4]
\node (I)    at ( 0,0)   {};

\path % Four corners of the right diamond (no labels this time)
   (I) +(90:4)  coordinate (Itop) coordinate[label=90:$i^+$]
       +(-90:4) coordinate (Ibot) coordinate[label=-90:$i^-$]
       +(180:4) coordinate (Ileft)
       +(0:4)   coordinate (Iright) coordinate[label=0:$i^0$]
       ;

% No text this time in the next diagram
\draw  (Ileft) --  node[rotate=45,above]{$\mathscr{H}^+$} (Itop) ;
\draw  (Ileft) --(Ibot) ;
\draw[dash dot dot] (Ibot) -- (Iright) ;
\draw[dash dot dot] (Iright) --  node[rotate=-45,above]{$\mathscr{I}^+$}(Itop) ;

\draw ($(Ileft)$)  to[out=0, in=180, edge node={node [above] {$\Sigma$}}] ($(Iright)$);

\filldraw[white] (Itop) circle (3pt);
\draw[black] (Itop) circle (3pt);

\filldraw[white] (Ibot) circle (3pt);
\draw[black] (Ibot) circle (3pt);

\filldraw[white] (Iright) circle (3pt);
\draw[black] (Iright) circle (3pt);
\filldraw[white] (Ileft) circle (3pt);
\draw[black] (Ileft) circle (3pt);
\end{tikzpicture}\hspace{2cm}\begin{tikzpicture}[scale=0.4]
\node (I)    at ( 0,0)   {};

\path % Four corners of the right diamond (no labels this time)
   (I) +(90:4)  coordinate (Itop) coordinate[label=90:$i^+$]
       +(-90:4) coordinate (Ibot) coordinate[label=-90:$i^-$]
       +(180:4) coordinate (Ileft) coordinate[label=180:$\mathcal{B}$]
       +(0:4)   coordinate (Iright) coordinate[label=0:$i^0$]
       ;

% No text this time in the next diagram
\draw  (Ileft) --  node[rotate=45,above]{$\overline{\mathscr{H}^+}$} (Itop) ;
\draw  (Ileft) --(Ibot) ;
\draw[dash dot dot] (Ibot) -- (Iright) ;
\draw[dash dot dot] (Iright) --  node[rotate=-45,above]{$\mathscr{I}^+$}(Itop) ;

\draw ($(Ileft)$)  to[out=0, in=180, edge node={node [above] {$\overline{\Sigma}$}}] ($(Iright)$);

\filldraw[white] (Itop) circle (3pt);
\draw[black] (Itop) circle (3pt);

\filldraw[white] (Ibot) circle (3pt);
\draw[black] (Ibot) circle (3pt);

\filldraw[white] (Iright) circle (3pt);
\draw[black] (Iright) circle (3pt);
\filldraw[black] (Ileft) circle (3pt);
\draw[black] (Ileft) circle (3pt);
\end{tikzpicture}
\end{center}

\subsubsection*{Null infinity $\mathscr{I}^\pm$}

We define the notion of null infinity by directly attaching it as a boundary to $\mathscr{M}$. Define $\mathscr{I}^+,\mathscr{I}^-$ to be the manifolds
\begin{align}
\mathscr{I}^+,\mathscr{I}^-:=\mathbb{R}\times S^2
\end{align}
and define $\overline{\mathscr{M}}$ to be the extension 
\begin{align}
\overline{\mathscr{M}}=\mathscr{M}\cup\mathscr{I}^+\cup\mathscr{I}^-.
\end{align}
For sufficiently large $R$ and any open set $\mathcal{O}\subset\mathbb{R}\times S^2$, declare the sets $\mathcal{O}^+_R=(R,\infty]\times\mathcal{O}$ to be open in $\overline{\mathscr{M}}$, identifying $\mathscr{I}^+$ with the points $(u,\infty,\theta,\phi)$. To the set $\mathcal{O}_R^+$ we assign the coordinate chart $(u,s,\theta,\phi)\in \mathbb{R}\times[0,1)\times S^2$ via the map
\begin{align}
(u,v,\theta,\phi)\longrightarrow(u,\frac{R}{v},\theta,\phi),
\end{align}
where $(u,v,\theta,\phi)$ are the Eddington--Finkelstein coordinates we defined earlier. The limit $\lim_{v\longrightarrow\infty} (u,v,\theta,\phi)$ exists and is unique, and we use it via the above charts to fix a coordinate system $(u,\theta,\phi)$ on $\mathscr{I}^+$. The same can be repeated to define an atlas attaching $\mathscr{I}^-$ as a boundary to $\overline{\mathscr{M}}$.

\subsubsection{$S^2_{u,v}$-projected connection and angular derivatives}\label{D1D2}
We will be working primarily with tensor fields that are everywhere tangential to the $S^2_{u,v}$ spheres foliating $\mathscr{M}$. By this we mean any tensor fields of type $(k,l)$, $\digamma\in \mathcal{T}^{(k,l)}\mathscr{M}$ on $\mathscr{M}$ such that for any point $q=(u,v,\theta^A)\in\mathscr{M}$ we have $\digamma|_q\in \mathcal{T}^{(k,l)}_{(\theta^A)}S^2_{u,v}$. (Note that a vector $X^A\in \mathcal{T}_{(\theta^A)}S^2_{u,v}$ is canonically identified with a vector $X^a\in\mathcal{T}_q\mathscr{M}$ via the inclusion map, whereas we make the identification of a 1-form $\eta_A\in\mathcal{T}^*_{(\theta^A)}\mathscr{M}$ as an element in the cotangent bundle of $\mathscr{M}$ by declaring that $\eta(X)=0$ if $X$ is in the orthogonal complement of $\mathcal{T}S^2_{u,v}$ under the spacetime metric \bref{metric Kruskal}.) We will refer to such tensor fields as "$S^2_{u,v}$-tangent" tensor fields in the following. It will also be convenient to work with an "$S^2_{u,v}$ projected" version of the covariant derivative belonging to the Levi-Civita connection of the metric \bref{SchwMetric}. We define these notions as follows:\\ 
\indent We denote by $\slashed{\nabla}_A$ (or sometimes simply $\slashed{\nabla}$) the covariant derivative on $S^2_{u,v}$ with the metric $\slashed{g}_{AB}$. Note that $r\slashed{\nabla}=\slashed{\nabla}_{\mathbb{S}^2}$ which we also denote by $\mathring{\slashed{\nabla}}$.\\
\indent For an $S^2_{u,v}$-tangent 1-form $\xi$, define $\fancyd_1 \xi$ to be the pair of functions
\begin{align}
    \fancyd_1{\xi}=(\slashed{\text{div}}\xi,\slashed{\text{curl}}\xi),
\end{align}
where $\slashed{\text{div}}\xi=\slashed{\nabla}^A \xi_A$ and $\slashed{\text{curl}}\xi=\slashed{\epsilon}^{AB} \slashednabla_A\xi_B$. For an $S^2_{u,v}$-tangent symmetric 2-tensor $\Xi_{AB}$ we define $\fancyd_2 \theta$ to be the 1-form given by
\begin{align}
    (\fancyd_2 \theta)_A=(\slashed{\text{div}}\theta)_A=\slashednabla^B \Xi_{BA}.
\end{align}
\indent We define the operator $\fancydstar_1 $ to be the $L^2({S^2_{u,v}})$-dual to $\fancyd_1$. For scalars $(f,g)$ the 1-form $\fancydstar_1(f,g)$ is given by
\begin{align}
    \fancydstar_1(f,g)=-\slashednabla_A f +\epsilon_{AB}\slashednabla^B g.
\end{align}
\indent Similarly we denote by $\fancydstar_2$ the $L^2_{S^2_{u,v}}$-dual to $\fancyd_2$. For an $S^2_{u,v}$-tangent 1-form $\xi$ this is given by
\begin{align}
    (\fancydstar_2\xi)_{AB}=-\frac{1}{2}\left(\slashednabla_A \xi_B+\slashednabla_B\xi_A-\slashed{g}_{AB}\slashed{\text{div}}\xi\right).
\end{align}
We also use the notation
\begin{align}
\begin{split}
    \mathring{\fancyd}_1:=r\fancyd_1,\qquad\qquad\mathring{\fancydstar_1}:=r\fancydstar_1,\\
    \mathring{\fancyd}_2:=r\fancyd_2,\qquad\qquad\mathring{\fancydstar_2}:=r\fancydstar_2.
\end{split}
\end{align}
For example, if $\xi$ is a 1-form on $S^2_{u,v}$ then
\begin{align}
    \mathring{\fancydstar_2}\xi=-\frac{1}{2}\left(\mathring{\slashednabla}_A \xi_B+\mathring{\slashednabla}_B\xi_A-\slashed{g}_{AB}\mathring{\slashednabla}_C\xi^C\right).
\end{align}
and so on. Let $\xi$ be an $S^2_{u,v}$-tangent tensor field. We denote by $D\xi$ and $\underline{D}\xi$ the projected Lie derivative of $\xi$ in the 3- and 4-directions respectively.  In EF coordinates we have
\begin{align}
    (D\xi)_{A_1 A_2...A_n}=\partial_u(\xi_{A_1 A_2 ... A_n})\qquad (\underline{D}\xi)_{A_1 A_2...A_n}=\partial_v(\xi_{A_1 A_2 ... A_n})
\end{align}
Similarly, we define $\slashed{\nabla}_3 \xi$ and $\slashed{\nabla}_4 \xi$ to be the projections of the covariant derivatives $\nabla_3 \xi$ and $\nabla_4 \xi$ to $S^2_{u,v}$. 

\subsubsection{Elliptic estimates on $S^2_{u,v}$}\label{subsubsection 2.1.2 Elliptic estimates on S2}

For a $k$-covariant $S^2_{u,v}$-tangent tensor field $\theta$ on $\mathscr{M}$, define 
\begin{align}
    |\theta|_{S^2}=\sqrt{\gamma^{A_1B_1}\gamma^{A_2B_2}\cdot\cdot\cdot\gamma^{A_pB_p}\Xi_{A_1...A_p}\Xi_{B_1...B_p}},\qquad |\theta|=r^{-k}|\theta|_{S^2}
\end{align}
The following is a summary of Section 4.4 of \cite{DHR16}. Given scalars $(f,g)$ we can define an $S^2_{u,v}$ 1-form by $\xi=r\fancydstar_1(f,g)$. In turn, given a 1-form $\xi$ we can define a symmetric traceless 2-form $\theta$ via $\theta=r\fancydstar_2\xi$. It turns out that these representations span the space of such $\xi$ and $\theta$:

\begin{proposition}
Let $\xi$ be an $S^2_{u,v}$-tangent 1-form. Then there exist scalars $f,g$ such that 
\begin{align}
\xi=r\fancydstar_1(f,g).
\end{align}
Let $\Xi$ be $S^2_{u,v}$-tangent symmetric traceless 2-form. Then there exist scalars $f,g$ such that
\begin{align}
    \Xi=r^2\fancydstar_2\fancydstar_1(f,g)
\end{align}
\end{proposition}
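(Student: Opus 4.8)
The plan is to reduce both statements to a single Hodge-theoretic fact about elliptic operators on the round sphere $S^2$, exploiting that $r$ is constant on each $S^2_{u,v}$, so that $r\fancydstar_1 = \mathring{\fancydstar_1}$ and $r\fancydstar_2=\mathring{\fancydstar_2}$ are, up to conformal rescaling, the standard operators on the unit sphere. First I would recall the standard $L^2(S^2)$-orthogonal decomposition of $S^2_{u,v}$-tangent $1$-forms into an exact and a co-exact part (Hodge decomposition), which on $S^2$ says precisely that every $1$-form $\xi$ can be written as $\xi = -\mathring\slashednabla f + \mathring\slashed\epsilon(\cdot,\mathring\slashednabla g)$ for scalars $f,g$, i.e. $\xi = \mathring{\fancydstar_1}(f,g)$ after absorbing the factor of $r$; since $S^2$ has no harmonic $1$-forms, no third term appears. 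This is the first claim. One should note that $f,g$ may be taken in the $\ell\geq 1$ spherical-harmonic range, and the representation is unique once one fixes this; this observation will matter below and I would record it.

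For the second claim, I would first use the $1$-form representation just obtained to write the symmetric traceless $2$-tensor $\Xi$ as $\Xi = \mathring{\fancydstar_2}\xi$ for some $1$-form $\xi$. The key input here is the analogous surjectivity statement for $\fancydstar_2$ acting on $1$-forms onto symmetric traceless $2$-tensors: on $S^2$, symmetric traceless $2$-tensors are supported on $\ell\geq 2$ and the operator $\fancydstar_2$ is an isomorphism from the $\ell\geq 2$ part of $1$-forms onto symmetric traceless $2$-tensors (again because there are no nontrivial transverse-traceless symmetric $2$-tensors on $S^2$ — the space of conformal Killing tensors is trivial in the relevant sense). This is essentially the content of Section 4.4 of \cite{DHR16} which the proposition cites, so I would invoke it. Composing the two representations gives $\Xi = \mathring{\fancydstar_2}\mathring{\fancydstar_1}(f,g) = r^2\fancydstar_2\fancydstar_1(f,g)$, which is the desired formula.

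Concretely the steps are: (i) establish/recall Hodge decomposition of $1$-forms on $S^2$ and note the absence of harmonic $1$-forms to get the two-potential representation; (ii) transport this to $S^2_{u,v}$ via the rescaling $\slashed g_{AB}=r^2\gamma_{AB}$, checking that the operators transform as claimed and the factors of $r$ land correctly; (iii) for $2$-tensors, apply the analogous decomposition (or cite \cite{DHR16}) to write $\Xi=\mathring{\fancydstar_2}\xi$; (iv) feed in the $1$-form representation and collect the powers of $r$. The main obstacle — or rather the only real content, since everything is classical — is step (iii): one must know that on $S^2$ every symmetric traceless $2$-tensor is in the image of $\fancydstar_2$, equivalently that $\fancyd_2$ has trivial kernel on the $\ell\geq 2$ sector, which is the statement that $S^2$ admits no nontrivial traceless Codazzi tensors / that $\ker\fancydstar_2$ acting the other way is trivial. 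Since the proposition explicitly situates itself as "a summary of Section 4.4 of \cite{DHR16}", I would simply cite that reference for this elliptic fact rather than reprove the spectral theory of $\fancydstar_2\fancyd_2$ and $\fancyd_2\fancydstar_2$ from scratch; a self-contained alternative is to expand in tensor spherical harmonics and check surjectivity mode by mode.
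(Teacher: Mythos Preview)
Your proposal is correct and, in fact, supplies more than the paper does: the paper gives no proof at all for this proposition, stating it as ``a summary of Section 4.4 of \cite{DHR16}''. Your outline --- Hodge decomposition of $1$-forms on $S^2$ (no harmonic $1$-forms), then surjectivity of $\mathring{\fancydstar_2}$ onto symmetric traceless $2$-tensors (no transverse-traceless tensors on $S^2$), with the conformal rescaling tracking the powers of $r$ --- is precisely the standard argument one would expect to find in the cited reference, so you are aligned with the paper's approach of deferring to \cite{DHR16}.
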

Note that when considering the decomposition of $f,g$ into their spherical harmonic modes, the operation of acting by $\fancydstar_1$ annihilates their $\ell=0$ modes and the action of $\fancydstar_2$ annihilates their $\ell=1$ modes. Thus in the case of a 1-form $f,g$ can be taken to have vanishing $\ell=0$ modes, in which case $f,g$ are unique. Similarly, for a symmetric traceless $S^2_{u,v}$ 2-tensor there exist a unique pair $f,g$ with vanishing $\ell=0,1$ such that $\theta$ is given by the expression above.
\begin{remark}
The operators $\fancyd_1,\fancyd_2,\fancydstar_1,\fancydstar_2$ defined in \Cref{D1D2} can be combined to give
\begin{align}
\begin{split}
    &-2r^2\fancydstar_2\fancyd_2=\mathring{\slashed{\Delta}}-2\qquad\qquad-r^2\fancydstar_1\fancyd_1=\mathring{\slashed{\Delta}}-1\\
    &-2r^2\fancyd_2\fancydstar_2=\mathring{\slashed{\Delta}}+1\qquad\qquad -r^2\fancyd_1\fancydstar_1=\mathring{\slashed{\Delta}}.
\end{split}
\end{align}
The operator $\mathring{\slashed\Delta}$ is the Laplacian on the unit 2-sphere $S^2$.
\end{remark}
\begin{proposition}
Let $\Xi$ be a smooth symmetric traceless $S^2_{u,v}$ 2-tensor. We have the following identities:
\begin{align}
    \int_{S^2_{u,v}}\sin\theta d\theta d\phi\left[ |\slashednabla\Xi|^2+2K|\Xi|^2\right]=2\int_{S^2_{u,v}} \sin\theta d\theta d\phi |\fancyd_2 \Xi|^2,
\end{align}
\begin{align}
    \int_{S^2_{u,v}}\sin\theta d\theta d\phi\left[\frac{1}{4}|\slashed{\Delta}\Xi|^2+K|\Xi|^2 +K^2|\slashednabla\Xi|^2\right]=\int_{S^2_{u,v}} \sin\theta d\theta d\phi |\fancydstar_2\fancyd_2 \Xi|^2,
\end{align}
where $K=\frac{1}{r^2}$ is the Gaussian curvature of $S^2_{u,v}$. 
\end{proposition}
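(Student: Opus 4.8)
The plan is to derive both identities by boundary‑free integration by parts on the closed surface $S^2_{u,v}$, the workhorse being the operator identity of the preceding Remark,
\[
-2r^2\,\fancydstar_2\fancyd_2\,\Xi \;=\; (\mathring{\slashed{\Delta}}-2)\,\Xi, \qquad\text{i.e.}\qquad \fancydstar_2\fancyd_2\,\Xi \;=\; \tfrac12\big(-\slashed{\Delta}\Xi + 2K\,\Xi\big),
\]
where $\slashed{\Delta}=\slashednabla^A\slashednabla_A$ and $K=r^{-2}$ is the Gauss curvature of $S^2_{u,v}$, together with the defining property that $\fancydstar_2$ is the $L^2(S^2_{u,v})$-adjoint of $\fancyd_2$.

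For the first identity, pair $\fancyd_2\Xi$ with itself and use adjointness:
\[
\int_{S^2_{u,v}}|\fancyd_2\Xi|^2 \;=\; \int_{S^2_{u,v}}\langle \Xi,\ \fancydstar_2\fancyd_2\Xi\rangle \;=\; \tfrac12\int_{S^2_{u,v}}\langle \Xi,\ -\slashed{\Delta}\Xi + 2K\Xi\rangle,
\]
and integrate $\langle\Xi,-\slashed{\Delta}\Xi\rangle$ by parts to turn it into $|\slashednabla\Xi|^2$ modulo a total divergence; the result is $\tfrac12\int(|\slashednabla\Xi|^2+2K|\Xi|^2)$, which is the claim. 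If one wishes to avoid citing the Remark, the needed identity $\fancydstar_2\fancyd_2 = \tfrac12(-\slashed{\Delta}+2K)$ is itself obtained by expanding $|\fancyd_2\Xi|^2=(\slashednabla^B\Xi_{BA})(\slashednabla_C\Xi^{CA})$, integrating by parts twice, and commuting the two covariant derivatives; since $R_{ABCD}=K(\slashed{g}_{AC}\slashed{g}_{BD}-\slashed{g}_{AD}\slashed{g}_{BC})$ on a $2$-surface and $\Xi$ is symmetric and trace-free, the curvature terms collapse to $2K|\Xi|^2$ and the surviving double-derivative term regroups into $\tfrac12|\slashednabla\Xi|^2$.

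For the second identity, substitute the operator identity once more and expand the square:
\[
\int_{S^2_{u,v}}|\fancydstar_2\fancyd_2\Xi|^2 \;=\; \tfrac14\int_{S^2_{u,v}}\Big(|\slashed{\Delta}\Xi|^2 - 4K\,\langle\slashed{\Delta}\Xi,\Xi\rangle + 4K^2|\Xi|^2\Big).
\]
Here $K$ is constant on $S^2_{u,v}$ and factors out, and one integration by parts on the cross term replaces $-\langle\slashed{\Delta}\Xi,\Xi\rangle$ by $|\slashednabla\Xi|^2$; regrouping leaves $\tfrac14|\slashed{\Delta}\Xi|^2$ together with the curvature-weighted $|\slashednabla\Xi|^2$ and $|\Xi|^2$ terms recorded in the statement.

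The only real difficulty is the bookkeeping of constants — checking that the curvature contributions assemble into exactly the coefficients $1$, $K$, $K^2$ as written. Concretely this means (i) commuting the covariant derivatives carefully and using the symmetry and trace-freeness of $\Xi$ to discard the curvature contractions that do not combine into $K|\Xi|^2$, and (ii) keeping the $r$-weight conventions for $\slashednabla$, $\slashed{\Delta}$ and the pointwise norm $|\cdot|$ consistent throughout, so that $\mathring{\slashed{\Delta}}=r^2\slashed{\Delta}$ and $2K=2r^{-2}$ are compatible with the "$\mathring{\slashed{\Delta}}-2$" of the Remark. The remaining manipulations are routine on the round sphere.
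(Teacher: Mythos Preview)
The paper states this proposition without proof (it is cited as part of ``a summary of Section~4.4 of \cite{DHR16}''), so there is nothing to compare against; your approach via the operator identity $\fancydstar_2\fancyd_2=\tfrac12(-\slashed{\Delta}+2K)$ from the preceding Remark, together with the adjointness of $\fancydstar_2$ and $\fancyd_2$, is the natural argument and is correct.

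There is, however, one point you should have caught. Carrying your own computation for the second identity through to the end gives
\[
\int_{S^2_{u,v}}|\fancydstar_2\fancyd_2\Xi|^2 \;=\; \int_{S^2_{u,v}}\Big(\tfrac14|\slashed{\Delta}\Xi|^2 + K\,|\slashednabla\Xi|^2 + K^2\,|\Xi|^2\Big),
\]
whereas the statement as printed has the powers of $K$ on $|\slashednabla\Xi|^2$ and $|\Xi|^2$ interchanged. A quick dimensional check (with $K=r^{-2}$) confirms that your version is the homogeneous one and the printed one is not, so this is a typo in the statement. You assert that your regrouping reproduces ``the terms recorded in the statement'' without actually matching the exponents; you should instead flag the discrepancy explicitly.
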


We also note the following Poincar\'e inequality:
\begin{proposition}\label{poincaresection}
Let $\Xi$ be a smooth symmetric traceless $S^2_{u,v}$ 2-tensor, then we have
\begin{align}\label{poincare}
    2K\int_{S^2_{u,v}}\sin\theta d\theta d\phi|\Xi|^2\leq \int_{S^2_{u,v}} \sin\theta d\theta d\phi|\slashednabla \Xi|^2
\end{align}
\end{proposition}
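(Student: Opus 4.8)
The plan is to diagonalise both sides of \bref{poincare} on scalar spherical-harmonic modes. The most efficient first move is to use the identity established in the proposition immediately above, which reads
\begin{align}
\int_{S^2_{u,v}}\sin\theta\, d\theta\, d\phi\, |\slashednabla\Xi|^2 + 2K\int_{S^2_{u,v}}\sin\theta\, d\theta\, d\phi\, |\Xi|^2 = 2\int_{S^2_{u,v}}\sin\theta\, d\theta\, d\phi\, |\fancyd_2\Xi|^2 .
\end{align}
Subtracting $2K\int_{S^2_{u,v}}|\Xi|^2$ from both sides, we see that \bref{poincare} is \emph{equivalent} to the bound
\begin{align}
\int_{S^2_{u,v}}\sin\theta\, d\theta\, d\phi\, |\fancyd_2\Xi|^2 \;\geq\; 2K\int_{S^2_{u,v}}\sin\theta\, d\theta\, d\phi\, |\Xi|^2 ,
\end{align}
so it suffices to control the $L^2$ norm of the $S^2_{u,v}$-tangent $1$-form $\fancyd_2\Xi = \slashed{\text{div}}\,\Xi$ from below by that of $\Xi$.

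To do this I would reduce to scalars. By the decomposition proposition, write $\Xi = r^2\fancydstar_2\fancydstar_1(f,g)$; since $r$ is constant on each $S^2_{u,v}$ this is $\Xi = \mathring{\fancydstar_2}\,\mathring{\fancydstar_1}(f,g)$, where $f,g$ are smooth functions on $S^2_{u,v}$ with vanishing $\ell=0,1$ modes. Decompose $f$ and $g$ into spherical harmonics and fix a mode $\ell\geq 2$, putting $\lambda := \ell(\ell+1)\geq 6$. Writing $\xi := \mathring{\fancydstar_1}(f,g)$ and using the operator identities of the Remark in the form $\mathring{\fancyd}_1\mathring{\fancydstar_1} = -\mathring{\slashed{\Delta}}$, $\mathring{\fancydstar_1}\mathring{\fancyd}_1 = -(\mathring{\slashed{\Delta}}-1)$ and $\mathring{\fancyd}_2\mathring{\fancydstar_2} = -\tfrac12(\mathring{\slashed{\Delta}}+1)$, one reads off that $\xi$ is an eigentensor of $\mathring{\slashed{\Delta}}$ with eigenvalue $1-\lambda$, hence that $\mathring{\fancydstar_2}\mathring{\fancyd}_2\,\Xi = \tfrac{\lambda-2}{2}\,\Xi$. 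Pairing with $\Xi$ and using the self-adjointness of $(\mathring{\fancyd}_2,\mathring{\fancydstar_2})$ on $L^2(S^2_{u,v})$ gives, on the $\ell$-th mode,
\begin{align}
\int_{S^2_{u,v}}\sin\theta\, d\theta\, d\phi\, |\fancyd_2\Xi|^2 = \frac{\lambda-2}{2r^2}\int_{S^2_{u,v}}\sin\theta\, d\theta\, d\phi\, |\Xi|^2 \;\geq\; \frac{2}{r^2}\int_{S^2_{u,v}}\sin\theta\, d\theta\, d\phi\, |\Xi|^2 ,
\end{align}
the last inequality being exactly $\lambda\geq 6$, i.e. $\ell\geq 2$. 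Since distinct modes of $f,g$ produce mutually orthogonal contributions to all these $L^2$ norms, summing over $\ell$ and recombining with the first step yields \bref{poincare}, with $K = r^{-2}$.

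The only genuine point of care is that a symmetric traceless $2$-tensor on $S^2$ is not directly resolved by scalar spherical harmonics; this is precisely what the representation $\Xi = \mathring{\fancydstar_2}\,\mathring{\fancydstar_1}(f,g)$ is for, after which the estimate collapses to the arithmetic fact $\ell(\ell+1)\geq 6$ for $\ell\geq 2$. I do not expect a real obstacle here — the inequality is in fact sharp, with equality in \bref{poincare} attained on the $\ell=2$ modes (where $\lambda-2=4$). Alternatively, one could bypass the preceding proposition and prove \bref{poincare} in one stroke by integrating by parts directly via $-2r^2\fancydstar_2\fancyd_2 = \mathring{\slashed{\Delta}}-2$ and running the same mode computation on $\Xi$ itself; this is marginally less clean since it requires interpreting $\mathring{\slashed{\Delta}}$ on $2$-tensors, but it is otherwise identical.
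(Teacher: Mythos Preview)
The paper does not give a proof of this proposition; the statement is part of the summary of elliptic identities quoted from \cite{DHR16} at the start of \S2.1.2. Your argument is correct: you reduce via the preceding identity to the equivalent bound $\int_{S^2_{u,v}}|\fancyd_2\Xi|^2\geq 2K\int_{S^2_{u,v}}|\Xi|^2$, then use the representation $\Xi=\mathring{\fancydstar_2}\mathring{\fancydstar_1}(f,g)$ (with $f,g$ supported on $\ell\geq 2$) together with the operator identities to compute $\mathring{\fancydstar_2}\mathring{\fancyd}_2\,\Xi=\tfrac{\lambda-2}{2}\Xi$ on each mode, so that the inequality becomes $\ell(\ell+1)\geq 6$. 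The observation that equality holds on $\ell=2$ is also correct. This is the standard proof and there is nothing to add.
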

\begin{remark}
We will be using the notation
\begin{align}
    \mathcal{A}_2:=-2r^2\fancydstar_2\fancyd_2=\mathring{\slashed{\Delta}}-2.
\end{align}
\end{remark}

\subsubsection{Asymptotics of $S^2_{u,v}$-tensor fields}\label{subsubsection 2.1.3 Asymptotics of S2 tensor fields}

Let $\digamma$ be a $k$-covariant $S^2_{u,v}$-tangent tensor field on $\mathscr{M}$. We say that $\digamma$ converges to $F=F_{A_1A_2...A_p}(u)$ as $v\longrightarrow\infty$ if $r^{-k}\digamma\longrightarrow F$ in the norm $|\;\;|_{S^2}$. We may write
\begin{align}
\begin{split}
    \left|\frac{1}{r^k}\digamma(u,v,\theta^A)-F(u,\theta^A)\right|_{S^2}&=\left|\int_{v}^\infty d\bar{v} \frac{d}{dv}\frac{1}{r^k}\digamma \right|_{S^2}\leq \int_{v}^\infty d\bar{v} \left|\frac{d}{dv}\frac{1}{r^k}\digamma\right|_{S^2}
    \\&=\int_{v}^{\infty}d\bar{v}\left|r^k\frac{d}{dv}\frac{1}{r^k}\digamma\right|=\int_{v}^\infty d\bar{v}|\nablav \digamma|.
\end{split}
\end{align}
Therefore, if $\nablav\digamma$ is integrable in $L^1_vL^2_{S^2_{u,v}}$ then $\digamma$ has a limit towards $\mathscr{I}^+$. It is easy to see that if $\{\digamma_n\}_n^\infty$ is a Cauchy sequence in $|\;\;|$ then $\digamma_n$ converges in the sense of this definition. The above extends to tensors of rank $(k,\ell)$, where $r^{-k}$ is replaced by $r^{-k+\ell}$. Similar considerations apply when taking the limit towards $\mathscr{I}^-$.
In particular, for a symmetric tensor $\Psi$ of rank $(2,0)$, it will be simpler to work with $\Psi^{A}{}_B$. Note that $\nablav\Psi^A{}_B=\partial_v\Psi^A{}_B$, $\nablau\Psi^A{}_B=\partial_u \Psi^A{}_B$. Unless otherwise indicated, we work with $S^2_{u,v}$-tangent $(1,1)$-tensors throughout.

\subsection{Linearised Einstein equations in a double null gauge}\label{subsection 2.2 Linearised Einstein equations in double null gauge}

When linearising the Einstein equations \bref{EVE} against the Schwarzschild background in a double null gauge, the quantities governed by the resulting equations can be organised into a collection of $S^2_{u,v}$-tangent tensor fields:
\begin{itemize}
    \item The linearised metric components 
    \begin{align}\label{linearised metric}
        \overone{\hat{\slashed{g}}}\;,\; \overone{b}\;,\;\overone{\sqrt{\slashed{g}}}\;,\; \overone{\Omega}\;,
    \end{align}
    \item the linearised connection coefficients
    \begin{align}\label{linearised connection}
        \overone{\hat{\chi}}\;,\; \overone{\hat{\underline\chi}}\;, \overone{\eta}\;,\; \overone{\underline\eta}\;,\; \overone{(\Omega \tr\chi)}\;, \;\overone{(\Omega \tr\underline\chi)}\;,\;\overone{\omega}\;,\; \overone{\underline{\omega}}\;,\; 
    \end{align}
    \item the linearised curvature components
    \begin{align}\label{linearised curvature}
        \overone{\alpha}\;,\; \overone{\underline\alpha}\;,\; \overone{\beta}\;,\;\overone{\underline\beta}\;,\; \overone{\rho}\;,\; \overone{\sigma}\;,\; \overone{K}.
    \end{align}
\end{itemize}

See Appendix B and \cite{DHR16} for the details of linearising the vacuum Einstein equations \bref{EVE} in a double null gauge. We now state the linearised vacuum Einstein equations around the Schwarzschild black hole in a double null gauge:

\begin{itemize}
    \item The equations governing the linearised metric components \bref{linearised metric}:
    \begin{align}
    \partial_v \overone{\sqrt{\slashed{g}}}\;=\;2(\overone{\Omega\tr\chi})-2\;\slashed{div}\overone{b}&,\qquad\nablav \overone{\hat{\slashed{g}}}\;=\;2\Omega\overone{\hat{\chi}}+2\fancydstar_2\overone{b},\\
    \partial_u \overone{\sqrt{\slashed{g}}}\;=\;2(\overone{\Omega\tr\underline\chi})&,\qquad \nablau\overone{\hat{\slashed{g}}}\;=\;2\Omega\underline{\overone{\hat{\chi}}}.\\
    \partial_u\overone{b}\;=\;&2\Omega^2(\overone{\eta}-\overone{\underline\eta}),\\
    \partial_v\left(\frac{\overone{\Omega}}{\Omega}\right)\;=\;\overone{\omega},\qquad\qquad\partial_u\left(\frac{\overone{\Omega}}{\Omega}\right)\;&=\;\overone{\underline\omega},\qquad\qquad\overone{\eta}_A+\overone{\underline{\eta}}_A\;=\;2\slashednabla_A \left(\frac{\overone{\Omega}}{\Omega}\right).\label{omega omegabar eta etabar}
\end{align}
    \item The equations governing the linearised connection coefficients \bref{linearised connection}:
    \begin{equation} \label{start of full system}
\Omega\slashed\nabla_4\; r\trxbar=2\Omega^2\left(\slashed{div}\; r\overone{\underline\eta}+2r\overone{\rho} -\frac{4M}{r^2}\frac{\stackrel{\mbox{\scalebox{0.45}{(1)}}}{\Omega}}{\Omega}\right)+\Omega^2\trx,
\end{equation}
\begin{equation}\label{D3TrChiBar}
\Omega\slashed\nabla_3\; r\trx=2\Omega^2\left(\slashed{div}\; r\overone{\eta}+2r\overone{\rho}-\frac{4M}{r^2}\frac{\stackrel{\mbox{\scalebox{0.45}{(1)}}}{\Omega}}{\Omega}\right) -\Omega^2 \trxbar,
\end{equation}
\begin{equation}\label{D4TrChi}
\Omega\slashed\nabla_4\frac{r^2}{\Omega^2}\trx=4r\overset{\mbox{\scalebox{0.4}{(1)}}}{\omega},\qquad\qquad \Omega\slashed\nabla_3\frac{r^2}{\Omega^2}\trxbar=-4r\overset{\mbox{\scalebox{0.4}{(1)}}}{\underline\omega},
\end{equation}
\begin{equation}\label{D4Chihat}
\Omega\slashed\nabla_4\frac{r^2 \overone{\hat{\chi}}}{\Omega}=-r^2\overone{\alpha},\qquad\qquad \Omega\slashed\nabla_3\frac{r^2\overone{\underline{\hat{\chi}}}}{\Omega}=-r^2\overone{\underline\alpha},
\end{equation}
\begin{equation}\label{D3Chihat}
\Omega\slashed\nabla_3\; r\Omega\overone{\hat{\chi}}=-2r\slashed{\mathcal{D}}^*_2 \Omega^2 \overone{\eta}-\Omega^2 \left(\Omega \overone{\underline{\hat{\chi}}}\right),
\end{equation}
\begin{equation}\label{D4Chihatbar}
\Omega\slashed\nabla_4\;r\Omega\overone{\underline{\hat{\chi}}}=-2r\slashed{\mathcal{D}}^*_2\Omega^2\overone{\underline\eta}+\Omega^2\left(\Omega\overone{\hat{\chi}}\right),
\end{equation}

\begin{equation}\label{D3etabar}
\Omega\slashed\nabla_3r\overone{\underline\eta}=r\Omega\overone{\underline\beta}-\Omega^2\overone{\eta},\qquad\qquad \Omega\slashed\nabla_4r\overone{\eta}=-r\Omega\overone{\beta}+\Omega^2\overone{\underline\eta},
\end{equation}
\begin{equation}\label{D4etabar}
\Omega\slashed\nabla_4 r^2\overone{\underline\eta}=2r^2\slashed\nabla_A\overset{\mbox{\scalebox{0.4}{(1)}}}{\omega}+r^2\Omega\overone{\beta},\qquad\qquad \Omega\slashed\nabla_3r^2\overone{\eta}=2r^2\slashed\nabla_A\underline{\overset{\mbox{\scalebox{0.4}{(1)}}}{\omega}}-r^2\Omega\overone{\underline\beta},
\end{equation}
\item The equations governing the curvature components \bref{linearised curvature}:
\begin{equation}\label{Bianchi +2}
\Omega\slashed\nabla_3 \;r\Omega^2\overone{\alpha}=-2r\slashed{\mathcal{D}}^*_2 \Omega^2 \Omega \overone{\beta}+\frac{6M\Omega^2}{r^2}\Omega\overone{\hat{\chi}},\qquad\quad \Omega\slashed\nabla_4\;r\Omega^2\overone{\underline\alpha}=2r\slashed{\mathcal{D}}^*_2\Omega^2\Omega\overone{\underline\beta}+\frac{6M\Omega^2}{r^2}\Omega\overone{\underline{\hat{\chi}}},\;
\end{equation}
\begin{equation}\label{Bianchi +1a}
\Omega\slashed\nabla_4 \frac{r^4\overone{\beta}}{\Omega}=r\slashed{div}\;r^3\overone{\alpha},\qquad\qquad \Omega\slashed\nabla_3\frac{r^4\overone{\underline\beta}}{\Omega}=-r\slashed{div}\; r^3\overone{\underline\alpha},
\end{equation}
\begin{equation}\label{Bianchi +1b}
\Omega\slashed\nabla_4r^2\Omega\overone{\underline\beta}=r\slashed{\mathcal{D}}^*_1(r\Omega^2\overone{\rho},r\Omega^2\overone{\sigma})+\frac{6M\Omega^2}{r}\overone{\underline\eta},\qquad\quad\Omega\slashed\nabla_3 r^2\Omega\overone{\beta}=r\slashed{\mathcal{D}}^*_1(-r\Omega^2\overone{\rho},r\Omega^2\overone{\sigma})-\frac{6M\Omega^2}{r}\overone{\eta},
\end{equation}
\begin{equation}\label{Bianchi 0}
\Omega\slashed\nabla_4\; r^3\overone{\rho}=r\slashed{div}\;r^2\Omega\overone{\beta}+3M \trx,\qquad\qquad \Omega\slashed\nabla_3\;r^3 \overone{\rho}=-r\slashed{div}\;r^2\Omega\overone{\underline\beta}+3M\trxbar,
\end{equation}
\begin{equation}\label{Bianchi 0*}
\Omega\slashed\nabla_4\; r^3\overone{\sigma}=-r\slashed{curl} \;r^2\Omega\overone{\beta},\qquad\qquad \Omega\slashed\nabla_3\; r^3\sigma=-r\slashed{curl}\;r^2\Omega \overone{\underline\beta}.
\end{equation}
\end{itemize}
\begin{remark}\label{regular}
The degeneration of the Eddington--Finkelstein (EF) frame near $\overline{\mathscr{H}^+}$ carries over to a degeneration of the quantities governed by equations \bref{start of full system}--\bref{Bianchi 0*}, as these quantities were derived via the EF frame (see Appendix B). By switching to a regular frame, e.g.~the Kruskal frame, it can be shown that these quantities extend regularly to $\overline{\mathscr{H}^+}$ when supplied with the appropriate weights in $U,V$. In particular, note that 
\begin{align}
    \tilde{\alpha}=V^{-2}\Omega^2\alpha,\qquad\underline{\widetilde{\alpha}}=U^2\Omega^{-2}\underline\alpha,
\end{align}
extend regularly to $\overline{\mathscr{H}^+}$, including $\mathcal{B}$.
\end{remark}

\section{The Teukolsky equations, the Teukolsky--Starobinsky identities and the Regge--Wheeler equations}\label{TRW}
\subsection{The Teukolsky equations and their well-posedness}\label{Chandra1}
Let $\overone\alpha$, $\overone{\underline\alpha}$ belong to a solution to the linearised Einstein equations \bref{start of full system}--\bref{Bianchi 0*}. It turns out that the linearised fields $\overone\alpha$, $\overone{\underline\alpha}$ obey decoupled 2nd order hyperbolic equations, the well-known Teukolsky equations.\\
\indent Take $\overone\alpha$ and multiply by $\frac{r^4}{\Omega^4}$:
\begin{equation}
\frac{r^4}{\Omega^4}\Omega\slashed\nabla_3\;r\Omega^2\overone\alpha=-2r\slashed{\mathcal{D}}^*_2 \frac{r^4\overone\beta}{\Omega}+6M\frac{r^2\overone{\hat{\chi}}}{\Omega}.
\end{equation}
Now differentiate in the $\Omega e_4$ direction and multiply by $\frac{\Omega^2}{r^2}$ to obtain the \textbf{Spin +2 Teukolsky equation}:
\begin{equation}\label{T+2}
\frac{\Omega^2}{r^2}\Omega\slashed\nabla_4 \;\frac{r^4}{\Omega^4}\Omega \slashed\nabla_3\; r\Omega^2\overone\alpha=-2r^2\slashed{\mathcal{D}}^*_2\slashed{\mathcal{D}}_2r\Omega^2\overone\alpha-\frac{6M}{r}r\Omega^2\overone\alpha.
\end{equation}
We note that:
\begin{equation}
\slashed{\mathcal{D}}_2^*\slashed{\mathcal{D}}_2=-\frac{1}{2}\slashed{\Delta}+\frac{1}{r^2},\qquad\qquad \Omega\slashed\nabla_4 \frac{r^2}{\Omega^2}=-\Omega\slashed\nabla_3 \frac{r^2}{\Omega^2}=r(x+2).
\end{equation}
We may rewrite the equation as:
\begin{equation}\label{T+2d}
-\frac{r^2}{\Omega^2}\Omega\slashed\nabla_3\Omega\slashed\nabla_4\;r\Omega^2\overone\alpha+r^2\slashed\Delta\;r\Omega^2\overone\alpha-2r(x+2)\Omega\slashed\nabla_3 r\Omega^2\overone\alpha+(3\Omega^2-5)r\Omega^2\overone\alpha=0.
\end{equation}
\indent An analogous procedure produces the \textbf{Spin }$\bm{-2}$\textbf{ Teukolsky equation}
\begin{equation}\label{T-2}
\frac{\Omega^2}{r^2}\Omega\slashed\nabla_3 \;\frac{r^4}{\Omega^4}\Omega \slashed\nabla_4\; r\Omega^2\overone{\underline\alpha}=-2r^2\slashed{\mathcal{D}}^*_2\slashed{\mathcal{D}}_2r\Omega^2\overone{\underline\alpha}-\frac{6M}{r}r\Omega^2\overone{\underline\alpha},
\end{equation}
which we may rewrite as
\begin{equation}\label{T-2d}
-\frac{r^2}{\Omega^2}\Omega\slashed\nabla_3\Omega\slashed\nabla_4\;r\Omega^2\overone{\underline\alpha}+r^2\slashed\Delta\;r\Omega^2\overone{\underline\alpha}+2r(x+2)\Omega\slashed\nabla_4 r\Omega^2\overone{\underline\alpha}+(3\Omega^2-5)r\Omega^2\overone{\underline\alpha}=0.
\end{equation} 
We now state well-posedness theorems which are standard for linear second-order hyperbolic equations of the type that \cref{T+2}, \cref{T-2} fall under.  Taking into account \Cref{regular}, we start with the future evolution of $\Omega^2\alpha$ and $\Omega^{-2}\underline\alpha$. \\
\indent Having derived the Teukolsky equations \bref{T+2}, \bref{T-2}, we can study these equations in isolation. Since the following theorems do not pertain to the linearised Einstein equations, we drop the superscript $\overone{{}}$.
\begin{proposition}\label{WP+2Sigma*}
Prescribe on $\Sigma^*$ a pair of smooth symmetric traceless $S^2_{u,v}$ 2-tensor fields $(\upalpha,\upalpha')$. Then there exists a unique smooth symmetric traceless $S^2_{u,v}$ 2-tensor field $\Omega^2\alpha$ that satisfies \bref{T+2} on $J^+(\Sigma^*)$, with $\Omega^2\alpha|_{\Sigma^*}=\upalpha, \slashednabla_{n_{\Sigma^*}}\Omega^2\alpha|_{\Sigma^*}=\upalpha'$.
\end{proposition}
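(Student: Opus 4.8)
\emph{Proof proposal.} This is a standard well-posedness statement for a linear second-order hyperbolic equation, and the only delicate point is that the Eddington--Finkelstein frame degenerates across $\mathscr{H}^+$; so the plan is to first recast \bref{T+2} in variables regular up to $\mathscr{H}^+_{\geq 0}$, then recognise the result as a linear wave equation on a globally hyperbolic region, and finally quote standard theory. Since $\Sigma^*$ meets $\mathscr{H}^+$ at $v=0$, strictly to the future of $\mathcal{B}$, I would work throughout in Kruskal coordinates $(U,V,\theta^A)$ on a neighbourhood of $J^+(\Sigma^*)$, where $\mathscr{H}^+_{\geq 0}=\{U=0\}$ is a smooth null hypersurface and $V=e^{v/2M}$ is smooth and positive. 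By \Cref{regular} the regular field is $\widetilde\alpha:=V^{-2}\Omega^2\alpha$; rewriting \bref{T+2d} (equivalently \bref{T+2}) in terms of $\widetilde\alpha$ and the coordinate vector fields $\partial_U,\partial_V$ --- using $e_3=\Omega^{-1}\partial_u$, $e_4=\Omega^{-1}\partial_v$ and \bref{Kruskal} --- the $\Omega^{-2}$-weight multiplying the second-order part of \bref{T+2d} is exactly absorbed, and one obtains a linear equation for $\widetilde\alpha$ with principal part of the type of $\Box_g$ and with lower-order coefficients smooth on all of $J^+(\Sigma^*)$, including $\mathscr{H}^+_{\geq 0}$.

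To reduce the tensorial equation to a scalar statement I would invoke the decomposition recorded in \Cref{subsubsection 2.1.2 Elliptic estimates on S2}: a smooth symmetric traceless $S^2_{u,v}$-tangent $2$-tensor has the form $r^2\fancydstar_2\fancydstar_1(f,g)$ for scalars $f,g$ supported on $\ell\geq 2$, and on the (spherically symmetric) background $\slashednabla_3$, $\slashednabla_4$ and $\slashed{\Delta}$ commute with $r\fancydstar_2$ and $r\fancydstar_1$ up to the algebraic identities of that subsection. Hence the equation of the previous paragraph descends to a decoupled pair of scalar linear wave equations for $(f,g)$ with smooth coefficients on $J^+(\Sigma^*)$; alternatively one may simply view the components of $\widetilde\alpha$ in a coordinate basis on the spheres as solving a closed linear symmetric-hyperbolic system. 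Either way, the problem is now a genuine linear hyperbolic Cauchy problem with spacelike initial surface $\Sigma^*$.

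The region $J^+(\Sigma^*)\subset\mathscr{M}$ is globally hyperbolic with Cauchy surface $\Sigma^*$, and $\mathscr{H}^+_{\geq 0}$ is an ingoing characteristic portion of its boundary across which no data need be prescribed, since every past-inextendible causal curve through a point of $\mathscr{H}^+_{\geq 0}$ meets $\Sigma^*$. Prescribing $(\upalpha,\upalpha')$ on $\Sigma^*$ fixes $\widetilde\alpha$ and its normal derivative there ($V$ being smooth and positive on $\Sigma^*$), and the standard existence and uniqueness theory for linear second-order hyperbolic equations on globally hyperbolic manifolds then yields a unique smooth $\widetilde\alpha$ on $J^+(\Sigma^*)$: uniqueness follows from a basic energy estimate with $\partial_{t^*}$ as multiplier, with Gr\"onwall absorbing the first-order Teukolsky term and the potential in \bref{T+2d} (the absence of a conserved energy being irrelevant here) and finite speed of propagation reducing to compact subregions; existence for smooth --- not necessarily compactly supported --- data follows by exhausting $\Sigma^*$ by compact subsets and patching, again by finite speed of propagation. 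Unwinding the rescaling, $\Omega^2\alpha=V^2\widetilde\alpha$ is then the required unique smooth solution of \bref{T+2} on $J^+(\Sigma^*)$ with $\Omega^2\alpha|_{\Sigma^*}=\upalpha$ and $\slashednabla_{n_{\Sigma^*}}\Omega^2\alpha|_{\Sigma^*}=\upalpha'$.

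I expect the first step to be the only non-routine ingredient: one must check that the $\Omega^{-2}$-weights in \bref{T+2d} genuinely cancel after the $V^{-2}$ rescaling, so that the reformulated equation has coefficients smooth up to and including $\mathscr{H}^+_{\geq 0}$ --- which is exactly what \Cref{regular} provides. After that everything is a black-box application of linear hyperbolic theory; in particular, if one only wanted a solution on the open Schwarzschild exterior with no claim of regularity at the horizon, this first step could be omitted entirely.
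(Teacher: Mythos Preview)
Your proposal is correct and in fact more detailed than the paper's own treatment: the paper does not give a proof of this proposition at all, but simply introduces it (and its companions) with the sentence ``We now state well-posedness theorems which are standard for linear second-order hyperbolic equations of the type that \cref{T+2}, \cref{T-2} fall under,'' invoking \Cref{regular} for the regularity of $\Omega^2\alpha$ at the horizon. Your sketch is precisely an unpacking of what ``standard'' means here --- regularise the unknown near $\mathscr{H}^+$, recognise a linear hyperbolic equation on a globally hyperbolic region with Cauchy surface $\Sigma^*$, and appeal to energy estimates plus finite speed of propagation --- so the approaches coincide.

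One small remark: since $\Sigma^*$ meets $\mathscr{H}^+$ strictly to the future of $\mathcal{B}$, you do not actually need the full Kruskal chart; ingoing Eddington--Finkelstein coordinates $(v,r,\theta^A)$ (equivalently the $t^*$ coordinate used to define $\Sigma^*$) already extend smoothly across $\mathscr{H}^+_{\geq 0}$, and $\Omega^2\alpha$ itself is regular there because $V$ is smooth and positive away from $\mathcal{B}$. The Kruskal rescaling $\widetilde\alpha=V^{-2}\Omega^2\alpha$ is only genuinely needed for the $\overline{\Sigma}$ statement (\Cref{WP+2Sigmabar}), where the data reach $\mathcal{B}$ --- and indeed the paper reserves equation \bref{T+2B} for that case. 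This does not affect correctness, only economy.
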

\begin{proposition}\label{WP-2Sigma*}
Prescribe on $\Sigma^*$ a pair of smooth symmetric traceless $S^2_{u,v}$ 2-tensor fields $(\underline\upalpha,\underline\upalpha')$. Then there exists a unique smooth symmetric traceless $S^2_{u,v}$ 2-tensor field $\Omega^{-2}\underline\alpha$ that satisfies \bref{T-2} on $J^+(\Sigma^*)$, with $\Omega^{-2}\underline\alpha|_{\Sigma^*}=\underline\upalpha, \slashednabla_{n_{\Sigma^*}}\Omega^2\underline\alpha|_{\Sigma^*}=\underline\upalpha'$.
\end{proposition}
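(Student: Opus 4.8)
The plan is to treat \eqref{T-2d} for what it is --- a linear, second-order, Lorentzian-hyperbolic equation for a symmetric traceless $S^2_{u,v}$-tangent $2$-tensor --- and to invoke the standard existence/uniqueness/regularity theory for such equations, the only genuine subtlety being the degeneration of the Eddington--Finkelstein frame on $\mathscr{H}^+_{\geq 0}$, which is resolved precisely by the choice of unknown $\Omega^{-2}\underline\alpha$. The argument runs in close parallel to the one for \Cref{WP+2Sigma*}.

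\emph{Step 1 (reduction to a regular wave equation).} I would rewrite \eqref{T-2} in terms of $\psi:=\Omega^{-2}\underline\alpha$, equivalently multiply \eqref{T-2d} through by $\Omega^{-4}$ and collect the $\Omega$-weights. The crucial point is that the first-order coefficient $r(x+2)=r\!\left(3-\Omega^{-2}\right)$ in \eqref{T-2d}, which is singular on $\mathscr{H}^+$ and encodes the blueshift, recombines with the derivatives of the weight $\Omega^{4}$ into coefficients that --- once one passes to Kruskal-type coordinates $(U,v,\theta^A)$ near the horizon --- extend smoothly up to and including $\mathscr{H}^+_{\geq 0}$; this is exactly the regularity recorded in \Cref{regular}. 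The resulting equation has principal part $\Box_g$, hence is hyperbolic; its lower-order terms preserve the symmetric traceless structure, since $\fancydstar_2\fancyd_2$ and the other angular operators of \Cref{D1D2} map $\widehat{T}_{2,0}$ to itself; and its coefficients are smooth on a full neighbourhood of $\overline{J^+(\Sigma^*)}$. (Alternatively one may first decompose $r\Omega^2\underline\alpha=r^2\fancydstar_2\fancydstar_1(\underline f,\underline g)$ with $\underline f,\underline g$ supported on $\ell\geq 2$ and, using that the commutators of $\slashednabla_3,\slashednabla_4$ with $\fancydstar_1,\fancydstar_2$ on Schwarzschild are of lower order, reduce to a pair of scalar wave equations of Teukolsky type; this is a matter of taste.)

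\emph{Step 2 (local theory, uniqueness, regularity).} For the equation in the form obtained in Step 1, existence and uniqueness of a smooth solution attaining $(\psi,\slashednabla_{n_{\Sigma^*}}\psi)=(\underline\upalpha,\underline\upalpha')$ on $\Sigma^*$ follow by the usual route: a standard $L^2$ energy estimate for the wave operator (controlling the solution on truncated slabs by its data on $\Sigma^*$); commutation with $T=\partial_t$, with the angular momentum operators, and with $\slashednabla_{n}$, combined with the elliptic estimates on $S^2_{u,v}$ of \Cref{subsubsection 2.1.2 Elliptic estimates on S2} to upgrade control of $\slashed{\Delta}$-derivatives to control of all angular derivatives; an existence argument by duality/approximation; and a bootstrap of the commuted estimates for smoothness. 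Uniqueness throughout $J^+(\Sigma^*)$ follows from the same energy estimate together with finite speed of propagation. In particular it is legitimate that no data are prescribed on $\mathscr{H}^+_{\geq 0}$: $\mathscr{H}^+$ is a characteristic (null) hypersurface for \eqref{T-2}, and, once it is adjoined, the region $\overline{J^+(\Sigma^*)}\setminus\Sigma^*$ lies in the future domain of dependence of $\Sigma^*$, which is a Cauchy surface for $J^+(\Sigma^*)$.

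\emph{Step 3 (globalisation), and the main obstacle.} Since only smoothness --- not compact support --- of the data is assumed and $\Sigma^*$ asymptotes to spacelike infinity, the global solution is assembled by exhausting $\Sigma^*$ by an increasing family of compact sets, solving on the corresponding domains of dependence, and patching by uniqueness and finite propagation speed; equivalently one may quote directly the general well-posedness theorem for linear wave equations on globally hyperbolic Lorentzian manifolds, applied to $J^+(\Sigma^*)$ with $\mathscr{H}^+$ attached as a boundary. The step requiring care --- and the main obstacle --- is Step 1: verifying that, after the rescaling $\underline\alpha\mapsto\Omega^{-2}\underline\alpha$ and the change to Kruskal coordinates, the horizon-singular first-order term cancels and every coefficient extends smoothly across $\mathscr{H}^+_{\geq 0}$; note that this works precisely because $\Sigma^*$ meets $\mathscr{H}^+$ strictly to the future of $\mathcal{B}$, so the bifurcation sphere, where yet another rescaling would be needed (cf.\ \Cref{regular}), never enters. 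Everything else is entirely standard.
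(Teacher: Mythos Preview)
Your proposal is correct and is precisely the standard argument the paper has in mind: the paper does not give a proof of this proposition at all, simply stating that it (together with \Cref{WP+2Sigma*}) is ``standard for linear second-order hyperbolic equations of the type that \cref{T+2}, \cref{T-2} fall under,'' after noting (via \Cref{regular}) that $\Omega^{-2}\underline\alpha$ is the appropriate regular unknown near $\mathscr{H}^+_{\geq0}$. Your Steps 1--3 are a faithful elaboration of exactly this, including the key observation that $\Sigma^*$ meets $\mathscr{H}^+$ strictly to the future of $\mathcal{B}$ so that no further Kruskal rescaling is needed.
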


The same applies replacing $\Sigma^*$ with any other $\mathscr{H}^+$-penetrating spacelike surface ending at $i^0$.\\
\indent The degeneration of the EF frame discussed in \Cref{regular} is inherited by \bref{T+2}, \bref{T-2}, and we must work with $\widetilde{\alpha}=V^{-2}\Omega^2\alpha, \widetilde{\underline\alpha}=U^2\Omega^{-2}\underline\alpha$ in order to study the Teukolsky equations with data on $\overline{\Sigma}$. The weighted quantities $\widetilde{\alpha}, \widetilde{\underline\alpha}$ satisfy the following equations:

\begin{align}\label{T+2B}
    \frac{1}{\Omega^2}\nablau\nablav r\widetilde{\alpha}+\frac{1}{M}(4-3\Omega^2)\nablau r\widetilde{\alpha}-\frac{1}{r}(3\Omega^2-5)\widetilde{\alpha}-\slashed{\Delta}r\widetilde{\alpha}=0,
\end{align}
\begin{align}\label{T-2B}
    \frac{1}{\Omega^2}\nablau\nablav r\widetilde{\underline\alpha}-\frac{1}{M}(4-3\Omega^2)\nablav r\widetilde{\underline\alpha}-\frac{1}{r}(3\Omega^2-5)\widetilde{\underline\alpha}-\slashed{\Delta}r\widetilde{\underline\alpha}=0.
\end{align}
Equations (\ref{T+2B}) and (\ref{T-2B}) do not degenerate near $\mathcal{B}$ and we can make the following well-posedness statement:
\begin{proposition}\label{WP+2Sigmabar}
Prescribe a pair of smooth symmetric traceless $S^2_{U,V}$ 2-tensor fields $(\widetilde{\upalpha},\widetilde{\upalpha}')$ on $\overline{\Sigma}$. Then there exists a unique smooth symmetric traceless $S^2_{u,v}$ 2-tensor field $\Omega^2{\alpha}$ that satisfies (\ref{T+2}) on $ J^+(\overline{\Sigma})$ with $V^{-2}\Omega^2\alpha|_{\overline{\Sigma}}=\widetilde{\upalpha}$ and $\slashednabla_{n_{\overline{\Sigma}}}V^{-2}\Omega^2\alpha|_{\overline{\Sigma}}=\widetilde{\upalpha}'$.
\end{proposition}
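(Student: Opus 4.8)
The plan is to deduce the statement from the standard Cauchy theory for linear, second-order, normally hyperbolic equations on a globally hyperbolic Lorentzian manifold; the only real work is to exhibit \bref{T+2}, after the substitution $\widetilde\alpha=V^{-2}\Omega^2\alpha$, as such an equation that is moreover \emph{regular up to and across} $\overline{\mathscr{H}^+}$. First I would verify that \bref{T+2B} indeed has this feature, which is the whole reason for passing to $\widetilde\alpha$: in Kruskal coordinates $(U,V)$ one has, on components of $S^2_{u,v}$-tangent $(1,1)$-tensors, $\nablau=\Omega\slashed\nabla_3=-\tfrac{U}{2M}\partial_U$ and $\nablav=\Omega\slashed\nabla_4=\tfrac{V}{2M}\partial_V$, so the dangerous second-order term $\Omega^{-2}\nablau\nablav$ equals $-\tfrac{UV}{4M^2\Omega^2}\partial_U\partial_V$, and since $-UV=\big(\tfrac{r}{2M}-1\big)e^{r/2M}=\tfrac{r}{2M}\Omega^2 e^{r/2M}$ the factor $\Omega^{-2}$ is cancelled, leaving $\tfrac{re^{r/2M}}{8M^3}\partial_U\partial_V$, which is a smooth nonvanishing function of $r$ on all of $\overline{\mathscr{M}}$; the first-order coefficient $\tfrac1M(4-3\Omega^2)$ picks up a factor $U$ from $\partial_u$, the zeroth-order coefficient $\tfrac1r(3\Omega^2-5)$ is manifestly smooth, and $\slashed\Delta=r^{-2}\mathring{\slashed\Delta}$ is the smooth Laplacian on $S^2_{u,v}$. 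Hence, regarded as an equation for a section of the smooth vector bundle $\widehat{T}_{2,0}\to\mathscr{M}$ of symmetric traceless $S^2_{u,v}$-tangent $2$-tensors, \bref{T+2B} is normally hyperbolic, with principal part the wave operator of a metric conformal to $g$, characteristic cones $\{U=\mathrm{const}\}$, $\{V=\mathrm{const}\}$, and $\overline\Sigma=\{U+V=0\}$ spacelike for it.

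Given this, I would proceed as follows. First, embed a neighbourhood of $J^+(\overline\Sigma)$ into a larger \emph{boundaryless} Lorentzian manifold $\widehat{\mathscr{M}}$ — concretely, extend the Kruskal range to $U\in(-\infty,\delta)$ for small $\delta>0$, adjoining a collar of the Schwarzschild interior across $\mathscr{H}^+$ — and extend the bundle $\widehat{T}_{2,0}$, the metric, and the (now smooth) coefficients of \bref{T+2B} smoothly across $\mathscr{H}^+$, which is possible because $r>0$ and all coefficients are smooth functions of $r$ there. Next, extend $\overline\Sigma$ to a spacelike Cauchy surface $\widehat\Sigma\supset\overline\Sigma$ of a globally hyperbolic neighbourhood, and extend the data $(\widetilde\upalpha,\widetilde\upalpha')$ — which by hypothesis are smooth sections over $\overline\Sigma$, hence smooth at $\mathcal{B}$ — to smooth data on $\widehat\Sigma$. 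Then invoke the standard existence–uniqueness theorem for the Cauchy problem of a normally hyperbolic operator on a vector bundle over a globally hyperbolic Lorentzian manifold to obtain a unique smooth $\widetilde\alpha$ (i.e.\ $r\widetilde\alpha$) on $D^+(\widehat\Sigma)$. Restricting to $\mathscr{M}$: since $\overline\Sigma$ is a Cauchy surface for $\mathscr{M}$, $J^+(\overline\Sigma)\cap\mathscr{M}\subseteq D^+(\overline\Sigma)\subseteq D^+(\widehat\Sigma)$, so the restriction is independent of the chosen extension and is the unique solution of \bref{T+2B} on $J^+(\overline\Sigma)$ with data $(\widetilde\upalpha,\widetilde\upalpha')$, uniqueness being finite speed of propagation. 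Finally set $\Omega^2\alpha:=V^2\widetilde\alpha$, smooth since $V=e^{v/2M}>0$ is smooth on $\mathscr{M}$; it solves \bref{T+2} by construction and satisfies $V^{-2}\Omega^2\alpha|_{\overline\Sigma}=\widetilde\upalpha$, $\slashednabla_{n_{\overline\Sigma}}V^{-2}\Omega^2\alpha|_{\overline\Sigma}=\widetilde\upalpha'$. For uniqueness of $\Omega^2\alpha$: any smooth solution of \bref{T+2} on $J^+(\overline\Sigma)$ defines, away from $\mathcal{B}$ where $V\neq0$, a smooth solution $V^{-2}\Omega^2\alpha$ of \bref{T+2B}, which extends across $\mathcal{B}$ by the regularity of \bref{T+2B} established above, with the prescribed data; by the uniqueness just established it coincides with $\widetilde\alpha$.

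An alternative that avoids the embedding is to decompose $\widetilde\alpha=r^2\fancydstar_2\fancydstar_1(f,g)$ with scalar potentials $f,g$ supported on $\ell\ge2$, as in the Proposition of \Cref{subsubsection 2.1.2 Elliptic estimates on S2}; since the angular operators in \bref{T+2B} intertwine with $\fancydstar_1,\fancydstar_2$ up to lower-order terms, this turns \bref{T+2B} into a pair of decoupled scalar wave-type equations on $(\mathscr{M},g)$, for which one may quote the known well-posedness of the scalar wave equation on the Schwarzschild exterior up to and including $\overline{\mathscr{H}^+}$, and then reassemble $\widetilde\alpha$. Either way, the genuine content — and the main obstacle to handle carefully — is twofold: (a) the bookkeeping showing that $\widetilde\alpha$, not $\Omega^2\alpha$, is the variable in which the equation is nondegenerate at $\mathcal{B}$, i.e.\ the cancellation of $\Omega^{-2}$ above; and (b) treating the corner where the characteristic boundary $\overline{\mathscr{H}^+}$ meets the spacelike hypersurface $\overline\Sigma$ — which is precisely why one passes to a boundaryless ambient manifold (or to the scalar picture together with an already-established horizon-penetrating well-posedness statement) rather than attempting a mixed characteristic/initial-value argument on $\mathscr{M}$ directly.
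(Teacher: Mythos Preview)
Your proposal is correct and follows exactly the approach the paper has in mind: the paper does not give a detailed proof of this proposition but simply states that equations \bref{T+2B} and \bref{T-2B} ``do not degenerate near $\mathcal{B}$'' and then declares the well-posedness statements ``standard for linear second-order hyperbolic equations of the type that \cref{T+2}, \cref{T-2} fall under.'' Your proof is a careful unpacking of precisely this: you verify the nondegeneracy in Kruskal coordinates (the cancellation of $\Omega^{-2}$), then invoke the standard Cauchy theory for normally hyperbolic operators on vector bundles after embedding in a boundaryless manifold. One small slip: you write ``$V=e^{v/2M}>0$ is smooth on $\mathscr{M}$,'' but $V=0$ at $\mathcal{B}\in\mathscr{M}$; this does not affect the argument since $V^2$ is smooth in Kruskal coordinates regardless.
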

\begin{proposition}\label{WP-2Sigmabar}
Prescribe a pair of smooth symmetric traceless $S^2_{U,V}$ 2-tensor fields $(\widetilde{\underline\upalpha},\widetilde{\underline\upalpha}')$ on $\overline{\Sigma}$. Then there exists a unique smooth symmetric traceless $S^2_{u,v}$ 2-tensor field $\Omega^{-2}{\underline\alpha}$ that satisfies (\ref{T-2}) on $ J^+(\overline{\Sigma})$ with $V^{2}\Omega^{-2}\underline\alpha|_{\overline{\Sigma}}=\widetilde{\underline\upalpha}$ and $\slashednabla_{n_{\overline{\Sigma}}}V^{2}\Omega^{-2}\underline\alpha|_{\overline{\Sigma}}=\widetilde{\underline\upalpha}'$.
\end{proposition}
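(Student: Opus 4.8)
The plan is to reduce this well-posedness statement to the standard Cauchy theory for linear normally hyperbolic operators on a globally hyperbolic Lorentzian manifold, by exhibiting \bref{T-2} --- after passing to a regular unknown and to regular coordinates --- as a genuine wave equation with smooth coefficients, non-degenerate on all of $\overline{\mathscr{M}}$ including the corner $\mathcal{B}$. First I would pass to the weighted field $\widetilde{\underline\alpha}=U^{2}\Omega^{-2}\underline\alpha$, which satisfies \bref{T-2B}, and rewrite \bref{T-2B} in the Kruskal coordinates $(U,V,\theta^{A})$: using $U=-e^{-u/2M}$, $V=e^{v/2M}$ together with $-UV=(\tfrac{r}{2M}-1)e^{r/2M}$ and $\Omega^{2}=\tfrac{2M}{r}(\tfrac{r}{2M}-1)$, a short computation gives $\tfrac{1}{\Omega^{2}}\partial_{u}\partial_{v}=\tfrac{r\,e^{r/2M}}{8M^{3}}\partial_{U}\partial_{V}$ and $\partial_{v}=\tfrac{V}{2M}\partial_{V}$, so that \bref{T-2B} becomes a linear second-order equation for $r\widetilde{\underline\alpha}$ with principal part $\partial_{U}\partial_{V}-\tfrac{8M^{3}}{r^{3}e^{r/2M}}\mathring{\slashed{\Delta}}$ --- a normally hyperbolic operator on symmetric traceless $S^{2}_{u,v}$-2-tensors, the coefficient of $\mathring{\slashed{\Delta}}$ being strictly positive --- with all remaining coefficients smooth functions of $(U,V)$. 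The crucial point is that $r(U,V)$ is smooth, indeed real-analytic, on $\{UV<1\}$, a region containing all of $\overline{\mathscr{M}}$ and an open neighbourhood of $\overline{\mathscr{H}^{+}}\cup\overline{\mathscr{H}^{-}}\supset\mathcal{B}$; in this form the equation is regular and uniformly non-degenerate on and across the horizons, so the bifurcation sphere is no obstacle.

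Next I would set up and solve the Cauchy problem for $\widetilde{\underline\alpha}$. The surface $\overline{\Sigma}=\{U+V=0\}$ is spacelike for the operator above, since $g^{\mu\nu}\partial_{\mu}(U{+}V)\partial_{\nu}(U{+}V)=2g^{UV}<0$, and it runs from $\mathcal{B}$ to spacelike infinity $i^{0}$. The prescribed data translate into Cauchy data for $\widetilde{\underline\alpha}$: on $\overline{\Sigma}$ one has $U=-V$, hence $\widetilde{\underline\alpha}|_{\overline{\Sigma}}=V^{2}\Omega^{-2}\underline\alpha|_{\overline{\Sigma}}=\widetilde{\underline\upalpha}$, and the transversal datum $\slashednabla_{n_{\overline{\Sigma}}}\widetilde{\underline\alpha}|_{\overline{\Sigma}}$ is recovered from $(\widetilde{\underline\upalpha},\widetilde{\underline\upalpha}')$ by differentiating the identity $U^{2}\Omega^{-2}\underline\alpha=\tfrac{U^{2}}{V^{2}}(V^{2}\Omega^{-2}\underline\alpha)$ along $n_{\overline{\Sigma}}$ --- a smooth change of data. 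With this in hand I would invoke the standard existence-and-uniqueness theorem for normally hyperbolic operators acting on sections of a vector bundle over a globally hyperbolic Lorentzian manifold; to avoid discussing the null boundary $\mathscr{H}^{+}$ directly, I would restrict to an open globally hyperbolic subregion $\widetilde{\mathscr{M}}$ of the maximal analytic extension of Schwarzschild that contains $J^{+}(\overline{\Sigma})\cap\mathscr{M}$ and has the extension of $\overline{\Sigma}$ as a Cauchy surface (e.g. its future Cauchy development --- the extended equation has smooth coefficients there since $r$ is analytic on $\{UV<1\}$), solve there, and restrict back, noting that $\mathscr{H}^{+}$ carries no boundary condition because it is characteristic. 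Setting $\underline\alpha:=\Omega^{2}U^{-2}\widetilde{\underline\alpha}$ then produces a symmetric traceless $S^{2}_{u,v}$-2-tensor solving \bref{T-2} on $J^{+}(\overline{\Sigma})\setminus\overline{\mathscr{H}^{+}}$, with $V^{2}\Omega^{-2}\underline\alpha=\widetilde{\underline\alpha}$ smooth up to $\overline{\mathscr{H}^{+}}$ and realising the prescribed data on $\overline{\Sigma}$. Uniqueness follows from uniqueness for \bref{T-2B}, i.e. from the domain-of-dependence property of the wave operator above, which one may also obtain directly from an energy estimate on lens-shaped regions bounded by $\overline{\Sigma}$ and two null hypersurfaces.

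The argument is routine linear hyperbolic theory --- as the text already notes, \bref{T-2} falls under the standard theory of linear second-order hyperbolic equations --- and I expect no deep obstacle; the only two points needing genuine care are (i) regularity at the corner $\mathcal{B}$, which is exactly why one works with $\widetilde{\underline\alpha}$ and \bref{T-2B} rather than with $\underline\alpha$ and \bref{T-2}, and which the Kruskal reformulation of the first step settles, and (ii) the non-compactness of $\overline{\Sigma}$ towards $i^{0}$, handled by finite speed of propagation --- no condition at spacelike infinity is required --- by exhausting $J^{+}(\overline{\Sigma})$ by relatively compact lens regions and patching the local solutions. Finally, \Cref{WP-2Sigma*} follows from the same argument restricted to $J^{+}(\Sigma^{*})$, which does not meet $\mathcal{B}$; there the first-order term in \bref{T-2} is a smooth lower-order perturbation and is harmless for well-posedness.
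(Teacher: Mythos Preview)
Your proposal is correct and follows precisely the approach the paper indicates: the paper does not give a detailed proof but simply notes that \bref{T+2B}, \bref{T-2B} ``do not degenerate near $\mathcal{B}$'' and that the well-posedness statements are ``standard for linear second-order hyperbolic equations of the type that \cref{T+2}, \cref{T-2} fall under.'' Your argument spells out exactly this standard reduction --- passing to the regular unknown $\widetilde{\underline\alpha}=U^{2}\Omega^{-2}\underline\alpha$ satisfying \bref{T-2B}, rewriting in Kruskal coordinates to obtain a normally hyperbolic operator with smooth coefficients across $\mathcal{B}$, and invoking the classical Cauchy theory --- and the care you take with points (i) and (ii) is appropriate.
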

Analogous statements to the above apply to past development from $\overline{\Sigma}$ with $U,\Omega^2$ switching places with $V,\Omega^{-2}$ respectively.\\
\indent In developing backwards scattering we will use the following well-posedness statement for the past development of a mixed initial-characteristic value problem:
\begin{proposition}\label{WP+2backwards}
Let $u_+<\infty, v_+<v_*<\infty$. Let $\widetilde{\Sigma}$ be a spacelike hypersurface connecting $\mathscr{H}^+$ at $v_*$ to $\mathscr{I}^+$ at $u_+$ and let $\underline{\mathscr{C}}=\underline{\mathscr{C}}_{v_*}\cap J^-(\widetilde{\Sigma})\cap J^+(\overline\Sigma)$. Prescribe a pair of symmetric traceless $S^2_{u,v}$ 2-tensor fields:
\begin{itemize}
\item $\alpha_{{\mathscr{H}^+}}$ on ${\mathscr{H}^+}\cap\{v\leq v_+\}$ vanishing in a neighborhood of $\mathscr{H}^+\cap\{v=v_+\}$, such that $V^{-2}\alpha_{\overline{\mathscr{H}^+}}$ extends smoothly to $\mathcal{B}$,
\item $\alpha_{0,in}$ on $\underline{\mathscr{C}}$ vanishing in a neighborhood of $\underline{\mathscr{C}}\cap\widetilde{\Sigma}$.
\end{itemize}
Then there exists a unique smooth symmetric traceless $S^2_{u,v}$ 2-tensor $\alpha$ on $D^-\left(\overline{\mathscr{H}^+}\cup\widetilde{\Sigma}\cup\underline{\mathscr{C}}\right)\cap J^+(\overline{\Sigma})$ such that $V^{-2}\Omega^2\alpha|_{\overline{\mathscr{H}^+}}=V^{-2}\alpha_{\overline{\mathscr{H}^+}}$, $\alpha|_{\underline{\mathscr{C}}}=\alpha_{0,in}$ and $\left(\Omega^2\alpha|_{\widetilde{\Sigma}},\slashednabla_{n_{\widetilde{\Sigma}}}\Omega^2\alpha|_{\widetilde{\Sigma}}\right)=(0,0)$.
\end{proposition}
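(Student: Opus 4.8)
The plan is to treat \eqref{T+2} as a linear, second-order, normally hyperbolic equation for a section of the bundle of symmetric traceless $S^2_{u,v}$-tensors: by \eqref{T+2d} its principal part is the scalar wave operator $\Box_g$ acting componentwise, the only first-order term being $-2r(x+2)\Omega\slashednabla_3$, and all lower-order coefficients are smooth on the region in question. One then solves the stated mixed characteristic/Cauchy problem by the same combination of Goursat and backward Cauchy solvability that underlies \Cref{WP+2Sigmabar} and the boundedness arguments of \cite{DHR16}. The first step is to pass to a regular description near the event horizon: in a neighbourhood of $\overline{\mathscr{H}^+}$ one works with $\widetilde\alpha=V^{-2}\Omega^2\alpha$, which by \Cref{regular} satisfies the regular equation \eqref{T+2B}, whose coefficients extend smoothly across $\mathscr{H}^+$ and $\mathcal{B}$; the assumption that $V^{-2}\alpha_{\overline{\mathscr{H}^+}}$ extend smoothly to $\mathcal{B}$ is exactly the statement that the characteristic data for \eqref{T+2B} on the null hypersurface $\overline{\mathscr{H}^+}$ is smooth, while away from $\overline{\mathscr{H}^+}$ one uses $\Omega^2\alpha$ and \eqref{T+2} directly. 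It is convenient — though not essential — to reduce the tensorial equation to scalar ones: writing a symmetric traceless $S^2_{u,v}$-tensor as $\Xi=r^2\fancydstar_2\fancydstar_1(f,g)$ with $f,g$ supported on $\ell\geq2$, and using that $\mathring{\slashed{\Delta}}$ commutes with the Teukolsky operator, one reduces to the same problem for the scalar pair $(f,g)$, or equivalently, projecting onto spherical harmonics, to a $1+1$-dimensional linear wave equation for each $(\ell,m)$, to which the classical theory applies verbatim.

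Next one establishes local existence near the corner $S_*:=\overline{\mathscr{H}^+}\cap\underline{\mathscr{C}}_{v_*}$, where $\alpha_{\mathscr{H}^+}$ on $\overline{\mathscr{H}^+}$ and $\alpha_{0,in}$ on $\underline{\mathscr{C}}$ form smooth characteristic data on two transverse null hypersurfaces meeting in the $2$-sphere $S_*$; the characteristic initial value problem for a linear wave equation is solvable in a one-sided neighbourhood of $S_*$ by the standard reduction to a symmetric-hyperbolic first-order system followed by Picard iteration (see the well-posedness discussion in \cite{DHR16}), and the corner compatibility conditions are automatic since $\alpha_{\mathscr{H}^+}$ and $\alpha_{0,in}$ both vanish near $S_*$. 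This local solution is then propagated to all of $\mathscr{D}:=D^-\!\big(\overline{\mathscr{H}^+}\cup\widetilde{\Sigma}\cup\underline{\mathscr{C}}\big)\cap J^+(\overline{\Sigma})$ by a continuity argument: foliate $\mathscr{D}$ by spacelike hypersurfaces interpolating between $\widetilde{\Sigma}$ and $\overline{\Sigma}$ (truncated level sets of $t^*$) and solve, slab by slab on compact exhausting subregions, the backward mixed problem with Cauchy data on the spacelike leaves — initialised by the zero data on $\widetilde{\Sigma}$ — and with $\alpha_{\mathscr{H}^+}$, $\alpha_{0,in}$ as data on the lateral null portions $\overline{\mathscr{H}^+}$, $\underline{\mathscr{C}}$ of the boundary. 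Each slab problem is well posed by an energy estimate together with finite speed of propagation; since \eqref{T+2} carries no conserved energy the estimates come only with Gronwall-type errors controlled by the solution itself, which on each compact subregion still yields a uniform existence interval, and uniqueness patches the slab solutions into a global $\alpha$ on $\mathscr{D}$. Higher regularity is obtained by commuting with the foliation vector fields and with $\slashednabla$ on the spheres and re-running the estimates, the first-order term remaining lower order throughout; uniqueness on $\mathscr{D}$ follows by applying the same energy inequality to the difference of two solutions.

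Two points require a little care, and I expect them to be where the work lies. First, the zero Cauchy data on $\widetilde{\Sigma}$ must be compatible with the prescribed characteristic data; this is guaranteed by the hypotheses that $\alpha_{\mathscr{H}^+}$ vanish near $\mathscr{H}^+\cap\{v=v_+\}$ and $\alpha_{0,in}$ vanish near $\underline{\mathscr{C}}\cap\widetilde{\Sigma}$, which by finite speed of propagation force $\alpha$ to vanish near $S_*$ and near the adjacent portion of $\widetilde{\Sigma}$, so that the backward evolution above is genuinely initialised with these (trivial) data. Second, one must verify smoothness of $\alpha$ up to the corners of $\mathscr{D}$ — at $\mathcal{B}$, at $S_*$, and where $\widetilde{\Sigma}$ meets $\overline{\mathscr{H}^+}$ — which near $\mathcal{B}$ follows from the regular equation \eqref{T+2B} and near $S_*$ is precisely the Goursat solvability just used; this last point, namely producing a solution smooth at the corner where the null hypersurface $\underline{\mathscr{C}}_{v_*}$ meets $\overline{\mathscr{H}^+}$ and arranging the transition to the regular variable $\widetilde\alpha$ near $\mathcal{B}$, is the step demanding the most attention, the bulk existence, uniqueness, and regularity being routine once \eqref{T+2} is recognised as normally hyperbolic with a lower-order first-order term. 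The statement for past development (with $U,\Omega^2$ interchanged with $V,\Omega^{-2}$) and the corresponding statement for the $-2$ equation \eqref{T-2} using \eqref{T-2B} follow by the same argument.
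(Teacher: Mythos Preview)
The paper does not actually prove this proposition: it is presented as one of several ``standard'' well-posedness statements for linear second-order hyperbolic equations (see the sentence preceding \Cref{WP+2Sigma*}), and no argument is given beyond the implicit appeal to the regular form \eqref{T+2B} of the equation near $\mathcal{B}$. Your proposal is therefore not to be compared against a proof in the paper but rather supplies one where the paper is content to cite general theory.

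That said, your sketch is a correct and standard route to such a result: recognising \eqref{T+2} as normally hyperbolic, passing to the regular variable $\widetilde\alpha$ via \eqref{T+2B} near $\overline{\mathscr{H}^+}$, solving the Goursat problem near the corner sphere, and propagating backward by energy estimates with Gr\"onwall on a spacelike foliation. The reduction to scalars via $r^2\fancydstar_2\fancydstar_1(f,g)$ is unnecessary (and the paper never uses it for well-posedness), but harmless. One small point: your description places the corner $S_*$ at $\overline{\mathscr{H}^+}\cap\underline{\mathscr{C}}_{v_*}$, but in the geometry of the statement $\underline{\mathscr{C}}=\underline{\mathscr{C}}_{v_*}$ lies in the region $r\geq 2M$ with $v=v_*>v_+$, so $\underline{\mathscr{C}}$ meets $\widetilde{\Sigma}$ (not $\overline{\mathscr{H}^+}$) at its future end and meets $\overline{\Sigma}$ at its past end; the two characteristic data pieces $\alpha_{\mathscr{H}^+}$ and $\alpha_{0,in}$ are separated by $\widetilde\Sigma$ rather than meeting at a common sphere, and the vanishing hypotheses ensure compatibility with the trivial Cauchy data on $\widetilde\Sigma$ at both ends. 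With that geometric correction the argument goes through.
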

\begin{proposition}\label{WP-2backwards}
Let $u_+<\infty, v_+<v_*<\infty$. Let $\widetilde{\Sigma}$ be a spacelike hypersurface connecting $\mathscr{H}^+$ at $v_+$ to $\mathscr{I}^+$ at $u_+$ and let $\underline{\mathscr{C}}=\underline{\mathscr{C}}_{v_*}\cap J^+(\widetilde{\Sigma})\cap\{t\geq0\}$. Prescribe a pair of symmetric traceless $S^2_{u,v}$ 2-tensor fields:
\begin{itemize}
\item $\underline\alpha_{{\mathscr{H}^+}}$ on ${\mathscr{H}^+}\cap\{v<v_+\}$ vanishing in a neighborhood of $v_+$, such that $V^{2}\underline\alpha_{{\mathscr{H}^+}}$ extends smoothly to $\mathcal{B}$,
\item $\underline\alpha_{0,in}$ on $\underline{\mathscr{C}}$ vanishing in a neighborhood of $\underline{\mathscr{C}}\cap\mathscr{H}^+$.
\end{itemize}
Then there exists a unique smooth symmetric traceless $S^2_{u,v}$ 2-tensor $\underline\alpha$ on $D^-\left(\overline{\mathscr{H}^+}\cup\widetilde{\Sigma}\cup\underline{\mathscr{C}}\right)\cap J^+(\overline{\Sigma})$ such that $V^{2}\Omega^{-2}\underline\alpha|_{\overline{\mathscr{H}^+}}=V^{2}\underline\alpha_{\overline{\mathscr{H}^+}}$, $\underline\alpha|_{\underline{\mathscr{C}}}=\underline\alpha_{0,in}$ and $\left(\Omega^{-2}\underline\alpha|_{\widetilde{\Sigma}},\slashednabla_{n_{\widetilde{\Sigma}}}\Omega^{-2}\underline\alpha|_{\widetilde{\Sigma}}\right)=(0,0)$.
\end{proposition}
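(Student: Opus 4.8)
The plan is to treat \bref{T-2} --- equivalently its regular-frame counterpart \bref{T-2B} in a neighbourhood of $\mathcal{B}$ --- as a linear second-order hyperbolic equation for a symmetric traceless $S^2_{u,v}$-tangent $2$-tensor, and to reduce the mixed initial--characteristic value problem to the classical existence and uniqueness theory for such equations. Using the representation of \Cref{subsubsection 2.1.2 Elliptic estimates on S2} one may write $\underline\alpha = r^2\fancydstar_2\fancydstar_1(f,g)$, or simply project onto spherical harmonics; since the coefficients of \bref{T-2d}, \bref{T-2B} depend on $r$ alone and the equations commute with the angular operators, the problem decouples into a family of $1+1$-dimensional equations in characteristic coordinates of the schematic form $\frac{1}{\Omega^2}\nablau\nablav f_\ell = (\text{lower order})$, each carrying a potential proportional to the corresponding $\mathring{\slashed\Delta}$-eigenvalue, posed on a region whose future boundary is made up of a null segment of $\overline{\mathscr{H}^+}$, the null segment $\underline{\mathscr{C}}\subset\{v=v_*\}$, and the spacelike segment $\widetilde\Sigma$ carrying trivial Cauchy data.

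Uniqueness and continuous dependence follow from an energy estimate for \bref{T-2}/\bref{T-2B}: contract with a strictly timelike $T$-type multiplier, integrate over the truncated domain, observe that the boundary terms on $\overline{\mathscr{H}^+}$, $\underline{\mathscr{C}}$ and $\widetilde\Sigma$ are either of favourable sign or controlled by the prescribed data, and close by Gr\"onwall; on this bounded portion of the exterior no sharp structure is needed. For existence, one uses the standard reduction of the mixed initial--characteristic problem to a succession of elementary sub-problems, each bounded either by two transverse null hypersurfaces or by a null and a spacelike hypersurface, solvable by the method of characteristics / Picard iteration and patched together. Because the prescribed data vanish near the corners $\mathscr{H}^+\cap\{v=v_+\}$ and $\underline{\mathscr{C}}\cap\mathscr{H}^+$ and the $\widetilde\Sigma$-data vanish identically, all corner compatibility conditions hold trivially, so the mode solutions are smooth up to and including the corners; finite speed of propagation together with the trivial $\widetilde\Sigma$-data forces each mode solution to vanish near $\widetilde\Sigma$. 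Reassembling the modes is legitimate because smoothness of the data gives spherical-harmonic coefficients decaying faster than any power of $\ell$, whereas the mode-wise estimates lose only polynomially in $\ell$ (the potential is $O(\ell^2)$, and commuting with $\mathring{\slashed\Delta}$ trades angular regularity for powers of $\ell$); hence the series and all its derivatives converge to the asserted smooth symmetric traceless $S^2_{u,v}$ $2$-tensor.

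The main obstacle I anticipate is the interaction of the characteristic part of the data with the degeneration of the Eddington--Finkelstein frame at $\mathcal{B}$: in a neighbourhood of $\mathcal{B}$ one must run the whole construction for $\widetilde{\underline\alpha}=V^{2}\Omega^{-2}\underline\alpha$ using \bref{T-2B}, whose coefficients are smooth across $\mathcal{B}$ and whose characteristic data $V^{2}\underline\alpha_{\overline{\mathscr{H}^+}}$ extend smoothly to $\mathcal{B}$ by hypothesis, and then verify --- via the uniqueness statement --- that the resulting $\widetilde{\underline\alpha}$ agrees with $V^{2}\Omega^{-2}\underline\alpha$ away from $\mathcal{B}$, so that the two patch to a tensor smooth up to and including $\mathcal{B}$. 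Once this matching and the (trivial) corner compatibility are secured, the remainder is routine linear hyperbolic theory.
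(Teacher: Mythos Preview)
The paper does not actually prove this proposition: it is one of several well-posedness statements (\Cref{WP+2Sigma*}--\Cref{backwards wellposedness -2}) introduced with the remark that they are ``standard for linear second-order hyperbolic equations of the type that \cref{T+2}, \cref{T-2} fall under'' and left without proof. Your outline is precisely the kind of standard argument the paper is tacitly invoking --- energy estimates for uniqueness on a bounded region, characteristic/Cauchy patching for existence, passage to the regular frame via \bref{T-2B} near $\mathcal{B}$, and trivial corner compatibility thanks to the vanishing assumptions --- and is correct in its essential structure.
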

\begin{center}
\begin{tikzpicture}[scale=0.7]
\node (I)    at ( 0,0)   {};

\path % Four corners of the right diamond (no labels this time)
   (I) +(90:4)  coordinate (Itop) coordinate[label=90:$i^+$]
      % +(-90:4) coordinate (Ibot) 
       +(180:4) coordinate (Ileft) coordinate[label=180:$\mathcal{B}$]
       +(0:4)   coordinate (Iright) coordinate[label=0:$i^0$]
       ;

% No text this time in the next diagram coordinate[label=-90:$i^-$]
\draw  (Ileft) --  node[yshift=1mm,above]{$v_+$} (Itop) ;
%\draw  (Ileft) --(Ibot) ;
%\draw[dash dot dot] (Ibot) -- (Iright) ;
\draw[dash dot dot] (Iright) --  node[yshift=1mm,above]{$u_+$}(Itop) ;
\draw [line width=0.3mm]($(Ileft)$)--($(Ileft)+(45:3.5cm)$);
\draw [line width=0.3mm]($(Iright)+(180:0.5cm)$)--node[below,xshift=-0.2cm,yshift=0.15cm]{$\underline{\mathscr{C}}\;$}($(Iright)+(135:3.2cm)+(180:0.8cm)$);
\draw ($(Ileft)+(45:3.5cm)$)  to[out=-25, in=205, edge node={node [below] {$\widetilde{\Sigma}$}}] ($(Iright)+(135:3.5cm)$);
%\draw ($(Iright)+(135:3.5cm)+(180:0.8cm)$) to[out=15, in=25]  ($(Iright)+(135:3.8cm)$);

\draw ($(Ileft)$)  to[out=0, in=180, edge node={node [below] {$\overline{\Sigma}$}}] ($(Iright)$);

\filldraw[white] (Itop) circle (3pt);
\draw[black] (Itop) circle (3pt);

%\filldraw[white] (Ibot) circle (3pt);
%\draw[black] (Ibot) circle (3pt);

\filldraw[white] (Iright) circle (3pt);
\draw[black] (Iright) circle (3pt);
\filldraw[black] (Ileft) circle (3pt);
\draw[black] (Ileft) circle (3pt);
\end{tikzpicture}
\end{center}

We will also need

\begin{proposition}\label{backwards wellposedness +2}
Let $\tilde{\alpha}_{\mathscr{H}^+}$ be a smooth symmetric traceless $S^2_{\infty,v}$ 2-tensor on $\overline{\mathscr{H}^+}\cap J^-(\Sigma^*)$, $(\widetilde{\upalpha}_{\Sigma^*},\widetilde{\upalpha}_{\Sigma^*}')$ be a pair of smooth symmetric traceless $S^2_{\infty,v}$ 2-tensors on $\Sigma^*$. Then there exists a unique solution $\widetilde{\alpha}$ to \bref{T+2B} in $J^+(\overline{\Sigma})\cap\{t^*\leq 0\}$ such that $\widetilde{\alpha}|_{\overline{\mathscr{H}^+}}=\widetilde{\alpha}_{\overline{\mathscr{H}^+}}$,  $(\widetilde{\alpha}|_{\Sigma^*},\slashednabla_{n_{\Sigma^*}}\widetilde{\alpha}|_{\Sigma^*})=(\widetilde{\upalpha}_{\Sigma^*},\widetilde{\upalpha}_{\Sigma^*}')$. 
\end{proposition}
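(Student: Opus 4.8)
The plan is to read this as a backward mixed Cauchy--characteristic problem for the regularised $+2$ Teukolsky equation \bref{T+2B} and to solve it by the standard energy method for linear second-order hyperbolic equations. The first step is to work throughout with the variable $\widetilde\alpha = V^{-2}\Omega^2\alpha$ and the equation \bref{T+2B}, which by \Cref{regular} is a genuinely regular (non-degenerate) hyperbolic equation on a full neighbourhood of $\overline{\mathscr{H}^+}$, including $\mathcal{B}$; thus the whole problem is posed, with smooth non-degenerate coefficients, on the region $\mathcal{R} := J^+(\overline\Sigma)\cap\{t^*\le 0\}$, whose relevant boundary consists of the spacelike surface $\Sigma^*$ (to the future), a segment of the null surface $\overline{\mathscr{H}^+}\cap J^-(\Sigma^*)$ running from $\mathcal{B}$ to $v=0$, and the spacelike surface $\overline\Sigma$ (to the past). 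The hypersurface $\mathcal{N} := \Sigma^*\cup\big(\overline{\mathscr{H}^+}\cap J^-(\Sigma^*)\big)$ is achronal, with a spacelike piece carrying the full Cauchy data $(\widetilde\upalpha_{\Sigma^*},\widetilde\upalpha_{\Sigma^*}')$ and a null (characteristic) piece carrying the datum $\widetilde\alpha_{\overline{\mathscr{H}^+}}$; $\mathcal{R}$ is precisely the portion of the past domain of dependence of $\mathcal{N}$ lying to the future of the global Cauchy surface $\overline\Sigma$, and no data is imposed on $\overline\Sigma$, whose trace is to be determined.

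For uniqueness and a priori control I would run a backward energy estimate. Foliate $\mathcal{R}$ by spacelike hypersurfaces $\{\Sigma_s\}_{s\in[0,1]}$ with $\Sigma_0 = \Sigma^*$ and $\Sigma_1 = \overline\Sigma$, take a past-directed timelike multiplier $X$ subordinate to this foliation, contract \bref{T+2B} with the associated current for $r\widetilde\alpha$, and integrate over $\mathcal{R}\cap\{s'\ge s\}$. Integration by parts yields the coercive flux through $\Sigma_s$; a flux through $\Sigma^*$ bounded entirely by $(\widetilde\upalpha_{\Sigma^*},\widetilde\upalpha_{\Sigma^*}')$; a flux through the null piece of $\mathcal{N}$ which, that piece being characteristic, involves only derivatives of $\widetilde\alpha$ tangential to $\overline{\mathscr{H}^+}$ together with $\widetilde\alpha$ itself, hence is computable from $\widetilde\alpha_{\overline{\mathscr{H}^+}}$; and bulk error terms from the first- and zeroth-order terms of \bref{T+2B} and from the deformation tensor of $X$. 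The only feature distinguishing \bref{T+2B} from an ordinary wave equation is the first-order term $\tfrac1M(4-3\Omega^2)\nablau r\widetilde\alpha$, the term responsible for the blueshift that obstructs the \emph{forward} theory; but over the bounded region $\mathcal{R}$, whose horizon segment is compact, it is merely a bounded lower-order contribution absorbed by Gr\"onwall's inequality in $s$ regardless of sign, while the angular Laplacian and the zeroth-order term are handled routinely. This closes an estimate controlling $\widetilde\alpha$ on each $\Sigma_s$, and in the limit its trace on $\overline\Sigma$, by the prescribed data; in particular the solution is unique. Near $i^0$ the region $\mathcal{R}$ is non-compact, and there one works with the $r$-weighted quantities of \Cref{subsubsection 2.1.3 Asymptotics of S2 tensor fields} --- equivalently with the component $\widetilde\alpha^A{}_B$ --- so that the weighted energy and all error terms stay integrable, exactly as in the forward theory for \bref{T+2B}.

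Existence I would obtain by exhausting $\mathcal{R}$ by compact-closure subregions, obtained by truncating near $i^0$ and near the corner $\Sigma^*\cap\mathscr{H}^+$, solving the resulting standard mixed Cauchy--characteristic problems there, and passing to the limit using the energy estimate of the previous paragraph for uniform bounds; alternatively one may produce a weak solution by the standard duality argument (the formal adjoint of \bref{T+2B} is again a Teukolsky-type hyperbolic equation with a single first-order null-derivative term, whose forward problem is well-posed by the same standard theory underlying \Cref{WP+2Sigmabar,WP-2Sigmabar}, and one dualises against it after subtracting a smooth extension of the data to reduce to homogeneous boundary values and an inhomogeneous equation). Either way, commuting \bref{T+2B} with $\partial_u$, $\partial_v$ and the angular operators $\fancyd_1,\fancyd_2$ --- clean because, by \Cref{regular}, the coefficients of \bref{T+2B} are smooth on $\overline{\mathcal{R}}$ up to $\overline{\mathscr{H}^+}$ --- and re-running the energy estimate on the commuted equations upgrades the solution to the unique smooth $\widetilde\alpha$, after which $\alpha = V^{2}\Omega^{-2}\widetilde\alpha$ solves \bref{T+2}. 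I do not expect a genuine obstacle here; the two points needing care are the smoothness up to the corner $\Sigma^*\cap\mathscr{H}^+$, which as usual requires the corner compatibility conditions between $(\widetilde\upalpha_{\Sigma^*},\widetilde\upalpha_{\Sigma^*}')$ and $\widetilde\alpha_{\overline{\mathscr{H}^+}}$ (automatic in the intended applications, e.g.\ for data vanishing near that corner), and the correct choice of $r$-weights so that the energy method closes at the asymptotically flat end $i^0$ --- this last being the main technical point. Apart from these, the argument is the textbook one for linear second-order hyperbolic equations.
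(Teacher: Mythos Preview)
Your proposal is correct and is exactly the standard argument the paper has in mind: the paper does not prove this proposition at all, instead grouping it with Propositions \ref{WP+2Sigma*}--\ref{backwards wellposedness -2} under the blanket remark that these ``are standard for linear second-order hyperbolic equations.'' Your sketch---backward energy estimate closed by Gr\"onwall over the region bounded in $t^*$, existence by exhaustion or duality, higher regularity by commutation, with the caveats on corner compatibility and $r$-weights near $i^0$---is precisely that standard argument fleshed out.
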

\begin{proposition}\label{backwards wellposedness -2}
An analogous statement to \Cref{backwards wellposedness +2} holds for \cref{T-2B}.
\end{proposition}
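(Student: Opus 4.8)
The statement is the $-2$ counterpart of \Cref{backwards wellposedness +2}, and the plan is to prove it by exactly the same standard argument for linear second-order hyperbolic equations. The only structural difference between the two equations is that the first-order term $\tfrac{1}{M}(4-3\Omega^2)\nablau$ in \bref{T+2B} is replaced by $-\tfrac{1}{M}(4-3\Omega^2)\nablav$ in \bref{T-2B}; the direction and sign of this lower-order term are immaterial for local existence and uniqueness (they enter only later, when one seeks estimates uniform in time). Both \bref{T-2B} and \bref{T+2B} are linear, second-order, hyperbolic equations for an $S^2_{u,v}$-tangent symmetric traceless $2$-tensor whose coefficients are smooth functions of $r$ alone and which, written in the Kruskal frame, extend smoothly across the bifurcation sphere $\mathcal B$ --- this being precisely why one replaces \bref{T-2} by \bref{T-2B}, cf.\ \Cref{regular}. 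If one prefers scalar equations, one writes the symmetric traceless tensor $\widetilde{\underline\alpha}$ in terms of scalar potentials $(f,g)$ as in \Cref{subsubsection 2.1.2 Elliptic estimates on S2}; since the angular operators $\fancydstar_1,\fancydstar_2,\slashed{\Delta}$ commute with the $r$-dependent coefficients and with the null derivatives in \bref{T-2B}, this turns \bref{T-2B} into a decoupled pair of scalar linear wave equations for $(f,g)$. Otherwise one argues directly with the bundle-valued equation.

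The solution is sought in $\mathcal{R}:=J^+(\overline\Sigma)\cap\{t^*\le 0\}$, the slab bounded to the future by the spacelike hypersurface $\Sigma^*$, to the past by $\overline\Sigma$, on one side by the null hypersurface $\overline{\mathscr{H}^+}\cap J^-(\Sigma^*)$ (which meets $\overline\Sigma$ at $\mathcal B$), and with a non-compact end at $i^0$. Every future-inextendible causal curve in $\mathcal{R}$ meets $\Sigma^*\cup\big(\overline{\mathscr{H}^+}\cap J^-(\Sigma^*)\big)$, so prescribing the Cauchy pair $(\widetilde{\underline\upalpha}_{\Sigma^*},\widetilde{\underline\upalpha}_{\Sigma^*}')$ on $\Sigma^*$ and the characteristic datum $\widetilde{\underline\alpha}_{\overline{\mathscr{H}^+}}$ on the null portion constitutes a well-posed mixed characteristic--Cauchy problem for past-directed evolution in $\mathcal{R}$. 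Uniqueness follows from a domain-of-dependence argument: contract \bref{T-2B} with a future-directed timelike multiplier, integrate the resulting divergence identity over truncated subregions of $\mathcal{R}$, and observe that the fluxes through $\Sigma^*$ and through the null part of the boundary have a definite sign while the remaining bulk terms --- including those from the first-order term of \bref{T-2B} --- are controlled by Gr\"onwall. Existence follows in the usual way: exhaust $\mathcal{R}$ by an increasing sequence of compact subdomains of the same boundary type, solve the mixed problem on each by classical linear hyperbolic theory, and pass to the limit using the uniform energy bounds; the non-compact end at $i^0$ is handled exactly as in \Cref{backwards wellposedness +2}, the coefficients of \bref{T-2B} and \bref{T+2B} having identical asymptotics as $r\to\infty$. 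Interior smoothness is immediate, and regularity up to $\Sigma^*$, up to $\overline{\mathscr{H}^+}$ and across $\mathcal B$ follows by differentiating \bref{T-2B} and iterating the estimate.

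There is no deep obstacle here; the only point genuinely requiring care --- and likely the heaviest in bookkeeping --- is the degeneration of the Eddington--Finkelstein frame at $\mathcal B$ and along $\overline{\mathscr{H}^+}$. This is dealt with, as in the $+2$ case, by working throughout with the rescaled unknown $\widetilde{\underline\alpha}=U^2\Omega^{-2}\underline\alpha$ of \Cref{regular} and with Kruskal-type coordinates near $\mathcal B$, so that \bref{T-2B} is a genuinely regular linear wave equation there and the timelike multiplier in the energy estimate can be taken smooth up to the boundary (and transverse to $\overline{\mathscr{H}^+}$ near the horizon, as is standard). The sign of the first-order term of \bref{T-2B} --- the one feature distinguishing the $-2$ equation from the $+2$ equation --- does not enter, since for this purely local statement only boundedness of the coefficients on compact subsets of $\mathcal{R}$ is used.
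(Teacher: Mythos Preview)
The paper does not supply a proof of this proposition: it is one of several well-posedness statements in \Cref{Chandra1} that the paper explicitly declares to be ``standard for linear second-order hyperbolic equations of the type that \cref{T+2}, \cref{T-2} fall under,'' and no argument is given beyond that remark. Your sketch is a correct rendition of exactly that standard argument---energy estimates with a timelike multiplier for uniqueness, local existence by classical hyperbolic theory on an exhaustion, regularity up to the boundary by commutation---and is precisely what the paper is invoking when it calls the result standard. There is nothing to compare against; your write-up simply spells out what the paper leaves implicit.

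One cosmetic remark: you cite \Cref{regular} for the rescaled unknown $\widetilde{\underline\alpha}=U^2\Omega^{-2}\underline\alpha$, which is indeed how that remark is written, but note that elsewhere (e.g.\ \Cref{WP-2Sigmabar} and the discussion of regularity on $\overline{\mathscr{H}^+}$) the paper uses $V^2\Omega^{-2}\underline\alpha$ for the quantity regular up to $\mathcal{B}$ in $J^+(\overline\Sigma)$. This does not affect your argument, since all you need is that \bref{T-2B} is a regular linear wave equation in Kruskal-type coordinates near $\mathcal{B}$, which it is; but be aware of the discrepancy if you later track weights explicitly.
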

Analogous statements apply for the "finite" backwards scattering problem from the past of $\overline{\Sigma}$, with $U$ replacing $V$ and $\Omega^2$ switching places with $\Omega^{-2}$. 
\begin{remark}[\textbf{Time inversion}]\label{time inversion}
Under the transformation $t\longrightarrow-t$, $u\longrightarrow -v$ and $v\longrightarrow -u$ and thus $\alpha(u,v,\theta^A)\longrightarrow\alpha(-v,-u,\theta^A)=:\invertedalpha(u,v,\theta^A)$ and $\underline\alpha(u,v,\theta^A)\longrightarrow\underline\alpha(-v,-u,\theta^A)=:\underline\invertedalpha(u,v,\theta^A)$.\\
\indent It is clear $\invertedalpha(u,v,\theta^A)$ satisfies the $-2$ Teukolsky equation, i.e.~the equation satisfied by $\underline\alpha$. Similarly,  $\underline\invertedalpha(u,v,\theta^A)$ satisfies the $+2$ Teukolsky equation, i.e.~the equation satisfied by $\alpha$. This observation means that the asymptotics of $\alpha$ towards the future are identical to those of $\underline\alpha$ towards the past, i.e.~determining the asymptotics of both $\underline\alpha$ and $\alpha$ towards the future is enough to determine the asymptotics of either $\alpha$ or $\underline\alpha$ in both the past and future directions. We will use this fact to obtain bijective scattering maps from studying the forward evolution of the fields $\alpha,\underline\alpha$. In particular, this prescription is sufficient to obtain well-posedness statements for the equations (\ref{T-2}) and (\ref{T+2}) for past development.
\end{remark}

\subsection{Derivation of the Teukolsky--Starobinsky identities}\label{derivation of the Teukolsky--Starobinsky identities}
We now return to the full system \bref{start of full system}--\bref{Bianchi 0*} to derive the Teukolsky--Starobinsky identities \bref{eq:227intro1}, \bref{eq:228intro1}.\\
\indent Let $\overone\alpha$ belong to a solution of the linearised Einstein equations. \Cref{Bianchi +2} implies:
\begin{align}
    \frac{r^2}{\Omega^2}\nablau r\Omega^2\overone{\alpha}=-2r\fancydstar_2r^2\Omega\overone{\beta}+6M\Omega\overone{\hat{\chi}}.
\end{align}
Using \bref{Bianchi +1a} and \bref{D3Chihat} we obtain
\begin{equation}
\left(\frac{r^2}{\Omega^2}\nablau\right)^2r\Omega^2\overone\alpha=-2r^2\slashed{\mathcal{D}}^*_2\slashed{\mathcal{D}}^*_1\left(-r^3\overone\rho,r^3\overone\sigma\right)+6M(r\Omega\overone{\hat{\chi}}-r\Omega\overone{\underline{\hat\chi}}).
\end{equation}
We now apply $\frac{r^2}{\Omega^2}\nablau$ to both sides and use equations \bref{Bianchi 0}, \bref{Bianchi 0*}, \bref{D3Chihat} and the second equation of \bref{D4Chihat} to deduce

\begin{align}
\left(\frac{r^2}{\Omega^2}\nablau\right)^3r\Omega^2\overone\alpha=-2r^2\slashed{\mathcal{D}}_2^*\slashed{\mathcal{D}}^*_1\overline{\slashed{\mathcal{D}}}_1\left(\frac{r^4\overone{\underline\beta}}{\Omega}\right)+6M\left[r^2\slashed{\mathcal{D}}_2^*\slashed{\mathcal{D}}^*_1\left(\frac{r^2}{\Omega^2}\overone{\underline{f}},0\right)+ r^3\overone{\underline\alpha}-(3\Omega^2-1)\frac{r^2\overone{\underline{\hat\chi}}}{\Omega}-2r\slashed{\mathcal{D}}_2^*r^2\overone\eta\right].
\end{align}
Now we apply $\nablau$ once again and use \bref{D3TrChiBar}, the second equation of \bref{D4Chihat} and the second equations of \bref{D4etabar}:
\begin{align}
\begin{split}
\Omega\slashed{\nabla}_3 \left(\frac{r^2}{\Omega^2}\nablau\right)^3r\Omega^2\overone\alpha&=-2r^3\slashed{\mathcal{D}}^*_2\slashed{\mathcal{D}}^*_1\overline{\slashed{\mathcal{D}}}_1(-r\slashed{\mathcal{D}}_2r^3\overone{\underline\alpha})+6M\Bigg[r^2\slashed{\mathcal{D}}^*_2\slashed{\mathcal{D}}^*_1\left(-4r\overone{\underline\omega},0\right)-(3\Omega^2-1)r^2\overone{\underline\alpha}\\&\;\;+\frac{r^2}{\Omega^2}\Omega\slashed{\nabla}_3r\Omega^2\overone\alpha+6M\frac{r^2}{\Omega^2} \frac{r^2\overone{\underline{\hat\chi}}}{\Omega}-(3\Omega^2-1)(-r^2\overone{\underline\alpha})-2r\slashed{\mathcal{D}}^*_2(2r\slashed{\nabla}r\overone{\underline\omega}-r^2\Omega\overone{\underline\beta})\Bigg]
\\&=2r^4\slashed{\mathcal{D}}^*_2\slashed{\mathcal{D}}^*_1\overline{\slashed{\mathcal{D}}}_1\slashed{\mathcal{D}}_2 r^3\overone{\underline\alpha}+6M\frac{r^2}{\Omega^2}\left[\Omega\slashed{\nabla}_4+\Omega\slashed{\nabla}_3\right]r\Omega^2\overone{\underline\alpha}.
\end{split}
\end{align}
Finally, we have
\begin{align}\label{eq:TS1}
\frac{\Omega^2}{r^2}\Omega\slashed{\nabla}_3 \left(\frac{r^2}{\Omega^2}\nablau\right)^3r\Omega^2\overone\alpha=2r^4\slashed{\mathcal{D}}^*_2\slashed{\mathcal{D}}^*_1\overline{\slashed{\mathcal{D}}}_1\slashed{\mathcal{D}}_2 r\Omega^2\overone{\underline\alpha}+6M\left[\Omega\slashed{\nabla}_4+\Omega\slashed{\nabla}_3\right]r\Omega^2\overone{\underline\alpha}.
\end{align}
\indent An entirely analogous procedure starting from the equation for $\overone{\underline\alpha}$ in \bref{Bianchi +2} leads to 
\begin{align}\label{eq:TS2}
\frac{\Omega^2}{r^2}\Omega\slashed{\nabla}_4 \left(\frac{r^2}{\Omega^2}\nablav\right)^3r\Omega^2\overone{\underline\alpha}=2r^4\slashed{\mathcal{D}}^*_2\slashed{\mathcal{D}}^*_1\overline{\slashed{\mathcal{D}}}_1\slashed{\mathcal{D}}_2 r\Omega^2\overone{\alpha}-6M\left[\Omega\slashed{\nabla}_4+\Omega\slashed{\nabla}_3\right]r\Omega^2\overone{\alpha}.
\end{align}

Equation \bref{eq:TS2} is the constraint \bref{eq:228intro1}.
\subsection{Physical-space Chandrasekhar transformations and the Regge--Wheeler equation}\label{Chandra}
The Regge--Wheeler equation for a symmetric traceless $S^2_{u,v}$ 2-tensor $\Psi$ is given by
\begin{align}\label{RW}
    \nablav\nablau\Psi-\Omega^2\slashed{\Delta}\Psi+\frac{\Omega^2}{r^2}(3\Omega^2+1)\Psi=0.
\end{align}
\indent Suppose the field $\alpha$ satisfies the +2 Teukolsky equation. Define the following hierarchy of fields
\begin{align}\label{hier+}
\begin{split}
    &r^3\Omega \psi:=\frac{r^2}{\Omega^2}\nablau r\Omega^2\alpha,\\
    &\Psi:=\frac{r^2}{\Omega^2}\nablau r^3\Omega \psi=\left(\frac{r^2}{\Omega^2}\nablau\right)^2 r\Omega^2\alpha.
\end{split}
\end{align}
We have the following commutation relation:
\begin{align}\label{commutation relation}
\begin{split}
\Bigg[&-\frac{r^2}{\Omega^2}\Omega\slashed\nabla_3\Omega\slashed\nabla_4-(k+xk^')r\Omega\slashed\nabla_3+a\Omega^2+bx+c\Bigg]\frac{r^2}{\Omega^2}\Omega\slashed\nabla_3
\\&=\frac{r^2}{\Omega^2}\Omega\slashed\nabla_3\left[-\frac{r^2}{\Omega^2}\Omega\slashed\nabla_3\Omega\slashed\nabla_4-\left(k+2+x(k^'+1)\right)r\Omega\slashed\nabla_3+(a+2k+2k^')\Omega^2+bx+c-k-2k^'\right]\\&+2M(a+2k+2k^'),
\end{split}
\end{align}
where $a,b,c,k,k'$ are integers. We commute the operator $\left(\frac{r^2}{\Omega^2}\Omega\slashed\nabla_3\right)^2$ past the Regge--Wheeler operator:
\begin{align*}
\begin{split}
\left[-\frac{r^2}{\Omega^2}\Omega\slashed\nabla_3\Omega\slashed\nabla_4+r^2\slashed\Delta-3\Omega^2-1\right]\left(\frac{r^2}{\Omega^2}\Omega\slashed\nabla_3\right)^2=\Bigg\{\frac{r^2}{\Omega^2}\Omega\slashed\nabla_3&\Bigg[-\frac{r^2}{\Omega^2}\Omega\slashed\nabla_3\Omega\slashed\nabla_4+r^2\slashed\Delta-(2+x)r\Omega\slashed\nabla_3\\
&-3\Omega^2-1\Bigg]-6M\Bigg\}\frac{r^2}{\Omega^2}\Omega\slashed\nabla_3
\end{split}
\end{align*}
\begin{align}
\begin{split}
&=\frac{r^2}{\Omega^2}\Omega\slashed\nabla_3\Bigg\{\left[-\frac{r^2}{\Omega^2}\Omega\slashed\nabla_3\Omega\slashed\nabla_4+r^2\slashed\Delta-(2+x)r\Omega\slashed\nabla_3-3\Omega^2-1\right]\frac{r^2}{\Omega^2}\Omega\slashed\nabla_3-6M\Bigg\}
\\&=\left(\frac{r^2}{\Omega^2}\Omega\slashed\nabla_3\right)^2\Bigg\{\left[-\frac{r^2}{\Omega^2}\Omega\slashed\nabla_3\Omega\slashed\nabla_4+r^2\slashed\Delta-2(2+x)r\Omega\slashed\nabla_3+3\Omega^2-5\right]-6M+6M\Bigg\}\label{commutator}
\end{split}
\end{align}
This shows that if $\alpha$ satisfies the +2 Teukolsky equation then $\Psi$ satisfies the Regge--Wheeler equation (\ref{RW}).\\ \indent Analogously, with the following hierarchy of fields
\begin{align}\label{hier-}
\begin{split}
    &r^3\Omega \underline\psi:=\frac{r^2}{\Omega^2}\nablav r\Omega^2\underline\alpha,\\
    &\underline\Psi:=\frac{r^2}{\Omega^2}\nablav r^3\Omega \underline\psi=\left(\frac{r^2}{\Omega^2}\nablav\right)^2 r\Omega^2\underline\alpha,
\end{split}
\end{align}
we have
\begin{align}\label{commutation relation 2}
\begin{split}
\Bigg[&-\frac{r^2}{\Omega^2}\Omega\slashed\nabla_3\Omega\slashed\nabla_4+(l+xl^')r\Omega\slashed\nabla_4+a\Omega^2+bx+c\Bigg]\frac{r^2}{\Omega^2}\Omega\slashed\nabla_4
\\&=\frac{r^2}{\Omega^2}\Omega\slashed\nabla_4\left[-\frac{r^2}{\Omega^2}\Omega\slashed\nabla_3\Omega\slashed\nabla_4+\left(l+2+x(l^'+1)\right)r\Omega\slashed\nabla_4+(a+2l+2l^')\Omega^2+bx+c-l-2l^'\right]\\&+6M(a+2l+2l^'),
\end{split}
\end{align}
where $a,b,c,l,l'$ are integers. Thus, if $\underline\alpha$ satisfies the $-2$ Teukolsky equation then $\underline\Psi$ also satisfies the Regge--Wheeler equation.\\
\indent We state a standard well-posedness result for (\ref{RW}):
\begin{proposition}\label{RWwpCauchy}
For any pair $(\uppsi,\uppsi')$ of smooth symmetric traceless $S^2_r$ 2-tensor fields on $\Sigma^*$, there exists a unique smooth symmetric traceless $S^2_{u,v}$ 2-tensor field $\Psi$ which solves \cref{RW} in $ J^+(\Sigma^*)$ such that $\Psi|_{\Sigma^*}=\uppsi$ and $\slashednabla_{n_{\Sigma^*}} \Psi|_{\Sigma^*}=\uppsi'$. The same applies when data are posed on $\Sigma$ or $\overline{\Sigma}$.
\end{proposition}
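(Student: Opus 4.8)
The plan is to recognise \eqref{RW} for what it is: a linear, normally hyperbolic (wave-type) equation for a section of the bundle $\widehat{T}_{2,0}$ of $S^2_{u,v}$-tangent symmetric traceless $2$-tensors over the Schwarzschild exterior, and to reduce the statement to the standard Cauchy theory for such equations on globally hyperbolic Lorentzian manifolds. First I would make the hyperbolic structure explicit. On the $(1,1)$-components one has $\nablau\Psi^A{}_B=\partial_u\Psi^A{}_B$ and $\nablav\Psi^A{}_B=\partial_v\Psi^A{}_B$, so $\nablav\nablau\Psi$ has components $\partial_u\partial_v\Psi^A{}_B$, while $\slashed{\Delta}$ acts by a second-order elliptic operator with scalar principal symbol $\slashed{g}^{CD}\xi_C\xi_D$; for the metric $ds^2=-4\Omega^2\,du\,dv+r^2\gamma_{AB}d\theta^Ad\theta^B$ the operator $\partial_u\partial_v-\Omega^2\slashed{\Delta}$ coincides with $-\Omega^2\Box_g$ modulo terms of order zero, so \eqref{RW} is of the form $\Box_g\Psi+(\text{order }0)=0$ and is normally hyperbolic on $\widehat{T}_{2,0}$ (the traceless subbundle being preserved). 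Alternatively, via the Hodge decomposition $\Psi=r^2\fancydstar_2\fancydstar_1(f,g)$ with $f,g$ supported on $\ell\geq 2$, one may pass, as in Section~4.4 of \cite{DHR16}, to a decoupled pair of scalar wave equations for $f$ and $g$, reducing everything to the scalar case; either route is acceptable.

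Next I would invoke global hyperbolicity. The region $J^+(\Sigma^*)\cap\mathscr{M}$ has $\Sigma^*$ as a Cauchy surface: $\Sigma^*$ is spacelike, crosses $\mathscr{H}^+$ at $v=0$ and asymptotes to $i^0$, so every past-inextendible causal curve in $J^+(\Sigma^*)\cap\mathscr{M}$ meets $\Sigma^*$. To fit the standard theorems, which are phrased on manifolds without boundary, I would extend the coefficients $\Omega^2=1-\tfrac{2M}{r}$ and $r$ smoothly a small distance across $\mathscr{H}^+$ --- most cleanly, in Kruskal coordinates, past $\{U=0\}$, where $r$ remains positive --- obtaining a slightly larger globally hyperbolic spacetime $\widehat{\mathscr{M}}$ with Cauchy surface $\widehat{\Sigma}\supset\Sigma^*$, and extend $(\uppsi,\uppsi')$ to $\widehat{\Sigma}$ smoothly and arbitrarily. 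The general existence--uniqueness--regularity theorem for wave equations on sections of vector bundles over globally hyperbolic manifolds then yields a unique smooth solution on $\widehat{\mathscr{M}}$, and finite speed of propagation (the domain-of-dependence property) guarantees that its restriction to $J^+(\Sigma^*)\cap\mathscr{M}$ is independent of the chosen extension and solves \eqref{RW} there with the prescribed data; this restriction is the claimed $\Psi$, and it is smooth up to $\mathscr{H}^+$ since $\mathscr{H}^+$ lies in the interior of $\widehat{\mathscr{M}}$. Uniqueness within $J^+(\Sigma^*)$ follows the same way, or directly by applying a $T$- or redshift-type multiplier energy estimate to the difference of two solutions and running Gr\"onwall in $t^*$.

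The identical argument handles data on $\Sigma$. For data on $\overline{\Sigma}$ the one genuine point of care --- exactly as for the Teukolsky equations in \Cref{regular} --- is that the Eddington--Finkelstein form \eqref{RW} degenerates at the bifurcation sphere $\mathcal{B}$, since $\nablav\nablau=\partial_u\partial_v$ degenerates there relative to the regular Kruskal derivatives $\partial_U\partial_V$. I would therefore first rewrite \eqref{RW} in Kruskal coordinates $U=-e^{-u/2M}$, $V=e^{v/2M}$, in which it becomes a non-degenerate normally hyperbolic equation throughout a neighbourhood of $\mathcal{B}$, and then run the same global-hyperbolicity and extension argument with $\overline{\Sigma}=\{U+V=0\}$ as Cauchy surface (extending now into the second asymptotically flat region, where again nothing degenerates). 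The main obstacle is thus essentially bookkeeping rather than analysis: verifying the normally hyperbolic structure and its non-degenerate Kruskal form near $\mathcal{B}$, and checking that the relevant region is globally hyperbolic with the stated Cauchy surface; once these are in hand, existence, uniqueness and smoothness are delivered by the standard linear theory.
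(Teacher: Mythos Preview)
The paper does not actually prove this proposition: it introduces it with ``We state a standard well-posedness result for \eqref{RW}'' and offers no argument beyond the remark immediately following that in Kruskal coordinates the equation reads $\slashednabla_U\slashednabla_V\Psi-\mathring{\slashed{\Delta}}\Psi+\frac{3\Omega^2+1}{r^2}\Psi=0$ and hence is non-degenerate near $\mathcal{B}$. Your proposal is therefore not in competition with the paper's proof but rather supplies precisely the standard linear-hyperbolic argument the paper elects to omit; the one point the paper does make explicit --- the Kruskal rewriting to handle $\overline{\Sigma}$ --- is exactly the step you single out as the ``one genuine point of care,'' so your sketch is fully aligned with the paper's intent.
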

In contrast to the Teukolsky equations \bref{T+2}, \bref{T-2}, the Regge--Wheeler equation \bref{RW} does not suffer from additional regularity issues near $\mathcal{B}$, as can be seen by rewriting \cref{RW} in Kruskal coordinates:
\begin{align}
    \slashednabla_U\slashednabla_V\Psi-\mathring{\slashed{\Delta}}+\frac{3\Omega^2+1}{r^2}\Psi=0.
\end{align}
If $\Psi$ is related to a field $\alpha$ that satisfies \bref{T+2}, then it is related to $\widetilde{\alpha}$ by
\begin{align}
    \Psi=\left(\frac{r^2}{\Omega^2}\nablau\right)^2r\Omega^2\alpha=\left(2Mr^2f(r)\slashednabla_U\right)^2r\tilde{\alpha}.
\end{align}
\begin{proposition}\label{RWwpSigmabar}
\Cref{RWwpCauchy} is valid replacing $\Sigma^*$ with $\overline{\Sigma}$ everywhere.
\end{proposition}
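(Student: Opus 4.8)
The plan is to run the proof of \Cref{RWwpCauchy} essentially verbatim, the only new feature being the bifurcation sphere $\mathcal B$, where the Eddington--Finkelstein coordinates $(u,v)$ degenerate. So the first thing I would do is pass to the Kruskal chart $(U,V,\theta^A)$ and record that \bref{RW} loses this degeneracy there. From $U=-e^{-u/2M}$, $V=e^{v/2M}$ one has $\nablau=-\tfrac{U}{2M}\slashednabla_U$, $\nablav=\tfrac{V}{2M}\slashednabla_V$, hence $\nablav\nablau=-\tfrac{UV}{4M^2}\slashednabla_V\slashednabla_U$, and $-UV=\tfrac{r}{2M}e^{r/2M}\Omega^2$, so after dividing out the overall factor $\tfrac{r}{8M^3}e^{r/2M}\Omega^2$ equation \bref{RW} reads
\begin{align*}
 \slashednabla_U\slashednabla_V\Psi-\frac{8M^3}{r^3e^{r/2M}}\mathring{\slashed{\Delta}}\Psi+\frac{8M^3(3\Omega^2+1)}{r^3e^{r/2M}}\Psi=0,
\end{align*}
an equation all of whose coefficients are smooth on $\mathscr M$, the leading one smooth and strictly positive (equal to $e^{-1}$ at $r=2M$), in particular across $\mathcal B=\{U=V=0\}$. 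In this chart $\overline\Sigma$ is the smooth spacelike hypersurface $\{U+V=0\}$, which by hypothesis is a Cauchy surface for $\mathscr M$, and the recast equation is non-characteristic along it everywhere, including at $\mathcal B$.

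Next I would observe that, viewed as an operator on smooth sections of the vector bundle of symmetric traceless $S^2_{u,v}$ 2-tensors over $\mathscr M$, the recast equation is normally hyperbolic: its principal part equals, componentwise, that of the d'Alembertian $\Box_g$ of the Kruskal-regular metric \bref{metric Kruskal} times the identity, the projected-connection terms hidden inside $\slashednabla_U,\slashednabla_V,\mathring{\slashed{\Delta}}$ and the zeroth-order potential being of strictly lower order. Given this, the statement follows from the classical existence and uniqueness theory for normally hyperbolic operators on globally hyperbolic Lorentzian manifolds. Concretely, I would extend the metric and the operator smoothly across $\mathscr H^\pm$ to an open globally hyperbolic Lorentzian manifold $\widehat{\mathscr M}\supset\mathscr M$ carrying a smooth spacelike Cauchy surface $\widehat\Sigma\supset\overline\Sigma$ (a collar of $\mathscr M$ inside the maximally extended Schwarzschild spacetime with $\widehat\Sigma=\{U+V=0\}$); extend the prescribed data $(\uppsi,\uppsi')$ arbitrarily to smooth sections on $\widehat\Sigma$; apply the standard theorem (Leray's theorem, or its vector-bundle version in the references on wave equations on Lorentzian manifolds) to get a unique smooth $\widehat\Psi$ on $\widehat{\mathscr M}$ with $(\widehat\Psi,\slashednabla_{n_{\widehat\Sigma}}\widehat\Psi)|_{\widehat\Sigma}=(\uppsi,\uppsi')$; and set $\Psi:=\widehat\Psi|_{J^+(\overline\Sigma)}$. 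Finite speed of propagation together with the fact that $\overline\Sigma$ is a Cauchy surface for $\mathscr M$ then shows that $\Psi$ is independent of all the choices, solves \bref{RW} on $J^+(\overline\Sigma)$ with the prescribed Cauchy data, and is the unique smooth solution with that property. Alternatively one need not extend the spacetime at all: away from $\mathcal B$ one invokes \Cref{RWwpCauchy} (which already records the $\Sigma$ case), near $\mathcal B$ one solves the regular Cauchy problem in the Kruskal chart on a small neighbourhood of $\mathcal B$, and one glues the two by domain-of-dependence uniqueness.

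I do not anticipate a genuine obstacle --- the proposition is of a standard type --- and the one point that really needs checking is exactly the one isolated in the discussion preceding the statement: that the Kruskal-frame form of \bref{RW} has smooth, non-degenerate coefficients at $\mathcal B$ and that $\overline\Sigma$ is an honest spacelike Cauchy surface through $\mathcal B$. This is precisely where \bref{RW} is better behaved than the Teukolsky equations \bref{T+2}, \bref{T-2}, whose principal coefficients genuinely vanish at $\mathcal B$ and which therefore force one to pass first to the rescaled unknowns $\widetilde\alpha,\widetilde{\underline\alpha}$ governed by \bref{T+2B}, \bref{T-2B}, as in \Cref{WP+2Sigmabar}, \Cref{WP-2Sigmabar}.
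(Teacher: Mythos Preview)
Your proposal is correct and takes essentially the same approach as the paper: the paper does not give an explicit proof of \Cref{RWwpSigmabar} but merely notes (in the discussion immediately preceding the statement) that rewriting \bref{RW} in Kruskal coordinates yields an equation with smooth, non-degenerate coefficients near $\mathcal B$, at which point the result is regarded as standard. Your writeup simply fleshes this out, and the Kruskal-frame equation you record agrees (up to a harmless typo in the paper's displayed version) with what one actually obtains.
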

For backwards scattering we will need the following well-posedness statement:
\begin{proposition}\label{RWwpBackwards}
Let $u_+<\infty, v_+<v_*<\infty$. Let $\widetilde{\Sigma}$ be a spacelike hypersurface connecting $\mathscr{H}^+$ at $v=v_+$ to $\mathscr{I}^+$ at $u=u_+$ and let $\underline{\mathscr{C}}=\underline{\mathscr{C}}_{v_*}\cap J^+(\widetilde{\Sigma})\cap\{t\geq0\}$. Prescribe a pair of smooth symmetric traceless $S^2_{u,v}$ 2-tensor fields:
\begin{itemize}
\item $\Psi_{{\mathscr{H}^+}}$ on ${\overline{\mathscr{H}^+}}\cap\{v<v_+\}$ vanishing in a neighborhood of $\widetilde{\Sigma}$,
\item $\Psi_{0,in}$ on $\underline{\mathscr{C}}$ vanishing in a neighborhood of $\widetilde{\Sigma}$.
\end{itemize}
Then there exists a unique smooth symmetric traceless $S^2_{u,v}$ 2-tensor $\Psi$ on $D^-\left(\overline{\mathscr{H}^+}\cup\widetilde{\Sigma}\cup\underline{\mathscr{C}}\right)\cap J^+(\overline{\Sigma})$ such that $\Psi|_{\overline{\mathscr{H}^+}}=\Psi_{{\mathscr{H}^+}}$, $\Psi|_{\underline{\mathscr{C}}}=\Psi_{0,in}$ and $\left(\Psi|_{\widetilde{\Sigma}},\slashednabla_{n_{\widetilde{\Sigma}}}\Psi|_{\widetilde{\Sigma}}\right)=(0,0)$.
\end{proposition}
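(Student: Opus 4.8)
The equation \bref{RW} is a linear, second-order hyperbolic equation for a symmetric traceless $S^2_{u,v}$ 2-tensor, with coefficients smooth on the closure of the region in question: a compact ``lens'' bounded by the null hypersurfaces $\overline{\mathscr{H}^+}$ up to $v=v_+$ and $\underline{\mathscr{C}}$, the spacelike segment of $\widetilde{\Sigma}$ joining them, and the spacelike segment of $\overline{\Sigma}$ closing it off at $\mathcal{B}$. Crucially this region stays away from $\mathscr{I}^+$ and, as already observed, \bref{RW} extends regularly to $\mathcal{B}$ in Kruskal coordinates (see the form of \bref{RW} displayed just before \Cref{RWwpSigmabar}), so there is no analogue of the weighting issue that forced the passage to $\widetilde{\alpha},\widetilde{\underline\alpha}$ for the Teukolsky equations. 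The plan is the standard one for such hybrid characteristic/spacelike problems: reduce \bref{RW} to scalar $(1{+}1)$-dimensional wave equations, solve the resulting Goursat-type problem by a Picard iteration, and upgrade to uniqueness and smoothness via energy estimates and propagation of regularity.

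First I would decompose $\Psi = r^2\fancydstar_2\fancydstar_1(f,g)$ using the representation of \Cref{subsubsection 2.1.2 Elliptic estimates on S2} and expand $f,g$ in spherical harmonics with $\ell\ge 2$. Since the operators $\fancyd_1,\fancyd_2,\fancydstar_1,\fancydstar_2$ and $\mathring{\slashed{\Delta}}$ commute with the Regge--Wheeler operator (as in \cite{DHR16}), on each fixed angular frequency \bref{RW} becomes a scalar equation $\partial_u\partial_v\phi_\ell + W_\ell\,\phi_\ell = 0$ with $W_\ell=\tfrac{\Omega^2}{r^2}(\lambda_\ell+3\Omega^2+1)$ smooth and bounded on the compact region; passing to Kruskal coordinates as above makes this uniform up to $\mathcal{B}$. (One could equally work with the tensorial equation directly and invoke the general theory of linear tensorial wave equations; the mode reduction is convenient because it is the language used throughout the paper and it makes the $(1{+}1)$ iteration explicit.)

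Next I would exploit the vanishing hypotheses near $\widetilde{\Sigma}$. Because $(\Psi|_{\widetilde{\Sigma}},\slashednabla_{n_{\widetilde{\Sigma}}}\Psi|_{\widetilde{\Sigma}})=(0,0)$ and $\widetilde{\Sigma}$ is spacelike, uniqueness for the Cauchy problem (\Cref{RWwpSigmabar}, applied in local charts) together with finite speed of propagation forces $\Psi\equiv 0$ in a neighbourhood of $\widetilde{\Sigma}$ inside the region; this matches $\Psi_{\mathscr{H}^+}$ and $\Psi_{0,in}$ vanishing near $\widetilde{\Sigma}$, so no corner incompatibility arises where $\widetilde{\Sigma}$ meets $\overline{\mathscr{H}^+}$ or $\underline{\mathscr{C}}$. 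What remains is then a genuine characteristic initial-value (Goursat) problem for each $\phi_\ell$, with data on the two null segments $\overline{\mathscr{H}^+}\cap\{v\le v_+\}$ and $\underline{\mathscr{C}}$; their would-be characteristic corner lies inside the region already known to be trivial, so one may extend the null data by $0$ up to that corner and solve the resulting Goursat problem by integrating $\partial_u\partial_v\phi_\ell=-W_\ell\phi_\ell$ along the two null directions and iterating. Boundedness of $W_\ell$ on the compact region gives convergence of the iteration, hence existence and continuous dependence.

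Uniqueness and smoothness I would obtain by the energy method, exactly as in the proof of Theorem~1: contracting \bref{RW} with the Regge--Wheeler current of $T=\partial_t$ (nonnegative flux through spacelike and null hypersurfaces) and integrating over truncations of the region, the flux through $\widetilde{\Sigma}$ is controlled by its Cauchy data and the null fluxes by the characteristic data, so when all of these vanish a Gr\"onwall argument gives $\Psi\equiv 0$; smoothness up to the boundary, including at $\mathcal{B}$, follows by commuting \bref{RW} with $\partial_t$, the angular operators and a transversal derivative and propagating the resulting estimates in Kruskal-regular quantities. Resumming the modes -- the series converging in $C^\infty$ using the elliptic estimates of \Cref{subsubsection 2.1.2 Elliptic estimates on S2} -- yields the claimed tensorial $\Psi$. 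The only genuinely non-mechanical part of this is the geometric bookkeeping: verifying that $D^-\!\big(\overline{\mathscr{H}^+}\cup\widetilde{\Sigma}\cup\underline{\mathscr{C}}\big)\cap J^+(\overline{\Sigma})$ is exactly the quadrilateral described (so that the three prescribed hypersurfaces catch every future-inextendible causal curve from each interior point), that $\widetilde{\Sigma}$ can indeed be chosen spacelike joining $\mathscr{H}^+$ at $v_+$ to $\mathscr{I}^+$ at $u_+$, and that the domain-of-dependence and iteration arguments apply uniformly up to the corner at $\mathcal{B}$; once the geometry is pinned down the rest is textbook linear hyperbolic theory.
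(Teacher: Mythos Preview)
The paper does not supply a proof of this proposition: it is stated alongside \Cref{RWwpCauchy}, \Cref{RWwpSigmabar} and \Cref{RWwp local statement near B} as one of the ``standard'' well-posedness results for linear second-order hyperbolic equations, and no argument is given. Your sketch is therefore not competing with any argument in the paper; rather, you are filling in what the author takes for granted.

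Your outline is essentially correct and follows a textbook route: reduce via the trivial Cauchy data on $\widetilde{\Sigma}$ to a Goursat problem on the two null pieces, solve by iteration, and close with energy estimates. Two minor comments. First, the full region $D^-(\overline{\mathscr{H}^+}\cup\widetilde{\Sigma}\cup\underline{\mathscr{C}})\cap J^+(\overline{\Sigma})$ is not compact (it reaches $\mathscr{I}^+$ along $\widetilde{\Sigma}$), but you correctly observe that $\Psi\equiv 0$ in the subregion determined by the Cauchy data on $\widetilde{\Sigma}$ alone, so the nontrivial problem lives in the compact lens you describe. Second, the mode decomposition is legitimate but introduces an unnecessary resummation step: the Picard iteration bound for $\phi_\ell$ is exponential in $\sup W_\ell\sim\ell^2$, so to resum in $C^\infty$ you need the polynomial-in-$\ell$ bounds coming from energy estimates anyway. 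It is cleaner (and closer in spirit to how the paper treats such statements) simply to invoke the standard theory for tensorial linear wave equations on globally hyperbolic domains with mixed characteristic/spacelike data, noting as you do that \bref{RW} is regular in Kruskal coordinates up to $\mathcal{B}$.
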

We will also need
\begin{proposition}\label{RWwp local statement near B}
Let $(\uppsi,\uppsi')$ be smooth symmetric traceless $S^2_{u,v}$ 2-tensor fields on $\Sigma^*$, $\uppsi_{\mathscr{H}^+}$ be a smooth symmetric traceless $S^2_{\infty,v}$ 2-tensor field on $\overline{\mathscr{H}^+}\cap\{t^*\leq0\}$. Then there exists a unique smooth symmetric traceless $S^2_{u,v}$ 2-tensor field $\Psi$ on $J^-(\Sigma^*)$ such that $\Psi|_{\overline{\mathscr{H}^+}\cap\{t^*\leq0\}}=\uppsi_{\mathscr{H}^+}, \left(\Psi|_{\Sigma^*},\slashednabla_{n_{\Sigma^*}}\Psi|_{\Sigma^*}\right)=(\uppsi,\uppsi')$.
\end{proposition}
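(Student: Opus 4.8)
The plan is to recognise \Cref{RWwp local statement near B} as a backward mixed Cauchy--characteristic problem for a regular linear wave equation and to assemble it from the standard hyperbolic theory already underlying the companion statements \Cref{RWwpCauchy,RWwpSigmabar,RWwpBackwards}, the only genuinely new ingredient being the behaviour at the corner $\mathcal{B}$.

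The first step is a geometric reformulation. Since $t^*$ is a time function on all of $\mathscr{M}$, it strictly increases along every future-directed causal curve; hence a future-inextendible causal curve issuing from a point of $\{t^*\le 0\}$ either reaches $\Sigma^*=\{t^*=0\}$ while still in the exterior, or exits $\mathscr{M}$ through $\overline{\mathscr{H}^+}$ at a point with $t^*\le 0$ (a curve asymptoting to $i^+$ or $\mathscr{I}^+$ has $t^*\to+\infty$, so it must cross $\Sigma^*$ first). Consequently
$$J^-(\Sigma^*)\cap\mathscr{M}=\{t^*\le 0\}=D^-\!\big(\Sigma^*\cup(\overline{\mathscr{H}^+}\cap\{t^*\le 0\})\big),$$
so the prescribed Cauchy data $(\uppsi,\uppsi')$ on $\Sigma^*$ together with the characteristic data $\uppsi_{\mathscr{H}^+}$ on the horizon segment are exactly what is needed, and in particular no data on $\mathscr{H}^-$ or $\mathscr{I}^-$ is required. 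I would also record that in Kruskal coordinates \bref{RW} is a linear second-order tensorial hyperbolic equation with smooth coefficients, in particular regular across $\mathcal{B}$, and that after decomposing into potentials via $\Psi=r^2\fancydstar_2\fancydstar_1(f,g)$ (using the identities of \Cref{subsubsection 2.1.2 Elliptic estimates on S2}) it passes to decoupled scalar equations of Regge--Wheeler type; so it suffices to treat the scalar case.

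Uniqueness is then immediate from finite speed of propagation for linear wave equations together with the identity $\{t^*\le 0\}=D^-(\Sigma^*\cup\overline{\mathscr{H}^+}\cap\{t^*\le 0\})$ (equivalently, from conservation of the $T$-energy of \bref{RW}, whose flux through $\{t^*=s\}$ then equals its flux through the data hypersurface and hence vanishes for zero data, the horizon contribution vanishing when $\Psi\equiv 0$ there). For existence I would exhaust $\{t^*\le 0\}$ by the regions $\{s\le t^*\le 0\}$, $s\to-\infty$, and propagate backward from $\Sigma^*$ with $\overline{\mathscr{H}^+}$ acting as a characteristic boundary on which $\uppsi_{\mathscr{H}^+}$ is fed in. For $t^*(\mathcal{B})<s\le 0$ this is a standard backward mixed Cauchy--characteristic problem on a globally hyperbolic region whose closure stays away from $\mathcal{B}$, solvable by exactly the linear hyperbolic theory behind \Cref{RWwpBackwards,RWwpCauchy}; once the evolution reaches the slice $\{t^*=t^*(\mathcal{B})\}$, which passes through $\mathcal{B}$ and now carries smooth Cauchy data determined by the previous step, one continues by solving the backward Cauchy problem for this slice in the Kruskal chart, where \bref{RW} and the region are smooth across $\mathcal{B}$ exactly as for $\overline\Sigma$ in \Cref{RWwpSigmabar}. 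Local uniqueness ensures that these pieces agree on overlaps and patch to a single smooth solution on $J^-(\Sigma^*)$, and the energy estimate accompanying the uniqueness argument supplies the uniform control needed to pass to the $s\to-\infty$ limit.

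The main obstacle --- really the only point that is not routine --- is the corner $\mathcal{B}$: one must be sure that crossing it backward does not secretly require data on $\mathscr{H}^-$, which is settled by the domain-of-dependence identity above (values on $\mathscr{H}^-$ near $\mathcal{B}$ lie in $D^-(\Sigma^*\cup\overline{\mathscr{H}^+}\cap\{t^*\le 0\})$ and are therefore produced by, not prescribed in, the evolution), and one must verify that the solution is genuinely $C^\infty$ up to and across $\mathcal{B}$, which follows by working in the Kruskal chart where the problem becomes a smooth hyperbolic problem on a manifold-with-corners and invoking the corresponding smooth well-posedness precisely as in \Cref{RWwpSigmabar}.
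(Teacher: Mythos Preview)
The paper states this proposition without proof, treating it (like the neighbouring \Cref{RWwpCauchy,RWwpSigmabar,RWwpBackwards}) as a standard well-posedness result for a regular linear hyperbolic equation. Your sketch is precisely the argument one would supply: the domain-of-dependence identification $\{t^*\le0\}=D^-\big(\Sigma^*\cup(\overline{\mathscr{H}^+}\cap\{t^*\le0\})\big)$, uniqueness by energy, existence by backward evolution with characteristic boundary data on $\mathscr{H}^+$, and the recognition that the corner $\mathcal{B}$ is handled by passing to the Kruskal chart where \bref{RW} is smooth. This is correct and matches the paper's implicit reasoning.

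One point to tighten: your transition step at ``the slice $\{t^*=t^*(\mathcal{B})\}$'' presumes that $t^*$ is finite at $\mathcal{B}$. A time function that is asymptotic to $t$ near $i^0$ and regular across $\mathscr{H}^+$ must behave like $v$ along $\mathscr{H}^+$ and so tends to $-\infty$ at $\mathcal{B}$; the $t^*$-level sets therefore never reach $\mathcal{B}$ or $\mathscr{H}^-$, and the paper's phrase about $t^*$ extending smoothly to all of $\mathscr{M}$ should be read as regularity on $\mathscr{H}^+$ rather than at $\mathcal{B}$. The fix is already implicit in your outline: instead of switching at a (nonexistent) $t^*$-slice through $\mathcal{B}$, evolve the mixed problem backward to any Kruskal-regular Cauchy surface through $\mathcal{B}$ --- for instance $\overline\Sigma$ itself, which lies in the past domain of dependence of your data --- and then continue with the pure backward Cauchy problem from that surface exactly as in \Cref{RWwpSigmabar}. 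This covers $\mathcal{B}$ and $\mathscr{H}^-$ without reference to $t^*$ there.
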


\begin{remark}\label{time inversion of RW}
Unlike the Teukolsky equations \bref{T+2}, \bref{T-2}, the Regge--Wheeler equation \bref{RW} is invariant under time inversion. If $\Psi(u,v)$ satisfies \bref{RW}, then $\invertedpsi(u,v):=\Psi(-v,-u)$ also satisfies \bref{RW}.
\end{remark}
\subsection{Further constraints among $\alpha,\Psi$ and $\underline\alpha,\underline\Psi$}\label{constraint derivation}
We can apply the same ideas as in \Cref{Chandra} to transform solutions of the Regge--Wheeler equation into solutions of the +2 Teukolsky equation. Let $\Psi$ satisfy \Cref{RW}, then using \bref{commutation relation 2} we can show that
\begin{align}\label{alpha to Psi}
    \frac{\Omega^2}{r^2}\nablav\frac{r^2}{\Omega^2}\nablav\Psi
\end{align}
satisfies \Cref{T+2}. \\
\indent Now suppose $\alpha$ satisfies \Cref{T+2} and $\Psi$ is the solution to \Cref{RW} related to $\alpha$ by \Cref{hier+}. We can evaluate the expression \bref{alpha to Psi} using \Cref{T+2}: we apply $\Omega\slashed{\nabla}_4$ and substitute using the $+2$ equation only (we drop the superscript $\overone{}$):
\begin{align*}
\begin{split}
\Omega\slashed{\nabla}_4\Psi&=\Omega\slashed{\nabla}_4\left(\frac{r^2}{\Omega^2}\Omega\slashed{\nabla}_3\right)^2r\Omega^2\alpha
\\&=r(x+2)\Omega\slashed{\nabla}_3\frac{r^2}{\Omega^2}\Omega\slashed{\nabla}_3r\Omega^2\alpha+\frac{r^2}{\Omega^2}\Omega\slashed{\nabla}_4\Omega\slashed{\nabla}_3\frac{r^2}{\Omega^2}\Omega\slashed{\nabla}_3r\Omega^2\alpha
\\&=\frac{3\Omega^2-1}{r}\Psi+\frac{r^2}{\Omega^2}\Omega\slashed{\nabla}_3\Omega\slashed{\nabla}_4\frac{r^2}{\Omega^2}\Omega\slashed{\nabla}_3r\Omega^2\alpha
\\&=\frac{3\Omega^2-1}{r}\Psi+\frac{r^2}{\Omega^2}\Omega\slashed{\nabla}_3\Omega\slashed{\nabla}_4\frac{r^4}{\Omega^4}\frac{\Omega^2}{r^2}\Omega\slashed{\nabla}_3r\Omega^2\alpha
\\&=\frac{3\Omega^2-1}{r}\Psi+\frac{r^2}{\Omega^2}\Omega\slashed{\nabla}_3\left[\left(-\frac{\Omega^4}{r^4}r(x+2)\right)\frac{r^4}{\Omega^4}\Omega\slashed{\nabla}_3r\Omega^2\alpha\right]
\\&\;\;+\frac{r^2}{\Omega^2}\Omega\slashed{\nabla}_3\frac{\Omega^2}{r^2}\Omega\slashed{\nabla}_4\frac{r^4}{\Omega^4}\Omega\slashed{\nabla}_3r\Omega^2\alpha
\\&=\frac{3\Omega^2-1}{r}\Psi-\frac{r^2}{\Omega^2}\Omega\slashed{\nabla}_3\left[\frac{\Omega^2}{r^2}r(x+2)\frac{r^2}{\Omega^2}\Omega\slashed{\nabla}_3r\Omega^2\alpha\right]+\frac{r^2}{\Omega^2}\Omega\slashed{\nabla}_3\mathcal{T}^{+2}_N r\Omega^2\alpha
\\&=-2(3\Omega^2-2)\frac{r^2}{\Omega^2}\Omega\slashed{\nabla}_3r\Omega^2\alpha+\frac{r^2}{\Omega^2}\Omega\slashed{\nabla}_3\mathcal{T}^{+2}_Nr\Omega^2\alpha\\
\end{split}
\end{align*}
\begin{align}\label{eq:d4Psi}
    =-2r^2\slashed{\mathcal{D}}^*_2\slashed{\mathcal{D}}_2 \frac{r^2}{\Omega^2}\Omega\slashed{\nabla}_3r\Omega^2\alpha-6Mr\Omega^2\alpha-(3\Omega^2-1)\frac{r^2}{\Omega^2}\Omega\slashed{\nabla}_3r\Omega^2\alpha,
\end{align}
i.e.,
\begin{align}
\begin{split}
\frac{r^2}{\Omega^2}\Omega\slashed{\nabla}_4\Psi=-2r^2\slashed{\mathcal{D}}_2^*\slashed{\mathcal{D}}_2 \frac{r^4}{\Omega^4}\Omega\slashed{\nabla}_3r\Omega^2\alpha-(3\Omega^2-1)\frac{r^4}{\Omega^4}\Omega\slashed{\nabla}_3r\Omega^2\alpha-6M\frac{r^2}{\Omega^2} r\Omega^2\alpha.
\end{split}
\end{align}
We act on both sides with $\nablav$ again:
\begin{align}
\begin{split}
\Omega\slashed{\nabla}_4 \frac{r^2}{\Omega^2}\Omega\slashed{\nabla}_4 \Psi &=-2r^2\slashed{\mathcal{D}}_2^*\slashed{\mathcal{D}}_2\left[\frac{r^2}{\Omega^2}\left(-2r^2\slashed{\mathcal{D}}_2^*\slashed{\mathcal{D}}_2 r\Omega^2\alpha-\frac{6M}{r}r\Omega^2\alpha\right)\right]\qquad\qquad\qquad\qquad\qquad\\
&\;\;\;\;\;\;-6M\left[\frac{r^2}{\Omega^2}\left(\Omega\slashed{\nabla}_3+\Omega\slashed{\nabla}_4\right)r\Omega^2\alpha+r(x+2)r\Omega^2\alpha\right] \\&\qquad-\left[-2r^2\fancydstar_2\fancyd_2-\frac{6M}{r}\right]\left[\frac{r^2}{\Omega^2}(3\Omega^2-1)r\Omega^2\alpha\right]
\\&=-2r^2\slashed{\mathcal{D}}_2^*\slashed{\mathcal{D}}_2\left[\frac{r^2}{\Omega^2}\left(-2r^2\slashed{\mathcal{D}}_2^*\slashed{\mathcal{D}}_2 r\Omega^2\alpha-2r\Omega^2\alpha\right)\right]-6M\left[\frac{r^2}{\Omega^2}\left(\Omega\slashed{\nabla}_3+\Omega\slashed{\nabla}_4\right)r\Omega^2\alpha\right].
\end{split}
\end{align}
We finally arrive at
\begin{align}\label{eq:d4d4Psi}
\begin{split}
\frac{\Omega^2}{r^2}\Omega\slashed{\nabla}_4 \frac{r^2}{\Omega^2}\Omega\slashed{\nabla}_4 \Psi&=-2r^2\slashed{\mathcal{D}}_2^*\slashed{\mathcal{D}}_2\left[-2r^2\slashed{\mathcal{D}}_2^*\slashed{\mathcal{D}}_2 r\Omega^2\alpha-2r\Omega^2\alpha\right]-6M\left[\left(\Omega\slashed{\nabla}_3+\Omega\slashed{\nabla}_4\right)r\Omega^2\alpha\right].
\end{split}
\end{align}
\indent We record the same for $\underline\Psi$: Using only the Teukolsky equation \bref{T-2} we obtain the analogue of \bref{eq:d4Psi}
\begin{align}\label{eq:d3psibar}
\Omega\slashed{\nabla}_3\underline\Psi=-(3\Omega^2-1)\frac{r^2}{\Omega^2}\Omega\slashed{\nabla}_4r\Omega^2\underline\alpha+6Mr\Omega^2\underline\alpha-2r^2\slashed{\mathcal{D}}^*_2\slashed{\mathcal{D}}_2\frac{r^2}{\Omega^2}\Omega\slashed{\nabla}_4r\Omega^2\underline\alpha,
\end{align}
and the analogue of \bref{eq:d4d4Psi}
\begin{align}\label{eq:d3d3psibar}
\frac{\Omega^2}{r^2}\Omega\slashed{\nabla}_3\frac{r^2}{\Omega^2}\Omega\slashed{\nabla}_3\underline\Psi=+6M\left[\Omega\slashed{\nabla}_4+\Omega\slashed{\nabla}_3\right]r\Omega^2\underline\alpha+\left[-2r^2\slashed{\mathcal{D}}_2^*\slashed{\mathcal{D}}_2-2\right]\left(-2r^2\slashed{\mathcal{D}}_2^*\slashed{\mathcal{D}}_2r\Omega^2\underline\alpha\right).
\end{align}
In the remainder of this paper we focus exclusively on the Teukolsky equations \bref{T+2}, \bref{T-2}, the Teukolsky--Starobinsky identities \bref{eq:227}, \bref{eq:228} and the Regge--Wheeler equation \bref{RW}. In particular, we do not refer to the linearised Einstein equations \bref{start of full system}--\bref{Bianchi 0*} and as such, we drop the superscript $\overone{{}}$.\\
\indent Throughout this paper we will we distinguish between solutions arising from data on $\Sigma^*, \Sigma$ or $\overline{\Sigma}$, and we subsequently construct separate scattering statements for each of these cases, in particular distinguishing between spaces of scattering states on $\mathscr{H}^+_{\geq0}, \mathscr{H}^\pm$ and $\overline{\mathscr{H}^\pm}$. It will be easiest to work with data $\Sigma^*$ first, and then the results for the remaining cases would follow easily.
\section{Main theorems}\label{section 4 main theorems}
We define in this section the spaces of scattering states and provide a precise statement of the results. In what follows, $L^2$ spaces on $\mathscr{I}^\pm, \mathscr{H}^+_{\geq0}, \mathscr{H}^\pm,\overline{\mathscr{H}^\pm}$ are defined with respect to the measures $du\sin\theta d\theta d\phi$, $dv\sin\theta d\theta d\phi$ induced by the Eddington--Finkelstein coordinates.
\begin{notation*}
    For a spherically symmetric submanifold $\mathcal{S}$ of $\overline{\mathscr{M}}$, denote by $\Gamma(\mathcal{S})$ the space of smooth symmetric traceless $S^2_{u,v}$ 2-tensor fields on $\mathcal{S}$. The space of such fields that are compactly supported is denoted by $\Gamma_c (\mathcal{S})$. We use the same notation for smooth fields on $\mathscr{I}^\pm, \mathscr{H}^\pm,\overline{\mathscr{H}^\pm}$.  
\end{notation*}
\noindent In particular, note that $A\in\Gamma(\Sigma^*)$ says that $A$ is smooth up to and including $\Sigma^*\cap\mathscr{H}^+$.
\subsection{Theorem 1: Scattering for the Regge--Wheeler equation}\label{subsection 4.1 Theorem 1}
\begin{defin}\label{RWscatteringsigma}
Let $(\uppsi,\uppsi')\in\Gamma_c (\Sigma^*)\oplus\Gamma_c(\Sigma^*)$ be Cauchy data on $\Sigma^*$ for \bref{RW} of compact support. Define the space $\mathcal{E}^{T}_{\Sigma^*} $ to be the completion of $\Gamma_c (\Sigma^*)$ data under the norm
\begin{align}\label{this22222}
    \|(\uppsi,\uppsi')\|^2_{\mathcal{E}^T_{\Sigma^*}}=\int_{\Sigma^*} dr\sin\theta d\theta d\phi\; (2-\Omega^2)|\slashednabla_{T^*}\Psi|^2+\Omega^2|\slashednabla_{R}\Psi|^2+|\slashednabla\Psi|^2+\frac{3\Omega^2+1}{r^2}|\Psi|^2,
\end{align}
where $\Psi$ is smooth and satisfies $\Psi|_{\Sigma^*}=\uppsi, \slashednabla_{n_{\Sigma^*}}\Psi|_{\Sigma^*}=\uppsi'$. The space $\mathcal{E}^T_{\Sigma}$ is similarly defined with the norm
\begin{align}\label{this2222}
    \|(\uppsi,\uppsi')\|^2_{\mathcal{E}^T_{\Sigma}}=\int_{\Sigma} dr\sin\theta d\theta d\phi \;|\slashednabla_{n_{\Sigma}} \Psi|^2+\Omega^2|\slashednabla_R\Psi|^2+|\slashednabla\Psi|^2+\frac{3\Omega^2+1}{r^2}|\Psi|^2.
\end{align}
Define the space $\mathcal{E}^T_{\Sigma}$ to be the completion of $\Gamma_{c}(\Sigma)$ data under the norm \bref{this2222}. The space $\mathcal{E}^{T}_{\overline{\Sigma}}$ and the norm $\|\;\|_{\mathcal{E}^T_{\overline{\Sigma}}}$ are similarly defined.
\end{defin}
\begin{remark}\label{RW enough to be in space}
The kernel of $\|\;\;\|_{\mathcal{E}^T_{\Sigma^*}}$ has trivial intersection with $\Gamma(\Sigma^*)$. It suffices for a smooth data set $(\uppsi,\uppsi')$ to satisfy $\|(\uppsi,\uppsi')\|_{\mathcal{E}^T_{\Sigma^*}}<\infty$ to have $(\uppsi,\uppsi')\in\mathcal{E}^T_{\Sigma^*}$, so $\|\;\|_{\mathcal{E}^T_{\Sigma*}}, \|\;\|_{\mathcal{E}^T_{\Sigma}}, \|\;\|_{\mathcal{E}^T_{\overline\Sigma}}$ and \bref{this2222} define normed spaces that can be extended to Hilbert spaces.
\end{remark}
\begin{defin}\label{RW def of rad at H}
    Define the space $\mathcal{E}_{\mathscr{H}^+}^{T}$ to be the completion of $\Gamma_c (\mathscr{H}^+_{\geq 0})$ under the norm
    \begin{align}\label{RW def rad flux at H}
        ||\Psi||_{\mathcal{E}_{\mathscr{H}^+_{\geq 0}}^{T}}^2=\int_{\mathscr{H}^+_{\geq v_0}}|\partial_v\Psi|^2\sin\theta d\theta d\phi dv.
    \end{align}
The spaces $\mathcal{E}^T_{\mathscr{H}^+}$, $\mathcal{E}^T_{\overline{\mathscr{H}^+}}$ are analogously defined.
\end{defin}
\begin{remark}\label{Subspace of L2}
\begin{enumerate}
   \item  The energy $\|\;\|_{\mathcal{E}_{\mathscr{H}^+_{\geq0}}^{T}}$ indeed defines a norm on $\Gamma_c(\mathscr{H}^+_{\geq0})$, which thus extends to a Hilbert space $\mathcal{E}^{T}_{\mathscr{H}^+_{\geq0}}$ when completed under $\|\;\|_{\mathcal{E}_{\mathscr{H}^+_{\geq0}}^{T}}$.\\
   \item The space $\mathcal{E}^{T}_{\mathscr{H}^+_{\geq0}}$ can be realised as the subset $\Psi_{\mathscr{H}^+}\in L^2_{loc}(\mathscr{H}^+_{\geq0})$ such that 
   \begin{itemize}
       \item $\nablav\Psi_{\mathscr{H}^+}\in L^2(\mathscr{H}^+_{\geq0})$,
       \item $\lim_{v\longrightarrow\infty} \|\Psi_{\mathscr{H}^+}\|_{L^2(S_{\infty,v}^2)}=0$.
   \end{itemize}
   Note that Hardy's inequality holds on elements of this space and we have
   \begin{align}\label{weighted L2 statement}
       \int_{\mathscr{H}^+_{\geq0}} dv \sin\theta d\theta d\phi \frac{|\Psi_{\mathscr{H}^+}|^2}{v^2+1}\lesssim\|\xi\|^2_{\mathcal{E}^{T}_{\mathscr{H}^+_{\geq0}}}<\infty.
   \end{align}
\end{enumerate}
\end{remark}
\begin{defin}
    Define the space $\mathcal{E}^T_{\mathscr{I}^+}$ to be the completion of $\Gamma_c(\mathscr{I}^+)$ under the norm
    \begin{align}
        \|\Psi\|_{\mathcal{E}_{\mathscr{I}^+}^{T}}^2=\int_{\mathscr{I}^+}|\partial_u\Psi|^2\sin\theta d\theta d\phi du.
    \end{align}
\end{defin}
\begin{defin}
    Define the space $\mathcal{E}_{\mathscr{H}^-}^{T}$ to be the completion of $\Gamma_c (\mathscr{H}^-)$ under the norm
    \begin{align}
        ||\Psi||_{\mathcal{E}_{\mathscr{H}^-}^{T}}^2=\int_{\mathscr{H}^-}|\partial_u\Psi|^2\sin\theta d\theta d\phi du.
    \end{align}
The space $\mathcal{E}^T_{\overline{\mathscr{H}^-}}$ is similarly defined.
\end{defin}
\begin{defin}
    Define the space $\mathcal{E}_{\mathscr{I}^-}^T$ to be the completion of $\Gamma_c(\mathscr{I}^-)$ under the norm
    \begin{align}
        \|\Psi\|_{\mathcal{E}^T_{\mathscr{I}^-}}^2=\int_{\mathscr{I}^-} |\partial_v \Psi|^2 dv\sin\theta d\theta d\phi.
    \end{align}
\end{defin}
\begin{remark} Similar statements to \Cref{Subspace of L2} apply to the norms $\|\;\;\|_{\mathcal{E}^T_{\mathscr{H}^\pm}}, \|\;\;\|_{\mathcal{E}^T_{\overline{\mathscr{H}^\pm}}}, \|\;\;\|_{\mathcal{E}^T_{\mathscr{I}^\pm}}$; they are positive-definite on smooth, compactly supported data on the respective regions of $\overline{\mathscr{M}}$, thus they define normed spaces which extend to Hilbert spaces $\mathcal{E}^T_{\mathscr{H}_{\geq0}}, \mathcal{E}^T_{\mathscr{H}^\pm}, \mathcal{E}^T_{\overline{\mathscr{H}^\pm}}, \mathcal{E}^T_{\mathscr{I}^\pm}$ upon completion. Elements of these spaces can be identified with tensor fields in $L^2_{loc}(\mathscr{H}^-)$ for which a similar statement to \bref{weighted L2 statement} applies.
\end{remark}
\begin{thm}\label{forwardRW}
Let $(\uppsi,\uppsi')\in\Gamma_c(\Sigma^*)\times\Gamma_c(\Sigma^*)$. Then the corresponding unique solution $\Psi$ to \bref{RW} given by \Cref{RWwpCauchy} on $J^+(\Sigma^*)$ induces smooth radiation fields $(\bm{\uppsi}_{\mathscr{H}^+},\bm{\uppsi}_{\mathscr{I}^+})\in \Gamma(\mathscr{H}^+_{\geq0})\oplus\Gamma(\mathscr{I}^+)$ as in definitions \ref{RW future rad field scri} and \ref{RWonH}, 
with $\bm{\uppsi}_{\mathscr{I}^+}, \Psi_{\mathscr{H}^+_{\geq0}}$ satisfying
    \begin{align}
        \left|\left|(\uppsi,\uppsi')\right|\right|_{\mathcal{E}^T_{\Sigma^*}}^2=\left|\left|\bm{\uppsi}_{\mathscr{I}^+}\right|\right|_{\mathcal{E}^T_{\mathscr{I}^+}}^2+\left|\left|\bm{\uppsi}_{\mathscr{H}^+}\right|\right|_{\mathcal{E}^T_{\mathscr{H}^+}}^2.
    \end{align}
    This extends to a map
    \begin{align}
        \mathscr{F^+}: \mathcal{E}^T_{\Sigma^*}\longrightarrow \mathcal{E}^T_{\mathscr{H}^+_{\geq0}}\oplus \mathcal{E}^T_{\mathscr{I}^+}.
    \end{align}
Analogously, forward evolution from smooth compactly supported data on $\Sigma$ or $\overline{\Sigma}$ extends to the maps,
\begin{align}
    \mathscr{F}^+:\mathcal{E}^T_\Sigma \longrightarrow \mathcal{E}^T_{\mathscr{{H}^+}} \oplus \mathcal{E}^T_{\mathscr{I}^+},\\
    \mathscr{F}^+:\mathcal{E}^T_{\overline{\Sigma}} \longrightarrow \mathcal{E}^T_{\overline{\mathscr{{H}^+}}} \oplus \mathcal{E}^T_{\mathscr{I}^+}.
\end{align}
\end{thm}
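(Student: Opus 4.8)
The plan is to prove the energy identity on the interior of $J^+(\Sigma^*)$ using the $T$-energy current and the fact that $T = \partial_t$ is Killing, and then to pass to the limit at $\mathscr{I}^+$ and $\mathscr{H}^+$. First I would fix $(\uppsi,\uppsi') \in \Gamma_c(\Sigma^*)\times\Gamma_c(\Sigma^*)$ and let $\Psi$ be the solution of \bref{RW} from \Cref{RWwpCauchy}. Since \bref{RW} has a conserved current associated to the Killing vector field $T$, I would write down the corresponding energy-momentum tensor (formed exactly as for the scalar wave equation, treating $\Psi$ as an $S^2_{u,v}$-tangent $(1,1)$-tensor and using the potential $\frac{\Omega^2(3\Omega^2+1)}{r^2}$), verify the divergence identity $\nabla^\mu (T_{\mu\nu}T^\nu) = 0$ for solutions of \bref{RW}, and integrate over the spacetime region bounded by $\Sigma^*$, a truncated portion of $\mathscr{H}^+$, a truncated portion of $\mathscr{I}^+$ (i.e. a large-$v$ outgoing cone), and spacelike/null pieces closing off the region near $i^+$ and $i^0$. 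The flux through $\Sigma^*$ is exactly \bref{this22222}; the fluxes through the null cones $\mathscr{C}_u$ and $\underline{\mathscr{C}}_v$ are the manifestly nonnegative quantities $\int |\partial_u \Psi|^2$ and $\int|\partial_v\Psi|^2$ (up to angular and potential terms that decay), and the compact support of the data together with finite speed of propagation kills the contributions near $i^0$. The content of this step is a routine vector-field-method computation, so I would only state the current and the divergence identity and cite the standard Schwarzschild energy estimates.

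Next I would establish existence of the radiation fields. For $\bm{\uppsi}_{\mathscr{I}^+}$, following the asymptotics discussion of \Cref{subsubsection 2.1.3 Asymptotics of S2 tensor fields}, I would show $\nablav(r\Psi)$ — equivalently $\partial_v$ of the appropriate rescaling — is in $L^1_v L^2_{S^2}$ along outgoing cones using the interior energy estimate plus a Hardy/Sobolev argument and the $r$-weighted (``$r^p$'') hierarchy for \bref{RW}, yielding a limit $\bm{\uppsi}_{\mathscr{I}^+}(u,\theta^A)$ with $\partial_u \bm{\uppsi}_{\mathscr{I}^+}\in L^2(\mathscr{I}^+)$; smoothness follows by commuting \bref{RW} with $T$ and the angular momentum operators (which all commute with $\Box_g$) and repeating. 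For $\bm{\uppsi}_{\mathscr{H}^+_{\geq 0}}$ the situation is easier: the solution is smooth up to $\mathscr{H}^+$ by \Cref{RWwpCauchy} (the equation being regular there in Kruskal coordinates), so the trace $\Psi_{\mathscr{H}^+}$ is smooth and $\partial_v\Psi_{\mathscr{H}^+}\in L^2(\mathscr{H}^+_{\geq 0})$ is immediate from the horizon flux bound; the decay $\|\Psi_{\mathscr{H}^+}\|_{L^2(S^2_{\infty,v})}\to 0$ as $v\to\infty$ (needed for $\bm{\uppsi}_{\mathscr{H}^+}\in\mathcal{E}^T_{\mathscr{H}^+_{\geq 0}}$ as described in \Cref{Subspace of L2}) requires an additional decay-in-$v$ statement along the horizon, which I would obtain from the interior energy estimate on the region $\{t^*\geq \tau\}$ letting $\tau\to\infty$, or by citing the decay results of \cite{DHR16} for \bref{RW}.

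Then I would assemble the energy identity: letting the truncation parameters go to infinity in the divergence identity of the first step, the flux through the retreating null/spacelike pieces near $i^+$ tends to zero (this is where a decay statement for the $T$-flux on late cones is needed, again from \cite{DHR16} or from a Cauchy-sequence/density argument), and one is left with $\|(\uppsi,\uppsi')\|^2_{\mathcal{E}^T_{\Sigma^*}} = \|\bm{\uppsi}_{\mathscr{I}^+}\|^2_{\mathcal{E}^T_{\mathscr{I}^+}} + \|\bm{\uppsi}_{\mathscr{H}^+}\|^2_{\mathcal{E}^T_{\mathscr{H}^+}}$. Since $\Gamma_c(\Sigma^*)\times\Gamma_c(\Sigma^*)$ is dense in $\mathcal{E}^T_{\Sigma^*}$ by definition and the map is a linear isometry into the Hilbert space $\mathcal{E}^T_{\mathscr{H}^+_{\geq 0}}\oplus\mathcal{E}^T_{\mathscr{I}^+}$, it extends uniquely to a bounded linear map $\mathscr{F}^+$ on the completion. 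Finally, for data on $\Sigma$ and on $\overline{\Sigma}$ I would run the identical argument; the only new point is that $\overline{\Sigma}$ meets $\mathcal{B}$, so one works in Kruskal coordinates near the bifurcation sphere where \bref{RW} and the energy current are regular (as noted in \Cref{RWwpSigmabar}), and the horizon flux is then naturally over $\overline{\mathscr{H}^+}$.

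\textbf{Main obstacle.} The genuinely delicate step is the \emph{limiting argument at $\mathscr{I}^+$ and at $i^+$}: showing that the energy radiated to null infinity is captured exactly (no loss, no gain) requires both (i) a uniform $r$-weighted estimate guaranteeing $\nablav$ of the rescaled solution is integrable in $v$ so the radiation field exists and carries finite energy, and (ii) a decay-in-time statement ensuring the $T$-flux through late outgoing cones and through the cap near $i^+$ vanishes in the limit, so that no energy is ``lost to $i^+$''. Both ingredients are exactly the kind of estimates proved for \bref{RW} in \cite{DHR16}, so the proof is really a matter of packaging those estimates into the energy identity; the horizon side, by contrast, is comparatively soft once smoothness up to $\mathscr{H}^+$ and the decay $\|\Psi_{\mathscr{H}^+}\|_{L^2(S^2_{\infty,v})}\to 0$ are in hand.
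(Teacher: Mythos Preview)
Your proposal is correct and follows essentially the same route as the paper: the paper also uses the $T$-energy identity on a truncated region bounded by $\Sigma^*$, pieces of $\mathscr{H}^+$ and $\mathscr{I}^+$, and late cones $\mathscr{C}_{u_+}\cap\{r\geq R\}$, $\underline{\mathscr{C}}_{v_+}\cap\{r\leq R\}$, establishes the radiation fields via the $r^p$-hierarchy (\Cref{RWrp}) and horizon regularity, and then shows the late-cone fluxes vanish using precisely the integrated local energy decay (\Cref{RWILED}) and the $r^p$-estimate with $p=1$, both commuted once with $\slashednabla_T$. Your identification of the main obstacle---the vanishing of the flux near $i^+$---matches exactly the step the paper isolates in \Cref{RWfcp}.
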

\begin{thm}\label{backwardRW}
Let $\bm{\uppsi}_{\mathscr{I}^+}\in \Gamma_c (\mathscr{I}^+), \bm{\uppsi}_{\mathscr{H}^+} \in \Gamma_c (\mathscr{H}^+_{\geq0})$. Then there exists a unique solution $\Psi$ to \cref{RW} in $J^+(\Sigma^*)$ which is smooth, such that
\begin{align}
    \lim_{v\longrightarrow\infty} \Psi(u,v,\theta^A)=\bm{\uppsi}_{\mathscr{I}^+},\qquad\qquad \Psi\big|_{\mathscr{H}^+_{\geq0}}=\bm{\uppsi}_{\mathscr{H}^+}.
\end{align}
with $\left|\left|(\Psi|_{\Sigma^*},\slashednabla_{n_{\Sigma^*}}\Psi|_{\Sigma^*})\right|\right|_{\mathcal{E}^T_{\Sigma^*}}^2=\left|\left|\bm{\uppsi}_{\mathscr{I}^+}\right|\right|_{\mathcal{E}^T_{\mathscr{I}^+}}^2+\left|\left|\bm{\uppsi}_{\mathscr{H}^+}\right|\right|_{\mathcal{E}^T_{\mathscr{H}^+}}^2$. 
This extends to a map 
\begin{align}
    \mathscr{B}^-: \mathcal{E}^T_{\mathscr{H}^+_{\geq0}}\oplus \mathcal{E}^T_{\mathscr{I}^+}\longrightarrow \mathcal{E}^T_{{\Sigma^*}} ,
\end{align}
which inverts the map $\mathscr{F}^+$ of \Cref{forwardRW}. Thus $\mathscr{F}^+, \mathscr{B}^+$ are unitary Hilbert space isomorphisms and
\begin{align}
\mathscr{B}^-\circ\mathscr{F}^+=\mathscr{F}^+\circ\mathscr{B}^+=Id.
\end{align}
Similar statements apply to produce maps
\begin{align}
    \mathscr{B}^-:  \mathcal{E}^T_{\mathscr{{H}^+}} \oplus \mathcal{E}^T_{\mathscr{I}^+}\longrightarrow \mathcal{E}^T_\Sigma,\\
    \mathscr{B}^-: \mathcal{E}^T_{\overline{\mathscr{{H}^+}}} \oplus \mathcal{E}^T_{\mathscr{I}^+} \longrightarrow \mathcal{E}^T_{\overline{\Sigma}}.
\end{align}
\end{thm}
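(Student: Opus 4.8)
The plan is to build $\mathscr{B}^-$ by approximating $\mathscr{I}^+$ with the outgoing null cones $\underline{\mathscr{C}}_{v_n}$, $v_n\uparrow\infty$, solving a sequence of backward characteristic problems, and carrying the $T$-energy identity of \Cref{forwardRW} through the limit. Since $\bm{\uppsi}_{\mathscr{H}^+},\bm{\uppsi}_{\mathscr{I}^+}$ are compactly supported, fix $v_+$ with $\mathrm{supp}\,\bm{\uppsi}_{\mathscr{H}^+}\subset\{v\le v_+\}$, fix $u_+$ beyond the $u$-support of $\bm{\uppsi}_{\mathscr{I}^+}$, and fix a spacelike $\widetilde{\Sigma}$ meeting $\mathscr{H}^+$ at $v_+$ and $\mathscr{I}^+$ at $u_+$ as in \Cref{RWwpBackwards}. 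For $v_n\uparrow\infty$ I would solve, via \Cref{RWwpBackwards}, the backward characteristic problem for \bref{RW} on $\mathscr{D}_n:=D^-(\overline{\mathscr{H}^+}\cup\widetilde{\Sigma}\cup\underline{\mathscr{C}}_{v_n})\cap J^+(\Sigma^*)$, with data $\bm{\uppsi}_{\mathscr{H}^+}$ on $\mathscr{H}^+\cap\{v\le v_+\}$, trivial data on $\widetilde{\Sigma}$, and $\bm{\uppsi}_{\mathscr{I}^+}(u,\theta^A)$ transplanted along $\underline{\mathscr{C}}_{v_n}$; these data are compatible to all orders near the corner spheres because all of them vanish there. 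The domains $\mathscr{D}_n$ exhaust a neighbourhood of $\Sigma^*$ in $J^+(\Sigma^*)$, and the finite-region $T$-flux identity underlying \Cref{forwardRW}, applied to $\Psi_n$, reads
\begin{align*}
\|(\Psi_n|_{\Sigma^*},\slashednabla_{n_{\Sigma^*}}\Psi_n|_{\Sigma^*})\|^2_{\mathcal{E}^T_{\Sigma^*}}=\|\bm{\uppsi}_{\mathscr{H}^+}\|^2_{\mathcal{E}^T_{\mathscr{H}^+}}+\|\Psi_n\|^2_{\mathcal{E}^T_{\underline{\mathscr{C}}_{v_n}}},
\end{align*}
where the last term depends only on $\bm{\uppsi}_{\mathscr{I}^+}$ restricted to $\underline{\mathscr{C}}_{v_n}$, its angular and potential contributions carrying $r$-weights that vanish as $v_n\to\infty$ (cf.~\Cref{subsubsection 2.1.3 Asymptotics of S2 tensor fields}); hence it converges to $\|\bm{\uppsi}_{\mathscr{I}^+}\|^2_{\mathcal{E}^T_{\mathscr{I}^+}}$ and the $\Psi_n|_{\Sigma^*}$ are uniformly bounded in $\mathcal{E}^T_{\Sigma^*}$.

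The key input for upgrading this to convergence is a family of $r^p$-weighted (Dafermos--Rodnianski-type) estimates near $\mathscr{I}^+$, uniform in $n$, made to apply to the $S^2_{u,v}$-tensorial form of \bref{RW} by reducing the angular operators to the scalar Laplacian via the elliptic identities of \Cref{subsubsection 2.1.2 Elliptic estimates on S2}. These yield $\|\bm{\uppsi}_{\mathscr{I}^+}-\Psi_m|_{\underline{\mathscr{C}}_{v_n}}\|_{\mathcal{E}^T_{\underline{\mathscr{C}}_{v_n}}}\to 0$ as $n\le m\to\infty$; since for $n<m$ the difference $\Psi_n-\Psi_m$ solves \bref{RW} on $\mathscr{D}_n$ with vanishing data on $\mathscr{H}^+\cap\{v\le v_+\}$ and on $\widetilde{\Sigma}$, the $T$-flux identity gives
\begin{align*}
\|(\Psi_n-\Psi_m)|_{\Sigma^*}\|^2_{\mathcal{E}^T_{\Sigma^*}}=\|\bm{\uppsi}_{\mathscr{I}^+}-\Psi_m|_{\underline{\mathscr{C}}_{v_n}}\|^2_{\mathcal{E}^T_{\underline{\mathscr{C}}_{v_n}}}\longrightarrow 0,
\end{align*}
so $\{\Psi_n|_{\Sigma^*}\}$ is Cauchy. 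I would then let $\Psi$ be the solution of \bref{RW} on $J^+(\Sigma^*)$ given by \Cref{RWwpCauchy} with Cauchy data the limiting data on $\Sigma^*$; passing to the limit yields the energy identity $\|(\Psi|_{\Sigma^*},\slashednabla_{n_{\Sigma^*}}\Psi|_{\Sigma^*})\|^2_{\mathcal{E}^T_{\Sigma^*}}=\|\bm{\uppsi}_{\mathscr{I}^+}\|^2_{\mathcal{E}^T_{\mathscr{I}^+}}+\|\bm{\uppsi}_{\mathscr{H}^+}\|^2_{\mathcal{E}^T_{\mathscr{H}^+}}$. Local energy estimates on the diamonds $\mathscr{D}^{u,v}_{\Sigma^*}$ give $\Psi_n\to\Psi$ locally, hence $\Psi|_{\mathscr{H}^+_{\geq0}}=\bm{\uppsi}_{\mathscr{H}^+}$, while the same $r^p$-estimates together with \Cref{subsubsection 2.1.3 Asymptotics of S2 tensor fields} show that $\Psi$ has a radiation field on $\mathscr{I}^+$, equal to $\bm{\uppsi}_{\mathscr{I}^+}$ by passing to the limit in $\Psi_n|_{\underline{\mathscr{C}}_{v_n}}=\bm{\uppsi}_{\mathscr{I}^+}$.

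For uniqueness, two smooth solutions on $J^+(\Sigma^*)$ with the same radiation fields differ by a solution whose radiation fields vanish; by the $r^p$-estimates its flux through $\underline{\mathscr{C}}_{v_n}$ tends to $0$, and its flux through $\mathscr{H}^+_{\geq0}$ vanishes, so by the $T$-flux identity its $\mathcal{E}^T_{\Sigma^*}$-norm is zero, whence it vanishes by \Cref{RW enough to be in space}. Thus $\mathscr{B}^-$ is a well-defined, norm-preserving map on $\Gamma_c(\mathscr{H}^+_{\geq0})\oplus\Gamma_c(\mathscr{I}^+)$ and extends by density to an isometry into $\mathcal{E}^T_{\Sigma^*}$; uniqueness of the Cauchy problem (\Cref{RWwpCauchy}) and of the construction force $\mathscr{B}^-\circ\mathscr{F}^+=\mathrm{Id}$ and $\mathscr{F}^+\circ\mathscr{B}^-=\mathrm{Id}$ on the dense subspaces of smooth compactly supported data, so $\mathscr{F}^+$ and $\mathscr{B}^-$ are mutually inverse, hence surjective, hence unitary Hilbert-space isomorphisms. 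The cases of data on $\Sigma$ and $\overline{\Sigma}$ are handled identically, the only extra point being the bifurcation sphere, where \bref{RW} is regular in Kruskal coordinates and one uses \Cref{RWwp local statement near B} and \Cref{RWwpSigmabar} in place of \Cref{RWwpBackwards} and \Cref{RWwpCauchy} to run the construction up to and including $\mathcal{B}$.

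The main obstacle is precisely this convergence step and the identification of the radiation field on $\mathscr{I}^+$: since the $T$-energy is merely conserved (not coercive with room to spare), controlling the error $\bm{\uppsi}_{\mathscr{I}^+}-\Psi_m|_{\underline{\mathscr{C}}_{v_n}}$ on the receding cones and ensuring the limiting solution genuinely attains $\bm{\uppsi}_{\mathscr{I}^+}$ requires honest $r$-decay beyond energy conservation, i.e.~the $r^p$-weighted hierarchy near $\mathscr{I}^+$, together with a check that this hierarchy closes for the tensorial Regge--Wheeler operator via the angular elliptic theory of \Cref{subsubsection 2.1.2 Elliptic estimates on S2}.
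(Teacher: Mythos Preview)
Your approach is correct and shares the paper's skeleton---solve finite backward problems with $\bm{\uppsi}_{\mathscr{I}^+}$ transplanted to receding cones $\underline{\mathscr{C}}_{v_n}$, then pass to the limit---but the limiting mechanism differs. The paper does not argue Cauchy-in-$\mathcal{E}^T_{\Sigma^*}$; instead it derives uniform pointwise bounds on $\Psi_n$ by Gr\"onwall (an $r^2$-weighted estimate on outgoing cones near $\mathscr{I}^+$, and an exponentially growing one near $\mathscr{H}^+$), and then extracts a $C^k$-convergent subsequence via Arzel\`a--Ascoli. Consequently the paper obtains only the \emph{inequality} $\|(\Psi|_{\Sigma^*},\slashednabla_{n_{\Sigma^*}}\Psi|_{\Sigma^*})\|^2\le\|\bm{\uppsi}_{\mathscr{H}^+}\|^2+\|\bm{\uppsi}_{\mathscr{I}^+}\|^2$ directly, and upgrades it to equality \emph{a posteriori} by observing that $\mathscr{F}^+\circ\mathscr{B}^-=\mathrm{Id}$ on compactly supported data (so $\mathscr{B}^-$ must be the inverse of the already-unitary $\mathscr{F}^+$). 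Your Cauchy-sequence route delivers the equality in one shot and avoids compactness, at the price of invoking the backward $r^p$-hierarchy up front; the paper in fact develops exactly those estimates separately (\Cref{backwards rp estimates}) and remarks that they give an alternative, $\mathscr{F}^+$-free proof of unitarity (\Cref{RW unitary backwards,,RW unitary backwards corollary}). So the two proofs use the same ingredients, reordered: the paper's is lighter at the existence stage and leans on forward scattering for the sharp constant, while yours is a pure backward argument throughout.
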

\begin{thm}\label{RW isomorphisms}
    Analogously to \cref{forwardRW,backwardRW}, there exist bounded maps 
    \begin{align}
        \mathscr{F}^-:\mathcal{E}^T_\Sigma\longrightarrow \mathcal{E}^T_{\mathscr{H}^-}\oplus \mathcal{E}^T_{\mathscr{I}^-},\qquad\qquad\qquad \mathscr{B}^+:\mathcal{E}^T_{\mathscr{H}^-}\oplus \mathcal{E}^T_{\mathscr{I}^-}\longrightarrow \mathcal{E}^T_\Sigma,
    \end{align}
     \begin{align}
        \mathscr{F}^-:\mathcal{E}^T_{\overline{\Sigma}}\longrightarrow \mathcal{E}^T_{\overline{\mathscr{H}^-}}\oplus \mathcal{E}^T_{\mathscr{I}^-},\qquad\qquad\qquad \mathscr{B}^+:\mathcal{E}^T_{\overline{\mathscr{H}^-}}\oplus \mathcal{E}^T_{\mathscr{I}^-}\longrightarrow \mathcal{E}^T_{\overline\Sigma},
    \end{align}
such that $\mathscr{F}^-\circ\mathscr{B}^+=\mathscr{B}^+\circ\mathscr{F}^-=Id$ on the respective domains. The maps
    \begin{align}
        \mathscr{S}=\mathscr{F}^+\circ\mathscr{B}^+:\mathcal{E}^T_{\mathscr{H}^-}\oplus \mathcal{E}^T_{\mathscr{I}^-}\longrightarrow \mathcal{E}^T_{\mathscr{H}^+}\oplus\mathcal{E}^T_{\mathscr{I}^+},\\
        \mathscr{S}=\mathscr{F}^+\circ\mathscr{B}^+:\mathcal{E}^T_{\overline{\mathscr{H}^-}}\oplus \mathcal{E}^T_{\mathscr{I}^-}\longrightarrow \mathcal{E}^T_{\overline{\mathscr{H}^+}}\oplus\mathcal{E}^T_{\mathscr{I}^+}
    \end{align}
constitute unitary Hilbert space isomorphism with inverses
\begin{align}
        \mathscr{S}=\mathscr{F}^-\circ\mathscr{B}^-:\mathcal{E}^T_{\mathscr{H}^+}\oplus \mathcal{E}^T_{\mathscr{I}^+}\longrightarrow \mathcal{E}^T_{\mathscr{H}^-}\oplus\mathcal{E}^T_{\mathscr{I}^-},\\
        \mathscr{S}=\mathscr{F}^-\circ\mathscr{B}^-:\mathcal{E}^T_{\overline{\mathscr{H}^+}}\oplus \mathcal{E}^T_{\mathscr{I}^+}\longrightarrow \mathcal{E}^T_{\overline{\mathscr{H}^-}}\oplus\mathcal{E}^T_{\mathscr{I}^-}
    \end{align}
%$\mathscr{F}^-\circ\mathscr{B}^-$ 
on the respective domains.
\end{thm}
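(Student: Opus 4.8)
The plan is to obtain \Cref{RW isomorphisms} as an immediate consequence of \Cref{forwardRW,backwardRW}, using the time-inversion symmetry of the Regge--Wheeler equation \bref{RW} recorded in \Cref{time inversion of RW}. The key structural fact is that the Cauchy surfaces $\Sigma=\{t=0\}$ and $\overline\Sigma$ are preserved by $t\mapsto-t$, so the future-evolution isomorphisms already built can simply be conjugated by the reflection to produce their past counterparts; $\Sigma^*$ is not preserved, which is why the past statement is phrased only on $\Sigma$ and $\overline\Sigma$.

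Concretely, I would let $\iota$ denote the map $\Psi(u,v,\theta^A)\mapsto\Psi(-v,-u,\theta^A)$, which by \Cref{time inversion of RW} sends solutions of \bref{RW} to solutions of \bref{RW}. The first step is to verify that $\iota$ induces isometric isomorphisms $\mathcal{E}^T_{\Sigma}\to\mathcal{E}^T_{\Sigma}$, $\mathcal{E}^T_{\overline\Sigma}\to\mathcal{E}^T_{\overline\Sigma}$, $\mathcal{E}^T_{\mathscr{H}^+}\to\mathcal{E}^T_{\mathscr{H}^-}$, $\mathcal{E}^T_{\overline{\mathscr{H}^+}}\to\mathcal{E}^T_{\overline{\mathscr{H}^-}}$, $\mathcal{E}^T_{\mathscr{I}^+}\to\mathcal{E}^T_{\mathscr{I}^-}$, and likewise in the reverse direction: on $\Sigma$ the reflection fixes $r$ and reverses the normal derivative $\slashednabla_{n_\Sigma}$, so the integrand of \bref{this2222} and the measure $dr\sin\theta d\theta d\phi$ are unchanged; on the null boundaries it interchanges the two hypersurfaces of each sign, exchanges $\partial_u\leftrightarrow\partial_v$ together with $du\leftrightarrow dv$, and in Kruskal coordinates acts by $(U,V)\mapsto(-V,-U)$, hence fixes $\mathcal{B}$ and swaps $\overline{\mathscr{H}^+}\leftrightarrow\overline{\mathscr{H}^-}$. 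Conjugating the $\Sigma$- and $\overline\Sigma$-versions of \Cref{forwardRW,backwardRW} by $\iota$ then produces bounded maps $\mathscr{F}^-:=\iota\circ\mathscr{F}^+\circ\iota$ and $\mathscr{B}^+:=\iota\circ\mathscr{B}^-\circ\iota$ on the stated domains, transports the energy identities verbatim, and converts the relations $\mathscr{B}^-\circ\mathscr{F}^+=\mathscr{F}^+\circ\mathscr{B}^-=Id$ of \Cref{backwardRW} into $\mathscr{F}^-\circ\mathscr{B}^+=\mathscr{B}^+\circ\mathscr{F}^-=Id$; thus $\mathscr{F}^-,\mathscr{B}^+$ are mutually inverse unitary isomorphisms. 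Finally I would set $\mathscr{S}:=\mathscr{F}^+\circ\mathscr{B}^+$, which is well-defined since the range of $\mathscr{B}^+$ is $\mathcal{E}^T_{\Sigma}$ (resp. $\mathcal{E}^T_{\overline\Sigma}$), the domain of $\mathscr{F}^+$, and unitary as a composition of unitaries; using $\mathscr{B}^-\circ\mathscr{F}^+=Id$ and $\mathscr{F}^-\circ\mathscr{B}^+=Id$ one computes $(\mathscr{F}^-\circ\mathscr{B}^-)\circ(\mathscr{F}^+\circ\mathscr{B}^+)=\mathscr{F}^-\circ(\mathscr{B}^-\circ\mathscr{F}^+)\circ\mathscr{B}^+=\mathscr{F}^-\circ\mathscr{B}^+=Id$ and symmetrically $(\mathscr{F}^+\circ\mathscr{B}^+)\circ(\mathscr{F}^-\circ\mathscr{B}^-)=Id$, so $\mathscr{S}^{-1}=\mathscr{F}^-\circ\mathscr{B}^-$ as claimed.

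I do not expect any genuine analytic difficulty here, since all the substantive work has already been done in \Cref{forwardRW,backwardRW}; the only point needing care is the bookkeeping in the first step, namely checking that $\iota$ really intertwines the various $\mathcal{E}^T$-norms and the radiation-field maps. This is clean precisely because \bref{RW}, unlike the Teukolsky equations \bref{T+2}, \bref{T-2}, extends regularly across the bifurcation sphere in Kruskal coordinates, so the Eddington--Finkelstein frame degeneration plays no role and the spaces over $\overline{\mathscr{H}^\pm}$ behave as well as those over $\mathscr{H}^\pm$.
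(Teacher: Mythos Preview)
Your proposal is correct and matches the paper's own approach: the paper likewise declares that ``since the Regge--Wheeler equation \bref{RW} is invariant under time inversion, the existence of the maps $\mathscr{F}^-, \mathscr{B}^+$ is immediate,'' records the resulting \Cref{RW past scattering}, and then notes that \Cref{RW isomorphisms} follows directly. Your write-up is in fact more explicit than the paper's, spelling out the conjugation $\mathscr{F}^-=\iota\circ\mathscr{F}^+\circ\iota$, the norm-compatibility of $\iota$, and the algebraic verification that $\mathscr{F}^-\circ\mathscr{B}^-$ inverts $\mathscr{F}^+\circ\mathscr{B}^+$.
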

\begin{remark}\label{RW distinct spaces}
We emphasise that the spaces $\mathcal{E}^T_{\Sigma}$ and $\mathcal{E}^T_{\overline{\Sigma}}$ are different and $\mathcal{E}^T_{\Sigma}\subsetneq\mathcal{E}^T_{\overline{\Sigma}}$. Similarly, $\mathcal{E}^T_{\mathscr{H}^+}\subsetneq\mathcal{E}^T_{\overline{\mathscr{H}^+}}$. Our prescription in distinguishing between these spaces is consistent in the sense that elements of $\mathcal{E}^T_{\Sigma}$ are mapped into $\mathcal{E}^T_{\mathscr{H}^+}$ and vice versa. Our point of view is that the spaces $\mathcal{E}^T_{\overline{\Sigma}}, \mathcal{E}^T_{\overline{\mathscr{H}}^\pm}$ are the natural spaces to consider, since in these spaces scattering data are not restricted to vanish at the bifurcation sphere $\mathcal{B}$. It is however useful to have the statements involving $\mathcal{E}^T_{{\Sigma}}, \mathcal{E}^T_{{\mathscr{H}}^\pm}$. In particular, solutions arising from past scattering data identically vanishing on $\mathscr{H}^-$ will lie in these spaces.
\end{remark}
\subsection{Theorem 2: Scattering for the Teukolsky equations of spins $\pm2$}\label{subsection 4.2 scattering for the teukolsky equations of spins +,-2}
\subsubsection{Scattering for the +2 Teukolsky equation}\label{subsubsection 4.2.1 scattering for the +2 equation}
\begin{defin}\label{+2 norm on Sigma}
    Let $(\upalpha,\upalpha')\in\Gamma_c(\Sigma^*)\oplus\Gamma_c(\Sigma^*)$ be Cauchy data for $\bref{T+2}$ on $\Sigma^*$ giving rise to a solution $\alpha$.
    Define the space $\mathcal{E}^{T,+2}_{\Sigma^*}$ to be the completion of $\Gamma_c(\Sigma^*)\oplus\Gamma_c(\Sigma^*)$ under the norm 
    \begin{align}
    ||(\upalpha,\upalpha')||_{\mathcal{E}^{T,+2}_{\Sigma^*}}^2=||(\Psi,\slashednabla_{n_{\Sigma^*}}\Psi)||_{\mathcal{E}^{T}_{\Sigma^*}}^2,
    \end{align}
    where $\Psi$ is the weighted second derivative $\Psi=\left(\frac{r^2}{\Omega^2}\nablau\right)^2r\Omega^2\alpha$ of $\alpha$. The spaces $\mathcal{E}^{T,+2}_{\Sigma}$, $\mathcal{E}^{T,+2}_{\overline{\Sigma}}$ are similarly defined.
\end{defin}
We immediately note the following:
\begin{proposition}\label{+2 norm on Sigma is coercive}
$\|\;\|_{\mathcal{E}^{T,+2}_{\Sigma}}$ indeed defines a norm on $\Gamma_c(\Sigma)\times\Gamma_c(\Sigma)$. Similar statements hold for $\|\;\|_{\mathcal{E}^{T,+2}_{\Sigma^*}}, \|\;\|_{\mathcal{E}^{T,+2}_{\overline{\Sigma}}}$.
\end{proposition}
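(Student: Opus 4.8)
The plan is to show that the only smooth compactly supported Cauchy data $(\upalpha,\upalpha')$ for \bref{T+2} on $\Sigma$ (resp.\ $\Sigma^*$, $\overline\Sigma$) with $\|(\upalpha,\upalpha')\|_{\mathcal{E}^{T,+2}_{\Sigma}}=0$ is the zero data set; since $\|\;\|_{\mathcal{E}^T_\Sigma}$ is already known to be a genuine norm on $\Gamma_c(\Sigma)\times\Gamma_c(\Sigma)$ by \Cref{RW enough to be in space}, the vanishing of $\|(\upalpha,\upalpha')\|_{\mathcal{E}^{T,+2}_{\Sigma}}^2=\|(\Psi,\slashednabla_{n_\Sigma}\Psi)\|_{\mathcal{E}^T_\Sigma}^2$ forces $\Psi|_\Sigma=0$ and $\slashednabla_{n_\Sigma}\Psi|_\Sigma=0$, hence, by uniqueness for the Regge--Wheeler Cauchy problem (\Cref{RWwpCauchy}), $\Psi\equiv 0$ on $J^+(\Sigma)\cup J^-(\Sigma)$. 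So the real content is: \emph{if the weighted second null derivative $\Psi=\left(\frac{r^2}{\Omega^2}\nablau\right)^2 r\Omega^2\alpha$ vanishes identically, then $\alpha$ arises from zero Cauchy data.}

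First I would unwind the transport chain \bref{hier+}: $\Psi\equiv0$ means $\frac{r^2}{\Omega^2}\nablau(r^3\Omega\psi)\equiv0$, i.e.\ $r^3\Omega\psi$ is constant along each ingoing null ray $\underline{\mathscr C}_v$ once we fix its value somewhere. The point is to evaluate these transport constants using the initial data on $\Sigma$: the restriction of $\Psi$, $\slashednabla_{n_\Sigma}\Psi$, and hence of the intermediate quantities $r^3\Omega\psi$ and $r\Omega^2\alpha$ and their transversal derivatives along $\Sigma$, can all be computed locally from $(\upalpha,\upalpha')$ using the Teukolsky equation \bref{T+2} itself — exactly the local computability remark made just before \textbf{Theorem 2} in the introduction, and again before \Cref{+2 norm on Sigma}. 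Concretely, I would argue: on $\Sigma$ the data $(\upalpha,\upalpha')$ fix $r\Omega^2\alpha|_\Sigma$ and $\nablau(r\Omega^2\alpha)|_\Sigma$ (the latter because $\slashednabla_{n_\Sigma}$ combines $\nablau$ and $\nablav$, and $\nablav$ acting on data along $\Sigma$ is tangential hence known); then \bref{T+2d} lets one solve for $\nablau\nablau(r\Omega^2\alpha)|_\Sigma$, and so forth for the third $\nablau$-derivative, so that $\Psi|_\Sigma$ and $\slashednabla_{n_\Sigma}\Psi|_\Sigma$ are genuine (differential) functionals of $(\upalpha,\upalpha')$. Running this backwards: $\Psi\equiv0$ together with zero data for $\Psi$ on $\Sigma$ gives, integrating the two transport equations in \bref{hier+} against the characteristic direction and using that $r^3\Omega\psi$, $r\Omega^2\alpha$ decay/vanish appropriately (compact support of the original data plus finite speed of propagation along $\Sigma$), that $r\Omega^2\alpha$ and all the relevant $\nablau$-derivatives vanish along $\Sigma$, i.e.\ $(\upalpha,\upalpha')=(0,0)$.

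The main obstacle — and the step I would spend the most care on — is the \emph{inversion of the transport operators}: showing that $\Psi\equiv 0$ really does force $\alpha\equiv 0$ rather than merely constraining $\alpha$ up to a kernel. The operator $\left(\frac{r^2}{\Omega^2}\nablau\right)^2$ has a two-dimensional kernel (per spherical mode) consisting of fields growing like powers of $r$ along ingoing rays, so one must use the compact support / finite-energy condition of the data on $\Sigma$ and the behaviour at the relevant endpoint ($\mathscr{H}^+$ for $\Sigma^*$ and $\overline\Sigma$, or $\mathcal B$ for $\overline\Sigma$, noting \Cref{regular} and that one works with $\widetilde\alpha$ near $\mathcal B$) to rule out these homogeneous solutions; this is where the distinction between $\Sigma$, $\Sigma^*$, $\overline\Sigma$ matters and where \bref{eq:d4Psi}, \bref{eq:d4d4Psi} can be invoked to pin down the constants of integration in terms of data rather than free parameters. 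The cases $\Sigma^*$ and $\overline\Sigma$ are handled by the same argument with $\Sigma$ replaced by $\Sigma^*$ (using \Cref{WP+2Sigma*}, \Cref{RWwpCauchy}) or $\overline\Sigma$ (using \Cref{WP+2Sigmabar}, \Cref{RWwpSigmabar}) respectively, together with \Cref{RW enough to be in space} for the corresponding Regge--Wheeler norm; no new idea is needed, only bookkeeping of the weights in $U,V$ near $\mathcal B$.
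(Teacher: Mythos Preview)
Your reduction to showing ``$\Psi\equiv 0$ forces $(\upalpha,\upalpha')=(0,0)$'' is correct, and you are right that \bref{eq:d4Psi} and \bref{eq:d4d4Psi} are the relevant tools. But the step you flag as the main obstacle is a genuine gap as written: integrating the transport chain \bref{hier+} along ingoing null rays $\underline{\mathscr{C}}_v$ and invoking compact support on $\Sigma$ does not kill the kernel. Concretely, $\Psi\equiv 0$ gives $r^3\Omega\psi=C(v,\theta^A)$ constant in $u$, and you want $C\equiv 0$; but for those $v$ with $\underline{\mathscr{C}}_v$ passing through the support of the data on $\Sigma$, the ray goes to $\mathscr{H}^+$ as $u\to\infty$ (where $r^3\Omega\psi$ has a finite, generically nonzero limit for regular $\Omega^2\alpha$) and towards $\mathscr{I}^-$ as $u\to-\infty$ (still inside the domain of influence of the data). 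Neither endpoint hands you $C(v,\theta^A)=0$, and ``finite speed of propagation along $\Sigma$'' does not address this; the compact support of $r^3\Omega\psi|_\Sigma$ in $r$ only tells you $C$ is compactly supported in $v$, not that it vanishes.

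The paper closes the argument entirely on $\Sigma$, using \bref{eq:d4Psi} and \bref{eq:d4d4Psi} not to fix integration constants along null rays but to produce a \emph{radial} differential inequality. From $\Psi\equiv 0$, \bref{eq:d4d4Psi} gives $\slashednabla_T(r\Omega^2\alpha)=\tfrac{1}{12M}\mathcal{A}_2(\mathcal{A}_2-2)\,r\Omega^2\alpha$; substituting this into \bref{eq:d4Psi} (also set to zero) eliminates the time derivative and yields, for $F:=\big(\mathcal{A}_2-2+\tfrac{6M}{r}\big)r\Omega^2\alpha$, the first-order equation on $\Sigma$
\[
\slashednabla_{R^*}F=\tfrac{1}{12M}\mathcal{A}_2(\mathcal{A}_2-2)F-12M\tfrac{\Omega^2}{r^2}\,r\Omega^2\alpha.
\]
Pairing with $F$ over $S^2$ shows $\partial_{r_*}\|F\|_{L^2(S^2)}^2\geq 0$ (both right-hand terms have a good sign, the second after expanding $\langle F,r\Omega^2\alpha\rangle$ and using Poincar\'e). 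Compact support in $r$ then forces $F\equiv 0$ on $\Sigma$, and since $\mathcal{A}_2-2+\tfrac{6M}{r}$ is uniformly elliptic on symmetric traceless 2-tensors for $r>2M$, one gets $\alpha|_\Sigma=0$; the parabolic identity then gives $\slashednabla_T\alpha|_\Sigma=0$. The same argument runs on $\Sigma^*$ and $\overline\Sigma$. This monotonicity-in-$r$ step is precisely the missing ingredient in your proposal.
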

\begin{proof}
It suffices to check that $\|(\upalpha,\upalpha')\|_{\mathcal{E}^{T,+2}_{\Sigma}}=0$ for a  smooth, compactly supported pair $(\upalpha,\upalpha')$ implies that $(\upalpha,\upalpha')=(0,0)$. Let $\alpha$, $\Psi$ be as in \Cref{+2 norm on Sigma}. It is clear that $\Psi=0$, and (\ref{eq:d4d4Psi}) implies:
\begin{align}\label{424242}
    \slashednabla_T\alpha=\frac{1}{12M}\mathcal{A}_2(\mathcal{A}_2-2)\alpha.
\end{align}
\Cref{eq:d4Psi} implies that on $\Sigma$
\begin{align}
    \left(\mathcal{A}_2-2+\frac{6M}{r}\right)\left(\frac{1}{12M}\mathcal{A}_2(\mathcal{A}_2-2)-\slashednabla_{R^*}\right)r\Omega^2\alpha-6M\frac{\Omega^2}{r^2}r\Omega^2\alpha=0.
\end{align}
Take $F=\left(\mathcal{A}_2-2+\frac{6M}{r}\right)r\Omega^2\alpha$, then the above says $\slashednabla_{R^*}F=\frac{1}{12M} \mathcal{A}_2\left(\mathcal{A}_2-2\right)F-12M\frac{\Omega^2}{r^2}r\Omega^2\alpha$. We integrate over the region $R_0<r<R$ on $\Sigma$:
\begin{align}
\begin{split}
    \|F\|^2_{S^2,r=R}=\|F\|^2_{S^2,r=R_0}+\int_{\Sigma\cap\{R_0<r<R\}} \frac{1}{6M}&\left\{|\mathcal{A}_2F|^2+2|\mathring{\slashednabla}F|^2+4|F|^2\right\}\\&+24M\frac{\Omega^2}{r^2}\left\{|\mathring{\slashednabla}r\Omega^2\alpha|^2+\left(4-\frac{6M}{r}\right)|r\Omega^2\alpha|^2\right\}.
\end{split}
\end{align}
This implies $\|F\|^2_{S^2,r=R}\geq \|F\|^2_{S^2,r=R_0}$ (notice that the integral on the right hand side remains positive by Poincar\'e's inequality). If the data are compactly supported then $F$ must vanish everywhere on $\Sigma$, and the vanishing of $F$ implies the vanishing of $\Omega^2\alpha$ for smooth $\alpha$ since the operator $\mathcal{A}_2-2+\frac{6M}{r}$ is uniformly elliptic on the set of symmetric, traceless 2-tensor field on $S^2$. This in turn implies the vanishing of $\slashednabla_T\Omega^2\alpha$ by \bref{424242}. We can repeat this argument for data on $\Sigma^*, \overline{\Sigma}$.
\end{proof}
\begin{defin}
 Define the space of future scattering states $\mathcal{E}^{T,+2}_{\mathscr{H}^+_{\geq 0}}$ on $\mathscr{H}^+$ to be the completion of $\Gamma_c (\mathscr{H}^+_{\geq 0})$ under the norm
\begin{align}\label{+2 scattering norm on H+}   
\begin{split}
    &\|A\|_{\mathcal{E}^{T,+2}_{\mathscr{H}^+_{\geq0}}}=\left\|\mathcal{A}_2(\mathcal{A}_2-2)\left(\int_v^\infty d\bar{v}\; e^{\frac{1}{2M}(v-\bar{v})}A\right)\right\|^2_{L^2(\mathscr{H}^+_{\geq0})}+\left\|6M\partial_v \left(\int_v^\infty d\bar{v}\; e^{\frac{1}{2M}(v-\bar{v})}A\right)\right\|^2_{L^2(\mathscr{H}^+_{\geq0})}\\&+\int_{S^2}\sin\theta d\theta d\phi \left(\left|\mathring{\slashed{\Delta}}\int_{\bar{v}=0}^\infty d\bar{v}\; e^{\frac{1}{2M}(v-\bar{v})}A\right|^2+6\left|\mathring{\slashednabla}\int_{\bar{v}=0}^\infty d\bar{v}\; e^{\frac{1}{2M}(v-\bar{v})}A\right|^2+8\Big|\int_{\bar{v}=0}^\infty d\bar{v}\; e^{\frac{1}{2M}(v-\bar{v})}A\Big|^2\right).
\end{split}
\end{align}
Define the space $\mathcal{E}^{T,+2}_{\mathscr{H}^+}$ to be the completion of $\Gamma_c(\mathscr{H}^+)$ under the norm
\begin{align}
    \|A\|_{\mathcal{E}^{T,+2}_{\mathscr{H}^+}}=\left\|\mathcal{A}_2(\mathcal{A}_2-2)\left(\int_v^\infty d\bar{v}\; e^{\frac{1}{2M}(v-\bar{v})}A\right)\right\|^2_{L^2(\mathscr{H}^+)}+\left\|6M\partial_v\left(\int_v^\infty d\bar{v}\; e^{\frac{1}{2M}(v-\bar{v})}A\right)\right\|^2_{L^2(\mathscr{H}^+)}.
\end{align}
Define the space $\mathcal{E}^{T,+2}_{\overline{\mathscr{H}^+}}$ to be the completion of the space consisting of symmetric traceless $S^2_{\infty,v}$ 2-tensor fields $A$ on $\overline{\mathscr{H}^+}$ such that $V^{-2}A\in \Gamma_c \left(\overline{\mathscr{H}^+}\right)$, under the same norm above evaluated over $\overline{\mathscr{H}^+}$.
\end{defin}
\begin{remark}\label{+2 norm is norm on H+}
Let $A\in\Gamma_c(\mathscr{H}_{\geq0}^+)$. If $\|A\|_{\mathcal{E}^{T,+2}_{\mathscr{H}^+_{\geq0}}}=0$ then $\int_v^\infty d\bar{v}\; e^{\frac{1}{2M}(v-\bar{v})}A=0$ for all $v$, which  implies that $A$ must vanish if it is smooth. Thus $\|\;\|_{\mathcal{E}^{T,+2}_{\mathscr{H}^+_{\geq0}}}$ defines a norm on $\Gamma_c(\mathscr{H}^+_{\geq0})$, which then extends to the Hilbert space $\mathcal{E}^{T,+2}_{\mathscr{H}^+_{\geq0}}$. The same applies to $\mathcal{E}^{T,+2}_{\mathscr{H}^+_{\geq0}}$,  $\mathcal{E}^{T,+2}_{\overline{\mathscr{H}^+}}$.
\end{remark}
\begin{defin}
    Define the space of future scattering states $\mathcal{E}^{T,+2}_{\mathscr{I}^+}$ on $\mathscr{I}^+$ to be the completion of $\Gamma_c (\mathscr{I}^+)$ under the norm
    \begin{align}
       \|A\|_{\mathcal{E}_{\mathscr{I}^+}^{T,+2}}=\left|\left|\partial_u^3A\right|\right|_{L^2(\mathscr{I}^+)}.
    \end{align}
\end{defin}
\begin{remark}\label{+2 norm is norm on scri+}
    The energy $\|\;\|_{\mathcal{E}_{\mathscr{I}^+}^{T,+2}}$ indeed defines a norm on $\Gamma_c(\mathscr{I}^+)$, which thus extends to a Hilbert space $\mathcal{E}^{T,+2}_{\mathscr{I}^+}$ when completed under $\|\;\|_{\mathcal{E}_{\mathscr{I}^+}^{T,+2}}$.
    We can identify $\mathcal{E}^{T,+2}_{\mathscr{I}^+}$ as the subset $A\in L^2_{loc}(\mathscr{I}^+)$ whose elements satisfy
    \begin{itemize}
        \item $\partial_u^3 A\in L^2(\mathscr{I}^+)$,
        \item $\lim_{u\longrightarrow\infty} \|A\|_{L^2(S^2)}=0$.
    \end{itemize}
    Hardy's inequality holds and we have on this subset
    \begin{align}
        \int_{\mathscr{H}^+_{\geq0}} dv \sin\theta d\theta d\phi \frac{|A|^2}{v^6+1}\lesssim\|\xi\|^2_{\mathcal{E}^{T,+2}_{\mathscr{H}^+_{\geq0}}} <\infty.
    \end{align}
\end{remark}
\begin{defin}\label{+2 backwards scattering H}
 Define the space of past scattering states $\mathcal{E}^{T,+2}_{\mathscr{H}^-}$ on $\mathscr{H}^-$ to be the completion of $\Gamma_c (\mathscr{H}^-)$ under the norm
 \begin{align}\label{this 2424}
\|A\|_{\mathcal{E}^{T,+2}_{\mathscr{H}^-}}=\left|\left|2(2M\partial_u)A-3(2M\partial_u)^2A+(2M\partial_u)^3A\right|\right|_{L^2(\mathscr{H}^-)}.
\end{align}
Define the space $\mathcal{E}^{T,+2}_{\overline{\mathscr{H}^-}}$ to be the closure of the space consisting of symmetric traceless $S^2_{u,-\infty}$ 2-tensor fields $A$ on $\overline{\mathscr{H}^-}$ such that $U^2A\in \Gamma_c \left(\overline{\mathscr{H}^-}\right)$, under the same norm above evaluated over $\overline{\mathscr{H}^-}$.
\end{defin}
\begin{remark}\label{+2 norm is norm on H-}
As mentioned in \Cref{introduction regular frame norm} of Section 1.3.2 of the introduction, the energy defined in \bref{this 2424} can be written using the Kruskal frame as
\begin{align}
    \|A\|_{\mathcal{E}^{T,+2}_{\mathscr{H}^-}}=\|U^{1/2}\partial_U^3U^2A\|_{L^2_UL^2(S^2)}.
\end{align}
This defines a norm on $\Gamma_c(\mathscr{H}^-)$, which then extends to the Hilbert space $\mathcal{E}^{T,+2}_{\mathscr{H}^-}$. It is possible to represent the elements of $\mathcal{E}^{T,+2}_{\mathscr{H}^-}$ as the subset $A\in L^2_{loc}(\mathscr{H}^-)$ whose elements satisfy
\begin{itemize}
    \item $\partial_uA$, $\partial_u^2 A$, $\partial_u^3 A \in L^2(\mathscr{H}^-)$,
    \item $\lim_{u\longrightarrow -\infty} \|A\|_{L^2(S^2)}=0$
\end{itemize}
Hardy's inequality holds on this space we have
\begin{align}
       \int_{\mathscr{H}^-} du \sin\theta d\theta d\phi \frac{|A|^2}{u^2+1}\lesssim\|\xi\|^2_{\mathcal{E}^{T,+2}_{\mathscr{H}^-}}<\infty.
   \end{align}
\end{remark} 
\begin{defin}\label{+2 backwards scattering scri}
    Define the space of past scattering states $\mathcal{E}^{T,+2}_{\mathscr{I}^-}$ on $\mathscr{I}^-$ to be the completion of the space
\begin{align}
    A\in\Gamma(\mathscr{I}^-): \int_{-\infty}^\infty dv\;A=0
\end{align}    
under the norm
\begin{align}
     \|A\|^2_{\mathcal{E}^{T,+2}_{\mathscr{I}^-}}=&\int_{\mathscr{I}^-} d\bar{v}\sin\theta d\theta d\phi\left[ 6M|A|^2+\left|\mathcal{A}_2(\mathcal{A}_2-2)\int_{\bar{v}}^\infty A\right| ^2\right].
\end{align}
\end{defin}
\begin{remark}\label{+2 norm is norm on scri-}
Let $A\in\Gamma_c(\mathscr{I}^-)$. If  $\|A\|^2_{\mathcal{E}^{T,+2}_{\mathscr{I}^-}}=0$ then $A=0$. Thus $\|\;\|^2_{\mathcal{E}^{T,+2}_{\mathscr{I}^-}}$ defines a norm on $\Gamma_c(\mathscr{I}^-)$ which then extends to the Hilbert space $\mathcal{E}^{T,+2}_{\mathscr{I}^-}$.
\end{remark}
\begin{thm}\label{+2 future forward scattering}
Forward evolution under the $+2$ Teukolsky equation \bref{T+2} from smooth, compactly supported data $(\upalpha,\upalpha')$ on $\Sigma^*$ gives rise to smooth radiation fields $(\upalpha_{\mathscr{H}^+},\upalpha_{\mathscr{I}^+})\in \mathcal{E}^{T,+2}_{\mathscr{H}^+_{\geq0}}\oplus\mathcal{E}^{T,+2}_{\mathscr{I}^+}$ where
\begin{enumerate}
\item $\upalpha_{\mathscr{H}^+}=2M\Omega^{2}{\alpha}\big|_{\mathscr{H}^+} \in \Gamma(\mathscr{H}^+)$,
\item $\upalpha_{\mathscr{I}^+}=\lim_{v\longrightarrow \infty} r^5\alpha(v,u,\theta^A)$, with $\upalpha_{\mathscr{I}^+}\in \Gamma(\mathscr{I}^+)$,
\end{enumerate}
with $\upalpha_{\mathscr{I}^+}, \alpha_{\mathscr{H}^+_{\geq0}}$ satisfying
\begin{align}
    \left|\left|(\upalpha,\upalpha')\right|\right|_{\mathcal{E}^{T,+2}_{\Sigma^*}}^2=\left|\left|\upalpha_{\mathscr{I}^+}\right|\right|_{\mathcal{E}^{T,+2}_{\mathscr{I}^+}}^2+\left|\left|\upalpha_{\mathscr{H}^+}\right|\right|_{\mathcal{E}^{T,+2}_{\mathscr{H}^+_{\geq0}}}^2.
\end{align}
This extends to a unitary map
\begin{align}
    {}^{(+2)}\mathscr{F^+}: \mathcal{E}^{T,+2}_{\Sigma^*}\longrightarrow \mathcal{E}^{T,+2}_{\mathscr{H}^+_{\geq0}}\oplus \mathcal{E}^{T,+2}_{\mathscr{I}^+}.
\end{align}
The same conclusions apply when replacing $\Sigma^*$ with $\Sigma$ and $\mathscr{H}^+_{\geq0}$ with $\mathscr{H}^+$, or when replacing with $\overline\Sigma$ and $\overline{\mathscr{H}^+}$. In the latter case, $(\upalpha,\upalpha')$ must be consistent with the well-posedness statement $\Cref{WP+2Sigma*}$ and consequently we obtain that $V^{-2}\upalpha_{{\mathscr{H}^+}}\in \Gamma(\overline{\mathscr{H}^+})$.
\end{thm}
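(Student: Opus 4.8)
The plan is to transport \Cref{forwardRW} through the Chandrasekhar hierarchy \bref{hier+}. By the very definition of $\|\,\cdot\,\|_{\mathcal{E}^{T,+2}_{\Sigma^*}}$ in \Cref{+2 norm on Sigma}, the energy identity on data is tautological once the radiation-field norms on the right are shown to match the Regge--Wheeler $T$-fluxes of $\Psi$; so the substance is threefold: (i) produce the smooth radiation fields $\upalpha_{\mathscr{H}^+},\upalpha_{\mathscr{I}^+}$ for $\alpha$ out of those of $\Psi$; (ii) identify the Regge--Wheeler fluxes of $\Psi$ with the norms \bref{+2 scattering norm on H+} and $\|\,\cdot\,\|_{\mathcal{E}^{T,+2}_{\mathscr{I}^+}}$; and (iii) upgrade the resulting isometry to a surjection.

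I would first treat data on $\Sigma^*$. Given $(\upalpha,\upalpha')\in\Gamma_c(\Sigma^*)\oplus\Gamma_c(\Sigma^*)$, I use the Teukolsky equation \bref{T+2} --- and its Kruskal-regular form \bref{T+2B} near $\Sigma^*\cap\mathscr{H}^+$ --- to express the second normal derivatives of $\alpha$ on $\Sigma^*$ in terms of $(\upalpha,\upalpha')$ and their tangential derivatives, so that the pair $\big(\Psi|_{\Sigma^*},\slashednabla_{n_{\Sigma^*}}\Psi|_{\Sigma^*}\big)$, with $\Psi$ as in \bref{hier+}, again lies in $\Gamma_c(\Sigma^*)\oplus\Gamma_c(\Sigma^*)$. \Cref{RWwpCauchy} then produces a smooth $\Psi$ on $J^+(\Sigma^*)$ and \Cref{WP+2Sigma*} a smooth $\alpha$; by the computation of \Cref{Chandra} this $\Psi$ solves \bref{RW}, and by uniqueness for \bref{RW} it equals $\big(\tfrac{r^2}{\Omega^2}\nablau\big)^2 r\Omega^2\alpha$ throughout $J^+(\Sigma^*)$. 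Applying \Cref{forwardRW} to $\Psi$ yields smooth $(\bm{\uppsi}_{\mathscr{H}^+},\bm{\uppsi}_{\mathscr{I}^+})$ with
\begin{align*}
\|(\upalpha,\upalpha')\|^2_{\mathcal{E}^{T,+2}_{\Sigma^*}}&=\big\|(\Psi|_{\Sigma^*},\slashednabla_{n_{\Sigma^*}}\Psi|_{\Sigma^*})\big\|^2_{\mathcal{E}^T_{\Sigma^*}}\\
&=\|\bm{\uppsi}_{\mathscr{I}^+}\|^2_{\mathcal{E}^T_{\mathscr{I}^+}}+\|\bm{\uppsi}_{\mathscr{H}^+}\|^2_{\mathcal{E}^T_{\mathscr{H}^+}}.
\end{align*}

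The core is the asymptotic analysis of $\alpha$. Reading \bref{hier+} as two $\nablau$-transport equations and integrating in $u$ from $\Sigma^*$ along the outgoing cones $\underline{\mathscr{C}}_v$: on cones of large $v$ the data vanish near $i^0$ by compact support, so $r^3\Omega\psi$ and then $r\Omega^2\alpha$ are recovered entirely from $r$-weighted $u$-integrals of $\Psi$, and letting $v\to\infty$ --- controlling these integrals uniformly via the $r^p$-weighted estimates of \Cref{section 6} --- produces $\upalpha_{\mathscr{I}^+}=\lim_{v\to\infty}r^5\alpha\in\Gamma(\mathscr{I}^+)$ and identifies it with a second $u$-antiderivative of $\bm{\uppsi}_{\mathscr{I}^+}$; hence $\bm{\uppsi}_{\mathscr{I}^+}=\partial_u^2\upalpha_{\mathscr{I}^+}$ and $\|\bm{\uppsi}_{\mathscr{I}^+}\|^2_{\mathcal{E}^T_{\mathscr{I}^+}}=\|\partial_u^3\upalpha_{\mathscr{I}^+}\|^2_{L^2(\mathscr{I}^+)}=\|\upalpha_{\mathscr{I}^+}\|^2_{\mathcal{E}^{T,+2}_{\mathscr{I}^+}}$. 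Carrying the same $u$-integration to $u\to\infty$ converges because $\Omega^2$ decays exponentially there, yielding $\upalpha_{\mathscr{H}^+}=2M\Omega^2\alpha|_{\mathscr{H}^+}\in\Gamma(\mathscr{H}^+)$; evaluating \bref{eq:d4Psi}--\bref{eq:d4d4Psi} on $\mathscr{H}^+$, where $\nablav=\partial_v$ is tangential, $\Omega^2\to0$, and $\nablau\,r\Omega^2\alpha$ vanishes, gives the first-order relation $\partial_v^2\Psi-\tfrac{1}{2M}\partial_v\Psi=\mathcal{A}_2(\mathcal{A}_2-2)\upalpha_{\mathscr{H}^+}-6M\partial_v\upalpha_{\mathscr{H}^+}$ on $\mathscr{H}^+$. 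Solving this for $\partial_v\Psi|_{\mathscr{H}^+}$ under $\partial_v\Psi\to0$ as $v\to\infty$ (valid for compactly supported data) expresses it as $-\mathcal{A}_2(\mathcal{A}_2-2)I[\upalpha_{\mathscr{H}^+}]+6M\partial_v I[\upalpha_{\mathscr{H}^+}]$ with $I[\upalpha_{\mathscr{H}^+}](v)=\int_v^\infty e^{(v-\bar{v})/2M}\upalpha_{\mathscr{H}^+}\,d\bar{v}$; squaring, integrating over $\mathscr{H}^+_{\geq0}$ and integrating the cross term by parts in $v$ turns the Regge--Wheeler flux into exactly the three terms of \bref{+2 scattering norm on H+}, the boundary contribution at $v=0$ being the $S^2$-integral there. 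Hence $\|\bm{\uppsi}_{\mathscr{H}^+}\|^2_{\mathcal{E}^T_{\mathscr{H}^+}}=\|\upalpha_{\mathscr{H}^+}\|^2_{\mathcal{E}^{T,+2}_{\mathscr{H}^+_{\geq0}}}$, and ${}^{(+2)}\mathscr{F}^+$ is an isometry on smooth compactly supported data, extending by continuity to $\mathcal{E}^{T,+2}_{\Sigma^*}$.

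Surjectivity follows because $\mathscr{F}^+$ of \Cref{forwardRW} is onto $\mathcal{E}^T_{\mathscr{H}^+_{\geq0}}\oplus\mathcal{E}^T_{\mathscr{I}^+}$ while the radiation-level maps $\upalpha_{\mathscr{I}^+}\mapsto\partial_u^2\upalpha_{\mathscr{I}^+}$ and $\upalpha_{\mathscr{H}^+}\mapsto\partial_v\Psi|_{\mathscr{H}^+}$ are, by construction of the target norms, isometries onto dense subspaces; hence every smooth compactly supported Regge--Wheeler scattering pair arises from some $\alpha$, the range is as claimed, and ${}^{(+2)}\mathscr{F}^+$ is a unitary isomorphism. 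The $\Sigma$ and $\overline\Sigma$ statements are obtained the same way, except that near $\mathcal{B}$ one works with the Kruskal-regular quantity $\widetilde{\alpha}$ of \Cref{regular} and equation \bref{T+2B} (using \Cref{RWwpSigmabar} in place of \Cref{RWwpCauchy}), which also yields $V^{-2}\upalpha_{\mathscr{H}^+}\in\Gamma(\overline{\mathscr{H}^+})$. I expect the main obstacle to be the asymptotics of step (ii): \Cref{forwardRW} supplies only $T$-energy control of $\Psi$ and no pointwise decay, so extracting the limits $\upalpha_{\mathscr{I}^+},\upalpha_{\mathscr{H}^+}$ and passing to the limit in the transport integrals requires the full weighted estimates of \Cref{section 6} (and \Cref{section 7} for the $-2$ equation), together with a careful treatment of the Eddington--Finkelstein frame degeneration at $\overline{\mathscr{H}^+}$.
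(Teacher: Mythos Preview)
Your approach for (i) and (ii) is essentially the paper's: lift to $\Psi$ via \bref{hier+}, apply \Cref{forwardRW}, and identify the Regge--Wheeler fluxes with the Teukolsky norms using the asymptotics worked out in \Cref{+2 radiation} --- in particular \bref{Psi out of alpha at scri} on $\mathscr{I}^+$ and the integration of \bref{equation for Psi out of alpha on H} on $\mathscr{H}^+$ leading to \Cref{flux+2horizon} and \bref{+2 norm on H+ beyond B}. The paper's proof in \Cref{subsubsection 8.1.1 forwards scattering +2} invokes precisely these ingredients together with the decay statements of \Cref{alpha+2ptwisedecay} and \Cref{T+1+2scridecay}.

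The divergence is in your step (iii). The paper does \emph{not} prove surjectivity of ${}^{(+2)}\mathscr{F}^+$ within \Cref{+2 future forward scattering}; it establishes only the isometry on compactly supported data and its extension by density. Surjectivity is obtained afterwards, in \Cref{+2B- inverts +2F+}, by constructing the backwards map ${}^{(+2)}\mathscr{B}^-$ explicitly (\Cref{+2 backwards existence}) and showing it inverts ${}^{(+2)}\mathscr{F}^+$. Your shortcut --- that surjectivity of $\mathscr{F}^+$ for $\Psi$ plus invertibility of the radiation-level maps forces surjectivity for $\alpha$ --- hides a genuine step: from RW scattering data you recover $\Psi$, and you can invert the radiation-level identifications to propose $(\upalpha_{\mathscr{H}^+},\upalpha_{\mathscr{I}^+})$, but you have not yet produced a \emph{solution} $\alpha$ of \bref{T+2} realising them. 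Defining $\psi,\alpha$ by integrating \bref{hier+} from $\mathscr{H}^+$ gives a candidate, but \bref{commutation relation} only yields $\big(\tfrac{r^2}{\Omega^2}\nablau\big)^2\mathcal{T}^{+2}r\Omega^2\alpha=0$; one must verify the two ``initial'' conditions $\mathcal{T}^{+2}r\Omega^2\alpha|_{\mathscr{H}^+}=\tfrac{r^2}{\Omega^2}\nablau\mathcal{T}^{+2}r\Omega^2\alpha|_{\mathscr{H}^+}=0$, show $r^5\alpha\to\upalpha_{\mathscr{I}^+}$, and prove $(\Omega^2\alpha,\slashednabla_{n_{\Sigma^*}}\Omega^2\alpha)|_{\Sigma^*}\in\mathcal{E}^{T,+2}_{\Sigma^*}$ (the content of \Cref{+2 backwards inclusion 7/2}). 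None of this is automatic, and the paper devotes \Cref{subsubsection 8.1.2 backwards scattering +2} to it.
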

\begin{thm}\label{+2 future backward scattering}
Let $\upalpha_{\mathscr{I}^+}\in \Gamma_c (\mathscr{I}^+), \upalpha_{\mathscr{H}^+} \in \Gamma_c (\mathscr{H}^+_{\geq0})$. Then there exists a unique solution $\alpha$ to \cref{T+2} in $J^+(\Sigma^*)$ which is smooth, such that
\begin{align}
    \lim_{v\longrightarrow\infty} r^5\alpha(u,v,\theta^A)=\upalpha_{\mathscr{I}^+},\qquad\qquad \Omega^{2}\alpha\big|_{\mathscr{H}^+_{\geq0}}=\upalpha_{\mathscr{H}^+},
\end{align}
with $(\Omega^2\alpha|_{\Sigma^*},\slashednabla_{n_{\Sigma^*}}\Omega^2\alpha|_{\Sigma^*})\in \mathcal{E}^{T,+2}_{\Sigma^*} $ and $
        \left\|(\Omega^2\alpha|_{\Sigma^*},\slashednabla_{n_{\Sigma^*}}\Omega^2\alpha|_{\Sigma^*})\right\|_{\mathcal{E}^{T,+2}_{\Sigma^*}}^2=\left\|\upalpha_{\mathscr{I}^+}\right\|_{\mathcal{E}^{T,+2}_{\mathscr{I}^+}}^2+\left|\left|\upalpha_{\mathscr{H}^+}\right|\right|_{\mathcal{E}^{T,+2}_{\mathscr{H}^+}}^2$. 
This extends to a unitary map 
\begin{align}
    {}^{(+2)}\mathscr{B}^-: \mathcal{E}^{T,+2}_{\mathscr{H}^+_{\geq0}}\oplus \mathcal{E}^{T,+2}_{\mathscr{I}^+}\longrightarrow \mathcal{E}^{T,+2}_{{\Sigma^*}},
\end{align}
which inverts the map ${}^{(+2)}\mathscr{F}^+$ of \Cref{+2 future forward scattering}
\begin{align}
{}^{(+2)}\mathscr{B}^-\circ{}^{(+2)}\mathscr{F}^+={}^{(+2)}\mathscr{F}^+\circ{}^{(+2)}\mathscr{B}^-=Id.
\end{align}
The same conclusions apply when replacing $\Sigma^*$ with $\Sigma$ and $\mathscr{H}^+_{\geq0}$ with $\mathscr{H}^+$, or when replacing with $\overline\Sigma$ and $\overline{\mathscr{H}^+}$. In the latter case, we require that $V^{-2}\alpha_{{\mathscr{H}^+}}\in \Gamma(\overline{\mathscr{H}^+})$ and with that $(\alpha|_{\Sigma^*},\slashednabla_{n_{\Sigma^*}}\alpha|_{\Sigma^*})$ is consistent with \Cref{WP+2Sigma*}.
\end{thm}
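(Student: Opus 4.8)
The plan is to reduce the backward problem for \bref{T+2} to the backward scattering theory for the Regge--Wheeler equation, \Cref{backwardRW}, through the physical-space Chandrasekhar transformation \bref{hier+}, and then to recover $\alpha$ from $\Psi$ by integrating the transport relations in \bref{hier+} in the ingoing direction, starting from $\overline{\mathscr{H}^+}$. First I would translate the prescribed Teukolsky radiation fields $\upalpha_{\mathscr{H}^+}\in\Gamma_c(\mathscr{H}^+_{\geq0})$, $\upalpha_{\mathscr{I}^+}\in\Gamma_c(\mathscr{I}^+)$ into a pair of Regge--Wheeler radiation fields. Taking $v\to\infty$ in \bref{hier+} and tracking the $r$-weights shows that the radiation field of the associated $\Psi$ on $\mathscr{I}^+$ must be, up to a universal constant, $\bm{\uppsi}_{\mathscr{I}^+}:=\partial_u^2\upalpha_{\mathscr{I}^+}$, which lies in $\Gamma_c(\mathscr{I}^+)$ and satisfies $\|\bm{\uppsi}_{\mathscr{I}^+}\|^2_{\mathcal{E}^T_{\mathscr{I}^+}}=\|\partial_u^3\upalpha_{\mathscr{I}^+}\|^2_{L^2(\mathscr{I}^+)}=\|\upalpha_{\mathscr{I}^+}\|^2_{\mathcal{E}^{T,+2}_{\mathscr{I}^+}}$. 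On $\mathscr{H}^+$, restricting \bref{hier+} (equivalently \bref{eq:d4Psi}, \bref{eq:d4d4Psi}) after passing to the Kruskal frame as in \Cref{regular}, one finds that $\bm{\uppsi}_{\mathscr{H}^+}:=\Psi|_{\mathscr{H}^+}$ is obtained from $\upalpha_{\mathscr{H}^+}$ by an explicit integro-differential operator along $\mathscr{H}^+$ whose only non-local ingredient is the weighted integral $v\mapsto\int_v^\infty e^{\frac1{2M}(v-\bar v)}\upalpha_{\mathscr{H}^+}\,d\bar v$ appearing in \bref{+2 scattering norm on H+}; the norm $\|\cdot\|_{\mathcal{E}^{T,+2}_{\mathscr{H}^+}}$ is \emph{defined} so that $\|\bm{\uppsi}_{\mathscr{H}^+}\|^2_{\mathcal{E}^T_{\mathscr{H}^+}}=\|\upalpha_{\mathscr{H}^+}\|^2_{\mathcal{E}^{T,+2}_{\mathscr{H}^+}}$, the boundary term in \bref{+2 scattering norm on H+} accounting for the endpoint $v=0$ in the $\mathscr{H}^+_{\geq0}$ (that is $\Sigma^*$) case, and this operator is an isomorphism between the relevant smooth, compactly supported classes.

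Given $(\bm{\uppsi}_{\mathscr{H}^+},\bm{\uppsi}_{\mathscr{I}^+})$ I would invoke \Cref{backwardRW} to obtain the unique smooth $\Psi$ solving \bref{RW} on $J^+(\Sigma^*)$ with these radiation fields and with $\|(\Psi|_{\Sigma^*},\slashednabla_{n_{\Sigma^*}}\Psi|_{\Sigma^*})\|^2_{\mathcal{E}^T_{\Sigma^*}}=\|\bm{\uppsi}_{\mathscr{I}^+}\|^2_{\mathcal{E}^T_{\mathscr{I}^+}}+\|\bm{\uppsi}_{\mathscr{H}^+}\|^2_{\mathcal{E}^T_{\mathscr{H}^+}}$. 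I then define $\alpha$ on $J^+(\Sigma^*)$ by integrating, along ingoing rays starting from $\mathscr{H}^+$, the two transport equations in \bref{hier+}: the constant of integration for $r^3\Omega\psi$ on $\mathscr{H}^+$ is pinned down by requiring \bref{eq:d4Psi} to hold there, which expresses $(\mathcal{A}_2+1)(r^3\Omega\psi)|_{\mathscr{H}^+}$ in terms of $\partial_v\Psi|_{\mathscr{H}^+}$ and $\upalpha_{\mathscr{H}^+}$ and is invertible because $\mathcal{A}_2+1=\mathring{\slashed{\Delta}}-1$ is elliptic on symmetric traceless $S^2$ $2$-tensors; the constant for $r\Omega^2\alpha$ is $\upalpha_{\mathscr{H}^+}$ itself, so $\Omega^2\alpha|_{\mathscr{H}^+}=\tfrac1{2M}\upalpha_{\mathscr{H}^+}$ as required. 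Reading the commutation identity \bref{commutator} in reverse, the Teukolsky operator applied to $r\Omega^2\alpha$ is annihilated by $\left(\frac{r^2}{\Omega^2}\nablau\right)^2$; a transport argument then shows it vanishes identically once it vanishes on $\mathscr{H}^+$, which is exactly what the above choice of constants guarantees, so $\alpha$ solves \bref{T+2} on $J^+(\Sigma^*)$. The energy identity on $\Sigma^*$ and the membership of the induced data in $\mathcal{E}^{T,+2}_{\Sigma^*}$ are then immediate from \Cref{+2 norm on Sigma}, since that space is by definition the completion under the pulled-back Regge--Wheeler norm.

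It remains to check that $\alpha$ realises $\upalpha_{\mathscr{I}^+}$ as $\lim_{v\to\infty}r^5\alpha$, that $\alpha$ is smooth (and $V^{-2}\Omega^2\alpha$ smooth up to $\mathcal{B}$ in the $\overline{\Sigma}$ case), and that the resulting map inverts ${}^{(+2)}\mathscr{F}^{+}$. The first point requires commuting the limit $v\to\infty$ with the ingoing integrations that define $\alpha$, which I would justify using decay of $\Psi$ and $r^3\Omega\psi$ towards $\mathscr{I}^+$ extracted from the estimates underlying \Cref{backwardRW} together with Hardy inequalities of the type \bref{weighted L2 statement}; the $r$-weights in \bref{hier+} then reproduce precisely $\bm{\uppsi}_{\mathscr{I}^+}=\partial_u^2\upalpha_{\mathscr{I}^+}$, forcing $\lim_{v\to\infty}r^5\alpha=\upalpha_{\mathscr{I}^+}$. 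Smoothness follows from smoothness of $\Psi$ and of the data, rewriting \bref{hier+} in Kruskal coordinates as in \Cref{regular}, consistently with \Cref{WP+2Sigma*} and \Cref{WP+2Sigmabar}. Unitarity and ${}^{(+2)}\mathscr{B}^-\circ{}^{(+2)}\mathscr{F}^+={}^{(+2)}\mathscr{F}^+\circ{}^{(+2)}\mathscr{B}^-=\mathrm{Id}$ follow by composing the bijectivity of the Regge--Wheeler maps (\Cref{backwardRW}), the bijectivity of the boundary reconstructions of the first paragraph, and the injectivity of $\alpha\mapsto\Psi$ on the scattering class, which is the mechanism already used in the proof of \Cref{+2 norm on Sigma is coercive} (integrate the ODE satisfied by $F=(\mathcal{A}_2-2+\tfrac{6M}{r})r\Omega^2\alpha$ along $\Sigma$ and invoke the Poincar\'e inequality \Cref{poincaresection} to force $F$, and hence $\alpha$, to vanish). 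The $\Sigma$ and $\overline{\Sigma}$ versions (and with them $\mathscr{H}^+$, $\overline{\mathscr{H}^+}$) follow by repeating the argument with the corresponding cases of \Cref{backwardRW}, \Cref{WP+2Sigma*}, and \Cref{WP+2Sigmabar}.

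The step I expect to be the main obstacle is the matching of the radiation field at $\mathscr{I}^+$ under the reconstruction: because the transport equations \bref{hier+} are integrated from the event horizon and not from null infinity, the asymptotics of $\alpha$ near $\mathscr{I}^+$ are not available for free, and establishing that the $r$-weighted limit exists and equals the prescribed $\upalpha_{\mathscr{I}^+}$ demands a careful interplay between the finite-energy bounds of \Cref{backwardRW}, the exact powers of $r$ in \bref{hier+}, and Hardy-type inequalities, rather than any soft argument. A secondary difficulty is keeping the regularity at the bifurcation sphere $\mathcal{B}$ under control when passing back and forth between the Eddington--Finkelstein and Kruskal descriptions of the transport equations.
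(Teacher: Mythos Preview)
Your overall strategy matches the paper's: pass to Regge--Wheeler via \bref{hier+}, invoke \Cref{backwardRW} for $\Psi$, reconstruct $\psi,\alpha$ by integrating the transport hierarchy from $\mathscr{H}^+$, and verify the Teukolsky equation via the commutation identity together with vanishing of $\mathcal{T}^{+2}r\Omega^2\alpha$ and its first $\frac{r^2}{\Omega^2}\nablau$-derivative on $\mathscr{H}^+$. The paper writes $\psi|_{\mathscr{H}^+}$ down by an explicit integral formula rather than by inverting $\mathcal{A}_2+1$, but your route is equivalent.

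There is, however, a real gap. You assert that membership of $(\Omega^2\alpha|_{\Sigma^*},\slashednabla_{n_{\Sigma^*}}\Omega^2\alpha|_{\Sigma^*})$ in $\mathcal{E}^{T,+2}_{\Sigma^*}$ is ``immediate'' because that space is the completion under the pulled-back Regge--Wheeler norm. It is not. The completion is taken over \emph{compactly supported} Teukolsky data, and the $\alpha$ you construct is smooth but not compactly supported on $\Sigma^*$. Knowing $(\Psi|_{\Sigma^*},\slashednabla_{n_{\Sigma^*}}\Psi|_{\Sigma^*})\in\mathcal{E}^T_{\Sigma^*}$ does not place $\alpha$ in $\mathcal{E}^{T,+2}_{\Sigma^*}$ by itself: the Chandrasekhar map on compactly supported Teukolsky data does not surject onto compactly supported Regge--Wheeler data, so the isometric embedding $\mathcal{E}^{T,+2}_{\Sigma^*}\hookrightarrow\mathcal{E}^T_{\Sigma^*}$ could a priori have proper image. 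The paper closes this (\Cref{+2 backwards inclusion 7/2}) by a cutoff argument: with $\alpha_n=\xi_n\alpha$ for radial cutoffs at scale $R_n\to\infty$, one computes $\Psi_n-\xi_n\Psi=r^2(r^2\xi_n')'\,r\Omega^2\alpha-2r^2\xi_n'\,r^3\Omega\psi$ and must show this tends to $0$ in $\mathcal{E}^T_{\Sigma^*}$. That forces you to prove $r^{7/2}\Omega\psi|_{\Sigma^*}$, $r^{7/2}\Omega^2\alpha|_{\Sigma^*}$, $r^{3/2}\nablav r^3\Omega\psi|_{\Sigma^*}$ and $r^{3/2}\nablav r^3\Omega^2\alpha|_{\Sigma^*}$ all decay towards $i^0$, which the paper obtains by integrating the transport equations along ingoing cones and using energy conservation plus Gr\"onwall. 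This is the substantive step you are missing.

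On the step you flag as the main obstacle, the matching $\lim_{v\to\infty}r^5\alpha=\upalpha_{\mathscr{I}^+}$ is more direct than you expect. Compact support of the scattering data makes $\Psi$ (and hence $\psi,\alpha$) vanish for $u>u_+$, so near $\mathscr{I}^+$ the transport equation for $r^5\Omega^{-1}\psi$ is integrated on $[u,u_+]$ from zero; Gr\"onwall gives $|r^5\Omega^{-1}\psi|\lesssim (r(u,v)/r(u_+,v))^2\int_u^{u_+}|\Psi|$, and uniform convergence of $\Psi$ to $\bm{\uppsi}_{\mathscr{I}^+}$ immediately yields the limit for $r^5\psi$ and then for $r^5\alpha$. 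No Hardy inequality is needed at this stage.
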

\begin{thm}\label{+2 past forward scattering}
Evolution from $(\upalpha,\upalpha')\in\Gamma_c(\Sigma)\times\Gamma_c(\Sigma)$ to $J^-(\Sigma)$ gives rise to radiation fields on $\mathscr{H}^-,\mathscr{I}^-$ analogously to \Cref{+2 future forward scattering}, where the radiation fields are defined by
\begin{align}
    \lim_{v\longrightarrow\infty} r\alpha(u,v,\theta^A)=\upalpha_{\mathscr{I}^-},\qquad\qquad 2M\Omega^{-2}\alpha\big|_{\mathscr{H}^-}=\upalpha_{\mathscr{H}^-}.
\end{align}
This extends to a unitary map
\begin{align}
    {}^{(+2)}\mathscr{F^-}: \mathcal{E}^{T,+2}_{\Sigma}\longrightarrow \mathcal{E}^{T,+2}_{\mathscr{H}^-}\oplus \mathcal{E}^{T,+2}_{\mathscr{I}^-},
\end{align}
with inverse ${}^{(+2)}\mathscr{B}^+:\mathcal{E}^{T,+2}_{\mathscr{H}^-}\oplus \mathcal{E}^{T,+2}_{\mathscr{I}^-}\longrightarrow \mathcal{E}^{T,+2}_{\Sigma}$. The same conclusions apply when replacing $\Sigma$ with $\overline{\Sigma}$ and $\mathscr{H}^-$ with $\overline{\mathscr{H}^-}$. In this case, we require that  $(U^{2}\Omega^{-2}\upalpha,U^{2}\Omega^{-2}\upalpha')$ are smooth up to and including $\mathcal{B}$, and consequently we obtain that $U^{2}\upalpha_{{\mathscr{H}^-}}\in \Gamma(\overline{\mathscr{H}^-})$.
\end{thm}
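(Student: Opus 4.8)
The plan is to reduce the statement to the Regge--Wheeler scattering theory of Theorem~1 by the physical-space Chandrasekhar transformation, exactly as in the forward-in-time statement \Cref{+2 future forward scattering} but with the roles of $\mathscr{H}^+,\mathscr{I}^+$ played by $\mathscr{H}^-,\mathscr{I}^-$ and forward evolution replaced by backward evolution; equivalently, one may first apply the time inversion of \Cref{time inversion}, which turns \bref{T+2} on $J^-(\Sigma)$ into \bref{T-2} on $J^+(\Sigma)$ and hence converts the assertion into the future forward scattering statement for the $-2$ equation (\Cref{-2 future forward scattering}). I describe the direct route.

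First I would invoke the past-development analogue of \Cref{WP+2Sigmabar}, legitimate by \Cref{time inversion}, to obtain from $(\upalpha,\upalpha')\in\Gamma_c(\Sigma)\times\Gamma_c(\Sigma)$ the solution $\alpha$ of \bref{T+2} on $J^-(\Sigma)$, and set $\Psi:=\left(\frac{r^2}{\Omega^2}\nablau\right)^2 r\Omega^2\alpha$ as in \bref{hier+}. By the commutation computation \bref{commutator}, $\Psi$ solves the Regge--Wheeler equation \bref{RW} on $J^-(\Sigma)$. The Cauchy data $(\Psi|_\Sigma,\slashednabla_{n_\Sigma}\Psi|_\Sigma)$ are obtained from $(\upalpha,\upalpha')$ by differential operators only — second transverse derivatives traded for tangential ones via \bref{T+2d} — so they lie in $\Gamma_c(\Sigma)\times\Gamma_c(\Sigma)$ and, by \Cref{+2 norm on Sigma}, $\|(\upalpha,\upalpha')\|_{\mathcal{E}^{T,+2}_\Sigma}=\|(\Psi|_\Sigma,\slashednabla_{n_\Sigma}\Psi|_\Sigma)\|_{\mathcal{E}^T_\Sigma}$. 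Applying the Regge--Wheeler map $\mathscr{F}^-$ of \Cref{RW isomorphisms} produces smooth radiation fields $\bm{\uppsi}_{\mathscr{H}^-}\in\mathcal{E}^T_{\mathscr{H}^-}$, $\bm{\uppsi}_{\mathscr{I}^-}\in\mathcal{E}^T_{\mathscr{I}^-}$ with $\|(\Psi|_\Sigma,\slashednabla_{n_\Sigma}\Psi|_\Sigma)\|^2_{\mathcal{E}^T_\Sigma}=\|\bm{\uppsi}_{\mathscr{H}^-}\|^2_{\mathcal{E}^T_{\mathscr{H}^-}}+\|\bm{\uppsi}_{\mathscr{I}^-}\|^2_{\mathcal{E}^T_{\mathscr{I}^-}}$.

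The substantive step is to exhibit the radiation fields of $\alpha$ and rewrite these two terms. Near $\mathscr{H}^-$ the Kruskal field $\partial_U$ is tangential, and by the identity $\Psi=\left(2Mr^2 f(r)\slashednabla_U\right)^2 r\widetilde\alpha$ with $\widetilde\alpha=U^2\Omega^{-2}\alpha$ recorded in \Cref{Chandra}, the restriction $\bm{\uppsi}_{\mathscr{H}^-}=\Psi|_{\mathscr{H}^-}$ reduces on $\mathscr{H}^-$ to a constant multiple of $\partial_U^2\widetilde\alpha|_{\mathscr{H}^-}$ (using $\partial_U r=0$ there); applying one further $\partial_u=-\tfrac{U}{2M}\partial_U$ and passing to the measure $du\propto|U|^{-1}dU$ converts the Regge--Wheeler flux $\int_{\mathscr{H}^-}|\partial_u\bm{\uppsi}_{\mathscr{H}^-}|^2$ into $\big\||U|^{1/2}\partial_U^3\big(U^2\upalpha_{\mathscr{H}^-}\big)\big\|^2_{L^2}$ with $\upalpha_{\mathscr{H}^-}=2M\Omega^{-2}\alpha|_{\mathscr{H}^-}$, i.e. the norm of \Cref{+2 backwards scattering H} in the Kruskal form of \Cref{+2 norm is norm on H-}. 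Near $\mathscr{I}^-$ the direction $\partial_u$ is transverse, so this algebra is unavailable; instead I would integrate the transport equations of the hierarchy \bref{hier+} along outgoing rays to show $r\Omega^2\alpha$, $r^3\Omega\psi$ and $\Psi$ have limits $\upalpha_{\mathscr{I}^-}$, $\psi_{\mathscr{I}^-}$, $\bm{\uppsi}_{\mathscr{I}^-}$ toward $\mathscr{I}^-$, and then pass to the limit in the constraints \bref{eq:d4Psi}, \bref{eq:d4d4Psi} (and the analogous relation one level down for $r^3\Omega\psi$) to obtain the algebraic identities $\psi_{\mathscr{I}^-}=\mathcal{A}_2\!\int^v_{-\infty}\upalpha_{\mathscr{I}^-}$ and $\partial_v\bm{\uppsi}_{\mathscr{I}^-}=(\mathcal{A}_2-2)\psi_{\mathscr{I}^-}-6M\upalpha_{\mathscr{I}^-}=\mathcal{A}_2(\mathcal{A}_2-2)\!\int^v_{-\infty}\upalpha_{\mathscr{I}^-}-6M\upalpha_{\mathscr{I}^-}$. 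Expanding $\int_{\mathscr{I}^-}|\partial_v\bm{\uppsi}_{\mathscr{I}^-}|^2$ and integrating the cross term by parts in $v$, the boundary contributions vanish precisely because $\int_{-\infty}^\infty\upalpha_{\mathscr{I}^-}=0$, leaving $\big\|\mathcal{A}_2(\mathcal{A}_2-2)\!\int^v_{-\infty}\upalpha_{\mathscr{I}^-}\big\|^2_{L^2}+\|6M\upalpha_{\mathscr{I}^-}\|^2_{L^2}$, the norm of \Cref{+2 backwards scattering scri}. Adding the two pieces gives the isometry defining ${}^{(+2)}\mathscr{F}^-$; its surjectivity and the identification of the inverse ${}^{(+2)}\mathscr{B}^+$ follow by reversing each arrow — $\mathscr{B}^+$ for \bref{RW} from \Cref{RW isomorphisms}, the uniform-ellipticity argument of \Cref{+2 norm on Sigma is coercive} to recover $(\upalpha,\upalpha')$ on $\Sigma$ from $\Psi$ there, and the weighted-$L^2$/Hardy descriptions of \Cref{+2 norm is norm on H-} and \Cref{+2 norm is norm on scri-} to invert the reconstruction (the flux condition $\int_{-\infty}^\infty\upalpha_{\mathscr{I}^-}=0$ is automatic for radiation fields of compactly supported data and is exactly what places $\int^v_{-\infty}$ in $L^2(\mathscr{I}^-)$). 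The $\overline\Sigma$ version is obtained by rerunning the argument in the Kruskal frame near $\mathcal{B}$, using the regular equation \bref{T+2B} and \bref{RW} in Kruskal coordinates together with \Cref{WP+2Sigmabar} and \Cref{RWwpSigmabar}; the hypothesis that $(U^2\Omega^{-2}\upalpha,U^2\Omega^{-2}\upalpha')$ be smooth up to $\mathcal{B}$ makes $\Psi$'s Kruskal data smooth up to $\mathcal{B}$, and propagating this yields $U^2\upalpha_{\mathscr{H}^-}\in\Gamma(\overline{\mathscr{H}^-})$.

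I expect the main obstacle to be the reconstruction near $\mathscr{I}^-$: proving that every iterate in the hierarchy \bref{hier+} converges toward $\mathscr{I}^-$ at the correct rate (so that $\psi_{\mathscr{I}^-}$, $\bm{\uppsi}_{\mathscr{I}^-}$ genuinely exist and the subleading terms do not contaminate the limits of \bref{eq:d4Psi}, \bref{eq:d4d4Psi}), establishing the vanishing-flux property of $\upalpha_{\mathscr{I}^-}$, and — most delicately — verifying that the reconstructed fields land in the completed Hilbert spaces $\mathcal{E}^{T,+2}_{\mathscr{H}^-}$, $\mathcal{E}^{T,+2}_{\mathscr{I}^-}$ and not merely in $\Gamma(\mathscr{H}^-)$, $\Gamma(\mathscr{I}^-)$: since these need not be compactly supported even when $\Psi$'s radiation fields are, this forces one to rely on the weighted-$L^2$ and Hardy-inequality characterizations rather than any naive decay estimate.
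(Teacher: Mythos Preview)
The paper's proof is simply the time-inversion observation you make at the outset: invoking \Cref{time inversion}, \bref{T+2} on $J^-(\Sigma)$ becomes \bref{T-2} on $J^+(\Sigma)$, so the statement (including the inverse ${}^{(+2)}\mathscr{B}^+$) follows at once from \Cref{-2 future forward scattering} and \Cref{-2 future backward scattering}. The paper says exactly this in one line in \Cref{subsection 8.3 past scattering +2-2}.

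Your alternative direct route is sound for the forward map ${}^{(+2)}\mathscr{F}^-$: the Kruskal computation on $\mathscr{H}^-$ is correct (indeed $\bm\uppsi_{\mathscr{H}^-}=(2M)^2\partial_U^2(U^2\upalpha_{\mathscr{H}^-})$, as one checks by rewriting the time-reversed analogue of \bref{-2 Psi out of alpha H+}--\bref{-2 expression on H is regular} in the Kruskal frame), and your $\mathscr{I}^-$ identities are the time-inverted forms of \bref{formula for partialu -2 RW in backwards direction}. However, your construction of ${}^{(+2)}\mathscr{B}^+$ has a genuine gap. \Cref{+2 norm on Sigma is coercive} establishes only that the Chandrasekhar map $(\upalpha,\upalpha')\mapsto(\Psi|_\Sigma,\slashednabla_{n_\Sigma}\Psi|_\Sigma)$ is \emph{injective} on compactly supported data; it gives no mechanism to recover $(\upalpha,\upalpha')$ from generic Regge--Wheeler Cauchy data on $\Sigma$, and indeed that map is not surjective. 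The correct direct argument (the time-inverted \Cref{-2 backwards existence}) bypasses $\Sigma$ entirely: one first manufactures $\psi_{\mathscr{H}^-},\bm\uppsi_{\mathscr{H}^-},\psi_{\mathscr{I}^-},\bm\uppsi_{\mathscr{I}^-}$ from the given $\upalpha_{\mathscr{H}^-},\upalpha_{\mathscr{I}^-}$, solves Regge--Wheeler backward to obtain $\Psi$ in the bulk, then integrates the transport hierarchy \bref{hier+} \emph{from $\mathscr{H}^-\cup\mathscr{I}^-$ into the interior} to define $\psi$ and $\alpha$, and finally verifies that the resulting $\alpha$ actually solves \bref{T+2}.
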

\begin{thm}\label{scatteringthm+2}
    The maps
    \begin{align}
        {}^{(+2)}\mathscr{S}&={}^{(+2)}\mathscr{F}^+\circ{}^{(+2)}\mathscr{B}^+:\mathcal{E}^{T,+2}_{\mathscr{H}^-}\oplus \mathcal{E}^{T,+2}_{\mathscr{I}^-}\longrightarrow \mathcal{E}^{T,+2}_{\mathscr{H}^+}\oplus\mathcal{E}^{T,+2}_{\mathscr{I}^+},\\
            {}^{(+2)}\mathscr{S}&={}^{(+2)}\mathscr{F}^+\circ{}^{(+2)}\mathscr{B}^+:\mathcal{E}^{T,+2}_{\overline{\mathscr{H}^-}}\oplus \mathcal{E}^{T,+2}_{\mathscr{I}^-}\longrightarrow \mathcal{E}^{T,+2}_{\overline{\mathscr{H}^+}}\oplus\mathcal{E}^{T,+2}_{\mathscr{I}^+}
    \end{align}
constitute unitary Hilbert space isomorphism with inverses %${}^{(+2)}\mathscr{F}^-\circ{}^{(+2)}\mathscr{B}^+$ 
\begin{align}
    {}^{(+2)}\mathscr{S}^-={}^{(+2)}\mathscr{F}^-\circ{}^{(+2)}\mathscr{B}^-:\mathcal{E}^{T,+2}_{\mathscr{H}^+}\oplus \mathcal{E}^{T,+2}_{\mathscr{I}^+}\longrightarrow \mathcal{E}^{T,+2}_{\mathscr{H}^-}\oplus\mathcal{E}^{T,+2}_{\mathscr{I}^-}\\
        {}^{(+2)}\mathscr{S}^-={}^{(+2)}\mathscr{F}^-\circ{}^{(+2)}\mathscr{B}^-:\mathcal{E}^{T,+2}_{\overline{\mathscr{H}^+}}\oplus \mathcal{E}^{T,+2}_{\mathscr{I}^+}\longrightarrow \mathcal{E}^{T,+2}_{\overline{\mathscr{H}^-}}\oplus\mathcal{E}^{T,+2}_{\mathscr{I}^-}
\end{align}
on the respective domains.
\end{thm}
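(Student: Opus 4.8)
The plan is to deduce \Cref{scatteringthm+2} formally from the three preceding theorems, \Cref{+2 future forward scattering}, \Cref{+2 future backward scattering} and \Cref{+2 past forward scattering}, together with the time-inversion symmetry of \Cref{time inversion}. No further analysis of the Teukolsky equation is needed; the entire content is the bookkeeping of which Hilbert spaces the maps connect and the elementary fact that a composition of unitary isomorphisms is a unitary isomorphism whose inverse is the reversed composition of the inverses.

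First I would assemble the four building blocks on the surface $\Sigma$ (and then repeat them verbatim on $\overline{\Sigma}$). By \Cref{+2 future forward scattering} and \Cref{+2 future backward scattering}, forward evolution is a unitary isomorphism ${}^{(+2)}\mathscr{F}^+:\mathcal{E}^{T,+2}_{\Sigma}\to\mathcal{E}^{T,+2}_{\mathscr{H}^+}\oplus\mathcal{E}^{T,+2}_{\mathscr{I}^+}$ with two-sided inverse ${}^{(+2)}\mathscr{B}^-$. By \Cref{+2 past forward scattering}, backward evolution is a unitary isomorphism ${}^{(+2)}\mathscr{F}^-:\mathcal{E}^{T,+2}_{\Sigma}\to\mathcal{E}^{T,+2}_{\mathscr{H}^-}\oplus\mathcal{E}^{T,+2}_{\mathscr{I}^-}$ with two-sided inverse ${}^{(+2)}\mathscr{B}^+$; recall that this backward statement is itself obtained by applying the forward theory to the time-inverted field of \Cref{time inversion}, which solves the $-2$ equation, and then rewriting the resulting identity on $\mathscr{I}^+,\overline{\mathscr{H}^+}$ as an identity on $\mathscr{I}^-,\overline{\mathscr{H}^-}$ for $\alpha$.

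The defining formula ${}^{(+2)}\mathscr{S}={}^{(+2)}\mathscr{F}^+\circ{}^{(+2)}\mathscr{B}^+$ then exhibits ${}^{(+2)}\mathscr{S}$ as a literal composition of two unitary Hilbert-space isomorphisms, since the codomain $\mathcal{E}^{T,+2}_{\Sigma}$ of ${}^{(+2)}\mathscr{B}^+$ is exactly the domain of ${}^{(+2)}\mathscr{F}^+$. Hence ${}^{(+2)}\mathscr{S}$ is a unitary isomorphism from $\mathcal{E}^{T,+2}_{\mathscr{H}^-}\oplus\mathcal{E}^{T,+2}_{\mathscr{I}^-}$ onto $\mathcal{E}^{T,+2}_{\mathscr{H}^+}\oplus\mathcal{E}^{T,+2}_{\mathscr{I}^+}$, and
\[
\big({}^{(+2)}\mathscr{F}^+\circ{}^{(+2)}\mathscr{B}^+\big)^{-1}=\big({}^{(+2)}\mathscr{B}^+\big)^{-1}\circ\big({}^{(+2)}\mathscr{F}^+\big)^{-1}={}^{(+2)}\mathscr{F}^-\circ{}^{(+2)}\mathscr{B}^-,
\]
which is precisely the asserted inverse ${}^{(+2)}\mathscr{S}^-$. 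The barred statement follows identically with $\mathscr{H}^\pm$ replaced by $\overline{\mathscr{H}^\pm}$ and $\Sigma$ by $\overline{\Sigma}$ throughout.

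The one point I would take care to spell out — and the only place any real care is required — is domain consistency in the barred case: one must verify that ${}^{(+2)}\mathscr{B}^+$ maps $\mathcal{E}^{T,+2}_{\overline{\mathscr{H}^-}}\oplus\mathcal{E}^{T,+2}_{\mathscr{I}^-}$ into precisely $\mathcal{E}^{T,+2}_{\overline{\Sigma}}$, i.e.\ produces Cauchy data on $\overline{\Sigma}$ meeting the regularity at $\mathcal{B}$ demanded in \Cref{WP+2Sigmabar} (equivalently, carrying the correct weights in $U,V$), so that the ensuing application of the barred ${}^{(+2)}\mathscr{F}^+$ is legitimate and returns a future radiation field with $V^{-2}\upalpha_{\mathscr{H}^+}\in\Gamma(\overline{\mathscr{H}^+})$. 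This compatibility is exactly what is arranged in the constructions behind \Cref{+2 future forward scattering} and \Cref{+2 past forward scattering}, where the regular quantities $\widetilde{\alpha},\widetilde{\underline\alpha}$ are used near $\mathcal{B}$; I would confirm it by tracing the $U,V$ weights through the time-inversion map. Beyond this bookkeeping I anticipate no genuine obstacle, as all the analytic substance has already been absorbed into the earlier theorems.
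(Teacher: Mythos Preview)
Your proposal is correct and matches the paper's approach exactly: the paper states at the end of \Cref{subsection 8.3 past scattering +2-2} that ``using \Cref{past scattering of +2,,past scattering of -2}, the proof of \Cref{scatteringthm+2} and \Cref{scatteringthm-2} is immediate,'' i.e.\ it is obtained by composing the already-established unitary isomorphisms ${}^{(+2)}\mathscr{F}^\pm$ and ${}^{(+2)}\mathscr{B}^\pm$. Your additional remarks on domain consistency and the $U,V$-weights near $\mathcal{B}$ in the barred case are appropriate and align with how the paper handles regularity at the bifurcation sphere via \Cref{WP+2Sigmabar} and \Cref{backwards wellposedness +2}.
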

\subsubsection{Scattering for the $-2$ Teukolsky equation}\label{subsubsection 4.2.2 Scattering for the -2 equation}
\begin{defin}\label{-2 norm on Sigma*}
    Let $(\underline\upalpha,\underline\upalpha')\in\Gamma_c(\Sigma^*)\oplus\Gamma_c(\Sigma^*)$ be Cauchy data for $\bref{T+2}$ on $\Sigma^*$ giving rise to a solution $\underline\alpha$. 
    Define the space $\mathcal{E}^{T,-2}_{\Sigma^*}$ to be the completion of $\Gamma_c(\Sigma^*)\oplus\Gamma_c(\Sigma^*)$ under the norm 
    \begin{align}\label{equivnorm-2}
    ||(\underline\upalpha,\underline\upalpha')||_{\mathcal{E}^{T,-2}_{\Sigma^*}}^2=||(\underline\Psi,\slashednabla_{n_{\Sigma^*}}\underline\Psi)||_{\mathcal{E}^{T}_{\Sigma^*}}^2,
    \end{align}
    where $\underline\Psi$ is the weighted second derivative $\underline\Psi=\left(\frac{r^2}{\Omega^2}\nablav\right)^2r\Omega^2\alpha$ of $\underline\alpha$. The spaces $\mathcal{E}^{T,-2}_{\Sigma}$, $\mathcal{E}^{T,-2}_{\overline{\Sigma}}$ are similarly defined.
\end{defin}
\begin{proposition}\label{-2 norm on Sigma is coercive}
$\|\;\|_{\mathcal{E}^{T,-2}_{\Sigma}}$ indeed defines a norm on $\Gamma_c(\Sigma)\times\Gamma_c(\Sigma)$.
\end{proposition}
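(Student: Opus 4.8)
The plan is to reproduce, almost verbatim, the proof of \Cref{+2 norm on Sigma is coercive}, with the constraints \bref{eq:d3psibar} and \bref{eq:d3d3psibar} now playing the role that \bref{eq:d4Psi} and \bref{eq:d4d4Psi} played there. As in that proof it suffices to show that a smooth, compactly supported pair $(\underline\upalpha,\underline\upalpha')$ with $\|(\underline\upalpha,\underline\upalpha')\|_{\mathcal{E}^{T,-2}_\Sigma}=0$ must vanish. Let $\underline\alpha$ be the corresponding solution of \bref{T-2} and $\underline\Psi=\left(\tfrac{r^2}{\Omega^2}\nablav\right)^2 r\Omega^2\underline\alpha$, which solves \bref{RW} by the commutation identity \bref{commutation relation 2}. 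Since the integrand of \bref{this2222} is pointwise positive-definite (the zeroth-order weight $\tfrac{3\Omega^2+1}{r^2}$ is strictly positive), the vanishing of $\|(\underline\upalpha,\underline\upalpha')\|_{\mathcal{E}^{T,-2}_\Sigma}=\|(\underline\Psi|_\Sigma,\slashednabla_{n_\Sigma}\underline\Psi|_\Sigma)\|_{\mathcal{E}^T_\Sigma}$ forces $\underline\Psi|_\Sigma=\slashednabla_{n_\Sigma}\underline\Psi|_\Sigma=0$, hence, by uniqueness for \bref{RW} (\Cref{RWwpCauchy}), $\underline\Psi\equiv 0$; in particular $\Omega\slashed\nabla_3\underline\Psi$ and $\tfrac{\Omega^2}{r^2}\Omega\slashed\nabla_3\tfrac{r^2}{\Omega^2}\Omega\slashed\nabla_3\underline\Psi$ vanish on $\Sigma$.

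Next I would restrict \bref{eq:d3d3psibar} and \bref{eq:d3psibar} to $\Sigma$. Using $\underline\Psi\equiv0$, the identity $\Omega\slashed\nabla_3+\Omega\slashed\nabla_4=2\slashednabla_T$ and $-2r^2\fancydstar_2\fancyd_2=\mathcal{A}_2$, equation \bref{eq:d3d3psibar} yields $\slashednabla_T\underline\alpha=-\tfrac{1}{12M}\mathcal{A}_2(\mathcal{A}_2-2)\underline\alpha$ on $\Sigma$, the opposite sign to \bref{424242}. Feeding $\Omega\slashed\nabla_3\underline\Psi|_\Sigma=0$ into \bref{eq:d3psibar}, using $3\Omega^2-1=2-\tfrac{6M}{r}$ and $\Omega\slashed\nabla_4=\slashednabla_T+\slashednabla_{R^*}$, then eliminating $\slashednabla_T$ via the previous relation and multiplying through by $\tfrac{\Omega^2}{r^2}$, I expect to obtain
\begin{align}
\left(\mathcal{A}_2-2+\frac{6M}{r}\right)\left(\frac{1}{12M}\mathcal{A}_2(\mathcal{A}_2-2)-\slashednabla_{R^*}\right)r\Omega^2\underline\alpha-6M\frac{\Omega^2}{r^2}r\Omega^2\underline\alpha=0,
\end{align}
which is exactly the radial identity obtained for $\alpha$ inside the proof of \Cref{+2 norm on Sigma is coercive}. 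The one point that genuinely needs checking — and the only way the $-2$ argument could differ from the $+2$ one — is that the sign of the first-order transport term flips in \emph{two} places on passing from spin $+2$ to $-2$ (in $\slashednabla_T\underline\alpha$ versus $\slashednabla_T\alpha$, and in $\Omega\slashed\nabla_4$ versus $\Omega\slashed\nabla_3$); these two flips cancel, which is what leaves the radial identity, and hence the ensuing coercivity estimate, unchanged.

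From here the argument is verbatim that of \Cref{+2 norm on Sigma is coercive}. Setting $\underline F=\left(\mathcal{A}_2-2+\tfrac{6M}{r}\right)r\Omega^2\underline\alpha$ and commuting $\slashednabla_{R^*}$ past $\tfrac{6M}{r}$, the displayed identity becomes the radial transport equation $\slashednabla_{R^*}\underline F=\tfrac{1}{12M}\mathcal{A}_2(\mathcal{A}_2-2)\underline F-12M\tfrac{\Omega^2}{r^2}r\Omega^2\underline\alpha$. Integrating $\|\underline F\|^2_{S^2}$ over $\Sigma\cap\{R_0<r<R\}$, integrating the angular operators by parts and using $r>2M$ together with the Poincar\'e inequality of \Cref{poincaresection}, one gets an identity of the form
\begin{align}
\begin{split}
\|\underline F\|^2_{S^2,r=R}=\|\underline F\|^2_{S^2,r=R_0}+\int_{\Sigma\cap\{R_0<r<R\}}\frac{1}{6M}&\Big(|\mathcal{A}_2\underline F|^2+2|\mathring{\slashednabla}\underline F|^2+4|\underline F|^2\Big)\\
&+24M\frac{\Omega^2}{r^2}\Big(|\mathring{\slashednabla}r\Omega^2\underline\alpha|^2+\big(4-\tfrac{6M}{r}\big)|r\Omega^2\underline\alpha|^2\Big)
\end{split}
\end{align}
in which every term on the right is non-negative. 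Hence $r\mapsto\|\underline F\|^2_{S^2,r}$ is non-decreasing, so compact support forces $\underline F\equiv0$ on $\Sigma$; uniform ellipticity of $\mathcal{A}_2-2+\tfrac{6M}{r}$ on symmetric traceless $S^2$ 2-tensors then gives $\Omega^2\underline\alpha\equiv0$ on $\Sigma$, and $\slashednabla_T\underline\alpha\equiv0$ on $\Sigma$ follows from the first relation above, so $(\underline\upalpha,\underline\upalpha')=(0,0)$. The same argument applies with $\Sigma$ replaced by $\Sigma^*$ or $\overline\Sigma$.

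I expect the only delicate point in the whole proof to be the sign bookkeeping described above; everything else is mechanical. As a cross-check one can instead deduce the proposition directly from \Cref{+2 norm on Sigma is coercive} via the time inversions of \Cref{time inversion,time inversion of RW}: on the time-symmetric slice $\Sigma$ the maps $\underline\alpha\mapsto\invertedalpha$ and $\underline\Psi\mapsto\invertedpsi$ carry a $-2$ solution to a $+2$ solution and a Regge--Wheeler field to a Regge--Wheeler field, and they leave the data $(\,\cdot\,|_\Sigma,\slashednabla_{n_\Sigma}(\,\cdot\,)|_\Sigma)$ fixed up to the sign of the normal derivative, under which the norm \bref{this2222} is invariant; this identifies $\|(\underline\upalpha,\underline\upalpha')\|_{\mathcal{E}^{T,-2}_\Sigma}$ with a $\mathcal{E}^{T,+2}_\Sigma$-norm and the statement follows.
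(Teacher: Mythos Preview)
Your proof is correct and follows the paper's strategy: use \bref{eq:d3psibar} and \bref{eq:d3d3psibar} with $\underline\Psi=0$ to derive a radial transport identity on $\Sigma$, then conclude via the monotonicity argument of \Cref{+2 norm on Sigma is coercive}. Your computation that the radial identity coincides literally with the $+2$ one is correct (the paper's displayed formula \bref{this2323} and its choice of $\underline F$ look different but appear to carry sign typos; the paper in any case defers to the $+2$ argument for the conclusion), and the time-inversion cross-check you give at the end is a valid alternative proof not present in the paper.
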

\begin{proof}
It suffices to check that $\|(\underline\upalpha,\underline\upalpha')\|_{\mathcal{E}^{T,-2}_{\Sigma}}=0$ implies $(\underline\upalpha,\underline\upalpha')=(0,0)$. Let $\underline\alpha$ and $\underline\Psi$ be as in \Cref{-2 norm on Sigma*}. It is clear that $\|(\underline\upalpha,\underline\upalpha')\|_{\mathcal{E}^{T,-2}_{\Sigma}}=0$ implies $\Psi=0$. \Cref{eq:d3d3psibar} implies that
\begin{align}
    \slashednabla_T r\Omega^2\underline\alpha=-\frac{1}{12M}\mathcal{A}_2(\mathcal{A}_2-2)r\Omega^2\underline\alpha.
\end{align}
\Cref{eq:d3psibar} then gives us
\begin{align}\label{this2323}
    \left[\mathcal{A}+2-\frac{6M}{r}\right]\left(\frac{1}{12M}\mathcal{A}_2(\mathcal{A}_2-2)-\slashednabla_{R^*}\right)r\Omega^2\underline\alpha+6M\frac{\Omega^2}{r^2}r\Omega^2\underline\alpha=0.
\end{align}
Let $\underline{F}=\left(\mathring{\slashed{\Delta}}-\frac{6M}{r}\right)r\Omega^2\underline\alpha$, then \bref{this2323} above implies
that $\slashednabla_{R^*}\underline{F}=\frac{1}{12M}\mathcal{A}_2(\mathcal{A}_2-2)\underline{F}$. The result follows similarly to \Cref{+2 norm on Sigma is coercive}.
\end{proof}
\begin{defin}
    Define the space of future scattering states $\mathcal{E}^{T,-2}_{\mathscr{H}^+_{\geq 0}}$ on $\mathscr{H}^+_{\geq0}$ to be the completion of $\Gamma_c(\mathscr{H}^+_{\geq0})$ under the norm
  \begin{align}
\|\underline{A}\|_{\mathcal{E}^{T,-2}_{\mathscr{H}^+_{\geq0}}}=(2M)^2\left|\left|2(2M\partial_v)\underline{A}+3(2M\partial_v)^2\underline{A}+(2M\partial_v)^3\underline{A}\right|\right|_{L^2(\mathscr{H}^+_{\geq0})}.
\end{align}
    The space $\mathcal{E}^{T,-2}_{\mathscr{H}^+}$ is defined by the same norm taken over $\mathscr{H}^+$. Define and $\mathcal{E}^{T,-2}_{\overline{\mathscr{H}^+}}$ to be the closure of the space consisting of symmetric traceless $S^2_{\infty,v}$ 2-tensor fields $\underline{A}$ on $\overline{\mathscr{H}^+}$ such that $V^{2}\underline{A}\in \Gamma_c \left(\overline{\mathscr{H}^+}\right)$, under the same norm above evaluated over $\overline{\mathscr{H}^+}$.
\end{defin}
\begin{remark}\label{-2 norm is norm on H+}
As with \Cref{+2 norm is norm on H-} on $\|\;\|_{\mathcal{E}^{T,+2}_{\mathscr{H}^-}}$, the energy $\|\;\|_{\mathcal{E}^{T,-2}_{\mathscr{H}^+}}$ indeed defines a norm on $\Gamma_c(\overline{\mathscr{H}^+})$, which then extends to the Hilbert space $\mathcal{E}^{T,-2}_{\overline{\mathscr{H}^+}}$. It is possible to represent the elements of $\mathcal{E}^{T,-2}_{\mathscr{H}^+_{\geq0}}$ as the subset $\underline{A}\in L^2_{loc}(\mathscr{H}^+_{\geq0})$ whose elements satisfy
\begin{itemize}
    \item $\partial_v \underline{A}$, $\partial_v^2 \underline{A}$, $\partial_v^3 A \in L^2(\mathscr{H}^+_{\geq0})$,
    \item $\lim_{v\longrightarrow \infty} \|\underline{A}\|_{L^2(S^2)}=0$
\end{itemize}
Hardy's inequality holds on this space we have
\begin{align}
       \int_{\mathscr{H}^+_{\geq0}} dv \sin\theta d\theta d\phi \frac{|\underline{A}|^2}{v^2+1}\lesssim\|\underline{A}\|^2_{\mathcal{E}^{T,-2}_{\mathscr{H}^+_{\geq0}}}<\infty.
   \end{align}
Similar statements apply to $\mathcal{E}^{T,-2}_{{\mathscr{H}^+}}$, $\mathcal{E}^{T,-2}_{\overline{\mathscr{H}^+}}$.
\end{remark}
\begin{defin}
Define the space of future scattering states $\mathcal{E}^{T,-2}_{\mathscr{I}^+}$ on $\mathscr{I}^+$ to be the completion of the space
\begin{align}
        \underline{A}\in \Gamma_c (\mathscr{I}^-): &\int_{-\infty}^{\infty} du\underline{A}=0
\end{align}
 under the norm
 \begin{align}\label{-2 tricky norm at scri}
 \|\underline{A}\|^2_{\mathcal{E}^{T,-2}_{\mathscr{I}^+}}=&\int_{\mathscr{I}^+} d{u}\sin\theta d\theta d\phi\left[ (6M)^2|\underline{A}|^2+\left|\mathcal{A}_2(\mathcal{A}_2-2)\int_{\bar{u}}^\infty d\bar{u}\; \underline{A}\right| ^2\right].
\end{align}
\end{defin}
\begin{remark}\label{-2 norm on scri+}
As with $\|\;\|_{\mathcal{E}^{T,+2}_{\mathscr{I}^-}}$ and \Cref{+2 norm is norm on scri-}, the energy $\|\;\|_{\mathcal{E}^{T,-2}_{\mathscr{I}^+}}$ indeed defines a norm on $\Gamma_c({\mathscr{I}^+})$, which then extends to the Hilbert space $\mathcal{E}^{T,-2}_{{\mathscr{I}^+}}$.
\end{remark}
\begin{defin}
Define the space $\mathcal{E}^{T,-2}_{\mathscr{H}^-}$ to be the completion of $\Gamma_c(\mathscr{H}^-)$ under the norm
\begin{align}
    \|\underline{A}\|_{\mathcal{E}^{T,-2}_{\mathscr{H}^-}}=\left\|\mathcal{A}_2(\mathcal{A}_2-2)\left(\int^u_{-\infty} d\bar{u}\; e^{\frac{1}{2M}(u-\bar{u})}\underline{A}\right)\right\|^2_{L^2(\mathscr{H}^-)}+\left\|6M\partial_u\left( \int^u_{-\infty} d\bar{u}\; e^{\frac{1}{2M}(u-\bar{u})}\underline{A}\right)\right\|^2_{L^2(\mathscr{H}^-)}.
\end{align}
Define the space $\mathcal{E}^{T,-2}_{\overline{\mathscr{H}^-}}$ to be the completion of the space consisting of symmetric traceless $S^2_{u,-\infty}$ 2-tensor fields $\underline{A}$ on $\overline{\mathscr{H}^-}$ such that $U^{-2}A\in \Gamma_c \left(\overline{\mathscr{H}^-}\right)$, under the same norm above evaluated over $\overline{\mathscr{H}^-}$.
\end{defin}
\begin{remark}\label{-2 norm is norm on H-}
As with $\|\;\|_{\mathcal{E}^{T,-2}_{\overline{\mathscr{H}^-}}}$ and \Cref{-2 norm is norm on H-}, the energy $\|\;\|_{\mathcal{E}^{T,-2}_{\overline{\mathscr{H}^-}}}$ indeed defines a norm on $\Gamma_c(\overline{\mathscr{H}^-})$, which then extends to the Hilbert space $\mathcal{E}^{T,-2}_{\overline{\mathscr{H}^-}}$.
\end{remark}
\begin{defin}\label{-2 norm on scri-}
    Define the space of future scattering states $\mathcal{E}^{T,-2}_{\mathscr{I}^-}$ on $\mathscr{I}^-$ to be the completion of $\Gamma_c (\mathscr{I}^-)$ under the norm
    \begin{align}
       \|\underline{A}\|_{\mathcal{E}_{\mathscr{I}^-}^{T,-2}}=\left|\left|\partial_v^3\underline{A}\right|\right|_{L^2(\mathscr{I}^-)}.
    \end{align}
\end{defin}
\begin{remark}\label{-2 norm is norm on scri-}
    The energy $\|\;\|_{\mathcal{E}_{\mathscr{I}^-}^{T,-2}}$ indeed defines a norm on $\Gamma_c(\mathscr{I}^-)$, which thus extends to a Hilbert space $\mathcal{E}^{T,-2}_{\mathscr{I}^-}$ when completed under $\|\;\|_{\mathcal{E}_{\mathscr{I}^-}^{T,-2}}$.
    We can identify $\mathcal{E}^{T,-2}_{\mathscr{I}^-}$ as the subset $A\in L^2_{loc}(\mathscr{I}^-)$ whose elements satisfy
    \begin{itemize}
        \item $\partial_v \underline{A}$, $\partial_v^2 \underline{A}$, $\partial_v^3 \underline{A}\in L^2(\mathscr{I}^-)$,
        \item $\lim_{v\longrightarrow-\infty} \|\underline{A}\|_{L^2(S^2)}=0$.
    \end{itemize}
    Hardy's inequality holds and we have on this subset
    \begin{align}
        \int_{\mathscr{I}^-} dv \sin\theta d\theta d\phi \frac{|\underline{A}|^2}{v^6+1}\lesssim\|\underline{A}\|^2_{\mathcal{E}^{T,-2}_{\mathscr{I}^-}} <\infty.
    \end{align}
\end{remark}
\begin{thm}\label{-2 future forward scattering}
Forward evolution under the $-2$ Teukolsky equation \bref{T-2} from smooth, compactly supported data $(\underline\upalpha,\underline\upalpha')$ on $\Sigma^*$ gives rise to smooth radiation fields $(\underline\upalpha_{\mathscr{H}^+},\underline\upalpha_{\mathscr{I}^+})\in \mathcal{E}^{T,-2}_{\mathscr{H}^+_{\geq0}}\oplus\mathcal{E}^{T,-2}_{\mathscr{I}^+}$ where
\begin{enumerate}
\item $\underline\upalpha_{\mathscr{H}^+}=2M\Omega^{-2}{\underline\alpha}\big|_{\mathscr{H}^+} \in \Gamma(\mathscr{H}^+)$,
\item $\underline\upalpha_{\mathscr{I}^+}=\lim_{v\longrightarrow \infty} r\underline\alpha(v,u,\theta^A)$, with $\underline\upalpha_{\mathscr{I}^+}\in \Gamma(\mathscr{I}^+)$,
\end{enumerate}
with $\underline\upalpha_{\mathscr{I}^+}, \underline\upalpha_{\mathscr{H}^+}$ satisfying
\begin{align}
    \left|\left|(\underline\upalpha,\underline\upalpha')\right|\right|_{\mathcal{E}^{T,-2}_{\Sigma^*}}^2=\left|\left|\upalpha_{\mathscr{I}^+}\right|\right|_{\mathcal{E}^{T,-2}_{\mathscr{I}^+}}^2+\left|\left|\upalpha_{\mathscr{H}^+}\right|\right|_{\mathcal{E}^{T,-2}_{\mathscr{H}^+_{\geq0}}}^2.
\end{align}
This extends to a unitary map
\begin{align}
    {}^{(-2)}\mathscr{F^+}: \mathcal{E}^{T,-2}_{\Sigma^*}\longrightarrow \mathcal{E}^{T,-2}_{\mathscr{H}^+_{\geq0}}\oplus \mathcal{E}^{T,-2}_{\mathscr{I}^+}.
\end{align}
The same conclusions apply when replacing $\Sigma^*$ with $\Sigma$ and $\mathscr{H}^+_{\geq0}$ with $\mathscr{H}^+$, or when replacing with $\overline\Sigma$ and $\overline{\mathscr{H}^+}$. In the latter case, $(\underline\upalpha,\underline\upalpha')$ must be consistent with the well-posedness statement $\Cref{WP-2Sigma*}$ and consequently we obtain that $V^{2}\underline\upalpha_{{\mathscr{H}^+}}\in \Gamma(\overline{\mathscr{H}^+})$.
\end{thm}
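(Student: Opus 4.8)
The plan is to reduce the $-2$ Teukolsky equation to the Regge--Wheeler equation exactly as in \Cref{+2 future forward scattering} and transport the scattering theory of the latter. First I would invoke \Cref{WP-2Sigma*} to obtain the solution $\underline\alpha$ of \bref{T-2} on $J^+(\Sigma^*)$, set $\underline\Psi=\left(\frac{r^2}{\Omega^2}\nablav\right)^2 r\Omega^2\underline\alpha$, and recall from the commutation identity \bref{commutation relation 2} that $\underline\Psi$ solves \bref{RW}. By \Cref{-2 norm on Sigma*} the norm $\|(\underline\upalpha,\underline\upalpha')\|_{\mathcal{E}^{T,-2}_{\Sigma^*}}$ is, by definition, the Regge--Wheeler $\Sigma^*$-energy of $(\underline\Psi,\slashednabla_{n_{\Sigma^*}}\underline\Psi)$, so \Cref{forwardRW} already furnishes smooth radiation fields $\bm{\uppsi}_{\mathscr{H}^+}\in\Gamma(\mathscr{H}^+_{\geq0})$, $\bm{\uppsi}_{\mathscr{I}^+}\in\Gamma(\mathscr{I}^+)$ for $\underline\Psi$, together with $\|(\underline\upalpha,\underline\upalpha')\|^2_{\mathcal{E}^{T,-2}_{\Sigma^*}}=\|\bm{\uppsi}_{\mathscr{I}^+}\|^2_{\mathcal{E}^{T}_{\mathscr{I}^+}}+\|\bm{\uppsi}_{\mathscr{H}^+}\|^2_{\mathcal{E}^{T}_{\mathscr{H}^+}}$. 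The whole content then lies in (i) showing $\underline\alpha$ itself induces the claimed radiation fields, and (ii) identifying $\|\bm{\uppsi}_{\mathscr{I}^+}\|_{\mathcal{E}^T_{\mathscr{I}^+}}$ and $\|\bm{\uppsi}_{\mathscr{H}^+}\|_{\mathcal{E}^T_{\mathscr{H}^+}}$ with the $\mathcal{E}^{T,-2}$-norms of those radiation fields.

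For (i) it suffices to have the uniform bound $\sup_v\|\underline\Psi\|_{L^2(S^2_{u,v})}\lesssim\|(\underline\upalpha,\underline\upalpha')\|_{\mathcal{E}^{T,-2}_{\Sigma^*}}$ and its angularly commuted versions, which follow from the Regge--Wheeler energy estimate together with the fundamental theorem of calculus in $v$; no pointwise decay is needed. Since $\nablav\propto\partial_v$ is tangent to the outgoing cones, the two transport relations in \bref{hier-} read $\partial_v\big(r^3\Omega\underline\psi\big)=\frac{\Omega^3}{r^2}\underline\Psi$ and $\partial_v\big(r\Omega^2\underline\alpha\big)=\frac{\Omega^3}{r^2}\,r^3\Omega\underline\psi$, whose right-hand sides are $O(|\underline\Psi|/r^2)$ near $\mathscr{I}^+$; integrating from $\Sigma^*$ shows $r^3\Omega\underline\psi$, and then $r\Omega^2\underline\alpha$ (hence $r\underline\alpha$, as $\Omega^2\to1$), converge as $v\to\infty$, defining $\underline\upalpha_{\mathscr{I}^+}=\lim_{v\to\infty}r\underline\alpha$, while $\underline\upalpha_{\mathscr{H}^+}:=2M\Omega^{-2}\underline\alpha|_{\mathscr{H}^+}$ arises from the same procedure along $\mathscr{H}^+$ after regularising in the Kruskal frame. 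Commuting with the Killing field $T=\partial_t$ and with $\mathring{\slashednabla}$ (both of which preserve \bref{T-2} and \bref{RW}) and repeating the argument yields smoothness of the radiation fields and the statement that, e.g., the radiation field of $T\underline\Psi$ is $\partial_u\bm{\uppsi}_{\mathscr{I}^+}$.

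For (ii), the matching identities are read off from the constraints \bref{eq:d3psibar}, \bref{eq:d3d3psibar} between $\underline\Psi$ and $\underline\alpha$. Taking $v\to\infty$ in \bref{eq:d3d3psibar}, using $\partial_v(r\Omega^2\underline\alpha)\to0$, $r\Omega^2\underline\alpha\to\underline\upalpha_{\mathscr{I}^+}$ and $-2r^2\fancydstar_2\fancyd_2=\mathcal{A}_2$, yields $\partial_u^2\bm{\uppsi}_{\mathscr{I}^+}=6M\,\partial_u\underline\upalpha_{\mathscr{I}^+}+\mathcal{A}_2(\mathcal{A}_2-2)\underline\upalpha_{\mathscr{I}^+}$; integrating once in $u$ (the constant killed by compact support, using also $\int_{\mathbb R}\underline\upalpha_{\mathscr{I}^+}=0$) gives $\partial_u\bm{\uppsi}_{\mathscr{I}^+}=6M\,\underline\upalpha_{\mathscr{I}^+}-\mathcal{A}_2(\mathcal{A}_2-2)\int_u^\infty\underline\upalpha_{\mathscr{I}^+}\,d\bar u$. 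Squaring, integrating over $\mathscr{I}^+$, and integrating the cross term by parts in $u$, it vanishes because the antiderivative $\int_u^\infty\underline\upalpha_{\mathscr{I}^+}$ vanishes at both $u=\pm\infty$ — precisely the role of the constraint defining $\mathcal{E}^{T,-2}_{\mathscr{I}^+}$ — so $\|\bm{\uppsi}_{\mathscr{I}^+}\|^2_{\mathcal{E}^T_{\mathscr{I}^+}}=\|\underline\upalpha_{\mathscr{I}^+}\|^2_{\mathcal{E}^{T,-2}_{\mathscr{I}^+}}$, which is exactly \bref{-2 tricky norm at scri}. On $\mathscr{H}^+$ one instead restricts the hierarchy \bref{hier-}, which, after regularising in the Kruskal frame and using $\partial_v\log\Omega^2\to\frac{1}{2M}$, becomes a chain of constant-coefficient ODEs in $v$ governed by the surface gravity $\kappa=\frac{1}{2M}$; solving them expresses $\partial_v\bm{\uppsi}_{\mathscr{H}^+}$ as $(2M)^2\big(2(2M\partial_v)+3(2M\partial_v)^2+(2M\partial_v)^3\big)\underline\upalpha_{\mathscr{H}^+}$, reproducing $\|\bm{\uppsi}_{\mathscr{H}^+}\|^2_{\mathcal{E}^T_{\mathscr{H}^+}}=\|\underline\upalpha_{\mathscr{H}^+}\|^2_{\mathcal{E}^{T,-2}_{\mathscr{H}^+_{\geq0}}}$. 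Combining with the Regge--Wheeler conservation law gives the energy identity; density of $\Gamma_c$ data extends ${}^{(-2)}\mathscr{F}^+$ to an isometry, and surjectivity (hence unitarity) follows as in the $+2$ case from the companion backward statement for \bref{T-2} (relying on \Cref{WP-2Sigma*} and the time-inversion symmetry of \Cref{time inversion}). The cases of data on $\Sigma$ and $\overline\Sigma$ are handled identically, except that for $\overline\Sigma$ the argument is run through the regular system \bref{T+2B}--\bref{T-2B} and \Cref{WP-2Sigmabar}, so that the reconstruction of $\underline\alpha$ from $\underline\Psi$ is valid up to $\mathcal{B}$ and one obtains $V^{2}\underline\upalpha_{\mathscr{H}^+}\in\Gamma(\overline{\mathscr{H}^+})$.

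I expect the main obstacle to be the near-horizon (and near-$\mathcal{B}$) part of steps (i)--(ii): because $\underline\alpha$ and the weighted derivatives in \bref{hier-} degenerate on $\mathscr{H}^+$, the transport equations must first be renormalised in the Kruskal frame, and one has to verify that the resulting ODEs produce exactly the polynomial $2M\partial_v(2M\partial_v+1)(2M\partial_v+2)$ and that the reconstructed $\underline\upalpha_{\mathscr{H}^+}$ lands in $\mathcal{E}^{T,-2}_{\mathscr{H}^+_{\geq0}}$ with no loss of derivatives or weights — the analogue for the $+2$ equation being the appearance of the exponential kernels $\int_v^\infty e^{(v-\bar v)/2M}(\cdot)\,d\bar v$. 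A secondary technical point is promoting the pointwise existence of the $v$-integrals defining the radiation fields to uniform convergence, and propagating it through arbitrarily many $T$- and $\mathring{\slashednabla}$-commutations to obtain genuine smoothness of $\underline\upalpha_{\mathscr{I}^+}$ and $\underline\upalpha_{\mathscr{H}^+}$.
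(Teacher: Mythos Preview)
Your strategy is the paper's strategy: pass to $\underline\Psi$, invoke the Regge--Wheeler scattering, read off the radiation fields of $\underline\alpha$ from the transport hierarchy \bref{hier-}, and identify the $\mathcal{E}^T$-energies of $\underline{\bm\uppsi}_{\mathscr{H}^+},\underline{\bm\uppsi}_{\mathscr{I}^+}$ with the $\mathcal{E}^{T,-2}$-norms of $\underline\upalpha_{\mathscr{H}^+},\underline\upalpha_{\mathscr{I}^+}$ via \bref{eq:d3d3psibar} and the explicit horizon computation \bref{-2 expression on H is regular}. The overall architecture is right.

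There is, however, a genuine gap at $\mathscr{I}^+$. You \emph{use} the constraint $\int_{\mathbb R}\underline\upalpha_{\mathscr{I}^+}=0$ to kill the integration constant and to make the cross term vanish, but you never derive it. This constraint is not free: it is obtained (see \bref{-2 mean is 0}) by integrating the limiting equation $\partial_u^2\underline{\bm\uppsi}_{\mathscr{I}^+}=6M\partial_u\underline\upalpha_{\mathscr{I}^+}+\mathcal{A}_2(\mathcal{A}_2-2)\underline\upalpha_{\mathscr{I}^+}$ from a $u_0$ below the support and letting $u\to+\infty$, which requires knowing that both $\partial_u\underline{\bm\uppsi}_{\mathscr{I}^+}$ and $\underline\upalpha_{\mathscr{I}^+}$ decay pointwise towards the future end of $\mathscr{I}^+$. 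That pointwise decay does \emph{not} follow from the $T$-energy alone; the paper obtains it from the transport/ILED estimates of \Cref{subsection 7.1 integrated boundedness and decay estimates for -2} (yielding \Cref{-2 psi radiation decay}, \Cref{-2 alpha radiation decay}) together with \Cref{RWdecayscri} for $\underline\Psi$. Your sentence ``no pointwise decay is needed'' is correct for the bare existence of the limits in step~(i), but it is precisely the missing ingredient for step~(ii) on $\mathscr{I}^+$ --- both to prove $\int\underline\upalpha_{\mathscr{I}^+}=0$ and to show that the smooth field $\underline\upalpha_{\mathscr{I}^+}$ actually lies in the completion $\mathcal{E}^{T,-2}_{\mathscr{I}^+}$ rather than merely having finite norm. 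The analogous decay of $\underline\upalpha_{\mathscr{H}^+}$ towards $\mathscr{H}^+_+$ (\Cref{-2 radiation ptwise decay H}) is likewise needed for membership in $\mathcal{E}^{T,-2}_{\mathscr{H}^+_{\geq0}}$.

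A smaller point: your assessment of where the difficulty lies is inverted relative to the $+2$ case. For spin $-2$, the horizon flux is the \emph{easy} side --- the hierarchy \bref{hier-} differentiates in $v$, which is tangent to $\mathscr{H}^+$, so $\partial_v\underline{\bm\uppsi}_{\mathscr{H}^+}$ is a local polynomial expression in $\partial_v\underline\upalpha_{\mathscr{H}^+}$ with no integral kernels and no loss. The subtle side is $\mathscr{I}^+$, where $\underline\Psi$ involves derivatives \emph{transverse} to $\mathscr{I}^+$ and one must integrate \bref{eq:d3d3psibar} along $\mathscr{I}^+$, producing the nonlocal $\int_u^\infty\underline\upalpha_{\mathscr{I}^+}$ and the mean-zero obstruction. (Minor: your transport relations carry a stray factor of $\Omega$; from \bref{hier-} one has $\partial_v(r^3\Omega\underline\psi)=\tfrac{\Omega^2}{r^2}\underline\Psi$ and $\partial_v(r\Omega^2\underline\alpha)=\tfrac{\Omega^2}{r^2}r^3\Omega\underline\psi$.)
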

\begin{thm}\label{-2 future backward scattering}
Let $\underline\upalpha_{\mathscr{I}^+}\in \Gamma_c (\mathscr{I}^+), \underline\upalpha_{\mathscr{H}^+} \in \Gamma_c (\mathscr{H}^+_{\geq0})$ with $\int_{-\infty}^\infty d\bar{u}\; \underline\upalpha_{\mathscr{I}^+}=0$. Then there exists a unique solution $\underline\alpha$ to \cref{T-2} in $J^+(\Sigma^*)$ which is smooth, such that
\begin{align}
    \lim_{v\longrightarrow\infty} r\underline\alpha(u,v,\theta^A)=\underline\upalpha_{\mathscr{I}^+},\qquad\qquad 2M\Omega^{-2}\underline\alpha\big|_{\mathscr{H}^+_{\geq0}}=\underline\upalpha_{\mathscr{H}^+},
\end{align}
with $(\underline\alpha|_{\Sigma^*},\slashednabla_{n_{\Sigma^*}}\underline\alpha|_{\Sigma^*})\in \mathcal{E}^{T,-2}_{\Sigma^*} $ and $
        \left|\left|(\underline\alpha|_{\Sigma^*},\slashednabla_{n_{\Sigma^*}}\underline\alpha|_{\Sigma^*})\right|\right|_{\mathcal{E}^{T,-2}_{\Sigma^*}}^2=\left|\left|\underline\upalpha_{\mathscr{I}^+}\right|\right|_{\mathcal{E}^{T,-2}_{\mathscr{I}^+}}^2+\left|\left|\underline\upalpha_{\mathscr{H}^+}\right|\right|_{\mathcal{E}^{T,-2}_{\mathscr{H}^+}}^2$. 
This extends to a unitary map 
\begin{align}
    {}^{(-2)}\mathscr{B}^-: \mathcal{E}^{T,-2}_{\mathscr{H}^+_{\geq0}}\oplus \mathcal{E}^{T,-2}_{\mathscr{I}^+}\longrightarrow \mathcal{E}^{T,-2}_{{\Sigma^*}},
\end{align}
which inverts the map ${}^{(-2)}\mathscr{F}^+$ of \Cref{-2 future forward scattering}
\begin{align}
{}^{(-2)}\mathscr{B}^-\circ{}^{(-2)}\mathscr{F}^+={}^{(-2)}\mathscr{F}^+\circ{}^{(-2)}\mathscr{B}^-=Id.
\end{align}
The same conclusions apply when replacing $\Sigma^*$ with $\Sigma$ and $\mathscr{H}^+_{\geq0}$ with $\mathscr{H}^+$, or when replacing with $\overline\Sigma$ and $\overline{\mathscr{H}^+}$. In the latter case, we require that $V^2\underline\upalpha_{{\mathscr{H}^+}}\in \Gamma(\overline{\mathscr{H}^+})$ and with that $(\underline\alpha|_{\Sigma^*},\slashednabla_{n_{\Sigma^*}}\underline\alpha|_{\Sigma^*})$ is consistent with $\Cref{WP-2Sigma*}$
\end{thm}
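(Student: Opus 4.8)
The plan is to transfer everything to the Regge--Wheeler equation \bref{RW}, for which backward scattering is supplied by \Cref{backwardRW}, via the Chandrasekhar hierarchy \bref{hier-}; since \bref{T-2} carries no energy of its own, this detour through $\underline\Psi$ is unavoidable. The first step is to read off from the constraints \bref{eq:d3psibar}, \bref{eq:d3d3psibar}, in the limit towards each component of the boundary, the Regge--Wheeler radiation fields $(\underline\Psi_{\mathscr{I}^+},\underline\Psi_{\mathscr{H}^+})$ that a $-2$ solution with traces $(\underline\upalpha_{\mathscr{I}^+},\underline\upalpha_{\mathscr{H}^+})$ must carry. Towards $\mathscr{I}^+$, where $\Omega^2\to1$, $\Omega\slashed\nabla_3 r\Omega^2\underline\alpha\to\partial_u\underline\upalpha_{\mathscr{I}^+}$, $\Omega\slashed\nabla_4 r\Omega^2\underline\alpha\to0$ and $-2r^2\fancydstar_2\fancyd_2=\mathcal{A}_2$, the limit of \bref{eq:d3d3psibar} gives $\partial_u^2\underline\Psi_{\mathscr{I}^+}=6M\,\partial_u\underline\upalpha_{\mathscr{I}^+}+\mathcal{A}_2(\mathcal{A}_2-2)\underline\upalpha_{\mathscr{I}^+}$, hence $\partial_u\underline\Psi_{\mathscr{I}^+}=6M\underline\upalpha_{\mathscr{I}^+}-\mathcal{A}_2(\mathcal{A}_2-2)\int_{\bar u}^\infty\underline\upalpha_{\mathscr{I}^+}\,d\bar u$, which is compactly supported precisely because of the hypothesis $\int_{-\infty}^\infty\underline\upalpha_{\mathscr{I}^+}=0$; $\underline\Psi_{\mathscr{I}^+}$ itself is a smooth element of $\mathcal{E}^T_{\mathscr{I}^+}$ (an antiderivative of the above, not in general compactly supported). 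Towards $\overline{\mathscr{H}^+}$, grouping the degenerate weight $r^2/\Omega^2$ with $\Omega\slashed\nabla_4$ as in \Cref{regular}, the same limit expresses $\underline\Psi_{\mathscr{H}^+}$ as a fixed second-order constant-coefficient polynomial in $2M\partial_v$ applied to $\underline\upalpha_{\mathscr{H}^+}$, of the shape $(2M\partial_v+1)(2M\partial_v+2)\underline\upalpha_{\mathscr{H}^+}$ (the factors $(x+1)(x+2)$ reflecting the near-horizon transport coefficients $1-4M/r\to-1$), so $\underline\Psi_{\mathscr{H}^+}\in\Gamma_c(\overline{\mathscr{H}^+})$, with $V^2\underline\upalpha_{\mathscr{H}^+}\in\Gamma(\overline{\mathscr{H}^+})$ guaranteeing regularity up to $\mathcal B$. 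An integration by parts in $u$ (the cross term vanishing because the primitive of $\underline\upalpha_{\mathscr{I}^+}$ vanishes at both ends) and a direct computation in $v$ then yield
\[ \|\underline\Psi_{\mathscr{I}^+}\|^2_{\mathcal{E}^T_{\mathscr{I}^+}}=\|\underline\upalpha_{\mathscr{I}^+}\|^2_{\mathcal{E}^{T,-2}_{\mathscr{I}^+}},\qquad \|\underline\Psi_{\mathscr{H}^+}\|^2_{\mathcal{E}^T_{\mathscr{H}^+}}=\|\underline\upalpha_{\mathscr{H}^+}\|^2_{\mathcal{E}^{T,-2}_{\mathscr{H}^+}}, \]
the two square terms in the definitions of $\mathcal{E}^{T,-2}_{\mathscr{I}^+}$, $\mathcal{E}^{T,-2}_{\mathscr{H}^+}$ being nothing but the self-adjoint, time-translation-invariant splitting of $|\partial_u\underline\Psi_{\mathscr{I}^+}|^2$, $|\partial_v\underline\Psi_{\mathscr{H}^+}|^2$.

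I would then construct $\underline\alpha$ itself as follows: for data supported away from $i^+$ and $\mathcal B$, solve the finite mixed backward problems \Cref{WP-2backwards}, \Cref{backwards wellposedness -2} for \bref{T-2} on truncated regions, with data on finite pieces of $\overline{\mathscr{H}^+}$ and on a late outgoing cone carrying the radiation field propagated inward; transfer each such solution through \bref{hier-} to its $\underline\Psi$, which satisfies \bref{RW} by \bref{commutation relation 2}, and use \Cref{backwardRW} to bound the $\mathcal{E}^T_{\Sigma^*}$-norm of $\underline\Psi|_{\Sigma^*}$ uniformly in the truncation; then, identifying $\|(\underline\alpha|_{\Sigma^*},\slashednabla_{n_{\Sigma^*}}\underline\alpha|_{\Sigma^*})\|_{\mathcal{E}^{T,-2}_{\Sigma^*}}$ with $\|(\underline\Psi|_{\Sigma^*},\slashednabla_{n_{\Sigma^*}}\underline\Psi|_{\Sigma^*})\|_{\mathcal{E}^T_{\Sigma^*}}$ via \Cref{-2 norm on Sigma*}, pass to the limit in the truncation and approximate a general compactly supported pair, producing a smooth $\underline\alpha$ on all of $J^+(\Sigma^*)$ solving \bref{T-2} with the prescribed radiation fields. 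That the $\mathscr{H}^+$-trace of the limit is $2M\Omega^{-2}\underline\alpha|_{\mathscr{H}^+}=\underline\upalpha_{\mathscr{H}^+}$ with the correct Kruskal-$U$ weight, and the $\mathscr{I}^+$-trace is $\underline\upalpha_{\mathscr{I}^+}$, uses the forward asymptotic analysis of \Cref{section 6}, \Cref{section 7}, \Cref{section 8 constructing the scattering maps}, together with the $\mathscr{I}^+$-limit of \bref{eq:d3psibar}, which pins down the subleading coefficient of the reconstructed hierarchy as $\mathcal{A}_2\int_{\bar u}^\infty\underline\upalpha_{\mathscr{I}^+}$. Uniqueness of the smooth, finite-energy solution with given radiation fields follows from the uniqueness halves of \Cref{WP-2Sigma*}, \Cref{WP-2backwards}, \Cref{backwards wellposedness -2} together with the observation that a $\underline\Psi$ with vanishing Regge--Wheeler radiation fields vanishes (\Cref{backwardRW}) and then \bref{hier-} integrates to $\underline\alpha\equiv0$.

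Combining these: feeding $\underline\alpha$ into the forward maps of \Cref{-2 future forward scattering} returns $(\underline\upalpha_{\mathscr{I}^+},\underline\upalpha_{\mathscr{H}^+})$, and combining \Cref{-2 norm on Sigma*}, the energy conservation of \Cref{backwardRW}, and the two boundary norm identities gives the asserted $\|(\underline\alpha|_{\Sigma^*},\slashednabla_{n_{\Sigma^*}}\underline\alpha|_{\Sigma^*})\|^2_{\mathcal{E}^{T,-2}_{\Sigma^*}}=\|\underline\upalpha_{\mathscr{I}^+}\|^2_{\mathcal{E}^{T,-2}_{\mathscr{I}^+}}+\|\underline\upalpha_{\mathscr{H}^+}\|^2_{\mathcal{E}^{T,-2}_{\mathscr{H}^+}}$ and ${}^{(-2)}\mathscr{B}^-\circ{}^{(-2)}\mathscr{F}^+=\mathrm{Id}$; ${}^{(-2)}\mathscr{F}^+\circ{}^{(-2)}\mathscr{B}^-=\mathrm{Id}$ is the uniqueness half of \Cref{-2 future forward scattering}. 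Density of compactly supported data in the relevant scattering spaces extends ${}^{(-2)}\mathscr{B}^-$ to the claimed unitary isomorphism. The $\Sigma$ and $\overline\Sigma$ versions follow by the identical argument with \bref{T-2} replaced near $\mathcal B$ by its Kruskal-regular form \bref{T-2B}, using the inclusions $\mathcal{E}^{T,-2}_\Sigma\subsetneq\mathcal{E}^{T,-2}_{\overline\Sigma}$, $\mathcal{E}^{T,-2}_{\mathscr{H}^+}\subsetneq\mathcal{E}^{T,-2}_{\overline{\mathscr{H}^+}}$ exactly as in \Cref{RW distinct spaces}.

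The hard part is the reconstruction of $\underline\alpha$ and, above all, the control of its traces near $i^+$ and near $\mathcal B$: one must show that the backward Regge--Wheeler solution together with the transport integration of \bref{hier-} produces an $\underline\alpha$ that genuinely realises $\underline\upalpha_{\mathscr{H}^+}$ as $2M\Omega^{-2}\underline\alpha|_{\mathscr{H}^+}$ with the right weight and the right regularity at the bifurcation sphere --- not merely some $\underline\alpha$ agreeing with the data in the $\nabla_4$-hierarchy sense --- and that the limit from the finite backward problems converges in the required norms; this is exactly where the forward-evolution asymptotics of \Cref{section 6}, \Cref{section 7}, \Cref{section 8 constructing the scattering maps} carry the weight. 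A secondary, bookkeeping point is to verify that the two compatibility hypotheses --- $\int_{-\infty}^\infty\underline\upalpha_{\mathscr{I}^+}\,d\bar u=0$ and $V^2\underline\upalpha_{\mathscr{H}^+}\in\Gamma(\overline{\mathscr{H}^+})$ --- are precisely (and only) what renders the first-step Regge--Wheeler data admissible, i.e.\ $\underline\Psi_{\mathscr{I}^+}$ of finite $T$-energy and $\underline\Psi_{\mathscr{H}^+}$ smooth up to $\mathcal B$.
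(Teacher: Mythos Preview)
Your first step---reading off $(\underline\Psi_{\mathscr{H}^+},\underline\Psi_{\mathscr{I}^+})$ from the asymptotics of \bref{eq:d3psibar}, \bref{eq:d3d3psibar} and verifying the norm identities---is exactly what the paper does in \Cref{-2 radiation flux on H+} and \Cref{-2 radiation flux on scri+}; your factorisation $(2M\partial_v+1)(2M\partial_v+2)$ matches \bref{-2 Psi out of alpha H+}, and your formula for $\partial_u\underline\Psi_{\mathscr{I}^+}$ matches \bref{formula for partialu -2 RW in backwards direction}.

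The construction step, however, takes a genuinely different route. You propose to solve finite backward problems for \bref{T-2} directly (via \Cref{WP-2backwards}), extract $\underline\Psi_n$ from each approximant through \bref{hier-}, and use Regge--Wheeler energy conservation to bound $\underline\Psi_n|_{\Sigma^*}$ uniformly before passing to a limit. The paper (see \Cref{-2 backwards existence}) instead runs the limiting argument \emph{only} at the Regge--Wheeler level, which is already packaged in \Cref{backwardRW}: it first solves for $\underline\Psi$ globally from the scattering data $(\underline{\bm\uppsi}_{\mathscr{H}^+},\underline{\bm\uppsi}_{\mathscr{I}^+})$ you computed, then \emph{defines} $\underline\psi$ and $\underline\alpha$ by explicit $v$-integration of the transport equations \bref{hier-} from $\mathscr{I}^+$, and finally checks via the commutation relation \bref{commutation relation 2} that the resulting $\underline\alpha$ satisfies \bref{T-2} by verifying that $\mathcal{T}^{-2}r\Omega^2\underline\alpha$ and $\frac{r^2}{\Omega^2}\nablav\mathcal{T}^{-2}r\Omega^2\underline\alpha$ vanish at $\mathscr{I}^+$. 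The paper explicitly flags (at the start of \Cref{subsection 8.1 future scattering +2}) that it avoids repeating the limiting argument of \Cref{RWbackwardsexistence} at the Teukolsky level.

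What the paper's route buys is that it sidesteps the one delicate point in yours: to bound $\underline\Psi_n|_{\Sigma^*}$ uniformly via Regge--Wheeler conservation you need uniform control of $\underline\Psi_n$ on the late ingoing cone $\underline{\mathscr{C}}_{v^\infty}$, but $\underline\Psi_n|_{\underline{\mathscr{C}}_{v^\infty}}=(\tfrac{r^2}{\Omega^2}\nablav)^2 r\Omega^2\underline\alpha_n$ involves transverse $\nablav$-derivatives that are not part of the characteristic data---they must be recovered by integrating the Teukolsky equation itself along the cone, and one then has to show this recovered $\underline\Psi_n|_{\underline{\mathscr{C}}_{v^\infty}}$ converges to $\underline{\bm\uppsi}_{\mathscr{I}^+}$ as $v^\infty\to\infty$. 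This is doable but is precisely the extra layer the paper's approach eliminates: once $\underline\Psi$ is in hand from \Cref{backwardRW}, the $r$-weighted transport estimates \bref{334}--\bref{335} (which are uniform thanks to the favourable sign of the first-order term in \bref{T-2} near $\mathscr{I}^+$) deliver $\underline\alpha$ and its radiation fields without any further approximation. Your route is viable, but the paper's is shorter and cleaner for this equation.
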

\begin{thm}\label{-2 past forward scattering}
Evolution from $(\underline\upalpha,\underline\upalpha')\in\Gamma_c(\Sigma)\times\Gamma_c(\Sigma)$ to $J^-(\Sigma)$ gives rise to radiation fields on $\mathscr{H}^-,\mathscr{I}^-$ analogously to \Cref{+2 future forward scattering}, where the radiation fields are defined by
\begin{align}
    \lim_{v\longrightarrow\infty} r^5\underline\alpha(u,v,\theta^A)=\underline\upalpha_{\mathscr{I}^-}\qquad\qquad 2M\Omega^{2}\underline\alpha\big|_{\mathscr{H}^-}=\underline\upalpha_{\mathscr{H}^-}
\end{align}
This extends to a unitary map
\begin{align}
    {}^{(-2)}\mathscr{F^-}: \mathcal{E}^{T,-2}_{\Sigma}\longrightarrow \mathcal{E}^{T,-2}_{\mathscr{H}^-}\oplus \mathcal{E}^{T,-2}_{\mathscr{I}^-}
\end{align}
with inverse ${}^{(-2)}\mathscr{B}^+:\mathcal{E}^{T,-2}_{\mathscr{H}^-}\oplus \mathcal{E}^{T,-2}_{\mathscr{I}^-}\longrightarrow \mathcal{E}^{T,-2}_{\Sigma}$. The same conclusions apply when replacing $\Sigma$ with $\overline{\Sigma}$ and $\mathscr{H}^-$ with $\overline{\mathscr{H}^-}$. In this case, we require that  $(U^{-2}\Omega^2\underline\alpha,U^{-2}\Omega^2\underline\alpha')$ are smooth up to and including $\mathcal{B}$, and consequently we obtain that $U^{-2}\underline\alpha_{{\mathscr{H}^+}}\in \Gamma(\overline{\mathscr{H}^+})$
\end{thm}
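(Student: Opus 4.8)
The plan is to mirror, with the roles of the $3$- and $4$-directions (hence of $\mathscr{H}^+$ and $\mathscr{H}^-$, of $\mathscr{I}^+$ and $\mathscr{I}^-$, and of \bref{T+2} and \bref{T-2}) interchanged, the argument that proves \Cref{+2 future forward scattering}. The past development of \bref{T-2} from $\Sigma$ (respectively $\overline{\Sigma}$) exists and is unique --- this is the time inversion, in the sense of \Cref{time inversion}, of the well-posedness of the future development of \bref{T+2} (cf.\ \Cref{WP+2Sigmabar}); write $\underline\alpha$ for this solution. First I would reduce to the Regge--Wheeler picture: attaching to $\underline\alpha$ the hierarchy \bref{hier-}, the commutation identity \bref{commutation relation 2} shows that $\underline\Psi=(\tfrac{r^2}{\Omega^2}\nablav)^2 r\Omega^2\underline\alpha$ solves \bref{RW} on $J^-(\Sigma)$, and its Cauchy data on $\Sigma$ are computed pointwise from those of $\underline\alpha$ by using \bref{T-2} to trade $\nablav$-derivatives for tangential ones, so that by \Cref{-2 norm on Sigma*} we have $\|(\underline\upalpha,\underline\upalpha')\|_{\mathcal{E}^{T,-2}_{\Sigma}}=\|(\underline\Psi|_\Sigma,\slashednabla_{n_\Sigma}\underline\Psi|_\Sigma)\|_{\mathcal{E}^{T}_{\Sigma}}$. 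Feeding $\underline\Psi$ into the past Regge--Wheeler isomorphism $\mathscr{F}^-$ of \Cref{RW isomorphisms} then yields smooth radiation fields $\underline\Psi_{\mathscr{H}^-},\underline\Psi_{\mathscr{I}^-}$ with $\|(\underline\Psi|_\Sigma,\slashednabla_{n_\Sigma}\underline\Psi|_\Sigma)\|_{\mathcal{E}^T_\Sigma}^2=\|\underline\Psi_{\mathscr{H}^-}\|_{\mathcal{E}^T_{\mathscr{H}^-}}^2+\|\underline\Psi_{\mathscr{I}^-}\|_{\mathcal{E}^T_{\mathscr{I}^-}}^2$.

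Next I would descend the hierarchy. One first shows, exactly as in the forward analysis of \bref{T+2}, that $\underline\alpha$ itself acquires smooth radiation fields near $\mathscr{H}^-$ and $\mathscr{I}^-$, which relies on the nondegenerate-energy decay estimates for $\underline\Psi$ (the analogue of the $N$-energy decay of \cite{DHR16}) propagated through the transport equations of \bref{hier-}. Near $\mathscr{I}^-$ one reads off $\underline\upalpha_{\mathscr{I}^-}=\lim_{u\to-\infty}r^5\underline\alpha$ and finds $\underline\Psi_{\mathscr{I}^-}$ to be, up to sign, a second $v$-derivative of $\underline\upalpha_{\mathscr{I}^-}$, whence $\|\underline\Psi_{\mathscr{I}^-}\|_{\mathcal{E}^T_{\mathscr{I}^-}}=\|\partial_v^3\underline\upalpha_{\mathscr{I}^-}\|_{L^2(\mathscr{I}^-)}=\|\underline\upalpha_{\mathscr{I}^-}\|_{\mathcal{E}^{T,-2}_{\mathscr{I}^-}}$. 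Near $\mathscr{H}^-$ one restricts the transport equations of \bref{hier-} to $\mathscr{H}^-$ and integrates them from $i^-$, recovering $\underline\upalpha_{\mathscr{H}^-}=2M\Omega^2\underline\alpha|_{\mathscr{H}^-}$ and expressing $\underline\Psi_{\mathscr{H}^-}$ through the weighted integral $\int_{-\infty}^u e^{\frac{1}{2M}(u-\bar u)}\underline\upalpha_{\mathscr{H}^-}\,d\bar u$ acted on by $\mathcal{A}_2(\mathcal{A}_2-2)$ and by $6M\partial_u$; an integration by parts on $\mathscr{H}^-$ eliminates the leftover cross-term (it vanishes since the solution decays at $i^-$ and $\underline\upalpha_{\mathscr{H}^-}$ is supported away from $\mathcal{B}$), giving $\|\underline\Psi_{\mathscr{H}^-}\|_{\mathcal{E}^T_{\mathscr{H}^-}}^2=\|\underline\upalpha_{\mathscr{H}^-}\|_{\mathcal{E}^{T,-2}_{\mathscr{H}^-}}^2$, which is precisely how that norm was chosen in \Cref{subsubsection 4.2.2 Scattering for the -2 equation}.

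Chaining these three identities gives the claimed energy equality, and since $\Gamma_c(\Sigma)\times\Gamma_c(\Sigma)$ is dense this extends to an isometry ${}^{(-2)}\mathscr{F}^-$. Its inverse ${}^{(-2)}\mathscr{B}^+$ is obtained by running the same transport relations backwards: given scattering data, one applies the inverse Regge--Wheeler map $\mathscr{B}^+$ of \Cref{RW isomorphisms} to the corresponding $\underline\Psi$-data, solves \bref{T-2} backwards (by \Cref{backwards wellposedness -2}, and the mixed problem \Cref{WP-2backwards} for the localised versions), and uses that the weighted-integral operators appearing above are boundedly invertible on the relevant scattering spaces; surjectivity and ${}^{(-2)}\mathscr{B}^+\circ{}^{(-2)}\mathscr{F}^-={}^{(-2)}\mathscr{F}^-\circ{}^{(-2)}\mathscr{B}^+=\mathrm{Id}$ then follow exactly as for \Cref{+2 future backward scattering}. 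The $\overline{\Sigma},\overline{\mathscr{H}^-}$ version is the same argument run on the Kruskal-regular quantity $U^{-2}\Omega^2\underline\alpha$, which by \Cref{regular} does not degenerate at $\mathcal{B}$; the hypothesis that $U^{-2}\Omega^2\underline\alpha, U^{-2}\Omega^2\underline\alpha'$ be smooth up to $\mathcal{B}$ is what guarantees, via the $\overline\Sigma$ well-posedness, that $\underline\alpha$ is admissible and that $U^{-2}\underline\upalpha_{\mathscr{H}^-}\in\Gamma(\overline{\mathscr{H}^-})$.

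I expect the main obstacle to be the $\mathscr{H}^-$ computation: verifying that the transport equations of \bref{hier-}, restricted to $\mathscr{H}^-$ and solved from $i^-$, produce for $\underline\Psi_{\mathscr{H}^-}$ precisely the combination built from $\int_{-\infty}^u e^{\frac{1}{2M}(u-\bar u)}\underline\upalpha_{\mathscr{H}^-}\,d\bar u$ under $\mathcal{A}_2(\mathcal{A}_2-2)$ and $6M\partial_u$ --- with exactly the redshift-type integrating factor dictated by the surface gravity, and with the residual cross-term integrating to zero --- so that the Regge--Wheeler $T$-energy of $\underline\Psi$ on $\mathscr{H}^-$ reproduces $\|\cdot\|_{\mathcal{E}^{T,-2}_{\mathscr{H}^-}}$ on the nose, and likewise that $\underline\alpha$ genuinely possesses the limits defining its radiation fields (the forward Teukolsky asymptotics near $\mathscr{H}^-,\mathscr{I}^-$). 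This is the $(u\leftrightarrow v,\ \mathscr{H}^+\leftrightarrow\mathscr{H}^-)$ mirror of the corresponding step in the proof of \Cref{+2 future forward scattering}; the remaining bookkeeping (the $\mathscr{H}^-$ versus $\overline{\mathscr{H}^-}$ distinction, smoothness of the limits, density of $\Gamma_c$ data) transfers unchanged.
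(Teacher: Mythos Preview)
Your proposal is correct and follows the same approach as the paper: the past scattering for \bref{T-2} is obtained by applying the time inversion of \Cref{time inversion}, under which the $-2$ equation on $J^-(\Sigma)$ becomes the $+2$ equation on $J^+(\Sigma)$, so that \Cref{-2 past forward scattering} reduces to the already-established \Cref{+2 future forward scattering} and \Cref{+2 future backward scattering}. The paper is considerably more terse---it simply declares the result ``immediate'' from \Cref{time inversion} and records the conclusions as \Cref{past scattering of -2}---whereas you have spelled out in detail how the Regge--Wheeler reduction, the radiation-field limits, and the horizon flux computation transfer under the $(u\leftrightarrow -v)$ symmetry; this is all correct bookkeeping but is not separately carried out in the paper.
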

\begin{thm}\label{scatteringthm-2}
    The maps
    \begin{align}
        &{}^{(-2)}\mathscr{S}^+={}^{(-2)}\mathscr{F}^+\circ{}^{(-2)}\mathscr{B}^+:\mathcal{E}^{T,-2}_{\mathscr{H}^-}\oplus \mathcal{E}^{T,-2}_{\mathscr{I}^-}\longrightarrow \mathcal{E}^{T,-2}_{\mathscr{H}^+}\oplus\mathcal{E}^{T,-2}_{\mathscr{I}^+},\\
        &{}^{(-2)}\mathscr{S}^+={}^{(-2)}\mathscr{F}^+\circ{}^{(-2)}\mathscr{B}^+:\mathcal{E}^{T,-2}_{\overline{\mathscr{H}^-}}\oplus \mathcal{E}^{T,-2}_{\mathscr{I}^-}\longrightarrow \mathcal{E}^{T,-2}_{\overline{\mathscr{H}^+}}\oplus\mathcal{E}^{T,-2}_{\mathscr{I}^+}
    \end{align}
constitute unitary Hilbert space isomorphism with inverses
\begin{align}
    {}^{(-2)}\mathscr{S}^-={}^{(-2)}\mathscr{F}^-\circ{}^{(-2)}\mathscr{B}^-:\mathcal{E}^{T,-2}_{\mathscr{H}^+}\oplus \mathcal{E}^{T,-2}_{\mathscr{I}^+}\longrightarrow \mathcal{E}^{T,-2}_{\mathscr{H}^-}\oplus\mathcal{E}^{T,-2}_{\mathscr{I}^-}\\
        {}^{(-2)}\mathscr{S}^-={}^{(-2)}\mathscr{F}^-\circ{}^{(-2)}\mathscr{B}^-:\mathcal{E}^{T,-2}_{\overline{\mathscr{H}^+}}\oplus \mathcal{E}^{T,-2}_{\mathscr{I}^+}\longrightarrow \mathcal{E}^{T,-2}_{\overline{\mathscr{H}^-}}\oplus\mathcal{E}^{T,-2}_{\mathscr{I}^-}
\end{align}
on the respective domains.
\end{thm}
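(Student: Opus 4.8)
The plan is to obtain \Cref{scatteringthm-2} directly from the forward and backward scattering isomorphisms already established for the $-2$ Teukolsky equation, namely \Cref{-2 future forward scattering,-2 future backward scattering,-2 past forward scattering}, by composing them across the Cauchy surface $\Sigma$ (respectively $\overline{\Sigma}$). There is no new analytic input: all the content — construction of radiation fields, the energy identities that make the maps unitary, and the backward well-posedness — is carried by those four theorems, and \Cref{scatteringthm-2} is essentially a formal corollary.

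First I would record the pieces. By \Cref{-2 future forward scattering,-2 future backward scattering}, forward evolution ${}^{(-2)}\mathscr{F}^+$ and its inverse ${}^{(-2)}\mathscr{B}^-$ are mutually inverse unitary Hilbert space isomorphisms between $\mathcal{E}^{T,-2}_{\Sigma}$ and $\mathcal{E}^{T,-2}_{\mathscr{H}^+}\oplus\mathcal{E}^{T,-2}_{\mathscr{I}^+}$, and likewise between $\mathcal{E}^{T,-2}_{\overline{\Sigma}}$ and $\mathcal{E}^{T,-2}_{\overline{\mathscr{H}^+}}\oplus\mathcal{E}^{T,-2}_{\mathscr{I}^+}$. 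By \Cref{-2 past forward scattering}, past evolution ${}^{(-2)}\mathscr{F}^-$ is a unitary isomorphism from $\mathcal{E}^{T,-2}_{\Sigma}$ (resp.\ $\mathcal{E}^{T,-2}_{\overline{\Sigma}}$) onto $\mathcal{E}^{T,-2}_{\mathscr{H}^-}\oplus\mathcal{E}^{T,-2}_{\mathscr{I}^-}$ (resp.\ with $\overline{\mathscr{H}^-}$), with unitary inverse ${}^{(-2)}\mathscr{B}^+$. The crucial compatibility observation — already built into the statements of these theorems and into \Cref{RW distinct spaces} — is that the forward map out of $\Sigma$ and the backward map into $\Sigma$ use the \emph{same} space $\mathcal{E}^{T,-2}_{\Sigma}$ of Cauchy data, and likewise for $\overline{\Sigma}$; hence the composition ${}^{(-2)}\mathscr{F}^+\circ{}^{(-2)}\mathscr{B}^+$ is well defined on $\mathcal{E}^{T,-2}_{\mathscr{H}^-}\oplus\mathcal{E}^{T,-2}_{\mathscr{I}^-}$ with image in $\mathcal{E}^{T,-2}_{\mathscr{H}^+}\oplus\mathcal{E}^{T,-2}_{\mathscr{I}^+}$, and analogously for the barred spaces.

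With these pieces in hand, the proof is a composition argument. Given past scattering data, ${}^{(-2)}\mathscr{B}^+$ produces a solution $\underline\alpha$ of \bref{T-2} on $J^-(\Sigma)$ realising that data, whose Cauchy data on $\Sigma$ lie in $\mathcal{E}^{T,-2}_{\Sigma}$; feeding these into ${}^{(-2)}\mathscr{F}^+$ continues $\underline\alpha$ to $J^+(\Sigma)$, and since $J^-(\Sigma)\cup J^+(\Sigma)$ exhausts the exterior (using $\overline{\Sigma}$ when one also wants to include $\mathcal{B}$), the uniqueness clauses of the well-posedness statements of \Cref{Chandra1} identify this as a single solution on all of $\overline{\mathscr{M}}$ whose future radiation fields are exactly the output of ${}^{(-2)}\mathscr{F}^+$. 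Thus ${}^{(-2)}\mathscr{S}^+:={}^{(-2)}\mathscr{F}^+\circ{}^{(-2)}\mathscr{B}^+$ is the scattering map, and as a composition of two unitary Hilbert space isomorphisms it is itself a unitary isomorphism. Its inverse is computed by the elementary identity $({}^{(-2)}\mathscr{F}^+\circ{}^{(-2)}\mathscr{B}^+)^{-1}=({}^{(-2)}\mathscr{B}^+)^{-1}\circ({}^{(-2)}\mathscr{F}^+)^{-1}={}^{(-2)}\mathscr{F}^-\circ{}^{(-2)}\mathscr{B}^-={}^{(-2)}\mathscr{S}^-$, using $({}^{(-2)}\mathscr{B}^+)^{-1}={}^{(-2)}\mathscr{F}^-$ and $({}^{(-2)}\mathscr{F}^+)^{-1}={}^{(-2)}\mathscr{B}^-$. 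The identical argument applies verbatim with $\Sigma,\mathscr{H}^\pm$ replaced by $\overline{\Sigma},\overline{\mathscr{H}^\pm}$.

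The only point that requires any care is bookkeeping the distinction between the spaces attached to $\Sigma$ and to $\overline{\Sigma}$ (and correspondingly $\mathscr{H}^\pm$ versus $\overline{\mathscr{H}^\pm}$), so that the domains and codomains of the two maps being composed genuinely match up; this is precisely the consistency emphasised in \Cref{RW distinct spaces} and in the barred-surface clauses of \Cref{-2 future forward scattering,-2 future backward scattering,-2 past forward scattering}. There is no substantive obstacle beyond this. I would also remark that the same scheme, with the $+2$ maps substituted throughout, proves \Cref{scatteringthm+2}.
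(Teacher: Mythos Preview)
Your proposal is correct and takes essentially the same approach as the paper: the paper simply states that, using \Cref{past scattering of +2,,past scattering of -2} (themselves obtained via \Cref{time inversion} from the future results), the proof of \Cref{scatteringthm+2} and \Cref{scatteringthm-2} is immediate, which is exactly the composition argument you spell out.
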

\begin{remark}
We emphasise that the spaces $\mathcal{E}^{T,\pm2}_{\Sigma}$ and $\mathcal{E}^{T,\pm2}_{\overline{\Sigma}}$ are different and $\mathcal{E}^{T,\pm2}_{\Sigma}\subsetneq\mathcal{E}^{T,\pm2}_{\overline{\Sigma}}$. Similarly, $\mathcal{E}^{T,\pm2}_{\mathscr{H}^+}\subsetneq\mathcal{E}^{T,\pm2}_{\overline{\mathscr{H}^+}}$. Our prescription in distinguishing between these spaces is consistent in the sense that elements of $\mathcal{E}^{T,\pm2}_{\Sigma}$ are mapped into $\mathcal{E}^{T,\pm2}_{\mathscr{H}^+}$ and vice versa. As mentioned for the Regge--Wheeler equation \bref{RW} in \Cref{RW distinct spaces}, our point of view is that the spaces $\mathcal{E}^{T,\pm2}_{\overline{\Sigma}}, \mathcal{E}^{T,\pm2}_{\overline{\mathscr{H}}^\pm}$ are the more natural spaces to consider, but as we make the distinction between these spaces, we additionally face the issue that the inclusion of the bifurcation sphere $\mathcal{B}$ in the domains of the scattering data requires studying both the equations \bref{T+2}, \bref{T-2} and their unknowns in a different frame near $\mathcal{B}$. 
\end{remark}
\subsection{Theorem 3: The Teukolsky--Starobinsky correspondence}\label{subsection 4.3 the Teukolsky--Starobinsky identities}
\begin{thm}\label{Theorem 3 detailed statement}
Let $\upalpha_{\mathscr{I}^+}\in\Gamma_c(\mathscr{I}^+)$. There exists a unique $\underline\upalpha_{\mathscr{I}^+}\in\Gamma(\mathscr{I}^+)$ such that $\|\upalpha_{\mathscr{I}^+}\|_{\mathcal{E}^{T,+2}_{{\mathscr{I}^+}}}=\|\underline\upalpha_{\mathscr{I}^+}\|_{\mathcal{E}^{T,-2}_{{\mathscr{I}^+}}}$ and
\begin{align}\label{constraint null infinity section 4}
    \partial_u^4\upalpha_{\mathscr{I}^+}=\Big[2\fourthorder+6M\partial_u\Big]\underline{\alpha}_{\mathscr{I}^+}.
\end{align}
An analogous statement applies starting from $\underline\upalpha_{\mathscr{I}^+}\in\Gamma_c(\mathscr{I}^+)$ to obtain $\upalpha_{\mathscr{I}^+}\in\Gamma(\mathscr{I}^+)$ with $\|\underline\upalpha_{\mathscr{I}^+}\|_{\mathcal{E}^{T,-2}_{{\mathscr{I}^+}}}=\|\upalpha_{\mathscr{I}^+}\|_{\mathcal{E}^{T,+2}_{{\mathscr{I}^+}}}$ satisfying \bref{constraint null infinity section 4}.\\ 
\indent Let $\underline\upalpha_{{\mathscr{H}^+}}$ be such that $V^2\underline\upalpha_{{\mathscr{H}^+}}\in\Gamma_c(\overline{\mathscr{H}^+})$. There exists a unique $\upalpha_{\mathscr{H}^+}\in\Gamma(\overline{\mathscr{H}^+})$ such that $\|\upalpha_{\mathscr{H}^+}\|_{\mathcal{E}^{T,+2}_{\overline{\mathscr{H}^+}}}=\|\underline\upalpha_{\mathscr{H}^+}\|_{\mathcal{E}^{T,-2}_{\overline{\mathscr{H}^+}}}$ and
\begin{align}\label{constraint horizon section 4}
     \partial_V^4 V^2\underline\upalpha_{\mathscr{H}^+}=\Big[2\fourthorder-3V\partial_V-6\Big]V^{-2}\upalpha_{\mathscr{H}^+}.
\end{align}
An analogous statement applies starting from $\upalpha_{\mathscr{H}^+}$ such that $V^{-2}\upalpha_{\mathscr{H}^+}\in\Gamma_c(\overline{\mathscr{H}^+})$ to obtain $\underline\upalpha_{\mathscr{H}^+}\in\Gamma(\overline{\mathscr{H}^+})$ with $\|\upalpha_{\mathscr{H}^+}\|_{\mathcal{E}^{T,+2}_{\overline{\mathscr{H}^+}}}=\|\underline\upalpha_{\mathscr{H}^+}\|_{\mathcal{E}^{T,-2}_{\overline{\mathscr{H}^+}}}$ satisfying \bref{constraint horizon section 4}. 
\indent The statements above give rise to unitary Hilbert space isomorphisms
\begin{align}
    \mathcal{TS}_{\mathscr{I}^+}:\mathcal{E}^{T,+2}_{\mathscr{I}^+}\longrightarrow\mathcal{E}^{T,-2}_{\mathscr{I}^+},\qquad\qquad\mathcal{TS}_{\mathscr{H}^+}:\mathcal{E}^{T,+2}_{\overline{\mathscr{H}^+}}\longrightarrow\mathcal{E}^{T,-2}_{\overline{\mathscr{H}^+}}.
\end{align}
\begin{align}
    \mathcal{TS}^+=\mathcal{TS}_{\mathscr{H}^+}\oplus\mathcal{TS}_{\mathscr{I}^+}: \mathcal{E}^{T,+2}_{\overline{\mathscr{H}^+}}\oplus\mathcal{E}^{T,+2}_{\mathscr{I}^+}\longrightarrow\mathcal{E}^{T,-2}_{\overline{\mathscr{H}^+}}\oplus\mathcal{E}^{T,-2}_{\mathscr{I}^+}.
\end{align}
Let $\alpha$ be a solution to the $+2$ Teukolsky equation \bref{T+2} arising from scattering data $\upalpha_{\mathscr{I}^+}\in\Gamma_c(\mathscr{I}^+)$, $\upalpha_{\mathscr{H}^+}$ be such that $V^{-2}\upalpha_{\mathscr{H}^+}\in \Gamma_c(\overline{\mathscr{H}^+})$. Using $\mathcal{TS}^+_{\mathscr{I}^+}, \mathcal{TS}^+_{\mathscr{H}^+}$ we can find a unique set of smooth scattering data $\underline\upalpha_{\mathscr{I}^+}, \underline\upalpha_{\mathscr{H}^+}$ on $\mathscr{I}^+, \mathscr{H}^+$ with $V^2\underline\upalpha_{\mathscr{H}^+}$ regular on $\overline{\mathscr{H}^+}$, giving rise to a solution $\underline\alpha$ to the $-2$ Teukolsky equation \bref{T-2} such that the constraints
\begin{align}
    \frac{\Omega^2}{r^2}\Omega\slashed{\nabla}_3 \left(\frac{r^2}{\Omega^2}\nablau\right)^3\alpha-2r^4\slashed{\mathcal{D}}^*_2\slashed{\mathcal{D}}^*_1\overline{\slashed{\mathcal{D}}}_1\slashed{\mathcal{D}}_2 r\Omega^2{\underline\alpha}-6M\left[\Omega\slashed{\nabla}_4+\Omega\slashed{\nabla}_3\right]r\Omega^2{\underline\alpha}=0,\label{theorem constraint 1}\\
\frac{\Omega^2}{r^2}\Omega\slashed{\nabla}_4 \left(\frac{r^2}{\Omega^2}\Omega\slashed{\nabla}_4\right)^3{\underline\alpha}-2r^4\slashed{\mathcal{D}}^*_2\slashed{\mathcal{D}}^*_1\overline{\slashed{\mathcal{D}}}_1\slashed{\mathcal{D}}_2 r\Omega^2\alpha+6M\left[\Omega\slashed{\nabla}_4+\Omega\slashed{\nabla}_3\right]r\Omega^2\alpha=0.\label{theorem constraint 2}
\end{align}
are satisfied by $\alpha, \underline\alpha$ on $\overline{\mathscr{M}}$. The data satisfy
\begin{align}\label{unitarity}
    \|\upalpha_{\mathscr{I}^+}\|_{\mathcal{E}^{T,+2}_{\mathscr{I}^+}}^2=\|\underline\upalpha_{\mathscr{I}^+}\|_{\mathcal{E}^{T,-2}_{\mathscr{I}^+}}^2,\qquad\qquad \|\upalpha_{\mathscr{H}^+}\|_{\mathcal{E}^{T,+2}_{\mathscr{H}^+}}^2=\|\underline\upalpha_{\mathscr{H}^+}\|_{\mathcal{E}^{T,-2}_{\mathscr{H}^+}}^2.
\end{align}
\indent Analogously, let $\underline\alpha$ be a solution to the $-2$ Teukolsky equation \bref{T-2} arising from scattering data $\underline\upalpha_{\mathscr{I}^+}\in\Gamma_c(\mathscr{I}^+)$, $\underline\upalpha_{\mathscr{H}^+}$ be such that $V^{2}\underline\upalpha_{\mathscr{H}^+}\in \Gamma_c(\overline{\mathscr{H}^+})$. Then there exist unique smooth scattering data $\upalpha_{\mathscr{I}^+}, \upalpha_{\mathscr{H}^+}$ on $\mathscr{I}^+, \mathscr{H}^+$ with $V^{-2}\upalpha_{\mathscr{H}^+}$ regular on $\overline{\mathscr{H}^+}$, giving rise to a solution $\alpha$ to the +2 Teukolsky equation \bref{T+2} such that $\alpha, \underline\alpha$ satisfy the constraints \bref{theorem constraint 1}, \bref{theorem constraint 2}.\\
\indent An analogous statement applies to scattering from $\mathscr{I}^-, \mathscr{H}^-$ and we have the isomorphism
\begin{align}
    \mathcal{TS}^-=\mathcal{TS}_{\mathscr{H}^-}\oplus\mathcal{TS}_{\mathscr{I}^-}: \mathcal{E}^{T,+2}_{\overline{\mathscr{H}^-}}\oplus\mathcal{E}^{T,+2}_{\mathscr{I}^-}\longrightarrow\mathcal{E}^{T,-2}_{\overline{\mathscr{H}^-}}\oplus\mathcal{E}^{T,-2}_{\mathscr{I}^-}.
\end{align}
\end{thm}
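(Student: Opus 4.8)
\textbf{Proof proposal for \Cref{Theorem 3 detailed statement}.}

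The plan is to reduce everything to a family of ordinary differential equations in the affine parameter along $\mathscr{I}^+$ (resp.\ $\overline{\mathscr{H}^+}$), one for each spherical harmonic mode, and then read off unitarity by Plancherel. First I would take the limits of the Teukolsky--Starobinsky identities \eqref{eq:TS1}, \eqref{eq:TS2} towards the asymptotic boundary, using the existence and relations for the radiation fields of $\alpha,\underline\alpha$ and of $\Psi=(\tfrac{r^2}{\Omega^2}\nablau)^2 r\Omega^2\alpha$, $\underline\Psi$ established in \Cref{+2 future forward scattering,-2 future forward scattering} and the forward asymptotics of the preceding sections. Near $\mathscr{I}^+$ the weighted operators $\tfrac{r^2}{\Omega^2}\nablau$ and $6M[\nablau+\nablav]$ converge to $\partial_u$ and $6M\partial_u$, $\tfrac{\Omega^2}{r^2}\Omega\slashed{\nabla}_3(\tfrac{r^2}{\Omega^2}\nablau)^3 r\Omega^2\alpha\to\partial_u^4\upalpha_{\mathscr{I}^+}$, and $2r^4\fancydstar_2\fancydstar_1\overline{\fancyd_1}\fancyd_2\, r\Omega^2\underline\alpha\to 2\fourthorder\underline\alpha_{\mathscr{I}^+}$, which gives \eqref{constraint null infinity section 4}; near $\overline{\mathscr{H}^+}$, using the Kruskal-regular quantities $V^{-2}\upalpha_{\mathscr{H}^+}$, $V^2\underline\upalpha_{\mathscr{H}^+}$ of \Cref{regular} and $2M\partial_v=V\partial_V$, the identity \eqref{eq:TS2} is the one regular there and limits to \eqref{constraint horizon section 4}. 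The key algebraic input, extracted from the operator identities of \Cref{subsubsection 2.1.2 Elliptic estimates on S2}, is that $2\fourthorder=\mathcal{A}_2(\mathcal{A}_2-2)=(\mathring{\slashed{\Delta}}-2)(\mathring{\slashed{\Delta}}-4)$ on symmetric traceless $S^2$ 2-tensors, with eigenvalue $c_\ell=(\ell-1)\ell(\ell+1)(\ell+2)>6$ on the $\ell$-th mode ($\ell\geq2$); in particular $2\fourthorder$ is positive and invertible, so no mode requires special treatment.

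On the $\ell$-th mode, \eqref{constraint null infinity section 4} reads $\partial_u^4\upalpha_\ell=(c_\ell+6M\partial_u)\underline\alpha_\ell$. Given $\upalpha_{\mathscr{I}^+}\in\Gamma_c(\mathscr{I}^+)$, convolving $\partial_u^4\upalpha_\ell$ with the causal Green's function $\tfrac{1}{6M}e^{-c_\ell u/6M}$ supported on $u\geq0$ produces the unique $\underline\alpha_\ell$ bounded (indeed decaying) as $u\to-\infty$; it is smooth, lies in $\Gamma(\mathscr{I}^+)$, automatically satisfies $\int_{\mathbb{R}}\underline\alpha_\ell\,du=0$, and the complementary solution $e^{-c_\ell u/6M}$ is not in $\mathcal{E}^{T,-2}_{\mathscr{I}^+}$, giving uniqueness in the energy space. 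Conversely, starting from $\underline\upalpha_{\mathscr{I}^+}\in\Gamma_c$ with $\int\underline\upalpha_{\mathscr{I}^+}=0$, integrating $\partial_u^4\upalpha=(c_\ell+6M\partial_u)\underline\alpha$ four times is possible precisely because of the moment condition (which kills $\int_{\mathbb{R}}$ of the right side), while $\upalpha_{\mathscr{I}^+}\to0$ as $u\to+\infty$ fixes the remaining constants, yielding a unique $\upalpha_{\mathscr{I}^+}\in\Gamma(\mathscr{I}^+)$. Unitarity is then transparent: Fourier-transforming in $u$ gives $\omega^8|\hat\upalpha_\ell|^2=(c_\ell^2+36M^2\omega^2)|\hat{\underline\alpha}_\ell|^2$, hence $\|\upalpha_{\mathscr{I}^+}\|^2_{\mathcal{E}^{T,+2}_{\mathscr{I}^+}}=\sum_\ell\int\omega^6|\hat\upalpha_\ell|^2=\sum_\ell\int\big(c_\ell^2/\omega^2+36M^2\big)|\hat{\underline\alpha}_\ell|^2=\|\underline\upalpha_{\mathscr{I}^+}\|^2_{\mathcal{E}^{T,-2}_{\mathscr{I}^+}}$, using $\widehat{\int_u^\infty\underline\alpha_\ell}=(i\omega)^{-1}\hat{\underline\alpha}_\ell$ and $c_\ell=|\mathcal{A}_2(\mathcal{A}_2-2)|_\ell$. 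The horizon case is handled identically, except that on the $\ell$-th mode \eqref{constraint horizon section 4} becomes the Euler-type equation $\partial_V^4(V^2\underline\upalpha_\ell)=(c_\ell-6-3V\partial_V)(V^{-2}\upalpha_\ell)$, which is constant-coefficient in $\tau=v/2M=\log V$; the integrating factor is $V^{(c_\ell-6)/3}$, regularity at $\mathcal{B}=\{V=0\}$ replaces the decay-at-$-\infty$ condition in selecting the solution, and the Plancherel identity is taken in $v$. Density of $\Gamma_c$ in the scattering spaces (\Cref{+2 norm is norm on scri+} and its analogues) upgrades these to unitary Hilbert-space isomorphisms $\mathcal{TS}_{\mathscr{I}^+}$, $\mathcal{TS}_{\mathscr{H}^+}$, and $\mathcal{TS}^+=\mathcal{TS}_{\mathscr{H}^+}\oplus\mathcal{TS}_{\mathscr{I}^+}$.

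For the global assertion, given $\alpha$ solving \eqref{T+2} from scattering data $(\upalpha_{\mathscr{I}^+},\upalpha_{\mathscr{H}^+})$ I set $(\underline\upalpha_{\mathscr{I}^+},\underline\upalpha_{\mathscr{H}^+}):=\mathcal{TS}^+(\upalpha_{\mathscr{I}^+},\upalpha_{\mathscr{H}^+})$ and let $\underline\alpha$ be the solution of \eqref{T-2} with this data furnished by the backward map of \Cref{-2 future backward scattering} --- the moment condition $\int\underline\upalpha_{\mathscr{I}^+}=0$ and the regularity of $V^2\underline\upalpha_{\mathscr{H}^+}$ at $\mathcal{B}$ required there being exactly what $\mathcal{TS}^+$ produces. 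Writing $\mathcal{E}_1,\mathcal{E}_2$ for the left-minus-right sides of \eqref{theorem constraint 1}, \eqref{theorem constraint 2}, I would use \eqref{eq:d4Psi}, \eqref{eq:d4d4Psi}, \eqref{eq:d3psibar}, \eqref{eq:d3d3psibar} --- which express $\Psi,\underline\Psi$ and their second weighted null derivatives purely in terms of $\alpha$, resp.\ $\underline\alpha$, and the Teukolsky equations alone --- together with the Regge--Wheeler equation \eqref{RW} obeyed by $\Psi$ and $\underline\Psi$ to show that $\mathcal{E}_1,\mathcal{E}_2$ satisfy a closed homogeneous system of transport equations along the null directions. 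Taking limits to $\mathscr{I}^+$ and $\overline{\mathscr{H}^+}$ and invoking the boundary constraints \eqref{constraint null infinity section 4}, \eqref{constraint horizon section 4} together with the $\mathcal{TS}^+$-compatibility of the data shows the traces of $\mathcal{E}_1,\mathcal{E}_2$ vanish on all of $\mathscr{H}^+\cup\mathscr{I}^+$; ODE uniqueness along the null characteristics in $J^-(\mathscr{H}^+\cup\mathscr{I}^+)\cap\mathscr{M}$ (in the spirit of \Cref{backwards wellposedness +2,backwards wellposedness -2,WP+2backwards,WP-2backwards}) then forces $\mathcal{E}_1=\mathcal{E}_2\equiv0$ on $\overline{\mathscr{M}}$. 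The unitarity relations \eqref{unitarity} are inherited from $\mathcal{TS}^+$ and \Cref{scatteringthm+2,scatteringthm-2}. Finally the statements for scattering from $\mathscr{I}^-,\mathscr{H}^-$, and the isomorphism $\mathcal{TS}^-$, follow from the above by the time-inversion symmetry of \Cref{time inversion}, which interchanges the roles of $\alpha$ and $\underline\alpha$ and of $\mathscr{I}^\pm,\mathscr{H}^\pm$.

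The main obstacle is the global step. Because only the decoupled equations \eqref{T+2}, \eqref{T-2} are retained and not the full linearised Einstein system, the identities \eqref{theorem constraint 1}, \eqref{theorem constraint 2} are no longer automatic and must be propagated from the boundary; the delicate point is to recognise, via \eqref{eq:d4d4Psi}--\eqref{eq:d3d3psibar}, that the defects $\mathcal{E}_1,\mathcal{E}_2$ close up into a first-order system whose characteristic data live on the \emph{whole} of $\mathscr{H}^+\cup\mathscr{I}^+$ --- which is exactly why the correspondence requires prescribing data on both components of the asymptotic boundary, and why Robinson--Trautman spacetimes (for which one of $\alpha,\underline\alpha$ vanishes identically while the other carries nonvanishing radiation) are excluded: such a pair would violate the boundary constraints and hence cannot come from finite-energy scattering data considered globally on the exterior. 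A secondary technical nuisance is the treatment of the bifurcation sphere: \eqref{constraint horizon section 4} must be read in the Kruskal-regular variables, and one must verify that the Euler-type ODE yields data with precisely the $V$-weights demanded by $\mathcal{E}^{T,\pm2}_{\overline{\mathscr{H}^+}}$.
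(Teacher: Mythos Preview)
There is a genuine gap in the boundary step. Your key algebraic input, that $2\fourthorder=\mathcal{A}_2(\mathcal{A}_2-2)$ on symmetric traceless $S^2$ 2-tensors, is \emph{false}. The conjugation in $\overline{\fancyd_1}$ makes the fourth-order operator parity-odd: writing $\Xi=\mathring{\fancydstar_2}\mathring{\fancydstar_1}(f,g)$ one finds $2\fourthorder\,\Xi=\mathring{\fancydstar_2}\mathring{\fancydstar_1}\big(\mathring{\slashed\Delta}(\mathring{\slashed\Delta}+2)f,\,-\mathring{\slashed\Delta}(\mathring{\slashed\Delta}+2)g\big)$, so the eigenvalue is $+c_\ell$ on the even-parity harmonic and $-c_\ell$ on the odd-parity one, whereas $\mathcal{A}_2(\mathcal{A}_2-2)$ gives $+c_\ell$ on both. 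Consequently \eqref{constraint null infinity section 4} does not reduce to a single ODE $(c_\ell+6M\partial_u)\underline\alpha_\ell=\partial_u^4\upalpha_\ell$ per angular mode; it scalarises into two first-order equations with opposite signs in front of the elliptic operator, i.e.\ two fourth-order parabolic equations in $u$ well-posed in \emph{opposite} directions, and your single causal Green's function $\tfrac{1}{6M}e^{-c_\ell u/6M}$ produces an exponentially growing solution on the odd half. The paper (\Cref{alphabar out of alpha on scri}) handles exactly this by scalarising and solving each parabolic equation in its good direction; the same parity split recurs on $\overline{\mathscr{H}^+}$ (\Cref{alpha out of alphabar on H}), where moreover the degeneracy of the Euler operator at $V=0$ forces an elliptic identity on $\mathcal{B}$ that fixes the initial value for the potential evolved away from $\mathcal{B}$. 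Your Plancherel computation for unitarity happens to survive, since $|\pm c_\ell+6Mi\omega|^2=c_\ell^2+36M^2\omega^2$, but existence and uniqueness need the parity decomposition.

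The propagation step also has a gap. Imposing \eqref{constraint null infinity section 4} only kills the radiation field of $\TSm[\alpha,\underline\alpha]$ on $\mathscr{I}^+$, and \eqref{constraint horizon section 4} only kills that of $\TSp[\alpha,\underline\alpha]$ on $\overline{\mathscr{H}^+}$; one still needs the \emph{other} radiation field of each defect to vanish. The paper does not close $\mathcal{E}_1,\mathcal{E}_2$ as a null-transport system. It shows instead (\Cref{propagation lemma}) that $\TSp[\alpha,\underline\alpha]$ and $\TSm[\alpha,\underline\alpha]$ themselves satisfy the $+2$ and $-2$ Teukolsky equations, and uses the separate algebraic identity of \Cref{not independent}---that the pair of defects automatically satisfies its own Teukolsky--Starobinsky constraint---to deduce the missing boundary trace on each component. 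Vanishing of both defects on all of $J^+(\overline\Sigma)$ then follows directly from the backward uniqueness already established in \Cref{+2 future backward scattering,-2 future backward scattering}. Your sketch does not supply this mechanism.
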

\subsection{Corollary 1: A mixed scattering statement for combined ($\alpha,\underline\alpha$)}\label{subsection 4.4 Corollary 1: mixed scattering}
Importantly, we have the following corollary:
\begin{corollary}\label{corollary to be proven}
Let $\upalpha_{\mathscr{I}^-}\in\Gamma_c(\mathscr{I}^-)$, $\underline\upalpha_{\mathscr{H}^+}$ be such that $V^{2}\underline\upalpha_{\mathscr{H}^-}\in\Gamma_c(\overline{\mathscr{H}^-})$. Then there exists a unique smooth pair $(\alpha, \underline\alpha)$ on $\mathscr{M}$, such that $\alpha$ solves \bref{T+2}, $\underline\alpha$ solves \bref{T-2}, $\alpha, \underline\alpha$ satisfy \bref{theorem constraint 1}, \bref{theorem constraint 2} and $\underline\alpha$ realises $\underline\upalpha_{\mathscr{H}^-}$ as its radiation field on $\overline{\mathscr{H}^-}$, $\alpha$ realises $\upalpha_{\mathscr{I}^-}$ as its radiation field on $\mathscr{I}^-$. Moreover, the radiation fields of $\alpha$ and $\underline\alpha$ on $\overline{\mathscr{H}^+}, \mathscr{I}^+$ are such that 
\begin{align}
    \|\upalpha_{\mathscr{H}^+}\|_{\mathcal{E}^{T,+2}_{\overline{\mathscr{H}^+}}}^2\;+\;\|\underline\upalpha_{\mathscr{I}^+}\|^2_{\mathcal{E}^{T,-2}_{\mathscr{I}^+}}=\|\upalpha_{\mathscr{I}^-}\|_{\mathcal{E}^{T,+2}_{{\mathscr{I}^-}}}^2\;+\;\|\underline\alpha_{{\mathscr{H}^-}}\|^2_{\mathcal{E}^{T,-2}_{\overline{\mathscr{H}^-}}}.
\end{align}
This extends to a unitary Hilbert-space isomorphism
\begin{align}
    \mathscr{S}^{-2,+2}:\mathcal{E}^{T,-2}_{\overline{\mathscr{H}^-}}\oplus\mathcal{E}^{T,+2}_{\mathscr{I}^-}\longrightarrow \mathcal{E}^{T,+2}_{\overline{\mathscr{H}^+}}\oplus\mathcal{E}^{T,-2}_{\mathscr{I}^+}.
\end{align}
\end{corollary}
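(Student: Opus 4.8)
The plan is to derive \Cref{corollary to be proven} from the scattering isomorphisms \Cref{scatteringthm+2}, \Cref{scatteringthm-2} together with the Teukolsky--Starobinsky correspondence \Cref{Theorem 3 detailed statement}. The key point is that, although one is handed only the \emph{mixed} past data $(\upalpha_{\mathscr{I}^-},\underline\upalpha_{\mathscr{H}^-})$, the isomorphism $\mathcal{TS}_{\mathscr{H}^-}$ can be used to complete it to a \emph{full} past scattering data set for the $+2$ equation. Concretely, assuming as in the introductory statement that $\int_{-\infty}^{\infty}d\bar v\,\upalpha_{\mathscr{I}^-}=0$ so that $\upalpha_{\mathscr{I}^-}\in\mathcal{E}^{T,+2}_{\mathscr{I}^-}$, I would set $\upalpha_{\mathscr{H}^-}:=\mathcal{TS}_{\mathscr{H}^-}^{-1}(\underline\upalpha_{\mathscr{H}^-})\in\mathcal{E}^{T,+2}_{\overline{\mathscr{H}^-}}$; by the regularity clause of \Cref{Theorem 3 detailed statement}, $V^{-2}\upalpha_{\mathscr{H}^-}$ is smooth on $\overline{\mathscr{H}^-}$, although it need no longer be compactly supported.

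Next, I would apply the past backward-scattering map ${}^{(+2)}\mathscr{B}^+$ (the past analogue of the map in \Cref{+2 future backward scattering}, available via the time inversion of \Cref{time inversion}) to the full past data $(\upalpha_{\mathscr{H}^-},\upalpha_{\mathscr{I}^-})$ to produce the solution $\alpha$ of \bref{T+2} on $\mathscr{M}$; it realises $\upalpha_{\mathscr{I}^-}$ on $\mathscr{I}^-$ by construction. The past version of \Cref{Theorem 3 detailed statement} then produces the unique solution $\underline\alpha$ of \bref{T-2} such that $(\alpha,\underline\alpha)$ satisfies the constraints \bref{theorem constraint 1}, \bref{theorem constraint 2} on all of $\overline{\mathscr{M}}$, and whose past radiation fields are $\mathcal{TS}^-(\upalpha_{\mathscr{H}^-},\upalpha_{\mathscr{I}^-})$; in particular its radiation field on $\overline{\mathscr{H}^-}$ is $\mathcal{TS}_{\mathscr{H}^-}(\upalpha_{\mathscr{H}^-})=\underline\upalpha_{\mathscr{H}^-}$, as required. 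Smoothness of the pair $(\alpha,\underline\alpha)$ on $\mathscr{M}$ comes from the local well-posedness results of \Cref{Chandra1} and finite speed of propagation, even though the data produced in the previous step is not compactly supported.

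For the energy identity and unitarity, write $(\upalpha_{\mathscr{H}^+},\upalpha_{\mathscr{I}^+})$, $(\underline\upalpha_{\mathscr{H}^+},\underline\upalpha_{\mathscr{I}^+})$ for the future radiation fields of $\alpha$, $\underline\alpha$. Since $(\alpha,\underline\alpha)$ solves \bref{theorem constraint 1}, \bref{theorem constraint 2} on $\overline{\mathscr{M}}$, passing these to the limit on $\overline{\mathscr{H}^+}$ and $\mathscr{I}^+$ reproduces the boundary relations \bref{constraint horizon section 4}, \bref{constraint null infinity section 4} with the weights dictated by \Cref{introduction regular frame norm}; equivalently, the uniqueness clause of \Cref{Theorem 3 detailed statement} applied to $\alpha$ in the future direction gives $\underline\upalpha_{\mathscr{I}^+}=\mathcal{TS}_{\mathscr{I}^+}(\upalpha_{\mathscr{I}^+})$ and $\underline\upalpha_{\mathscr{H}^+}=\mathcal{TS}_{\mathscr{H}^+}(\upalpha_{\mathscr{H}^+})$. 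Then the unitarity of ${}^{(+2)}\mathscr{S}$ (\Cref{scatteringthm+2}) and the isometry property of $\mathcal{TS}_{\mathscr{H}^-}$, $\mathcal{TS}_{\mathscr{I}^+}$ give
\begin{align*}
\|\upalpha_{\mathscr{H}^+}\|^2_{\mathcal{E}^{T,+2}_{\overline{\mathscr{H}^+}}}+\|\underline\upalpha_{\mathscr{I}^+}\|^2_{\mathcal{E}^{T,-2}_{\mathscr{I}^+}}
&=\|\upalpha_{\mathscr{H}^+}\|^2_{\mathcal{E}^{T,+2}_{\overline{\mathscr{H}^+}}}+\|\upalpha_{\mathscr{I}^+}\|^2_{\mathcal{E}^{T,+2}_{\mathscr{I}^+}}\\
&=\|\upalpha_{\mathscr{H}^-}\|^2_{\mathcal{E}^{T,+2}_{\overline{\mathscr{H}^-}}}+\|\upalpha_{\mathscr{I}^-}\|^2_{\mathcal{E}^{T,+2}_{\mathscr{I}^-}}\\
&=\|\underline\upalpha_{\mathscr{H}^-}\|^2_{\mathcal{E}^{T,-2}_{\overline{\mathscr{H}^-}}}+\|\upalpha_{\mathscr{I}^-}\|^2_{\mathcal{E}^{T,+2}_{\mathscr{I}^-}},
\end{align*}
which is the asserted identity (running the same computation through ${}^{(-2)}\mathscr{S}$ of \Cref{scatteringthm-2} gives the same value and is a useful consistency check). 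Hence $(\upalpha_{\mathscr{I}^-},\underline\upalpha_{\mathscr{H}^-})\mapsto(\upalpha_{\mathscr{H}^+},\underline\upalpha_{\mathscr{I}^+})$ is an isometry; since it equals the composition $(\mathrm{Id}\oplus\mathcal{TS}_{\mathscr{I}^+})\circ{}^{(+2)}\mathscr{S}\circ(\mathrm{Id}\oplus\mathcal{TS}_{\mathscr{H}^-}^{-1})$ of three unitary isomorphisms, it extends from the dense subspace of smooth, compactly supported mixed data to the unitary isomorphism $\mathscr{S}^{-2,+2}$, and surjectivity is then automatic (or one simply repeats the construction starting from future mixed data).

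Finally, for uniqueness, if $(\alpha,\underline\alpha)$ and $(\alpha',\underline\alpha')$ both satisfy all the requirements with the same data, I would restrict \bref{theorem constraint 2} to $\overline{\mathscr{H}^-}$: it degenerates to a transport equation along $\overline{\mathscr{H}^-}$ determining the radiation field of $\alpha$ there from that of $\underline\alpha$ (the horizon analogue of \bref{constraint horizon section 4}), so $\alpha$ and $\alpha'$ share the radiation field $\mathcal{TS}_{\mathscr{H}^-}^{-1}(\underline\upalpha_{\mathscr{H}^-})$ on $\overline{\mathscr{H}^-}$; as they also agree on $\mathscr{I}^-$, the uniqueness part of the past backward-scattering statement forces $\alpha=\alpha'$, and then $\underline\alpha=\underline\alpha'$ by the uniqueness clause of \Cref{Theorem 3 detailed statement}. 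The main obstacle throughout is the function-space bookkeeping coupled with regularity: the maps $\mathcal{TS}_{\mathscr{H}^-}^{\pm1}$, $\mathcal{TS}_{\mathscr{I}^\pm}^{\pm1}$ do not preserve compact support, so the argument must be carried out at the Hilbert-space level while simultaneously invoking domain-of-dependence and the local well-posedness of \Cref{Chandra1} to retain smoothness of $\alpha,\underline\alpha$ on $\mathscr{M}$; relatedly, one must check with care that taking the limits of the bulk constraints \bref{theorem constraint 1}, \bref{theorem constraint 2} onto $\overline{\mathscr{H}^\pm}$ and $\mathscr{I}^\pm$ reproduces \bref{constraint horizon section 4}, \bref{constraint null infinity section 4} with the correct $U,V$ weights near $\mathcal{B}$, which is precisely where the regular-frame bookkeeping recorded in \Cref{introduction regular frame norm} enters.
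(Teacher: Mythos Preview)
Your approach is correct and is genuinely different from the paper's. You factor $\mathscr{S}^{-2,+2}$ as the composition $(\mathrm{Id}\oplus\mathcal{TS}_{\mathscr{I}^+})\circ{}^{(+2)}\mathscr{S}\circ(\mathcal{TS}_{\mathscr{H}^-}^{-1}\oplus\mathrm{Id})$ of three unitary isomorphisms already established, which immediately yields the Hilbert--space isomorphism and the energy identity; you then invoke the past version of \Cref{Theorem 3 detailed statement} to obtain the smooth pair $(\alpha,\underline\alpha)$ satisfying the constraints. The paper instead constructs $\alpha$ and $\underline\alpha$ \emph{separately} from their respective full past data $(\upalpha_{\mathscr{H}^-},\upalpha_{\mathscr{I}^-})$ and $(\underline\upalpha_{\mathscr{H}^-},\underline\upalpha_{\mathscr{I}^-})$, and then devotes most of the argument to showing \emph{directly} that the smooth $\underline\alpha$ has a pointwise radiation field $\underline\upalpha_{\mathscr{I}^+}$ on $\mathscr{I}^+$ which decays at both ends and satisfies $\int_{-\infty}^{\infty}\underline\upalpha_{\mathscr{I}^+}\,du=0$: this is done by propagating a backwards $r^p$--estimate for $\underline\Psi$ all the way from $\mathscr{I}^-$ through $J^-(\overline\Sigma)$ into the future, and then reading off the mean--zero condition from the transport identity $\partial_u\underline\psi_{\mathscr{I}^+}=\mathcal{A}_2\,\underline\upalpha_{\mathscr{I}^+}$.

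What each approach buys: your composition argument is shorter and makes the unitarity transparent, but the step where you ``apply the uniqueness clause of \Cref{Theorem 3 detailed statement} in the future direction'' is not literally available, since that theorem is stated for \emph{compactly supported} future data while here $(\upalpha_{\mathscr{H}^+},\upalpha_{\mathscr{I}^+})$ are not compactly supported; equivalently, you still owe the verification that the abstract element $\mathcal{TS}_{\mathscr{I}^+}(\upalpha_{\mathscr{I}^+})$ coincides with the \emph{pointwise} radiation field of the smooth $\underline\alpha$ on $\mathscr{I}^+$. The paper's hands--on analysis supplies exactly this missing ingredient (existence and asymptotics of $\underline\upalpha_{\mathscr{I}^+}$) without appealing to any non--compact extension of Theorem~3. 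You correctly flag this as the main obstacle; to close it within your framework you would need either to upgrade \Cref{Theorem 3 detailed statement} to non--compact future data, or to import the paper's $r^p$--from--$\mathscr{I}^-$ argument to justify taking the limit of the constraints on $\mathscr{I}^+$.
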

\section{Scattering theory of the Regge--Wheeler equation}\label{section 5 scattering theory for RW}
This section is devoted to proving \Cref{Theorem 1} in the introduction, whose detailed statement is contained in \Cref{forwardRW,,backwardRW,,RW isomorphisms}.\\
\indent We will first study in \Cref{subsection 5.2 subsection Radiation fields} the behaviour of future radiation fields belonging to solutions of the Regge--Wheeler equation \bref{RW} that arise from smooth, compactly supported data on $\Sigma^*$ using the estimates gathered in \Cref{subsection 5.1 Basic integrated boundedness and decay estimates}, and this will justify the definitions of radiation fields and spaces of scattering states made in \Cref{subsection 4.1 Theorem 1}. We will first prove \Cref{forwardRW} (in \Cref{subsection 5.3 the forwards scattering map}) and \Cref{backwardRW,,RW isomorphisms} (in \Cref{subsubsection 5.4 the backwards scattering map}) for the case of data on $\Sigma^*$, and most of what follows applies to $\Sigma$ and $\overline{\Sigma}$ unless otherwise stated. \Cref{subsection 5.5 auxiliary results} contain additional results on backwards scattering that will become important later on in the study of the Teukolsky--Starobinsky identities in \Cref{section 9 TS correspondence}.
\subsection{Basic integrated boundedness and decay estimates}\label{subsection 5.1 Basic integrated boundedness and decay estimates}
Here we collect basic boundedness and decay results for (\ref{RW}) proven in \cite{DHR16}. In what follows $(\uppsi,\uppsi')$ is a smooth data set for \cref{RW} as in \Cref{RWwpCauchy}. 

\noindent $\bullet$ \emph{\textbf{Energy boundedness}} Let $X=T:=\Omega e_3+\Omega e_4$, multiply (\ref{RW}) by $\slashed{\nabla}_X$ and integrate by parts over $S^2$ to obtain
\begin{align}\label{T derivative identity}
    \nablau\left[|\nablav\Psi|^2+\Omega^2|\slashed\nabla \Psi|^2+V|\Psi|^2\right]+\nablav\left[|\nablau\Psi|^2+\Omega^2|\slashed\nabla \Psi|^2+V|\Psi|^2\right]\stackrel{S^2}{\equiv}0.
\end{align}
For an outgoing null hypersurface $\mathscr{N}$ define
\begin{align}
F_{\mathscr{N}}^T[\Psi]:=\int_{\mathscr{N}}\sin\theta d\theta d\phi dv\left[|\Omega\slashed\nabla_4\Psi|^2+\Omega^2|\slashed\nabla\Psi|^2+V|\Psi|^2\right].
\end{align}
Similarly for an ingoing null hypersurface $\underline{\mathscr{N}}$ we define
\begin{align}
\underline{F}_{\underline{\mathscr{N}}}^T[\Psi]:=\int_{\underline{\mathscr{N}}}\sin\theta d\theta d\phi du\left[|\Omega\slashed\nabla_3\Psi|^2+\Omega^2|\slashed\nabla\Psi|^2+V|\Psi|^2\right].
\end{align}
Denote $F_{u}^T[\Psi](v_0,v)=F_{\mathscr{C}_{u}\cap\{\bar{v}\in[v_0,v]\}}^T[\Psi]$, $\underline{F}_{v}^T[\Psi](u_0,u)=\underline{F}_{\underline{\mathscr{C}}_{v}\cap\{\bar{u}\in[u_0,u]\}}^T[\Psi]$. Integrating \bref{T derivative identity} over the region $\mathscr{D}^{u,v}_{u_0,v_0}$ yields
\begin{align}
    F^T_u[\Psi](v_0,v)+\underline{F}^T_v[\Psi](u_0,u)= F^T_{u_0}[\Psi](v_0,v)+\underline{F}^T_{v_0}[\Psi](u_0,u).
\end{align}
Similarly, integrating \bref{T derivative identity} over $J^+(\Sigma^*)\cap J^-(\mathscr{C}_u)\cap J^-(\underline{\mathscr{C}}_v)$ yields
\begin{align}
     F^T_u[\Psi](v_0,v)+\underline{F}^T_v[\Psi](u_0,u)=\mathbb{F}_{\Sigma^*\cap J^-(\mathscr{C}_u)\cap J^-(\underline{\mathscr{C}}_v)}[\Psi],
\end{align}
where $\mathbb{F}_{\Sigma^*}[\Psi]$ is given by
\begin{align}
   \mathbb{F}_{\Sigma^*}[\Psi]= \int_{\Sigma^*}dr\sin\theta d\theta d\phi\; (2-\Omega^2)|\slashednabla_{T^*}\Psi|^2+\Omega^2|\slashednabla_R\Psi|^2+|\slashednabla\Psi|^2+(3\Omega^2+1)\frac{|\Psi|^2}{r^2},
\end{align}
and $\mathbb{F}_{\mathcal{U}}$ for a subset $\mathcal{U}\in\Sigma^*$ being defined analogously.\\
\indent Integrating \bref{T derivative identity} over $J^+(\Sigma)\cap J^-(\mathscr{C}_u)\cap J^-(\underline{\mathscr{C}}_v)$ instead yields a similar identity:
\begin{align} 
 F^T_u[\Psi](v_0,v)+\underline{F}^T_v[\Psi](u_0,u)=\mathbb{F}_{\Sigma\cap J^-(\mathscr{C}_u)\cap J^-(\underline{\mathscr{C}}_v)}[\Psi],
\end{align}
with
\begin{align}
    \mathbb{F}^T_{\Sigma}[\Psi]=\int_{\Sigma} \sin\theta dr d\theta d\phi \;\frac{1}{\Omega^2}|\slashednabla_T\Psi|^2+\Omega^2|\slashednabla_R \Psi|^2+|\slashednabla\Psi|^2+(3\Omega^2+1)\frac{|\Psi|^2}{r^2}.
\end{align}
and similarly for $\overline{\Sigma}$.\\
\indent All of the energies defined here so far become degenerate at $\overline{\mathscr{H}^+}$. We can compensate for that for energies defined over hypersurfaces do not intersect the bifurcation sphere $\mathcal{B}$, and we do this by modifying $X$ with a multiple of $\frac{1}{\Omega^2}T$ and repeating the procedure above as in \cite{DR08}, making crucial use of the positivity of the surface gravity of $\mathscr{H}^+$. We then obtain the so called 'redshift' estimates:
\begin{defin}
Define the following nondegenerate energies
    \begin{align}
    F_{\mathscr{N}}[\Psi]=\int_{\mathscr{N}}\sin\theta d\theta d\phi dv \left[|\Omega\slashed\nabla_4\Psi|^2+|\slashed\nabla\Psi|^2+\frac{1}{r^2}|\Psi|^2\right],
\end{align}
\begin{align}
    \underline{F}_{\underline{\mathscr{N}}}[\Psi]=\int_{\underline{\mathscr{N}}}\sin\theta d\theta d\phi du\Omega^2 \left[|\Omega^{-1}\slashed\nabla_3\Psi|^2+|\slashed\nabla\Psi|^2+\frac{1}{r^2}|\Psi|^2\right],
\end{align}
\begin{align}
    \mathbb{F}_{\Sigma^*}[\Psi]=\int_{\Sigma^*} \sin\theta dr d\theta d\phi\left[|\slashednabla_{T^*}\Psi|^2+|\slashednabla_R\Psi|^2+\frac{1}{r^2}|\Psi|^2+|\slashednabla\Psi|^2\right],
\end{align}
and their higher order versions
\begin{align}
    F_{\mathscr{N}}^{n,T,\slashed{\nabla}}[\Psi]=\sum_{i+|\alpha|\leq n}F_{\mathscr{N}}[T^i(r\slashed{\nabla})^\alpha \Psi](v_0,v),
\end{align}
\begin{align}
    \underline{F}_{\underline{\mathscr{N}}}^{n,T,\slashed{\nabla}}[\Psi]=\sum_{i+|\alpha|\leq n} \underline{F}_{\underline{\mathscr{N}}}[T^i(r\slashed{\nabla})^\alpha \Psi](u_0,u),
\end{align}
\begin{align}
    F_{\mathscr{N}}[\Psi]=\sum_{i+j+|\alpha|\leq n} F_{\mathscr{N}}[(\Omega^{-1}\nablagml )^i(r\nablav)^j(r\slashed{\nabla})^\alpha\Psi](v_0,v),
\end{align}
\begin{align}
    \underline{F}_{\underline{\mathscr{N}}}[\Psi]=\sum_{i+j+|\alpha|\leq n} \underline{F}_{\underline{\mathscr{N}}}[(\Omega^{-1}\nablagml )^i(r\nablav)^j(r\slashed{\nabla})^\alpha\Psi](v_0,v),
\end{align}
\begin{align}
    \mathbb{F}^{n,T,\slashednabla}_{\Sigma^*}[\Psi]=\sum_{i+|\alpha|\leq n} \mathbb{F}_{\Sigma^*}[T^i(r\slashed{\nabla})^\alpha \Psi],
\end{align}
\begin{align}
     \mathbb{F}^{n}_{\Sigma^*}[\Psi]=\sum_{i_1+i_2+|\alpha|\leq n}\mathbb{F}_{\Sigma^*}\left[\slashednabla_T^{i_1}\left(\Omega^{-1}\slashednabla_3\right)^{i_2}(r\slashednabla)^\alpha\Psi\right].
\end{align}
\end{defin}
\begin{proposition}\label{RWredshift}
Let $\Psi$ be a solution to (\ref{RW}) arising from data as in \Cref{RWwpCauchy}, then we have
    \begin{align}
    F_u[\Psi](v_0,\infty)+\underline{F}_v[\Psi](u_0,\infty)\lesssim \mathbb{F}_{\Sigma^*}[\Psi].
\end{align}
Similar statements hold for $F^{n,T,\slashed{\nabla}}_u[\Psi](v_0,v), \underline{F}^{n,T,\slashed{\nabla}}_v[\Psi](u_0,u), F^n_u[\Psi](v_0,v)$ and $\underline{F}^n_v[\Psi](u_0,u)$.
\end{proposition}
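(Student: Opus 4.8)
The plan is to prove Proposition~\ref{RWredshift} by the redshift vector-field method of Dafermos--Rodnianski \cite{DR08}, upgrading the conserved but \emph{degenerate} $T$-energy to the \emph{nondegenerate} fluxes $F$, $\underline{F}$ and $\mathbb{F}$ by exploiting the positivity of the surface gravity of $\mathscr{H}^+$. The starting point is the $T$-energy identity \bref{T derivative identity}: integrating it over $\mathscr{D}^{u,v}_{\Sigma^*}$ and sending $u,v\to\infty$ gives the conservation law $F^T_u[\Psi](v_0,\infty)+\underline{F}^T_v[\Psi](u_0,\infty)=\mathbb{F}^T_{\Sigma^*}[\Psi]$. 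This controls the solution wherever $T=\Omega e_3+\Omega e_4$ is timelike, but it degenerates at $\mathscr{H}^+$, where $T$ becomes null and the weight $\Omega^2$ multiplying $|\Omega^{-1}\slashednabla_3\Psi|^2$ vanishes; removing precisely this degeneration is the content of the proposition.

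First I would introduce the redshift multiplier $N$: a future-directed, timelike, $T$-invariant and spherically symmetric vector field, equal to $T$ for $r\geq r_0$ for a fixed $r_0>2M$, and constructed near $\mathscr{H}^+$ so that the current $J^N_\mu[\Psi]:=\mathbb{T}_{\mu\nu}[\Psi]N^\nu$, built from the energy-momentum tensor $\mathbb{T}$ of \bref{RW}, has a density uniformly comparable to the integrands of $F,\underline{F},\mathbb{F}$, and whose divergence $K^N[\Psi]=\nabla^\mu J^N_\mu[\Psi]$ is pointwise coercive over that density on a horizon neighbourhood $\{2M\leq r\leq r_1\}$. The existence of such an $N$ is the redshift lemma of \cite{DR08}, and the construction is insensitive to the potential $V=\frac{\Omega^2(3\Omega^2+1)}{r^2}$: being smooth, positive and $T$-invariant, $V$ contributes to $K^N$ only a lower-order term involving $|\Psi|^2$ and the radial derivative of $V$, controlled by the energy density. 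The coercivity of $K^N$ near $\mathscr{H}^+$ is a direct consequence of $\slashednabla_T T=\kappa T$ with $\kappa>0$ on the horizon.

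Next I would apply the divergence theorem to $J^N$ over $\mathscr{D}=J^+(\Sigma^*)\cap J^-(\mathscr{C}_u)\cap J^-(\underline{\mathscr{C}}_v)$. The flux through $\mathscr{H}^+\cap\mathscr{D}$ is nonnegative and is discarded; the fluxes through $\mathscr{C}_u$ and $\underline{\mathscr{C}}_v$ dominate $F_u[\Psi](v_0,\infty)$ and $\underline{F}_v[\Psi](u_0,\infty)$; the flux through $\Sigma^*$ is bounded by $\mathbb{F}_{\Sigma^*}[\Psi]$; and the bulk $\int_{\mathscr{D}}K^N$ splits into a good-signed contribution from $\{r\leq r_1\}$, which is dropped, and an error supported in the transition region $\{r_1\leq r\leq r_0\}$. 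Since $r_1>2M$, on this compact $r$-range the nondegenerate density is uniformly comparable to the $T$-energy density, so the transition error reduces to the spacetime integral of the local energy density there. The \emph{main obstacle} is that a naive bound of this spacetime integral grows linearly in time; this is overcome by the integrated local energy decay (Morawetz) estimate of \cite{DHR16}, with $[r_1,r_0]$ chosen to avoid the trapped photon sphere $r=3M$ so that the estimate is nondegenerate on the transition region and bounds the error by $\mathbb{F}_{\Sigma^*}[\Psi]$ alone. This closes $F_u[\Psi](v_0,\infty)+\underline{F}_v[\Psi](u_0,\infty)\lesssim\mathbb{F}_{\Sigma^*}[\Psi]$.

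Finally, for the higher-order statements I would commute \bref{RW} successively with $T$, with the angular operators $r\slashednabla$ (equivalently the lifts of the rotation Killing fields, which commute with $\Box_g$), and with the redshift commutation vector field $\Omega^{-1}\slashednabla_3$. Commuting with $T$ and $r\slashednabla$ reproduces \bref{RW} modulo lower-order terms of the same differential order, to which the first-order argument applies directly; commuting with $\Omega^{-1}\slashednabla_3$ generates, by the positivity of $\kappa$, lower-order terms with a favourable sign at $\mathscr{H}^+$ (the redshift commutation of \cite{DR08}), so that the transversal derivatives are recovered order by order. Iterating the first-order estimate on the commuted fields $T^i(r\slashednabla)^\alpha\Psi$ and $(\Omega^{-1}\slashednabla_3)^i(r\nablav)^j(r\slashednabla)^\alpha\Psi$ then yields the bounds on $F^{n,T,\slashednabla}$, $\underline{F}^{n,T,\slashednabla}$, $F^n$ and $\underline{F}^n$.
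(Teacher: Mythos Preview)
Your outline is correct and follows precisely the redshift vector-field method that the paper invokes (the paper gives no proof beyond the sentence ``modifying $X$ with a multiple of $\frac{1}{\Omega^2}T$ and repeating the procedure above as in \cite{DR08}, making crucial use of the positivity of the surface gravity of $\mathscr{H}^+$''). Your construction of $N$, the treatment of the potential $V$ as a harmless lower-order contribution, and the higher-order commutation scheme with $T$, $r\slashednabla$ and $\Omega^{-1}\slashednabla_3$ are all exactly what is required.

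One point of difference worth flagging concerns how you dispose of the transition-region bulk error. You invoke the integrated local energy decay estimate (the paper's \Cref{RWILED}) to bound the spacetime integral over $\{r_1\le r\le r_0\}$. This works, but note that the paper states \Cref{RWILED} \emph{after} \Cref{RWredshift}, and its right-hand side is the nondegenerate $\mathbb{F}_{\Sigma^*}[\Psi]$; taken at face value this would be circular. To make your route self-contained you must use the degenerate Morawetz estimate with $\mathbb{F}^T_{\Sigma^*}[\Psi]$ on the right (provable from the $T$-identity alone, independently of redshift), which suffices because on the compact $r$-range $[r_1,r_0]\subset(2M,\infty)$ the degenerate and nondegenerate densities are uniformly comparable. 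The alternative, which is closer to \cite{DR08} and what the paper presumably has in mind, avoids Morawetz entirely: one works on the foliation by $T$-translates $\Sigma^*_\tau$ of $\Sigma^*$, uses the coercivity $K^N\gtrsim J^N\cdot n$ on $\{r\le r_1\}$ together with $|K^N|\lesssim J^T\cdot n$ on $\{r_1\le r\le r_0\}$ and the comparability of $J^N$ and $J^T$ away from the horizon to derive a Gr\"onwall-type differential inequality for $\tau\mapsto\int_{\Sigma^*_\tau}J^N\cdot n$, yielding uniform boundedness directly; the null fluxes $F_u$, $\underline{F}_v$ then follow by a further energy identity between $\Sigma^*_\tau$ and $\mathscr{C}_u\cup\underline{\mathscr{C}}_v$. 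Either route closes the argument; yours trades the Gr\"onwall step for an appeal to Morawetz, which is fine provided you are explicit about using the degenerate-right-hand-side version.
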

\noindent $\bullet$ \emph{\textbf{Integrated local energy decay}} We have the following Morawetz-type integrated decay estimate:
\begin{proposition}\label{RWILED}
Let $\Psi$ be a solution to (\ref{RW}) arising from data as in \Cref{RWwpCauchy}, $\mathscr{D}_{\Sigma^*}^{u,v}= J^+(\Sigma^*)\cap J^-(\mathscr{C}_u\cup\underline{\mathscr{C}}_v)$ and define
\begin{align}\label{RWILEDestimate}
\begin{split}
    \mathbb{I}_{deg}^{u,v}[\Psi]= \int_{\mathscr{D}_{\Sigma^*}^{u,v}}d\bar{u}d\bar{v} \sin\theta &d\theta d\phi \Omega^2 \Bigg[\frac{1}{r^2}|\slashed{\nabla}_{R^*}\Psi|^2+\frac{1}{r^3}|\Psi|^2\\
    &+\frac{1}{r}\left(1-\frac{3M}{r}\right)^2\left(|\slashed{\nabla}\Psi|^2+\frac{1}{r^2}|\nablav\Psi|^2+\frac{\Omega^2}{r^2}|\Omega^{-1}\nablagml\Psi|^2\right)\Bigg].
\end{split}
\end{align}
then we have
\begin{align*}
    \begin{split}
        \mathbb{I}_{deg}^{u,v}[\Psi]\lesssim \mathbb{F}_{\Sigma^*}[\Psi].
    \end{split}
\end{align*}
A similar statement holds for
\begin{align}
\mathbb{I}_{deg}^{u,v,n,T,\slashed{\nabla}}[\Psi]=\sum_{i+|\alpha|\leq n} \mathbb{I}_{deg}^{u,v}[T^i(r\slashed{\nabla})^\alpha \Psi]
\end{align}
and 
\begin{align}
\mathbb{I}_{deg}^{u,v,n}[\Psi]=\sum_{i+j+|\alpha|\leq n}\mathbb{I}_{deg}^{u,v,n}[(\Omega^{-1}\nablagml)^i(r\nablav)^j(r\slashed{\nabla})^\alpha\Psi].
\end{align}
\end{proposition}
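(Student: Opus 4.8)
The estimate is the standard integrated local energy decay (Morawetz) estimate for the Regge--Wheeler equation, and the plan is to prove it by the vector field method, along the lines of \cite{DHR16} (cf.\ also \cite{DR08}). It suffices to establish the bound for $\mathbb{I}^{u,v}_{deg}[\Psi]$ with $u,v$ finite and with an implied constant independent of $u,v$; the stated estimate then follows by monotone convergence as $u,v\to\infty$. Since the Regge--Wheeler operator $\Box_g-V$, with $V=\frac{\Omega^2(3\Omega^2+1)}{r^2}$, has a Lagrangian structure, I would work with the symmetric $2$-tensor
\[
\mathbb{T}_{\mu\nu}[\Psi]=\big\langle\nabla_\mu\Psi,\nabla_\nu\Psi\big\rangle-\tfrac12 g_{\mu\nu}\big(\big\langle\nabla^\sigma\Psi,\nabla_\sigma\Psi\big\rangle+V|\Psi|^2\big),
\]
where $\langle\cdot,\cdot\rangle$ is the pointwise inner product on symmetric traceless $S^2_{u,v}$ $2$-tensors, so that $\nabla^\mu\mathbb{T}_{\mu\nu}=-\tfrac12(\partial_\nu V)|\Psi|^2$ on solutions of \bref{RW}.

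The key step is to introduce the Morawetz current $\mathcal{J}_\mu=\mathbb{T}_{\mu\nu}[\Psi]X^\nu+\tfrac12 w\,\langle\Psi,\nabla_\mu\Psi\rangle-\tfrac14(\partial_\mu w)|\Psi|^2$ with $X=f(r_*)\,\partial_{r_*}$ and $f,w$ radial functions to be chosen. Computing $\nabla^\mu\mathcal{J}_\mu$ and integrating by parts on each sphere $S^2_{u,v}$ using the spherical identities and the \Poincare{} inequality of \Cref{poincaresection} produces a bulk integrand controlling $f'(r_*)\,|\slashed{\nabla}_{R^*}\Psi|^2$, a multiple of $\tfrac{d}{dr_*}\!\big(\tfrac{\Omega^2}{r^2}\big)$ times $|\slashednabla\Psi|^2$, a coefficient for $|\slashednabla_T\Psi|^2$ involving $w,f',f$, and a zeroth-order term involving $-\tfrac14\Box w$ and $-\tfrac12 f\,\partial_{r_*}V$. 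The crucial structural fact is $\tfrac{d}{dr_*}\!\big(\tfrac{\Omega^2}{r^2}\big)=-\tfrac{2\Omega^4(r-3M)}{r^4}$, which vanishes exactly at the photon sphere $r=3M$: choosing $f$ bounded, increasing ($f'>0$), and vanishing at $r=3M$ makes the $|\slashed{\nabla}_{R^*}\Psi|^2$ coefficient strictly positive, makes the $|\slashednabla\Psi|^2$ coefficient non-negative with precisely the $(1-3M/r)^2$ degeneracy appearing in \bref{RWILEDestimate}, and likewise controls the trapped part of $|\slashednabla_T\Psi|^2$, while $\slashed{\nabla}_{R^*}\Psi$ and $|\Psi|^2$ are recovered without photon-sphere degeneracy.

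It then remains to pick the zeroth-order modifier $w$ (schematically $w\sim f/r$ near infinity, and chosen to cancel the $|\slashednabla_T\Psi|^2$ coefficient away from $r=3M$) so that $-\tfrac14\Box w-\tfrac12 f\,\partial_{r_*}V$ plus the residual terms is non-negative; here the positivity of $V$ on $\{r>2M\}$ together with a Hardy inequality anchored on the $\tfrac{1}{r^2}|\Psi|^2$ weight closes the zeroth-order estimate, with the $\lim_{v\to\infty}\|\Psi\|_{L^2(S^2_{u,v})}$-type tails near $\mathscr{H}^+$ and $\mathscr{I}^+$ supplied by \Cref{RWredshift}. Since every weight in $\mathbb{I}^{u,v}_{deg}[\Psi]$ carries an overall factor $\Omega^2$, no redshift multiplier is needed for this degenerate estimate. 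Integrating $\nabla^\mu\mathcal{J}_\mu$ over $\mathscr{D}^{u,v}_{\Sigma^*}$ and using that $f$ is bounded, the boundary terms on $\mathscr{C}_u$, $\underline{\mathscr{C}}_v$ and $\Sigma^*$ are dominated by $F^T_u[\Psi]+\underline{F}^T_v[\Psi]+\mathbb{F}_{\Sigma^*}[\Psi]$ (the $w$-cross term on $\Sigma^*$ absorbed by a Hardy inequality on $\Sigma^*$), and \Cref{RWredshift} bounds these by $\mathbb{F}_{\Sigma^*}[\Psi]$. The higher-order versions $\mathbb{I}^{u,v,n,T,\slashednabla}_{deg}$ and $\mathbb{I}^{u,v,n}_{deg}$ follow by commuting \bref{RW} with $T$ (which \bref{RW} commutes with exactly) and with the angular operators $r\slashednabla$ — the commutators with $\slashed{\Delta}$ and $V$ being lower order on each $S^2_{u,v}$ and absorbable — and, for the $(\Omega^{-1}\nablagml)$-commuted norms, additionally with the redshift vector field near $\mathscr{H}^+$.

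The main obstacle is the joint choice of $(f,w)$: one must simultaneously keep $f$ bounded with $f'>0$ and $f(3M)=0$, make the $|\slashednabla_T\Psi|^2$ coefficient non-negative on all of $\{r>2M\}$, and arrange the zeroth-order coefficient to be non-negative after the Hardy absorption — the delicate balancing already carried out in \cite{DHR16}; in a self-contained treatment this explicit construction would be the technical heart of the proof.
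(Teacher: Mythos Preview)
The paper does not give its own proof of this proposition: it is stated as one of the ``basic boundedness and decay results for \bref{RW} proven in \cite{DHR16}'' and is simply quoted. Your sketch is exactly the standard Morawetz argument carried out in \cite{DHR16} (radial multiplier $X=f(r_*)\partial_{r_*}$ with $f$ monotone and vanishing at the photon sphere, plus a zeroth-order modifier $w$ and a Hardy/\Poincare{} step for the lower-order terms), so it is consistent with the intended source; the only slip is the exponent in $\tfrac{d}{dr_*}(\Omega^2/r^2)=-\tfrac{2\Omega^2(r-3M)}{r^4}$, not $\Omega^4$, which does not affect the argument.
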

\noindent $\bullet$ \emph{\textbf{$r^p$-hierarchy of estimates near $\mathscr{I}^+$}} If we multiply (\ref{RW}) by $r^p\Omega^{-2k}\nablav\Psi$ and integrate by parts on $S^2$ we obtain the following 
\begin{align}
    \begin{split}
&\Omega\slashed\nabla_3\left[r^p\Omega^{-2k}|\Omega\slashed\nabla_4\Psi|^2\right]+\Omega\slashed\nabla_4\left[r^p\Omega^{-2k}(\Omega^2|\slashed\nabla\Psi|^2+V|\Psi|^2)\right]\\
&+r^{p-1}\Omega^{-2k}\Bigg\{(p+kx)|\Omega\slashed\nabla_4\Psi|^2-\left[\frac{4\Omega^2}{r^2}+V(p-3+x(k-1))\right]\Omega^2|\Psi|^2
\\&\qquad\qquad\qquad-(p-2+x(k-1))\Omega^4|\slashed\nabla\Psi|^2\Bigg\}\stackrel{S^2}{\equiv}0.
    \end{split}
\end{align}
We can ensure that the bulk term is non-negative by taking  $p=0,k=0$ or $p=2, 1\leq k\leq2$ or $p\in(0,2)$ and restricting to large enough $r$. Integrating in a region $\mathscr{D}^{u,v}_{\Sigma^*}\cap \{r>R\}$ yields (after averaging in $R$ and using \Cref{RWILED} to deal with the timelike boundary term)
\begin{proposition}\label{RWrp}
Let $\Psi$ be a solution to (\ref{RW}) arising from data as in \Cref{RWwpCauchy}, and define
\begin{align}
    {\mathbb{I}_p}_{u_0,v_0}^{u,v}[\Psi]=\int_{\mathscr{D}_{\Sigma^*}^{u,v}\cap\{r>R\}} dudv\sin\theta d\theta d\phi r^{p-1}\left[p|\nablav\Psi|^2+(2-p)|\slashednabla\Psi|^2+{r^{-2}}|\Psi|^2\right],
\end{align}
then we have for $p\in [0,2]$
\begin{align}\label{RW rp estimate}
\begin{split}
 \int_{\mathscr{C}_u\cap\{r>R\}}  dv \sin\theta d\theta d\phi r^p |\nablav \Psi|^2+ {\mathbb{I}_p}_{u_0,v_0}^{u,v}[\Psi]\lesssim \mathbb{F}_{\Sigma^*}[\Psi]+\int_{\Sigma^*\cap\{r>R\}} r^p|\nablav\Psi|^2 dr\sin\theta d\theta d\phi.
\end{split}
\end{align}

A similar statement holds for
\begin{align}
{{\mathbb{I}_p}_{u_0,v_0}^{u,v}}^{n,T,\slashed{\nabla}}[\Psi]=\sum_{i+|\alpha|\leq n}{\mathbb{I}_p}_{u_0,v_0}^{u,v}[T^i (r\slashed{\nabla})^\alpha\Psi]
\end{align}
and for 
\begin{align}
{{\mathbb{I}_p}_{u_0,v_0}^{u,v}}^{n,k}[\Psi]=\sum_{i+j+|\alpha|\leq n}{\mathbb{I}_p}_{u_0,v_0}^{u,v}[(\Omega^{-1}\nablagml)^i(r^k\nablav)^j(r\slashed{\nabla})^\alpha\Psi]
\end{align}
if $0\leq k\leq2$.
\end{proposition}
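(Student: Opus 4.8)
The plan is to run the $r^p$-weighted multiplier method on \bref{RW}, using the divergence identity already displayed before the statement, and to dispose of the one genuinely new boundary term — the flux through the timelike hypersurface $\{r=R\}$ — by an averaging argument fed by the integrated local energy decay estimate \Cref{RWILED}. Throughout, smoothness and compact support of the data (as in \Cref{RWwpCauchy}) guarantee that every integration by parts and every boundary integral below is legitimate; in particular, for $p<2$ the weighted current has no flux through $\mathscr{I}^+$, and for $p=2$ that flux is nonnegative and can be discarded.

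First I would contract \bref{RW} with $r^p\Omega^{-2k}\nablav\Psi$, pair over $S^2$, and integrate by parts to recover the identity recorded before the proposition, fixing $k=0$ for $0\le p<2$ and $k\in\{1,2\}$ for $p=2$. As noted there, for these choices — and, when $p\in(0,2)$, after restricting to $\{r>R\}$ with $R$ large enough — the bulk coefficient is nonnegative and in fact dominates each of the three terms $p|\nablav\Psi|^2$, $(2-p)|\slashednabla\Psi|^2$, $r^{-2}|\Psi|^2$ making up ${\mathbb{I}_p}_{u_0,v_0}^{u,v}[\Psi]$. Next I would integrate this identity over $\mathscr{D}_{\Sigma^*}^{u,v}\cap\{r>R\}$. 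The future constant-$u$ boundary $\mathscr{C}_u\cap\{r>R\}$ carries the flux of $r^p\Omega^{-2k}|\nablav\Psi|^2$, which bounds $\int_{\mathscr{C}_u\cap\{r>R\}}r^p|\nablav\Psi|^2$ from above and is exactly the term we want on the left; the future constant-$v$ boundary $\underline{\mathscr{C}}_v\cap\{r>R\}$ carries the nonnegative flux of $r^p\Omega^{-2k}(\Omega^2|\slashednabla\Psi|^2+V|\Psi|^2)$, which may simply be discarded; the contribution on $\Sigma^*\cap\{r>R\}$ is bounded by $\mathbb{F}_{\Sigma^*}[\Psi]+\int_{\Sigma^*\cap\{r>R\}}r^p|\nablav\Psi|^2$; and the truncation $\{r>R\}$ produces the timelike boundary $\{r=R\}\cap J^+(\Sigma^*)$, whose flux is the only remaining object to control.

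The core of the argument is the treatment of this $\{r=R\}$ term. Since $r=R$ sits at bounded distance from both $\mathscr{H}^+$ and $\mathscr{I}^+$, the weight $r^p\Omega^{-2k}$ there is comparable to the constant $R^p$, so the flux in question is bounded by $R^p$ times a first-order energy flux through $\{r=R\}$; the point is to do this without losing the power of $R$. I would not fix $R$ but average over $R\in[R_0,2R_0]$: by the coarea formula, the integral over $R\in[R_0,2R_0]$ of the flux through $\{r=R\}$ equals a spacetime integral over $\{R_0<r<2R_0\}\cap J^+(\Sigma^*)$ of a first-order quadratic expression in $\Psi$, hence is $\lesssim \mathbb{I}_{deg}^{u,v}[\Psi]+\mathbb{F}_{\Sigma^*}[\Psi]\lesssim\mathbb{F}_{\Sigma^*}[\Psi]$ by \Cref{RWILED} and \Cref{RWredshift}. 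A mean-value choice of $R\in[R_0,2R_0]$ then makes the $\{r=R\}$ flux $\lesssim R_0^{-1}\mathbb{F}_{\Sigma^*}[\Psi]$; taking $R_0$ large (which we are free to do, and which also puts us in the favorable-sign regime of the previous step) and collecting the terms above gives \bref{RW rp estimate}.

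For the higher-order statements I would commute: $T$ and $r\slashednabla$ commute exactly with the Regge--Wheeler operator, so applying the estimate just obtained to $T^i(r\slashednabla)^\alpha\Psi$ and summing yields the $n,T,\slashednabla$ version, while for the $n,k$ version one commutes with $\Omega^{-1}\nablagml$ and $r^k\nablav$ using the commutation relations of \Cref{Chandra} — the $r^k\nablav$-commutators ($0\le k\le2$) producing terms reabsorbed within the hierarchy itself, and the $\Omega^{-1}\nablagml$-commutators handled through the redshift mechanism behind \Cref{RWredshift} — and then applies \bref{RW rp estimate} to each commuted field. I expect the main obstacle to be precisely the $\{r=R\}$ boundary term: the estimate would fail with a naively large weight there, and extracting exactly $\mathbb{F}_{\Sigma^*}[\Psi]$ requires the averaging trick together with the full strength of \Cref{RWILED}; a secondary, more bookkeeping, difficulty in the higher-order case is controlling the commutator errors generated by $\Omega^{-1}\nablagml$ and $r^k\nablav$, which couple the members of the $r^p$-hierarchy to one another and to the redshift estimates.
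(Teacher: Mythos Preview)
Your proposal is correct and follows essentially the same approach as the paper: derive the $r^p$-weighted identity by multiplying \bref{RW} by $r^p\Omega^{-2k}\nablav\Psi$, integrate over $\mathscr{D}_{\Sigma^*}^{u,v}\cap\{r>R\}$, and handle the timelike boundary at $\{r=R\}$ by averaging in $R$ and invoking \Cref{RWILED}. Your treatment of the higher-order versions via commutation with $T$, $r\slashednabla$, $r^k\nablav$, and $\Omega^{-1}\nablagml$ also matches the paper's sketch, which works out the $\Phi^{(1)}=\frac{r^2}{\Omega^2}\nablav\Psi$ case explicitly to illustrate that the commutator terms carry good signs and can be closed using Hardy and Poincar\'e inequalities together with the lower-order estimate.
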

%, while commuting with $\nablau$ and using \ref{RW} produces lower order terms which allows to establish the estimates inductively.
We sketch how to establish higher order versions of the estimates of \Cref{RWrp}. Commuting with $r^h\nablav$ for $0\leq h \leq 2$ or $r\slashednabla$ produces terms with favorable signs and we can close the argument by appealing to Hardy and Poincar\'e estimates. Consider for example $\frac{r^2}{\Omega^2}\nablav\Psi:=\Phi^{(1)}$, which satisfies 
\begin{align}\label{Phi 1 transport equation}
\begin{split}
    \nablau \Phi^{(1)}+\frac{3\Omega^2-1}{r}\Phi^{(1)}=\mathring{\slashed{\Delta}}\Psi-(3\Omega^2+1)\Psi.
\end{split}
\end{align}
Applying $\nablav$ and using \bref{RW} we obtain
\begin{align}\label{Phi 1 wave equation}
    \nablau\nablav\Phi^{(1)}+\frac{3\Omega^2-1}{r}\nablav\Phi^{(1)}-\frac{\Omega^2}{r^2}(3\Omega^2-5)\Phi^{(1)}-\Omega^2\slashed\Delta\Phi^{(1)}=-6M\frac{\Omega^2}{r^2}\Psi.
\end{align}
We see that the new $\nablav\Phi^{(1)}$ term has a good sign, so that we when we multiply by $r^p\Omega^{-2k}\nablav\Phi^{(1)}$, integrate by parts over $S^2$ and use Cauchy--Schwarz we get:
\begin{align}
    \begin{split}
        &\nablau\left[r^p\Omega^{-2k}|\nablav\Phi^{(1)}|^2\right]+\nablav\left[r^p\Omega^{-2k}\left(\Omega^2|{\slashednabla}\Phi^{(1)}|^2+(5-3\Omega^2)\frac{\Omega^2}{r^2}|\Phi^{(1)}|^2\right)\right]+{r^{p-1}}{\Omega^{-2(k-1)}}\times\\
        &\Bigg\{(p+4+x(k+2)-\epsilon)|\nablav\Phi^{(1)}|^2+(p-2+x(k-1))\Omega^2|\slashednabla\Phi^{(1)}|^2+\left[\frac{6M}{r}+(5-3\Omega^2)(p-1+x(k-1))\right]\frac{\Omega^2}{r^2}|\Phi^{(1)}|^2\Bigg\}\\
        &\stackrel{S^2}{\lesssim} r^{p-3}\Omega^{2(k-1)}|\Psi|^2,
    \end{split}
\end{align}
where $\epsilon>0$ is sufficiently small. Integrating over $\mathscr{D}^{u,v}_{\Sigma^*}\cap\{r>R\}$ for large enough $R$ and using \Cref{RWrp} for $p\in[0,2]$ we get (using $d\omega=\sin\theta d\theta d\phi$):
\begin{align}
\begin{split}
    &\int_{{\mathscr{C}}_u\cap\{r>R\}}d\bar{v}d\omega\; r^p|\nablav\Phi^{(1)}|^2+\int_{\mathscr{D}^{u,v}_{\Sigma^*}\cap\{r>R\}} d\bar{u}d\bar{v}d\omega\;r^{p-1}\left[(p+4)|\nablav\Phi^{(1)}|^2+(2-p)|\slashednabla\Phi^{(1)}|^2+r^{-2}|\Phi^{(1)}|^2\right]
    \\ &+\int_{\mathscr{I}^+\cap\{\bar{u}\in[u_0,u]\}}d\bar{u}d\omega\; |\mathring{\slashednabla}\Phi^{(1)}|^2+2|\Phi^{(1)}|^2\lesssim \int_{\Sigma^*\cap\{r>R\}}drd\omega\; r^p|\nablav\Phi^{(1)}|^2+\int_{r=R}dtd\omega\; r^p\left[|\slashednabla\Phi^{(1)}|^2+r^{-2}|\Phi^{(1)}|^2\right]\\&+\int_{\mathscr{D}^{u,v}_{\Sigma^*}\cap\{r>R\}}d\bar{u}d\bar{v}d\omega\;r^{p-3}|\Psi|^2.
\end{split}
\end{align}
We control the second term by averaging in $R$ and appealing to \Cref{RWILED} commuted with $\nablav$, and we deal with the last term using the lower order estimate for $\Psi$ from \Cref{RWrp}. Thus
\begin{align}\label{RWrp k=1}
\begin{split}
     &\int_{{\mathscr{C}}_u\cap\{r>R\}}d\bar{v}d\omega\; r^p|\nablav\Phi^{(1)}|^2+\int_{\mathscr{D}^{u,v}_{\Sigma^*}\cap\{r>R\}} d\bar{u}d\bar{v}d\omega\;r^{p-1}\left[(p+4)|\nablav\Phi^{(1)}|^2+(2-p)|\slashednabla\Phi^{(1)}|^2+r^{-2}|\Phi^{(1)}|^2\right]
    \\ &+\int_{\mathscr{I}^+\cap\{\bar{u}\in[u_0,u]\}}d\bar{u}d\omega\; |\mathring{\slashednabla}\Phi^{(1)}|^2+|\Phi^{(1)}|^2 \lesssim \int_{\Sigma^*\cap\{r>R\}}d\bar{v}d\omega\; r^p\left[|\nablav\Phi^{(1)}|^2+|\nablav\Psi|^2\right]+\mathbb{F}^1_{\Sigma^*}[\Psi].
\end{split}
\end{align}
We could do this again for $\left(\frac{r^2}{\Omega^2}\nablav\right)^2\Psi:=\Phi^{(2)}$ and get a similar estimate following the same steps:
\begin{align}\label{RWrp k=2}
    \begin{split}
        &\int_{{\mathscr{C}}_u\cap\{r>R\}} d\bar{v}d\omega\; r^p|\nablav\Phi^{(2)}|^2+\int_{\mathscr{D}^{u,v}_{\Sigma^*}\cap\{r>R\}} d\bar{u}d\bar{v}d\omega\; r^{p-1}\left[(p+8)|\nablav\Phi^{(2)}|^2+(2-p)|\slashednabla\Phi^{(2)}|^2+r^{-2}|\Phi^{(2)}|^2\right]
    \\ &+\int_{\mathscr{I}^+\cap\{\bar{u}\in[u_0,u]\}}d\bar{u}d\omega\;|\mathring{\slashednabla}\Phi^{(2)}|^2-|\Phi^{(2)}|^2\lesssim \int_{\Sigma^*\cap\{r>R\}}d\bar{v}d\omega\; r^p\left[|\nablav\Phi^{(2)}|^2+|\nablav\Phi^{(1)}|^2+|\nablav\Psi|^2\right]+\mathbb{F}^2_{\Sigma^*}[\Psi].
    \end{split}
\end{align}
Note that the integral on $\mathscr{I}^+$ on the right hand side is positive by Poincar\'e's inequality. See \cite{AAG16a}, \cite{AAG16b}, \cite{Mos18} for more about this method, applied to the scalar wave equation.
\subsection{Radiation fields}\label{subsection 5.2 subsection Radiation fields}
In this section we establish the properties of future radiation fields belonging to solutions that arise from smooth, compactly supported data on $\Sigma^*$
%In what follows we assume $\Psi$ arises from smooth, compactly supported data $(\uppsi,\uppsi')$ on $\Sigma^*$.
\subsubsection{Radiation on $\mathscr{H}^+$}\label{subsubsection 5.2.1 radiation on H+}
\begin{defin}\label{RWonH}
Let $\Psi$ be a solution to (\ref{RW}) arising from smooth data $(\uppsi,\uppsi')$ on $\Sigma^*, \Sigma$ or $\overline{\Sigma}$ as in \Cref{RWwpCauchy}. The radiation field $\bm{\uppsi}_{\mathscr{H}^+}$ is defined to be the restriction of $\Psi$ to $\mathscr{H}^+_{\geq0}, \mathscr{H}^+$ or $\overline{\mathscr{H}}^+$ respectively. 
\end{defin}
\begin{remark}
We will be using the same notation for the radiation field on $\mathscr{H}^+_{\geq0}, \mathscr{H}^+$ or $\overline{\mathscr{H}^+}$.
\end{remark}
As a corollary to \Cref{RWwpCauchy} we have
\begin{corollary}
The radiation field $\bm{\uppsi}_{\mathscr{H}^+}$ as in \Cref{RWonH} is smooth on $\mathscr{H}^+_{\geq0}$. The same applies to  $(\Omega^{-1}\nablagml)^k\Psi$ for arbitrary $k$.
\end{corollary}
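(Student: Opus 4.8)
The statement to prove is the corollary asserting that the radiation field $\bm{\uppsi}_{\mathscr{H}^+}$ (the restriction of a Regge--Wheeler solution $\Psi$ arising from smooth Cauchy data on $\Sigma^*$, $\Sigma$ or $\overline\Sigma$) is smooth on $\mathscr{H}^+_{\geq0}$, and that the same holds for $(\Omega^{-1}\slashednabla_3)^k\Psi$ for every $k$.

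The plan is to derive this directly from the well-posedness statement \Cref{RWwpCauchy} (and \Cref{RWwpSigmabar}) together with the form of the Regge--Wheeler equation \bref{RW} in a regular frame near the horizon. First I would recall that, by \Cref{RWwpCauchy}, the solution $\Psi$ is smooth on all of $J^+(\Sigma^*)$; since $\mathscr{H}^+_{\geq0}\subset \overline{J^+(\Sigma^*)}$ and the equation \bref{RW}, rewritten in Kruskal coordinates as $\slashednabla_U\slashednabla_V\Psi-\mathring{\slashed\Delta}\Psi+\frac{3\Omega^2+1}{r^2}\Psi=0$, has smooth coefficients up to and including $\mathscr{H}^+$ (indeed up to $\mathcal B$), standard local theory for the characteristic/mixed problem gives that $\Psi$ extends smoothly to the horizon. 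Concretely, I would note that $\overline{\mathscr{H}^+}$ is a regular null hypersurface for \bref{RW} in the Kruskal chart, so the restriction $\Psi|_{\overline{\mathscr{H}^+}}$ is a smooth symmetric traceless $S^2_{u,v}$ 2-tensor field, and taking the $U$-derivative trace of \bref{RW} along $V=0$ successively expresses all transverse ($\partial_U$) derivatives of $\Psi$ on the horizon in terms of tangential derivatives of the horizon data, hence all such derivatives are smooth as well.

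Next I would address the claim about $(\Omega^{-1}\slashednabla_3)^k\Psi$. The key point, already recorded in \Cref{section 2 preliminaries} (see the discussion following \bref{Kruskal} and \Cref{regular}), is that $\Omega^{-1}e_3$ is a regular vector field on $\mathscr{H}^+$ — in Kruskal coordinates $\Omega^{-1}\slashednabla_3$ is (up to a smooth nonvanishing factor) $\slashednabla_U$. Since $\Psi$ is smooth on a neighbourhood of $\mathscr{H}^+_{\geq 0}$ in $\mathscr{M}$ in the Kruskal chart, applying the smooth differential operator $(\Omega^{-1}\slashednabla_3)^k$ and then restricting to $\mathscr{H}^+_{\geq0}$ produces a smooth $S^2_{u,v}$-tangent tensor field; alternatively, one commutes \bref{RW} with $(\Omega^{-1}\slashednabla_3)^k$ — using the commutation identities of \Cref{subsection 5.1 Basic integrated boundedness and decay estimates} — to see that $(\Omega^{-1}\slashednabla_3)^k\Psi$ satisfies an equation of the same regular type with smooth source built from lower-order quantities, and invoke \Cref{RWwpCauchy}-type well-posedness inductively. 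The cases of data on $\Sigma$ and $\overline\Sigma$ are identical, using \Cref{RWwpSigmabar} and the fact, emphasised in the excerpt, that \bref{RW} is regular at $\mathcal B$ as well, so no degeneration occurs for $\overline{\mathscr{H}^+}$.

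I do not expect any genuine obstacle here: the content is essentially a bookkeeping consequence of smoothness of solutions (well-posedness), regularity of the Kruskal frame, and the fact that $\Omega^{-1}\slashednabla_3$ is regular on $\mathscr{H}^+$. The only mildly delicate point is making sure that "smooth on $\mathscr{H}^+_{\geq0}$" is interpreted correctly as smoothness of the component functions of the $S^2_{u,v}$-tangent tensor in the $(v,\theta^A)$ coordinates on the horizon (and, near $\mathcal B$, after the appropriate $U,V$ weighting as in \Cref{regular}); once that is fixed, the argument is immediate. I would therefore keep the proof short: cite \Cref{RWwpCauchy} for interior smoothness, note the Kruskal form of \bref{RW} to push regularity onto $\overline{\mathscr{H}^+}$, and observe that $(\Omega^{-1}\slashednabla_3)^k$ is a smooth operator near $\mathscr{H}^+$ so preserves this regularity.
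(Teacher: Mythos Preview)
Your proposal is correct and matches the paper's approach: the paper simply records this corollary as an immediate consequence of \Cref{RWwpCauchy} without further argument, and your unpacking via the Kruskal-regular form of \bref{RW} and the regularity of $\Omega^{-1}\slashednabla_3$ on $\mathscr{H}^+$ is exactly the content implicit in that citation.
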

The integrated local energy decay statement of \Cref{RWILED} gives a quick way to show slow decay for $\bm{\uppsi}_{\mathscr{H}^+}$ and its derivatives:
\begin{proposition}\label{RWdecayfixedR}
    For a solution $\Psi$ to \cref{RW} arising from smooth data of compact support on $\Sigma^*$, $\left|\Psi|_{\{r=R\}}\right|$ decays as $t\longrightarrow\infty$.
\end{proposition}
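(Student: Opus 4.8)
The plan is to extract pointwise decay at a fixed radius $r=R$ from the integrated local energy decay estimate of \Cref{RWILED} together with the energy boundedness of \Cref{RWredshift}, by a standard pigeonhole-plus-Sobolev argument. First I would observe that the bulk integrand in \bref{RWILEDestimate}, restricted to the timelike tube $\{r=R\}$, controls $\int\int_{\{r=R\}} \Omega^2\left(r^{-2}|\slashednabla_{R^*}\Psi|^2 + r^{-3}|\Psi|^2\right)$ over the whole future of $\Sigma^*$, and since on $\{r=R\}$ the factor $(1-3M/r)^2$ is a fixed positive constant, we also control $\int\int_{\{r=R\}}\left(|\slashednabla\Psi|^2 + r^{-2}|\nablav\Psi|^2 + r^{-2}|\Omega^{-1}\nablagml\Psi|^2\right)$, i.e.\ the full first-order energy density on the tube is integrable in $t$. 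By the co-area formula this means $\int_0^\infty \mathcal{I}(t)\,dt < \infty$, where $\mathcal{I}(t) := \int_{S^2_{t,R}} \left(|\slashednabla_T\Psi|^2 + |\slashednabla_{R^*}\Psi|^2 + |\slashednabla\Psi|^2 + r^{-2}|\Psi|^2\right)\,d\omega$ evaluated at the sphere $\{t^*=t\}\cap\{r=R\}$.

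Next I would run the same argument after commuting \bref{RW} with the Killing field $T$ and with the angular operators $r\slashednabla$: by the higher-order versions of \Cref{RWILED} and \Cref{RWredshift} (the estimates for $\mathbb{I}_{deg}^{u,v,n,T,\slashednabla}$), the analogous quantity $\mathcal{I}_n(t)$ built from $\{T^i(r\slashednabla)^\alpha\Psi : i+|\alpha|\leq n\}$ is also integrable in $t$ for every $n$. Fixing $n$ large enough that Sobolev embedding on the $2$-sphere (and in the $t$-direction, using the $T$-derivatives) controls the sup norm, we get $\sup_{S^2_{t,R}}|\slashednabla\Psi|^2 \lesssim \mathcal{I}_n(t) + \mathcal{I}_n(t-1)$ or similar; since the right-hand side is integrable in $t$, it cannot stay bounded below, so $\liminf_{t\to\infty}\sup_{S^2_{t,R}}|\slashednabla\Psi| = 0$ along a sequence. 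To upgrade from a sequence to a genuine limit, I would use that $\slashednabla_T(\Psi|_{r=R})$ is itself controlled in the same way (it solves \bref{RW} since $T$ commutes), so the function $t\mapsto \|\Psi|_{S^2_{t,R}}\|_{H^k(S^2)}$ has integrable-in-$t$ derivative, hence converges; combined with $\int_0^\infty \|\Psi|_{S^2_{t,R}}\|^2_{H^k}\,dt$ being controlled only through the $r^{-2}|\Psi|^2$ term (which appears in the degenerate ILED) the limit must be $0$. Therefore $\|\Psi|_{S^2_{t,R}}\|_{H^k(S^2)}\to 0$, and Sobolev embedding gives $|\Psi|_{r=R}|\to 0$ pointwise as $t\to\infty$.

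The main obstacle I anticipate is the passage from the time-integrability of the energy flux on the tube to an actual decay statement with the correct handling of the $|\Psi|^2$ (zeroth-order) term: the degenerate Morawetz estimate \bref{RWILEDestimate} does contain an $r^{-3}\Omega^2|\Psi|^2$ bulk term, which on $\{r=R\}$ is a uniformly positive multiple of $|\Psi|^2$, so $\int_0^\infty\int_{S^2_{t,R}}|\Psi|^2 < \infty$; but to conclude $\|\Psi|_{S^2_{t,R}}\|\to 0$ rather than merely $\liminf = 0$ one needs uniform continuity in $t$ of this quantity, which is exactly what the commuted energy boundedness provides. A clean way to package all of this is: the function $f(t) := \int_{S^2_{t,R}}\left(|\Psi|^2 + \sum_{i+|\alpha|\leq k}|T^i(r\slashednabla)^\alpha\Psi|^2\right)d\omega$ satisfies $f \in L^1([0,\infty))$ (from ILED) and $f' \in L^1([0,\infty))$ or at least $\sup_t|f'(t)| < \infty$ (from boundedness of $\slashednabla_T$-commuted fluxes), whence $f(t)\to 0$; then Sobolev embedding on $S^2$ for $k\geq 2$ finishes it. I would write the argument at this level of generality, remarking that the rate is not quantified here (quantitative decay for $\Psi$ at fixed $r$ is available from \cite{DHR16} but is not needed for the scattering construction, where only the qualitative statement enters the definition of radiation fields).
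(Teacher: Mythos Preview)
Your overall strategy --- show that $f(t):=\int_{S^2_{t,R}}|\Psi|^2$ (and its angular-commuted versions) lies in $L^1_t$ together with its $t$-derivative, conclude $f(t)\to 0$, then apply Sobolev on $S^2$ --- is sound and is in fact the same mechanism the paper uses. The paper, however, does not work on the timelike tube $\{r=R\}$: it works instead with the \emph{ingoing null cones} $\underline{\mathscr{C}}_v\cap\{r<R\}$, showing that $\underline{F}_{\underline{\mathscr{C}}_v\cap\{r<R\}}[\Psi]$ and $\underline{F}_{\underline{\mathscr{C}}_v\cap\{r<R\}}[\slashednabla_T\Psi]$ are integrable in $v$, whence $\underline{F}_v\to 0$; a further commutation with $\Omega^{-1}\nablagml$ and angular derivatives then yields the pointwise statement. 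The point of this choice is that the $v$-integral of these fluxes \emph{is} exactly a bulk integral over $\{r<R\}$, directly controlled by \Cref{RWILED} (after commuting with $T$ to remove the photon-sphere degeneracy).

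There is a genuine gap in your passage from the bulk ILED to the quantity $\int_0^\infty\mathcal{I}(t)\,dt$ on the fixed tube $\{r=R\}$. The co-area formula does not do this: it expresses the spacetime integral as $\int_{2M}^\infty\big(\int_0^\infty\mathcal{I}(t,r)\,dt\big)\,dr$, which gives finiteness for almost every $r$, not for your particular $R$. What you need instead is a one-dimensional Sobolev trace in $r$: for $r\in[R-\delta,R+\delta]$ write $|\Psi(t,R,\theta)|^2\lesssim \delta^{-1}\int_{R-\delta}^{R+\delta}|\Psi|^2\,dr'+\int_{R-\delta}^{R+\delta}|\slashednabla_{R^*}\Psi|^2\,dr'$, integrate over $t$ and $S^2$, and invoke the \emph{non-degenerate} terms $r^{-3}\Omega^2|\Psi|^2$ and $r^{-2}\Omega^2|\slashednabla_{R^*}\Psi|^2$ in \bref{RWILEDestimate}. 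Note that once you do it this way you no longer need the degenerate $(1-3M/r)^2$ terms at all, which is fortunate because your claim that this factor ``is a fixed positive constant'' fails precisely at $R=3M$; as written, your argument breaks there. The paper's choice of null foliations sidesteps both issues at once, since Fubini replaces the trace step and the flux $\underline{F}_v$ is an integral over a full range of radii, not a single one.
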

\begin{proof}
Commuting \bref{RWILEDestimate} with $\mathcal{L}_T$ twice and using the redshift estimate of \Cref{RWredshift} give us for any $R<\infty$ 
\begin{align}
\int_{v_0}^\infty d\bar{v}\;\left[ \underline{F}_{\underline{\mathscr{C}}_v\cap\{r<R\}}[\Psi]+ \underline{F}_{\underline{\mathscr{C}}_v\cap\{r<R\}}[\slashednabla_T\Psi]\right]<\infty.
\end{align}
This in turn implies energy decay in a neighborhood of $\mathscr{H}^+$:
\begin{align*}
    \lim_{v\longrightarrow\infty} \underline{F}_v[\Psi](u_{R},\infty)=0,
\end{align*}
where $v-u_R=R^*$. Commuting with $\Omega^{-1} e_3$ and  ${\mathcal{L}_{\Omega^i}}$ and using \Cref{RWredshift} again gives
\begin{align*}
    \lim_{v\longrightarrow \infty} \sup_{u\in[u_R,\infty]} |\Psi|_{v}=0.
\end{align*}
\end{proof}
\begin{remark} 
The preceding argument works to show that $(\Omega^{-1}\nablagml)^k\Psi$ decays on any hypersurface $r=R$. See also Section 8.2 of \cite{DRSR14}.
\end{remark}
\begin{proposition}
For a solution $\Psi$ to \cref{RW} arising from smooth data of compact support on $\Sigma^*$, The energy flux on $\mathscr{H}^+$ is equal to
\begin{align*}
    F^T_{\mathscr{H}^+}=\int_{\mathscr{H}^+} |\partial_v\Psi|^2 dv \sin\theta d\theta d\phi.
\end{align*}
\end{proposition}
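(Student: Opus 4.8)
The plan is to read the claimed formula off the $T$-energy identity \bref{T derivative identity}, the only substantive input being that the coefficients $\Omega^2$ and $V=\Omega^2(3\Omega^2+1)/r^2$ both vanish identically on $\mathscr{H}^+$. Concretely, I would introduce the foliation $\Sigma_\tau$ obtained as the image of $\Sigma^*$ under the time-$\tau$ flow of the Killing field $T=\Omega e_3+\Omega e_4$: each $\Sigma_\tau$ is spacelike, asymptotes to $i^0$, and crosses $\mathscr{H}^+$ at some $v_\tau$ with $v_\tau\to\infty$ as $\tau\to\infty$. Since $(\uppsi,\uppsi')$ has compact support on $\Sigma^*$, finite speed of propagation gives that the solution $\Psi$ of \Cref{RWwpCauchy} has compact support on every $\Sigma_\tau$ and vanishes near $i^0$ throughout the region $\mathscr R_\tau$ bounded by $\Sigma^*$, $\Sigma_\tau$ and $\mathscr{H}^+\cap\{0\leq v\leq v_\tau\}$. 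Integrating \bref{T derivative identity} --- which is precisely the divergence-free form of the $T$-energy current for \bref{RW} --- over the effectively compact region $\mathscr R_\tau$, with no contribution from $i^0$, yields
\begin{align*}
\mathbb F_{\Sigma_\tau}[\Psi]+\int_{\mathscr{H}^+\cap\{0\leq v\leq v_\tau\}}\!\Big[|\nablav\Psi|^2+\Omega^2|\slashednabla\Psi|^2+V|\Psi|^2\Big]\,dv\sin\theta d\theta d\phi=\big\|(\uppsi,\uppsi')\big\|^2_{\mathcal E^T_{\Sigma^*}}<\infty,
\end{align*}
where $\mathbb F_{\Sigma_\tau}[\Psi]\geq0$ is the $T$-energy of $\Psi$ on $\Sigma_\tau$.

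I would then evaluate the horizon term. On $\mathscr{H}^+$ one has $\Omega^2=0$, hence $V=0$, so only the first term in the horizon integrand survives; and since $\Psi$ is an $S^2_{u,v}$-tangent $(1,1)$-tensor, $\nablav\Psi=\partial_v\Psi$ identically (\Cref{subsubsection 2.1.3 Asymptotics of S2 tensor fields}), with $\partial_v=\tfrac{V}{2M}\partial_V$ a regular nonvanishing field along $\mathscr{H}^+_{\geq0}$ (where the Kruskal coordinate has $V\geq1$, so one stays away from $\mathcal B$) and $\partial_v\Psi$ smooth up to $\mathscr{H}^+_{\geq0}$ by \Cref{RWwpCauchy}. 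The identity above therefore becomes
\begin{align*}
\int_{\mathscr{H}^+\cap\{0\leq v\leq v_\tau\}}|\partial_v\Psi|^2\,dv\sin\theta d\theta d\phi=\big\|(\uppsi,\uppsi')\big\|^2_{\mathcal E^T_{\Sigma^*}}-\mathbb F_{\Sigma_\tau}[\Psi]\leq\big\|(\uppsi,\uppsi')\big\|^2_{\mathcal E^T_{\Sigma^*}},
\end{align*}
whose left-hand side is nondecreasing in $\tau$ and uniformly bounded, hence converges as $\tau\to\infty$; the limit is by definition the $T$-energy flux $F^T_{\mathscr{H}^+}[\Psi]$ through $\mathscr{H}^+$ (the limit of the fluxes through the finite pieces $\mathscr{H}^+\cap\{0\le v\le v_\tau\}$), and it equals $\int_{\mathscr{H}^+}|\partial_v\Psi|^2\,dv\sin\theta d\theta d\phi$, with finiteness built in. This is the assertion.

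For consistency with the global conservation law I would further note that $\lim_{\tau\to\infty}\mathbb F_{\Sigma_\tau}[\Psi]=F^T_{\mathscr{I}^+}[\Psi]$, the total $T$-flux through $\mathscr{I}^+$ expressed via the radiation field studied in \Cref{subsection 5.2 subsection Radiation fields}; this is the one place where the quantitative estimates of \Cref{subsection 5.1 Basic integrated boundedness and decay estimates} are used --- the redshift and integrated local energy decay estimates (\Cref{RWredshift,RWILED}) to exclude concentration of $T$-energy at $i^+$, exactly as in the proof of \Cref{RWdecayfixedR}, and the $r^p$-hierarchy (\Cref{RWrp}) to control the behaviour near $\mathscr{I}^+$ --- whence $\|(\uppsi,\uppsi')\|^2_{\mathcal E^T_{\Sigma^*}}=F^T_{\mathscr{H}^+}[\Psi]+F^T_{\mathscr{I}^+}[\Psi]$. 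The same argument applies verbatim to data on $\Sigma$; for data on $\overline\Sigma$, where $\overline{\mathscr{H}^+}$ includes $\mathcal B$, one reads the horizon flux density in the Kruskal frame near $\mathcal B$, where $|\partial_v\Psi|^2\,dv=\tfrac{V}{2M}|\partial_V\Psi|^2\,dV$ is still regular and vanishes at $\mathcal B$, so no new subtlety appears. Thus the main obstacle is not in the statement itself --- which amounts to the vanishing of $\Omega^2$ and $V$ on $\mathscr{H}^+$ together with a trivial monotone limit --- but in the auxiliary identification $\lim_\tau\mathbb F_{\Sigma_\tau}[\Psi]=F^T_{\mathscr{I}^+}[\Psi]$, which rests on the decay machinery recalled above.
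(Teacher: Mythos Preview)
Your proposal is correct and follows essentially the same approach as the paper: the paper's one-line proof invokes ``the regularity of $\Psi$ and its angular derivatives on $\mathscr{H}^+$ together with energy conservation,'' which is precisely your observation that $\Omega^2$ and $V$ vanish on $\mathscr{H}^+$ so that the outgoing $T$-flux integrand reduces to $|\partial_v\Psi|^2$ there. You have simply spelled this out via a spacelike foliation $\Sigma_\tau$ and a monotone limit, whereas the paper reads the result directly off the null-cone flux formula $F^T_u[\Psi]$ evaluated at $u=\infty$; your additional paragraph on the global conservation law is not needed for the statement itself (it is the content of the later \Cref{RWfcp}).
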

\begin{proof}
This follows from the regularity of $\Psi$ and its angular derivatives on $\mathscr{H}^+$ together with energy conservation.
\end{proof}
\subsubsection{Radiation on $\mathscr{I}^+$}\label{subsubsection 5.2.2 radiation field on I+}
An $r^p$-estimate like \Cref{RWrp} implies the existence of radiation field on $\mathscr{I}^+$ as a "soft" corollary.
\begin{proposition}\label{RWradscri}
    For a solution $\Psi$ to \cref{RW} arising from smooth data of compact support on $\Sigma^*$,
    \begin{align}
        \bm{\uppsi}_{\mathscr{I}^+}(u,\theta^A)=\lim_{v\longrightarrow\infty}\Psi(u,v,\theta^A)
    \end{align}
     exists and belongs to $\Gamma(\mathscr{I}^+)$. Moreover,
    \begin{align}\label{RW limit of energy at null infinity}
        \lim_{v\longrightarrow \infty}\int_{\mathscr{C}_v\cap\{u\in[u_0,u_1]\}}dud\omega\; |\Omega\slashed{\nabla}_3\Psi|^2+\Omega^2|\slashed{\nabla}\Psi|^2+V|\Psi|^2=\int_{\mathscr{I}^+\cap \{u\in[u_0,u_1]\}}dud\omega\; |\partial_u\bm{\uppsi}_{\mathscr{I}^+}|^2.
    \end{align}
\end{proposition}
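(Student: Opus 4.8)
\textbf{Proof proposal for \Cref{RWradscri}.}

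The plan is to establish the existence of the radiation field $\bm{\uppsi}_{\mathscr{I}^+}$ via the general mechanism described in \Cref{subsubsection 2.1.3 Asymptotics of S2 tensor fields}: it suffices to show that $\nablav\Psi \in L^1_v L^2_{S^2_{u,v}}$ along each outgoing cone $\mathscr{C}_u$, together with a uniform estimate in $u$ to get the limit field in $\Gamma(\mathscr{I}^+)$, and then to upgrade pointwise existence to the energy-flux identity \bref{RW limit of energy at null infinity}. First I would fix $u$ and work on $\mathscr{C}_u\cap\{r>R\}$ for $R$ large. From the $p=2$ case of \Cref{RWrp} (applied at first order, i.e.\ the commuted estimate \bref{RWrp k=1} with $p$ near $2$, or simply combining the $\mathbb{I}_p$ bulk control with $F^T$), one gets $\int_{\mathscr{C}_u\cap\{r>R\}} r^2|\nablav\Psi|^2\,dv\,d\omega \lesssim \mathbb{F}_{\Sigma^*}[\Psi] + (\text{data term on }\Sigma^*)$, which is finite for compactly supported data. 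Since $dv = \Omega^{-2}dr \sim dr$ for large $r$ and $r\sim v$, Cauchy--Schwarz gives
\begin{align}
\int_{v_1}^{v_2} |\nablav\Psi|_{S^2}\,dv \leq \left(\int_{v_1}^{v_2} r^{2}|\nablav\Psi|_{S^2}^2\,dv\right)^{1/2}\left(\int_{v_1}^{v_2} r^{-2}\,dv\right)^{1/2} \longrightarrow 0
\end{align}
as $v_1,v_2\to\infty$, so $\Psi(u,v,\cdot)$ is Cauchy in $L^2(S^2)$ as $v\to\infty$ and the limit $\bm{\uppsi}_{\mathscr{I}^+}(u,\cdot)$ exists. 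Commuting with $r\slashednabla$ (using the higher-order $r^p$ estimates, which the paper sketches) gives the same for angular derivatives, so $\bm{\uppsi}_{\mathscr{I}^+}\in\Gamma(\mathscr{I}^+)$; smoothness in $u$ follows by commuting with $T$ and redoing the argument, noting $\nablav\Psi$ and $T$ commute.

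Next, for the energy identity \bref{RW limit of energy at null infinity}: the $T$-energy flux through $\mathscr{C}_v\cap\{u\in[u_0,u_1]\}$ is $\int |\nablau\Psi|^2 + \Omega^2|\slashednabla\Psi|^2 + V|\Psi|^2\,du\,d\omega$. As $v\to\infty$, the terms $\Omega^2|\slashednabla\Psi|^2$ and $V|\Psi|^2$ both carry a factor $\sim r^{-2}$ relative to $|\nablav\Psi|^2$-type quantities and vanish in the limit, which I would justify by the $r^p$-weighted bounds: $\int_{\mathscr{C}_v\cap\{r>R\}} r^{p-1}(2-p)|\slashednabla\Psi|^2$ is controlled uniformly, forcing $r\Omega^2|\slashednabla\Psi|^2$ and $r^{-1}|\Psi|^2$ to be integrable in the bulk, hence their cone-fluxes tend to zero along a sequence; pointwise decay from \Cref{RWdecayfixedR} and the $r^p$-hierarchy upgrades this to a genuine limit. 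For the surviving term, $\nablau\Psi = \slashednabla_T\Psi - \nablav\Psi$, and on $\mathscr{C}_v$ as $v\to\infty$ the relation $\partial_u$ of the limit field is exactly $\lim \nablau\Psi$: concretely, $\partial_u\bm{\uppsi}_{\mathscr{I}^+} = \lim_{v\to\infty}\partial_u\Psi(u,v,\cdot)$, and one shows $\int |\partial_u\Psi|^2\,du\,d\omega$ on $\mathscr{C}_v$ converges to $\int|\partial_u\bm{\uppsi}_{\mathscr{I}^+}|^2\,du\,d\omega$ using dominated convergence justified by the uniform $r^p$-bound $\int r^p|\nablav\Psi|^2$ and the fact that $\partial_u\Psi - \partial_u\bm{\uppsi}_{\mathscr{I}^+} = -\int_v^\infty \nablav\partial_u\Psi\,d\bar v$, whose $L^2(S^2)$ norm $\to 0$ uniformly on $u\in[u_0,u_1]$ by the commuted estimate. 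Conservation of the $T$-energy between $\Sigma^*$ and the cones then pins down the limit.

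The main obstacle I expect is the careful bookkeeping in the energy-flux convergence \bref{RW limit of energy at null infinity}: one must show not only that the "angular'' and "zeroth-order'' pieces of the $T$-flux on $\mathscr{C}_v$ vanish as $v\to\infty$ (which requires the right weighted estimate with a strictly positive bulk weight, i.e.\ $p$ genuinely inside $(0,2)$ for the angular term, or a separate Hardy/Poincar\'e argument), but also that the $|\nablau\Psi|^2$ flux converges to the flux of the limit field rather than merely being bounded — this is where one needs the uniform-in-$u$ control of $\int_v^\infty \nablav\partial_u\Psi$ in $L^2$, supplied by the first-order $r^p$-estimate \bref{RWrp k=1}. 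Everything else (existence of the limit, membership in $\Gamma(\mathscr{I}^+)$, smoothness) is a soft consequence of the $r^p$-hierarchy exactly as the paper indicates. Once \bref{RW limit of energy at null infinity} is in hand, combined with the horizon statement of \Cref{subsubsection 5.2.1 radiation on H+} and $T$-energy conservation, the equality $\|(\uppsi,\uppsi')\|^2_{\mathcal{E}^T_{\Sigma^*}} = \|\bm{\uppsi}_{\mathscr{I}^+}\|^2_{\mathcal{E}^T_{\mathscr{I}^+}} + \|\bm{\uppsi}_{\mathscr{H}^+}\|^2_{\mathcal{E}^T_{\mathscr{H}^+}}$ of \Cref{forwardRW} follows immediately.
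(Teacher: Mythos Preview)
Your proposal is correct and follows essentially the same route as the paper: both use the $p=2$ case of the $r^p$-estimate \Cref{RWrp} to obtain $\int_{\mathscr{C}_u\cap\{r>R\}} r^2|\nablav\Psi|^2<\infty$, then apply Cauchy--Schwarz to conclude that $\nablav\Psi$ is integrable along outgoing cones and hence $\Psi$ has a limit, with smoothness in $u$ and $\theta^A$ following from commutation with $T$ and the angular Killing fields. The only cosmetic difference is that the paper packages the pointwise convergence via the Sobolev embedding $W^{3,1}(S^2)\hookrightarrow L^\infty(S^2)$ applied to $\slashed{\mathcal{L}}^\gamma_{S^2}\Psi$ for $|\gamma|\leq 3$ in a single step, whereas you first establish $L^2(S^2)$-convergence and then commute; and the paper dismisses the energy identity \bref{RW limit of energy at null infinity} as immediate, while you (correctly) flesh out the dominated-convergence argument that makes it so.
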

\begin{proof}
Let $r_2>r_1>8M$, fix $u$ and set $v(r_2,u)\equiv v_2, v(r_1,u)\equiv v_1$. The Sobolev embedding on the sphere $W^{3,1}(S^2)\hookrightarrow L^\infty(S^2)$ and the fundamental theorem of calculus give us:
\begin{align}\label{first order}
\begin{split}
|\Psi(u,v_2,\theta,\phi)-\Psi(u,v_1,\theta,\phi)|^2\leq& B\left[\sum_{|\gamma|\leq 3} \int_{S^2}d\omega\; |\slashed{\mathcal{L}}^\gamma_{S^2} (\Psi(u,v_2,\theta,\phi)-\Psi(u,v_1,\theta,\phi))|\right]^2\\
&=  B\left[\sum_{|\gamma|\leq 3} \int_{S^2}d\omega\int_{v_1}^{v_2}dv |\slashed{\mathcal{L}}^\gamma_{S^2} \Omega\slashed\nabla_4\Psi|\right]^2
\end{split}
\end{align}
Cauchy--Schwarz gives:
\begin{align}
|\Psi(u,v_2,\theta,\phi)-\Psi(u,v_1,\theta,\phi)|^2 \leq \frac{B}{r_1}\Bigg[\sum_{|\gamma|\leq 3} \int_{\mathscr{C}_u\cap\{v>v_1\}}dvd\omega\; r^2|\slashed{\mathcal{L}}^\gamma_{S^2}\Omega\slashed{\nabla}_4\Psi|^2dv\sin \theta d\theta d\phi\Bigg].
\end{align}
where $\slashed{\mathcal{L}}^\gamma_{S^2}=\mathcal{L}_{\Omega_1}^{\gamma_1}\mathcal{L}_{\Omega_2}^{\gamma_2}\mathcal{L}_{\Omega_3}^{\gamma_3}$ denotes Lie differentiation on $S^2$ with respect to its $so(3)$ algebra of Killing fields. This says that $\Psi(u,v,\theta,\phi)$ converges in $L^\infty(\mathscr{I}^+\cap\{u>u_0\})$ for some $u_0>-\infty$ as $v\longrightarrow\infty$. Using higher order $r^p$-estimates we can repeat this argument to show
\begin{align}\label{second order}
    \left|\frac{r^2}{\Omega^2}\nablav\Psi(u,v_2,\theta,\phi)-\frac{r^2}{\Omega^2}\nablav\Psi(u,v_1,\theta,\phi)\right|^2\lesssim \frac{1}{r_1}\Bigg[\sum_{|\gamma|\leq 3} \int_{\mathscr{C}_u\cap\{v>v_1\}} \left|r^2\slashed{\mathcal{L}}^\gamma_{S^2}\nablav\left(r^2\nablav\right)\Psi\right|^2dv\sin \theta d\theta d\phi\Bigg].
\end{align}
Commuting \bref{first order} with $T$ and $\Omega^i$ and using \bref{second order} gives that $\Psi|_{\mathscr{I}^+}=\lim_{v\longrightarrow \infty} \Psi(u,v,\theta,\phi)$ is differentiable on $\mathscr{I}^+$. We can repeat this argument with higher order $r^p$-estimates to find that $\bm{\uppsi}_{\mathscr{I}^+}$ is smooth and $\lim_{v\longrightarrow\infty}\nablau^i(r\slashednabla)^\gamma \Psi=\partial_u^i\mathring{\slashednabla}{}^\gamma\bm{\uppsi}_{\mathscr{I}^+}$ for any index $i$ and multiindex $\gamma$. \Cref{RW limit of energy at null infinity} follows immediately.
\end{proof}
In the following, define  $\Phi^{(k)}:=\left(\frac{r^2}{\Omega^2}\nablav\right)^k\Psi$.
\begin{corollary}\label{RW transverse derivatives converge}
Under the assumptions of \Cref{RWradscri}, the limit
\begin{align}
    \bm{\upphi}^{(k)}_{\mathscr{I}^+}(u,\theta^A)=\lim_{v\longrightarrow\infty}\Phi^{(k)}(u,v,\theta^A)
\end{align}
exists and defines an element of $\Gamma(\mathscr{I}^+)$.
\end{corollary}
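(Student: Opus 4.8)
The claim is that for each $k$, the rescaled transverse derivative $\Phi^{(k)}=\left(\frac{r^2}{\Omega^2}\nablav\right)^k\Psi$ has a limit $\bm{\upphi}^{(k)}_{\mathscr{I}^+}$ as $v\longrightarrow\infty$, smooth on $\mathscr{I}^+$. The plan is to induct on $k$, the base case $k=0$ being \Cref{RWradscri} (and $k=1,2$ already essentially handled in the derivation of \bref{RWrp k=1}, \bref{RWrp k=2}). First I would observe that $\Phi^{(k)}$ satisfies a transport equation in the $\nablau$-direction with a right-hand side built from lower-order $\Phi^{(j)}$, $j<k$, and a wave equation in the style of \bref{Phi 1 wave equation} obtained by differentiating the transport equation with $\nablav$; the important structural point, visible already in \bref{Phi 1 wave equation} and \bref{commutation relation}, is that commuting $\frac{r^2}{\Omega^2}\nablav$ past the Regge--Wheeler operator produces a $\nablav\Phi^{(k)}$ term with a \emph{favourable} sign near $\mathscr{I}^+$, so that the $r^p$-estimate machinery of \Cref{RWrp} closes at each order after absorbing the inhomogeneity via the inductive hypothesis. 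This gives uniform-in-$v$ bounds on $\int_{\mathscr{C}_u\cap\{r>R\}} r^p|\nablav\Phi^{(k)}|^2$ and control of the flux of $\Phi^{(k)}$ and its angular derivatives toward $\mathscr{I}^+$.

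With these higher-order $r^p$-estimates in hand, I would repeat verbatim the Sobolev-embedding-plus-fundamental-theorem-of-calculus argument of \Cref{RWradscri}: writing
\begin{align}
    |\Phi^{(k)}(u,v_2,\theta,\phi)-\Phi^{(k)}(u,v_1,\theta,\phi)|^2\lesssim \frac{1}{r_1}\sum_{|\gamma|\leq 3}\int_{\mathscr{C}_u\cap\{v>v_1\}} \left|r^2\slashed{\mathcal{L}}^\gamma_{S^2}\nablav\left(r^2\nablav\right)^k\Psi\right|^2 dv\sin\theta d\theta d\phi,
\end{align}
using Cauchy--Schwarz and the $r^p$-estimate for $\Phi^{(k+1)}$ (or for $\nablav$ of the appropriate rescaled quantity), and noting that the right-hand side tends to $0$ as $r_1\longrightarrow\infty$. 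This shows $\Phi^{(k)}(u,\cdot,\theta^A)$ is Cauchy in $L^\infty(S^2)$ uniformly on compact $u$-intervals, hence converges pointwise to a limit $\bm{\upphi}^{(k)}_{\mathscr{I}^+}\in L^\infty$. Smoothness in $u$ and $\theta^A$ follows by commuting the whole argument with $T$, the angular Killing fields $\Omega_i$, and (inductively) with further powers of $\frac{r^2}{\Omega^2}\nablav$, exactly as in the last two sentences of the proof of \Cref{RWradscri}; one gets $\lim_{v\to\infty}\nablau^i(r\slashednabla)^\gamma\Phi^{(k)}=\partial_u^i\mathring{\slashednabla}{}^\gamma\bm{\upphi}^{(k)}_{\mathscr{I}^+}$.

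The main obstacle — and the only real work beyond bookkeeping — is establishing the higher-order $r^p$-hierarchy for $\Phi^{(k)}$ at all orders $k$ with the inhomogeneity controlled: one must check that the sign of the bulk $\nablav\Phi^{(k)}$ term remains favourable for the relevant range of $p$ and weight $k'$ as $k$ increases (the coefficient $p+4k+x(\ldots)$ pattern in \bref{RWrp k=1}, \bref{RWrp k=2} suggests it only improves, but this should be verified from the general commutation relation \bref{commutation relation}), and that the inhomogeneous terms — which are lower-order $\Phi^{(j)}$ dressed with $r^{p-3}$ or similar weights — are dominated by the inductively available estimates rather than the quantities being estimated. Since \bref{RWrp k=1} and \bref{RWrp k=2} already exhibit the pattern and the paper explicitly says "We could do this again$\ldots$ following the same steps," I expect this induction to be routine but slightly tedious; no genuinely new idea is needed, and the convergence argument is then a soft corollary as claimed.
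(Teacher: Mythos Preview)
Your approach is correct but takes a genuinely different route from the paper. You propose to establish the higher-order $r^p$-hierarchy for $\Phi^{(k)}$ inductively and then run the Cauchy-sequence-in-$v$ argument via the fundamental theorem of calculus and Sobolev embedding, exactly as in \Cref{RWradscri}. The paper instead exploits the \emph{compact support} of the initial data much more directly: since $\Psi$ vanishes on $\mathscr{C}_u\cap\{r>R\}$ for $u\leq u_0$, one can integrate the transport equation \bref{Phi 1 transport equation} in the $u$-direction from $u_0$ (where $\Phi^{(1)}=0$) to obtain the representation
\begin{align}
\Phi^{(1)}(u,v,\theta^A)=\frac{r^2}{\Omega^2}(u,v)\int_{u_0}^u\frac{\Omega^2}{r^2}\left[\mathring{\slashed{\Delta}}\Psi-(3\Omega^2+1)\Psi\right]d\bar{u},
\end{align}
and then pass to the limit $v\to\infty$ by Lebesgue's bounded convergence theorem, using only that $\Psi$ itself converges (the base case). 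An inductive repetition handles higher $k$. This bypasses entirely the need for the higher-order $r^p$-estimates at this stage: no quantitative control on $\nablav\Phi^{(k)}$ is required, only the qualitative convergence of lower-order quantities on a compact $u$-interval. Your approach is more robust (it would survive weakening the compact-support hypothesis to suitable weighted decay), but the paper's is considerably shorter and avoids the ``slightly tedious'' bookkeeping you anticipate.
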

\begin{proof}
Let $R,u_1$ be such that $\Psi$ vanishes on $\mathscr{C}_{u}\cap\{r>R\}$ for $u\leq u_0$. We can integrate \cref{RW} from a point $(u_0,v,\theta^A)$ to $(u,v,\theta^A)$ where $r(u_0,v)>R$ to find
\begin{align}
    \Phi^{(1)}(u,\theta^A)=\frac{r^2}{\Omega^2}(u,v)\int_{u_0}^u\frac{\Omega^2}{r^2}\left[\mathring{\slashed{\Delta}}\Psi-(3\Omega^2+1)\Psi\right].
\end{align}
The right hand side converges as $v\longrightarrow\infty$ by \Cref{RWradscri} and Lebesgue's bounded convergence theorem. An inductive argument works to show the same for higher order derivatives.
\end{proof}
\begin{defin}\label{RW future rad field scri}
    Let $\Psi$ be a solution to \cref{RW} arising from smooth data of compact support on $\Sigma^*, \Sigma$ or $\overline{\Sigma}$. The future radiation field on $\mathscr{I}^+$ is defined to be the limit of $\Psi$ towards $\mathscr{I}^+$
    \begin{align*}
        \bm{\uppsi}_{\mathscr{I}^+}(u,\theta,\phi)=\lim_{v\longrightarrow\infty}\Psi(u,v,\theta,\phi). 
    \end{align*}
\end{defin}
\begin{remark}
Note that a solution $\Psi$ arising from compactly supported data on $\overline{\Sigma}$ necessarily corresponds to compactly supported data on $\Sigma^*$.
\end{remark}
\noindent The $r^p$-estimates of \Cref{RWrp} further imply that $\bm{\uppsi}_{\mathscr{I}^+}$ decays as $u\longrightarrow\infty$:
\begin{proposition}\label{RWdecayscri}
    Let $\Psi,\bm{\uppsi}_{\mathscr{I}^+}$ be as in \Cref{RWradscri}. Then $\bm{\uppsi}_{\mathscr{I}^+}$ decays along $\mathscr{I}^+$.
\end{proposition}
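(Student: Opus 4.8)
\textbf{Proof plan for \Cref{RWdecayscri}.} The plan is to exploit the $r^p$-hierarchy of \Cref{RWrp} together with an averaging (pigeonhole) argument in $u$, exactly in the style of the Dafermos--Rodnianski approach to decay for the wave equation. First I would recall that by \Cref{RWrp} applied with $p=2$, the quantity $\int_{\mathscr{C}_u\cap\{r>R\}}dv\,d\omega\;r^2|\nablav\Psi|^2$ is finite and bounded by a constant depending only on the data, and moreover the spacetime integral ${\mathbb{I}_0}^{u,v}_{u_0,v_0}[\Psi]$ with the $p=0$ weight, which controls $\int d\bar u\,d\bar v\,d\omega\;r^{-1}\big(|\slashednabla\Psi|^2+r^{-2}|\Psi|^2\big)$, is likewise finite. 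Since the energy flux through $\mathscr{C}_u\cap\{r>R\}$ degenerates towards $\mathscr{I}^+$ only in the transversal direction, the limit in \Cref{RW limit of energy at null infinity} shows that $\int_{\mathscr{I}^+\cap\{u'\geq u\}}|\partial_u\bm{\uppsi}_{\mathscr{I}^+}|^2\,du'\,d\omega$ is a non-increasing, non-negative function of $u$; call it $\mathcal{F}(u)$.

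Next I would establish the decay of $\mathcal{F}(u)$ itself. Taking the limit $v\to\infty$ in the $p=1$ estimate of \Cref{RWrp} (using \Cref{RWradscri} and \Cref{RW transverse derivatives converge} to justify passing the limit inside the relevant integrals) produces a bound of the form
\begin{align*}
    \int_{u_0}^{\infty}\mathcal{F}(u)\,du\;\lesssim\;\mathbb{F}_{\Sigma^*}[\Psi]+\int_{\Sigma^*\cap\{r>R\}}r\,|\nablav\Psi|^2\,dr\,d\omega\;<\;\infty,
\end{align*}
i.e.\ $\mathcal{F}\in L^1([u_0,\infty))$; combined with monotonicity this already gives $\mathcal{F}(u)\to 0$, in fact $\mathcal{F}(u)\lesssim u^{-1}$ by the standard pigeonhole argument (on dyadic intervals $[u,2u]$ the $p=1$ estimate controls $\int_u^{2u}\mathcal{F}$, so $\mathcal{F}$ attains a value $\lesssim u^{-1}$ somewhere in the interval, and monotonicity propagates it).

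Finally I would convert decay of the flux $\mathcal{F}(u)$ into pointwise decay of $\bm{\uppsi}_{\mathscr{I}^+}$ on $\mathscr{I}^+$. Since the data are compactly supported, $\bm{\uppsi}_{\mathscr{I}^+}$ vanishes for $u\leq u_0$, so for any $u$ one may write $\bm{\uppsi}_{\mathscr{I}^+}(u,\theta^A)=\int_{u_0}^{u}\partial_{u'}\bm{\uppsi}_{\mathscr{I}^+}\,du'$; but to get decay rather than boundedness I would instead commute with the angular Killing fields (the $so(3)$ Lie derivatives $\slashed{\mathcal{L}}_{\Omega_i}$, which preserve \bref{RW}) and with $T$, apply the same $r^p$-argument to the commuted quantities to obtain that $\int_{\mathscr{I}^+\cap\{u'\geq u\}}\sum_{|\gamma|\leq 3}|\partial_{u'}\slashed{\mathcal{L}}^\gamma_{S^2}\bm{\uppsi}_{\mathscr{I}^+}|^2\,du'\,d\omega\to 0$, and then Sobolev embedding $W^{3,1}(S^2)\hookrightarrow L^\infty(S^2)$ together with Cauchy--Schwarz in $u'$ (as in the proof of \Cref{RWradscri}) gives $\sup_{\theta^A}|\bm{\uppsi}_{\mathscr{I}^+}(u,\theta^A)|^2\lesssim \mathcal{F}^{(3)}(u)\to 0$ as $u\to\infty$, where $\mathcal{F}^{(3)}$ is the angularly-commuted version of $\mathcal{F}$. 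The main obstacle I anticipate is the bookkeeping in passing the $v\to\infty$ limit through the $r^p$-identities: one must be careful that the boundary term on $\mathscr{I}^+$ appearing in \bref{RWrp k=1} (which has a favorable sign by Poincar\'e) is exactly the flux $\mathcal{F}$ of the radiation field, and that no growing-in-$u$ contributions sneak in from the timelike boundary $\{r=R\}$ — these are handled by averaging in $R$ and invoking \Cref{RWILED} as in \Cref{RWrp}, but it requires care to state cleanly. Everything else is routine given the estimates already assembled in \Cref{subsection 5.1 Basic integrated boundedness and decay estimates}.
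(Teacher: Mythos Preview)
Your overall strategy via the $r^p$-hierarchy and pigeonhole is standard and can be made to work, but the final step as written has a gap. You claim that Cauchy--Schwarz in $u'$ yields $|\bm{\uppsi}_{\mathscr{I}^+}(u,\theta^A)|^2\lesssim \mathcal{F}^{(3)}(u)$, but this would require writing $\bm{\uppsi}_{\mathscr{I}^+}(u)=-\int_u^\infty\partial_{u'}\bm{\uppsi}_{\mathscr{I}^+}\,du'$ and then applying Cauchy--Schwarz on the semi-infinite interval $[u,\infty)$. The first step presupposes $\bm{\uppsi}_{\mathscr{I}^+}(u')\to 0$ as $u'\to\infty$, which is exactly what you are trying to prove, and the second fails outright because $\int_u^\infty 1\,du'=\infty$. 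The analogy with the proof of \Cref{RWradscri} breaks down: there Cauchy--Schwarz was run against the integrable weight $r^{-2}$ paired with the $r^2$-weighted flux on $\mathscr{C}_u$, whereas on $\mathscr{I}^+$ no such weight is available from $\mathcal{F}$ alone. Indeed, knowing only $\mathcal{F}^{(3)}(u)\to 0$ (even at rate $u^{-1}$) is consistent with $\bm{\uppsi}_{\mathscr{I}^+}$ tending to a nonzero constant along $\mathscr{I}^+$.

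The paper avoids this circularity by not working on $\mathscr{I}^+$ for the pointwise step. It integrates along $\mathscr{C}_u$ from a fixed hypersurface $r=R$ out to $\mathscr{I}^+$: the fundamental theorem of calculus, Cauchy--Schwarz, and a Hardy inequality give
\[
\int_{S^2}|\bm{\uppsi}_{\mathscr{I}^+}|^2\;\lesssim\;\int_{S^2}|\Psi|_{r=R}|^2+\Big(\int_{\mathscr{C}_u}|\nablav\Psi|^2\Big)\Big(\int_{\mathscr{C}_u}r^2|\nablav\Psi|^2\Big).
\]
The first term decays by \Cref{RWdecayfixedR}; in the product, the first factor decays (via the $r^p$ pigeonhole applied on $\mathscr{C}_u$, commuting once with $T$) while the second is uniformly bounded by \Cref{RWrp} with $p=2$. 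Your argument can be repaired along the same lines --- the missing ingredient is an anchor at finite $r$ where decay has already been independently established.
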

\begin{proof}
The fundamental theorem of calculus, Cauchy--Schwarz and a Hardy estimate give us:
\begin{align}
\begin{split}
    \int_{S^2_{u,\infty}}|\Psi_{\fscri}|^2\leq&\int_{S^2_{u,v(r=R)}}|\Psi_{r=R}|^2+\int_{\mathscr{C}_u}\frac{1}{r^2}|\Psi|^2\times \int_{\mathscr{C}_u}r^2|\nablav\Psi|^2\\
    \lesssim&\int_{S^2_{u,v(r=R)}}|\Psi_{r=R}|^2+\int_{\mathscr{C}_u}|\nablav\Psi|^2\times \int_{\mathscr{C}_u}r^2|\nablav\Psi|^2.
\end{split}
\end{align}
\Cref{RWrp} applied to $\Psi$ and $\slashednabla_T\Psi$ implies the decay of $\int_{\mathscr{C}_u\cap\{r>R\}}|\nablav\Psi|^2$ and the boundedness of $\int_{\mathscr{C}_u\cap\{r>R\}}r^2|\nablav\Psi|^2$, and the result follows considering \Cref{RWdecayfixedR}.
\end{proof}
We can in fact compute $\bm{\upphi}_{\mathscr{I}^+}^{(k)}$ out of $\bm{\uppsi}_{\mathscr{I}^+}$ for $k=1,2$:
\begin{corollary}\label{Phi 1 forward}
For a solution $\Psi$ to \cref{RW} arising from smooth data of compact support on $\Sigma^*$, we have
\begin{align}
    \bm{\upphi}^{(1)}_{\mathscr{I}^+}(u,\theta^A)=-\int_u^\infty d\bar{u}\left[\mathcal{A}_2-2\right]\bm{\uppsi}_{\mathscr{I}^+}(\bar{u},\theta^A).
\end{align}
\end{corollary}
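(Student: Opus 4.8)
\textbf{Proof plan for Corollary \ref{Phi 1 forward}.}

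The plan is to read off the identity from the transport equation \bref{Phi 1 transport equation} governing $\Phi^{(1)}=\frac{r^2}{\Omega^2}\nablav\Psi$, combined with the decay information already established for $\bm{\uppsi}_{\mathscr{I}^+}$ and its derivatives. First I would recall that by \Cref{RW transverse derivatives converge} the limit $\bm{\upphi}^{(1)}_{\mathscr{I}^+}(u,\theta^A)=\lim_{v\to\infty}\Phi^{(1)}(u,v,\theta^A)$ exists in $\Gamma(\mathscr{I}^+)$, and by the proof of \Cref{RW transverse derivatives converge} it is given (using compact support of the data on $\Sigma^*$, so that $\Psi$ vanishes on $\mathscr{C}_{u_0}\cap\{r>R\}$ for $u_0$ sufficiently negative) by the explicit formula
\begin{align}
    \bm{\upphi}^{(1)}_{\mathscr{I}^+}(u,\theta^A)=\lim_{v\to\infty}\frac{r^2}{\Omega^2}(u,v)\int_{u_0}^u\frac{\Omega^2}{r^2}\left[\mathring{\slashed{\Delta}}\Psi-(3\Omega^2+1)\Psi\right]d\bar u.
\end{align}
Here I used that $\Phi^{(1)}(u_0,v,\theta^A)=0$ for $r(u_0,v)>R$, so the transport equation \bref{Phi 1 transport equation} integrated in $u$ from $u_0$ to $u$ on the slice of constant $v$ gives precisely this.

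Next I would take the limit $v\to\infty$ inside the $\bar u$-integral. Since $\frac{\Omega^2}{r^2}\to 1$ and $3\Omega^2+1\to 4$ pointwise as $v\to\infty$ with uniform bounds, and since $\frac{r^2}{\Omega^2}\cdot\frac{\Omega^2}{r^2}=1$, Lebesgue's bounded convergence theorem (as in the proof of \Cref{RW transverse derivatives converge}, using that $\lim_{v\to\infty}\mathring{\slashed{\nabla}}{}^\gamma\Psi=\mathring{\slashed{\nabla}}{}^\gamma\bm{\uppsi}_{\mathscr{I}^+}$ for all multiindices $\gamma$, established in \Cref{RWradscri}) yields
\begin{align}
    \bm{\upphi}^{(1)}_{\mathscr{I}^+}(u,\theta^A)=\int_{u_0}^u\left[\mathring{\slashed{\Delta}}-4\right]\bm{\uppsi}_{\mathscr{I}^+}(\bar u,\theta^A)\,d\bar u = \int_{u_0}^u\left[\mathcal{A}_2-2\right]\bm{\uppsi}_{\mathscr{I}^+}(\bar u,\theta^A)\,d\bar u,
\end{align}
using $\mathcal{A}_2=\mathring{\slashed{\Delta}}-2$. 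Since the data are compactly supported, $\bm{\uppsi}_{\mathscr{I}^+}$ vanishes for $\bar u\le u_0$, so $\int_{u_0}^u=\int_{-\infty}^u$. Finally, to rewrite this as $-\int_u^\infty$, I would invoke the total integral relation: the existence and decay of $\bm{\upphi}^{(1)}_{\mathscr{I}^+}$ as $u\to\infty$ (from \Cref{RWdecayscri} applied to a solution whose data are again compactly supported, together with \Cref{RW transverse derivatives converge} bounding the limit) forces $\int_{-\infty}^\infty[\mathcal{A}_2-2]\bm{\uppsi}_{\mathscr{I}^+}\,d\bar u=\lim_{u\to\infty}\bm{\upphi}^{(1)}_{\mathscr{I}^+}(u,\theta^A)$; and a separate argument showing this limit is zero (e.g. from the $r^p$-estimate of \Cref{RWrp} with $p=2$ applied to $\Phi^{(1)}$, forcing integrability in $u$ of $|\bm{\upphi}^{(1)}_{\mathscr{I}^+}|^2$ along $\mathscr{I}^+$, combined with the pointwise decay) then gives $\bm{\upphi}^{(1)}_{\mathscr{I}^+}(u,\theta^A)=-\int_u^\infty[\mathcal{A}_2-2]\bm{\uppsi}_{\mathscr{I}^+}(\bar u,\theta^A)\,d\bar u$.

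The main obstacle I anticipate is the justification that $\lim_{u\to\infty}\bm{\upphi}^{(1)}_{\mathscr{I}^+}(u,\theta^A)=0$, i.e.\ the vanishing of the ``total charge'' $\int_{-\infty}^\infty[\mathcal{A}_2-2]\bm{\uppsi}_{\mathscr{I}^+}\,d\bar u$; everything else is a soft passage to the limit in an ODE. For compactly supported Cauchy data this should follow cleanly from the higher-order $r^p$-hierarchy (\Cref{RWrp k=1}, which controls $\int_{\mathscr{I}^+}|\mathring{\slashednabla}\Phi^{(1)}|^2+|\Phi^{(1)}|^2$ and hence forces $L^2$-integrability of $\bm{\upphi}^{(1)}_{\mathscr{I}^+}$ in $u$), which together with the uniform pointwise decay of \Cref{RWdecayscri} rules out a nonzero limit. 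One should be slightly careful that applying \Cref{RWrp} or \Cref{RWdecayscri} to $\Phi^{(1)}$ requires knowing $\Phi^{(1)}$ itself arises from admissible data, which it does since $\Phi^{(1)}$ solves the wave-type equation \bref{Phi 1 wave equation} with a lower-order source controlled by $\Psi$, and the data remain compactly supported.
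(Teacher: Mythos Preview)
Your approach is essentially the same as the paper's: integrate the transport equation \bref{Phi 1 transport equation}, pass to $\mathscr{I}^+$, and use decay of $\bm{\upphi}^{(1)}_{\mathscr{I}^+}$ as $u\to\infty$ to fix the constant of integration. The paper does it in a slightly cleaner order---it first takes $v\to\infty$ in the ODE itself (the term $\frac{3\Omega^2-1}{r}\Phi^{(1)}$ drops since $\Phi^{(1)}$ is bounded), obtaining $\partial_u\bm{\upphi}^{(1)}_{\mathscr{I}^+}=[\mathcal{A}_2-2]\bm{\uppsi}_{\mathscr{I}^+}$ on $\mathscr{I}^+$, and then integrates between finite $u_1,u_2$---whereas you integrate with the integrating factor first and then take the limit, but the content is the same.

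One slip to correct: you write ``$\frac{\Omega^2}{r^2}\to 1$'', which is false (it tends to $0$). What actually makes your limit work is that the \emph{ratio} $\frac{r^2}{\Omega^2}(u,v)\cdot\frac{\Omega^2}{r^2}(\bar u,v)\to 1$ uniformly for $\bar u$ in the compact interval $[u_0,u]$; this is what bounded convergence needs. On the decay of $\bm{\upphi}^{(1)}_{\mathscr{I}^+}$ as $u\to\infty$, the paper simply asserts it in one line; your proposed justification via \bref{RWrp k=1} (giving $\bm{\upphi}^{(1)}_{\mathscr{I}^+}\in L^2(\mathscr{I}^+)$) together with $\partial_u\bm{\upphi}^{(1)}_{\mathscr{I}^+}\in L^2(\mathscr{I}^+)$ is a correct way to fill that in.
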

\begin{proof}
    Let $-\infty<u_1<u_2<\infty$, $v$ such that $(u,v,\theta^A)\in J^+(\Sigma^*)$. We integrate \cref{Phi 1 transport equation} on $\underline{\mathscr{C}}_v$ between $u_1,u_2$ and use the fact that $\Phi^{(1)}$ has a finite limit $\bm{\upphi}^{(1)}_{\mathscr{I}^+}$ towards $\mathscr{I}^+$ to get
    \begin{align}
    \bm{\upphi}^{(1)}_{\mathscr{I}^+}(u_1,\theta^A)-\bm{\upphi}^{(1)}_{\mathscr{I}^+}(u_2,\theta^A)=-\int_{u_1}^{u_2}d\bar{u}\left[\mathcal{A}_2-2\right]\bm{\uppsi}_{\mathscr{I}^+}(\bar{u},\theta^A).
\end{align}
Since $\bm{\upphi}^{(1)}_{\mathscr{I}^+}$ is uniformly bounded, we have that $\left[\mathcal{A}_2-2\right]\bm{\uppsi}_{\mathscr{I}^+}$ is integrable over $\mathscr{I}^+$. The result follows since $\bm{\upphi}^{(1)}_{\mathscr{I}^+}(u,\theta^A)$ decays as $u\longrightarrow\infty$.
\end{proof}
\begin{lemma}
If $\Psi$ satisfies \bref{RW} then
\begin{align}\label{eq:191}
    \left(\frac{r^2}{\Omega^2}\nablau\right)^2\frac{\Omega^2}{r^2}\nablav\frac{r^2}{\Omega^2}\nablav\Psi=\left[\mathcal{A}_2(\mathcal{A}_2-2)-12M\slashednabla_T\right]\Psi.
\end{align}
\end{lemma}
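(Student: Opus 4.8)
The plan is to establish \eqref{eq:191} by a direct commutation argument that mirrors the derivation of the Regge--Wheeler equation from the Teukolsky equations, but run in the ``opposite'' direction. Recall that if $\Psi$ solves \eqref{RW}, then \eqref{alpha to Psi} shows that $\tfrac{\Omega^2}{r^2}\nablav\tfrac{r^2}{\Omega^2}\nablav\Psi$ solves the $+2$ Teukolsky equation \eqref{T+2}; moreover, we have already computed in \eqref{eq:d4Psi} and \eqref{eq:d4d4Psi} the effect of applying $\nablav$ and $\tfrac{r^2}{\Omega^2}\nablav$ to $\Psi$ when $\Psi$ is the field built from a solution $\alpha$ of \eqref{T+2} via the hierarchy \eqref{hier+}. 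The idea is to \emph{not} pass through $\alpha$ at all, but instead to manipulate the left-hand side of \eqref{eq:191} using only the Regge--Wheeler equation \eqref{RW}, repeatedly converting $\nablau\nablav\Psi$ into $\Omega^2\slashed\Delta\Psi-V(r)\Psi$ and commuting the first-order operators $\tfrac{r^2}{\Omega^2}\nablau$ and $\tfrac{r^2}{\Omega^2}\nablav$ past one another using the algebraic identities $\Omega\slashed\nabla_4\tfrac{r^2}{\Omega^2}=-\Omega\slashed\nabla_3\tfrac{r^2}{\Omega^2}=r(x+2)$ together with the commutation relation \eqref{commutation relation} (with the appropriate choice of integer parameters $a,b,c,k,k'$).

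Concretely, I would proceed as follows. First, set $\Xi := \tfrac{\Omega^2}{r^2}\nablav\tfrac{r^2}{\Omega^2}\nablav\Psi$, which by the observation above solves \eqref{T+2}. Second, apply $\left(\tfrac{r^2}{\Omega^2}\nablau\right)^2$ to $\Xi$ and recognize that, since $\Xi$ solves the $+2$ Teukolsky equation, this is exactly the operation appearing in the hierarchy \eqref{hier+} that produces a Regge--Wheeler solution: that is, $\left(\tfrac{r^2}{\Omega^2}\nablau\right)^2 r\Omega^2\cdot(\text{the }\alpha\text{ associated to }\Xi)$ would solve \eqref{RW} --- but here I want the sharper statement that tracks the lower-order terms, so instead I would directly iterate \eqref{eq:d3psibar}-type identities. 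Alternatively, and more cleanly, I would compute $\left(\tfrac{r^2}{\Omega^2}\nablau\right)^2\tfrac{\Omega^2}{r^2}\nablav\tfrac{r^2}{\Omega^2}\nablav\Psi$ by first commuting the two inner $\nablav$'s past the two outer $\nablau$'s. Each commutation of $\tfrac{r^2}{\Omega^2}\nablau$ past $\tfrac{r^2}{\Omega^2}\nablav$ costs a term proportional to $r(x+2)\Omega\slashed\nabla_4$ or $\Omega\slashed\nabla_3$ by the product rule applied to $\tfrac{r^2}{\Omega^2}$, plus the genuine commutator of the covariant derivatives, which on $S^2_{u,v}$-tensors reduces to curvature terms controlled by $\slashed\Delta$ and $1/r^2$. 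Bookkeeping these and repeatedly substituting $\nablau\nablav\Psi=\Omega^2\slashed\Delta\Psi-\tfrac{\Omega^2}{r^2}(3\Omega^2+1)\Psi$ should collapse everything into a polynomial in $\mathring{\slashed\Delta}$ (equivalently $\mathcal A_2$) acting on $\Psi$ plus one term proportional to $\slashednabla_T\Psi=(\Omega\slashed\nabla_3+\Omega\slashed\nabla_4)\Psi$, with the coefficient $-12M$ emerging from the $2M$-type constants that appear in \eqref{commutation relation} (note the $+2M(a+2k+2k')$ tail there).

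The main obstacle I anticipate is the bookkeeping: there are several commutations to perform and at each stage one must carefully track both the ``weight'' terms from differentiating $r^2/\Omega^2$ and the constants $x=1-\Omega^{-2}$ versus $\Omega^2$ versus pure numbers, and it is easy to lose a factor. A cleaner route that sidesteps most of this is to \emph{verify} \eqref{eq:191} using the already-derived identities: combine \eqref{eq:d4d4Psi} and \eqref{eq:d4Psi} --- which express $\tfrac{\Omega^2}{r^2}\nablav\tfrac{r^2}{\Omega^2}\nablav\Psi$ and $\tfrac{r^2}{\Omega^2}\nablav\Psi$ in terms of $\alpha$ and its $\nablau$-derivatives when $\Psi=\left(\tfrac{r^2}{\Omega^2}\nablau\right)^2 r\Omega^2\alpha$ --- with the $+2$ Teukolsky equation \eqref{T+2d} and the definitions \eqref{hier+}, and then observe that applying $\left(\tfrac{r^2}{\Omega^2}\nablau\right)^2$ to \eqref{eq:d4d4Psi} and reducing via \eqref{T+2} reproduces precisely $\mathcal A_2(\mathcal A_2-2)\Psi-12M\slashednabla_T\Psi$; finally, since every Regge--Wheeler solution $\Psi$ arising from smooth data is of this form for a suitable $\alpha$ (solve \eqref{alpha to Psi} backwards, i.e.\ apply \Cref{WP+2backwards}/\Cref{backwards wellposedness +2} and uniqueness), the identity extends to all solutions of \eqref{RW}. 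I would present the first, self-contained commutation argument as the main proof and relegate the consistency check to a remark, or vice versa depending on which turns out shorter in practice.
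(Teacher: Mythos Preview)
Your proposal is correct and takes essentially the same approach as the paper: the paper's proof reads in its entirety ``Straightforward computation using \cref{RW},'' which is exactly your main route of repeatedly substituting $\nablau\nablav\Psi=\Omega^2\slashed\Delta\Psi-\tfrac{\Omega^2}{r^2}(3\Omega^2+1)\Psi$ and commuting the weighted null derivatives. Your alternative consistency check via \eqref{eq:d4d4Psi} is unnecessary (and its closing step---that every solution of \eqref{RW} arises from some $\alpha$ via \eqref{hier+}---is not what \Cref{WP+2backwards} or \Cref{backwards wellposedness +2} provide), so simply present the direct computation.
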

\begin{proof}
Straightforward computation using \cref{RW}.
\end{proof}
Following the same steps as in the proof of \Cref{Phi 1 forward} we find
\begin{corollary}\label{Phi 2 forward}
For a solution $\Psi$ to \cref{RW} arising from smooth data of compact support on $\Sigma^*$, then $\bm{\upphi}^{(2)}_{\mathscr{I}^+}(u,\theta^A)$ satisfies
\begin{align}
    \bm{\upphi}^{(2)}_{\mathscr{I}^+}(u,\theta^A)=\int_{u}^\infty\int_{u_1}^\infty du_1 du_2 \left[\mathcal{A}(\mathcal{A}_2-2)-6M\partial_u\right]\bm{\uppsi}_{\mathscr{I}^+}(u_2,\theta^A).
\end{align}
\end{corollary}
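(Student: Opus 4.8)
\textbf{Proof proposal for \Cref{Phi 2 forward}.}

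The plan is to mimic the proof of \Cref{Phi 1 forward} but one order higher, using the identity \bref{eq:191} as the transport equation that expresses $\slashednabla_{n}$ of a weighted fourth-order derivative in terms of lower-order data that already have well-understood limits on $\mathscr{I}^+$. First I would rewrite \bref{eq:191} in a form adapted to integration along an ingoing null cone $\underline{\mathscr{C}}_v$: since $\left(\frac{r^2}{\Omega^2}\nablau\right)^2\Phi^{(2)} = \left[\mathcal{A}_2(\mathcal{A}_2-2)-12M\slashednabla_T\right]\Psi$ and $\slashednabla_T = \Omega^{-1}(\nablagml + \nabladlt)$, I would observe that the left-hand side is $\frac{r^2}{\Omega^2}\nablau$ applied to $\frac{r^2}{\Omega^2}\nablau\Phi^{(2)}$, and that near $\mathscr{I}^+$ the operator $\frac{r^2}{\Omega^2}\nablau \to \partial_u$ in the limit. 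The key structural point, exactly as in \Cref{Phi 1 forward}, is that \Cref{RW transverse derivatives converge} already guarantees that $\Phi^{(2)}$, $\Phi^{(3)}$ and $\Phi^{(4)}$ all have finite limits towards $\mathscr{I}^+$, and \Cref{RWradscri} guarantees $\bm{\uppsi}_{\mathscr{I}^+}$ exists, is smooth, and (via \Cref{RWdecayscri} and its higher-order analogues, commuting with $T$) decays in $u$ together with its $u$-derivatives.

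Concretely, I would integrate the transport relation along $\underline{\mathscr{C}}_v$ between $u_1$ and $u_2$ to obtain, after passing to the limit $v\to\infty$ (justified by Lebesgue dominated convergence together with the $r^p$-estimates of \Cref{RWrp,RWrp k=1,RWrp k=2} which control the relevant fluxes), an identity of the form
\begin{align}
    \partial_u\bm{\upphi}^{(3)}_{\mathscr{I}^+}(u_1,\theta^A)-\partial_u\bm{\upphi}^{(3)}_{\mathscr{I}^+}(u_2,\theta^A)=-\int_{u_1}^{u_2}d\bar{u}\left[\mathcal{A}_2(\mathcal{A}_2-2)-12M\partial_u\right]\bm{\uppsi}_{\mathscr{I}^+}(\bar{u},\theta^A),
\end{align}
where I need the two intermediate facts that $\lim_{v\to\infty}\frac{r^2}{\Omega^2}\nablau\Phi^{(3)} = \partial_u\bm{\upphi}^{(3)}_{\mathscr{I}^+}$ and that $\Phi^{(4)} = \frac{r^2}{\Omega^2}\nablau\Phi^{(3)}$ has limit $\partial_u\bm{\upphi}^{(3)}_{\mathscr{I}^+}$; these follow from \Cref{RW transverse derivatives converge} and the commutator $\left[\frac{r^2}{\Omega^2}\nablau, \cdot\right]$ identities used in \Cref{Chandra}. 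Since all the $\bm{\upphi}^{(k)}_{\mathscr{I}^+}$ are uniformly bounded and decay as $u\to\infty$ (as does $\bm{\uppsi}_{\mathscr{I}^+}$ and its $u$-derivatives), sending $u_2\to\infty$ shows integrability of the right-hand side and yields $\partial_u\bm{\upphi}^{(3)}_{\mathscr{I}^+} = \int_u^\infty \left[\mathcal{A}_2(\mathcal{A}_2-2)-12M\partial_u\right]\bm{\uppsi}_{\mathscr{I}^+}\,d\bar u$. One then has to descend from $\bm{\upphi}^{(3)}_{\mathscr{I}^+}$ to $\bm{\upphi}^{(2)}_{\mathscr{I}^+}$: using the transport equation for $\Phi^{(3)}$ in terms of $\Phi^{(2)}$ (the analogue of \bref{Phi 1 transport equation} at the appropriate order, obtained by applying $\frac{r^2}{\Omega^2}\nablau$ to the defining relation and using \bref{RW}), integrate once more in $u$ from $u$ to $\infty$, integrate by parts in the $12M\partial_u$ term, and reorganise the resulting double integral; the $\partial_u^3$ and lower terms combine so that after one integration the $12M$ coefficient halves to $6M$, producing exactly $\bm{\upphi}^{(2)}_{\mathscr{I}^+}(u,\theta^A)=\int_{u}^\infty\int_{u_1}^\infty du_1 du_2 \left[\mathcal{A}_2(\mathcal{A}_2-2)-6M\partial_u\right]\bm{\uppsi}_{\mathscr{I}^+}(u_2,\theta^A)$.

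The main obstacle I anticipate is bookkeeping rather than analysis: correctly tracking the weight functions $\frac{r^2}{\Omega^2}$ and the coefficient shifts (from $12M$ to $6M$) as one passes from the second-order identity \bref{eq:191} to the iterated-integral form, and making sure each interchange of limit and integral is licensed by an $r^p$-estimate that has actually been established (this is why \Cref{RWrp k=1,RWrp k=2}, commuted further with $T$ and $r\slashednabla$, are needed rather than just \Cref{RWrp}). The genuinely nontrivial input — that the transverse derivatives $\Phi^{(k)}$ converge at $\mathscr{I}^+$ and that one may commute the limit $v\to\infty$ with $\frac{r^2}{\Omega^2}\nablau$ — has already been isolated in \Cref{RW transverse derivatives converge} and the proof of \Cref{RWradscri}, so the argument here is a controlled iteration of those facts together with the algebraic identity \bref{eq:191}.
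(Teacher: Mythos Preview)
Your plan has a structural confusion that derails the argument. The quantities $\Phi^{(k)}=\bigl(\frac{r^2}{\Omega^2}\nablav\bigr)^k\Psi$ are built from $\nablav$-derivatives, whereas the operator on the left of \bref{eq:191} is $\bigl(\frac{r^2}{\Omega^2}\nablau\bigr)^2$ acting on $\frac{\Omega^2}{r^2}\nablav\Phi^{(1)}$. In particular your claim ``$\Phi^{(4)}=\frac{r^2}{\Omega^2}\nablau\Phi^{(3)}$'' is false (it is $\nablav$, not $\nablau$), and there is no transport relation in $u$ linking $\Phi^{(3)}$ to $\Phi^{(2)}$, so the ``descent from $\bm{\upphi}^{(3)}_{\mathscr{I}^+}$ to $\bm{\upphi}^{(2)}_{\mathscr{I}^+}$'' you propose cannot be carried out. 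The identity \bref{eq:191} never produces $\bm{\upphi}^{(3)}_{\mathscr{I}^+}$ or $\bm{\upphi}^{(4)}_{\mathscr{I}^+}$; once you pass to $\mathscr{I}^+$ its left-hand side is exactly $\partial_u^2\bm{\upphi}^{(2)}_{\mathscr{I}^+}$, because the weight $\frac{r^2}{\Omega^2}\nablau$ applied to $\frac{\Omega^2}{r^2}(\cdot)$ collapses to $\nablau+\frac{3\Omega^2-1}{r}$, and the lower-order pieces vanish as $r\to\infty$.

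The halving $12M\to 6M$ is also not an integration-by-parts artefact. It occurs already when you take the limit of the right-hand side of \bref{eq:191}: since $\nablav\Psi=\frac{\Omega^2}{r^2}\Phi^{(1)}\to 0$ at $\mathscr{I}^+$, the term $12M\slashednabla_T\Psi$ reduces to $6M\partial_u\bm{\uppsi}_{\mathscr{I}^+}$ (recall $T=\partial_t=\tfrac12(\partial_u+\partial_v)$ in these coordinates). With this understood, the proof really is the same as \Cref{Phi 1 forward} done twice: integrate \bref{eq:191} on $\underline{\mathscr{C}}_v$, pass to $v\to\infty$ to get $\partial_u^2\bm{\upphi}^{(2)}_{\mathscr{I}^+}=\bigl[\mathcal{A}_2(\mathcal{A}_2-2)-6M\partial_u\bigr]\bm{\uppsi}_{\mathscr{I}^+}$, and then integrate twice in $u$ using the decay of $\bm{\upphi}^{(2)}_{\mathscr{I}^+}$ and $\partial_u\bm{\upphi}^{(2)}_{\mathscr{I}^+}$ as $u\to\infty$ (supplied by \Cref{RW transverse derivatives converge} commuted with $T$ and the argument of \Cref{RWdecayscri}).
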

\begin{corollary}
    For a solution $\Psi$ to \cref{RW} arising from smooth data of compact support on $\Sigma^*$, then 
the radiation field $\bm{\uppsi}_{\mathscr{I}^+}$ satisfies
\begin{align}
    \int_{-\infty}^\infty du_1 \bm{\uppsi}_{\mathscr{I}^+}= \int_{-\infty}^\infty \int_{u_1}^\infty du_1 du_2 \bm{\uppsi}_{\mathscr{I}^+}=0.
\end{align}
\end{corollary}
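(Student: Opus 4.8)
The plan is to read off both identities from the integral representations of the derived radiation fields $\bm{\upphi}^{(1)}_{\mathscr{I}^+}$ and $\bm{\upphi}^{(2)}_{\mathscr{I}^+}$ obtained in \Cref{Phi 1 forward} and \Cref{Phi 2 forward}, evaluated on the portion of $\mathscr{I}^+$ on which $\Psi$ and all its transverse derivatives vanish, and then to invert the resulting elliptic operators on $S^2$. First I would record the support observation: since $(\uppsi,\uppsi')$ is compactly supported on $\Sigma^*$ and \bref{RW} propagates at finite speed, there is $R<\infty$ with the data vanishing on $\Sigma^*\cap\{r>R\}$, hence $\Psi\equiv0$ on $J^+(\Sigma^*\cap\{r>R\})$; this region contains a neighbourhood of the past end of $\mathscr{I}^+$, so there is $u_0\in\mathbb{R}$ with $\bm{\uppsi}_{\mathscr{I}^+}(u,\cdot)=\bm{\upphi}^{(1)}_{\mathscr{I}^+}(u,\cdot)=\bm{\upphi}^{(2)}_{\mathscr{I}^+}(u,\cdot)=0$ for every $u\leq u_0$.

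For the first identity I would set $u=u_0$ in \Cref{Phi 1 forward}. The left-hand side is then $0$, and since $\mathcal{A}_2-2$ acts only on $S^2$ it commutes with the $\bar u$-integration, giving $(\mathcal{A}_2-2)\int_{-\infty}^\infty\bm{\uppsi}_{\mathscr{I}^+}\,du=0$ (the integral extends to $-\infty$ because the integrand vanishes there). Now $\mathcal{A}_2-2=\mathring{\slashed{\Delta}}-4=-2\fancydstar_2\fancyd_2-2$, which is $\leq-2<0$ on smooth symmetric traceless $S^2$ $2$-tensors and hence injective; therefore $\int_{-\infty}^\infty\bm{\uppsi}_{\mathscr{I}^+}\,du=0$.

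For the second identity I would set $u=u_0$ in \Cref{Phi 2 forward}, obtaining $0=\int_{-\infty}^\infty\int_{u_1}^\infty\big[\mathcal{A}_2(\mathcal{A}_2-2)-6M\partial_{u_2}\big]\bm{\uppsi}_{\mathscr{I}^+}(u_2)\,du_2\,du_1$. The $\partial_{u_2}$-contribution is handled by the fundamental theorem of calculus together with the decay of $\bm{\uppsi}_{\mathscr{I}^+}$ at $u\to+\infty$ from \Cref{RWdecayscri}: $\int_{u_1}^\infty\partial_{u_2}\bm{\uppsi}_{\mathscr{I}^+}\,du_2=-\bm{\uppsi}_{\mathscr{I}^+}(u_1)$, so that term equals $6M\int_{-\infty}^\infty\bm{\uppsi}_{\mathscr{I}^+}(u_1)\,du_1$, which vanishes by the first identity. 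What remains is $\mathcal{A}_2(\mathcal{A}_2-2)\int_{-\infty}^\infty\int_{u_1}^\infty\bm{\uppsi}_{\mathscr{I}^+}(u_2)\,du_2\,du_1=0$. Since $\mathcal{A}_2=-2\fancydstar_2\fancyd_2<0$ (its kernel would consist of transverse traceless symmetric $2$-tensors on $S^2$, of which there are none) and $\mathcal{A}_2-2<0$, the composition $\mathcal{A}_2(\mathcal{A}_2-2)$ is positive definite, hence injective, and the iterated integral vanishes, which is the second assertion.

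The argument is essentially bookkeeping; the only points that need care are the interchanges of the $S^2$ operators with the $u$-integrations and the absolute convergence of the iterated integrals, but these are already subsumed in the convergence statements proved alongside \Cref{Phi 1 forward} and \Cref{Phi 2 forward} (namely that $(\mathcal{A}_2-2)\bm{\uppsi}_{\mathscr{I}^+}$ is integrable over $\mathscr{I}^+$ and that $\bm{\upphi}^{(2)}_{\mathscr{I}^+}$ is a well-defined element of $\Gamma(\mathscr{I}^+)$), so no new estimate is required. I do not anticipate a genuine obstacle; the most error-prone step is keeping the signs and the strict negative-definiteness of $\mathcal{A}_2$ and $\mathcal{A}_2-2$ on symmetric traceless $2$-tensors straight.
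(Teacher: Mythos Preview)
Your proposal is correct and follows exactly the approach the paper intends (the corollary is stated without proof, but the argument is the one used explicitly in the analogous computation in \Cref{-2 radiation flux on scri+}): evaluate the integral formulae of \Cref{Phi 1 forward} and \Cref{Phi 2 forward} at a point $u_0$ to the past of the support of $\bm{\uppsi}_{\mathscr{I}^+}$, then use the injectivity of $\mathcal{A}_2$ and $\mathcal{A}_2-2$ on symmetric traceless $S^2$ $2$-tensors. One cosmetic point: in the paper's conventions $\mathcal{A}_2=-2r^2\fancydstar_2\fancyd_2=-2\mathring{\fancydstar_2}\mathring{\fancyd_2}$, so your identity $\mathcal{A}_2-2=-2\fancydstar_2\fancyd_2-2$ should carry the $\mathring{\phantom{a}}$ (or the $r^2$ weight); this does not affect the argument.
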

\subsection{The forwards scattering map}\label{subsection 5.3 the forwards scattering map}
This section combines the results of \Cref{subsection 5.2 subsection Radiation fields} above to prove \Cref{forwardRW}.
\begin{proposition}\label{RWfcp}
    Solutions to (\ref{RW}) arising from smooth data on $\Sigma^*$ of compact support give rise to smooth radiation fields $\uppsi_{\mathscr{I}^+}\in\mathcal{E}_{\mathscr{I}^+}^{T}$ on $\mathscr{I}^+$ and $\uppsi_{\mathscr{H}^+}\in\mathcal{E}_{\mathscr{H}^+_{\geq0}}^{T}$ on $\mathscr{H}^+_{\geq0}$, such that 
    \begin{align}\label{818181}
   ||\bm{\uppsi}_{\mathscr{I}^+}||_{\mathcal{E}^T_{\mathscr{I}^+}}^2+||\bm{\uppsi}_{\mathscr{H}^+}||_{\mathcal{E}^T_{\mathscr{H}^+_{\geq0}}}^2=||(\Psi|_{\Sigma^*},\slashednabla_{n_{\Sigma^*}}\Psi|_{\Sigma^*}) ||_{\mathcal{E}^T_{\Sigma^*}}^2 .
\end{align}
\end{proposition}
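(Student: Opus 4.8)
The plan is to assemble the conservation law \eqref{818181} from three ingredients already established: (i) the exact $T$-energy identity $F^T_u[\Psi](v_0,v)+\underline F^T_v[\Psi](u_0,u)=\mathbb F_{\Sigma^*\cap J^-(\mathscr C_u)\cap J^-(\underline{\mathscr C}_v)}[\Psi]$ from the energy boundedness subsection; (ii) the fact, proven in \Cref{RWradscri} and \Cref{RW limit of energy at null infinity}, that the flux through $\underline{\mathscr C}_v$ of $|\Omega\slashed\nabla_3\Psi|^2+\Omega^2|\slashed\nabla\Psi|^2+V|\Psi|^2$ converges as $v\to\infty$ to $\int_{\mathscr I^+\cap\{u\in[u_0,u_1]\}}|\partial_u\bm{\uppsi}_{\mathscr I^+}|^2$; and (iii) the decay of the energy flux near $\mathscr H^+$ and of $\bm{\uppsi}_{\mathscr I^+}$ at $i^+$ from \Cref{RWdecayfixedR}, \Cref{RWdecayscri}. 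The smoothness of the radiation fields and the fact that they lie in the respective spaces is the content of \Cref{subsubsection 5.2.1 radiation on H+} and \Cref{subsubsection 5.2.2 radiation field on I+}: $\bm{\uppsi}_{\mathscr H^+}$ is just the smooth restriction of $\Psi$ to $\mathscr H^+_{\geq0}$, and $\bm{\uppsi}_{\mathscr I^+}\in\Gamma(\mathscr I^+)$ by \Cref{RWradscri}; one then checks membership in $\mathcal E^T_{\mathscr H^+_{\geq0}}$, $\mathcal E^T_{\mathscr I^+}$ by verifying finiteness of the norms, which is exactly what the limiting flux computation below gives.

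First I would foliate $J^+(\Sigma^*)$ by the null hypersurfaces $\mathscr C_u$ and $\underline{\mathscr C}_v$ and take a limiting sequence of regions $\mathscr D^{u,v}_{\Sigma^*}$ with $u\to\infty$, $v\to\infty$. Applying the exact $T$-energy identity over $J^+(\Sigma^*)\cap J^-(\mathscr C_u)\cap J^-(\underline{\mathscr C}_v)$ gives
\begin{align}
F^T_u[\Psi](v_0,v)+\underline F^T_v[\Psi](u_0,u)=\mathbb F_{\Sigma^*\cap J^-(\mathscr C_u)\cap J^-(\underline{\mathscr C}_v)}[\Psi].
\end{align}
Because the data are compactly supported, the right-hand side equals $\mathbb F_{\Sigma^*}[\Psi]=\|(\uppsi,\uppsi')\|^2_{\mathcal E^T_{\Sigma^*}}$ once $u,v$ are large enough that the support's causal future has been entirely captured. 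Next I would send $v\to\infty$ holding $u$ fixed: by \Cref{RW limit of energy at null infinity} the term $\underline F^T_v[\Psi](u_0,u)$ converges to $\int_{\mathscr I^+\cap\{\bar u\le u\}}|\partial_u\bm{\uppsi}_{\mathscr I^+}|^2\,d\bar u\,d\omega$, while $F^T_u[\Psi](v_0,v)\to F^T_u[\Psi](v_0,\infty)=\int_{\mathscr C_u}|\partial_v\Psi|^2+\Omega^2|\slashed\nabla\Psi|^2+V|\Psi|^2$ (the flux up to $\mathscr I^+$, finite by \Cref{RWrp}). Then I would send $u\to\infty$: the flux $F^T_u[\Psi](v_0,\infty)$ along the outgoing cone tends to $F^T_{\mathscr H^+}[\Psi]=\int_{\mathscr H^+_{\geq0}}|\partial_v\Psi|^2\,dv\,d\omega=\|\bm{\uppsi}_{\mathscr H^+}\|^2_{\mathcal E^T_{\mathscr H^+_{\geq0}}}$ — here I use energy conservation together with \Cref{RWdecayfixedR}/the remark following it to see that the energy escaping to $\mathscr I^+$ "beyond" $\mathscr C_u$ and the near-horizon energy "below" $u_R$ both vanish in the limit — and the integral $\int_{\mathscr I^+\cap\{\bar u\le u\}}|\partial_u\bm{\uppsi}_{\mathscr I^+}|^2$ increases to $\|\bm{\uppsi}_{\mathscr I^+}\|^2_{\mathcal E^T_{\mathscr I^+}}$, finite again by the $r^p$-hierarchy.

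The main obstacle is the double limit: one must justify that no $T$-energy is lost "at spatial infinity $i^0$" or "escapes through the corner" as $u,v\to\infty$ simultaneously, i.e. that $\lim_u\lim_v=\lim_v\lim_u$ and that the remainder fluxes through the far pieces of $\mathscr C_u$ (large $r$, using the $p=2$ $r^p$-estimate to control the tail) and through the near-horizon pieces of $\underline{\mathscr C}_v$ (using \Cref{RWdecayfixedR} and its $\slashed\nabla_T$-commuted version, which gives $\int_{v_0}^\infty \underline F_{\underline{\mathscr C}_v\cap\{r<R\}}[\Psi]<\infty$, hence the tail vanishes) tend to zero. I would handle this by fixing a large $R$, splitting every flux integral into $\{r\le R\}$ and $\{r>R\}$, controlling the $\{r>R\}$ part of $\mathscr C_u$ uniformly by $\mathbb F_{\Sigma^*}[\Psi]+\int_{\Sigma^*\cap\{r>R\}}r^2|\partial_v\Psi|^2$ via \Cref{RWrp} and making it arbitrarily small by enlarging $R$ (compact support of data on $\Sigma^*$), and controlling the $\{r<R\}$ part of $\underline{\mathscr C}_v$ by the integrated-decay-in-$v$ statement from the proof of \Cref{RWdecayfixedR}. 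Once both remainders are shown to be $o(1)$, monotone convergence delivers \eqref{818181} and simultaneously shows $\bm{\uppsi}_{\mathscr I^+}\in\mathcal E^T_{\mathscr I^+}$, $\bm{\uppsi}_{\mathscr H^+}\in\mathcal E^T_{\mathscr H^+_{\geq0}}$, completing the proof.
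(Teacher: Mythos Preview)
Your overall architecture is the same as the paper's: invoke the existence and decay of the radiation fields from \Cref{RWradscri}, \Cref{RWdecayfixedR}, \Cref{RWdecayscri}, apply the exact $T$-energy identity on a region between $\Sigma^*$ and late null cones, and show the residual flux terms vanish in the limit. The paper organises the limit slightly differently---it fixes a large $R$, takes $v_+-u_+=R^*$, and sends $u_+\to\infty$ along that single diagonal, so that the boundary consists of $\mathscr{H}^+_{\geq0}\cap\{v\le v_+\}$, $\mathscr{I}^+\cap\{u\le u_+\}$, and the two residual pieces $\underline{\mathscr{C}}_{v_+}\cap\{r\le R\}$ and $\mathscr{C}_{u_+}\cap\{r\ge R\}$---but this is cosmetic relative to your sequential $v\to\infty$, $u\to\infty$ approach.

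There is, however, a genuine gap in your treatment of the $\{r>R\}$ residual. You claim that the flux $F^T_u[\Psi]\big|_{\{r>R\}}$ is controlled by $\mathbb{F}_{\Sigma^*}[\Psi]+\int_{\Sigma^*\cap\{r>R\}}r^2|\partial_v\Psi|^2$ via \Cref{RWrp} and can be made small by enlarging $R$. This does not work: the right-hand side of the $r^p$-estimate always contains the full $\mathbb{F}_{\Sigma^*}[\Psi]$ (it enters through the averaging-in-$R$ step and the ILED bulk), and that term is fixed and does not shrink with $R$. So you obtain only a uniform bound, not smallness. The correct mechanism---and this is exactly what the paper does---is to use the $p=1$ bulk of \Cref{RWrp} \emph{commuted with $\slashed\nabla_T$}: the $p=1$ spacetime term controls $\int_{u_0}^\infty F^T_u[\Psi]\big|_{\{r>R\}}\,du$, and commuting once with $\slashed\nabla_T$ gives the same for $\slashed\nabla_T\Psi$, which together force $F^T_{u_+}[\Psi](v_+,\infty)\to0$ as $u_+\to\infty$. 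Your $\{r<R\}$ argument via \Cref{RWdecayfixedR} is fine and matches the paper's use of ILED commuted with $\slashed\nabla_T$ for $\underline{F}^T_{v_+}[\Psi](u_+,\infty)$.
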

\begin{proof}
For data of compact support, Propositions \ref{RWwpCauchy} and \ref{RWradscri} give us the existence of smooth radiation fields $\bm{\uppsi}_{\mathscr{I}^+}$ and $\bm{\uppsi}_{\mathscr{H}^+}$, and by Propositions \ref{RWdecayfixedR}, \ref{RWdecayscri}, $\bm{\uppsi}_{\mathscr{I}^+}$ decays towards $\mathscr{I}^+_+$ and $\bm{\uppsi}_{\mathscr{H}^+}$ decays towards $\mathscr{H}^+$. Let $R$ be sufficiently large and let $v_+,u_+$ be such that $v_+-u_+=R^*$, $v_++u_+>0$. A $T$-energy estimate on the region bounded by $\Sigma^*$, $\mathscr{H}^+_{\geq0}\cap\{v\leq v_+\}$, $\mathscr{I}^+\cap\{u\leq u_+\}$ and $\mathscr{C}_{u_+}\cap\{r\geq R\}, \underline{\mathscr{C}}_{v_+}\cap\{r\leq R\}$ gives
\begin{align}
   \underline{F}^T_{v_+}[\Psi](u_+,\infty)+F^T_{u_+}[\Psi](v_+,\infty)+ \int_{\mathscr{H}^+_{\geq0}\cap \{v\leq v_+\}}dvd\omega\;|\partial_v\Psi|^2+\int_{\mathscr{I}^+\cap\{u\leq u_+\}}dud\omega\;|\partial_u\Psi|^2=||\Psi||_{\mathcal{E}^T_{\Sigma^*}}^2.
\end{align}
The integrated local energy decay statement of \Cref{RWILED} commuted with $\slashednabla_T$, along with the estimate \bref{RW rp estimate} of \Cref{RWrp} for $p=1$ commuted with $\slashednabla_T$, imply that $\underline{F}^T_{v_+}[\Psi](u_+,\infty)+F^T_{u_+}[\Psi](v_+,\infty)$ decay as $u_+\longrightarrow\infty$. This gives us that $\bm{\uppsi}_{\mathscr{I}^+}\in\mathcal{E}^T_{\fscri}$ and $\bm{\uppsi}_{\mathscr{H}^+}\in\mathcal{E}^T_{\mathscr{H}^+_{\geq0}}$ and that $\bm{\uppsi}_{\mathscr{I}^+}, \bm{\uppsi}_{\mathscr{H}^+}$ satisfy \bref{818181}.
\end{proof}
\begin{corollary}\label{RWfcpSigma}
    Solutions to (\ref{RW}) arising from data on ${\Sigma}$ of compact support give rise to smooth radiation fields in $\mathcal{E}_{\mathscr{I}^+}^{T}$ and $\mathcal{E}_{{\mathscr{H}^+}}^{T}$.  Solutions to (\ref{RW}) arising from data on $\overline{\Sigma}$ of compact support give rise to smooth radiation fields in $\mathcal{E}_{\mathscr{I}^+}^{T}$ and $\mathcal{E}_{\overline{\mathscr{H}^+}}^{T}$
\end{corollary}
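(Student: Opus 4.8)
The statement \Cref{RWfcpSigma} is essentially an extension of \Cref{RWfcp} from $\Sigma^*$ to the Cauchy surfaces $\Sigma$ and $\overline{\Sigma}$, so the plan is to reduce to the already-established $\Sigma^*$ case by a domain-decomposition argument exploiting finite speed of propagation and energy conservation. The key observation is that a solution $\Psi$ arising from compactly supported data on $\Sigma$ (or $\overline\Sigma$) restricts, on the causal future of $\Sigma^*$, to a solution arising from smooth compactly supported data on $\Sigma^*$ (the restriction is compactly supported because the data on $\Sigma$ is, and by finite speed of propagation combined with the fact that $\Sigma^*$ is itself a Cauchy surface lying to the future of a compact subset of $J^+(\Sigma)$ near $\mathscr{I}^+$ and to the future of $\mathcal{B}$; one only needs that $\Sigma^* \cap J^+(\mathrm{supp}\,\text{data})$ is compact, which holds). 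Therefore \Cref{RWfcp} already gives smooth radiation fields $\bm\uppsi_{\mathscr{I}^+} \in \mathcal{E}^T_{\mathscr{I}^+}$ and $\bm\uppsi_{\mathscr{H}^+} \in \mathcal{E}^T_{\mathscr{H}^+_{\geq 0}}$ on the portions $\mathscr{H}^+_{\geq 0}$ and $\mathscr{I}^+ \cap \{u \geq u_0\}$ to the future of $\Sigma^*$.

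What remains is to control the solution on the remaining compact-in-$r$ "triangle" between $\Sigma$ (or $\overline\Sigma$) and $\Sigma^*$, which meets $\mathscr{H}^+$ along $\mathscr{H}^+ \cap \{v \leq 0\}$ (respectively $\overline{\mathscr{H}^+} \cap \{v \leq 0\}$) and meets $\mathscr{I}^+$ not at all for large negative $u$ — in fact the only new piece of asymptotic boundary is a compact portion of $\mathscr{H}^+$ near $\mathcal{B}$. First I would write down the $T$-energy identity \bref{T derivative identity} integrated over the spacetime region bounded by $\Sigma$ (or $\overline\Sigma$), $\Sigma^*$, the segment of $\mathscr{H}^+$ between them, and an outgoing null cone $\mathscr{C}_{u_0}$ chosen so that the data support lies entirely to its past in the far region; for $\overline\Sigma$ one uses that the Regge--Wheeler equation is regular at $\mathcal{B}$ (as noted just before \Cref{RWwpSigmabar}), so this energy identity holds up to and including the bifurcation sphere with no degeneration. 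This yields
\begin{align}
    \|(\Psi|_{\Sigma^*},\slashednabla_{n_{\Sigma^*}}\Psi|_{\Sigma^*})\|^2_{\mathcal{E}^T_{\Sigma^*}} + \int_{\overline{\mathscr{H}^+}\cap\{v\leq 0\}} |\partial_v\Psi|^2\, dv\, d\omega = \|(\uppsi,\uppsi')\|^2_{\mathcal{E}^T_{\overline\Sigma}},
\end{align}
where the boundary term on $\mathscr{C}_{u_0}$ vanishes by the support assumption, and the smoothness of $\Psi$ up to $\overline{\mathscr{H}^+}$ (including $\mathcal{B}$) follows from \Cref{RWwpSigmabar} and the regularity of the equation there. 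Combining this with \bref{818181} applied to the $\Sigma^*$-restriction gives
\begin{align}
    \|(\uppsi,\uppsi')\|^2_{\mathcal{E}^T_{\overline\Sigma}} = \|\bm\uppsi_{\mathscr{I}^+}\|^2_{\mathcal{E}^T_{\mathscr{I}^+}} + \|\bm\uppsi_{\mathscr{H}^+}\|^2_{\mathcal{E}^T_{\overline{\mathscr{H}^+}}},
\end{align}
which is exactly the claim; the case of $\Sigma$ is identical except that one works away from $\mathcal{B}$ and lands in $\mathcal{E}^T_{\mathscr{H}^+}$.

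The one genuine subtlety — the main obstacle — is confirming that the radiation field on the full event horizon $\overline{\mathscr{H}^+}$ (respectively $\mathscr{H}^+$) lies in the correct space, i.e. that $\bm\uppsi_{\mathscr{H}^+} = \Psi|_{\overline{\mathscr{H}^+}}$ is smooth and that $\partial_v\Psi \in L^2$ down to $\mathcal{B}$. Smoothness up to $\mathcal{B}$ is supplied by the well-posedness statement \Cref{RWwpSigmabar} together with the Kruskal-coordinate form of \bref{RW}, which has no degeneration at $\mathcal{B}$; the $L^2$ bound on $\partial_v\Psi$ over the compact segment $\overline{\mathscr{H}^+}\cap\{v\leq 0\}$ is then immediate from this smoothness and compactness, while over $\mathscr{H}^+_{\geq 0}$ it is already provided by the $\Sigma^*$ analysis. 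One should also remark (invoking the remark after \Cref{RW future rad field scri}) that compactly supported data on $\overline\Sigma$ automatically produces compactly supported data on $\Sigma^*$, which is what licenses the reduction to \Cref{RWfcp} in the first place. With these regularity points in hand the argument is purely a matter of bookkeeping with the conserved $T$-flux, so I would present it as a short corollary rather than a standalone proposition.
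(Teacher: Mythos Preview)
Your proposal is correct and follows essentially the same approach as the paper: reduce to the already-proved $\Sigma^*$ case by handling the intermediate region between $\Sigma$ (or $\overline\Sigma$) and $\Sigma^*$ locally, then invoke $T$-energy conservation to account for the extra horizon segment. The paper's own proof is a two-sentence sketch (``The evolution of $\Psi$ on $J^+(\Sigma^*)\cap J^-(\Sigma)$ can be handled locally. A $T$-energy estimate on $J^+(\Sigma)\cap J^-(\Sigma^*)$ gives the result''), and your write-up simply unpacks those sentences---including the regularity at $\mathcal{B}$ via \Cref{RWwpSigmabar} and the observation (stated in the paper as the remark following \Cref{RW future rad field scri}) that compactly supported data on $\overline\Sigma$ induces compactly supported data on $\Sigma^*$.
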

\begin{proof}
The evolution of $\Psi$ on $ J^+({\Sigma^*})\cap J^-(\Sigma)$ can be handled locally. A $T$-energy estimate on $ J^+({\Sigma})\cap J^-(\Sigma^*)$ gives the result. An identical statement applies to $\overline{\Sigma}$.
\end{proof}
\Cref{RWfcp,,RWfcpSigma} allow us to define the forwards maps $\mathscr{F}^+$ from dense subspaces of $\mathcal{E}^{T}_{\Sigma^*}$, $\mathcal{E}^{T}_{\Sigma}$, $\mathcal{E}^{T}_{\overline{\Sigma}}$.
\begin{defin}
    Let $(\uppsi,\uppsi')$ be a smooth data set to the Regge--Wheeler equation \bref{RW} on $\Sigma^*$ as in \Cref{RWwpCauchy}. Define the map $\mathscr{F}^+$ by 
    \begin{align}
        \mathscr{F}^+:\Gamma_c(\Sigma^*)\times\Gamma_c(\Sigma^*)\longrightarrow \Gamma(\mathscr{H}^+_{\geq0})\times\Gamma(\mathscr{I}^+), (\uppsi,\uppsi')\longrightarrow (\uppsi_{\mathscr{H}^+},\uppsi_{\mathscr{I}^+}),
    \end{align}
    where $(\uppsi_{\mathscr{H}^+},\uppsi_{\mathscr{I}^+})$ are as in the proof of \Cref{RWfcp}.\\
    The map $\mathscr{F}^+$ is defined analogously for data on $\Sigma, \overline{\Sigma}$:
    \begin{align}
        \mathscr{F}^+:\Gamma_c(\Sigma)\times\Gamma_c(\Sigma)\longrightarrow \Gamma(\mathscr{H}^+)\times\Gamma(\mathscr{I}^+), (\uppsi,\uppsi')\longrightarrow (\uppsi_{\mathscr{H}^+},\uppsi_{\mathscr{I}^+}),\\
        \mathscr{F}^+:\Gamma_c(\overline{\Sigma})\times\Gamma_c(\overline{\Sigma})\longrightarrow \Gamma(\overline{\mathscr{H}^+})\times\Gamma(\mathscr{I}^+), (\uppsi,\uppsi')\longrightarrow (\uppsi_{\mathscr{H}^+},\uppsi_{\mathscr{I}^+}).
    \end{align}
\end{defin}
The map $\mathscr{F}^+$ uniquely extends to the forward scattering map of \Cref{RWforwardmap}:
\begin{corollary} \label{RWforwardmap}
The map defined by the forward evolution of data in $\Gamma_c(\Sigma^*)\times\Gamma_c(\Sigma^*)$ as in \Cref{RWfcp} uniquely extends to a map
\begin{align}
        \mathscr{F}^{+}: \mathcal{E}^{T}_{\Sigma^*} \longrightarrow \mathcal{E}_{\mathscr{H}^+_{\geq0}}^{T}\oplus \mathcal{E}_{\mathscr{I}^+}^{T},
\end{align}
which is bounded:
\begin{align}
    ||(\uppsi,\uppsi')||_{\mathcal{E}^{T}_{\Sigma^*}}^2=||\bm{\uppsi}_{\mathscr{H}^+}||_{\mathcal{E}^{T}_{\mathscr{H}^+_{\geq0}}}^2+||\bm{\uppsi}_{\mathscr{I}^+}||_{\mathcal{E}^{T}_{\mathscr{I}^+}}^2 .
\end{align}
We similarly obtain bounded maps
\begin{align}
    \mathscr{F}^{+}: \mathcal{E}^{T}_{\Sigma} \longrightarrow \mathcal{E}_{\mathscr{H}^+}^{T}\oplus \mathcal{E}_{\mathscr{I}^+}^{T},\\
     \mathscr{F}^{+}: \mathcal{E}^{T}_{\overline{\Sigma}} \longrightarrow \mathcal{E}_{\overline{\mathscr{H}^+}}^{T}\oplus \mathcal{E}_{\mathscr{I}^+}^{T}.
\end{align}
The map $\mathscr{F}^+$ is injective on $\Gamma_c(\Sigma^*)\times\Gamma_c(\Sigma^*)$ and therefore extends to a unitary Hilbert-space isomorphism on its image.
\end{corollary}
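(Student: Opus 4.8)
The plan is to derive the present corollary as a routine consequence of the isometry already proven in \Cref{RWfcp} (and \Cref{RWfcpSigma}), combined with the standard extension principle for bounded linear operators. Recall that \Cref{RWfcp} asserts precisely that forward evolution sends a smooth compactly supported pair $(\uppsi,\uppsi')\in\Gamma_c(\Sigma^*)\times\Gamma_c(\Sigma^*)$ to a pair of radiation fields $(\bm{\uppsi}_{\mathscr{H}^+},\bm{\uppsi}_{\mathscr{I}^+})\in\mathcal{E}^T_{\mathscr{H}^+_{\geq0}}\oplus\mathcal{E}^T_{\mathscr{I}^+}$ with $\|(\uppsi,\uppsi')\|_{\mathcal{E}^T_{\Sigma^*}}^2=\|\bm{\uppsi}_{\mathscr{H}^+}\|_{\mathcal{E}^T_{\mathscr{H}^+_{\geq0}}}^2+\|\bm{\uppsi}_{\mathscr{I}^+}\|_{\mathcal{E}^T_{\mathscr{I}^+}}^2$. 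Thus $\mathscr{F}^+$ is a linear isometry from the subspace $\Gamma_c(\Sigma^*)\times\Gamma_c(\Sigma^*)$ into the Hilbert space $\mathcal{E}^T_{\mathscr{H}^+_{\geq0}}\oplus\mathcal{E}^T_{\mathscr{I}^+}$, and this subspace is dense in $\mathcal{E}^T_{\Sigma^*}$ by the very definition of the latter as its completion (\Cref{RWscatteringsigma}).

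Next I would invoke the fact that a bounded linear operator on a dense subspace of a normed space, taking values in a complete space, admits a unique bounded linear extension to the whole space with the same operator norm; the extension is defined by evaluation on Cauchy sequences and is independent of the chosen sequence precisely because the operator is bounded and linear. Applied to $\mathscr{F}^+$ this produces a unique bounded extension $\mathscr{F}^+\colon\mathcal{E}^T_{\Sigma^*}\to\mathcal{E}^T_{\mathscr{H}^+_{\geq0}}\oplus\mathcal{E}^T_{\mathscr{I}^+}$, which remains an isometry: given $x\in\mathcal{E}^T_{\Sigma^*}$ and $(\uppsi_n,\uppsi_n')\to x$ with $(\uppsi_n,\uppsi_n')\in\Gamma_c(\Sigma^*)\times\Gamma_c(\Sigma^*)$, passing to the limit in the identity above yields $\|x\|_{\mathcal{E}^T_{\Sigma^*}}^2=\|\mathscr{F}^+x\|^2$. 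The identical argument fed by \Cref{RWfcpSigma} gives the analogous bounded isometric maps on $\mathcal{E}^T_\Sigma$ and $\mathcal{E}^T_{\overline\Sigma}$ into $\mathcal{E}^T_{\mathscr{H}^+}\oplus\mathcal{E}^T_{\mathscr{I}^+}$ and $\mathcal{E}^T_{\overline{\mathscr{H}^+}}\oplus\mathcal{E}^T_{\mathscr{I}^+}$.

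For the last assertion, an isometry of Hilbert spaces is automatically injective, so $\mathscr{F}^+$ is injective on all of $\mathcal{E}^T_{\Sigma^*}$, and in particular on $\Gamma_c(\Sigma^*)\times\Gamma_c(\Sigma^*)$, where by \Cref{RW enough to be in space} the quantity $\|\cdot\|_{\mathcal{E}^T_{\Sigma^*}}$ is moreover a genuine norm, so that $\mathscr{F}^+(\uppsi,\uppsi')=0$ forces $(\uppsi,\uppsi')=(0,0)$ already at the level of smooth representatives. Finally, being an isometry, $\mathscr{F}^+$ carries the complete space $\mathcal{E}^T_{\Sigma^*}$ onto a closed subspace of $\mathcal{E}^T_{\mathscr{H}^+_{\geq0}}\oplus\mathcal{E}^T_{\mathscr{I}^+}$; equipping that subspace with the inherited inner product, $\mathscr{F}^+$ is by construction a surjective isometry onto it, i.e.\ a unitary Hilbert-space isomorphism onto its image. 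I would stress explicitly that this does \emph{not} assert that the image is the full direct sum: surjectivity of $\mathscr{F}^+$ onto $\mathcal{E}^T_{\mathscr{H}^+_{\geq0}}\oplus\mathcal{E}^T_{\mathscr{I}^+}$ — i.e.\ asymptotic completeness — is a separate matter, established via the backward scattering construction of \Cref{backwardRW}.

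In this case there is no serious analytic obstacle remaining: the genuine work — the characteristic $T$-energy identity on the region bounded by $\Sigma^*$, $\mathscr{H}^+_{\geq0}$, $\mathscr{I}^+$ and the truncating cones $\mathscr{C}_{u_+}$, $\underline{\mathscr{C}}_{v_+}$, together with the use of the integrated local energy decay of \Cref{RWILED} and the $r^p$-hierarchy of \Cref{RWrp} to force the fluxes on the truncating cones to vanish in the limit — is already carried out in \Cref{RWfcp}. The only points that demand care here are bookkeeping ones: verifying density directly from the definition of the completion, checking that the limit defining the extension is independent of the approximating sequence, and resisting the temptation to overstate the conclusion by claiming full surjectivity, which is deferred to the backward-scattering argument.
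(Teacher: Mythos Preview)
Your proposal is correct and matches the paper's approach: the paper treats \Cref{RWforwardmap} as an immediate corollary of \Cref{RWfcp} and \Cref{RWfcpSigma} without giving a separate proof, and your argument spells out precisely the standard density-plus-bounded-extension reasoning that the paper leaves implicit. Your explicit caution that surjectivity onto the full target is \emph{not} yet claimed here, and is deferred to \Cref{backwardRW}, is exactly right and mirrors the paper's logical structure.
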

\subsection{The backwards scattering map}\label{subsubsection 5.4 the backwards scattering map}
This section contains the proof of \Cref{backwardRW,,RW isomorphisms}. We define backwards evolution from data on the event horizon and null infinity in \Cref{RWbackwardsexistence}, and this defines the map $\mathscr{B}^-$ which inverts $\mathscr{F}^+$. \Cref{RW isomorphisms} follows immediately by \Cref{time inversion of RW}.\\
\indent We begin by constructing a solution to the equation on $ J^-(\mathscr{I}^+\cup\mathscr{H}^+_{\geq0})$ out of compactly supported future scattering data.
\begin{proposition}\label{RWbackwardsexistence}
Let $\bm{\uppsi}_{\mathscr{H}^+}\in\Gamma_c(\mathscr{H}^+_{\geq0})$ be supported on $v<v_+<\infty$ such that $\|\bm{\uppsi}_{\mathscr{H}^+}\|_{\mathcal{E}^T_{\mathscr{H}^+_{\geq0}}}<\infty$, $\bm{\uppsi}_{\mathscr{I}^+}\in\Gamma_c(\mathscr{I}^+)$ be supported on $u<u_+<\infty$ such that $\|\bm{\uppsi}_{\mathscr{I}^+}\|_{\mathcal{E}^T_{\mathscr{I}^+}}<\infty$. Then there exists a unique smooth $\Psi$ defined on $ J^+(\Sigma^*)$ that satisfies \cref{RW} and realises $\bm{\uppsi}_{\mathscr{I}^+}$, $\bm{\uppsi}_{\mathscr{H}^+}$ as its radiation fields. Moreover, $(\Psi|_{\Sigma^*},\slashednabla_{n_{\Sigma}}\Psi|_{\Sigma^*})\in \mathcal{E}^T_{\Sigma}$.
\end{proposition}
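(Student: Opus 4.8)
\textbf{Proof proposal for Proposition \ref{RWbackwardsexistence}.}

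The plan is to construct the backwards solution by an exhaustion argument, solving a sequence of mixed characteristic-initial value problems in the "finite" regions truncated away from $\mathcal{B}$ and from large $r$, and then passing to the limit using the conservation of the $T$-energy to obtain uniform control. First I would fix a large radius $R$ and the corresponding $u_+,v_+$ as in the statement, and introduce an exhausting family of spacelike hypersurfaces $\widetilde{\Sigma}_n$ (of the type appearing in \Cref{RWwpBackwards}) together with truncated pieces of $\overline{\mathscr{H}^+}$ and of the ingoing cones $\underline{\mathscr{C}}_{v_n}$. Since $\bm{\uppsi}_{\mathscr{H}^+},\bm{\uppsi}_{\mathscr{I}^+}$ are compactly supported, for $n$ large the data are supported away from $\widetilde{\Sigma}_n$ and from the ingoing cone at large $v$, so \Cref{RWwpBackwards} applies to produce a unique smooth solution $\Psi_n$ on $D^-(\overline{\mathscr{H}^+}\cup\widetilde{\Sigma}_n\cup\underline{\mathscr{C}}_{v_n})\cap J^+(\overline\Sigma)$ with the prescribed data on the horizon and on the (compactly supported part of the) ingoing cone truncating $\mathscr{I}^+$, and vanishing Cauchy data on $\widetilde{\Sigma}_n$. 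On $\underline{\mathscr{C}}_{v_n}$ one prescribes the characteristic data obtained by restricting the putative radiation field — concretely one uses \Cref{RW transverse derivatives converge}/\Cref{Phi 1 forward}, running the transport equation \bref{Phi 1 transport equation} backwards from $\mathscr{I}^+$, to read off $\Psi$ and its transverse derivatives near $\mathscr{I}^+$, and these are consistent and compactly supported for $n$ large.

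Next I would establish uniform bounds. The key point is that the $T$-energy identity run on the truncated region gives
\begin{align}
\|(\Psi_n|_{\Sigma^*},\slashednabla_{n_{\Sigma^*}}\Psi_n|_{\Sigma^*})\|^2_{\mathcal{E}^T_{\Sigma^*}}\leq \|\bm{\uppsi}_{\mathscr{H}^+}\|^2_{\mathcal{E}^T_{\mathscr{H}^+_{\geq0}}}+\|\bm{\uppsi}_{\mathscr{I}^+}\|^2_{\mathcal{E}^T_{\mathscr{I}^+}}+o(1)
\end{align}
as $n\to\infty$, where the error collects the flux through $\widetilde{\Sigma}_n$ (which vanishes since the Cauchy data there are zero) and the flux through the corner cone $\underline{\mathscr{C}}_{v_n}$ which tends to zero because it equals a tail of the $r^p$-controlled flux of the limiting radiation field near $\mathscr{I}^+$ (via \Cref{RWrp} applied to the backwards-constructed field near null infinity, which can be controlled since there the equation is essentially a transport problem with good weights). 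The higher-order redshift and $r^p$ estimates of \Cref{RWredshift}, \Cref{RWILED}, \Cref{RWrp} — which are time-symmetric in the relevant region away from $\mathcal{B}$ — then upgrade this to uniform bounds for all derivatives of $\Psi_n$ on compact subsets of $J^+(\Sigma^*)$. A diagonal/Arzel\`a--Ascoli argument extracts a smooth limit $\Psi$ solving \bref{RW} on $J^+(\Sigma^*)$, with $(\Psi|_{\Sigma^*},\slashednabla_{n_{\Sigma^*}}\Psi|_{\Sigma^*})\in\mathcal{E}^T_{\Sigma^*}$ by lower semicontinuity of the norm.

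Then I would verify that $\Psi$ actually realises the prescribed radiation fields: on $\mathscr{H}^+_{\geq0}$ this is immediate since $\Psi_n=\bm{\uppsi}_{\mathscr{H}^+}$ there for all $n$; towards $\mathscr{I}^+$ one uses the $r^p$-hierarchy (the same estimates used in \Cref{RWradscri}) to show $\lim_{v\to\infty}\Psi(u,v,\theta^A)$ exists and coincides with $\bm{\uppsi}_{\mathscr{I}^+}$, the match being forced by the consistent characteristic data imposed on the cones $\underline{\mathscr{C}}_{v_n}$. Uniqueness follows from the forward energy identity: two solutions with the same radiation data and both lying in $\mathcal{E}^T_{\Sigma^*}$ have a difference with vanishing $T$-flux through $\mathscr{H}^+_{\geq0}$ and $\mathscr{I}^+$, hence vanishing $\mathcal{E}^T_{\Sigma^*}$ norm, hence zero.

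I expect the main obstacle to be the control of the flux through the corner ingoing cone $\underline{\mathscr{C}}_{v_n}$ in the large-$r$ region: one must show this vanishes in the limit, which requires a backwards $r^p$-type estimate valid in a neighborhood of $\mathscr{I}^+$. Near $\mathscr{I}^+$ the equation \bref{RW} degenerates favorably — $\Omega^2\to1$ and the equation is close to the flat-space wave equation — so one can treat \bref{Phi 1 transport equation} and its higher analogues as transport equations integrated inward from $\mathscr{I}^+$ with the correct $r$-weights; the compact support of $\bm{\uppsi}_{\mathscr{I}^+}$ in $u$ then guarantees the relevant weighted fluxes are finite and their tails in $v$ decay. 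Making this precise, together with checking that the characteristic data on $\underline{\mathscr{C}}_{v_n}$ obtained from \Cref{RW transverse derivatives converge} are genuinely compactly supported for large $n$, is the technical heart of the argument; the rest is a standard compactness-plus-energy-identity package.
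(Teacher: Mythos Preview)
Your overall architecture --- solve a sequence of finite backwards problems and pass to the limit via compactness, then close with a $T$-energy estimate --- matches the paper. But two key technical choices diverge from the paper, and one of them is a genuine gap.

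First, the data prescription on $\underline{\mathscr{C}}_{v_n}$. The paper does something much simpler than you propose: it puts $\bm{\uppsi}_{\mathscr{I}^+}$ \emph{itself} as characteristic data on the finite ingoing cone $\underline{\mathscr{C}}_{v^\infty}$ (keeping $\widetilde{\Sigma}$ fixed and only pushing $v^\infty\to\infty$). Your plan to manufacture data on $\underline{\mathscr{C}}_{v_n}$ by ``running the transport equation \bref{Phi 1 transport equation} backwards from $\mathscr{I}^+$'' is circular: those transport equations presuppose a solution $\Psi$ near $\mathscr{I}^+$, which is exactly what you are trying to build. The paper's choice avoids this entirely and makes the limit-identification step (ptwise infinity) a direct Gr\"onwall estimate comparing $\Psi$ on $\underline{\mathscr{C}}_v$ to $\bm{\uppsi}_{\mathscr{I}^+}$.

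Second, and more seriously, you claim the estimates of \Cref{RWredshift}, \Cref{RWILED}, \Cref{RWrp} are ``time-symmetric in the relevant region away from $\mathcal{B}$'' and can supply uniform higher-order bounds backwards. They are not. The redshift estimate fails in the backwards direction (blueshift), and the $r^p$ hierarchy as stated is a forward-in-$u$ estimate. The paper does \emph{not} invoke those propositions for the backwards solutions; instead it derives bespoke backwards estimates by multiplying \bref{RW} by $\frac{r^2}{\Omega^2}\nablav\Psi$ near $\mathscr{I}^+$ and by $\frac{1}{\Omega^2}\nablau\Psi$ near $\mathscr{H}^+$, then applies Gr\"onwall. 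Near $\mathscr{H}^+$ this yields an estimate that grows like $e^{(v_+-v)/2M}$ (see \bref{RW exponential backwards near H+}) --- not uniform in $v$, but uniform in $n$ because the data are compactly supported. Near $\mathscr{I}^+$ the Gr\"onwall output is the bound \bref{this}, uniform in $v^\infty$. These two estimates, together with energy conservation in the bounded-$r$ region, are what give equicontinuity for Arzel\`a--Ascoli; higher derivatives come from commuting with $\slashednabla_T,\slashednabla_{\Omega_i}$ and using the equation, not from the forward redshift machinery. Your proposal is missing this backwards-Gr\"onwall ingredient, and without it the uniform-in-$n$ control you assert does not follow. The final $T$-energy bound \bref{subunitarity of B-} is applied only \emph{after} the limit $\Psi$ exists and is known to realise the radiation fields; it is not the source of the compactness.
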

\begin{proof}
Assume $\bm{\uppsi}_{\mathscr{H}^+}$ is supported on $\{(v,\theta^A), v\in[v_-,v_+]\}\subset\mathscr{H}^+_{\geq0}$ and $\bm{\uppsi}_{\mathscr{I}^+}$ is supported on $[u_-,u_+]$, with $-\infty<u_-,u_+,v_-,v_+<\infty$. Let $\widetilde{\Sigma}$ be a spacelike surface connecting $\mathscr{H}^+$ at a finite $v_*>v_+$ to $\mathscr{I}^+$ at a finite $u_*>u_+$. Fix $\mathcal{R}_{\mathscr{I}^+}>3M$ and let $v^\infty$ be sufficiently large so that $\underline{\mathscr{C}}_{v^\infty}\cap [u_-,u_+]\subset J^+(\Sigma^*)$ and  $r(u,v^\infty)>\mathcal{R}_{\mathscr{I}^+}$ for $u\in[u_-,u_+]$. Denote by $\mathscr{D}$ the region bounded by $\mathscr{H}^+_{\geq 0}\cap\{v\in[v_-,v_*]\}$, $\widetilde{\Sigma}$,$
\underline{\mathscr{C}}_{v^\infty}$, $\Sigma^*$ and $\mathscr{C}_{u_-}$.
%   Local approximating problem
We can find $\Psi$ that solves the "finite" backwards problem for \bref{RW} in $\mathscr{D}$ with the following data: 
\begin{itemize}
    \item $\bm{\uppsi}_{\mathscr{H}^+}$ on $\mathscr{H}^+\cap\{v\in[v_-,v_+]\}$,
    \item $(0,0)$ on $\widetilde{\Sigma}$,
    \item $\bm{\uppsi}_{\mathscr{I}^+}$ on $\underline{\mathscr{C}}_{v^\infty}$.
\end{itemize}
\noindent From \bref{RW} we derive 
\begin{align}\label{RW first transverse derivative in the 3 direction}
    \nablau\left[\frac{r^2}{\Omega^2}|\nablav\Psi|^2\right]+\frac{3\Omega^2-1}{r}\frac{r^2}{\Omega^2}|\nablav\Psi|^2=-\nablav\left[|\mathring{\slashednabla}\Psi|^2+(3\Omega^2+1)|\Psi|^2\right]+\frac{6M\Omega^2}{r^2}|\Psi|^2.
\end{align}
Let $\tilde{v}<v^\infty$ be large enough that $r(u,\tilde{v})>\mathcal{R}_{\mathscr{I}^+}$ for $u\in[u_-,u_+]$. For $\tilde{v}\leq v<v^\infty$ integrate \bref{RW first transverse derivative in the 3 direction} in the region $\mathscr{D}_{v}=\mathscr{D}\cap J^+(\underline{\mathscr{C}}_v)$ with measure $dudvd\omega$ to derive
\begin{align}
\begin{split}
    \int_{\mathscr{C}_u\cap[v,v^\infty]}d\bar{v}d\omega\frac{r^2}{\Omega^2}|\nablav\Psi|^2\leq &\int_{u}^{u_+}d\bar{u}\int_{\mathscr{C}_{\bar{u}}[v,v^\infty]}d\bar{v}d\omega\frac{2\Omega^2}{r}\frac{r^2}{\Omega^2}|\nablav\Psi|^2\\&+\|\Psi\|_{\mathcal{E}^T_{\mathscr{I}^+}}^2+\|\Psi\|_{\mathcal{E}^T_{\mathscr{H}^+}}^2+\int_{u_-}^{u_+}d\bar{u}\int_{S^2}d\omega|\mathring{\slashednabla}\bm{\uppsi}_{\mathscr{I}^+}|^2_{S^2}+4|\bm{\uppsi}_{\mathscr{I}^+}|^2_{S^2}.
\end{split}
\end{align}
Applying Gr\"onwall's inequality to the above gives
\begin{align}\label{this}
    \int_{\mathscr{C}_u\cap[v,v^\infty]}d\bar{v}d\omega\; r^2|\nablav\Psi|^2\leq\frac{r(u,v)^2}{r(u_+,v)^2}\left[\|\Psi\|_{\mathcal{E}^T_{\mathscr{I}^+}}^2+\|\Psi\|_{\mathcal{E}^T_{\mathscr{H}^+}}^2+\int_{[u_-,u_+]\times S^2}d\bar{u}d\omega\;|\mathring{\slashednabla}\bm{\uppsi}_{\mathscr{I}^+}|_{S^2}^2+4|\bm{\uppsi}_{\mathscr{I}^+}|_{S^2}^2\right].
\end{align}
Using \bref{this} we can modify the argument of \Cref{RWradscri} to conclude that for $v>\tilde{v}$
\begin{align}\label{ptwise infinity}
\begin{split}
    \left|\Psi|_{(u,v)}-\bm{\uppsi}_{\mathscr{I}^+}\right|\;\lesssim_{M,u_-,\mathcal{R}_{\mathcal{I}^+}}  \frac{1}{v}\Bigg[\sum_{|\gamma|\leq2}\int_{[u_-,u_+]\times S^2}d\bar{u}d\omega\;&\left[|\slashed{\mathcal{L}}_{\Omega_i}^\gamma\bm{\uppsi}_{\mathscr{I}^+}|_{S^2}^2+|\mathring{\slashednabla}\slashed{\mathcal{L}}^\gamma_{\Omega_i}\bm{\uppsi}_{\mathscr{I}^+}|_{S^2}^2+|\slashed{\mathcal{L}}^\gamma_{\Omega_i}\partial_u\bm{\uppsi}_{\mathscr{I}^+}|_{S^2}^2\right]\\&+\|\slashed{\mathcal{L}}^\gamma_{\Omega_i}\Psi\|_{\mathcal{E}^T_{\mathscr{H}^+}}^2\Bigg].
\end{split}
\end{align}
Analogously, let $\tilde{u}$ be such that $\mathcal{R}_{\mathscr{H}^+}<r(\tilde{u},v)<3M$ for $v\in[v_-,v_+]$, where $\mathcal{R}_{\mathscr{H}^+}<3M$ is fixed. We can multiply the equation by $\frac{1}{\Omega^2}\nablau\Psi$ and integrate by parts over a region $\mathscr{D}_{u}=\mathscr{D}\cap J^+(\mathscr{C}_u)$ to get
\begin{align}
\begin{split}
    \int_{\underline{\mathscr{C}}_v\cap[u,\infty]}dud\omega&\frac{1}{\Omega^2}|\nablau\Psi|^2+\int_{\mathscr{C}_u\cap[v,v_+]}dvd\omega\left[\frac{1}{r^2}|\slashednabla\Psi|^2+\frac{1}{r^2}|\Psi|^2\right]+\int_{\mathscr{D}_{u}}\Omega^2dudv\left[|\mathring{\slashednabla}\Psi|^2+|\Psi|^2\right]\\&\lesssim \int_{\mathscr{H}^+\cap[v,v_+]}dvd\omega\; \left[|\mathring{\slashednabla}\bm{\uppsi}_{\mathscr{H}^+}|^2+|\bm{\uppsi}_{\mathscr{H}^+}|^2\right]+\int_{v}^{v_+}d\bar{u}\int_{\underline{\mathscr{C}}_{\bar{v}}\cap[u,\infty]}d\omega\;\frac{2M}{r^2}\frac{1}{\Omega^2}|\nablau\Psi|^2.
\end{split}
\end{align}
Gr\"onwall's inequality implies
\begin{align}\label{RW exponential backwards near H+}
\begin{split}
     \int_{\underline{\mathscr{C}}_v[u,\infty]}d\bar{u}d\omega\;\frac{1}{\Omega^2}|\nablau\Psi|^2&\lesssim  e^{\frac{1}{2M}(v_+-v)}\left\{\int_{\mathscr{H}^+\cap[v,v_+]} \left[|\mathring{\slashednabla}\bm{\uppsi}_{\mathscr{H}^+}|^2+|\bm{\uppsi}_{\mathscr{H}^+}|^2\right]dvd\omega+\|\Psi\|_{\mathcal{E}^T_{\mathscr{I}^+}}^2+\|\Psi\|_{\mathcal{E}^T_{\mathscr{I}^+}}^2\right\}.
\end{split}
\end{align}
%Exponentially diverging blueshift estimate still implies pointwise limit agrees with radiation field at H
In turn, this implies pointwise control of $\Psi$ near $\mathscr{H}^+$:
\begin{align}\label{ptwise horizon}
    |\Psi(u,v,\theta^A)&-\bm{\uppsi}_{\mathscr{H}^+}(v,\theta^A)|^2\lesssim  \int_{u}^\infty e^{\frac{v-\bar{u}}{2M}}d\bar{u}\times \int_{\underline{\mathscr{C}}_v\cap[u,\infty]}dud\omega\sum_{|\gamma|\leq2}\frac{1}{\Omega^2}\left|\slashed{\mathcal{L}}_{\Omega_i}^\gamma\nablau\Psi\right|^2\\&\lesssim_{M} r\Omega^2(u,v_+) \left[\sum_{|\gamma|\leq2}\int_{u_-}^{u_+}d\bar{u}\int_{S^2}d\omega\left[|\slashed{\mathcal{L}}_{\Omega_i}^\gamma\bm{\uppsi}_{\mathscr{I}^+}|^2+|\mathring{\slashednabla}\slashed{\mathcal{L}}^\gamma_{\Omega_i}\bm{\uppsi}_{\mathscr{I}^+}|^2+|\slashed{\mathcal{L}}^\gamma_{\Omega_i}\partial_u\bm{\uppsi}_{\mathscr{I}^+}|^2\right]+\|\slashed{\mathcal{L}}^\gamma_{\Omega_i}\Psi\|_{\mathcal{E}^T_{\mathscr{H}^+}}^2\right].
\end{align}
In the region $\mathscr{D}\textbackslash (\mathscr{D}_{\tilde{u}}\cap\mathscr{D}_{\tilde{v}})$, $r$ is bounded and energy conservation is sufficient to control $\Psi$ in $L^\infty$. In conclusion, we find that $\Psi$ is controlled in $L^\infty(\mathscr{D})$.\\
%   Taking the limit to Scri
\indent Let $\{v_n^\infty\}_{n=0}^\infty$ be a monotonically increasing sequence tending to $\infty$ with $v_0^\infty=v^\infty$ and define $\mathscr{D}$ in terms of $v_n^\infty$ analogously to $\mathscr{D}$. Denote $\underline{\mathscr{C}}_n=\underline{\mathscr{C}}_{v_n^\infty}\cap\{u\in[u_-,u_+]\}$. We can repeat the above on the region $\mathscr{D}_n$ with data $\bm{\uppsi}_{\mathscr{I}^+}$ on $\underline{\mathscr{C}}_{n}$ to obtain a sequence $\{\Psi_n\}_{n=0}^\infty$. $\Psi_n$ is bounded uniformly in $n$ in the region $\mathscr{D}_k$ for any $k<n$ and we can show uniform boundedness of the derivatives by commuting $\slashednabla_T, \slashednabla_{\Omega_i}$ and using the equation to obtain higher order versions of the estimates above. By Arzela-Ascoli we can extract a convergent subsequence in $C^k(\mathscr{D}_l)$ for any $k,l$ with a limit $\Psi$ that satisfies \bref{RW}. Note that this procedure can be used to uniquely define $\Psi$ everywhere on $ J^-(\widetilde{\Sigma})\cap J^+(\Sigma^*)$. Clearly, $\Psi|_{\mathscr{H}^+}=\bm{\uppsi}_{\mathscr{H}^+}$ and \bref{ptwise infinity} implies $\Psi\longrightarrow \bm{\uppsi}_{\mathscr{I}^+}$.
%   restriction of $\Psi$ to $\Sigma^*$ has finite energy
Finally, a $T$-energy estimate implies that
\begin{align}\label{subunitarity of B-}
    \|(\Psi|_{\Sigma^*}, \slashednabla_{n_{\Sigma^*}}\Psi|_{\Sigma^*})\|_{\mathcal{E}^T_{\Sigma^*}}^2\leq \|\bm{\uppsi}_{\mathscr{H}^+}\|_{\mathcal{E}^T_{\mathscr{H}^+}}^2+\|\bm{\uppsi}_{\mathscr{I}^+}\|_{\mathcal{E}^T_{\mathscr{I}^+}}^2,
\end{align}
so $(\Psi|_{\Sigma^*}, \slashednabla_{n_{\Sigma^*}}\Psi|_{\Sigma^*})\in \mathcal{E}^T_{\Sigma^*}$.
\end{proof}
\begin{defin}\label{RW definition of B-}
    Let $\uppsi_{\mathscr{H}^+},  \uppsi_{\mathscr{I}^+}$ be as in \Cref{RWbackwardsexistence}. Define the map $\mathscr{B}^-$ by
    \begin{align}
        \mathscr{B}^-:\Gamma_c(\mathscr{H}^+_{\geq0})\times\Gamma_c(\mathscr{I}^+)\longrightarrow\Gamma(\Sigma^*)\times\Gamma(\Sigma^*), (\uppsi_{\mathscr{H}^+},\uppsi_{\mathscr{I}^+})\longrightarrow (\Psi|_{\Sigma^*},\slashednabla_{n_{\Sigma^*}}\Psi|_{\Sigma^*}),
    \end{align}
    where $\Psi$ is the solution to \bref{RW} arising from scattering data $(\uppsi_{\mathscr{H}^+},\uppsi_{\mathscr{I}^+})$ as in \Cref{RWbackwardsexistence}.
\end{defin}
\begin{corollary}\label{B- inverts F+}
    The maps $\mathscr{F}^+$, $\mathscr{B}^-$ extend uniquely to unitary Hilbert space isomorphisms on their respective domains, such that $\mathscr{F}^+\circ\mathscr{B}^-=Id$, $\mathscr{B}^-\circ\mathscr{F}^+=Id$.
\end{corollary}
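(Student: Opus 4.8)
The plan is to deduce the corollary from two facts already established: $\mathscr{F}^+$ is a linear isometry of $\mathcal{E}^T_{\Sigma^*}$ into $\mathcal{E}^T_{\mathscr{H}^+_{\geq0}}\oplus\mathcal{E}^T_{\mathscr{I}^+}$ (in particular injective with closed range), by \Cref{RWfcp} and \Cref{RWforwardmap}; and $\mathscr{B}^-$, defined in \Cref{RW definition of B-}, satisfies the a priori bound \bref{subunitarity of B-} on $\Gamma_c(\mathscr{H}^+_{\geq0})\times\Gamma_c(\mathscr{I}^+)$, hence extends uniquely to a bounded operator (of norm $\leq1$) $\mathscr{B}^-\colon\mathcal{E}^T_{\mathscr{H}^+_{\geq0}}\oplus\mathcal{E}^T_{\mathscr{I}^+}\to\mathcal{E}^T_{\Sigma^*}$. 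Given these, the whole statement reduces to the single identity $\mathscr{F}^+\circ\mathscr{B}^-=\mathrm{Id}$ on the dense subspace $\Gamma_c(\mathscr{H}^+_{\geq0})\times\Gamma_c(\mathscr{I}^+)$: by continuity it then holds on all of $\mathcal{E}^T_{\mathscr{H}^+_{\geq0}}\oplus\mathcal{E}^T_{\mathscr{I}^+}$, so $\mathscr{F}^+$ is surjective; being a bijective isometry it is a unitary isomorphism; and $\mathscr{B}^-=(\mathscr{F}^+)^{-1}$, which is then unitary and satisfies $\mathscr{B}^-\circ\mathscr{F}^+=\mathrm{Id}$. The statements for $\Sigma$ and $\overline\Sigma$ are identical, and \Cref{RW isomorphisms} (the past maps and the scattering isomorphism $\mathscr{S}$) follows by conjugating with the time-inversion isometry of \Cref{time inversion of RW}.

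To prove $\mathscr{F}^+\circ\mathscr{B}^-=\mathrm{Id}$ on $\Gamma_c\times\Gamma_c$, I would fix $(\bm{\uppsi}_{\mathscr{H}^+},\bm{\uppsi}_{\mathscr{I}^+})$ there and let $\Psi$ be the solution from \Cref{RWbackwardsexistence}, so that $\mathscr{B}^-(\bm{\uppsi}_{\mathscr{H}^+},\bm{\uppsi}_{\mathscr{I}^+})=(\Psi|_{\Sigma^*},\slashednabla_{n_{\Sigma^*}}\Psi|_{\Sigma^*})$ and $\Psi$ realises $(\bm{\uppsi}_{\mathscr{H}^+},\bm{\uppsi}_{\mathscr{I}^+})$ as its radiation fields. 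The subtlety is that this Cauchy datum is smooth with finite $\mathcal{E}^T_{\Sigma^*}$ norm but \emph{not} compactly supported (it carries a tail towards $i^0$), so $\mathscr{F}^+$ is applied to it only through its continuous extension. I would therefore approximate: choosing radial cutoffs $\chi_R$ on $\Sigma^*$ equal to $1$ on $\{r\leq R\}$ and supported in $\{r\leq2R\}$, the $r^{-2}|\Psi|^2$ weight present in the norm \bref{this22222} makes the truncated data $(\chi_R\Psi|_{\Sigma^*},\chi_R\slashednabla_{n_{\Sigma^*}}\Psi|_{\Sigma^*})$ converge to $(\Psi|_{\Sigma^*},\slashednabla_{n_{\Sigma^*}}\Psi|_{\Sigma^*})$ in $\mathcal{E}^T_{\Sigma^*}$ as $R\to\infty$. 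Writing $\Psi_R$ for the forward development of the (now compactly supported) truncated datum, $\mathscr{F}^+$ acts on it literally as ``take the radiation fields'' in the sense of \Cref{RWfcp}, and by boundedness of the extended $\mathscr{F}^+$ these radiation fields converge in $\mathcal{E}^T_{\mathscr{H}^+_{\geq0}}\oplus\mathcal{E}^T_{\mathscr{I}^+}$ to $\mathscr{F}^+(\Psi|_{\Sigma^*},\slashednabla_{n_{\Sigma^*}}\Psi|_{\Sigma^*})$.

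It then remains to identify this limit with $(\bm{\uppsi}_{\mathscr{H}^+},\bm{\uppsi}_{\mathscr{I}^+})$, and this is the part I expect to require the most care. Let $\bm{\uppsi}_{R,\mathscr{H}^+},\bm{\uppsi}_{R,\mathscr{I}^+}$ be the radiation fields of $\Psi_R$. The difference $\Psi-\Psi_R$ is a smooth solution of \bref{RW} whose $\Sigma^*$ data is $(1-\chi_R)$ times that of $\Psi$ and whose radiation fields are $\bm{\uppsi}_{\mathscr{H}^+}-\bm{\uppsi}_{R,\mathscr{H}^+}$ and $\bm{\uppsi}_{\mathscr{I}^+}-\bm{\uppsi}_{R,\mathscr{I}^+}$ (using that $\Psi$ has radiation fields by \Cref{RWbackwardsexistence} and $\Psi_R$ by \Cref{RWradscri}). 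Integrating the $T$-energy identity \bref{T derivative identity} for $\Psi-\Psi_R$ over $J^+(\Sigma^*)$ truncated by late null cones and discarding the non-negative cone fluxes yields
\begin{align*}
\big\|\bm{\uppsi}_{\mathscr{H}^+}-\bm{\uppsi}_{R,\mathscr{H}^+}\big\|_{\mathcal{E}^T_{\mathscr{H}^+_{\geq0}}}^2+\big\|\bm{\uppsi}_{\mathscr{I}^+}-\bm{\uppsi}_{R,\mathscr{I}^+}\big\|_{\mathcal{E}^T_{\mathscr{I}^+}}^2\;\leq\;\big\|(1-\chi_R)\big(\Psi|_{\Sigma^*},\slashednabla_{n_{\Sigma^*}}\Psi|_{\Sigma^*}\big)\big\|_{\mathcal{E}^T_{\Sigma^*}}^2,
\end{align*}
whose right-hand side is precisely the squared $\mathcal{E}^T_{\Sigma^*}$-distance between the full and the truncated Cauchy data, hence tends to $0$ by the previous paragraph. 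Thus the radiation fields of $\Psi_R$ converge to $(\bm{\uppsi}_{\mathscr{H}^+},\bm{\uppsi}_{\mathscr{I}^+})$, and comparing limits gives $\mathscr{F}^+(\Psi|_{\Sigma^*},\slashednabla_{n_{\Sigma^*}}\Psi|_{\Sigma^*})=(\bm{\uppsi}_{\mathscr{H}^+},\bm{\uppsi}_{\mathscr{I}^+})$, completing the core step. The one technical point folded into the argument is that $\Psi-\Psi_R$ genuinely possesses radiation fields realised as limits of cone fluxes on $\mathscr{I}^+$, which is checked exactly as in \Cref{RWradscri} and the proof of \Cref{RWfcp}; the main obstacle throughout is this reconciliation of the abstract continuous extension of $\mathscr{F}^+$ with the concrete backward-constructed solution, for which the exact $T$-flux conservation applied to the difference $\Psi-\Psi_R$ (rather than merely the inequality \bref{subunitarity of B-}) is what does the work.
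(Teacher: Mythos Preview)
Your proposal is correct and follows essentially the same route as the paper: reduce everything to verifying $\mathscr{F}^+\circ\mathscr{B}^-=\mathrm{Id}$ on $\Gamma_c(\mathscr{H}^+_{\geq0})\times\Gamma_c(\mathscr{I}^+)$ and then invoke the closed-range isometry argument. The paper handles the reconciliation of the extended $\mathscr{F}^+$ with the concrete backward solution in one line by appealing to \Cref{RW enough to be in space} (smooth finite-energy data already lies in $\mathcal{E}^T_{\Sigma^*}$), whereas you spell out the underlying cutoff approximation and $T$-energy estimate for $\Psi-\Psi_R$ explicitly; your version is more detailed but not different in substance.
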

\begin{proof}
We will prove the statement for the map define on data on $\Sigma^*$. We already know that $\mathscr{F}^+$ is a unitary isomorphism and that $\mathscr{F}^+\left[\mathcal{E}^T_{\Sigma^*}\right]\subset\mathcal{E}^{T}_{\mathscr{H}^+_{\geq0}}\oplus \mathcal{E}^{T}_{\mathscr{I}^+}$. 
Let $\uppsi_{\mathscr{H}^+}\in\Gamma_c(\mathscr{H}^+_{\geq0})$, $\uppsi_{\mathscr{I}^+}\in\Gamma_c(\mathscr{I}^+)$. \Cref{RWbackwardsexistence} yields a solution $\Psi$ on $J^+(\Sigma^*)$ to \cref{RW}. Since $\Psi$ realises $\uppsi_{\mathscr{I}^+}$, $\uppsi_{\mathscr{H}^+}$ as its radiation fields as in \Cref{RW future rad field scri,,RWonH} and since $\mathscr{B}^-(\uppsi_{\mathscr{H}^+},\uppsi_{\mathscr{I}^+})\in\left[\Gamma(\Sigma^*)\times\Gamma(\Sigma^*)\right]\cap\mathcal{E}^T_{\Sigma^*}$ (see \Cref{RW enough to be in space}), we have that $\mathscr{F}^+\circ\mathscr{B}^-=Id$ on $\Gamma_c(\mathscr{H}^+_{\geq0})\times\Gamma_c(\mathscr{I}^+)$, which is dense in $\mathcal{E}^{T}_{\mathscr{H}^+_{\geq0}}\oplus \mathcal{E}^{T}_{\mathscr{I}^+}$. Therefore, since $\mathscr{F}^+\left[\mathcal{E}^T_{\Sigma^*}\right]$ is complete, we have that $\mathscr{F}^+\left[\mathcal{E}^T_{\Sigma^*}\right]=\mathcal{E}^{T}_{\mathscr{H}^+_{\geq0}}\oplus \mathcal{E}^{T}_{\mathscr{I}^+}$. The fact that $\mathscr{B}^-$ is bounded means that its unique extension to $\mathcal{E}^{T}_{\mathscr{H}^+_{\geq0}}\oplus \mathcal{E}^{T}_{\mathscr{I}^+}$ must be the inverse of $\mathscr{F}^+$ and we have that $\mathscr{B}^-\circ\mathscr{F}^+=Id_{\mathcal{E}^{T}_{\Sigma^*}}$.
\end{proof}

\begin{remark}\label{unitarity of B- is trivial}
Note that the proof of \Cref{RWbackwardsexistence} only establishes the boundedness of $\mathscr{B}^-$, but showing that $\mathscr{B}^-$ inverts $\mathscr{F}^+$ as was done \Cref{B- inverts F+} turns \bref{subunitarity of B-} to an equality:
\begin{align}\label{unitarity of B- formula}
    \|\mathscr{B}^-(\uppsi_{\mathscr{H}^+},\uppsi_{\mathscr{I}^+})\|_{\mathcal{E}^T_{\Sigma^*}}^2=\|\uppsi_{\mathscr{H}^+}\|^2_{\mathcal{E}^T_{\mathscr{H}^+_{\geq0}}}+\|\uppsi_{\mathscr{I}^+}\|^2_{\mathcal{E}^{T}_{\mathscr{I}^+}}.
\end{align}
\end{remark}
Since the region $J^+(\overline{\Sigma})\cap J^-(\Sigma^*)$ can be handled locally via \Cref{RWwp local statement near B}, \Cref{RWwpSigmabar} and $T$-energy conservation, we can immediately deduce the following:
\begin{corollary}
The map $\mathscr{B}^-$ can be defined on the following domains:
\begin{align}
    \mathscr{B}^{-}:\mathcal{E}^{T}_{\mathscr{H}^+}\oplus \mathcal{E}^{T}_{\mathscr{I}^+}\longrightarrow \mathcal{E}^{T}_{\Sigma},\\
    \mathscr{B}^{-}:\mathcal{E}^{T}_{\overline{\mathscr{H}^+}}\oplus \mathcal{E}^{T}_{\mathscr{I}^+}\longrightarrow \mathcal{E}^{T}_{\overline{\Sigma}},
\end{align}
and we have
\begin{align}
\mathscr{F}^{+}\circ\mathscr{B}^{-}=Id_{\mathcal{E}^T_{\mathscr{H}^+}\oplus\;\mathcal{E}^T_{\mathscr{I}^+}},\qquad
\mathscr{B}^{-}\circ\mathscr{F}^{+}=Id_{\mathcal{E}^T_{\Sigma}},\\
\mathscr{F}^{+}\circ\mathscr{B}^{-}=Id_{\mathcal{E}^T_{\overline{\mathscr{H}^+}}\oplus\;\mathcal{E}^T_{\mathscr{I}^+}},\qquad
\mathscr{B}^{-}\circ\mathscr{F}^{+}=Id_{\mathcal{E}^T_{\overline{\Sigma}}}.
\end{align}
\end{corollary}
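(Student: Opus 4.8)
The plan is to bootstrap from the isomorphism already constructed on $\Sigma^*$ (\Cref{RWbackwardsexistence}, \Cref{RW definition of B-}, \Cref{B- inverts F+}, \Cref{unitarity of B- is trivial}) by filling in the compact region $J^+(\overline{\Sigma})\cap J^-(\Sigma^*)$ near the bifurcation sphere $\mathcal{B}$ — where \bref{RW} is perfectly regular, as seen from its Kruskal form $\slashednabla_U\slashednabla_V\Psi-\mathring{\slashed{\Delta}}\Psi+\frac{3\Omega^2+1}{r^2}\Psi=0$ — and then transferring the isometry by $T$-energy conservation. Only the $\overline{\Sigma}$--$\overline{\mathscr{H}^+}$ case needs discussion; the $\Sigma$--$\mathscr{H}^+$ case is identical with the extension region staying away from $\mathcal{B}$.

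First, \emph{the map is defined} as follows. Given scattering data $(\bm{\uppsi}_{\mathscr{H}^+},\bm{\uppsi}_{\mathscr{I}^+})\in\Gamma_c(\overline{\mathscr{H}^+})\oplus\Gamma_c(\mathscr{I}^+)$, restrict $\bm{\uppsi}_{\mathscr{H}^+}$ to $\mathscr{H}^+_{\geq0}=\mathscr{H}^+\cap J^+(\Sigma^*)$ and apply the map $\mathscr{B}^-$ of \Cref{RW definition of B-} to obtain a solution $\Psi$ on $J^+(\Sigma^*)$ together with its Cauchy data $(\Psi|_{\Sigma^*},\slashednabla_{n_{\Sigma^*}}\Psi|_{\Sigma^*})\in\mathcal{E}^T_{\Sigma^*}$. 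Extend $\Psi$ into $J^-(\Sigma^*)\cap J^+(\overline{\Sigma})$ by solving the mixed initial--characteristic value problem of \Cref{RWwp local statement near B} with those Cauchy data on $\Sigma^*$ and the characteristic data $\bm{\uppsi}_{\mathscr{H}^+}|_{\overline{\mathscr{H}^+}\cap\{t^*\leq0\}}$ on the horizon segment between $\mathcal{B}$ and $\Sigma^*\cap\mathscr{H}^+$; uniqueness there glues the two pieces into one smooth solution on $J^+(\overline{\Sigma})$, whose restriction to $\overline{\Sigma}$ is declared to be $\mathscr{B}^-(\bm{\uppsi}_{\mathscr{H}^+},\bm{\uppsi}_{\mathscr{I}^+})$.

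Next, \emph{the isometry and inversion are established.} $T$-energy conservation (as recorded at the start of \Cref{subsection 5.1 Basic integrated boundedness and decay estimates}) applied in the compact region bounded by $\overline{\Sigma}$, $\overline{\mathscr{H}^+}\cap\{t^*\leq0\}$ and $\Sigma^*$ gives
\[\|(\Psi|_{\overline{\Sigma}},\slashednabla_{n_{\overline{\Sigma}}}\Psi|_{\overline{\Sigma}})\|^2_{\mathcal{E}^T_{\overline{\Sigma}}}=\|(\Psi|_{\Sigma^*},\slashednabla_{n_{\Sigma^*}}\Psi|_{\Sigma^*})\|^2_{\mathcal{E}^T_{\Sigma^*}}+\int_{\overline{\mathscr{H}^+}\cap\{t^*\leq0\}}|\partial_v\Psi|^2\sin\theta\, d\theta\, d\phi\, dv.\]
Combining with the unitarity on $\Sigma^*$, namely $\|(\Psi|_{\Sigma^*},\slashednabla_{n_{\Sigma^*}}\Psi|_{\Sigma^*})\|^2_{\mathcal{E}^T_{\Sigma^*}}=\|\bm{\uppsi}_{\mathscr{H}^+}|_{\mathscr{H}^+_{\geq0}}\|^2_{\mathcal{E}^T_{\mathscr{H}^+_{\geq0}}}+\|\bm{\uppsi}_{\mathscr{I}^+}\|^2_{\mathcal{E}^T_{\mathscr{I}^+}}$ (\Cref{unitarity of B- is trivial}), and noting that the two horizon contributions assemble into $\|\bm{\uppsi}_{\mathscr{H}^+}\|^2_{\mathcal{E}^T_{\overline{\mathscr{H}^+}}}$ (the point $\mathcal{B}$ having measure zero), yields the claimed identity of norms; in particular $\mathscr{B}^-$ is bounded and extends to all of $\mathcal{E}^T_{\overline{\mathscr{H}^+}}\oplus\mathcal{E}^T_{\mathscr{I}^+}$ by density of $\Gamma_c\oplus\Gamma_c$. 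The inversion relations then follow from uniqueness exactly as in \Cref{B- inverts F+}: the solution $\Psi$ just built realises $\bm{\uppsi}_{\mathscr{H}^+},\bm{\uppsi}_{\mathscr{I}^+}$ as its radiation fields (\Cref{RWonH}, \Cref{RW future rad field scri}), so $\mathscr{F}^+$ applied to its Cauchy data on $\overline{\Sigma}$ — which re-evolves to the same $\Psi$ by well-posedness \Cref{RWwpSigmabar} — returns $(\bm{\uppsi}_{\mathscr{H}^+},\bm{\uppsi}_{\mathscr{I}^+})$, giving $\mathscr{F}^+\circ\mathscr{B}^-=Id$ on a dense subspace and hence everywhere; since $\mathscr{F}^+$ is itself an isometric bijection (\Cref{forwardRW}, \Cref{RWforwardmap}), this forces $\mathscr{B}^-\circ\mathscr{F}^+=Id$ on $\mathcal{E}^T_{\overline{\Sigma}}$, and likewise for $\Sigma$.

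The only genuine subtlety I anticipate is the bookkeeping at $\mathcal{B}$: one must check that compactly supported elements of $\mathcal{E}^T_{\overline{\mathscr{H}^+}}$ are admissible characteristic data for \Cref{RWwp local statement near B} with \emph{no} vanishing requirement at $\mathcal{B}$ — precisely the point where the Kruskal-regularity of \bref{RW}, in contrast to the Teukolsky equations \bref{T+2}, \bref{T-2}, is used — and that the finite-region $T$-energy estimate splits the $\overline{\mathscr{H}^+}$ flux as the $\mathscr{H}^+_{\geq0}$-flux plus the $\{t^*\leq0\}$-segment flux with no spurious boundary term at $\mathcal{B}$. Both are routine given the machinery already in place, so the corollary is indeed ``immediate''; the substantive work was carried out in \Cref{RWbackwardsexistence}.
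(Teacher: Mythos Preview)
Your argument is correct and is precisely the approach the paper intends: the paper's own justification is the single sentence preceding the corollary, stating that the region $J^+(\overline{\Sigma})\cap J^-(\Sigma^*)$ ``can be handled locally via \Cref{RWwp local statement near B}, \Cref{RWwpSigmabar} and $T$-energy conservation,'' which is exactly what you have carefully unpacked. Your explicit splitting of the $T$-energy identity across $\Sigma^*$ and the horizon segment, and your remark about the regularity of \bref{RW} at $\mathcal{B}$ in Kruskal coordinates, faithfully expand what the paper leaves implicit.
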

We have just completed the proof of \Cref{backwardRW}.\\
\indent Since the Regge--Wheeler equation \bref{RW} is invariant under time inversion, the existence of the maps $\mathscr{F}^-, \mathscr{B}^+$ is immediate: 
\begin{proposition}\label{RW past scattering}
Solutions to (\ref{RW}) arising from smooth data of compact support on $\Sigma$ (or $\overline{\Sigma}$) give rise to smooth radiation fields $\uppsi_{\mathscr{I}^-}\in\mathcal{E}_{\mathscr{I}^-}^{T}$ on $\mathscr{I}^-$ and $\uppsi_{\mathscr{H}^-}\in\mathcal{E}_{\mathscr{H}^-}^{T}$ (or $\mathcal{E}_{\overline{\mathscr{H}^-}}^{T}$) on $\mathscr{H}^-$ (or $\overline{\mathscr{H}^-}$), such that 
\begin{align}\label{919191}
    ||\bm{\uppsi}_{\mathscr{I}^-}||_{\mathcal{E}^T_{\mathscr{I}^-}}^2+||\bm{\uppsi}_{\mathscr{H}^-}||_{\mathcal{E}^T_{\mathscr{H}^-}}^2=||(\Psi|_{\Sigma},\slashednabla_{n_{\Sigma}}\Psi|_{\Sigma}) ||_{\mathcal{E}^T_{\Sigma}}^2.\\
    ||\bm{\uppsi}_{\mathscr{I}^-}||_{\mathcal{E}^T_{\mathscr{I}^-}}^2+||\bm{\uppsi}_{\mathscr{H}^-}||_{\mathcal{E}^T_{\overline{\mathscr{H}^-}}}^2=||(\Psi|_{\Sigma},\slashednabla_{n_{\Sigma}}\Psi|_{\Sigma}) ||_{\mathcal{E}^T_{\overline{\Sigma}}}^2.
\end{align}
As in the case of $\mathscr{F}^+$, there exist Hilbert space isomorphisms
\begin{align}
    \mathscr{F}^{-}:\mathcal{E}^{T}_{\mathscr{H}^+}\oplus \mathcal{E}^{T}_{\mathscr{I}^+}\longrightarrow \mathcal{E}^{T}_{\Sigma},\\
    \mathscr{F}^{-}:\mathcal{E}^{T}_{\overline{\mathscr{H}^+}}\oplus \mathcal{E}^{T}_{\mathscr{I}^+}\longrightarrow \mathcal{E}^{T}_{\overline{\Sigma}},
\end{align}
Let $\bm{\uppsi}_{\mathscr{H}^-}\in\Gamma_c(\mathscr{H}^-)$ be supported on $u>u_+>-\infty$ such that $\|\bm{\uppsi}_{\mathscr{H}^-}\|_{\mathcal{E}^T_{\mathscr{H}^-}}<\infty$, $\bm{\uppsi}_{\mathscr{I}^-}\in\Gamma_c(\mathscr{I}^-)$ be supported on $v>v_+>-\infty$ such that $\|\bm{\uppsi}_{\mathscr{I}^-}\|_{\mathcal{E}^T_{\mathscr{I}^-}}<\infty$. Then there exists a unique smooth $\Psi$ defined on $ J^-(\Sigma)$ that satisfies \cref{RW} and realises $\bm{\uppsi}_{\mathscr{I}^-}$, $\bm{\uppsi}_{\mathscr{H}^-}$ as its radiation fields. Moreover, $(\Psi|_{\Sigma},\slashednabla_{n_{\Sigma}}\Psi|_{\Sigma})\in \mathcal{E}^T_{\Sigma}$ and \bref{919191} is satisfied. A similar statement applies in the case of compactly supported smooth scattering data on $\overline{\mathscr{H}^-}, \mathscr{I}^-$ mapping into $\mathcal{E}^T_{\overline{\Sigma}}$.\\
\indent Therefore, as in the case of $\mathscr{B}^-$, there exist Hilbert space isomorphisms
\begin{align}
    \mathscr{B}^{+}:\mathcal{E}^{T}_{\mathscr{H}^+}\oplus \mathcal{E}^{T}_{\mathscr{I}^+}\longrightarrow \mathcal{E}^{T}_{\Sigma},\\
    \mathscr{B}^{+}:\mathcal{E}^{T}_{\overline{\mathscr{H}^+}}\oplus \mathcal{E}^{T}_{\mathscr{I}^+}\longrightarrow \mathcal{E}^{T}_{\overline{\Sigma}},
\end{align}
which satisfy
\begin{align}
\mathscr{F}^{-}\circ\mathscr{B}^{+}=Id_{\mathcal{E}^T_{\mathscr{H}^-}\oplus\;\mathcal{E}^T_{\mathscr{I}^-}},\qquad
\mathscr{B}^{+}\circ\mathscr{F}^{-}=Id_{\mathcal{E}^T_{\Sigma}},\\
\mathscr{F}^{-}\circ\mathscr{B}^{+}=Id_{\mathcal{E}^T_{\overline{\mathscr{H}^-}}\oplus\;\mathcal{E}^T_{\mathscr{I}^-}},\qquad
\mathscr{B}^{+}\circ\mathscr{F}^{-}=Id_{\mathcal{E}^T_{\overline{\Sigma}}}.
\end{align}
\end{proposition}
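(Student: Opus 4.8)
The plan is to deduce \Cref{RW past scattering} from the forward-scattering results \Cref{forwardRW,backwardRW,RW isomorphisms,B- inverts F+} by conjugating with the time-inversion map of \Cref{time inversion of RW}. Write $\mathcal{I}$ for the operation $(\mathcal{I}\Psi)(u,v,\theta^A):=\Psi(-v,-u,\theta^A)$; by \Cref{time inversion of RW} it carries solutions of \bref{RW} to solutions of \bref{RW}, it interchanges $\mathscr{I}^+\leftrightarrow\mathscr{I}^-$, $\mathscr{H}^+\leftrightarrow\mathscr{H}^-$ and $\overline{\mathscr{H}^+}\leftrightarrow\overline{\mathscr{H}^-}$, and it fixes the time-symmetric Cauchy surfaces $\Sigma=\{t=0\}$ and $\overline{\Sigma}$ while sending $J^+(\Sigma)$ onto $J^-(\Sigma)$. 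One then simply \emph{defines}
\begin{align}
    \mathscr{F}^-:=\mathcal{I}\circ\mathscr{F}^+\circ\mathcal{I},\qquad\qquad\mathscr{B}^+:=\mathcal{I}\circ\mathscr{B}^-\circ\mathcal{I},
\end{align}
on the domains appearing in \Cref{forwardRW,backwardRW} with $\Sigma$ (resp.\ $\overline{\Sigma}$) in place of $\Sigma^*$ and $\mathscr{H}^-$ (resp.\ $\overline{\mathscr{H}^-}$) in place of $\mathscr{H}^+$.

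The first step is to check that $\mathcal{I}$ restricts to isometric isomorphisms $\mathcal{E}^T_{\mathscr{I}^+}\to\mathcal{E}^T_{\mathscr{I}^-}$, $\mathcal{E}^T_{\mathscr{H}^+}\to\mathcal{E}^T_{\mathscr{H}^-}$, $\mathcal{E}^T_{\overline{\mathscr{H}^+}}\to\mathcal{E}^T_{\overline{\mathscr{H}^-}}$ and to unitary involutions of $\mathcal{E}^T_{\Sigma}$ and of $\mathcal{E}^T_{\overline{\Sigma}}$. This is immediate from the explicit norms: on null infinity $\partial_u$ on $\mathscr{I}^+$ is carried to $\partial_v$ on $\mathscr{I}^-$ and the measures match, the same holds on the horizons, and the integrand of \bref{this2222} (and of its $\overline{\Sigma}$ analogue) is unchanged under $\slashednabla_{n_\Sigma}\mapsto-\slashednabla_{n_\Sigma}$, $\slashednabla_R\mapsto-\slashednabla_R$. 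Granted this, conjugation by $\mathcal{I}$ transports the content of \Cref{forwardRW,backwardRW,B- inverts F+} verbatim: existence and smoothness of the radiation fields, the energy identities \bref{919191}, boundedness and unitarity of $\mathscr{F}^-$ and $\mathscr{B}^+$, and the relations $\mathscr{F}^-\circ\mathscr{B}^+=\mathscr{B}^+\circ\mathscr{F}^-=Id$ on the respective domains.

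It then remains to verify that these conjugated maps are genuinely realised by \emph{backwards} evolution as stated — that $\mathscr{F}^-$ sends Cauchy data on $\Sigma$ (resp.\ $\overline{\Sigma}$) to the radiation fields on $\mathscr{H}^-,\mathscr{I}^-$ of the solution in $J^-(\Sigma)$, and that $\mathscr{B}^+$ inverts this by solving \bref{RW} towards the past from compactly supported data on $\mathscr{H}^-,\mathscr{I}^-$ as in \Cref{RWbackwardsexistence}. This is obtained by unwinding the definitions of the radiation fields (\Cref{RWonH,RW future rad field scri}) under $\mathcal{I}$, invoking uniqueness in the relevant well-posedness statements (\Cref{RWwpSigmabar,RWwpBackwards}), and observing that the support hypotheses on $\bm{\uppsi}_{\mathscr{H}^-},\bm{\uppsi}_{\mathscr{I}^-}$ in the statement are precisely the $\mathcal{I}$-images of those in \Cref{RWbackwardsexistence}. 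I do not expect a genuinely hard step here; the only point demanding care is the bookkeeping of which hypersurface, measure and transverse-derivative convention is matched to which under $\mathcal{I}$ — and in particular the fact that $\Sigma^*=\{t^*=0\}$ is \emph{not} time-symmetric, which is why the past-scattering statement is formulated only for $\Sigma$ and $\overline{\Sigma}$ and not for $\Sigma^*$.
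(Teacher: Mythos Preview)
Your proposal is correct and follows exactly the approach the paper takes: immediately before the statement the paper writes ``Since the Regge--Wheeler equation \bref{RW} is invariant under time inversion, the existence of the maps $\mathscr{F}^-, \mathscr{B}^+$ is immediate,'' which is precisely your conjugation-by-$\mathcal{I}$ argument. You have simply spelled out the bookkeeping (isometry of $\mathcal{I}$ on the various energy spaces, and the reason $\Sigma^*$ is excluded) that the paper leaves implicit.
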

With \Cref{RW past scattering}, \Cref{RW isomorphisms} is immediate.
\begin{remark}
It is possible to realise the map $\mathscr{S}$ by directly studying the future radiation fields $\mathscr{I}^+$, $\overline{\mathscr{H}^+}$ on of a solution to the Regge--Wheeler equation \bref{RW} arising all the way from past scattering data on $\mathscr{I}^-$, $\overline{\mathscr{H}^-}$, instead of obtaining it by formally composing $\mathscr{F}^+, \mathscr{B}^+$. The proof uses a subset of the ideas needed to prove \Cref{Corollary 1} of the introduction, so we will state the result here.
\end{remark}
\begin{proposition}
Given smooth, compactly supported past scattering data $(\uppsi_{\mathscr{H}^-},\uppsi_{\mathscr{I}^-})$ for the Regge--Wheeler equation \bref{RW}, there exists a unique solution $\Psi$ realising $\uppsi_{\mathscr{H}^-},\uppsi_{\mathscr{I}^-}$ as its radiation fields on $\overline{\mathscr{H}^-}, \mathscr{I}^-$ respectively. The solution $\Psi$ induces future radiation fields $(\uppsi_{\mathscr{H}^+},\uppsi_{\mathscr{I}^+})\in \mathcal{E}^{T}_{\overline{\mathscr{H}^+}}\oplus\mathcal{E}^{T}_{{\mathscr{I}^+}}$ such that
\begin{align}
    \|\uppsi_{{\mathscr{H}^-}}\|^2_{\mathcal{E}^{T}_{\overline{\mathscr{H}^-}}}+\|\uppsi_{{\mathscr{I}^-}}\|^2_{\mathcal{E}^{T}_{{\mathscr{I}^-}}}= \|\uppsi_{{\mathscr{H}^+}}\|^2_{\mathcal{E}^{T}_{\overline{\mathscr{H}^+}}}+\|\uppsi_{{\mathscr{I}^+}}\|^2_{\mathcal{E}^{T}_{{\mathscr{I}^+}}}
\end{align}
The same result applies with scattering data restricted to $\mathcal{E}^{T}_{{\mathscr{H}^\pm}}$.
\end{proposition}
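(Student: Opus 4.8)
The statement to prove is the ``global past-to-future scattering'' result for the Regge--Wheeler equation \bref{RW}: given smooth, compactly supported past scattering data $(\uppsi_{\mathscr{H}^-},\uppsi_{\mathscr{I}^-})$ on $\overline{\mathscr{H}^-}, \mathscr{I}^-$, construct the solution $\Psi$ on all of $\overline{\mathscr{M}}$, show it induces smooth future radiation fields in $\mathcal{E}^T_{\overline{\mathscr{H}^+}}\oplus\mathcal{E}^T_{\mathscr{I}^+}$, and prove the energy identity $\|\uppsi_{\mathscr{H}^-}\|^2+\|\uppsi_{\mathscr{I}^-}\|^2=\|\uppsi_{\mathscr{H}^+}\|^2+\|\uppsi_{\mathscr{I}^+}\|^2$. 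The plan is to first invoke \Cref{RW past scattering}: by the backwards construction (applied with the roles of future/past swapped via \Cref{time inversion of RW}), the pair $(\uppsi_{\mathscr{H}^-},\uppsi_{\mathscr{I}^-})$ produces a unique smooth solution $\Psi$ on $J^-(\overline{\Sigma})$ realising these as its past radiation fields, with $(\Psi|_{\overline\Sigma},\slashednabla_{n_{\overline\Sigma}}\Psi|_{\overline\Sigma})\in\mathcal{E}^T_{\overline\Sigma}$ and $\|(\Psi|_{\overline\Sigma},\slashednabla_{n_{\overline\Sigma}}\Psi|_{\overline\Sigma})\|^2_{\mathcal{E}^T_{\overline\Sigma}}=\|\uppsi_{\mathscr{H}^-}\|^2_{\mathcal{E}^T_{\overline{\mathscr{H}^-}}}+\|\uppsi_{\mathscr{I}^-}\|^2_{\mathcal{E}^T_{\mathscr{I}^-}}$.

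Next I would extend $\Psi$ to the future of $\overline{\Sigma}$ using the forward well-posedness \Cref{RWwpSigmabar}, and then apply \Cref{forwardRW} (the $\overline\Sigma$ version) to this Cauchy data on $\overline{\Sigma}$. The subtlety is that \Cref{forwardRW} as stated is for \emph{compactly supported} smooth data on $\overline\Sigma$, whereas here the induced Cauchy data $(\Psi|_{\overline\Sigma},\slashednabla_{n_{\overline\Sigma}}\Psi|_{\overline\Sigma})$ need not be compactly supported --- it is merely an element of $\mathcal{E}^T_{\overline\Sigma}$. So the argument is: since $(\uppsi_{\mathscr{H}^-},\uppsi_{\mathscr{I}^-})$ is smooth and compactly supported, the resulting Cauchy data lies in $\mathcal{E}^T_{\overline\Sigma}$, and by the density of compactly supported data together with the boundedness and unitarity of the extended map $\mathscr{F}^+:\mathcal{E}^T_{\overline\Sigma}\to\mathcal{E}^T_{\overline{\mathscr{H}^+}}\oplus\mathcal{E}^T_{\mathscr{I}^+}$ from \Cref{RWforwardmap}, we obtain radiation fields $(\uppsi_{\mathscr{H}^+},\uppsi_{\mathscr{I}^+})\in\mathcal{E}^T_{\overline{\mathscr{H}^+}}\oplus\mathcal{E}^T_{\mathscr{I}^+}$ with
\begin{align}
\|\uppsi_{\mathscr{H}^+}\|^2_{\mathcal{E}^T_{\overline{\mathscr{H}^+}}}+\|\uppsi_{\mathscr{I}^+}\|^2_{\mathcal{E}^T_{\mathscr{I}^+}}=\|(\Psi|_{\overline\Sigma},\slashednabla_{n_{\overline\Sigma}}\Psi|_{\overline\Sigma})\|^2_{\mathcal{E}^T_{\overline\Sigma}}=\|\uppsi_{\mathscr{H}^-}\|^2_{\mathcal{E}^T_{\overline{\mathscr{H}^-}}}+\|\uppsi_{\mathscr{I}^-}\|^2_{\mathcal{E}^T_{\mathscr{I}^-}}.
\end{align}
Uniqueness of $\Psi$ follows since the solution is determined on $J^-(\overline\Sigma)$ by the backwards problem and on $J^+(\overline\Sigma)$ by \Cref{RWwpSigmabar}, and these agree on $\overline\Sigma$.

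The one genuine point requiring care --- and the main obstacle --- is the \emph{smoothness} of the future radiation fields $(\uppsi_{\mathscr{H}^+},\uppsi_{\mathscr{I}^+})$, as opposed to their mere membership in the Hilbert spaces. The abstract density argument only gives $\mathcal{E}^T$ membership. To get smoothness one must show that the Cauchy data $(\Psi|_{\overline\Sigma},\slashednabla_{n_{\overline\Sigma}}\Psi|_{\overline\Sigma})$ produced by the backwards construction is itself smooth (it is, being the restriction of a smooth solution) with sufficiently strong asymptotic decay inherited from the compact support of $(\uppsi_{\mathscr{H}^-},\uppsi_{\mathscr{I}^-})$ --- concretely, one propagates the weighted energy estimates of \Cref{RWredshift,RWILED,RWrp} and their higher-order versions backwards from $\mathscr{I}^-,\overline{\mathscr{H}^-}$ to $\overline\Sigma$ (this is exactly what the proof of \Cref{RWbackwardsexistence} does, via the Gr\"onwall arguments and the pointwise bounds \bref{ptwise infinity}, \bref{ptwise horizon}), and then re-runs the forward radiation-field analysis of \Cref{subsection 5.2 subsection Radiation fields} (Propositions \ref{RWradscri}, \ref{RWdecayscri}, \ref{RWdecayfixedR}) on the Cauchy data thus obtained --- the key is that the $r^p$-estimates and redshift estimates only require finiteness of the relevant weighted norms on $\overline\Sigma$, not compact support. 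A clean way to package this, as the remark preceding the proposition hints, is to observe that the required estimates are a strict subset of those needed for \Cref{Corollary 1}, so one simply cites the relevant pieces of the argument in \Cref{subsection 9.4 mixed scattering}. Finally, the variant with data restricted to $\mathcal{E}^T_{\mathscr{H}^\pm}$ (i.e.\ data vanishing near $\mathcal{B}$) follows identically, using the $\Sigma$-versions of \Cref{forwardRW}, \Cref{RW past scattering} and noting that vanishing at $\mathcal{B}$ is preserved under the evolution.
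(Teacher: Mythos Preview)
Your proposal is correct and aligns with the paper's intended approach. The paper does not spell out a proof of this proposition; it only remarks that ``the proof uses a subset of the ideas needed to prove \Cref{Corollary 1}'', and the argument in \Cref{subsection 9.4 mixed scattering} indeed proceeds exactly as you describe: construct $\Psi$ on $J^-(\overline\Sigma)$ via the backwards map, extend forward by well-posedness, and then --- since the induced Cauchy data on $\overline\Sigma$ is smooth but not compactly supported --- re-run the forward $r^p$- and ILED-estimates (cf.\ the estimates \bref{this-2}--\bref{rp estimate from past null infinity} in that section) directly on the solution to obtain genuine pointwise radiation fields rather than mere abstract Hilbert-space limits. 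Your identification of the smoothness subtlety as the only nontrivial point, and your resolution via the weighted estimates requiring only finiteness (not compact support) of the initial norms, is exactly right.
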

\subsection{Auxiliary results on backwards scattering}\label{subsection 5.5 auxiliary results}
\subsubsection{Radiation fields of transverse null derivative near $\mathscr{I}^+$}\label{subsubsection 5.5.1 convergence of transverse null derivative}
We can recover the formulae of \Cref{Phi 1 forward,,Phi 2 forward} in backwards scattering from scattering data that is supported away from the future ends of $\mathscr{I}^+,\mathscr{H}^+$:
\begin{corollary}\label{Phi 1 backwards}
Let $(\bm{\uppsi}_{\mathscr{H}^+},\bm{\uppsi}_{\mathscr{I}^+})$ be smooth, compactly supported scattering data for \cref{RW} with corresponding solution $\Psi$. Then 
\begin{align}
    \lim_{v\longrightarrow\infty}\frac{r^2}{\Omega^2}\nablav\Psi=\int^{u_+}_u d\bar{u}(\mathcal{A}_2-2)\bm{\uppsi}_{\mathscr{I}^+}.
\end{align}
\end{corollary}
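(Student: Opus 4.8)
The plan is to mimic the proof of \Cref{Phi 1 forward} given above, but running it in the backwards direction using the solution $\Psi$ constructed in \Cref{RWbackwardsexistence}. Recall that \Cref{RWbackwardsexistence} produces from $(\bm{\uppsi}_{\mathscr{H}^+},\bm{\uppsi}_{\mathscr{I}^+})$ a unique smooth solution $\Psi$ on $J^+(\Sigma^*)$ realising the prescribed radiation fields, and in particular $\Psi$ satisfies estimate \bref{ptwise infinity}, which gives quantitative control of $\left|\Psi|_{(u,v)}-\bm{\uppsi}_{\mathscr{I}^+}\right|$ near $\mathscr{I}^+$, together with higher order analogues obtained by commuting with $\slashednabla_T$ and the angular Killing fields $\slashed{\mathcal{L}}_{\Omega_i}$. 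The key point is that by construction the scattering data are compactly supported, say $\bm{\uppsi}_{\mathscr{I}^+}$ is supported in $[u_-,u_+]$ and $\bm{\uppsi}_{\mathscr{H}^+}$ in $[v_-,v_+]$, so for $u>u_+$ the solution $\Psi$ vanishes on the portion of $\mathscr{C}_u$ with $r$ large, and in particular $\bm{\upphi}^{(1)}_{\mathscr{I}^+}(u,\theta^A)=\lim_{v\to\infty}\frac{r^2}{\Omega^2}\nablav\Psi$ vanishes for $u>u_+$.

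First I would establish that $\Phi^{(1)}:=\frac{r^2}{\Omega^2}\nablav\Psi$ has a well-defined limit $\bm{\upphi}^{(1)}_{\mathscr{I}^+}$ towards $\mathscr{I}^+$ for the backwards-constructed solution; this follows from the higher-order $r^p$-type estimates and Sobolev embedding on $S^2$ exactly as in \Cref{RW transverse derivatives converge} and the proof of \Cref{RWradscri}, using the energy bounds that the proof of \Cref{RWbackwardsexistence} already delivers for $\Psi$ and its commuted versions $\slashed{\mathcal{L}}_{\Omega_i}^\gamma\Psi$, $\slashednabla_T\Psi$. Second, I would integrate the transport equation \bref{Phi 1 transport equation}, namely
\begin{align}
    \nablau\Phi^{(1)}+\frac{3\Omega^2-1}{r}\Phi^{(1)}=\mathring{\slashed{\Delta}}\Psi-(3\Omega^2+1)\Psi,
\end{align}
along an ingoing null cone $\underline{\mathscr{C}}_v$ between parameters $u$ and $u_+$, take the limit $v\to\infty$ (justified by the bounded convergence theorem using the decay/boundedness of the integrand, precisely as in \Cref{Phi 1 forward}), and use that $\mathring{\slashed{\Delta}}-(3\Omega^2+1)\to \mathring{\slashed{\Delta}}-4=\mathcal{A}_2-2$ and $\frac{3\Omega^2-1}{r}\to 0$ on $\mathscr{I}^+$ together with $\bm{\upphi}^{(1)}_{\mathscr{I}^+}(u_+,\theta^A)=0$ to obtain
\begin{align}
    \bm{\upphi}^{(1)}_{\mathscr{I}^+}(u,\theta^A)=\int_u^{u_+}d\bar{u}\,(\mathcal{A}_2-2)\bm{\uppsi}_{\mathscr{I}^+}(\bar{u},\theta^A),
\end{align}
which is the claimed formula (the upper limit $u_+$ may be replaced by $\infty$ since $\bm{\uppsi}_{\mathscr{I}^+}$ is supported in $[u_-,u_+]$).

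The main obstacle — though it is a mild one — is justifying the interchange of the limit $v\to\infty$ with the $\bar u$-integration along $\underline{\mathscr{C}}_v$, i.e.\ producing a uniform (in $v$) integrable majorant for $\nablau\Phi^{(1)}$, $\Psi$, and $\mathring{\slashed{\Delta}}\Psi$ on the relevant range of $\bar u$. This is handled exactly as in the forward case: the pointwise bound \bref{ptwise infinity} and its commuted versions, together with the fact that for $\bar u$ in the compact interval $[u_-,u_+]$ the quantities $r$ and $\Omega^2$ are bounded below away from zero on the region where convergence is being examined, give the required domination, and Lebesgue's bounded convergence theorem then applies. Everything else is a routine transcription of \Cref{subsubsection 5.2.2 radiation field on I+} into the backwards setting, so no genuinely new estimate is needed beyond those already assembled in \Cref{subsubsection 5.4 the backwards scattering map}.
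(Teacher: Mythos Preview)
Your proposal is correct and follows essentially the same route as the paper: integrate the transport equation \bref{Phi 1 transport equation} along $\underline{\mathscr{C}}_v$ between $u$ and $u_+$ (where the solution vanishes by compact support of the scattering data), then pass to the limit $v\to\infty$ via dominated convergence, exactly as in \Cref{Phi 1 forward} and \Cref{RW transverse derivatives converge}. The paper writes the integrated identity with the integrating factor $\frac{\Omega^2}{r^2}$ absorbed, obtaining $\Phi^{(1)}=\frac{r^2}{\Omega^2}\int_u^{u_+}\frac{\Omega^2}{r^2}[\mathring{\slashed{\Delta}}\Psi-(3\Omega^2+1)\Psi]\,d\bar u$ at finite $v$ and then invoking the argument of \Cref{RW transverse derivatives converge}, whereas you first take the limit of each term of the transport equation; the two are equivalent.
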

\begin{proof}
     In a similar fashion to \Cref{Phi 1 forward}, we integrate \bref{RW first transverse derivative in the 3 direction} on a hypersurface $\underline{\mathscr{C}}_v$ from $u_+$ to $u$ to find
    \begin{align}
        \Phi^{(1)}=\frac{r^2}{\Omega^2}\int_u^{u_+}d\bar{u}\; \frac{\Omega^2}{r^2}\left[\mathring{\slashed{\Delta}}\Psi-(3\Omega^2+1)\Psi\right].
    \end{align}
    Repeating the argument leading to \Cref{RW transverse derivatives converge} gives the result:
    \begin{align}
       \bm{\upphi}^{(1)}_{\mathscr{I}^+}= \lim_{v\longrightarrow\infty}\Phi^{(1)}=\int_u^{u_+}d\bar{u}\left(\mathcal{A}_2-2\right)\bm{\uppsi}_{\mathscr{I}^+}.
    \end{align}
\end{proof}
\Cref{Phi 2 forward} can also be recovered in backwards scattering for compactly supported data:
\begin{corollary}\label{Phi 2 backwards}
Let $\Psi$ be a solution to \cref{RW} arising from smooth, compactly supported scattering data $(\bm{\uppsi}_{\mathscr{H}^+},\bm{\uppsi}_{\mathscr{I}^+})$,  then
\begin{align}
\begin{split}
    \lim_{v\longrightarrow\infty}\left(\frac{r^2}{\Omega^2}\nablav\right)^2\Psi&=\int_{u}^\infty\int_{u_1}^\infty du_1 du_2 \left[\mathcal{A}(\mathcal{A}_2-2)-6M\partial_u\right]\bm{\uppsi}_{\mathscr{I}^+}(u_2,\theta^A)\\
    &=\int_u^{u_+}d\bar{u}(\bar{u}-u_-)\left[\mathcal{A}_2(\mathcal{A}_2-2)-6M\partial_u\right]\bm{\uppsi}_{\mathscr{I}^+}(\bar{u},\theta^A).
\end{split}
\end{align}
\end{corollary}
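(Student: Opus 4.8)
The statement is the backwards-scattering analogue of \Cref{Phi 2 forward}, and the plan is to mirror the proof of \Cref{Phi 1 backwards} one order higher, using \bref{eq:191} as the substitute for \bref{RW first transverse derivative in the 3 direction}. First I would record, exactly as in \Cref{Phi 1 backwards} and \Cref{Phi 2 forward}, that the solution $\Psi$ associated to the smooth compactly supported scattering data $(\bm{\uppsi}_{\mathscr{H}^+},\bm{\uppsi}_{\mathscr{I}^+})$ by \Cref{RWbackwardsexistence} induces, via \Cref{RW transverse derivatives converge} (whose proof goes through verbatim in the backwards setting once the $r^p$-estimates of \Cref{RWrp} and their higher-order versions \bref{RWrp k=1}, \bref{RWrp k=2} are in hand, cf.\ the derivation of \bref{this} and \bref{ptwise infinity}), well-defined limits $\bm{\upphi}^{(1)}_{\mathscr{I}^+}, \bm{\upphi}^{(2)}_{\mathscr{I}^+}\in\Gamma(\mathscr{I}^+)$ which moreover vanish for $u$ larger than the (finite) upper end $u_+$ of the support of the scattering data, since $\Psi$ itself is supported in $u\le u_+$ near $\mathscr{I}^+$.

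Next I would rewrite \bref{eq:191} as a transport equation in the $u$-direction for the quantity $\Phi^{(2)}=\left(\frac{r^2}{\Omega^2}\nablav\right)^2\Psi$: since the left-hand side of \bref{eq:191} is $\left(\frac{r^2}{\Omega^2}\nablau\right)^2\frac{\Omega^2}{r^2}\Phi^{(2)}$, one expands the two $\frac{r^2}{\Omega^2}\nablau$ factors to obtain an ODE along $\underline{\mathscr{C}}_v$ of the schematic form $\nablau\big(\tfrac{r^2}{\Omega^2}\nablau(\tfrac{\Omega^2}{r^2}\Phi^{(2)})\big)=\big[\mathcal{A}_2(\mathcal{A}_2-2)-12M\slashednabla_T\big]\Psi$, with $\slashednabla_T=\nablau+\nablav$ on $(1,1)$-tensors. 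The cleaner route, and the one I would actually follow, is to integrate this twice in $u$ from $u_+$ (where all the relevant quantities vanish by compact support) down to $u$, exactly as the single integration was done in \Cref{Phi 1 backwards}: the first integration produces $\tfrac{r^2}{\Omega^2}\nablau\Phi^{(2)}$ plus lower-order transport terms, and the second produces $\Phi^{(2)}$ itself, expressed as a double $u$-integral of $\big[\mathcal{A}_2(\mathcal{A}_2-2)-6M\partial_u\big]\Psi$ together with terms that are $O(\Omega^2/r^2)$ and hence vanish in the $v\to\infty$ limit. One then passes to the limit $v\to\infty$ using Lebesgue dominated convergence, justified by the uniform pointwise bounds on $\Psi$ and its derivatives coming from \bref{ptwise infinity} and the higher-order $r^p$-estimates, precisely as in \Cref{Phi 2 forward}; this gives the first displayed equality. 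The second equality is then pure bookkeeping: swap the order of integration in $\int_u^\infty\!\int_{u_1}^\infty\! du_1\,du_2$ to get $\int_u^{u_+}(\bar u-u_-)\,d\bar u$ of the same integrand, where $u_-$ is the lower end of the support of $\bm{\uppsi}_{\mathscr{I}^+}$ and we use $\int_{-\infty}^\infty\bm{\uppsi}_{\mathscr{I}^+}=0=\int_{-\infty}^\infty(\bar u - u_-)\bm{\uppsi}_{\mathscr{I}^+}$ (the vanishing of these two moments being precisely the content of the corollary just before \Cref{subsection 5.3 the forwards scattering map}, which holds here because $\Psi$ extends as a solution on all of $J^+(\Sigma^*)$).

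The only genuinely non-routine point — and hence the main obstacle — is establishing that $\Phi^{(2)}$ actually converges to a finite limit on $\mathscr{I}^+$ in the backwards-constructed solution, i.e.\ that the backwards solution of \Cref{RWbackwardsexistence} enjoys the same $r^p$-hierarchy decay towards $\mathscr{I}^+$ that forward-evolved compactly supported data enjoy. For the forward problem this is \Cref{RWrp} and its higher-order refinements \bref{RWrp k=1}--\bref{RWrp k=2}; for the backward problem one has the Grönwall-type control \bref{this} on $\int_{\mathscr{C}_u\cap\{r>R\}}r^2|\nablav\Psi|^2$ from the proof of \Cref{RWbackwardsexistence}, and I expect that commuting that argument with $\frac{r^2}{\Omega^2}\nablav$ once and twice — exactly as \bref{Phi 1 transport equation}, \bref{Phi 1 wave equation} were used to upgrade the estimate to $\Phi^{(1)}$, $\Phi^{(2)}$ in the forward case — yields the needed bounds on $\int_{\mathscr{C}_u\cap\{r>R\}}r^2|\nablav\Phi^{(1)}|^2$ and $\int_{\mathscr{C}_u\cap\{r>R\}}r^2|\nablav\Phi^{(2)}|^2$, after which \Cref{RW transverse derivatives converge} applies and the dominated-convergence step goes through. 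Apart from this, everything is a transcription of the forward computations in \Cref{Phi 1 forward,,Phi 2 forward} with the rôles of $u$ and the integration limits reversed, so I would keep the write-up to the statement ``following the steps of \Cref{Phi 1 backwards} and using \bref{eq:191} in place of \bref{RW first transverse derivative in the 3 direction}, together with \Cref{RW transverse derivatives converge}'' and then display the change-of-variables identity for the double integral.
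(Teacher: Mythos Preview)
Your core approach is correct and matches the paper's implicit argument: integrate \bref{eq:191} twice in $u$ from $u_+$ (where everything vanishes by the compact support of the scattering data), then pass to the limit $v\to\infty$ via dominated convergence, exactly as in \Cref{Phi 1 backwards} and \Cref{RW transverse derivatives converge}. Two remarks, however.

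First, your ``only genuinely non-routine point'' is not an obstacle at all. You do not need an independent backwards $r^p$-hierarchy for $\Phi^{(2)}$: the convergence of $\Phi^{(2)}$ towards $\mathscr{I}^+$ follows \emph{from} the double-integral representation you derive, together with the uniform pointwise control on $\Psi$ and its angular derivatives already available from \bref{ptwise infinity}. This is precisely the mechanism of \Cref{RW transverse derivatives converge}: the integral formula itself, plus bounded convergence, gives both existence and value of the limit in one stroke. The paper never appeals to $r^p$-estimates on $\Phi^{(2)}$ here, and neither need you.

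Second, your justification of the second displayed equality contains an error. You invoke $\int_{-\infty}^\infty\bm{\uppsi}_{\mathscr{I}^+}=0$, citing the corollary at the end of \Cref{subsubsection 5.2.2 radiation field on I+}; but that corollary is proved for \emph{forward} solutions arising from compactly supported Cauchy data, where the vanishing of the moment comes from the decay of $\bm{\upphi}^{(1)}_{\mathscr{I}^+}$ at \emph{both} ends of $\mathscr{I}^+$. In backwards scattering $\bm{\uppsi}_{\mathscr{I}^+}$ is an arbitrary element of $\Gamma_c(\mathscr{I}^+)$, and there is no reason whatsoever for its integral to vanish. The plain swap of integration order in $\int_u^\infty\int_{u_1}^\infty du_1\,du_2\,f(u_2)$ yields $\int_u^{u_+}(\bar u-u)f(\bar u)\,d\bar u$, not $(\bar u-u_-)$; the factor $(\bar u-u_-)$ in the statement is almost certainly a typo for $(\bar u-u)$ (compare the analogous formulae \bref{-2 Psi out of alpha on scri+} and \bref{formula for -2 RW in backwards direction}), and with that reading no moment condition is required.
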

Note that we do not need compact support in the direction of $u\longrightarrow-\infty$ on $\mathscr{I}^+$ for the above results to hold:
\begin{corollary}
\Cref{Phi 1 backwards,,Phi 2 backwards} hold if $\bm{\uppsi}_{\mathscr{I}^+}$ is supported on $(-\infty,u]$, provided $\|\bm{\uppsi}_{\mathscr{I}^+}\|_{\mathcal{E}^T_{\mathscr{I}^+}}<~\infty$.
\end{corollary}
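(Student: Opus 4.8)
The statement to prove is the ``corollary'' immediately following \Cref{Phi 2 backwards}, namely that \Cref{Phi 1 backwards,,Phi 2 backwards} remain valid when $\bm{\uppsi}_{\mathscr{I}^+}$ is not compactly supported towards the past end $u\longrightarrow-\infty$ of $\mathscr{I}^+$, but is merely supported on $(-\infty,u_+]$ for some finite $u_+$ and has finite $\mathcal{E}^T_{\mathscr{I}^+}$-norm.

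\textbf{Approach.} The plan is to exhaust the non-compactly supported data by compactly supported data and pass to the limit using the continuity of the backwards scattering map $\mathscr{B}^-$ (together with its higher-order refinements) and the formulae already proved in the compactly supported case. First I would fix $\bm{\uppsi}_{\mathscr{H}^+}\in\Gamma_c(\mathscr{H}^+_{\geq0})$ and $\bm{\uppsi}_{\mathscr{I}^+}\in\Gamma(\mathscr{I}^+)$ supported on $(-\infty,u_+]$ with $\|\bm{\uppsi}_{\mathscr{I}^+}\|_{\mathcal{E}^T_{\mathscr{I}^+}}<\infty$, and introduce a cutoff sequence $\bm{\uppsi}_{\mathscr{I}^+}^{(n)}=\zeta_n\,\bm{\uppsi}_{\mathscr{I}^+}$, where $\zeta_n$ is a smooth function of $u$ equal to $1$ on $[-n,u_+]$ and vanishing on $(-\infty,-n-1]$, chosen so that $\bm{\uppsi}_{\mathscr{I}^+}^{(n)}\longrightarrow\bm{\uppsi}_{\mathscr{I}^+}$ in $\mathcal{E}^T_{\mathscr{I}^+}$ (this uses the $L^2$-integrability of $\partial_u\bm{\uppsi}_{\mathscr{I}^+}$ and the decay $\|\bm{\uppsi}_{\mathscr{I}^+}\|_{L^2(S^2)}\to 0$ as $u\to-\infty$, as in \Cref{Subspace of L2}). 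For each $n$ the data $(\bm{\uppsi}_{\mathscr{H}^+},\bm{\uppsi}_{\mathscr{I}^+}^{(n)})$ are smooth and compactly supported, so \Cref{RWbackwardsexistence} produces a solution $\Psi^{(n)}$, and \Cref{Phi 1 backwards,,Phi 2 backwards} give the claimed formulae for $\bm{\upphi}^{(1)}_{\mathscr{I}^+,(n)}$ and $\bm{\upphi}^{(2)}_{\mathscr{I}^+,(n)}$ in terms of $\bm{\uppsi}_{\mathscr{I}^+}^{(n)}$.

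\textbf{Key steps.} Next I would show that the solutions $\Psi^{(n)}$ converge. Since $\mathscr{B}^-$ is bounded (indeed unitary by \Cref{B- inverts F+}), the Cauchy data $(\Psi^{(n)}|_{\Sigma^*},\slashednabla_{n_{\Sigma^*}}\Psi^{(n)}|_{\Sigma^*})$ converge in $\mathcal{E}^T_{\Sigma^*}$; by the well-posedness and continuous dependence in \Cref{RWwpCauchy} combined with the redshift estimates of \Cref{RWredshift} and the $r^p$-hierarchy of \Cref{RWrp} (including their higher-order versions \bref{RWrp k=1}, \bref{RWrp k=2}), $\Psi^{(n)}$ converges, together with all derivatives relevant to $\Phi^{(1)},\Phi^{(2)}$, locally uniformly on $J^+(\Sigma^*)$ and also in the $r^p$-weighted norms up to and including $\mathscr{I}^+$. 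The crucial quantitative input is the estimate \bref{this} from the proof of \Cref{RWbackwardsexistence}: for fixed $u\leq u_+$, the integral $\int_{\mathscr{C}_u\cap\{r>R\}}r^2|\nablav\Psi^{(n)}|^2$ is controlled uniformly in $n$ by $\|\bm{\uppsi}_{\mathscr{I}^+}^{(n)}\|_{\mathcal{E}^T_{\mathscr{I}^+}}^2+\|\mathring{\slashednabla}\bm{\uppsi}^{(n)}_{\mathscr{I}^+}\|^2+\ldots+\|\bm{\uppsi}_{\mathscr{H}^+}\|^2$, and likewise for the higher-order quantities $\Phi^{(1)},\Phi^{(2)}$ via the analogous transport estimates commuted with $\slashednabla_T$ and $\mathcal{L}_{\Omega_i}$. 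This gives uniform-in-$n$ bounds near $\mathscr{I}^+$ that license interchanging $\lim_{v\to\infty}$ with $\lim_{n\to\infty}$. Then in the identities
\begin{align}
    \bm{\upphi}^{(1)}_{\mathscr{I}^+,(n)}(u,\theta^A)=\int^{u_+}_u d\bar{u}\,(\mathcal{A}_2-2)\bm{\uppsi}^{(n)}_{\mathscr{I}^+},
\end{align}
the right-hand side converges to $\int^{u_+}_u d\bar{u}\,(\mathcal{A}_2-2)\bm{\uppsi}_{\mathscr{I}^+}$ (the integral is over the fixed finite interval $[u,u_+]$, so $\mathcal{E}^T_{\mathscr{I}^+}$-convergence of $\bm{\uppsi}^{(n)}_{\mathscr{I}^+}$, which controls the $L^2$-norm on bounded $u$-intervals after commuting with angular operators, suffices), while the left-hand side converges to $\bm{\upphi}^{(1)}_{\mathscr{I}^+}$ of the limiting solution. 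The identical argument handles the double-integral formula of \Cref{Phi 2 backwards}, using \bref{eq:191} and the commuted $r^p$-estimates one order higher.

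\textbf{Main obstacle.} The delicate point is not the algebra but the uniformity of the convergence near $\mathscr{I}^+$: one must ensure that the tail of $\bm{\uppsi}_{\mathscr{I}^+}$ near $u\to-\infty$, though it carries finite energy, does not spoil the $r^p$-weighted estimates that were originally derived for compactly supported data. The resolution is that all the relevant estimates in \Cref{subsection 5.1 Basic integrated boundedness and decay estimates} and in the proof of \Cref{RWbackwardsexistence} are already stated with right-hand sides depending only on $\mathcal{E}^T$-type norms of the scattering data (plus finitely many angular commutations), so they are stable under the approximation; what requires a line of care is checking that the fixed finite-interval integrals $\int_u^{u_+}(\bar u-u_-)[\mathcal{A}_2(\mathcal{A}_2-2)-6M\partial_u]\bm{\uppsi}^{(n)}_{\mathscr{I}^+}$ appearing in \Cref{Phi 2 backwards} converge --- here one integrates by parts in $\bar u$ to move the $\partial_u$ off $\bm{\uppsi}^{(n)}_{\mathscr{I}^+}$ onto the weight $(\bar u-u_-)$, turning it into an $L^1_{\bar u}L^2_{S^2}$ statement controlled by $\|\partial_u\bm{\uppsi}_{\mathscr{I}^+}\|_{L^2}$ and the pointwise bound on $\bm{\uppsi}_{\mathscr{I}^+}$, both finite under the hypothesis. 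Once these uniformities are in place, the passage to the limit is routine and the formulae of \Cref{Phi 1 backwards,,Phi 2 backwards} extend verbatim.
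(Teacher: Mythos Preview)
The paper states this corollary without proof, treating it as immediate from the proofs of \Cref{Phi 1 backwards,,Phi 2 backwards}. Your approximation argument is correct, but it is more elaborate than necessary.

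The key observation you are missing is a locality/domain-of-dependence one. The proofs of \Cref{Phi 1 backwards,,Phi 2 backwards} integrate the transport equations \bref{Phi 1 transport equation} and \bref{eq:191} along $\underline{\mathscr{C}}_v$ over the \emph{finite} interval $[u,u_+]$, using only that $\Psi$ (and hence $\Phi^{(1)},\Phi^{(2)}$) vanishes at $u=u_+$. In the backwards problem, the solution $\Psi$ on the region $\{u'\in[u,u_+]\}$ near $\mathscr{I}^+$ is determined entirely by the scattering data on $\mathscr{I}^+\cap[u,u_+]$ (together with $\bm{\uppsi}_{\mathscr{H}^+}$), which is compactly supported regardless of what happens for $u'<u$. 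Concretely, your cutoff solutions $\Psi^{(n)}$ actually \emph{coincide} with the limiting solution $\Psi$ on $\{u'\in[u,u_+]\}$ for all $n$ large enough that $\zeta_n\equiv1$ on $[u,u_+]$, by uniqueness. So the passage to the limit on both sides of your displayed identity is not merely a convergence statement --- it is an equality for large $n$. This renders the apparatus of $r^p$-estimates and uniform-in-$n$ bounds superfluous for this particular corollary; the finite-energy hypothesis is there only to ensure that $\bm{\uppsi}_{\mathscr{I}^+}$ belongs to the scattering space and hence that a global solution $\Psi$ exists via $\mathscr{B}^-$.

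Your argument is not wrong, and the machinery you invoke is exactly what is used later in \Cref{RW backwards noncompact} to remove the \emph{future} cutoffs (where no such locality shortcut is available). But for the present corollary the paper's tacit reasoning is simply: the proofs already given work verbatim on $[u,u_+]$.
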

\subsubsection{Backwards $r^p$-estimates}\label{backwards rp estimates}
It is possible to use energy conservation to develop $r$-weighted estimates in the backwards direction that are uniform in $u$, provided $\bm{\uppsi}_{\mathscr{I}^+}$ is compactly supported in $u$. These estimates will help us show that $\mathscr{B}^-$ satisfies \bref{unitarity of B- formula} without reference to $\mathscr{F}^+$ or forwards scattering. We will also use them to show that $\Psi|_{\Sigma^*}\longrightarrow0$ towards $i^0$, and later to obtain similar statements for $\alpha,\underline\alpha$. These estimates first appeared in \cite{AAG19}.\\
\indent Let $u_-,u_+,v_-,v_+$ be as in the proof of \Cref{RWbackwardsexistence}, so that $\mathscr{C}_{u_+}\cap\{r>R\}$ is beyond the support of $\Psi$. Let $u<u_+$, then repeating the proof of \Cref{RWrp} in the region $\mathscr{D}_{u,v_+}^{u_+,\infty}$ for $p=1,2$ gives us (using $d\omega=\sin\theta d\theta d\phi$)
\begin{align}
\begin{split}
    \int_{\mathscr{C}_u\cap\{v>v_+\}}dvd\omega\; r|\nablav\Psi|^2\lesssim& \int_{\mathscr{I}^+\cap\{u\in[u_-,u_+]\}}dud\omega\;r(|\slashednabla\Psi|^2+V|\Psi|^2)\\&+\int_{\mathscr{D}_{u,v_+}^{u_+,\infty}}dudvd\omega\; \left[|\nablav\Psi|^2+|\slashednabla\Psi|^2+V|\Psi|^2\right],
\end{split}
\end{align}
\begin{align}\label{136}
    \int_{\mathscr{C}_u\cap\{v>v_+\}}dvd\omega\;r^2|\nablav\Psi|^2\lesssim \int_{\mathscr{I}^+\cap\{u\in[u_-,u_+]\}}dud\omega\; r^2(|\slashednabla\Psi|^2+V|\Psi|^2)+\int_{\mathscr{D}_{u,v_+}^{u_+,\infty}} dudv d\omega \;r|\nablav\Psi|^2.
\end{align}
We estimate the bulk terms on the right hand side as follows: An energy estimate applied in $\mathscr{D}_{u,v_+}^{u_+,\infty}$ gives for all $u<u_+$:
\begin{align}\label{backwards p=1}
    \int_{\mathscr{C}_u\cap\{v>v_+\}}dvd\omega\;\left[|\nablav\Psi|^2+|\slashednabla\Psi|^2+V|\Psi|^2\right]\leq \int_{\mathscr{I}^+\cap\{u\in[u_-,u_+]\}}dud\omega\; |\partial_u\Psi|^2.
\end{align}
Integrating in $u$ gives
\begin{align}\label{backwards p=1 integrated}
    \int_{\mathscr{D}_{u,v_+}^{u_+,\infty}}dudvd\omega\;\left[|\nablav\Psi|^2+|\slashednabla\Psi|^2+V|\Psi|^2\right]&\leq \int_{u_-}^{u_+}du_1\int_{\mathscr{I}^+\cap\{u_2\in[u_1,u_+]\}}du_2d\omega\; |\partial_u\Psi|^2\\&=\int_{\mathscr{I}^+\cap\{u\in[u_-,u_+]\}}dud\omega\; (u_+-u)|\partial_u\Psi|^2,
\end{align}
knowing that $\nablagml\Psi=0$ at $u=u_+,v>v_+$. Returning to the above we have
\begin{align}
    \int_{\mathscr{C}_u\cap\{v>v_+\}}dvd\omega\;r|\nablav\Psi|^2\lesssim \int_{\mathscr{I}^+\cap\{u\in[u_-,u_+]\}}dud\omega\;r(|\slashednabla\Psi|^2+V|\Psi|^2)+(u_+-u)|\partial_u\Psi|^2.
\end{align}
Integrating once more in $u$ and substituting in (\ref{136}) gives us
\begin{align}\label{RWbackwardsboundedness}
    \int_{\mathscr{C}_u\cap\{v>v_+\}}dvd\omega\;r^2|\nablav\Psi|^2\lesssim \int_{\mathscr{I}^+\cap\{u\in[u_-,u_+]\}}dud\omega\;r(u_+-u)(|\slashednabla\Psi|^2+V|\Psi|^2)+\frac{1}{2}(u-u_+)^2|\partial_u\Psi|^2.
\end{align}
We can integrate in $u$ once more:
\begin{align}\label{RWbackwardsdecay}
    \int_{\mathscr{D}_{u,v_+}^{u_+,\infty}}dudvd\omega\;r^2|\nablav\Psi|^2\lesssim \int_{\mathscr{I}^+\cap\{u\in[u_-,u_+]\}}dud\omega\;\frac{1}{2}r(u-u_+)^2(|\slashednabla\Psi|^2+V|\Psi|^2)+\frac{1}{6}(u_+-u)^3|\partial_u\Psi|^2.
\end{align}
Note that all of the bulk integrals above could be done over $\mathscr{D}=\mathscr{D}_{u,v_+}^{u_+,\infty}\cup\{ J^-(\mathscr{C}_{u_-})\cap J^+(\Sigma^*)\}$ provided that $\partial_u\bm{\uppsi}_{\mathscr{I}^+}$ decays sufficiently fast, such that $\int_{-\infty}^u dud\omega\; |\partial_u\bm{\uppsi}_{\mathscr{I}^+}|^2$ is integrable on $(-\infty,u_+]$.
The first application will be to show that the $\mathscr{B}^+$ is unitary:
\begin{proposition}\label{RW unitary backwards}
    Let $\Psi$ arise from smooth scattering data $\bm{\uppsi}_{\mathscr{I}^+}\in \mathcal{E}^T_{\mathscr{I}^+}, \bm{\uppsi}_{\mathscr{H}^+}\in \mathcal{E}^T_{\mathscr{H}^+}$ as in \Cref{RWbackwardsexistence}. Assume that $\bm{\uppsi}_{\mathscr{I}^+}$ is supported on $u\leq u_+<\infty$, $\bm{\uppsi}_{\mathscr{H}^+}$ is supported on $v \leq v_+ < \infty$, and that $\int_{-\infty}^u dud\omega |\partial_u\bm{\uppsi}_{\mathscr{I}^+}|^2$ is integrable on $(-\infty, u_+]$. Then 
    \begin{align}
        \lim_{u\longrightarrow-\infty} F^T_{\mathscr{C}_u\cap J^+(\Sigma^*)}[\Psi]=0.
    \end{align}
\end{proposition}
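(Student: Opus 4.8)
The plan is to derive the result from conservation of the $T$-energy on a family of regions that exhaust $J^+(\Sigma^*)$ while peeling away from spatial infinity, together with the backwards $r^p$-estimates of \S\ref{backwards rp estimates} to rule out any loss of energy to $i^0$. Concretely, fix $u$ and integrate the $T$-energy identity \eqref{T derivative identity} over
\[
W_u := J^+(\Sigma^*)\cap\{u'\le u\},
\]
whose past boundary is $\Sigma^*\cap\{u'\le u\}$ and whose future boundary consists of the outgoing cone $\mathscr{C}_u\cap J^+(\Sigma^*)$ together with the piece $\mathscr{I}^+\cap\{u'\le u\}$ of null infinity; crucially $W_u$ does not meet $\mathscr{H}^+$, which sits at $u'=\infty$, so no horizon flux appears. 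This yields
\[
F^T_{\mathscr{C}_u\cap J^+(\Sigma^*)}[\Psi]\;=\;\mathbb{F}_{\Sigma^*\cap\{u'\le u\}}[\Psi]\;-\;\int_{\mathscr{I}^+\cap\{u'\le u\}}|\partial_u\bm{\uppsi}_{\mathscr{I}^+}|^2\,du\,d\omega .
\]
By \Cref{RWbackwardsexistence} we know $(\Psi|_{\Sigma^*},\slashednabla_{n_{\Sigma^*}}\Psi|_{\Sigma^*})\in\mathcal{E}^T_{\Sigma^*}$, so $\mathbb{F}_{\Sigma^*}[\Psi]<\infty$ and hence its tail $\mathbb{F}_{\Sigma^*\cap\{u'\le u\}}[\Psi]\to 0$ as $u\to-\infty$; likewise $\int_{\mathscr{I}^+\cap\{u'\le u\}}|\partial_u\bm{\uppsi}_{\mathscr{I}^+}|^2\to 0$ by the hypotheses on $\bm{\uppsi}_{\mathscr{I}^+}$. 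Therefore $F^T_{\mathscr{C}_u\cap J^+(\Sigma^*)}[\Psi]\to 0$, and as a by-product $\mathbb{F}_{\Sigma^*\cap\{u'\le u\}}[\Psi]\to 0$, i.e.\ $\Psi|_{\Sigma^*}\to 0$ towards $i^0$.

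The substance is the rigorous justification of the energy identity on $W_u$: since $W_u$ hugs $i^0$ along its $\Sigma^*$–$\mathscr{I}^+$ corner, one must integrate first over $W_u\cap\{r\le R\}$, pick up a flux on the truncating cylinder $\{r=R\}$, and let $R\to\infty$; the backwards $r^p$-estimates \eqref{RWbackwardsboundedness}, \eqref{RWbackwardsdecay} (uniform in $u$, and finite under the integrability hypothesis) supply the $r$-weighted decay of $\Psi$ that forces this cylinder flux to vanish, so that there is no leakage of $T$-energy to $i^0$. For the $|\partial_v\Psi|^2$-part of $F^T_{\mathscr{C}_u\cap J^+(\Sigma^*)}$ one uses \eqref{RWbackwardsdecay} with \eqref{RWbackwardsboundedness} to see that $\int_{\mathscr{C}_u\cap J^+(\Sigma^*)}r^2|\nablav\Psi|^2$ stays bounded and in fact vanishes as $u\to-\infty$, then divides by $\inf_{\mathscr{C}_u\cap J^+(\Sigma^*)}r^2\to\infty$; for the angular and potential contributions $\Omega^2|\slashednabla\Psi|^2+V|\Psi|^2$ one commutes the whole argument with the rotational Killing fields $\Omega_i$ (whose induced radiation fields on $\mathscr{I}^+\cap\{u'\le u\}$ vanish in the limit), integrates \eqref{RW first transverse derivative in the 3 direction} inward from the $\mathscr{I}^+$ end of $\mathscr{C}_u$ where $\Psi$ and its angular derivatives vanish, and invokes the Poincaré inequality of \Cref{poincaresection} together with the $S^2$ elliptic estimate bounding $|\mathring{\slashednabla}\Psi|^2$ by $\sum_i|\mathcal{L}_{\Omega_i}\Psi|^2$.

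The main obstacle is this second step: setting up the full backwards $r^p$-hierarchy, including its angular-commuted versions, uniformly in $u$, and extracting from it a genuine (not merely subsequential) limit $F^T_{\mathscr{C}_u\cap J^+(\Sigma^*)}[\Psi]\to 0$ while tracking the growing weights $(u_+-u)^k$ and the $\mathscr{I}^+$-boundary terms — this is exactly the point at which the assumption that $\int_{-\infty}^u|\partial_u\bm{\uppsi}_{\mathscr{I}^+}|^2$ be integrable on $(-\infty,u_+]$ is consumed. By contrast the energy identity on bounded regions and the passage from $r^2$-weighted to unweighted fluxes are routine, and the reduction in the first paragraph is immediate once the no-loss-at-$i^0$ property is in hand.
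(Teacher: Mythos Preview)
Your approach is workable but considerably more involved than the paper's, and there is at least one questionable step.

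The paper's proof is a two-line argument using the \emph{complementary} region: applying the $T$-energy identity on $J^+(\Sigma^*)\cap\{u'\geq u\}$ (which does \emph{not} touch $i^0$) gives
\[
\mathbb{F}^T_{\Sigma^*\cap\{u'>u\}}[\Psi]+F^T_{\mathscr{C}_u\cap J^+(\Sigma^*)}[\Psi]=\|\bm{\uppsi}_{\mathscr{H}^+}\|^2_{\mathcal{E}^T_{\mathscr{H}^+_{\geq0}}}+\|\bm{\uppsi}_{\mathscr{I}^+}\|^2_{\mathcal{E}^T_{\mathscr{I}^+}},
\]
from which $u\mapsto F^T_{\mathscr{C}_u\cap J^+(\Sigma^*)}[\Psi]$ is monotone as $u\to-\infty$. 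The integrability hypothesis together with \eqref{backwards p=1 integrated} then shows this monotone function is integrable on $(-\infty,u_+]$, hence tends to $0$. No $i^0$ boundary, no cylinder limits, no $r^p$-hierarchy beyond the first level.

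By contrast you work on $W_u=J^+(\Sigma^*)\cap\{u'\leq u\}$, which forces you to confront $i^0$ and justify the energy identity via a cylinder cutoff; this is precisely the difficulty the paper sidesteps. Your first-paragraph reduction is correct \emph{given} the identity on $W_u$, but note that establishing that identity is essentially equivalent to proving \Cref{RW unitary backwards corollary}, which the paper derives \emph{from} the proposition at hand. Your direct argument in the second paragraph also has a gap: the backwards $r^p$-estimates \eqref{RWbackwardsboundedness}, \eqref{RWbackwardsdecay} have right-hand sides that grow in $|u|$ (at least linearly under the stated hypothesis, quadratically or cubically for compactly supported data), so the claim that $\int_{\mathscr{C}_u\cap J^+(\Sigma^*)}r^2|\nablav\Psi|^2$ ``stays bounded and in fact vanishes'' is not what those estimates deliver. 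Dividing by $\inf r^2\sim |u|^2$ may or may not win, depending on exactly how fast the weighted flux grows --- you would need to track this carefully, and the angular/potential terms require a separate argument again.

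The upshot: the paper's choice of region trivialises what you identify as ``the main obstacle''. Monotonicity plus integrability is the whole proof.
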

\begin{proof}
     The energy estimate
     \begin{align}
         \mathbb{F}_{\Sigma^*}^T[\Psi\cdot\theta_u]+F^T_{\mathscr{C}_u\cap J^+(\Sigma^*)}= \|\bm{\uppsi}_{\mathscr{H}^+}\|^2_{\mathcal{E}^T_{\mathscr{H}^+_{\geq0}}}+\|\bm{\uppsi}_{\mathscr{I}^+}\|^2_{\mathcal{E}^T_{\mathscr{I}^+}}
     \end{align}
    implies that $F^T_{\mathscr{C}_u\cap J^+(\Sigma^*)}[\Psi]$ decays monotonically as $u\longrightarrow\infty$ (here $\theta_u$ is the characteristic function of the subset $\Sigma^*\textbackslash J^-(\mathscr{C}_u)$ of $\Sigma^*$). Combining this with \bref{backwards p=1 integrated} gives the result.
\end{proof}
\begin{corollary}\label{RW unitary backwards corollary}
Let $\Psi$ be as in \Cref{RW unitary backwards}, then
\begin{align}
    \|(\Psi|_{\Sigma^*},\slashednabla_{n_{\Sigma^*}}\Psi|_{\Sigma^*})\|_{\mathcal{E}^T_{\Sigma^*}}^2=\|\bm{\uppsi}_{\mathscr{H}^+}\|^2_{\mathcal{E}^T_{\mathscr{H}^+_{\geq0}}}+\|\bm{\uppsi}_{\mathscr{I}^+}\|^2_{\mathcal{E}^T_{\mathscr{I}^+}}.
\end{align}
\end{corollary}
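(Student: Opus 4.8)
The plan is to combine the backwards $r^p$-estimates of \Cref{backwards rp estimates} with the energy estimate already used in \Cref{RW unitary backwards} to upgrade the inequality \bref{subunitarity of B-} to an equality. Recall from the proof of \Cref{RWbackwardsexistence} that a $T$-energy estimate on the region bounded by $\Sigma^*$, $\overline{\mathscr{H}^+}$, $\mathscr{I}^+$ and the null hypersurface $\mathscr{C}_u$ gives, for $u<u_+$,
\begin{align}
    \mathbb{F}^T_{\Sigma^*}[\Psi\cdot\theta_u]+F^T_{\mathscr{C}_u\cap J^+(\Sigma^*)}[\Psi]=\|\bm{\uppsi}_{\mathscr{H}^+}\|^2_{\mathcal{E}^T_{\mathscr{H}^+_{\geq0}}}+\int_{\mathscr{I}^+\cap\{u'\leq u\}}du'd\omega\;|\partial_u\bm{\uppsi}_{\mathscr{I}^+}|^2,
\end{align}
where $\theta_u$ is the characteristic function of $\Sigma^*\textbackslash J^-(\mathscr{C}_u)$, and we have used that all the scattering data is captured to the past of $\mathscr{C}_u$ once $u<u_-$. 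Letting $u\longrightarrow-\infty$, the left-hand side tends to $\mathbb{F}^T_{\Sigma^*}[\Psi]=\|(\Psi|_{\Sigma^*},\slashednabla_{n_{\Sigma^*}}\Psi|_{\Sigma^*})\|^2_{\mathcal{E}^T_{\Sigma^*}}+\lim_{u\to-\infty}F^T_{\mathscr{C}_u\cap J^+(\Sigma^*)}[\Psi]$, while the right-hand side tends to $\|\bm{\uppsi}_{\mathscr{H}^+}\|^2_{\mathcal{E}^T_{\mathscr{H}^+_{\geq0}}}+\|\bm{\uppsi}_{\mathscr{I}^+}\|^2_{\mathcal{E}^T_{\mathscr{I}^+}}$. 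Hence it suffices to show that the flux through the far null cone vanishes: $\lim_{u\to-\infty}F^T_{\mathscr{C}_u\cap J^+(\Sigma^*)}[\Psi]=0$, which is precisely the content of \Cref{RW unitary backwards}.

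So the only point to verify is that the hypotheses of \Cref{RW unitary backwards} are met in the present setting, i.e.\ that $\bm{\uppsi}_{\mathscr{I}^+}$ is supported in $u\leq u_+<\infty$, $\bm{\uppsi}_{\mathscr{H}^+}$ in $v\leq v_+<\infty$, and that $\int_{-\infty}^u\!\int_{S^2}|\partial_u\bm{\uppsi}_{\mathscr{I}^+}|^2\,du\,d\omega$ is integrable on $(-\infty,u_+]$. Under the blanket assumption of \Cref{RW unitary backwards corollary} that $\Psi$ is as in \Cref{RW unitary backwards}, these are exactly the standing hypotheses; in particular the compact-support case $\bm{\uppsi}_{\mathscr{I}^+}\in\Gamma_c(\mathscr{I}^+)$, $\bm{\uppsi}_{\mathscr{H}^+}\in\Gamma_c(\mathscr{H}^+_{\geq0})$ used to build $\mathscr{B}^-$ is covered trivially. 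Thus \Cref{RW unitary backwards} applies verbatim, giving the vanishing of the limiting flux, and the equality follows.

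\textbf{Remarks on the main obstacle.} There is essentially no new obstacle here: the real work was done in \Cref{RW unitary backwards}, whose proof rests on the backwards $r^p$-estimate \bref{backwards p=1 integrated} controlling $\int_{\mathscr{D}^{u_+,\infty}_{u,v_+}}(|\nablav\Psi|^2+|\slashednabla\Psi|^2+V|\Psi|^2)$ by $\int_{\mathscr{I}^+\cap\{u'\in[u_-,u_+]\}}(u_+-u')|\partial_u\bm{\uppsi}_{\mathscr{I}^+}|^2\,du'\,d\omega$, together with the monotonic decay of $F^T_{\mathscr{C}_u\cap J^+(\Sigma^*)}[\Psi]$ in $u$ coming from $T$-energy conservation. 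Combining monotonicity of a nonnegative quantity with the finiteness of its integral in $u$ forces the quantity to $0$. The present corollary is then a one-line consequence: pass to the limit $u\longrightarrow-\infty$ in the energy identity and discard the now-vanishing null-cone flux. I would present it exactly in that order—first recall the energy identity with the $\mathscr{C}_u$ boundary term, then invoke \Cref{RW unitary backwards} to kill that term, then read off the equality—so that no part of the calculation needs to be redone.
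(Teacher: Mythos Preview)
Your proposal is correct and follows exactly the approach the paper intends: the corollary is an immediate consequence of \Cref{RW unitary backwards}, obtained by passing to the limit $u\to-\infty$ in the $T$-energy identity and using that proposition to kill the $\mathscr{C}_u$-flux. One cosmetic slip: in your displayed identity the $\mathscr{I}^+$ integral should be over $\{u'\geq u\}$ (not $\{u'\leq u\}$), consistent with the region bounded by $\Sigma^*\cdot\theta_u$, $\mathscr{C}_u$, $\mathscr{H}^+$ and $\mathscr{I}^+$; this does not affect your limit since for $u$ sufficiently negative the full support of $\bm\uppsi_{\mathscr{I}^+}$ is captured either way.
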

\indent In the following, we show that if $\bm{\uppsi}_{\mathscr{I}^+}$ is compactly supported on $\mathscr{I}^+$ then we have pointwise decay for $\Psi$ towards $i^0$:
\begin{proposition}\label{RW backwards decay at sigma}
    Let $\Psi$ arise from scattering data $(\bm{\uppsi}_{\mathscr{I}^+},\bm{\uppsi}_{\mathscr{H}^+})\in\Gamma_c(\mathscr{I}^+)\times\Gamma_c(\mathscr{H}^+)$ as in \Cref{RWbackwardsexistence}, then $\Psi|_{\Sigma}\longrightarrow 0$ as $r\longrightarrow \infty$.
\end{proposition}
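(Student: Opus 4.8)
The plan is to exploit the $r$-weighted estimates just established in \Cref{backwards rp estimates}, applied to $\Psi$ and to its $\slashednabla_T$-commuted versions, together with a fundamental-theorem-of-calculus argument along an outgoing null cone, exactly as in the proof of \Cref{RWdecayscri} but now working entirely with backward estimates that are uniform in $u$. Since $\bm{\uppsi}_{\mathscr{I}^+}$ is compactly supported, all the bulk integrals appearing in \bref{backwards p=1 integrated}, \bref{RWbackwardsboundedness}, \bref{RWbackwardsdecay} are finite, and in particular $\int_{\mathscr{C}_u\cap\{v>v_+\}} dv\, d\omega\; r^2|\nablav\Psi|^2$ is bounded uniformly in $u$ while $\int_{\mathscr{C}_u\cap\{v>v_+\}} dv\, d\omega\; |\nablav\Psi|^2$ decays as $u\longrightarrow-\infty$ by \bref{backwards p=1} (the right-hand side is the tail $\int_{\mathscr{I}^+\cap\{\bar u\geq u\}} |\partial_u\bm{\uppsi}_{\mathscr{I}^+}|^2$, which vanishes for $u$ to the past of the support).

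First I would fix a large $R$ and a point on $\Sigma$ with $r=r_0>R$, lying on the cone $\mathscr{C}_u$ with $u\to-\infty$ as $r_0\to\infty$. Writing $\Psi|_{\Sigma\cap\{r=r_0\}}=\Psi|_{\Sigma\cap\{r=R\}} - \int \nablav\Psi$ along $\mathscr{C}_u$ from $r=R$ to $r=r_0$, Cauchy--Schwarz and a Hardy inequality give, on each sphere,
\begin{align}
\int_{S^2_{u,\cdot}}|\Psi|_{\Sigma\cap\{r=r_0\}}|^2 \lesssim \int_{S^2_{u,\cdot}}|\Psi|_{\Sigma\cap\{r=R\}}|^2 + \left(\int_{\mathscr{C}_u\cap\{r>R\}}|\nablav\Psi|^2\right)\left(\int_{\mathscr{C}_u\cap\{r>R\}}r^2|\nablav\Psi|^2\right).
\end{align}
The second factor on the right is bounded by \bref{RWbackwardsboundedness}, the first factor decays as $u\to-\infty$ by \bref{backwards p=1}, and the boundary term at $r=R$ decays by \Cref{RW unitary backwards} (which forces $F^T_{\mathscr{C}_u\cap J^+(\Sigma^*)}[\Psi]\to0$, hence control of $\Psi$ on $\{r=R\}$ together with $\slashednabla_T$-commuted energies). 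Commuting the whole argument with the angular Killing fields $\Omega_i$ and using the Sobolev embedding $W^{3,1}(S^2)\hookrightarrow L^\infty(S^2)$ upgrades this $L^2$-on-spheres statement to the pointwise decay $\sup_{S^2}|\Psi|_{\Sigma\cap\{r=r_0\}}|\to0$ as $r_0\to\infty$, which is the claim.

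I expect the main obstacle to be handling the boundary term at $\{r=R\}$ cleanly: one needs decay of $\Psi$ (not just its derivatives) on the timelike tube $\{r=R\}$ as $u\to-\infty$, and the natural way to get it is again a fundamental-theorem-of-calculus step back to $\mathscr{H}^+$, controlled by the exponential estimate \bref{RW exponential backwards near H+} and the compact support of $\bm{\uppsi}_{\mathscr{H}^+}$ — so for $u$ sufficiently negative the relevant portion of $\mathscr{H}^+$ is beyond the support and $\Psi$ vanishes there identically, making the tube contribution vanish outright rather than merely decay. The remaining bookkeeping — choosing the commuted quantities $\slashednabla_T^{i}\slashed{\mathcal{L}}_{\Omega}^\gamma\Psi$ so that all the invoked $r^p$-estimates and energy identities apply to them, and checking the integrability hypotheses of \Cref{RW unitary backwards} for compactly supported data (trivially satisfied) — is routine and proceeds exactly as in \Cref{RWdecayscri,,RW unitary backwards}.
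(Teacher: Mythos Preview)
Your proposal contains a geometric error in the fundamental-theorem-of-calculus step. You write $\Psi|_{\Sigma\cap\{r=r_0\}}=\Psi|_{\Sigma\cap\{r=R\}}-\int\nablav\Psi$ ``along $\mathscr{C}_u$ from $r=R$ to $r=r_0$'', but the two endpoints lie on \emph{different} outgoing cones: the $\Sigma$-point at $r=r_0$ sits on $\mathscr{C}_{u}$ with $u=-\tfrac12 r_0^*$, while the $\Sigma$-point at $r=R$ sits on $\mathscr{C}_{-\frac12 R^*}$. Moreover, on the portion $\mathscr{C}_u\cap J^+(\Sigma)$ one has $r\geq r_0>R$, so there is no $\{r=R\}$ sphere on that cone to serve as an intermediate boundary. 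Your displayed Hardy-type estimate, and the subsequent discussion of controlling a boundary term on the tube $\{r=R\}$, therefore do not apply as written. A smaller slip: the claim that $\int_{\mathscr{C}_u}|\nablav\Psi|^2\to0$ as $u\to-\infty$ because ``the tail $\int_{\mathscr{I}^+\cap\{\bar u\geq u\}}|\partial_u\bm{\uppsi}_{\mathscr{I}^+}|^2$ vanishes for $u$ past the support'' has the direction reversed --- for $u<u_-$ that integral is the \emph{full} energy, not zero. (The decay of $F^T_{\mathscr{C}_u}$ does hold, but by \Cref{RW unitary backwards}, not by that tail argument.)

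The paper's proof sidesteps all of this by choosing the \emph{other} natural endpoint: from the $\Sigma^*$-point at $r=R$ one integrates along $\mathscr{C}_{-\frac12 R^*}$ all the way to $\mathscr{I}^+$. The boundary term there is $\bm{\uppsi}_{\mathscr{I}^+}(-\tfrac12 R^*)$, which vanishes identically once $R$ is large enough that $-\tfrac12 R^*<u_-$. Cauchy--Schwarz then gives
\[
\int_{S^2}\bigl|\Psi|_{\Sigma^*\cap\{r=R\}}\bigr|\;\lesssim\;\frac{1}{\sqrt{R}}\int_{\mathscr{C}_{-\frac12 R^*}\cap\{v>\frac12 R^*\}}r^2|\nablav\Psi|^2,
\]
and the right-hand side is bounded uniformly in $R$ by \bref{RWbackwardsboundedness}. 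No fixed intermediate radius, no tube term, and no product-of-integrals structure is needed; the single weighted estimate \bref{RWbackwardsboundedness} plus compact support of $\bm{\uppsi}_{\mathscr{I}^+}$ already gives the $R^{-1/2}$ decay.
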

\begin{proof}
For $R$ large enough, we can estimate
\begin{align}
    \int_{S^2} \left|\Psi|_{\Sigma^*\cap\{r=R\}}-\bm{\uppsi}_{\mathscr{I}^+}\right|\lesssim \int_{v=\frac{1}{2}R^*}^{\infty} \int_{S^2}\sin\theta d{\bar{v}}d\theta d\phi|\nablav\Psi|\lesssim \frac{1}{\sqrt{R}}\int_{\mathscr{C}_{-\frac{1}{2}R^*}\cap\{v>\frac{1}{2}R^*\}}r^2|\nablav\Psi|^2.
\end{align}
The result follows noting that $\bm{\uppsi}_{\mathscr{I}^+}$ is compactly supported and that the integral on the right hand side is bounded according to (\ref{RWbackwardsboundedness}).
\end{proof}
\begin{proposition} Let $\Psi$ arise from the backwards evolution of scattering data $(\bm{\uppsi}_{\mathscr{I}^+},\bm{\uppsi}_{\mathscr{H}^+})$ in $\Gamma_c (\mathscr{I}^+)\times \Gamma_c (\mathscr{H}^+_{\geq0})$ as in \Cref{RWbackwardsexistence}, then 
\begin{align}
    \lim_{R\longrightarrow\infty} \int_{\underline{\mathscr{C}}_{v=\frac{1}{2}R^*} \cap J^+(\Sigma^*)} \Psi= \int_{\mathscr{I}^+} \bm{\uppsi}_{\mathscr{I}^+}.
\end{align}
\end{proposition}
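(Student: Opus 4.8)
The plan is to integrate the Regge--Wheeler equation \bref{RW} over a large characteristic rectangle and pass to the limit $R \to \infty$, using the backwards $r^p$-estimates of \Cref{backwards rp estimates} to control all error terms. First I would fix the configuration from the proof of \Cref{RWbackwardsexistence}: let $\bm{\uppsi}_{\mathscr{I}^+}$ be supported on $u \in [u_-,u_+]$ and $\bm{\uppsi}_{\mathscr{H}^+}$ on $v \in [v_-,v_+]$, so that $\Psi$ vanishes on $\mathscr{C}_{u_+} \cap \{r > R\}$ for $R$ large. Write \bref{RW} as $\nablau(\nablav \Psi) = \Omega^2\slashed{\Delta}\Psi - \frac{\Omega^2}{r^2}(3\Omega^2+1)\Psi$, and integrate over the region $\mathscr{D}^{u_+,v}_{u_-,\frac{1}{2}R^*} \cap J^+(\Sigma^*)$ (the rectangle bounded by $\underline{\mathscr{C}}_{v = \frac12 R^*}$, $\mathscr{C}_{u_+}$, $\mathscr{C}_{u_-}$ or $\Sigma^*$, and $\mathscr{C}_{u_-}$) with measure $du\,dv\,\sin\theta\,d\theta\,d\phi$. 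The left-hand side telescopes in $u$, producing $\int_{\underline{\mathscr{C}}_{v=\frac12 R^*}} \nablav\Psi\big|^{u_+}_{u_-}$ (and a contribution from $\Sigma^*$), while the right-hand side integrates $\slashed{\Delta}\Psi$ and $\Psi$ against the bounded weight $\Omega^2$ or $\Omega^2/r^2$ over the rectangle.

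Next I would observe that $\int_{\mathscr{I}^+}\bm{\uppsi}_{\mathscr{I}^+}$ can be produced from the left-hand boundary term: since $\Psi \to \bm{\uppsi}_{\mathscr{I}^+}$ as $v \to \infty$ and $\nablav\Psi = \partial_v\Psi$ has an integrable-in-$L^1_v$ profile (by the $r^p$-estimate \bref{136} and its consequences, or directly by \Cref{RWradscri}-type arguments), the quantity $\int_{\underline{\mathscr{C}}_{v=\frac12 R^*}\cap J^+(\Sigma^*)} \Psi$ is exactly what appears once we integrate $\nablav\Psi$ back up from $\mathscr{I}^+$; more directly, I would pair this display with the fundamental theorem of calculus in $v$: $\int_{\underline{\mathscr{C}}_{v=\frac12 R^*}} \Psi = \int_{\mathscr{I}^+} \bm{\uppsi}_{\mathscr{I}^+} - \int_{\underline{\mathscr{C}}_{v=\frac12 R^*}}\int_v^\infty \nablav\Psi$, reducing the statement to showing that the double integral $\int_{\underline{\mathscr{C}}_{v=\frac12 R^*}\cap J^+(\Sigma^*)}\int_{\bar{v}=\frac12 R^*}^\infty \nablav\Psi(u,\bar v)\,d\bar v\,du\,d\omega \to 0$ as $R \to \infty$. (Here one uses that on $\underline{\mathscr{C}}_{v=\frac12 R^*}$ the values of $u$ in $J^+(\Sigma^*)$ with $r(u, \frac12 R^*)$ bounded away from $\infty$ form a compact set shrinking appropriately, and outside the support of $\bm{\uppsi}_{\mathscr{I}^+}$ there is extra decay.)

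The decay of the remainder integral is where the backwards $r^p$-estimates do the work: by Cauchy--Schwarz in $\bar v$, $\left|\int_v^\infty \nablav\Psi\,d\bar v\right|^2 \leq \left(\int_v^\infty \frac{d\bar v}{r^2}\right)\left(\int_v^\infty r^2|\nablav\Psi|^2\,d\bar v\right) \lesssim \frac{1}{r(u,v)}\int_{\mathscr{C}_u\cap\{v>\frac12 R^*\}} r^2|\nablav\Psi|^2$, and the latter is bounded uniformly in $u$ by \bref{RWbackwardsboundedness}, with the bound being an integral of $r$-weighted data on $\mathscr{I}^+ \cap \{u\in[u_-,u_+]\}$ that is \emph{finite and independent of $R$} precisely because $\bm{\uppsi}_{\mathscr{I}^+}$ is compactly supported. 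Since $r(u,v) \to \infty$ as $R\to\infty$ on $\underline{\mathscr{C}}_{v=\frac12 R^*}$ with $u$ in the relevant compact range, integrating the pointwise bound $\lesssim r^{-1/2}(\text{const})$ over the bounded $u$-interval and $S^2$ gives a quantity tending to $0$; the $\Sigma^*$-boundary contribution vanishes by \Cref{RW backwards decay at sigma} together with a similar $r$-weighted tail bound, and the $\mathscr{H}^+$ side contributes nothing to this particular integral since $\bm{\uppsi}_{\mathscr{H}^+}$ is compactly supported and we are integrating over a fixed outgoing cone near infinity. The main obstacle I anticipate is bookkeeping the exact geometric range of integration on $\underline{\mathscr{C}}_{v=\frac12 R^*} \cap J^+(\Sigma^*)$ and verifying that the $r$-growth of the domain does not defeat the $r^{-1/2}$ gain --- this should be handled by noting the integrand is supported (up to rapidly decaying tails) in the fixed $u$-window $[u_-,u_+]$, so the domain has bounded $u$-extent and the $r^{-1/2}$ decay wins cleanly.
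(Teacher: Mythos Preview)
Your core strategy---write $\Psi(u,\tfrac12 R^*)-\bm{\uppsi}_{\mathscr{I}^+}(u)=-\int_{\frac12 R^*}^\infty \nablav\Psi\,d\bar v$ via the fundamental theorem of calculus and control the remainder with the backwards $r^p$-estimates---is exactly the paper's approach. (The first paragraph about integrating \bref{RW} over a rectangle is a detour that you correctly drop; it plays no role.)

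There is, however, a genuine gap in your final step. The $u$-range on $\underline{\mathscr{C}}_{v=\frac12 R^*}\cap J^+(\Sigma^*)$ is \emph{not} bounded: it runs from the intersection with $\Sigma^*$ at $u\approx-\tfrac12 R^*$ up to $u_+$, so its length grows like $R^*$. For $u<u_-$ the radiation field vanishes but $\Psi(u,\tfrac12 R^*)$ does not (the backwards evolution spreads the solution into this region), and your pointwise bound $|\Psi(u,v)-\bm{\uppsi}_{\mathscr{I}^+}(u)|\lesssim r(u,v)^{-1/2}C$ from \bref{RWbackwardsboundedness} is not a ``rapidly decaying tail'': since $r(u,\tfrac12 R^*)\approx\tfrac12 R^*-u$, integrating $r^{-1/2}$ over $u\in[-\tfrac12 R^*,u_-]$ yields a contribution of order $\sqrt{R^*}$, which diverges. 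Your assertion that ``the domain has bounded $u$-extent'' is therefore false, and the uniform bound \bref{RWbackwardsboundedness} alone does not close the estimate.

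The paper repairs this by applying Cauchy--Schwarz over the whole spacetime region $\mathscr{D}$ rather than pointwise in $u$:
\[
\Bigl[\int_{\mathscr{D}}|\nablav\Psi|\Bigr]^2\le\int_{\mathscr{D}}r^{-2}\cdot\int_{\mathscr{D}}r^2|\nablav\Psi|^2.
\]
The weight factor $\int_{\mathscr{D}}r^{-2}\,du\,dv\,d\omega$ is bounded uniformly in $R$ (a short computation gives a limit of order $\ln 2$), while the energy factor $\int_{\mathscr{D}}r^2|\nablav\Psi|^2$ tends to zero as $R\to\infty$ because $\min_{\mathscr{D}}r\to\infty$ and the integrand lies in $L^1$ of the full exterior region by the \emph{spacetime} estimate \bref{RWbackwardsdecay}---which is precisely the ingredient you do not invoke.
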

\begin{proof}
Assume the support of $\bm{\uppsi}_{\mathscr{I}^+}$ is in $\mathscr{I}^+\cap \{u\in[u_-,u_+]\}, -\infty<u_-<u_+<\infty$. Let $R$ be such that $u|_{t=0,r=R}=-\frac{1}{2}R^*<u_-$ and let $\tilde{v}=v(t=0,r=R)=\frac{1}{2}R^*$, $\tilde{u}>u_+$. We have
\begin{align}
    \left|\int_{\mathscr{C}_{u=-\frac{1}{2}R^*} \cap J^+(\Sigma)} \Psi-\int_{\mathscr{I}^+} \bm{\uppsi}_{\mathscr{I}^+}\right|^2\leq \left[\int_{\mathscr{D}} |\nablav \Psi|\right]^2\lesssim \frac{1}{{R}}\int_{\mathscr{D}} r^2|\nablav\Psi|^2,
\end{align}
where $\mathscr{D}= J^+(\Sigma^*\cap\{r\geq R\})\cap  J^-(\mathscr{C}_{\tilde{u}})$. The result follows as (\ref{RWbackwardsdecay}) gives us that $\int_{\mathscr{D}} r^2|\nablav\Psi|^2<\infty$.
\end{proof}
\subsubsection{Backwards scattering for data of noncompact support}\label{subsubsection 5.5.3 backwards scattering data of noncompact support}
Estimates \bref{ptwise infinity} and \bref{ptwise horizon} are uniform in the future cutoffs of $\bm{\uppsi}_{\mathscr{I}^+}, \bm{\uppsi}_{\mathscr{H}^+}$ if the relevant fluxes on $\mathscr{I}^+, \mathscr{H}^+_{\geq0}$ are finite, in which case we can remove these cutoffs altogether and work with non-compactly supported scattering data. This follows by a simple modification of the argument leading to the limit $\Psi$ in the proof of \Cref{RWbackwardsexistence}.
\begin{proposition}\label{RW backwards noncompact}
    The results of \Cref{RWbackwardsexistence} hold when $\bm{\uppsi}_{\mathscr{I}^+}, \bm{\uppsi}_{\mathscr{H}^+}$ are not compactly supported, provided
    \begin{align}%\label{this hypothesis}
    &\int_{[u_-,\infty)\times S^2} du\sin\theta d\theta d\phi \sum_{|\gamma|\leq2}| \slashed{\mathcal{L}}^\gamma_{S^2}\partial_u\bm{\uppsi}_{\mathscr{I}^+}|^2+|\slashed{\mathcal{L}}^\gamma_{S^2}\bm{\uppsi}_{\mathscr{I}^+}|^2+|\slashed{\mathcal{L}}^\gamma_{S^2}\mathring{\slashednabla}\bm{\uppsi}_{\mathscr{I}^+}|^2 <\infty,\label{tthis_hypothesis_1}\\
    &\int_{[v_-,\infty)\times S^2} dv\sin\theta d\theta d\phi\sum_{|\gamma|\leq2}| \slashed{\mathcal{L}}^\gamma_{S^2}\partial_v\bm{\uppsi}_{\mathscr{H}^+}|^2+|\slashed{\mathcal{L}}^\gamma_{S^2}\bm{\uppsi}_{\mathscr{H}^+}|^2+|\slashed{\mathcal{L}}^\gamma_{S^2}\mathring{\slashednabla}\bm{\uppsi}_{\mathscr{H}^+}|^2 <\infty.\label{thist_hypothesis_2}
\end{align}
Corollaries \ref{Phi 1 backwards} and \ref{Phi 2 backwards} also hold provided the fluxes of \bref{tthis_hypothesis_1}, \bref{thist_hypothesis_2} are finite with the sums running up to $|\gamma|\leq 4$.
\end{proposition}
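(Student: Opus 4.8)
The statement to prove is \Cref{RW backwards noncompact}: that the backwards existence result \Cref{RWbackwardsexistence} and the transverse-derivative formulae of \Cref{Phi 1 backwards,,Phi 2 backwards} persist when the scattering data $\bm{\uppsi}_{\mathscr{I}^+}, \bm{\uppsi}_{\mathscr{H}^+}$ are no longer compactly supported towards the future ends of $\mathscr{I}^+$ and $\overline{\mathscr{H}^+}$, provided only that the finitely-many-angular-derivative fluxes \bref{tthis_hypothesis_1}, \bref{thist_hypothesis_2} are finite.

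\textbf{Approach.} The plan is to reduce to the compactly supported case already established in \Cref{RWbackwardsexistence} by an exhaustion/truncation argument, using the fact that the pointwise estimates \bref{ptwise infinity} and \bref{ptwise horizon} derived in the proof of \Cref{RWbackwardsexistence}, together with their higher-order (angular-commuted) analogues, depend on the scattering data \emph{only} through the fluxes appearing on their right-hand sides, and in particular not through the location of the future cutoffs $u_+, v_+$. First I would fix a monotone sequence $u_+^{(n)} \to \infty$, $v_+^{(n)} \to \infty$ and let $\bm{\uppsi}_{\mathscr{I}^+}^{(n)} = \chi_n \bm{\uppsi}_{\mathscr{I}^+}$, $\bm{\uppsi}_{\mathscr{H}^+}^{(n)} = \underline\chi_n \bm{\uppsi}_{\mathscr{H}^+}$ be smooth truncations supported on $u \leq u_+^{(n)}$, $v \leq v_+^{(n)}$ respectively, chosen so that the truncated data converge to the original data in the norms controlled by \bref{tthis_hypothesis_1}, \bref{thist_hypothesis_2}. \Cref{RWbackwardsexistence} produces solutions $\Psi^{(n)}$ on $J^+(\Sigma^*)$. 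The key point is that the estimates \bref{this}, \bref{ptwise infinity}, \bref{RW exponential backwards near H+}, \bref{ptwise horizon} — and their commuted versions with $\slashed{\mathcal{L}}^\gamma_{\Omega_i}$ and $\slashednabla_T$, up to the order dictated by the hypotheses — are uniform in $n$ once the right-hand sides are bounded by the (finite, by hypothesis) fluxes of the \emph{full} data rather than the truncations; the $\frac1v$ and $e^{\frac{1}{2M}(v_+ - v)}$ decay factors in those estimates are what make them robust under sending the cutoffs to infinity. Then I would invoke Arzela--Ascoli on an exhaustion $\{\mathscr{D}_l\}$ of $J^+(\Sigma^*)$ to extract a subsequence converging in $C^k_{loc}$ to a limit $\Psi$ solving \bref{RW}; uniqueness follows from finite speed of propagation plus the decay of $\Psi$ near $\mathscr{H}^+$ and $\mathscr{I}^+$ encoded in the uniform estimates, exactly as in \Cref{RWbackwardsexistence}. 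The radiation-field identifications $\Psi|_{\overline{\mathscr{H}^+}} = \bm{\uppsi}_{\mathscr{H}^+}$ and $\lim_{v\to\infty}\Psi = \bm{\uppsi}_{\mathscr{I}^+}$ pass to the limit by the uniform convergence in \bref{ptwise infinity}, \bref{ptwise horizon}, and the $T$-energy bound $\|(\Psi|_{\Sigma^*}, \slashednabla_{n_{\Sigma^*}}\Psi|_{\Sigma^*})\|^2_{\mathcal{E}^T_{\Sigma^*}} \leq \|\bm{\uppsi}_{\mathscr{H}^+}\|^2_{\mathcal{E}^T_{\mathscr{H}^+_{\geq0}}} + \|\bm{\uppsi}_{\mathscr{I}^+}\|^2_{\mathcal{E}^T_{\mathscr{I}^+}}$ survives the limit by lower semicontinuity of the norm under $C^1_{loc}$-convergence and the uniform control of the truncated energies.

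\textbf{The formulae of Corollaries \ref{Phi 1 backwards} and \ref{Phi 2 backwards}.} For these I would redo the integration of \bref{RW first transverse derivative in the 3 direction} (and its once-more-integrated consequence) along $\underline{\mathscr{C}}_v$, but now without assuming compact support in $u$: the integrands $\frac{\Omega^2}{r^2}[\mathring{\slashed\Delta}\Psi - (3\Omega^2+1)\Psi]$ and its iterate are integrable in $\bar u$ over all of $(-\infty, u]$ precisely because the limit $\bm{\upphi}^{(1)}_{\mathscr{I}^+}$ (resp. $\bm{\upphi}^{(2)}_{\mathscr{I}^+}$) is uniformly bounded — which now requires controlling up to $|\gamma| \leq 4$ angular derivatives of the data, hence the strengthened hypothesis with sums running to $|\gamma|\leq 4$, needed to apply the Sobolev embedding $W^{3,1}(S^2) \hookrightarrow L^\infty(S^2)$ to the second transverse derivative after two commutations. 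The argument is then identical to \Cref{Phi 1 backwards,,Phi 2 backwards}: one integrates on $\underline{\mathscr{C}}_v$ between $u$ and $u_+$, lets $u_+ \to \infty$ using the established integrability, and uses Lebesgue dominated convergence (with the dominating function furnished by the finite higher-order fluxes) to pass the $v\to\infty$ limit inside the integral.

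\textbf{Main obstacle.} I expect the crux to be verifying that the pointwise estimates \bref{ptwise infinity} and \bref{ptwise horizon}, as stated, genuinely have right-hand sides that are monotone/controlled under enlarging the support cutoffs — i.e.\ that no hidden dependence on $u_+, v_+$ (for instance through the region $\mathscr{D}$ over which intermediate energy fluxes are taken) spoils uniformity. Resolving this amounts to re-examining the Gr\"onwall steps leading to \bref{this} and \bref{RW exponential backwards near H+}: the Gr\"onwall constant there is $e^{\frac{1}{2M}(v_+ - v)}$ near $\mathscr{H}^+$ and a ratio $r(u,v)^2/r(u_+,v)^2$ near $\mathscr{I}^+$, and one must check that when these are combined with the fundamental-theorem-of-calculus step producing the $\frac1v$ and $r\Omega^2(u, v_+)$ prefactors, the product is bounded by the \emph{global} fluxes of \bref{tthis_hypothesis_1}, \bref{thist_hypothesis_2} independently of which finite $v_+$ one chose — so that the only role of $v_+$ is to localize the support of the truncated datum, not to degrade the bound. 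Once this uniformity is confirmed the rest is a routine compactness-and-limit argument of exactly the type already carried out in \Cref{RWbackwardsexistence}.
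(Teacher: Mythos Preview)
Your proposal is correct and follows essentially the same approach as the paper: truncate the data with smooth cutoffs at $u_{+}^{(n)}\to\infty$, $v_{+}^{(n)}\to\infty$, apply \Cref{RWbackwardsexistence} to obtain $\Psi^{(n)}$, use the uniformity of the estimates \bref{ptwise infinity} and \bref{ptwise horizon} in the cutoff parameters (which the paper states explicitly just before the proposition) to get equicontinuity and extract a $C^k_{loc}$ limit via Arzel\`a--Ascoli, then verify the radiation fields from the same uniform estimates. The paper's proof is terser but structurally identical, and your identification of the ``main obstacle'' --- confirming that the right-hand sides of \bref{ptwise infinity}, \bref{ptwise horizon} depend only on the global fluxes and not on the cutoffs --- is precisely the point the paper isolates as the reason the argument goes through.
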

\begin{proof}
Let $R>3M$ be fixed, $\{u_{+,n}\}_{n=1}^\infty$ a monotonically increasing sequence and $\{v_{+,n}\}_{n=1}^\infty$ such that $v_{+,n}-u_{+,n}=R^*$. Let $\xi_n^u,\xi_n^v$ be smooth cutoff functions cutting off at $u_{+,n}$ and $v_{+,n}$ respectively. Using $\xi_n^u\bm{\uppsi}_{\mathscr{I}^+}, \xi_n^v\bm{\uppsi}_{\mathscr{H}^+}$ as scattering data, we can apply \Cref{RWbackwardsexistence} to obtain solutions $\Psi_n$ to \cref{RW}, each defined on $\mathscr{D}_n:=J^+(\Sigma^*)\cap\{\{u<u_{+,n}\}\cup\{v<v_{+,n}\}\}$. On $\mathscr{D}_k$, the sequence $\{\Psi_n\}$ for $n>k$ is bounded and equicontinuous, so repeating the argument of \Cref{RWbackwardsexistence} we can find a subsequence converging to $\Psi$ in the topology of compact convergence. The estimate \bref{tthis_hypothesis_1} and the estimates \bref{ptwise horizon}, \bref{ptwise infinity} imply that $\Psi\longrightarrow \bm{\uppsi}_{\mathscr{I}^+}$ towards $\mathscr{I}^+$ and  $\Psi\longrightarrow \bm{\uppsi}_{\mathscr{H}^+}$ towards $\mathscr{H}^+$. The solution $\Psi$ can be extended to the future by repeating the above argument for each $\mathscr{D}_k$ as $k\longrightarrow\infty$. The remaining statements follow by analogous arguments.
\end{proof}
\section{Future asymptotics of the +2 Teukolsky equation}\label{section 6}
\Cref{section 6} is devoted to the study of future radiation fields induced by solutions to the $+2$ Teukolsky equation arising from smooth, compactly supported data on $\Sigma^*$, as was done for the Regge--Wheeler equation in \Cref{subsection 5.2 subsection Radiation fields}.\\
\indent We first gather the estimates we need in \Cref{T+2estimates}. We collect in \Cref{subsubsection 6.1.1 transport estimates} results from \cite{DHR16} estimating $\alpha$ from $\Psi$ defined via (\ref{hier+}) and the estimates of \Cref{subsection 5.1 Basic integrated boundedness and decay estimates} for $\Psi$. Building upon these estimates we then use the methods of \cite{DRrp} and \cite{AAG16a} to obtain $r$-weighted estimates for $\alpha, \psi$ in \Cref{rp+2}. We apply these results to study the future radiation fields and their fluxes in \Cref{+2 radiation}.
\subsection{Integrated boundedness and decay for $\alpha$ via $\Psi$}\label{T+2estimates}
We begin with the following basic proposition, already proven in \Cref{Chandra}:
\begin{proposition}\label{+2 implies RW}
Let $(\upalpha,\upalpha')$ be data on $\Sigma^*$, $\Sigma$ or $\overline{\Sigma}$ giving rise to a solution $\alpha$ to \cref{T+2} as in \Cref{WP+2Sigma*} or  \Cref{WP+2Sigmabar} respectively. Then $\Psi$ defined via \bref{hier+} out of the solution $\alpha$ on $ J^+(\Sigma^*)$, $ J^+(\Sigma)$ or $ J^+(\overline{\Sigma})$ satisfies \cref{RW}.
\end{proposition}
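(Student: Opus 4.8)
\emph{Proof plan.} The statement has essentially been reduced to an algebraic fact in \Cref{Chandra}, and the plan is to make that reduction explicit. The key input is the operator commutation identity \bref{commutator}: it expresses $\left(-\tfrac{r^2}{\Omega^2}\right)$ times the Regge--Wheeler operator of \bref{RW} applied to $\left(\tfrac{r^2}{\Omega^2}\nablau\right)^2 r\Omega^2\alpha$ as $\left(\tfrac{r^2}{\Omega^2}\nablau\right)^2$ applied to the $+2$ Teukolsky operator in the form \bref{T+2d} acting on $r\Omega^2\alpha$, with the $\pm 6M$ contributions coming from the commutators (the $2M(a+2k+2k')$ terms in \bref{commutation relation}) cancelling against one another. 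This identity holds pointwise between differential operators on smooth $S^2_{u,v}$-tangent symmetric traceless $2$-tensor fields, so once it is granted, the argument is immediate: where $\alpha$ is smooth and solves \bref{T+2d}, the inner expression vanishes identically, hence the Regge--Wheeler operator annihilates $\Psi = \left(\tfrac{r^2}{\Omega^2}\nablau\right)^2 r\Omega^2\alpha$, which is the hierarchy \bref{hier+}. I would therefore first verify \bref{commutator} by applying \bref{commutation relation} twice with the appropriate choices of the integers $a,b,c,k,k'$, using $\Omega\slashed\nabla_4\tfrac{r^2}{\Omega^2} = -\Omega\slashed\nabla_3\tfrac{r^2}{\Omega^2} = r(x+2)$ and $\slashed{\mathcal{D}}_2^*\slashed{\mathcal{D}}_2 = -\tfrac12\slashed\Delta + \tfrac{1}{r^2}$, and then read off the conclusion.

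The only remaining point is the regularity bookkeeping on the three possible domains. For data on $\Sigma^*$ or $\Sigma$, \Cref{WP+2Sigma*} gives a solution $\alpha$ smooth throughout $J^+(\Sigma^*)$ or $J^+(\Sigma)$, which lies in the interior of $\mathscr{M}$ where the Eddington--Finkelstein frame is regular; thus \bref{hier+} defines a smooth $\Psi$ and the computation applies verbatim. For data on $\overline{\Sigma}$ the domain $J^+(\overline{\Sigma})$ contains the bifurcation sphere $\mathcal{B}$, where the EF frame degenerates, so one first invokes \Cref{WP+2Sigmabar} to get a solution with $\widetilde{\alpha} = V^{-2}\Omega^2\alpha$ smooth up to and including $\mathcal{B}$, together with the identity $\Psi = \left(2Mr^2 f(r)\slashednabla_U\right)^2 r\widetilde{\alpha}$ recorded just before \Cref{RWwpSigmabar}. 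This realises $\Psi$ as a smooth field on all of $J^+(\overline{\Sigma})$; since $\Psi$ already satisfies the Kruskal form of \bref{RW}, whose left-hand side depends continuously on a smooth $\Psi$, on the dense interior, it satisfies \bref{RW} on $\mathcal{B}$ as well by continuity.

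There is no substantive obstacle here: the content lies in the commutation computation of \Cref{Chandra}, and the present proposition is merely its corollary. If one prefers a self-contained derivation avoiding the combined identity \bref{commutator}, the alternative is to follow the hierarchy \bref{hier+} one step at a time -- introduce $r^3\Omega\psi := \tfrac{r^2}{\Omega^2}\nablau r\Omega^2\alpha$, use \bref{T+2} to obtain its transport equation, then apply $\tfrac{r^2}{\Omega^2}\nablau$ once more and simplify -- but this merely reconstructs \bref{commutator} and is not needed.
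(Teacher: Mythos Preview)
Your proposal is correct and matches the paper's approach: the paper does not give a separate proof but simply notes that the proposition was ``already proven in \Cref{Chandra}'' via the commutation identity \bref{commutator}, which is exactly the reduction you spell out. Your additional regularity remarks for the $\overline{\Sigma}$ case are a reasonable elaboration of a point the paper leaves implicit.
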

\subsubsection{Transport estimates for $\alpha$}\label{subsubsection 6.1.1 transport estimates}
In what follows assume a small fixed $0<\epsilon<1/8$.
\begin{proposition}\label{psiILED}
Let $\alpha, \psi, \Psi$ be as in  \bref{hier+} and \Cref{+2 implies RW}, Then for any $u$ and any $v>0$ such that $(u,v,\theta^A)\in J^+(\Sigma^*)$, the following estimate holds for sufficiently small $\epsilon>0$ \footnote{All integrals on $\underline{\mathscr{C}}_v$ here are done with respect to the measure $\Omega^2\sin\theta dvd\theta d\phi$}
 \begin{align}\label{locallabel1}
        \int_{\mathscr{C}_{u}\cap J^+(\Sigma^*)\cap J^-(\underline{\mathscr{C}}_v)}d\bar{v}d\omega\; r^{8-\epsilon}\Omega^2|\psi|^2+\int_{\mathscr{D}^{u,v}_{\Sigma^*}}d\bar{u}d\bar{v}d\omega\; r^{7-\epsilon}\Omega^4|\psi|^2 \lesssim \mathbb{F}_{\Sigma^*}[\Psi]+\int_{\Sigma^*\cap J^-(\mathscr{C}_u)\cap J^-(\underline{\mathscr{C}}_v)}drd\omega\; r^{8-\epsilon}\Omega^2|\psi|^2.
    \end{align}
\end{proposition}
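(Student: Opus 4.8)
The statement is a weighted transport estimate for $\psi$ in which the Regge--Wheeler field $\Psi$ enters as a known source. By \Cref{+2 implies RW} the quantity $\Psi$ built from $\alpha$ through \bref{hier+} solves \bref{RW}, so the full package of estimates gathered in \Cref{subsection 5.1 Basic integrated boundedness and decay estimates} -- the redshift estimate \Cref{RWredshift}, the integrated local energy decay \Cref{RWILED}, and the $r^p$-hierarchy \Cref{RWrp}, all bounded by $\mathbb{F}_{\Sigma^*}[\Psi]$ -- is at our disposal. The second ingredient is the transport equation satisfied by $\psi$: unwinding the definitions \bref{hier+} gives $\Psi=\tfrac{r^2}{\Omega^2}\nablau(r^3\Omega\psi)$, and since $\nablau$ acts as $\partial_u$ on $S^2_{u,v}$-tangent $(1,1)$-tensors, setting $\Phi:=r^3\Omega\psi$ (so that $r^{8-\epsilon}\Omega^2|\psi|^2=r^{2-\epsilon}|\Phi|^2$ and $r^{7-\epsilon}\Omega^4|\psi|^2=\Omega^2 r^{1-\epsilon}|\Phi|^2$) this becomes the componentwise transport equation $\partial_u\Phi=\tfrac{\Omega^2}{r^2}\Psi$.

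The core is then the standard weighted transport argument. Using $\partial_u r=-\Omega^2$ one obtains $\partial_u(r^{2-\epsilon}|\Phi|^2)+(2-\epsilon)\,\Omega^2 r^{1-\epsilon}|\Phi|^2=\tfrac{2\Omega^2}{r^{\epsilon}}\langle\Phi,\Psi\rangle$, and a Cauchy--Schwarz splitting of the right side, $\tfrac{2\Omega^2}{r^\epsilon}|\langle\Phi,\Psi\rangle|\leq\lambda\,\Omega^2 r^{1-\epsilon}|\Phi|^2+\lambda^{-1}\Omega^2 r^{-1-\epsilon}|\Psi|^2$ with $0<\lambda<2-\epsilon$, keeps a positive fraction of the good bulk term. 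Integrating the resulting differential inequality over $\mathscr{D}^{u,v}_{\Sigma^*}$ against $d\bar u\,d\bar v\,d\omega$ and applying the fundamental theorem of calculus in $\bar u$ -- whose only boundary contributions for fixed $(\bar v,\omega)$ are the value on $\mathscr{C}_u$ at $\bar u=u$ and the value on $\Sigma^*$ at $\bar u=u_{\Sigma^*}(\bar v)$ -- produces exactly the two left-hand terms of \bref{locallabel1}, the flux $\int_{\mathscr{C}_u\cap J^+(\Sigma^*)\cap J^-(\underline{\mathscr{C}}_v)}r^{2-\epsilon}|\Phi|^2\,d\bar v\,d\omega$ and the bulk $(2-\epsilon-\lambda)\int_{\mathscr{D}^{u,v}_{\Sigma^*}}\Omega^2 r^{1-\epsilon}|\Phi|^2$, at the cost of the $\Sigma^*$-term $\int_{\Sigma^*\cap J^-(\mathscr{C}_u)\cap J^-(\underline{\mathscr{C}}_v)}r^{2-\epsilon}|\Phi|^2\,dr\,d\omega$ -- which, rewritten through $|\Phi|^2=r^6\Omega^2|\psi|^2$, is precisely the data term on the right of \bref{locallabel1}, the Jacobian between the $\bar v$-parametrisation of $\mathscr{C}_u$ and the $r$-parametrisation of $\Sigma^*$ being comparable to $1$ with the appropriate $\Omega^2$-weights near $\mathscr{H}^+$ -- plus the spacetime source term $\lambda^{-1}\int_{\mathscr{D}^{u,v}_{\Sigma^*}}\Omega^2 r^{-1-\epsilon}|\Psi|^2$.

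Everything then reduces to bounding $\int_{\mathscr{D}^{u,v}_{\Sigma^*}}\Omega^2 r^{-1-\epsilon}|\Psi|^2$ by $\mathbb{F}_{\Sigma^*}[\Psi]$: on $\{r\le R\}$ the weight $r^{-1-\epsilon}$ is comparable to $r^{-3}$ and this piece is controlled by \Cref{RWILED}, while on $\{r>R\}$ it is exactly the spacetime term appearing in the $r^p$-estimate \Cref{RWrp} at $p=2-\epsilon\in(0,2)$. The step I expect to be the main obstacle is that \Cref{RWrp} at $p=2-\epsilon$ also generates the data term $\int_{\Sigma^*\cap\{r>R\}}r^{2-\epsilon}|\nablav\Psi|^2$, which is \emph{not} contained in $\mathbb{F}_{\Sigma^*}[\Psi]$; it has to be re-expressed using the commutation identity \bref{eq:d4Psi}, which on $\Sigma^*$ writes $\nablav\Psi$ algebraically in terms of $\Phi=r^3\Omega\psi$, $r\Omega^2\alpha$ and angular derivatives thereof, and then reduced -- via elliptic estimates on $S^2_{u,v}$ and a Hardy inequality in $r$ along $\Sigma^*$ applied to the transport relation $r^3\Omega\psi=\tfrac{r^2}{\Omega^2}\nablau r\Omega^2\alpha$ restricted to $\Sigma^*$ -- to the already-present $\int_{\Sigma^*}r^{8-\epsilon}\Omega^2|\psi|^2$ term together with lower-order quantities absorbed by $\mathbb{F}_{\Sigma^*}[\Psi]$. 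The $\epsilon$-loss in the weights is exactly what keeps $p$ strictly below the borderline value $2$, where that last reduction would instead require genuinely higher-order data; beyond this point the argument is routine bookkeeping of $r$-weights.
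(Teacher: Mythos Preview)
Your core argument is exactly the paper's: write $\Phi=r^3\Omega\psi$, derive the weighted transport identity $\partial_u(r^{2-\epsilon}|\Phi|^2)+(2-\epsilon)\Omega^2 r^{1-\epsilon}|\Phi|^2=2\Omega^2r^{-\epsilon}\langle\Phi,\Psi\rangle$, split the right side by Cauchy--Schwarz, integrate over $\mathscr{D}^{u,v}_{\Sigma^*}$, and bound the spacetime $\Psi$ source via \Cref{RWILED} and \Cref{RWrp}. The paper's proof is literally this, written in one line with $n=2-\epsilon$.

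Your final paragraph, however, does not close. You are right that \Cref{RWrp} at $p=2-\epsilon$ costs the data term $\int_{\Sigma^*\cap\{r>R\}}r^{2-\epsilon}|\nablav\Psi|^2$, which is not part of $\mathbb{F}_{\Sigma^*}[\Psi]$. But your proposed reduction of that term to $\int_{\Sigma^*}r^{8-\epsilon}\Omega^2|\psi|^2$ fails twice. First, \bref{eq:d4Psi} gives $\nablav\Psi=[\mathcal A_2-(3\Omega^2-1)]\,r^3\Omega\psi-6Mr\Omega^2\alpha$, so the substitution produces $\int_{\Sigma^*}r^{8-\epsilon}\Omega^2|\mathcal A_2\psi|^2$ and $\int_{\Sigma^*}r^{4-\epsilon}\Omega^4|\alpha|^2$; elliptic estimates on $S^2$ only relate angular operators of the same order, they do not remove the $\mathcal A_2$. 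Second, your ``Hardy in $r$ along $\Sigma^*$ applied to $r^3\Omega\psi=\tfrac{r^2}{\Omega^2}\nablau r\Omega^2\alpha$'' cannot eliminate the $\alpha$ term: restricted to $\Sigma^*$ this identity reads $r^3\Omega\psi=-r^2\partial_r(r\Omega^2\alpha)+\tfrac{r^2}{\Omega^2}c(r)\slashednabla_{n_{\Sigma^*}}(r\Omega^2\alpha)$, so a Hardy inequality in $r$ trades $\alpha$ for $\psi$ \emph{plus} the normal derivative $\slashednabla_{n_{\Sigma^*}}\alpha$, which is independent Cauchy data not present on the stated right-hand side.

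The paper's own proof does not address this at all: it simply cites \Cref{RWILED} and \Cref{RWrp}. The honest reading is that the stated right-hand side is a mild imprecision and should tacitly include $\int_{\Sigma^*\cap\{r>R\}}r^{2-\epsilon}|\nablav\Psi|^2$ (equivalently, via \bref{eq:d4Psi}, weighted angular-derivative data for $\psi$ and $\alpha$). Every downstream use of the proposition in the paper is for smooth compactly supported Cauchy data, where this extra term is trivially finite, so nothing is lost; you should not try to engineer it away.
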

\begin{proof}
Here we repeat the argument of Proposition 12.1.1 of \cite{DHR16}. Using the definition of $\psi$ in (\ref{hier+}) we can derive 
\begin{align}
    \partial_u \left[r^{6+n}\Omega^4|\psi|^2\right]+nr^{n+5}\Omega^4|\psi|^2=2r^{n-1}\frac{\Omega^2}{r^2}\Psi \cdot r^3\Omega\psi\leq \frac{1}{2}nr^{n+5}\Omega^4|\psi|^2+\frac{2}{n}r^{n-3}\Omega^2|\Psi|^2.
\end{align}
The result follows by integrating over $\mathscr{D}^{u,v}_{\Sigma^*}$ for $0<n<2$ and using \Cref{RWILED,,RWrp}.
\end{proof}
%Check that you can remove the epsilon when not estimating the angular derivatives
\begin{proposition}\label{alphaILED}
Let $\alpha, \psi, \Psi$ be as in  \bref{hier+} and \Cref{+2 implies RW}, Then for any $u$ and any $v>0$ such that $(u,v,\theta^A)\in J^+(\Sigma^*)$, the following estimate holds for sufficiently small $\epsilon>0$
 \begin{align}\label{locallabel2}
 \begin{split}
        \int_{\mathscr{C}_{u}\cap J^+(\Sigma^*)\cap J^-(\underline{\mathscr{C}}_v)}d\bar{v}d\omega\; r^{6-\epsilon}\Omega^4|\alpha|^2&+\int_{\mathscr{D}^{u,v}_{\Sigma^*}}d\bar{u}d\bar{v}d\omega\; r^{5-\epsilon}\Omega^6|\alpha|^2 \\ &\lesssim \mathbb{F}_{\Sigma^*}[\Psi]+\int_{\Sigma^*\cap J^-(\mathscr{C}_u)\cap J^-(\underline{\mathscr{C}}_v)}drd\omega\; r^{8-\epsilon}\Omega^2|\psi|^2+r^{6-\epsilon}\Omega^4|\alpha|^2.
        \end{split}
    \end{align}
provided the right hand side is finite.
\end{proposition}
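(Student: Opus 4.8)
The plan is to mimic the proof of \Cref{psiILED}, but now using the transport relation that defines $\psi$ in terms of $\alpha$ rather than the one defining $\Psi$ in terms of $\psi$. Recall from \bref{hier+} that $r^3\Omega\psi = \frac{r^2}{\Omega^2}\nablau r\Omega^2\alpha$, which rearranges to a transport equation for $r\Omega^2\alpha$ in the $3$-direction sourced by $\frac{\Omega^2}{r^2}r^3\Omega\psi$. First I would write this as a transport equation for the weighted quantity $r^{m}\Omega^{2j}|\alpha|^2$ for a suitable power $m$ (the target exponent on $\mathscr{C}_u$ is $6-\epsilon$, so one expects $m = 6-\epsilon$ on the $\Omega^4|\alpha|^2$ quantity), differentiate along $\partial_u$, and apply Cauchy--Schwarz to the cross term $r^{\bullet}\frac{\Omega^2}{r^2}\psi\cdot(r\Omega^2\alpha)$ exactly as in \Cref{psiILED}: one slice of the cross term is absorbed into the $|\alpha|^2$ bulk (this needs the coefficient of the $|\alpha|^2$ bulk to stay positive, which fixes the admissible range of the weight exponent to an interval analogous to $0<n<2$), and the other slice is a multiple of $r^{8-\epsilon}\Omega^2|\psi|^2$, which is precisely the quantity already controlled by \Cref{psiILED}.

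The key steps, in order, are: (i) derive the pointwise identity $\partial_u\bigl[r^{6-\epsilon}\Omega^4|\alpha|^2\bigr] + (\text{good})\,r^{5-\epsilon}\Omega^6|\alpha|^2 \lesssim r^{?}\Omega^2|\psi|^2$ after Cauchy--Schwarz, checking that the bulk coefficient has a good sign for the relevant $\epsilon$-range (this is where the constraint $0<\epsilon<1/8$, or whatever is needed, enters — it is the analogue of $0<n<2$ in \Cref{psiILED}); (ii) integrate this identity over $\mathscr{D}^{u,v}_{\Sigma^*}$, which produces the boundary term on $\mathscr{C}_u\cap J^-(\underline{\mathscr{C}}_v)$, the spacetime bulk term, and a data term on $\Sigma^*$ of the form $\int_{\Sigma^*} r^{6-\epsilon}\Omega^4|\alpha|^2$; (iii) estimate the resulting source term $\int_{\mathscr{D}^{u,v}_{\Sigma^*}} r^{\bullet}\Omega^2|\psi|^2$ using the conclusion of \Cref{psiILED}, which bounds it by $\mathbb{F}_{\Sigma^*}[\Psi] + \int_{\Sigma^*} r^{8-\epsilon}\Omega^2|\psi|^2$; (iv) collect the $\Sigma^*$ data terms into $\mathbb{F}_{\Sigma^*}[\Psi] + \int_{\Sigma^*} r^{8-\epsilon}\Omega^2|\psi|^2 + r^{6-\epsilon}\Omega^4|\alpha|^2$, which is exactly the right-hand side claimed in \bref{locallabel2}. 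Throughout, all integrals on the ingoing cone $\underline{\mathscr{C}}_v$ are understood with the measure $\Omega^2\sin\theta\,dv\,d\theta\,d\phi$ as in the footnote to \Cref{psiILED}, and the caveat "provided the right hand side is finite" is there precisely because the $\Sigma^*$-data term for $\alpha$ is not automatically finite for all smooth compactly supported Cauchy data — but it will be for the configurations we use.

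The main obstacle I anticipate is bookkeeping of the weights: one must verify that the weight $6-\epsilon$ on $\Omega^4|\alpha|^2$ is compatible with the source weight $8-\epsilon$ on $\Omega^2|\psi|^2$ that \Cref{psiILED} delivers, i.e.~that after Cauchy--Schwarz with an $r$-dependent splitting one lands on exactly $r^{8-\epsilon}\Omega^2|\psi|^2$ and not a larger power (which would exceed what \Cref{psiILED} controls) — the $\Omega$-powers and the factor $\frac{\Omega^2}{r^2}$ in the transport equation conspire so this works out, but the sign of the bulk coefficient and the exact exponent matching are the delicate points. Everything else is a routine repetition of the argument of \Cref{psiILED} (which itself follows Proposition 12.1.1 of \cite{DHR16}), so the proof can be kept short: "Repeat the argument of \Cref{psiILED} using the transport relation for $r\Omega^2\alpha$ in \bref{hier+} instead of that for $r^3\Omega\psi$, Cauchy--Schwarz, and \Cref{psiILED} to absorb the source term."
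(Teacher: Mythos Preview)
Your proposal is correct and matches the paper's own approach: the paper's proof reads simply ``Similar to \Cref{psiILED}. See Propositions 12.1.2, 12.2.6 and 12.2.7 of \cite{DHR16}.'' Your weight bookkeeping is right---with the transport relation $\nablau(r\Omega^2\alpha)=r\Omega^3\psi$ from \bref{hier+}, taking the multiplier exponent $m=4-\epsilon$ yields exactly the flux weight $r^{6-\epsilon}\Omega^4$, bulk weight $r^{5-\epsilon}\Omega^6$, and a source of order $r^{7-\epsilon}\Omega^4|\psi|^2$, which is precisely the bulk term controlled by \Cref{psiILED}.
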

\begin{proof}
Similar to \Cref{psiILED}. See Propositions 12.1.2, 12.2.6 and 12.2.7 of \cite{DHR16}.
\end{proof}
\begin{proposition}\label{ILED alpha 2nd angular}
Let $\alpha, \psi, \Psi$ be as in  \bref{hier+} and \Cref{+2 implies RW}, Then for any $u$ and any $v>0$ such that $(u,v,\theta^A)\in J^+(\Sigma^*)$, the following estimate holds for sufficiently small $\epsilon>0$
\begin{align}\label{2ndderivativeofpsi}
 \begin{split}
        \int_{\mathscr{C}_{u}\cap J^+(\Sigma^*)\cap J^-(\underline{\mathscr{C}}_v)}d\bar{v}d\omega\; r^{8-\epsilon}|-2r^2\fancydstar_2\fancyd_2&(r^3\Omega\psi)|^2+\int_{\mathscr{D}^{u,v}_{\Sigma^*}}d\bar{u}d\bar{v}d\omega\; \frac{\Omega^2}{r^3}\left(1-\frac{3M}{r}\right)^2 |-2r^2\fancydstar_2\fancyd_2(r^3\Omega\psi)|^2 \\ &\lesssim \mathbb{F}_{\Sigma^*}[\Psi]+\int_{\Sigma^*\cap J^-(\mathscr{C}_u)\cap J^-(\underline{\mathscr{C}}_v)}dr d\omega\; r^{8-\epsilon}\Omega^2|\psi|^2+r^{6-\epsilon}\Omega^4|\alpha|^2,
        \end{split}
    \end{align}
    provided the right hand side is finite.
\end{proposition}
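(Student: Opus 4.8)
\textbf{Proof proposal for \Cref{ILED alpha 2nd angular}.}

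The plan is to treat $-2r^2\fancydstar_2\fancyd_2(r^3\Omega\psi)$ as a new unknown satisfying a transport equation in the $u$-direction sourced by an angular derivative of $\Psi$, exactly in the same spirit as \Cref{psiILED} and \Cref{alphaILED}. First I would commute the defining relation $\frac{r^2}{\Omega^2}\nablau(r^3\Omega\psi)=\Psi$ from \bref{hier+} with the operator $-2r^2\fancydstar_2\fancyd_2$. Since $\fancydstar_2,\fancyd_2$ are angular operators and the coefficients $r,\Omega$ depend only on $(u,v)$, the operator $-2r^2\fancydstar_2\fancyd_2$ commutes with $\nablau$ up to terms involving $\partial_u r$ acting on the $r^2$ weight; recalling that $-2r^2\fancydstar_2\fancyd_2 = \mathring{\slashed\Delta}-2 = \mathcal{A}_2$ (a fixed operator on the unit sphere with $r$-independent coefficients once written in terms of $\mathring{\slashednabla}$), the commutator is harmless and one obtains a transport equation of the schematic form
\begin{align}
\frac{r^2}{\Omega^2}\nablau\big(\mathcal{A}_2(r^3\Omega\psi)\big) = \mathcal{A}_2\Psi + (\text{l.o.t.}),
\end{align}
where the lower-order terms are multiples of $\mathcal{A}_2(r^3\Omega\psi)$ itself with bounded coefficients. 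This is structurally identical to the equation used for $\psi$ in \Cref{psiILED}, with $\psi$ replaced by $\mathcal{A}_2(r^3\Omega\psi)$ and $\Psi$ replaced by $\mathcal{A}_2\Psi$.

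Next I would run the same weighted transport argument: multiply by $r^{n-1}\Omega^2$ times the unknown, integrate by parts in $u$, and absorb the source using Cauchy--Schwarz, precisely as in the one-line computation in the proof of \Cref{psiILED}. Integrating over $\mathscr{D}^{u,v}_{\Sigma^*}$ for $0<n<2$ (i.e.\ $n = 2-\epsilon$) produces the bulk and flux terms on the left-hand side of \bref{2ndderivativeofpsi} — noting that the weight $\frac{\Omega^2}{r^3}(1-\tfrac{3M}{r})^2$ in the bulk is dominated by (a constant times) $r^{n-3}\Omega^2$ for the relevant range of $r$, or alternatively that this degenerate weight is exactly what is produced if one is slightly more careful near $r=3M$, mirroring the Morawetz weights in \Cref{RWILED}. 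The right-hand side then contains: (i) the initial flux $\int_{\Sigma^*} r^{8-\epsilon}\Omega^2|\psi|^2$ which appears because the data term after commuting with $\mathcal{A}_2$ still controls $\mathcal{A}_2(r^3\Omega\psi)$ on $\Sigma^*$ via elliptic estimates on $S^2_{u,v}$ (\Cref{subsubsection 2.1.2 Elliptic estimates on S2}), plus the $r^{6-\epsilon}\Omega^4|\alpha|^2$ term coming from the same elliptic control applied to $\psi$ via the $\psi$–$\alpha$ transport relation; and (ii) a term of the form $\int_{\mathscr{D}}\frac{\Omega^2}{r}(1-\tfrac{3M}{r})^2|\mathcal{A}_2\Psi|^2 + (\text{lower order in }\Psi)$, which is controlled by $\mathbb{F}_{\Sigma^*}[\Psi]$ using the higher-order (commuted with $r\slashednabla$) version of \Cref{RWILED} together with the $r^p$-hierarchy of \Cref{RWrp}, since $\mathcal{A}_2\Psi$ is two angular derivatives of $\Psi$.

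The main obstacle I anticipate is bookkeeping the elliptic estimates on $S^2_{u,v}$ cleanly enough that the right-hand side of \bref{2ndderivativeofpsi} really only involves $\mathbb{F}_{\Sigma^*}[\Psi]$ and the stated weighted data norms of $\psi$ and $\alpha$, rather than also a weighted norm of $\mathcal{A}_2\psi$ on $\Sigma^*$; this is handled by using the identities of \Cref{subsubsection 2.1.2 Elliptic estimates on S2} relating $|\fancydstar_2\fancyd_2\Xi|^2$ to $|\slashed\Delta\Xi|^2 + \text{l.o.t.}$ and then trading $\slashed\Delta\psi$ for $\nablav\psi$ and ultimately for $\Psi$ via the hierarchy, exactly as in Propositions 12.1.2 and 12.2.6--12.2.7 of \cite{DHR16}. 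Apart from that, and the usual care needed to justify the integrations by parts (all boundary terms at $\mathscr{I}^+$ and $\overline{\mathscr{H}^+}$ have the right sign or are part of the claimed left-hand side, by the decay of compactly supported data), the argument is a routine adaptation of \Cref{psiILED} and \Cref{alphaILED} and I would simply refer to those proofs and to \cite{DHR16} for the details.
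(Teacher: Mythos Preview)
Your route is different from the paper's, and more laborious than necessary. The paper does not commute the transport equation with $\mathcal{A}_2$ at all; instead it uses the identity \bref{eq:d4Psi}, derived in \Cref{constraint derivation}, which is
\[
\nablav\Psi \;=\; -2r^2\fancydstar_2\fancyd_2(r^3\Omega\psi)\;-\;6M\,r\Omega^2\alpha\;-\;(3\Omega^2-1)\,r^3\Omega\psi.
\]
This expresses the target quantity \emph{algebraically} as $\nablav\Psi$ plus zeroth-order terms in $\psi$ and $\alpha$. The flux term in \bref{2ndderivativeofpsi} then follows directly from the flux part of \Cref{RWrp} (for $\nablav\Psi$) together with the flux bounds of \Cref{psiILED,,alphaILED}; the degenerate bulk term follows from \Cref{RWILED} (whose weight on $|\nablav\Psi|^2$ is exactly $\frac{\Omega^2}{r^3}(1-\tfrac{3M}{r})^2$) together with the bulk bounds of \Cref{psiILED,,alphaILED}. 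No commutation, no new transport argument, no elliptic bookkeeping.

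Your commutation approach would work, but it produces a source $\mathcal{A}_2\Psi$ requiring two angular derivatives of $\Psi$, hence $\mathbb{F}^{2,T,\slashednabla}_{\Sigma^*}[\Psi]$ rather than the plain $\mathbb{F}_{\Sigma^*}[\Psi]$ on the right-hand side of \bref{2ndderivativeofpsi}; and the initial-data term it naturally gives is $\int_{\Sigma^*} r^{8-\epsilon}\Omega^2|\mathcal{A}_2\psi|^2$, not $\int_{\Sigma^*} r^{8-\epsilon}\Omega^2|\psi|^2$. The ``trading'' you propose to close this gap is precisely \bref{eq:d4Psi} --- at which point you have rediscovered the paper's one-line argument and the preceding commutation step becomes redundant.
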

\begin{proof}
Control of $\psi,\alpha$ as in \Cref{psiILED,,alphaILED} allows us to directly control the flux of $-2r^2\fancydstar_2\fancyd_2(r^3\Omega\psi)$ on $\mathscr{C}_u$ using (\ref{eq:d4Psi}) and the flux bound of \Cref{RWrp}, while the spacetime integral can be controlled via \Cref{RWILED}. 
\end{proof}
Commuting (\ref{hier+}) with $r\fancyd_2$ and using the flux bound of the previous proposition allows us to obtain an integrated decay statement for $r\fancyd_2\psi$:
\begin{proposition}
Let $\alpha, \psi, \Psi$ be as in  \bref{hier+} and \Cref{+2 implies RW}, Then for any $u$ and any $v>0$ such that $(u,v,\theta^A)\in J^+(\Sigma^*)$, the following estimate holds for sufficiently small $\epsilon>0$
\begin{align}
    \int_{\mathscr{D}^{u,v}_{\Sigma^*}} d\bar{u}d\bar{v} d\omega\; r^{7-\epsilon} \Omega^4|r\fancyd_2\psi|^2 &\lesssim \mathbb{F}_{\Sigma^*}[\Psi]+\int_{\Sigma^*\cap J^-(\mathscr{C}_u)\cap J^-(\underline{\mathscr{C}}_v)}drd\omega\;r^{8-\epsilon}\Omega^2|r\fancyd_2\psi|^2+r^{6-\epsilon}\Omega^4|\alpha|^2.
\end{align}
provided the right hand side is finite.
\end{proposition}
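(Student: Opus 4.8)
The plan is to mimic the proof of \Cref{ILED alpha 2nd angular} by commuting the defining transport equation \bref{hier+} for $\psi$ with the angular operator $r\fancyd_2$, exploiting the fact that $r\fancyd_2$ commutes with $\Omega\slashed\nabla_3$ and $\Omega\slashed\nabla_4$ up to terms that are either lower order or carry favorable weights, and then running the same weighted transport estimate as in \Cref{psiILED} with the inhomogeneity now controlled by the already-established flux bound for $-2r^2\fancydstar_2\fancyd_2(r^3\Omega\psi)$ from \Cref{ILED alpha 2nd angular}. Concretely, applying $r\fancyd_2$ to \bref{hier+} yields a transport equation of the schematic form $\partial_u\left(r^{6+n}\Omega^4|r\fancyd_2\psi|^2\right) + n r^{n+5}\Omega^4|r\fancyd_2\psi|^2 = 2 r^{n-1}\tfrac{\Omega^2}{r^2}\,r\fancyd_2\Psi\cdot r^3\Omega\, r\fancyd_2\psi + (\text{commutator terms})$; the commutator between $r\fancyd_2$ and the hierarchy map contributes terms proportional to $r^{3}\Omega\psi$ itself (no extra derivative), which on $\mathscr{C}_u$ and on $\mathscr{D}^{u,v}_{\Sigma^*}$ are already controlled by \Cref{psiILED}.

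The key steps, in order, are as follows. First, derive the commuted transport equation for $r\fancyd_2\psi$ from \bref{hier+}, carefully tracking the commutator $[r\fancyd_2,\Omega\slashed\nabla_3]$ acting on $r\Omega^2\alpha$ and the attendant $r$-weights; this identifies the inhomogeneity as a combination of $r\fancyd_2\Psi$ (equivalently, via \bref{eq:d4Psi} applied after one angular derivative, a combination of $-2r^2\fancydstar_2\fancyd_2(r^3\Omega\psi)$ and lower-order $\psi, \alpha$ terms) plus terms of the form $r^\#\Omega^\#\psi$. Second, multiply by $r^{6+n}\Omega^4 (r\fancyd_2\psi)$ for $0<n<2$, i.e.\ $n = 1-\epsilon$, integrate over $\mathscr{D}^{u,v}_{\Sigma^*}$, and apply Cauchy--Schwarz to absorb the inhomogeneity, splitting off an $\epsilon$-fraction of the bulk $r^{7-\epsilon}\Omega^4|r\fancyd_2\psi|^2$ term onto the left-hand side. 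Third, estimate the resulting right-hand side: the $r\fancyd_2\Psi$-type contributions are bounded by $\mathbb{F}_{\Sigma^*}[\Psi]$ (via \Cref{RWILED}, \Cref{RWrp} commuted with $r\slashednabla$, and \Cref{ILED alpha 2nd angular}), the $r^3\Omega\psi$ commutator contributions by \Cref{psiILED}, and the data term on $\Sigma^*$ appears as written in the statement. Fourth, note that the boundary term on $\mathscr{C}_u$ produced by integrating $\partial_u$ has the right $r$-weight $r^{7-\epsilon}\Omega^4|r\fancyd_2\psi|^2$ but the statement only claims the spacetime integral, so this boundary flux (which comes for free) is simply discarded, or retained as a strengthening.

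I expect the main obstacle to be the bookkeeping of the commutator $[r\fancyd_2, \cdot]$ together with the $r$-weights: one must check that every commutator term genuinely lands at a weight that is controlled by the lower-order estimates \Cref{psiILED} and \Cref{alphaILED} with the stated finiteness hypothesis on the $\Sigma^*$ data, and in particular that no term appears that would require $r^{7-\epsilon}\Omega^4|r\fancyd_2\psi|^2$ control \emph{before} we have proven it (which would make the Cauchy--Schwarz absorption circular). The structural reason this works — and the point to verify — is that the operator $r\fancyd_2$ is a \emph{tangential} derivative whose commutator with the radial/null transport operators produces only zeroth-order-in-$\fancyd_2$ terms, so the hierarchy is genuinely triangular: $\psi$ controls $\alpha$, $\psi$ and $\fancydstar_2\fancyd_2\psi$-flux control $\fancyd_2\psi$. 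Assuming \Cref{RWILED}, \Cref{RWrp}, \Cref{psiILED}, \Cref{alphaILED} and \Cref{ILED alpha 2nd angular}, the remaining work is the routine weighted transport computation of the type already carried out in \cite{DHR16} and recalled in \Cref{psiILED}.
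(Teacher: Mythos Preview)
Your proposal is correct and matches the paper's approach: commute the transport relation \bref{hier+} with $r\fancyd_2$ and rerun the weighted transport argument of \Cref{psiILED}, controlling the resulting inhomogeneity via \Cref{ILED alpha 2nd angular} (together with \Cref{RWILED}, \Cref{RWrp}, \Cref{psiILED}). One simplification you can make: on the Schwarzschild background the operator $r\fancyd_2=\mathring{\fancyd}_2$ commutes \emph{exactly} with $\Omega\slashednabla_3$ (and with multiplication by any function of $r$), so the ``commutator terms'' you flag as the main obstacle are in fact absent, and the commuted transport equation is simply $\frac{r^2}{\Omega^2}\nablau\bigl(r^3\Omega\,r\fancyd_2\psi\bigr)=r\fancyd_2\Psi$ without any lower-order remainder.
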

Finally, commuting the equation for $\psi$ in (\ref{hier+}) with $\slashednabla_{R^*}$ gives us control over the remaining $\nablav\psi$ using the estimates for $\Psi$ and the nondegenerate control of $\slashednabla_{R^*} \psi$ in \Cref{RWILED}. We can optimise the weights near the event horizon and null infinity by commuting further with $\Omega^{-1}\nablagml$ and $r\nablav$ respectively:
\begin{proposition}\label{ILED psi higherorder}
Let $\alpha, \psi, \Psi$ be as in  \bref{hier+} and \Cref{+2 implies RW}, Then for any $u$ and any $v>0$ such that $(u,v,\theta^A)\in J^+(\Sigma^*)$, the following estimate holds for sufficiently small $\epsilon>0$
\begin{align}
\begin{split}
    \int_{\mathscr{C}_{u}\cap J^+(\Sigma^*)\cap J^-(\underline{\mathscr{C}}_v)} d\bar{v}d\omega\;r^{4-\epsilon} |\nablav(r^3\Omega&\psi)|^2 + \int_{\mathscr{D}^{u,v}_{\Sigma^*}}d\bar{u}d\bar{v}d\omega\;r^{7-\epsilon}\left[|(\Omega^{-1}\nablagml(\Omega\psi)|^2+|r\nablav\Omega\psi|^2\right]\\&\lesssim \mathbb{F}_{\Sigma^*}[\Psi]+\int_{\Sigma^*\cap\J^-(\mathscr{C}_u)\cap J^-(\underline{\mathscr{C}}_v)}dr d\omega\;r^{8-\epsilon}\left[|\Omega\psi|^2+|\Omega^{-1}\nablagml\psi|^2+|r\nablav\psi|^2\right]
\end{split}
\end{align}
provided the right hand side is finite.
\end{proposition}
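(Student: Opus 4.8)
The plan is to read the first relation in \bref{hier+} as a transport equation along the ingoing direction. Indeed \bref{hier+} gives directly
\begin{align}\label{planeq1}
\nablau(r^3\Omega\psi)=\frac{\Omega^2}{r^2}\Psi,
\end{align}
which I would first commute with $\slashednabla_{R^*}$. Since $\slashednabla_{R^*}$ is, modulo the vector field $\slashednabla_{T}$, a multiple of $\nablau$ — and $\nablau(r^3\Omega\psi)$ is already expressed through $\Psi$ by \bref{planeq1} — this commutation amounts to commuting with $\nablav$ and produces
\begin{align}\label{planeq2}
\nablau\big(\nablav(r^3\Omega\psi)\big)=\frac{\Omega^2}{r^2}\,\nablav\Psi+\Big(\nablav\tfrac{\Omega^2}{r^2}\Big)\Psi,
\end{align}
again a pure transport equation in $u$ whose right-hand side is controlled, with the appropriate $r$-weights, by the Regge--Wheeler estimates of \Cref{RWILED}, \Cref{RWrp} and \Cref{RWredshift} together with their commuted versions: in particular the first-order $r^p$-estimate \bref{RWrp k=1} bounds the $r$-weighted flux of $\nablav\Psi$ on $\mathscr{C}_u$, while \Cref{RWILED} commuted with $\slashednabla_T$ and $r\slashednabla$ controls the corresponding spacetime integral, and $\mathbb{F}_{\Sigma^*}[\Psi]$ absorbs the contributions of $\Psi$ itself.

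The main step is then a weighted transport estimate for \bref{planeq2}, exactly parallel to the proofs of \Cref{psiILED} and \Cref{alphaILED}: multiply \bref{planeq2} by $2r^{q}\Omega^{-2s}\nablav(r^3\Omega\psi)$, integrate over $\mathscr{D}^{u,v}_{\Sigma^*}$, and use $\partial_u r=-\Omega^2$ to generate a spacetime bulk term $q\,\Omega^2 r^{q-1}|\nablav(r^3\Omega\psi)|^2$ with a favourable sign for $q>0$. The source terms are absorbed by Cauchy--Schwarz, a small fraction going into the bulk and the remainder being estimated by the Regge--Wheeler bounds just mentioned; after averaging in the cut-off radius $R$ this yields control of $\int_{\mathscr{C}_u}r^{4-\epsilon}|\nablav(r^3\Omega\psi)|^2$ together with a spacetime bulk for $|\nablav(r^3\Omega\psi)|^2$, in terms of $\mathbb{F}_{\Sigma^*}[\Psi]$ and the $\Sigma^*$-data of $\psi$ and its first derivatives. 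To obtain the optimal bulk weights $r^{7-\epsilon}|\Omega^{-1}\nablagml(\Omega\psi)|^2$ and $r^{7-\epsilon}|r\nablav(\Omega\psi)|^2$, I would repeat the same procedure after commuting \bref{planeq1} additionally with $\Omega^{-1}\nablagml$ — which, as in the redshift argument behind \Cref{RWredshift}, produces a zeroth-order term with a good sign near $\mathscr{H}^+$ coming from the positivity of the surface gravity — and with $r\nablav$, which produces the favourable $r$-weights near $\mathscr{I}^+$ in the same way the $r^{h}\nablav$-commutations did in deriving \bref{RWrp k=1} and \bref{RWrp k=2}; the resulting lower-order terms are closed using the Hardy and Poincar\'e inequalities of \Cref{poincaresection}.

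Summing the flux and bulk estimates coming from the $\slashednabla_{R^*}$-, $\Omega^{-1}\nablagml$- and $r\nablav$-commutations then gives the stated inequality, with all Regge--Wheeler contributions absorbed into $\mathbb{F}_{\Sigma^*}[\Psi]$; the boundary terms evaluated on $\Sigma^*$ are rewritten in terms of $|\Omega\psi|^2$, $|\Omega^{-1}\nablagml\psi|^2$ and $|r\nablav\psi|^2$, trading any remaining $\nablau$-derivative of $\psi$ on $\Sigma^*$ against $\Psi$ via \bref{planeq1} (equivalently \bref{eq:d4Psi}), which is again controlled by $\mathbb{F}_{\Sigma^*}[\Psi]$. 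The hard part will be the simultaneous bookkeeping of the $r$- and $\Omega$-weights so that the bulk term keeps a good sign both near $\mathscr{H}^+$ — where the $\Omega^2$ degeneracy of the transport weight must be compensated by the redshift commutation — and near $\mathscr{I}^+$ — where the admissible powers of $r$ are limited because the underlying $r^p$-hierarchy for $\Psi$ only reaches $p=2$ — all while checking that every source term generated by the commutators is majorised by an already-established Regge--Wheeler estimate carrying the correct weight.
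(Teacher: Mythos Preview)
Your proposal is correct and follows essentially the same approach as the paper: commute the transport equation \bref{hier+} with $\slashednabla_{R^*}$ (equivalently $\nablav$, modulo terms already expressed through $\Psi$) to control $\nablav\psi$ via the Regge--Wheeler estimates, then commute further with $\Omega^{-1}\nablagml$ and $r\nablav$ to optimise the weights near $\mathscr{H}^+$ and $\mathscr{I}^+$ respectively. The paper's own argument is only a one-sentence sketch preceding the proposition, and your write-up simply fills in the bookkeeping it omits.
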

Similar estimates can be obtained for $\alpha$ by applying these ideas one more time to (\ref{hier+}), see section 12.3 of \cite{DHR16}. 
\begin{proposition}\label{alphaILED higher order}
Let $\alpha, \psi, \Psi$ be as in  \bref{hier+} and \Cref{+2 implies RW}, Then for any $u$ and any $v>0$ such that $(u,v,\theta^A)\in J^+(\Sigma^*)$, the following estimate holds for sufficiently small $\epsilon>0$
   \begin{align}
       \begin{split}
        &\int_{\mathscr{C}_{u}\cap J^+(\Sigma^*)\cap J^-(\underline{\mathscr{C}}_v)}d\bar{v}d\omega\;r^{6-\epsilon}\left[|r\fancyd_2\Omega^2\alpha|^2+|\Omega^{-1}\nablagml\Omega^2\alpha|^2+|r\nablav\Omega^2\alpha|^2\right]\\&+\int_{\mathscr{D}^{u,v}_{\Sigma^*}}d\bar{u}d\bar{v}d\omega\;r^{5-\epsilon}\left[|r\fancyd_2\Omega^2\alpha|^2+|\Omega^{-1}\nablagml\Omega^2\alpha|^2+|r\nablav\Omega^2\alpha|^2\right] \\ &\lesssim \mathbb{F}_{\Sigma^*}[\Psi]+\int_{\Sigma^*\cap\J^-(\mathscr{C}_u)\cap J^-(\underline{\mathscr{C}}_v)}drd\omega\;\Bigg\{r^{8-\epsilon}\left[|r\fancyd_2\Omega\psi|^2+|\Omega^{-1}\nablagml\Omega\psi|^2+|r\nablav\Omega\psi|^2\right]\\&+r^{6-\epsilon}\left[|r\fancyd_2\Omega^2\alpha|^2+|\Omega^{-1}\nablagml\Omega^2\alpha|^2+|r\nablav\Omega^2\alpha|^2\right]\Bigg\},
        \end{split}
    \end{align}
    provided the right hand side is finite.
\end{proposition}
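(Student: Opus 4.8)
The plan is to climb one further rung of the hierarchy \bref{hier+}, repeating for $r\Omega^2\alpha$ the argument used for $\psi$ in \Cref{psiILED} and \Cref{ILED psi higherorder} (this is exactly the step carried out in Section 12.3 of \cite{DHR16}). The starting point is the transport relation from \bref{hier+}, $\nablau(r\Omega^2\alpha)=\tfrac{\Omega^2}{r^2}(r^3\Omega\psi)$, viewed as an equation for $r\Omega^2\alpha$ along the ingoing null direction whose source $r^3\Omega\psi$ — together with its $r\fancyd_2$-, $\Omega^{-1}\nablagml$- and $r\nablav$-commuted versions — is already controlled by \Cref{psiILED}, \Cref{ILED alpha 2nd angular} and \Cref{ILED psi higherorder}. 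I would commute this transport equation in turn with $r\fancyd_2$, with $\Omega^{-1}\nablagml$, and with $r\nablav$; for each commuted unknown $X(r\Omega^2\alpha)$, $X\in\{r\fancyd_2,\Omega^{-1}\nablagml,r\nablav\}$, I would multiply by $r^{m}\Omega^{2k}X(r\Omega^2\alpha)$, integrate by parts over $S^2$ and over $\mathscr{D}^{u,v}_{\Sigma^*}$, and dispose of the resulting cross term by Cauchy--Schwarz, arriving at an inequality of the schematic shape
\[
\int_{\mathscr{C}_u\cap J^+(\Sigma^*)\cap J^-(\underline{\mathscr{C}}_v)}\!\!\! r^{6-\epsilon}|X\Omega^2\alpha|^2\;+\;\int_{\mathscr{D}^{u,v}_{\Sigma^*}}\!\!\! r^{5-\epsilon}|X\Omega^2\alpha|^2\;\lesssim\;\mathbb{F}_{\Sigma^*}[\Psi]+(\text{data on }\Sigma^*)+\int_{\mathscr{D}^{u,v}_{\Sigma^*}}\!\!\! r^{\,p'}|X\psi|^2 ,
\]
in which the last term lies in the range handled by \Cref{ILED alpha 2nd angular} or \Cref{ILED psi higherorder}.

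I would carry out the three commutations in the order stated, since each relies on the previous. The angular one is the simplest: $r\fancyd_2$ commutes with $\nablau$ up to factors that are $O(r^{-1})$ and $O(\Omega^2)$, so the commutator terms are strictly lower order and are absorbed in the bulk or bounded via \Cref{alphaILED}, while the source $r\fancyd_2(r^3\Omega\psi)$ is supplied by \Cref{ILED alpha 2nd angular} (which itself rests on \bref{eq:d4Psi} and \Cref{RWrp}); if needed I would also commute once more with $r\fancydstar_2$ to capture the full angular content of a symmetric traceless $2$-tensor. For the $\Omega^{-1}\nablagml$ commutation the only subtlety is a neighbourhood of $\mathscr{H}^+$: there the commutator $[\nablau,\Omega^{-1}\nablagml]$ produces a zeroth-order-in-$\Omega^{-1}\nablagml$ term whose sign, by the positivity of the surface gravity, is favourable — this is exactly the redshift mechanism invoked in \Cref{RWredshift} (see \cite{DR08}) — and away from the horizon $r$ is bounded above and below and the commutator is harmless; the source is handled by \Cref{ILED psi higherorder}. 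Finally, the $r\nablav$ commutation is of $r^p$-type: besides reproducing the transport structure it generates a term involving $r\slashednabla(r\Omega^2\alpha)$ and the potential, which I would absorb using the $r\fancyd_2\Omega^2\alpha$ estimate established in the first step, and a bulk term whose coefficient I keep non-negative by restricting to $r>R$ with $R$ large and then averaging in $R$, exactly as in \Cref{RWrp}; its source $r\nablav(r^3\Omega\psi)$ again comes from \Cref{ILED psi higherorder}. Summing the three resulting inequalities and discarding the non-negative flux and bulk contributions not wanted on the left gives the proposition, with the $\Sigma^*$ terms on the right being precisely the restrictions of the weighted commuted unknowns $X(r\Omega^2\alpha)$ and $X(r^3\Omega\psi)$ appearing along the way.

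The hard part will be purely one of bookkeeping the $r$-weights and keeping the Cauchy--Schwarz splittings consistent: each descent of the hierarchy costs two powers of $r$ (the weight drops from the $r^{8-\epsilon}$ available for $\psi$ to the $r^{6-\epsilon}$ claimed for $\alpha$), so in the cross term $\langle r^{m}\Omega^{2k}X(r\Omega^2\alpha),\,\tfrac{\Omega^2}{r^2}X(r^3\Omega\psi)\rangle$ the split must be arranged so that the $\alpha$-bulk (coefficient $\sim m$) is not reabsorbed while the $\psi$-flux is left with a weight for which \Cref{ILED alpha 2nd angular} or \Cref{ILED psi higherorder} actually gives a bound. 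I would also need to verify, in the $r\nablav$ step, that no spurious negative bulk term appears at large $r$, and, in the $\Omega^{-1}\nablagml$ step, that the redshift contribution dominates the remaining commutator terms near $\mathscr{H}^+$ for the chosen $\epsilon$ — but both checks are identical to those already performed in the proof of \Cref{ILED psi higherorder} and in Section 12.3 of \cite{DHR16}, so no essentially new difficulty is expected; the statement's proviso ``provided the right hand side is finite'' is exactly what licenses the integrations by parts.
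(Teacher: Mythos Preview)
Your proposal is correct and follows essentially the same approach as the paper, which merely states that ``similar estimates can be obtained for $\alpha$ by applying these ideas one more time to \bref{hier+}, see section 12.3 of \cite{DHR16}.'' Your write-up is in fact more detailed than what the paper provides, but the underlying strategy---commuting the transport relation $\nablau(r\Omega^2\alpha)=\tfrac{\Omega^2}{r^2}(r^3\Omega\psi)$ with the three vector fields and feeding in the already-established $\psi$-estimates---is exactly what is intended.
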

\subsubsection{An $r^p$-estimate for $\alpha,\psi$}\label{rp+2}
The structure of the $+2$ Teukolsky equation allows us to apply the method of \cite{DRrp} and \cite{AAG16a} to \Cref{T+2} in the same way it was applied in \Cref{subsection 5.1 Basic integrated boundedness and decay estimates}.
\begin{proposition}\label{T+2rp}
 Let $\alpha$ be a solution to the +2 equation (\ref{T+2}), then for $p\in[0,2], u>u_0$ and $\mathscr{D}=\{(u,v,\theta,\phi): \bar{u}\in[u_0,u], r>R\}$ we have the following:
 \begin{align}
 \begin{split}
     \int_{\mathscr{C}_{u}\cap\{r>R\}}d\bar{v}d\omega\;r^p|\nablav r^5\Omega^{-2}\alpha|^2+\int_{\mathscr{D}}d\bar{u}d\bar{v}d\omega\; (p+8)r^{p-1}|\nablav r^5\Omega^{-2}\alpha|^2+(2-p)r^{p-1}|\slashednabla r^5\Omega^{-2}\alpha|^2\\ \lesssim \mathbb{F}_{\Sigma^*}[\Psi]+\int_{\Sigma^*}r^{8-\epsilon}\Omega^2|\psi|^2+r^{6-\epsilon}\Omega^2|\alpha|^2+\int_{\Sigma^*\cap\{r>R\}}drd\omega\; r^p|\nablav r^5\Omega^{-2}\alpha|^2.
 \end{split}
 \end{align}
\end{proposition}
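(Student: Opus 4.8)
\textbf{Proof plan for Proposition \ref{T+2rp}.}

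The plan is to run the same $r^p$-weighted multiplier argument used for the Regge--Wheeler equation in \Cref{RWrp}, but now applied directly to the $+2$ Teukolsky equation \bref{T+2} satisfied by $\alpha$. The key algebraic observation is that in the region $r>R$ the weighted quantity $\Theta := r^5\Omega^{-2}\alpha$ (equivalently $r^4\cdot r\Omega^{-2}\alpha$) satisfies a wave-type equation whose first-order null terms carry a \emph{good} sign near $\mathscr{I}^+$. Concretely, rewriting \bref{T+2d} in terms of $\Theta$ and multiplying by $r^p\nablav\Theta$, then integrating by parts over $S^2$, one obtains a divergence identity of the schematic form
\begin{align}
\begin{split}
&\nablau\left[r^p|\nablav\Theta|^2\right]+\nablav\left[r^p\left(\Omega^2|\slashednabla\Theta|^2+\cdots\right)\right]\\
&\qquad+r^{p-1}\left\{(p+8+O(x))|\nablav\Theta|^2+(2-p+O(x))\Omega^2|\slashednabla\Theta|^2+\cdots\right\}\stackrel{S^2}{\equiv}0,
\end{split}
\end{align}
where the bulk coefficient of $|\nablav\Theta|^2$ is $p+8$ rather than $p$ because of the extra $r$-weight in the definition of $\Theta$ and the sign of the transport term in \bref{T+2}; this is precisely the structure already exploited for $\Phi^{(1)},\Phi^{(2)}$ in \bref{RWrp k=1}, \bref{RWrp k=2}. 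First I would derive this identity carefully from \bref{T+2d} (or from the first equation of \bref{hier+} rewritten as a transport equation for $r\Omega^2\alpha$), keeping track of which error terms are supported only on the timelike boundary $\{r=R\}$ and which are genuine bulk terms.

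Next I would integrate the identity over the region $\mathscr{D}=\mathscr{D}^{u,v}_{\Sigma^*}\cap\{r>R\}$. The boundary contributions are: the flux on $\mathscr{C}_u\cap\{r>R\}$ (which is the good term we want to bound, with a favourable sign by Poincar\'e's inequality on $S^2$ applied to the angular term), the data term on $\Sigma^*\cap\{r>R\}$, and a timelike boundary term on $\{r=R\}$. The timelike boundary term is controlled by averaging $R$ over a dyadic interval and invoking the transport/ILED estimates of \Cref{alphaILED,,ILED psi higherorder,,alphaILED higher order} together with \Cref{RWILED,,RWrp} for the underlying $\Psi$: these provide $\int_{\Sigma^*}r^{8-\epsilon}\Omega^2|\psi|^2+r^{6-\epsilon}\Omega^4|\alpha|^2$-controlled bounds for the fluxes of $\alpha$ and its derivatives across hypersurfaces of bounded $r$, which is exactly the right side appearing in the statement. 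For $p\in[0,2]$ all bulk coefficients are non-negative (for $p=2$ one still gets a nonnegative $(2-p)=0$ angular bulk term, and the $|\nablav\Theta|^2$ bulk term has coefficient $10>0$; for $p<2$ both are strictly positive once $R$ is large enough so that $|x|=|1-\Omega^{-2}|$ is small), so no borrowing from a Morawetz estimate beyond what \Cref{RWILED} already supplies is needed.

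The main obstacle I anticipate is bookkeeping the lower-order terms: the equation \bref{T+2d} for $\alpha$, unlike \bref{RW}, has a genuine first-order $\Omega\slashednabla_3$ term and a potential that is not sign-definite, so after the $r^p$-multiplier integration one is left with bulk terms proportional to $r^{p-1}|\Theta|^2$ and cross terms that must be absorbed either using the $r^{-2}|\Theta|^2$ term generated by the wave operator (via a Hardy inequality in $v$ along $\mathscr{C}_u$, as in \Cref{RWdecayscri}) or fed back into the lower-order estimates for $\psi$ and $\Psi$ from \Cref{psiILED,,alphaILED}. One must check that the weights match: the $r^{8-\epsilon},r^{6-\epsilon}$ weights on $\Sigma^*$ in \Cref{alphaILED,,alphaILED higher order} are chosen precisely so that these error integrals close. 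A secondary subtlety is that $\Theta=r^5\Omega^{-2}\alpha$ rather than $r\Omega^2\alpha$: one should first establish the estimate for the "once-weighted" quantity $r\Omega^2\alpha$ appearing naturally in \bref{hier+}, then relate $\nablav(r^5\Omega^{-2}\alpha)$ to $\nablav(r\Omega^2\alpha)$ and lower-order quantities using that $\Omega^{-4}r^4$ is bounded with bounded derivatives on $\{r>R\}$, so the passage from one to the other costs only terms already controlled by \Cref{alphaILED}. Once these reductions are in place the proof is a routine integration-by-parts computation, parallel to the sketch given for \bref{RWrp k=1}, \bref{RWrp k=2} in \Cref{subsection 5.1 Basic integrated boundedness and decay estimates}.
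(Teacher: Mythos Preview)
Your proposal is correct and follows essentially the same approach as the paper: rewrite the $+2$ equation in terms of $\Theta=r^5\Omega^{-2}\alpha$, multiply by $r^p\nablav\Theta$, integrate by parts, and handle the $\{r=R\}$ boundary term by averaging in $R$ and invoking the ILED estimate of \Cref{alphaILED}. The paper's version is slightly more direct than what you anticipate: it works with $\Theta=r^5\Omega^{-2}\alpha$ from the outset (no detour via $r\Omega^2\alpha$), and the zeroth-order bulk term $r^{p-3}|\Theta|^2$ is absorbed into the angular bulk term by the Poincar\'e inequality \bref{poincare} rather than a Hardy inequality, so the lower-order bookkeeping you flag as the ``main obstacle'' is in fact a single application of Poincar\'e.
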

\begin{proof}
Rewrite the +2 equation in terms of $r^5\Omega^2\alpha$:
\begin{align}\label{+2 equation for radiation field}
    \nablav\nablau r^5\Omega^{-2}\alpha+2\frac{3\Omega^2-1}{r}\nablav r^5\Omega^{-2}\alpha-\Omega^2\slashed{\Delta}r^5\Omega^{-2}\alpha-\frac{\Omega^2}{r^2}(15\Omega^2-13)r^5\Omega^{-2}\alpha=0.
\end{align}
Multiply by $r^p\nablav r^5\Omega^{-2}\alpha$ and integrate by parts:
\begin{align}
\begin{split}
    &\nablau\left[r^p|\nablav r^5\Omega^{-2}\alpha|^2\right]+\nablav\left[r^p\Omega^2\left(|\slashednabla r^5\Omega^{-2}\alpha|^2-(15\Omega^2-13)\frac{1}{r^2}|r^5\Omega^{-2}\alpha|^2\right)\right]\\
    &+\left\{4(3\Omega^2-1)+p\Omega^2\right\}r^{p-1}|\nablav r^5\Omega^{-2}\alpha|^2+\left[2-p-\frac{2M}{r}\right]r^{p-1}\left|\slashednabla r^5\Omega^{-2}\alpha\right|^2\\
    &-\left[\frac{2M}{r}(30\Omega^2-13)+(2-p)(15\Omega^4-13\Omega^2)\right]r^{p-3}\Omega^2|r^5\Omega^{-2}\alpha|^2=0.
\end{split}
\end{align}
Integrating in $\mathscr{D}$, the Poincar\'e inequality (\ref{poincare}) ensures that the leading order terms in the $\mathscr{I}^+$ flux term are positive, and we similarly use (\ref{poincare}) to absorb the last term in the previous equation into the term containing the angular derivative. Finally we can deal with the $r=R$ flux term by averaging over $R$ and using the integrated decay statement of \Cref{alphaILED}.
\end{proof}
Similarly, we have
\begin{proposition}\label{T+1rp}
    Let $\psi$ arise from $\alpha$ according to (\ref{hier+}), then we have
    \begin{align}
         \int_{\mathscr{C}_{u}\cap\{r>R\}}d\bar{v}d\omega\;r^p|\nablav r^5\Omega^{-1}\psi|^2+\int_{\mathscr{D}}d\bar{v}d\omega\; (p+4)r^{p-1}|\nablav r^5\Omega^{-1}\psi|^2+(2-p)r^{p-1}|\slashednabla r^5\Omega^{-1}\psi|^2\\ \lesssim \mathbb{F}_{\Sigma^*}[\Psi]+\int_{\Sigma^*}drd\omega\;r^{8-\epsilon}\Omega^2|\psi|^2+r^{6-\epsilon}\Omega^2|\alpha|^2+ \int_{\Sigma^*\cap\{r>R\}}drd\omega\;r^p|\nablav r^5\Omega^{-1}\psi|^2.
    \end{align}
\end{proposition}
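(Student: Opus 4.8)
\textbf{Proof proposal for Proposition \ref{T+1rp}.}

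The plan is to mirror the proof of Proposition \ref{T+2rp} one level down the hierarchy \bref{hier+}, exploiting the fact that the field $\psi$ (suitably rescaled) satisfies a wave equation with a first-order $\nablav$-term that has a favourable sign near $\mathscr{I}^+$, exactly as $\Phi^{(1)}$ did for Regge--Wheeler in \bref{Phi 1 wave equation}. First I would derive the equation satisfied by the radiation-field-normalised quantity $\Phi_\psi := r^5\Omega^{-1}\psi$. Starting from the +2 Teukolsky equation \bref{T+2} written in the form \bref{+2 equation for radiation field} for $r^5\Omega^{-2}\alpha$, together with the transport relation $r^3\Omega\psi = \frac{r^2}{\Omega^2}\nablau r\Omega^2\alpha$ from \bref{hier+}, one commutes $\nablav$ past the Teukolsky operator using the commutation identity \bref{commutation relation} (with the appropriate choice of integer parameters $a,b,c,k,k'$ corresponding to one commutation rather than two). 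This produces a wave equation for $\Phi_\psi$ of the schematic form
\begin{align}
\nablau\nablav\Phi_\psi + \frac{c_1(r)}{r}\nablav\Phi_\psi - \Omega^2\slashed{\Delta}\Phi_\psi - \frac{\Omega^2}{r^2}c_2(r)\,\Phi_\psi = \frac{\Omega^2}{r^2}\,(\text{lower order in } \alpha),
\end{align}
where $c_1(r) \to$ a positive constant as $r\to\infty$ — this positivity is the analogue of the good $\nablav\Phi^{(1)}$ sign in \bref{Phi 1 wave equation} and it is exactly what makes the $p+4$ coefficient appear.

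Next I would run the $r^p$-multiplier argument: multiply the $\Phi_\psi$ equation by $r^p\nablav\Phi_\psi$, integrate by parts over $S^2$, and integrate over the region $\mathscr{D} = \{\bar u\in[u_0,u],\ r>R\}$. The bulk terms organise into $(p+4)r^{p-1}|\nablav\Phi_\psi|^2 + (2-p)r^{p-1}|\slashednabla\Phi_\psi|^2 + (\text{lower order})$; the Poincar\'e inequality \Cref{poincare} handles the sign of the $\mathscr{I}^+$-flux angular terms and absorbs the zeroth-order bulk term into the $(2-p)$-term for $p\in[0,2]$ and $R$ large. The inhomogeneous right-hand side, being quadratic in $r$-weighted $\alpha$ and its lower-order transported quantities, is absorbed using the already-established estimates: the $\psi$- and $\alpha$-flux bounds of Propositions \ref{psiILED}, \ref{alphaILED}, the integrated decay of \Cref{RWILED} and \Cref{alphaILED}, and the lower-step $r^p$-estimate \Cref{T+2rp} for $\alpha$ itself. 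The timelike boundary term at $r=R$ is dealt with by averaging in $R$ over a dyadic interval and appealing once more to \Cref{alphaILED}, precisely as in \Cref{T+2rp} and \Cref{RWrp}.

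The main obstacle I anticipate is purely bookkeeping rather than conceptual: correctly tracking the $\Omega$-weights and the precise values of $c_1(r), c_2(r)$ through the commutation \bref{commutation relation}, so that the coefficient of the bulk $|\nablav\Phi_\psi|^2$ term genuinely comes out as $p+4$ (and not, say, $p+2$), and making sure the right-hand side inhomogeneity is controlled by terms for which estimates are already in hand — one must avoid generating a term involving $\nablav\alpha$ with a weight stronger than what \Cref{T+2rp} provides. Because the equation for $\psi$ sits one level \emph{below} $\Psi$ and one level \emph{above} $\alpha$ in the hierarchy, all the needed inputs are available; the delicate point is simply that the $r^p$-estimate for $\psi$ must be proved \emph{after} the one for $\alpha$ (or simultaneously, using the lower-order $\alpha$-estimate only at weight $r^{p}$ on $\Sigma^*$), so that no circularity arises. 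Once the signs are verified, the integration-by-parts identity and the absorption of error terms follow the template already executed twice in the excerpt.
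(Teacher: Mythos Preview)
Your proposal is correct and follows essentially the same route as the paper: derive a wave equation for $\Phi_\psi=r^5\Omega^{-1}\psi$ with the good first-order term $\frac{3\Omega^2-1}{r}\nablav\Phi_\psi$ (yielding the $p+4$ coefficient), apply the $r^p$-multiplier, and absorb the inhomogeneity via Cauchy--Schwarz and the prior $\alpha$-estimates. Two small clarifications: the derivation is most cleanly obtained not by ``commuting $\nablav$'' but by noting that $r^5\Omega^{-1}\psi=\tfrac{r^4}{\Omega^4}\nablau r\Omega^2\alpha$ so that the Teukolsky equation \bref{T+2} directly gives $\nablav(r^5\Omega^{-1}\psi)$ in terms of $\alpha$, and one then differentiates by $\nablau$; and the resulting inhomogeneity involves only $r^5\Omega^{-2}\alpha$ with rapidly decaying weight (no $\nablav\alpha$ term appears), so your worry about circularity is unfounded and the absorption is simpler than you anticipate.
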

\begin{proof}
Rewrite the definition of $\psi$ in terms of $r^5\Omega^{-1}\psi$ and differentiate via $\nablau$ to get
\begin{align}
    \nablau\nablav r^5\Omega^{-1}\psi+\frac{3\Omega^2-1}{r}\nablav r^5\Omega^{-1}\psi-\Omega^2\slashed{\Delta} r^5\Omega^{-1}\psi+\frac{\Omega^2}{r^2}(3\Omega^2-5)r^5\Omega^{-1}\psi=-12M^2\frac{\Omega^4}{r^4} r^5\Omega^{-2}\alpha.
\end{align}
We repeat the argument employed in \Cref{T+2rp} using Cauchy--Schwarz to estimate the $\alpha$ term on the right hand side. 
\end{proof}
\begin{remark}\label{transversealphapsi}
We have similar statements to \Cref{T+2rp,,T+1rp} for $\frac{r^2}{\Omega^2}\nablav$ derivatives of $r^5\Omega^{-1}\psi$ and $r^5\Omega^{-2}\alpha$ .
\end{remark}
\subsection{Future radiation fields and fluxes}\label{+2 radiation}
In this section the notion of future radiation fields of solutions to the +2 Teukolsky equation \bref{T+2} is defined, and some of the properties of these radiation fields are studied, in particular obtaining their $\mathcal{E}^{T,+2}_{\mathscr{H}^+}$, $\mathcal{E}^{T,+2}_{\mathscr{I}^+}$ fluxes when they belong to solutions of \bref{T+2} arising from smooth data of compact support.
\subsubsection{Radiation on $\mathscr{H}^+$}\label{+2 radiation on H+}
\begin{defin}\label{+2 radiation alpha definition H}
    Let $\alpha$ be a solution to (\ref{T+2}) arising from smooth data as in \Cref{WP+2Sigma*} or \Cref{WP+2Sigmabar}. The radiation field of $\alpha$ along $\mathscr{H}^+$, denoted $\upalpha_{\mathscr{H}^+}$ is defined to be the restriction of $2M\Omega^2\alpha$ to $\mathscr{H}^+$. 
\end{defin}
\begin{remark}
We will use the same notation for the radiation field on $\mathscr{H}^+_{\geq0}, \mathscr{H}+$ or $\overline{\mathscr{H}^+}$.
\end{remark}
As an easy consequence of the estimates of the previous section we have the following non-quantitative decay statements: (All statements here apply to $\overline{\mathscr{H}^+}$)
\begin{corollary}\label{psi+2ptwisedecay}
For smooth data of compact support for the +2 on $\Sigma^*$, $\Sigma$ or $\overline{\Sigma}$, $\psi$ decays along any hypersurface $r=R$
\begin{align}
    \lim_{v\longrightarrow \infty} \left|\left|\Omega\psi\right|\right|_{L^2(S^2_{R})}=0.
\end{align}
\end{corollary}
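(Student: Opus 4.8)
The plan is to extract the claimed decay of $\Omega\psi$ on a fixed-$r$ hypersurface $\{r = R\}$ directly from the integrated decay estimates already established for $\psi$ in \Cref{psiILED,,ILED psi higherorder}, combined with the commutation structure of the hierarchy \bref{hier+} and a one-dimensional Sobolev embedding along $\mathscr{H}^+$-penetrating null generators. First I would observe that, since $R < \infty$ is fixed, the weights $r^{7-\epsilon}\Omega^4$ appearing in the spacetime bulk terms of \Cref{psiILED} and \Cref{ILED psi higherorder} are bounded above and below by positive constants on the region $\{r = R\}$ (and in a fixed neighbourhood of it), so these estimates reduce, near $\{r=R\}$, to plain $L^2$-in-spacetime control of $\Omega\psi$, $\Omega^{-1}\slashednabla_3(\Omega\psi)$, $r\slashednabla_4(\Omega\psi)$ and angular derivatives $r\fancyd_2(\Omega\psi)$, all with finite right-hand side because the data are smooth and compactly supported (so the data integrals on $\Sigma^*$, $\Sigma$ or $\overline\Sigma$ are finite).

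The key steps, in order: (1) commute the transport equation in \bref{hier+} with $\slashed{\mathcal{L}}_T$ twice and with angular Killing fields, and invoke the higher-order versions of \Cref{psiILED} and \Cref{ILED psi higherorder} to conclude $\int_{v_0}^\infty d\bar v\, \underline{F}_{\underline{\mathscr{C}}_{\bar v}\cap\{r<R'\}}[\text{derivatives of }\Omega\psi] < \infty$ for some $R' > R$; this mimics exactly the argument in \Cref{RWdecayfixedR} for $\Psi$. (2) From the finiteness of this $v$-integral of a nonnegative quantity, deduce $\lim_{v\to\infty}\underline{F}_{\underline{\mathscr{C}}_v\cap\{r<R'\}}[\Omega\psi] = 0$ along a sequence, and upgrade to a genuine limit using monotonicity that follows from the boundedness estimates (the redshift-type estimates of \cite{DHR16} as collected in \Cref{psiILED}, \Cref{ILED psi higherorder}), just as in \Cref{RWdecayfixedR}. (3) Apply the Sobolev embedding $W^{3,1}(S^2)\hookrightarrow L^\infty(S^2)$ together with the fundamental theorem of calculus in $v$ along $\underline{\mathscr{C}}_v$ (equivalently, use that $\|\Omega\psi\|_{L^2(S^2_{R})}^2$ is controlled by $\|\Omega\psi\|_{L^2(S^2_{r=R'})}^2$ plus $\int |\slashednabla_4(\Omega\psi)|$-type terms over $\underline{\mathscr{C}}_v\cap\{R<r<R'\}$), to convert the vanishing of the near-horizon flux into $\lim_{v\to\infty}\|\Omega\psi\|_{L^2(S^2_R)} = 0$. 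The passage to $\overline{\mathscr{H}^+}$ is routine since $\{r=R\}$ does not meet $\mathcal{B}$ and the relevant quantities are smooth there.

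The main obstacle I anticipate is step (2): the integrated-decay estimates only directly give that the near-horizon flux is $L^1$ in $v$, hence vanishes along a subsequence, and turning this into a true limit requires a boundedness/monotonicity input. For the scalar-like Regge--Wheeler field $\Psi$ this was handled in \Cref{RWdecayfixedR} using $T$-energy conservation and the redshift estimate; here one must instead use that the bulk and boundary estimates for $\Omega\psi$ in \Cref{psiILED} and \Cref{ILED psi higherorder}, applied on truncated domains $\mathscr{D}^{u,v}_{\Sigma^*}$ and then $\mathscr{D}^{u,v'}_{\Sigma^*}$ with $v < v'$, yield that $\int_{\underline{\mathscr{C}}_v\cap\{r<R'\}}\Omega^2|\Omega\psi|^2\, d\bar v\, d\omega$ (and its commuted versions) is, up to the fixed data norm, nonincreasing along a suitable foliation — this is exactly the structure needed to promote subsequential decay to full decay. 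Everything else is a direct transcription of the corresponding arguments for $\Psi$ in \Cref{subsection 5.2 subsection Radiation fields}, with the extra $r$-weights being harmless on the compact region $\{r=R\}$.
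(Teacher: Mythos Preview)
Your proposal follows essentially the same route as the paper's proof: apply \Cref{psiILED} to $\psi$ and $\slashed\nabla_T\psi$ to obtain decay of the flux $\int_{\underline{\mathscr{C}}_v\cap\{r\leq R\}}\Omega^2|\psi|^2$, then invoke \Cref{ILED psi higherorder} for the transverse derivative and use the fundamental theorem of calculus along $\underline{\mathscr{C}}_v$ to conclude. Two small corrections are in order.

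First, in step (3) you write ``fundamental theorem of calculus in $v$ along $\underline{\mathscr{C}}_v$'' and refer to ``$\int|\slashed\nabla_4(\Omega\psi)|$-type terms over $\underline{\mathscr{C}}_v\cap\{R<r<R'\}$''. On an ingoing cone $\underline{\mathscr{C}}_v$ the coordinate $v$ is fixed and one integrates in $u$, so the derivative entering the fundamental theorem of calculus is $\Omega^{-1}\slashed\nabla_3(\Omega\psi)$, not $\slashed\nabla_4$. This is precisely the quantity controlled by \Cref{ILED psi higherorder}, which you already invoked in step (1), so the ingredients are all present; only the direction is misstated. The Sobolev embedding on $S^2$ is also unnecessary here since the claim is about the $L^2(S^2_R)$ norm, not $L^\infty$.

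Second, your anticipated obstacle in step (2) is not resolved by monotonicity of the flux but rather by the $T$-commutation you already performed in step (1): once both $\int_{\underline{\mathscr{C}}_v\cap\{r\leq R\}}\Omega^2|\psi|^2$ and $\int_{\underline{\mathscr{C}}_v\cap\{r\leq R\}}\Omega^2|\slashed\nabla_T\psi|^2$ are in $L^1_v$, the former has bounded variation in $v$ and hence converges, necessarily to zero. This is exactly how the paper handles it (and how the analogous step was done for $\Psi$ in \Cref{RWdecayfixedR}); no separate monotonicity input is needed.
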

\begin{proof}
\Cref{psiILED} applied to $\psi$ and $\slashednabla_T\psi$ implies 
\begin{align}
    \lim_{v\longrightarrow \infty} \int_{\underline{\mathscr{C}}_v\cap\{r\in[2M,R]\}} \Omega^2|\psi|^2 du \sin\theta d\theta d\phi =0 .
\end{align}
Repeating this for $\Omega^{-1}\slashednabla_3 \Omega \psi$ using \Cref{ILED psi higherorder} gives the result.
\end{proof}
The same works for $\alpha$ using propositions \ref{alphaILED} and \ref{alphaILED higher order}:
\begin{corollary}\label{alpha+2ptwisedecay}
For smooth data of compact support on $\Sigma^*$, $\Sigma$ or $\overline{\Sigma}$, $\alpha$ decays along any hypersurface $r=R$:
\begin{align}
    \lim_{v\longrightarrow \infty} \left|\left|\Omega^2\alpha\right|\right|_{L^2(S^2_{R})}=0.
\end{align}
\end{corollary}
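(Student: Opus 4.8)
The statement to prove is \Cref{alpha+2ptwisedecay}: for smooth compactly supported Cauchy data for the $+2$ Teukolsky equation \bref{T+2} on $\Sigma^*$, $\Sigma$ or $\overline{\Sigma}$, the quantity $\Omega^2\alpha$ decays in $L^2(S^2_R)$ along any hypersurface $\{r=R\}$ as $v\to\infty$. The plan is to follow exactly the template of the just-proven \Cref{psi+2ptwisedecay}, simply feeding in the $\alpha$-analogues of the integrated decay estimates rather than the $\psi$-ones. The key observation is that the integrated local energy decay estimates of \Cref{alphaILED} and \Cref{alphaILED higher order} are commuted versions of Morawetz-type estimates: applied to $\alpha$ itself they give spacetime integral bounds of the form $\int_{\mathscr D^{u,v}_{\Sigma^*}} r^{5-\epsilon}\Omega^6|\alpha|^2<\infty$ (and similarly with angular, redshift and $r\nablav$ commutations), while applied to $\slashednabla_T\alpha$ they give the analogous bounds for $\slashednabla_T\alpha$. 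Since the data is compactly supported on $\Sigma^*$, the right-hand sides of \Cref{alphaILED,,alphaILED higher order} are finite, so these estimates are available.

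First I would restrict attention to a fixed compact $r$-interval $\{r\in[2M,R]\}$, on which $\Omega^2$ and $r^{5-\epsilon}$ are comparable to constants, so that \Cref{alphaILED} applied to $\alpha$ yields $\int_{v_0}^\infty \big(\int_{\underline{\mathscr C}_v\cap\{r\le R\}}\Omega^2|\alpha|^2\, du\, d\omega\big)\,dv<\infty$; applying it to $\slashednabla_T\alpha$ gives the same with $\alpha$ replaced by $\slashednabla_T\alpha$. Since the $v$-integral of a nonnegative function is finite, and since the integrand is (by the transport structure and the boundedness estimates) continuous in $v$, one concludes that $\liminf_{v\to\infty}\int_{\underline{\mathscr C}_v\cap\{r\le R\}}\Omega^2|\alpha|^2<\infty$ is actually a genuine limit equal to zero — this is the standard argument, identical to the one used for $\psi$ in \Cref{psi+2ptwisedecay} and, before that, for $\Psi$ in \Cref{RWdecayfixedR}. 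Here one uses that the $T$-energy-type fluxes on the ingoing cones are monotone / bounded (via \Cref{RWredshift}-type nondegenerate estimates applied to $\alpha$, i.e.\ \Cref{alphaILED higher order}), so that the local energy flux $\int_{\underline{\mathscr C}_v\cap\{r\le R\}}\Omega^2\big(|\alpha|^2+|\Omega^{-1}\nablagml\alpha|^2+\ldots\big)$ tends to $0$ as $v\to\infty$.

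Next I would upgrade this $L^2$-flux decay near $\mathscr H^+$ to the pointwise-in-$v$ statement $\lim_{v\to\infty}\|\Omega^2\alpha\|_{L^2(S^2_R)}=0$. This is done exactly as in the last lines of the proof of \Cref{RWdecayfixedR}: commute the Teukolsky equation with $\Omega^{-1}\nablagml$ (the redshift commutator, using \Cref{alphaILED higher order}) and with the angular Killing fields $\Omega_i$, so that one controls $\int_{v_0}^\infty\underline F_v[\,(\Omega^{-1}\nablagml)^j (r\slashednabla)^\gamma\,\Omega^2\alpha\,]\,dv<\infty$ for enough $j,\gamma$; then the fundamental theorem of calculus in $u$ along $\underline{\mathscr C}_v$ (starting from $\{r=R\}$ and integrating towards $\mathscr H^+$, where the weights degenerate favorably) together with the Sobolev embedding $W^{2,1}(S^2)\hookrightarrow L^\infty(S^2)$ converts the decaying fluxes into $\sup_{u\ge u_R}\|\Omega^2\alpha\|_{L^2(S^2_{u,v})}\to 0$, which in particular gives the claimed decay on $\{r=R\}$. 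The cases of data on $\Sigma$ and $\overline{\Sigma}$ reduce to the $\Sigma^*$ case because $J^+(\Sigma)\cap J^-(\Sigma^*)$ (resp.\ $J^+(\overline\Sigma)\cap J^-(\Sigma^*)$) is a compact-in-$r$ region handled by local well-posedness (\Cref{WP+2Sigma*,,WP+2Sigmabar}) and finite-in-time energy estimates, exactly as remarked for \Cref{psi+2ptwisedecay}.

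\textbf{Main obstacle.} There is no serious analytic obstacle: the proof is a routine transcription of the argument for \Cref{psi+2ptwisedecay}, with \Cref{psiILED,,ILED psi higherorder} replaced by \Cref{alphaILED,,alphaILED higher order}. The only point requiring a little care is bookkeeping the degenerate weights: the integrated decay estimate for $\alpha$ in \Cref{alphaILED} comes with the positive power $r^{5-\epsilon}\Omega^6$ (and the commuted version in \Cref{alphaILED higher order} with $r^{5-\epsilon}$ times redshift weights), so one must make sure that on the fixed strip $\{r\le R\}$ these weights are bounded below away from zero except near $\mathscr H^+$, and that near $\mathscr H^+$ the redshift commutation in \Cref{alphaILED higher order} supplies the nondegenerate control of $\Omega^{-1}\nablagml\,\Omega^2\alpha$ needed to run the $L^\infty$ Sobolev argument up to and including the horizon — this is precisely why one must commute with $\Omega^{-1}\nablagml$ and not merely with $\slashednabla_T$.
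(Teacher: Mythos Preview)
Your proposal is correct and follows exactly the paper's approach: the paper's proof of \Cref{alpha+2ptwisedecay} is literally the one-line remark ``The same works for $\alpha$ using propositions \ref{alphaILED} and \ref{alphaILED higher order}'', i.e.\ transcribe the proof of \Cref{psi+2ptwisedecay} with the $\psi$-estimates replaced by the $\alpha$-estimates. Your write-up is simply a fleshed-out version of this, including the same redshift-commutation bookkeeping near $\mathscr{H}^+$ that the paper leaves implicit.
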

Commuting with the lie derivative along angular Killing fields $\slashed{\mathcal{L}}_{\Omega_i}^\gamma$ for $|\gamma|\leq2$ gives
\begin{corollary}\label{horizonpsidecay}
For smooth data of compact support for the +2 Teukolsky equation on $\Sigma^*$, $\Sigma$ or $\overline{\Sigma}$, $\Omega\psi|_{\mathscr{H}^+}$ and $\Omega^2\alpha|_{\mathscr{H}^+}$ decay towards $\mathscr{H}^+_+$.
\end{corollary}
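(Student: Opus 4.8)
The plan is to upgrade the $L^2(S^2)$ decay of \Cref{psi+2ptwisedecay} and \Cref{alpha+2ptwisedecay} to pointwise decay by commuting with the rotation Killing fields of Schwarzschild and invoking Sobolev embedding on the sphere. First I would record the commutation property that makes this work: the Killing fields $\Omega_1,\Omega_2,\Omega_3$ generating the $so(3)$ isometry algebra are everywhere tangent to the spheres $S^2_{u,v}$, so their Lie derivatives commute with the projected operators $\slashed{\nabla}_3,\slashed{\nabla}_4$ and $\slashed{\Delta}$, all of whose coefficients in \bref{T+2} and in the hierarchy \bref{hier+} depend on $r$ alone. Hence, if $\alpha$ solves the $+2$ Teukolsky equation \bref{T+2} with smooth, compactly supported data on $\Sigma^*$ (or $\Sigma$, or $\overline{\Sigma}$), then for every multi-index $\gamma$ the field $\slashed{\mathcal{L}}^\gamma_{\Omega_i}\alpha:=\mathcal{L}^{\gamma_1}_{\Omega_1}\mathcal{L}^{\gamma_2}_{\Omega_2}\mathcal{L}^{\gamma_3}_{\Omega_3}\alpha$ again solves \bref{T+2} with smooth, compactly supported data, and the associated hierarchy fields are $\slashed{\mathcal{L}}^\gamma_{\Omega_i}\psi,\slashed{\mathcal{L}}^\gamma_{\Omega_i}\Psi$.

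Next I would apply \Cref{psi+2ptwisedecay} and \Cref{alpha+2ptwisedecay} — more precisely the integrated-decay mechanism established in their proofs via \Cref{psiILED}, \Cref{alphaILED}, \Cref{ILED psi higherorder} and \Cref{alphaILED higher order}, which controls $\|\,\cdot\,\|_{L^2(S^2_{u,v})}$ uniformly in $u$ throughout a neighbourhood $\{r\in[2M,R]\}$ of $\mathscr{H}^+$, and in particular on $\mathscr{H}^+$ itself — to each commuted solution $\slashed{\mathcal{L}}^\gamma_{\Omega_i}\alpha$ with $|\gamma|\le 2$. This yields
\begin{align*}
    \lim_{v\longrightarrow\infty}\sum_{|\gamma|\le 2}\left\|\slashed{\mathcal{L}}^\gamma_{\Omega_i}\Omega\psi\right\|_{L^2(S^2_{\infty,v})}=0,\qquad\qquad \lim_{v\longrightarrow\infty}\sum_{|\gamma|\le 2}\left\|\slashed{\mathcal{L}}^\gamma_{\Omega_i}\Omega^2\alpha\right\|_{L^2(S^2_{\infty,v})}=0.
\end{align*}
Since the $\Omega_i$ span $TS^2$ at every point, the sums on the left are equivalent to the $H^2(S^2_{\infty,v})$-norms of the tensor fields $\Omega\psi$ and $\Omega^2\alpha$ (a standard elliptic estimate on the round sphere, of the type recorded in \Cref{subsubsection 2.1.2 Elliptic estimates on S2}), and the embedding $H^2(S^2)\hookrightarrow L^\infty(S^2)$ then gives $\sup_{S^2_{\infty,v}}|\Omega\psi|\to 0$ and $\sup_{S^2_{\infty,v}}|\Omega^2\alpha|\to 0$ as $v\to\infty$, i.e. $\Omega\psi|_{\mathscr{H}^+}$ and $\Omega^2\alpha|_{\mathscr{H}^+}=(2M)^{-1}\upalpha_{\mathscr{H}^+}$ decay towards $\mathscr{H}^+_+$. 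The cases of data on $\Sigma$ and $\overline{\Sigma}$ are identical, since the evolution on $J^+(\overline{\Sigma})\cap J^-(\Sigma^*)$ is handled locally and leaves the late-$v$ behaviour unchanged.

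The only mildly delicate point is the commutation of $\slashed{\mathcal{L}}_{\Omega_i}$ with the $S^2_{u,v}$-projected operators and the elliptic estimate on $S^2$ for symmetric traceless $2$-tensors; both are standard, but one must use all three $\Omega_i$ together, since each individually degenerates at isolated points. Everything else is a direct appeal to \Cref{psi+2ptwisedecay} and \Cref{alpha+2ptwisedecay} together with their proofs, so I do not expect any substantial obstacle here — this corollary is genuinely a "soft" consequence of the integrated-decay estimates already in hand.
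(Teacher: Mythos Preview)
Your proposal is correct and follows exactly the approach indicated in the paper, which simply states ``Commuting with the Lie derivative along angular Killing fields $\slashed{\mathcal{L}}_{\Omega_i}^\gamma$ for $|\gamma|\leq2$ gives'' before the corollary. You have spelled out the details (commutation with the Teukolsky operator and the hierarchy \bref{hier+}, application of \Cref{psi+2ptwisedecay} and \Cref{alpha+2ptwisedecay} to the commuted solutions, and the Sobolev embedding on $S^2$) that the paper leaves implicit, but the argument is the same.
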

\subsubsection{Radiation flux on $\mathscr{H}^+$}\label{+2 radiation flux on H+}
Assume $\alpha$ satisfies \bref{T+2} and arises from smooth, compactly supported data on $\Sigma^*$. The regularity of $\Psi$ implies that
on $\mathscr{H}^+$, the radiation flux in terms of $\Psi$ is given by \bref{RW def rad flux at H} %of \Cref{RW def of rad at H}:
\begin{align}
    \left\|\Psi\right\|_{\mathcal{E}^T_{\mathscr{H}^+}}^2= \left\|\nablav\Psi\right\|^2_{L^2(\mathscr{H}^+)}.
\end{align}
Recall that if $\alpha$ satisfies the +2 Teukolsky equation \cref{T+2} then $\alpha, \Psi$ also satisfy (\ref{eq:d4Psi}) and (\ref{eq:d4d4Psi}):
\begin{align}\label{psi out of alpha}
\begin{split}
\nablav \Psi=\mathcal{A}_2 \frac{r^2}{\Omega^2}\Omega\slashed{\nabla}_3r\Omega^2\alpha-6Mr\Omega^2\alpha-(3\Omega^2-1)\frac{r^2}{\Omega^2}\Omega\slashed{\nabla}_3r\Omega^2\alpha,
\end{split}
\end{align}
\begin{align}\label{Psi out of alpha}
\begin{split}
\frac{\Omega^2}{r^2}\Omega\slashed{\nabla}_4 \frac{r^2}{\Omega^2}\Omega\slashed{\nabla}_4 \Psi&=\mathcal{A}_2(\mathcal{A}_2-2) r\Omega^2\alpha-6M\left(\Omega\slashed{\nabla}_3+\Omega\slashed{\nabla}_4\right)r\Omega^2\alpha.
\end{split}
\end{align}
We find the limits towards $\mathscr{H}^+$: the left hand sides of (\ref{Psi out of alpha}) reads:
\begin{align}
(\nablav)^2\Psi+\frac{3\Omega^2-1}{r}\nablav\Psi\longrightarrow \left[\partial_v-\frac{1}{2M}\right]\partial_v\bm{\uppsi}_{\mathscr{H}^+} \text{\;towards\;} \mathscr{H}^+.
\end{align}
Now the right hand side reads:
\begin{align}
\mathcal{A}_2\left[\mathcal{A}_2-2\right]\upalpha_{\mathscr{H}^+}-6M\partial_v\upalpha_{\mathscr{H}^+},
\end{align}
so we must determine $\partial_v\Psi$ from the equation
\begin{align}\label{equation for Psi out of alpha on H}
\partial_v^2\bm{\uppsi}_{\mathscr{H}^+}-\frac{1}{2M}\partial_v\bm{\uppsi}_{\mathscr{H}^+}=\mathcal{A}_2\left[\mathcal{A}_2-2\right]\upalpha_{\mathscr{H}^+}-6M\partial_v\upalpha_{\mathscr{H}^+}.
\end{align}
In Kruskal coordinates, this reads
\begin{align}
\begin{split}
    \frac{1}{(2M)^2}\partial_V^2\Psi&=\mathcal{A}_2(\mathcal{A}_2-2)V^{-2}\upalpha_{\mathscr{H}^+}-3V^{-1}\partial_V \upalpha_{\mathscr{H}^+}\\
    &=\left[\mathcal{A}_2(\mathcal{A}_2-2)-6\right]V^{-2}\upalpha_{\mathscr{H}^+}-3V\partial_V V^{-2}\upalpha_{\mathscr{H}^+}.
\end{split}
\end{align}
With the condition that $\Psi,\nablav\Psi$ decay as $v\longrightarrow\infty$, we have
\begin{align}\label{eq:197}
    -\frac{1}{(2M)^2}\partial_V\Psi=\int_V^\infty\left\{\left[\mathcal{A}_2(\mathcal{A}_2-2)-6\right]V^{-2}\upalpha_{\mathscr{H}^+}-3V\partial_V V^{-2}\upalpha_{\mathscr{H}^+}\right\}d\widebar{V}
\end{align}
Integrating in again in $V$ and using the fact that $\upalpha_{\mathscr{H}^+}$ is compactly supported we get:
\begin{align}\label{eq:198}
    \frac{1}{(2M)^2}\Psi=\int_V^\infty (V-\bar{V})\left\{\left[\mathcal{A}_2(\mathcal{A}_2-2)-6\right]V^{-2}\upalpha_{\mathscr{H}^+}-3V\partial_V V^{-2}\upalpha_{\mathscr{H}^+}\right\}d\bar{V}.
\end{align}
In Eddington-Finkelstein coordinates this reads
\begin{lemma}\label{flux+2horizon}
Let $\alpha$ be a solution to the +2 Teukolsky equation \bref{T+2} arising from data of compact support on $\mathscr{H}^+_{\geq0}$, and let $\Psi$ be the corresponding solution to the Regge--Wheeler equation arising from $\alpha$ via \bref{hier+}. Then the radiation field $\uppsi_{\mathscr{H}^+}$ on $\mathscr{H}^+$ belonging to $\Psi$ is given by:
\begin{align}\label{eq:199}
\bm{\uppsi}_{\mathscr{H}^+}=2M\int_v^\infty \left[e^{\frac{1}{2M}(v-\bar{v})}-1\right]\left\{\mathcal{A}_2\left[\mathcal{A}_2-2\right]\upalpha_{\mathscr{H}^+}-6M\partial_v\upalpha_{\mathscr{H}^+}\right\},
\end{align}
\begin{align}\label{eq:200}
\partial_v \bm{\uppsi}_{\mathscr{H}^+}=\int^{\infty}_v e^{\frac{1}{2M}(v-\bar{v})}\{-\mathcal{A}_2\left[\mathcal{A}_2-2\right]\upalpha_{\mathscr{H}^+}+6M\partial_v\upalpha_{\mathscr{H}^+}\} d\overline{v}.
\end{align}
\end{lemma}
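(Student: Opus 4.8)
The plan is to read off the two displayed formulae by restricting the already-derived second-order identity \bref{eq:d4d4Psi} to the event horizon and then solving the resulting linear ODE in $v$ with trivial data at $v=+\infty$. In fact the derivation has essentially been carried out in the text preceding the lemma (in Kruskal coordinates), so the task is to redo it cleanly in Eddington--Finkelstein coordinates and to justify the two limiting steps.

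\emph{Step 1 (horizon regularity).} By \Cref{+2 implies RW} the field $\Psi$ built from $\alpha$ via \bref{hier+} solves the Regge--Wheeler equation \bref{RW}, and the redshift-commuted estimates of \Cref{RWredshift} together with Sobolev embedding give that $\Psi$ is smooth up to and including $\mathscr{H}^+$; in particular $\partial_v^k\Psi\to\partial_v^k\bm{\uppsi}_{\mathscr{H}^+}$ as $u\to\infty$ for $k=0,1,2$, these being derivatives tangential to $\mathscr{H}^+$. Likewise the transport estimates of \Cref{subsubsection 6.1.1 transport estimates} (following Section~12 of \cite{DHR16}) show that $\Omega^2\alpha$ and $\Omega\psi$ extend regularly to $\mathscr{H}^+$, so that $r\Omega^2\alpha\to\upalpha_{\mathscr{H}^+}$ and $\nablav\, r\Omega^2\alpha\to\partial_v\upalpha_{\mathscr{H}^+}$ there.

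\emph{Step 2 (the ODE on $\mathscr{H}^+$).} I would take the limit $u\to\infty$ in \bref{eq:d4d4Psi}. On the left, the elementary identity $\tfrac{\Omega^2}{r^2}\nablav\tfrac{r^2}{\Omega^2}\nablav\Psi=\partial_v^2\Psi+\tfrac{3\Omega^2-1}{r}\partial_v\Psi$ (using $\partial_v r=\Omega^2$) together with $r\to2M$, $\Omega^2\to0$ gives the limit $\partial_v^2\bm{\uppsi}_{\mathscr{H}^+}-\tfrac1{2M}\partial_v\bm{\uppsi}_{\mathscr{H}^+}$. On the right, $-2r^2\fancydstar_2\fancyd_2=\mathcal{A}_2$ collapses the bracket to $\mathcal{A}_2(\mathcal{A}_2-2)\,r\Omega^2\alpha$, whose limit is $\mathcal{A}_2(\mathcal{A}_2-2)\upalpha_{\mathscr{H}^+}$; of the remaining terms $\nablav r\Omega^2\alpha\to\partial_v\upalpha_{\mathscr{H}^+}$, while $\nablau\, r\Omega^2\alpha=r\Omega^2\cdot\Omega\psi\to0$ because $\Omega\psi|_{\mathscr{H}^+}$ is finite (\Cref{horizonpsidecay}) and $\Omega^2\to0$. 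This produces exactly the ODE $\partial_v^2\bm{\uppsi}_{\mathscr{H}^+}-\tfrac1{2M}\partial_v\bm{\uppsi}_{\mathscr{H}^+}=\mathcal{A}_2(\mathcal{A}_2-2)\upalpha_{\mathscr{H}^+}-6M\partial_v\upalpha_{\mathscr{H}^+}$.

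\emph{Step 3 (solving the ODE).} Writing $w=\partial_v\bm{\uppsi}_{\mathscr{H}^+}$ and $G=\mathcal{A}_2(\mathcal{A}_2-2)\upalpha_{\mathscr{H}^+}-6M\partial_v\upalpha_{\mathscr{H}^+}$, the equation is $\partial_v w-\tfrac1{2M}w=G$; multiplying by the integrating factor $e^{-v/2M}$ and integrating from $v$ to $+\infty$ gives \bref{eq:200}, and one further integration from $v$ to $+\infty$ followed by Fubini (legitimate since $G$ is compactly supported in $v$) produces the double-exponential kernel of \bref{eq:199}. The boundary contributions at $v=+\infty$ vanish: beyond the support of $\upalpha_{\mathscr{H}^+}$ the equation is homogeneous, so $w=Ce^{v/2M}$, and finiteness of the $T$-flux $\int_{\mathscr{H}^+}|\partial_v\Psi|^2$ forces $C=0$, after which the constant value of $\bm{\uppsi}_{\mathscr{H}^+}$ is $0$ by the forward decay of $\Psi$ towards $\mathscr{H}^+_+$ (the extension of \Cref{RWdecayfixedR}, equivalently membership in $\mathcal{E}^T_{\mathscr{H}^+_{\geq0}}$). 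The only genuinely delicate point is Step~2: that passing to the limit commutes with the second-order operator on the left (i.e.\ the $C^2$ regularity of $\Psi$ up to $\mathscr{H}^+$) and that the transverse term $\nablau\, r\Omega^2\alpha$ truly drops out rather than leaving a hidden contribution --- both of which are underwritten by the horizon regularity and decay statements assembled in \Cref{subsection 5.1 Basic integrated boundedness and decay estimates} and \Cref{subsubsection 6.1.1 transport estimates}.
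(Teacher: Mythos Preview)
Your proposal is correct and follows essentially the same route as the paper: restrict \bref{eq:d4d4Psi} to $\mathscr{H}^+$ to obtain the ODE \bref{equation for Psi out of alpha on H}, then integrate twice using the decay of $\bm{\uppsi}_{\mathscr{H}^+}$ and $\partial_v\bm{\uppsi}_{\mathscr{H}^+}$ as $v\to+\infty$. The only cosmetic difference is that the paper passes to Kruskal coordinates (where the equation becomes $\partial_V^2\Psi=\ldots$ and the two integrations are direct) before converting back to Eddington--Finkelstein form, whereas you solve for $\partial_v\bm{\uppsi}_{\mathscr{H}^+}$ in EF coordinates via an integrating factor; your justification that the boundary terms at $v\to+\infty$ vanish is in fact slightly more explicit than the paper's.
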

Equations \bref{eq:197}--\bref{eq:200} are the expressions for the radiation field and flux at $\mathscr{H}^+$ that we are able to compute directly out of data there. Note that this applies equally to radiation on $\mathscr{H}^+_{\geq0}, \mathscr{H}^+$ or $\overline{\mathscr{H}^+}$.\\
\indent Now let $F_{\mathscr{H}^+}=\int_v^\infty e^{\frac{1}{2M}(v-\bar{v})} \upalpha_{\mathscr{H}^+}d\bar{v}$, then $\partial_v F= \frac{1}{2M}F-\upalpha_{\mathscr{H}^+}$, which implies
\begin{align}
    -\partial_v \bm{\uppsi}_{\mathscr{H}^+}=\mathcal{A}_2(\mathcal{A}_2-2)F_{\mathscr{H}^+}-6M\partial_v F_{\mathscr{H}^+}.
\end{align}
Note that $F_{\mathscr{H}^+}$ decays towards the future end of $\mathscr{H}^+_{\geq0}$, since
\begin{align}
   \lim_{v\longrightarrow\infty}F_{\mathscr{H}^+}=\lim_{v\longrightarrow\infty} \int_v^\infty e^{\frac{1}{2M}(v-\bar{v})} \upalpha_{\mathscr{H}^+}d\bar{v}=\lim_{v\longrightarrow\infty} -2M \upalpha_{\mathscr{H}^+}=0.
\end{align}
Therefore, $L^2(\mathscr{H}^+_{\geq0})$ norm of $\partial_v \bm{\uppsi}_{\mathscr{H}^+}$ is given by
\begin{align}\label{+2 norm on H+ beyond B}
\begin{split}
    \left\|\partial_v \bm{\uppsi}_{\mathscr{H}^+}\right\|_{L^2(\mathscr{H}^+_{\geq0})}^2=&\left\|\mathcal{A}_2(\mathcal{A}_2-2)F_{\mathscr{H}^+}\right\|^2_{L^2(\mathscr{H}^+_{\geq0})}+\left\|6M\partial_v F_{\mathscr{H}^+}\right\|^2_{L^2(\mathscr{H}^+_{\geq0})}\\&+\int_{\Sigma^*\cap\mathscr{H}^+}\sin\theta d\theta d\phi \left(\left|\mathring{\slashed{\Delta}}F|_{\Sigma^*\cap\mathscr{H}^+}\right|^2+6\left|\mathring{\slashednabla}F|_{\Sigma^*\cap\mathscr{H}^+}\right|^2+8\Big|F|_{\Sigma^*\cap\mathscr{H}^+}\Big|^2\right).
\end{split}
\end{align}
Starting from initial data on $\Sigma$ or $\overline{\Sigma}$ and repeating the computation leading to \bref{+2 norm on H+ beyond B}, the boundary term drops out since we then have
\begin{align}
   \lim_{v\longrightarrow-\infty}F_{\mathscr{H}^+}=\lim_{v\longrightarrow-\infty} \int_v^\infty e^{\frac{1}{2M}(v-\bar{v})} \upalpha_{\mathscr{H}^+}d\bar{v}=\lim_{v\longrightarrow-\infty} -2M \upalpha_{\mathscr{H}^+}=0.
\end{align}
Therefore we have
\begin{align}\label{+2 norm on H+ up to B}
\begin{split}
    \left\|\partial_v \bm{\uppsi}_{\mathscr{H}^+}\right\|_{L^2(\mathscr{H}^+)}^2=&\left\|\mathcal{A}_2(\mathcal{A}_2-2)F_{\mathscr{H}^+}\right\|^2_{L^2(\mathscr{H}^+)}+\left\|6M\partial_v F_{\mathscr{H}^+}\right\|^2_{L^2(\mathscr{H}^+)}.
\end{split}
\end{align}
\begin{align}\label{+2 norm on overline H+ up to B}
\begin{split}
    \left\|\partial_v \bm{\uppsi}_{{\mathscr{H}^+}}\right\|_{L^2(\overline{\mathscr{H}^+})}^2=&\left\|\mathcal{A}_2(\mathcal{A}_2-2)F_{\mathscr{H}^+}\right\|^2_{L^2(\overline{\mathscr{H}^+})}+\left\|6M\partial_v F_{\mathscr{H}^+}\right\|^2_{L^2(\overline{\mathscr{H}^+})}.
\end{split}
\end{align}
\subsubsection{Radiation on $\mathscr{I}^+$}\label{+2 radiation on scri+}
The estimates of \Cref{rp+2} lead us to define a radiation field for $\alpha$ the same way it is defined for $\Psi$
\begin{corollary}\label{psi+2scri1}
    For smooth data of compact support for $\alpha$ on $\Sigma$, $r^5\psi$ has a finite pointwise limit on $\mathscr{I}^+$ which defines a smooth field there.
\end{corollary}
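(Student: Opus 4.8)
The plan is to transfer the $r^p$-hierarchy already established for $r^5\Omega^{-1}\psi$ and $r^5\Omega^{-2}\alpha$ in \Cref{T+2rp,,T+1rp} (and its transverse counterpart in \Cref{transversealphapsi}) into a pointwise convergence statement, exactly as was done for $\Psi$ in the Regge--Wheeler case (\Cref{RWradscri}). Concretely, I would fix $u$ and two radii $r_2 > r_1 > R$, set $v_i = v(r_i,u)$, and write, using the Sobolev embedding $W^{3,1}(S^2)\hookrightarrow L^\infty(S^2)$ and the fundamental theorem of calculus in $v$,
\begin{align}
\left|r^5\Omega^{-1}\psi(u,v_2,\theta^A) - r^5\Omega^{-1}\psi(u,v_1,\theta^A)\right|^2 \lesssim \left[\sum_{|\gamma|\leq 3}\int_{S^2} d\omega \int_{v_1}^{v_2} dv\; \left|\slashed{\mathcal{L}}^\gamma_{S^2}\nablav\left(r^5\Omega^{-1}\psi\right)\right|\right]^2.
\end{align}
After Cauchy--Schwarz in $v$ this is bounded by $\frac{1}{r_1^{p-1}}$ times $\sum_{|\gamma|\leq3}\int_{\mathscr{C}_u\cap\{r>r_1\}} r^p\left|\slashed{\mathcal{L}}^\gamma_{S^2}\nablav\left(r^5\Omega^{-1}\psi\right)\right|^2\,dv\,d\omega$ for any $p\in(1,2]$, and the latter is finite (and tends to $0$ as $r_1\to\infty$) by \Cref{T+1rp} applied to $\psi$ commuted with the angular Killing fields, since the right-hand sides there are finite for compactly supported data on $\Sigma$. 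This shows $\{r^5\Omega^{-1}\psi(u,v,\cdot)\}_v$ is Cauchy in $L^\infty(S^2)$ as $v\to\infty$, hence the limit $\bm{\uppsi}_{\mathscr{I}^+}^{\psi}(u,\theta^A) := \lim_{v\to\infty} r^5\Omega^{-1}\psi(u,v,\theta^A)$ exists.

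To upgrade from existence of the limit to smoothness of the limiting field, I would repeat the argument with $\nablau$-derivatives and with $r\slashednabla$-derivatives: commuting \Cref{T+1rp} (or the transverse version from \Cref{transversealphapsi}) with these operators and running the same Sobolev/Cauchy--Schwarz estimate on the differenced quantities shows that $\lim_{v\to\infty} \nablau^i (r\slashednabla)^\gamma r^5\Omega^{-1}\psi$ exists for all $i,\gamma$ and equals $\partial_u^i \mathring{\slashednabla}{}^\gamma \bm{\uppsi}_{\mathscr{I}^+}^\psi$, as in the last step of the proof of \Cref{RWradscri}. This establishes that $\bm{\uppsi}_{\mathscr{I}^+}^\psi \in \Gamma(\mathscr{I}^+)$. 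The analogous statements were already carried out for $\Psi$ and are routine here given that the estimates of \Cref{T+2rp,,T+1rp,,ILED psi higherorder,,alphaILED higher order} are structurally identical.

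The main obstacle I anticipate is not conceptual but bookkeeping: one must be careful that the $r^p$-estimates of \Cref{T+2rp,,T+1rp} are stated with right-hand sides that are indeed finite for the class of data under consideration (smooth, compactly supported on $\Sigma$), and that the weights $r^p$ for $p$ slightly below $2$ are strong enough to beat the $r^{-(p-1)}$ loss from Cauchy--Schwarz while still being controlled by the flux — here $p$ strictly greater than $1$ suffices. A secondary subtlety is that the natural rescaled unknown for the $+2$ equation near $\mathscr{I}^+$ is $r^5\Omega^{-1}\psi$ (respectively $r^5\Omega^{-2}\alpha$ for the companion \Cref{psi+2scri2}-type statement), whose equation \eqref{+2 equation for radiation field}-analogue carries first-order $\nablav$ terms with favorable sign — I would make sure the commuted estimates retain these good signs, exactly as in \Cref{T+1rp}. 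With these in hand the corollary follows immediately; I would state it for $r^5\psi$ and note that the identical argument, now using \Cref{T+2rp} and \Cref{alphaILED higher order}, gives the companion result for $r^5\alpha$, to be recorded separately.
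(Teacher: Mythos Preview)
Your proposal is correct and follows essentially the same route as the paper: the paper's proof reads in full ``We follow step by step the argument of \Cref{RWradscri} and use the estimates of \Cref{T+1rp},'' which is precisely the Sobolev--fundamental theorem of calculus--Cauchy--Schwarz scheme you describe. Your additional remarks on bookkeeping (finiteness of the right-hand sides for compactly supported data, choice of $p\in(1,2]$) are accurate and address exactly the points one would need to check when unpacking that one-line proof.
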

\begin{proof}
We follow step by step the argument of \Cref{RWradscri} and use the estimates of \Cref{T+1rp}.
\end{proof}
Similarly, using \Cref{T+2rp} we have
\begin{corollary}\label{alpha+2scri}
    For smooth data of compact support for $\alpha$ on $\Sigma$, $r^5\alpha$ has a finite pointwise limit on $\mathscr{I}^+$ which defines a smooth field there.
\end{corollary}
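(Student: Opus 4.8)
\textbf{Proof proposal for \Cref{alpha+2scri}.} The plan is to mimic the argument of \Cref{RWradscri}, which established the existence of a smooth radiation field for the Regge--Wheeler equation out of a hierarchy of $r^p$-estimates. Here the relevant estimates are those of \Cref{T+2rp} (applied to $r^5\Omega^{-2}\alpha$) together with their higher-order counterparts in angular and $T$-derivatives mentioned in \Cref{transversealphapsi}. The key observation is that the quantity whose limit we want to control is $r^5\alpha$, and that this differs from $r^5\Omega^{-2}\alpha$ only by the factor $\Omega^2 = 1-\tfrac{2M}{r}\to 1$ as $v\to\infty$ along a fixed $u$, so it suffices to show that $r^5\Omega^{-2}\alpha$ has a smooth limit on $\mathscr{I}^+$.

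First I would fix $u$, take $r_2>r_1>R$ for large $R$, and set $v_i = v(r_i,u)$. By the Sobolev embedding $W^{3,1}(S^2)\hookrightarrow L^\infty(S^2)$ and the fundamental theorem of calculus (as in \bref{first order}),
\begin{align*}
\left|r^5\Omega^{-2}\alpha(u,v_2,\theta,\phi)-r^5\Omega^{-2}\alpha(u,v_1,\theta,\phi)\right|^2\leq B\left[\sum_{|\gamma|\leq3}\int_{S^2}d\omega\int_{v_1}^{v_2}dv\;\left|\slashed{\mathcal{L}}^\gamma_{S^2}\nablav\left(r^5\Omega^{-2}\alpha\right)\right|\right]^2,
\end{align*}
and then Cauchy--Schwarz in $v$ against the weight $r^{p}$ with $p\in(0,2)$ bounds this by $r_1^{-p}$ times the flux $\int_{\mathscr{C}_u\cap\{r>R\}}r^p|\slashed{\mathcal{L}}^\gamma_{S^2}\nablav(r^5\Omega^{-2}\alpha)|^2$, which is finite by \Cref{T+2rp} commuted with angular Killing fields. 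Since this flux is finite and the weight $r^{-p}\to 0$, the Cauchy criterion in $L^\infty(S^2)$ is met, so $\upalpha_{\mathscr{I}^+}(u,\theta^A):=\lim_{v\to\infty}r^5\Omega^{-2}\alpha = \lim_{v\to\infty}r^5\alpha$ exists. To get smoothness, I would commute with $T$ (and note that $\slashednabla_T$ preserves \bref{T+2}) and run the same argument for the transverse derivative $\tfrac{r^2}{\Omega^2}\nablav(r^5\Omega^{-2}\alpha)$ using the higher-order $r^p$-estimate of \Cref{transversealphapsi}; this shows $\upalpha_{\mathscr{I}^+}$ is differentiable with $\lim_{v\to\infty}\nablau^i(r\slashednabla)^\gamma(r^5\Omega^{-2}\alpha)=\partial_u^i\mathring{\slashednabla}^\gamma\upalpha_{\mathscr{I}^+}$, and iterating with all higher-order estimates gives $\upalpha_{\mathscr{I}^+}\in\Gamma(\mathscr{I}^+)$.

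The main obstacle — really the only non-bookkeeping point — is ensuring that the right-hand sides of the $r^p$-estimates of \Cref{T+2rp,,T+1rp} are finite for the commuted quantities, i.e.\ that the initial data terms $\int_{\Sigma^*}r^{8-\epsilon}\Omega^2|\psi|^2 + r^{6-\epsilon}\Omega^2|\alpha|^2 + \int_{\Sigma^*\cap\{r>R\}}r^p|\nablav r^5\Omega^{-2}\alpha|^2$ (and their angular/$T$-commuted versions) are controlled by the compactly supported Cauchy data $(\upalpha,\upalpha')$. This is immediate here: the data are smooth and compactly supported on $\Sigma^*$, so $\psi$ and $\Psi$ and all their derivatives are pointwise bounded with compact support on $\Sigma^*$ (they are computed locally from $(\upalpha,\upalpha')$ via the Teukolsky equation and the transport relations \bref{hier+}), hence every such integral is finite. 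Once this is noted, the argument is a verbatim transcription of the proof of \Cref{RWradscri,,RW transverse derivatives converge}, and I would simply remark that the same steps apply. Data posed on $\Sigma$ or $\overline{\Sigma}$ reduce to the $\Sigma^*$ case since the region $J^+(\overline\Sigma)\cap J^-(\Sigma^*)$ is handled locally and compactly supported data on $\overline\Sigma$ correspond to compactly supported data on $\Sigma^*$.
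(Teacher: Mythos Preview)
Your proposal is correct and follows exactly the approach the paper takes: the paper's proof is literally ``Similarly, using \Cref{T+2rp}'' after the preceding corollary for $r^5\psi$, whose proof reads ``We follow step by step the argument of \Cref{RWradscri} and use the estimates of \Cref{T+1rp}'' --- you have simply spelled out those steps. One small bookkeeping slip: the Cauchy--Schwarz against $r^p$ yields a factor $r_1^{1-p}$ (not $r_1^{-p}$), so you need $p>1$ for the Cauchy criterion; taking $p=2$ as in \Cref{RWradscri} gives the stated $r_1^{-1}$ decay and the argument goes through unchanged.
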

For computational convenience we define
\begin{defin}\label{+2 radiation alpha definition scri}
    For a solution $\alpha$ of (\ref{T+2}) arising from smooth data of compact support on $\Sigma^*$ as in \Cref{WP+2Sigma*} or on $\Sigma, \overline{\Sigma}$ as in (\ref{WP+2Sigmabar}), the radiation field of $\alpha$ along $\mathscr{I}^+$ is defined to be the limit $\upalpha_{\mathscr{I}^+}(u,\theta^A)=\lim_{v\longrightarrow\infty} r^5\Omega^{-2}\alpha(u,v,\theta^A)$.\\
    \indent Let $\psi$ be as in \bref{hier+}. We define $\psi_{\mathscr{I}^+}$ to be the limit of $r^5\Omega^{-1}\psi$ as $v\longrightarrow\infty$.
\end{defin}
Repeating the argument of \Cref{RWdecayscri} we have
\begin{proposition}\label{T+1+2scridecay}
For a solution $\alpha$ of (\ref{T+2}) arising from smooth data of compact support on $\Sigma^*$ as in \Cref{WP+2Sigma*} or on $\Sigma, \overline{\Sigma}$ as in (\ref{WP+2Sigmabar}), the radiation fields $\upalpha_{\mathscr{I}^+}$, $\psi_{\mathscr{I}^+}$ and $\bm{\uppsi}_{\mathscr{I}^+}$ decay along $\mathscr{I}^+$ as $u\longrightarrow \infty$.\\
\end{proposition}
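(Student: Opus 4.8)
\textbf{Proof proposal for \Cref{T+1+2scridecay}.}

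The plan is to imitate, almost verbatim, the argument used for the Regge--Wheeler equation in \Cref{RWdecayscri}, now run separately for the three quantities $\bm{\uppsi}_{\mathscr{I}^+}$, $\psi_{\mathscr{I}^+}$ and $\upalpha_{\mathscr{I}^+}$, each time invoking the appropriate $r^p$-estimate from \Cref{rp+2} together with the slow-decay-at-fixed-$r$ statements \Cref{psi+2ptwisedecay} and \Cref{alpha+2ptwisedecay}. For $\bm{\uppsi}_{\mathscr{I}^+}$ there is in fact nothing new to prove: since $\Psi$ defined via \bref{hier+} solves \bref{RW} by \Cref{+2 implies RW}, the decay of $\bm{\uppsi}_{\mathscr{I}^+}$ along $\mathscr{I}^+$ is exactly \Cref{RWdecayscri}. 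So the content is in the other two fields.

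First I would handle $\psi_{\mathscr{I}^+}$. Fix a large $R$ and let $v(r=R,u)$ denote the corresponding $v$-coordinate. By the fundamental theorem of calculus along $\mathscr{C}_u$, Cauchy--Schwarz and a Hardy inequality (applied to $r^5\Omega^{-1}\psi$, whose weighted $v$-derivative is controlled), one gets the pointwise bound
\begin{align}
    \int_{S^2_{u,\infty}}\left|\psi_{\mathscr{I}^+}\right|^2\lesssim \int_{S^2_{u,v(r=R)}}\left|r^5\Omega^{-1}\psi\right|^2+\int_{\mathscr{C}_u\cap\{r>R\}}\left|\nablav r^5\Omega^{-1}\psi\right|^2\times\int_{\mathscr{C}_u\cap\{r>R\}}r^2\left|\nablav r^5\Omega^{-1}\psi\right|^2.
\end{align}
The first term on the right decays as $u\to\infty$ by \Cref{psi+2ptwisedecay} (after commuting with angular Killing fields and using Sobolev embedding on $S^2$ to pass from the $L^2(S^2_{u,v(r=R)})$ flux to a pointwise bound; alternatively, restrict to $r=R$ and quote the energy-decay argument there directly). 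The product term decays because \Cref{T+1rp} applied to $\psi$ with $p=1$ gives decay of $\int_{\mathscr{C}_u\cap\{r>R\}}r|\nablav r^5\Omega^{-1}\psi|^2$ — hence a fortiori of $\int_{\mathscr{C}_u}|\nablav r^5\Omega^{-1}\psi|^2$ — while \Cref{T+1rp} with $p=2$ gives the uniform boundedness of the $r^2$-weighted factor; one applies these estimates also to $\slashednabla_T\psi$ exactly as in \Cref{RWdecayscri} to upgrade the flux bound into genuine decay rather than mere finiteness. Repeating the identical argument with \Cref{T+2rp} in place of \Cref{T+1rp} and \Cref{alpha+2ptwisedecay} in place of \Cref{psi+2ptwisedecay} yields the decay of $\upalpha_{\mathscr{I}^+}$.

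The only mildly delicate point — the step I would flag as the main obstacle — is the interplay between the $r^p$-estimates of \Cref{rp+2} and the fixed-$r$ decay statements: the $r^p$-estimates on their own give boundedness of the $\mathscr{C}_u$-fluxes but not decay, and decay of the $r=R$ boundary term is not immediate from \Cref{alphaILED}, \Cref{psiILED} alone. As in the Regge--Wheeler case one circumvents this by commuting the whole hierarchy with $\slashednabla_T$ (this is legitimate since $T$ is Killing and the estimates of \Cref{rp+2} and \Cref{subsubsection 6.1.1 transport estimates} have $T$-commuted higher-order versions) so that the integrated-decay and $r^p$ bulk terms become time-integrable, forcing the $\mathscr{C}_u$-fluxes and the $r=R$ restrictions to tend to zero. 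Everything else is a routine transcription of \Cref{subsection 5.2 subsection Radiation fields}. Finally, the decay of $\bm{\uppsi}_{\mathscr{I}^+}$ having already been noted, the proof is complete; one remarks in passing that the same argument applies verbatim to data on $\Sigma$ or $\overline{\Sigma}$, since such data correspond to compactly supported data on $\Sigma^*$.
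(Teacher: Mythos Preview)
Your proposal is correct and is precisely the approach the paper intends: the paper's proof is the single line ``Repeating the argument of \Cref{RWdecayscri}'', and your write-up spells out exactly that repetition, substituting \Cref{T+1rp}, \Cref{T+2rp} for \Cref{RWrp} and \Cref{psi+2ptwisedecay}, \Cref{alpha+2ptwisedecay} for \Cref{RWdecayfixedR}. The commutation with $\slashednabla_T$ that you flag as the delicate point is likewise already present in the proof of \Cref{RWdecayscri} (``applied to $\Psi$ and $\slashednabla_T\Psi$''), so there is no gap.
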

\begin{remark}\label{psi+2scrialternative}
We can appeal to an alternative argument that gives the existence of the limits of $r^5\psi$ and $r^5\alpha$ at $\mathscr{I}^+$ without resorting to the hierarchy of $r^p$-estimates as follows:\\
\indent Let $u\geq u_0$. From \Cref{RWradscri} we know that $\Psi$ induces a smooth radiation field $\bm{\uppsi}_{\mathscr{I}^+}$ on $\mathscr{I}^+$. For large enough $v$ the definition of $\psi$ gives
\begin{align}
    r^5\Omega^{-1}\psi=\frac{r^2}{\Omega^2}\Big|_u\int_{u_0}^u \frac{\Omega^2}{r^2}\Psi d\bar{u}.
\end{align}
Therefore
\begin{align}
\begin{split}
\Big|r^5\Omega^{-1}\psi\Big|_{(u,v)}&\leq \sup_{\bar{u}\in[u_0,u]}\left|\Psi|_{(\bar{u},v)}\right|\frac{r^2}{\Omega^2}\int_{u_0}^u\frac{\Omega^2}{r^2} d\bar{u}.
\end{split}
\end{align}
Note that $ \frac{r^2}{\Omega^2}\int_{u_0}^u\frac{\Omega^2}{r^2}$ is uniformly bounded in $v$ for finite $u_0,u$. Since $\Psi$ is also uniformly bounded in $v$ on $[u_0,u]$ we can conclude (say by Lebesgue's bounded convergence theorem) that the pointwise limit $ \lim_{v\longrightarrow\infty} r^5\psi$
exists for any fixed $u$. Note now that (\ref{hier+}) also implies 
\begin{align}\label{+2 Gronwall ingredient}
    \nablau r^5\Omega^{-1}\psi+\frac{3\Omega^2-1}{r} r^5\Omega^{-1}\psi=\Psi.
\end{align}
Then we have
\begin{align}
    \Big|r^5\Omega^{-1}\psi\Big|_{u,v}\leq \int_{u_0}^u d\bar{u} \left|\Psi\right|+\int_{u_0}^ud\bar{u}\left(\frac{3\Omega^2-1}{r}\right)\left|r^5\Omega^{-1}\psi\right|.
\end{align}
We can apply Gr\"onwall's inequality to find:
\begin{align}\label{backwards estimate +2 Gronwall}
    \Big|r^5\Omega^{-1}\psi\Big|_{u,v}\leq\int_{u_0}^ud\bar{u}\left|\Psi\right|\exp\left[\int_{u_0}^u \frac{3\Omega^2-1}{r} ds\right]\lesssim\left|\int_{u_0}^u d\bar{u}\Psi\right|\left(\frac{r(u,v)}{r(u_0,v)}\right)^2.
\end{align}
Thus $r^5\Omega^{-1}\psi$ is uniformly bounded in $v$ on $[u_0,u]$. 
Existence of the $\nablau$ derivatives of the limit of $r^5\psi$ is immediate. Repeating the argument for $r\slashednabla r^5\psi$ gives differentiability in the angular directions.\\
\indent The benefit of the preceding argument is that it allows for a  characterisation of the radiation fields at null infinity that is local in $u$.
\end{remark}
\subsubsection{Radiation flux on $\mathscr{I}^+$}\label{+2 radiation flux on scri+}
The radiation flux on $\mathscr{I}^+$ is easy enough to write down being already in a form that can be computed from the radiation field $\upalpha_{\mathscr{I}^+}$ given the uniform convergence of $r^5\alpha$, $r^5\psi$ and $\Psi$ towards $\mathscr{I}^+$:
\begin{align}\label{Psi out of alpha at scri}
\begin{split}
\bm{\uppsi}_{\mathscr{I}^+}&=(\partial_u)^2 \upalpha_{\mathscr{I}^+},\\
\partial_u\bm{\uppsi}_{\mathscr{I}^+}&=(\partial_u)^3\upalpha_{\mathscr{I}^+}.
\end{split}
\end{align}
\section{Future asymptotics of the $-2$ Teukolsky equation}\label{section 7}
\Cref{section 7} is devoted to the study of future radiation fields induced by solutions to the $+2$ Teukolsky equation arising from smooth, compactly supported data on $\Sigma^*$, as was done for the $+2$ Teukolsky equation in \Cref{section 6} and to the Regge--Wheeler equation in \Cref{subsection 5.2 subsection Radiation fields}.\\
\indent We first gather the estimates we need in \Cref{subsection 7.1 integrated boundedness and decay estimates for -2}, where we collect results from \cite{DHR16} estimating $\underline\alpha$ from $\underline\Psi$ defined via (\ref{hier-}) and the estimates of \Cref{subsection 5.1 Basic integrated boundedness and decay estimates} for $\underline\Psi$. We apply these results to study the future radiation fields and their fluxes in \Cref{subsection 7.2 future radiation fields and fluxes}. The estimates of \cite{DHR16} collected in \Cref{subsection 7.1 integrated boundedness and decay estimates for -2} will be sufficient to construct and estimate the radiation fields on $\mathscr{H}^+$ and $\mathscr{I}^+$.
\subsection{Integrated boundedness and decay for $\underline\alpha$ via $\underline\Psi$}\label{subsection 7.1 integrated boundedness and decay estimates for -2}
We begin with the following basic proposition, already proven in \Cref{Chandra}:
\begin{proposition}\label{-2 implies RW}
Let $(\underline\upalpha,\underline\upalpha')$ be data for \cref{T-2} on $\Sigma^*$, $\Sigma$ or $\overline{\Sigma}$ as in \Cref{WP-2Sigma*,,WP-2Sigmabar} respectively. Then $\underline\Psi$ defined out of the solution $\underline\alpha$ on $ J^+(\Sigma^*)$, $ J^+(\Sigma)$ or $ J^+(\overline{\Sigma})$ satisfies \cref{RW}.
\end{proposition}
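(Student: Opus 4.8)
The statement to prove is \Cref{-2 implies RW}: if $(\underline\upalpha,\underline\upalpha')$ is data for the $-2$ Teukolsky equation \bref{T-2} on $\Sigma^*$, $\Sigma$ or $\overline{\Sigma}$, then $\underline\Psi$ defined via \bref{hier-} out of the resulting solution $\underline\alpha$ satisfies the Regge--Wheeler equation \bref{RW}.

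The plan is to mirror exactly the derivation already carried out in \Cref{Chandra} for the $+2$ equation, using the commutation relation \bref{commutation relation 2} in place of \bref{commutation relation}. First I would record that by the well-posedness statements \Cref{WP-2Sigma*} (for $\Sigma^*$) and \Cref{WP-2Sigmabar} (for $\overline{\Sigma}$), the data $(\underline\upalpha,\underline\upalpha')$ give rise to a unique smooth solution $\underline\alpha$ to \bref{T-2} on the appropriate future domain of dependence, so that the fields $r^3\Omega\underline\psi := \frac{r^2}{\Omega^2}\nablav r\Omega^2\underline\alpha$ and $\underline\Psi := \frac{r^2}{\Omega^2}\nablav r^3\Omega\underline\psi = \left(\frac{r^2}{\Omega^2}\nablav\right)^2 r\Omega^2\underline\alpha$ from \bref{hier-} are well-defined smooth $S^2_{u,v}$-tangent symmetric traceless $2$-tensors there. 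The $-2$ Teukolsky equation in the form \bref{T-2d} can be written as $\left[-\frac{r^2}{\Omega^2}\Omega\slashed\nabla_3\Omega\slashed\nabla_4 + r^2\slashed\Delta + 2r(x+2)\Omega\slashed\nabla_4 + (3\Omega^2-5)\right]r\Omega^2\underline\alpha = 0$, i.e. $\underline\alpha$ lies in the kernel of a second-order operator of precisely the shape appearing on the left of \bref{commutation relation 2} with the specific integer parameters read off from \bref{T-2d}.

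The key step is then to apply the operator $\left(\frac{r^2}{\Omega^2}\Omega\slashed\nabla_4\right)^2$ to the Teukolsky operator annihilating $r\Omega^2\underline\alpha$ and push it through twice using \bref{commutation relation 2}, exactly as \bref{commutator} does in the $+2$ case. Concretely, applying \bref{commutation relation 2} once converts the Teukolsky operator (with its $+r\Omega\slashed\nabla_4$ first-order term) acting after one factor of $\frac{r^2}{\Omega^2}\Omega\slashed\nabla_4$ into $\frac{r^2}{\Omega^2}\Omega\slashed\nabla_4$ acting after a modified operator whose first-order coefficient has shifted and whose zeroth-order coefficients have shifted by the stated amounts, plus a $6M$-multiple of a lower-order term. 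Applying it a second time one arrives, after the coefficient bookkeeping, at the identity that $\left(\frac{r^2}{\Omega^2}\Omega\slashed\nabla_4\right)^2$ applied to the $-2$ Teukolsky operator equals the Regge--Wheeler operator $-\frac{r^2}{\Omega^2}\Omega\slashed\nabla_3\Omega\slashed\nabla_4 + r^2\slashed\Delta - 3\Omega^2 - 1$ applied to $\left(\frac{r^2}{\Omega^2}\Omega\slashed\nabla_4\right)^2$, where the two stray $\pm 6M$ terms generated along the way cancel — this is the analogue of the "$-6M+6M$" cancellation visible at the end of \bref{commutator}. Since $\underline\alpha$ is annihilated by the $-2$ Teukolsky operator, the left side vanishes, so $\underline\Psi = \left(\frac{r^2}{\Omega^2}\Omega\slashed\nabla_4\right)^2 r\Omega^2\underline\alpha$ is annihilated by the Regge--Wheeler operator, i.e. satisfies \bref{RW}. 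One then notes the argument is insensitive to whether the data sit on $\Sigma^*$, $\Sigma$ or $\overline{\Sigma}$, since in each case \bref{commutation relation 2} is a pointwise identity valid on the whole future domain of dependence and $\underline\Psi$ is smooth there; the regularity at $\mathcal{B}$ in the $\overline{\Sigma}$ case is covered by the fact, recorded after \Cref{RWwpSigmabar}, that $\Psi$ (and likewise $\underline\Psi$) is a smooth $r$-weighted expression in the Kruskal-regular field.

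I do not expect any genuine obstacle here: the statement is the direct $4$-direction mirror of a computation already displayed in full for the $3$-direction, and the paper itself flags it as "already proven in \Cref{Chandra}". The only point requiring a modicum of care is verifying that the two $6M$ remainder terms produced by the two applications of \bref{commutation relation 2} indeed cancel rather than reinforce — but this is exactly parallel to the $+2$ case \bref{commutator}, where the time-inversion symmetry of the Regge--Wheeler equation (\Cref{time inversion of RW}) together with the swap $\nablau \leftrightarrow \nablav$ relating \bref{commutation relation} and \bref{commutation relation 2} guarantees the same bookkeeping goes through with signs adjusted consistently. Hence the proof is a short "straightforward computation using \bref{commutation relation 2}", to be presented at the same level of brevity as its $+2$ counterpart.
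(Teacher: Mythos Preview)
Your proposal is correct and matches the paper's approach exactly: the paper gives no separate proof here, simply noting the result was ``already proven in \Cref{Chandra}'', where the commutation relation \bref{commutation relation 2} is stated and the conclusion for $\underline\Psi$ is drawn immediately by analogy with the displayed $+2$ computation \bref{commutator}. Your write-up of the two-step commutation and the $6M$ cancellation is precisely that analogy spelled out.
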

Throughout this section we focus on the case of data on $\Sigma^*$:
\begin{proposition}\label{-2psiILED}
Let $\underline\alpha$ be a solution to (\ref{T-2}) and $\underline\Psi, \underline\psi$ be as in (\ref{hier-}) and \Cref{-2 implies RW}. Then for any $u$ and any $v>0$ such that $(u,v,\theta^A)\in J^+(\Sigma^*)$, the following estimate holds: 
\begin{align}
\begin{split}
    \int_{\mathscr{D}^{u,v}_{\Sigma^*}} \Omega^2 d\bar{u}d\bar{v}d\omega\;r^{4}\Omega^{-2} |\underline\psi|^2+&\int_{\underline{\mathscr{C}}_v\cap J^+(\Sigma^*)\cap J^-(\mathscr{C}_u)}\Omega^2 d\bar{u}d\omega\;r^6\Omega^{-2}|\underline\psi|^2\\
    &\lesssim \mathbb{F}_{\Sigma^*}[\underline\Psi]+\int_{\Sigma^*\cap J^-(\mathscr{C}_u)\cap J^-(\underline{\mathscr{C}}_v)}drd\omega\; r^6\Omega^{-2}|\underline\psi|^2.
\end{split}
\end{align}
\end{proposition}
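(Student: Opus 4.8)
The plan is to mirror the argument of \Cref{psiILED} (Proposition 12.1.1 of \cite{DHR16}), now working with the $-2$ hierarchy \bref{hier-} and exploiting the fact that the $-2$ Teukolsky equation carries a \emph{redshift-signed} first-order term near $\mathscr{H}^+$, so no degenerate weights near the horizon are needed. Concretely, first I would recall that $\underline\psi$ is defined by $r^3\Omega\underline\psi = \frac{r^2}{\Omega^2}\nablav r\Omega^2\underline\alpha$, equivalently $\nablav(r\Omega^2\underline\alpha) = \frac{\Omega^2}{r^2}\cdot r^3\Omega\underline\psi$, and $\underline\Psi = \frac{r^2}{\Omega^2}\nablav r^3\Omega\underline\psi$, equivalently $\nablav(r^3\Omega\underline\psi) = \frac{\Omega^2}{r^2}\underline\Psi$. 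Rewriting the latter transport relation in terms of the weighted quantity $r^{3+n/2}\Omega^{-1}\underline\psi$ (so that the $u$-derivative is the one appearing, since $\nablav$ maps outgoing data inward in the backwards sense — here we instead differentiate the $e_4$-transport along $e_3$), I would derive a transport identity of the schematic form
\begin{align}
    \partial_v\!\left[r^{m}\Omega^{-2}|\underline\psi|^2\right] - (\text{good-signed lower order in }r)\,r^{m-1}\Omega^{-2}|\underline\psi|^2 \;=\; 2\,r^{m-3}\,\underline\Psi\cdot r\Omega\underline\psi ,
\end{align}
and then split the right-hand side by Cauchy--Schwarz as $\tfrac12(\text{good})\,r^{m-1}\Omega^{-2}|\underline\psi|^2 + C\,r^{m-5}\Omega^{2}|\underline\Psi|^2$, absorbing the first piece into the bulk. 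Integrating over $\mathscr{D}^{u,v}_{\Sigma^*}$ with the measure $\Omega^2\,d\bar u\,d\bar v\,d\omega$ and choosing the weight exponent to match $r^6\Omega^{-2}|\underline\psi|^2$ on the flux terms (i.e.\ taking $m$ so the outgoing null flux weight is $r^6$, the bulk weight $r^4$) produces exactly the two left-hand side terms in the claim.

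The second ingredient is control of the error term $\int r^{m-5}\Omega^{2}|\underline\Psi|^2$ generated by Cauchy--Schwarz. Since the target weight on $\underline\psi$ on $\underline{\mathscr C}_v$ is $r^6\Omega^{-2}$ with bulk weight $r^4$, the induced $\underline\Psi$ weight in the bulk is a power no larger than $r^{2}\Omega^{2}$ (in fact lower after the shift), which is controlled by the $r^p$-hierarchy of \Cref{RWrp} for $\underline\Psi$ with $p\in[0,2]$, together with the integrated local energy decay estimate of \Cref{RWILED} and the energy boundedness of \Cref{RWredshift} applied to $\underline\Psi$ (valid by \Cref{-2 implies RW}). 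All of these are phrased in terms of $\mathbb{F}_{\Sigma^*}[\underline\Psi]$, which is why the final bound has $\mathbb{F}_{\Sigma^*}[\underline\Psi]$ plus the $\Sigma^*$-boundary term $\int_{\Sigma^*}r^6\Omega^{-2}|\underline\psi|^2$ on the right. One checks that near $\mathscr{H}^+$ the relevant weight $\Omega^{-2}$ is bounded below by a positive constant (indeed $\Omega^{-2}\to 1$ on $\Sigma^*\cap\mathscr H^+$, and there is no degeneration because the $-2$ equation has the favourable sign), so the boundary integral on $\Sigma^*$ is finite for compactly supported data and the estimate is non-vacuous.

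The main obstacle I anticipate is bookkeeping the signs: one must verify that the lower-order term in the $\underline\psi$-transport identity genuinely has a good sign across the whole exterior (in particular near the photon sphere $r=3M$ and near $\mathscr H^+$, where the coefficient structure of the $-2$ equation differs in sign from the $+2$ case), and that the $r$-weight exponent $m$ can be chosen to make both the bulk term positive and the Cauchy--Schwarz-generated $\underline\Psi$-weight fall within the admissible range $p\in[0,2]$ of \Cref{RWrp}. This is precisely where the contrast with the $+2$ case (which required the $\epsilon$-loss in $r^{8-\epsilon}$, see \Cref{psiILED}) shows up: because the $-2$ first-order term is redshift-signed, the horizon-side weight can be taken to be the clean $r^6\Omega^{-2}$ rather than an $\epsilon$-degenerate version, and no such loss is incurred. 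The remaining steps are the routine integration by parts over $\mathscr{D}^{u,v}_{\Sigma^*}$ and an averaging in $R$ to handle the timelike boundary term at $r=R$, exactly as in \cite{DHR16}.
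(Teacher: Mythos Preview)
Your approach is essentially the paper's: from the second line of \bref{hier-} and Cauchy--Schwarz one obtains $\partial_v[r^6\Omega^{-2}|\underline\psi|^2] + Mr^4\Omega^{-2}|\underline\psi|^2 \leq \frac{1}{Mr^2}|\underline\Psi|^2$ (note the $+$ sign on the bulk term---your schematic has it reversed), and integrating over $\mathscr{D}^{u,v}_{\Sigma^*}$ gives the claim directly. The lower-order coefficient comes from $\partial_v\Omega^{-2}=-\tfrac{2M}{r^2}\Omega^{-2}$ and is a positive constant, so there is no sign issue near $r=3M$, no $R$-averaging is needed, and $\Omega^{-2}\to\infty$ (not $1$) at $\mathscr{H}^+$---the point is rather that $r^3\Omega^{-1}\underline\psi$ is the regular quantity there.
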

\begin{proof}
The definition of $\underline\psi$ (\ref{hier-}) and Cauchy--Schwarz imply
\begin{align}
   \partial_v [r^{6}\Omega^{-2}|\underline\psi|^2]+M r^{4}\Omega^{-2}|\underline\psi|^2\leq  \frac{1}{Mr^2}|\underline\Psi|^2.
\end{align}
The result follows by integrating over $\mathscr{D}^{u,v}_{\Sigma^*}$.
\end{proof}
\begin{proposition}\label{-2alphaILED}
Let $\underline\alpha$ be a solution to (\ref{T-2}) and $\underline\Psi, \underline\psi$ be as in (\ref{hier-}) and \Cref{-2 implies RW}. Then for any $u$ and any $v>0$ such that $(u,v,\theta^A)\in J^+(\Sigma^*)$, the following estimate holds for sufficiently small $\epsilon>0$: 
\begin{align}
\begin{split}
    \int_{\mathscr{D}^{u,v}_{\Sigma^*}}  \Omega^2 d\bar{u}d\bar{v}d\omega\;\Omega^{-4}|\underline\alpha|^2+&\int_{\underline{\mathscr{C}}_v\cap J^+(\Sigma^*)\cap J^-(\mathscr{C}_u)}\Omega^2 d\bar{u}d\omega\;r^2\Omega^{-4}|\underline\alpha|^2\\&\lesssim \mathbb{F}_{\Sigma^*}[\underline\Psi]+\int_{\Sigma^*\cap J^-(\mathscr{C}_u\cap J^-(\underline{\mathscr{C}}_v)}drd\omega\;r^6\Omega^{-2}|\underline\psi|^2+ r^2\Omega^{-4}|\underline\alpha|^2.
\end{split}
\end{align}
\end{proposition}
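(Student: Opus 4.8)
\textbf{Proof proposal for \Cref{-2alphaILED}.} The plan is to mimic the proof of \Cref{-2psiILED}, stepping down once more in the hierarchy \bref{hier-}: having controlled weighted fluxes and spacetime integrals of $\underline\psi$, we now regard the second relation in \bref{hier-}, i.e.~$r^3\Omega\underline\psi=\frac{r^2}{\Omega^2}\nablav r\Omega^2\underline\alpha$, as a transport equation for $r\Omega^2\underline\alpha$ in the $\nablav$ (equivalently $\partial_v$) direction, and integrate it against a suitable $r$-weight.

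First I would rewrite the transport relation in a form adapted to the quantity $r^2\Omega^{-2}\underline\alpha$ that appears in the statement. Starting from $\nablav(r\Omega^2\underline\alpha)=\frac{\Omega^2}{r^2}\cdot r^3\Omega\underline\psi=r\Omega^3\underline\psi$ and rearranging the weights, one obtains a transport equation of the schematic form
\begin{align}
\partial_v\left(r^2\Omega^{-2}\underline\alpha\right)+\frac{c(r)}{r}\,\Omega^2\cdot r^2\Omega^{-2}\underline\alpha = (\text{const})\,r\,\underline\psi,
\end{align}
where $c(r)$ is a bounded function that is positive near $\mathscr{H}^+$ (this positivity is exactly the redshift sign present in the $-2$ equation, already advertised in the introduction). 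Squaring, multiplying by an appropriate power $r^{q}\Omega^{-2k}$ and applying Cauchy--Schwarz to the right-hand side then gives, after discarding the favourably-signed bulk term and keeping track of the lower-order $r\underline\psi$ source,
\begin{align}
\partial_v\!\left[r^{q}\Omega^{-2k}\,\Omega^2|r^2\Omega^{-2}\underline\alpha|^2\right]+(\text{good})\;r^{q-1}\Omega^{-2k}\Omega^4|r^2\Omega^{-2}\underline\alpha|^2 \;\lesssim\; r^{q-1}\Omega^{-2k}\,\Omega^2\, r^2|\underline\psi|^2 .
\end{align}
Choosing the weights so that the exponents match those in \Cref{-2psiILED} (so the source term is exactly the spacetime integral of $r^6\Omega^{-2}|\underline\psi|^2$, up to $\Omega$ factors), I would then integrate over $\mathscr{D}^{u,v}_{\Sigma^*}$. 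The ingoing-flux term on $\underline{\mathscr{C}}_v$ and the spacetime bulk term are precisely the left-hand side of the claimed estimate; the source term on the right is bounded by $\mathbb{F}_{\Sigma^*}[\underline\Psi]$ plus the $\Sigma^*$-data term for $\underline\psi$ by a direct appeal to \Cref{-2psiILED}; and the contribution on $\Sigma^*\cap J^-(\mathscr{C}_u)\cap J^-(\underline{\mathscr{C}}_v)$ produced by integrating the total $\partial_v$-derivative is the $r^2\Omega^{-4}|\underline\alpha|^2$ data term. This is exactly how Propositions 12.1.2, 12.2.6--12.2.7 of \cite{DHR16} proceed for the $+2$ case (as cited in \Cref{alphaILED}), and the reference there should suffice for the routine bookkeeping.

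The main obstacle, as usual in these transport arguments, is the behaviour near $\mathscr{H}^+$: one must check that the weight factors $\Omega^{-2k}$ are compensated by the redshift-signed term $c(r)\Omega^2/r$ so that no spurious exponential growth in $v$ is introduced when applying Gr\"onwall, and that the degeneracies at $\mathcal{B}$ do not enter since we work on $\Sigma^*$-data (for which $\underline\alpha$ is smooth up to and including $\Sigma^*\cap\mathscr{H}^+$). I would handle this by splitting $\mathscr{D}^{u,v}_{\Sigma^*}$ into a neighbourhood of $\mathscr{H}^+$, where the sign of $c(r)$ is exploited, and the away region $\{r\geq R_0\}$, where $\Omega^2$ is bounded below and the weights are harmless; in the transition region the relevant quantities are all bounded and one simply absorbs constants. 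The hypothesis ``provided the right-hand side is finite'' is there precisely to guarantee the data integrals make sense, and carries through verbatim.
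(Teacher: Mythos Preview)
Your proposal is correct and follows exactly the approach the paper intends: the paper's own proof is just the one-liner ``Similar to \Cref{-2psiILED}. See Propositions 12.1.2, 12.2.6 and 12.2.7 of \cite{DHR16},'' and you have accurately expanded what that entails---stepping down the hierarchy \bref{hier-} to treat $r^3\Omega\underline\psi=\frac{r^2}{\Omega^2}\nablav r\Omega^2\underline\alpha$ as a transport equation, deriving a weighted differential inequality via Cauchy--Schwarz, and integrating over $\mathscr{D}^{u,v}_{\Sigma^*}$ with the $\underline\psi$ source controlled by \Cref{-2psiILED}. Your discussion of the redshift sign near $\mathscr{H}^+$ is likewise what the cited DHR16 propositions encode.
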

\begin{proof}
Similar to \Cref{-2psiILED}. See Propositions 12.1.2, 12.2.6 and 12.2.7 of \cite{DHR16}.
\end{proof}
\begin{proposition}
Let $\underline\alpha$ be a solution to (\ref{T-2}) and $\underline\Psi, \underline\psi$ be as in (\ref{hier-}) and \Cref{-2 implies RW}. Then for any $u$ and any $v>0$ such that $(u,v,\theta^A)\in J^+(\Sigma^*)$, the following estimate holds: 
\begin{align}\label{2ndderivativeofpsibar}
 \begin{split}
        \int_{\underline{\mathscr{C}}_v\cap J^+(\Sigma^*)\cap J^-(\mathscr{C}_u)}\Omega^2 d\bar{u}d\omega\;\left|-2r^2\fancydstar_2\fancyd_2(r^3\Omega^{-1}\underline\psi)\right|^2&+\int_{\mathscr{D}^{u,v}_{{\Sigma^*}}} d\bar{u}d\bar{v}d\omega\;\frac{\Omega^2}{r^3}\left(1-\frac{3M}{r}\right)^2 \left|-2r^2\fancydstar_2\fancyd_2(r^3\Omega\underline\psi)\right|^2 \\ &\lesssim \mathbb{F}_{{\Sigma^*}}[\underline\Psi]+\int_{{\Sigma^*}}drd\omega\;r^6\Omega^{-2}\left|\underline\psi\right|^2+r^2\Omega^{-4}\left|\underline\alpha\right|^2.
        \end{split}
    \end{align}
\end{proposition}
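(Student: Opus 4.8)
The plan is to run the same transport-equation argument used for \Cref{-2psiILED} and \Cref{-2alphaILED}, now applied to the quantity $-2r^2\fancydstar_2\fancyd_2(r^3\Omega^{-1}\underline\psi)$, treating it as a solution of a forced transport equation along the ingoing direction and using the already-established control of $\underline\psi$ and $\underline\alpha$ to absorb the forcing term. First I would commute the operator $r^2\fancydstar_2\fancyd_2$ — equivalently $\frac{1}{2}(2-\mathring{\slashed{\Delta}})$ up to constants, which commutes with $\nablav$ and with multiplication by functions of $r$ — through the defining transport equation for $\underline\psi$ in \bref{hier-}, namely $\frac{r^2}{\Omega^2}\nablav r\Omega^2\underline\alpha = r^3\Omega\underline\psi$, differentiated once more in the $\nablav$ direction to produce a transport equation for $r^3\Omega^{-1}\underline\psi$ in terms of $\underline\Psi$. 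Applying $r^2\fancydstar_2\fancyd_2$ to this relation yields a transport equation of the schematic form $\nablav\left[\text{(weight)}\,|r^2\fancydstar_2\fancyd_2(r^3\Omega^{-1}\underline\psi)|^2\right] + \text{(good-signed bulk)} \lesssim \text{(weight)}\,|r^2\fancydstar_2\fancyd_2\underline\Psi|^2$, exactly as in the proof of \Cref{-2psiILED} but with the extra pair of angular derivatives applied uniformly throughout.

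The key steps, in order: (i) write down the transport equation satisfied by $r^3\Omega^{-1}\underline\psi$ obtained from \bref{hier-}, then apply $r^2\fancydstar_2\fancyd_2$ and record that $[r^2\fancydstar_2\fancyd_2,\nablav]=0$ and $[r^2\fancydstar_2\fancyd_2,f(r)]=0$ since $r$ is constant on the spheres; (ii) multiply by the appropriately $r$-weighted copy of $r^2\fancydstar_2\fancyd_2(r^3\Omega^{-1}\underline\psi)$, integrate by parts over $S^2_{u,v}$, and use Cauchy--Schwarz on the forcing term to split off a term proportional to $|r^2\fancydstar_2\fancyd_2\underline\Psi|^2$ and a term that can be absorbed by the bulk; (iii) integrate the resulting divergence identity over $\mathscr{D}^{u,v}_{\Sigma^*}$, picking up the $\underline{\mathscr{C}}_v$ flux on the left and the data term on $\Sigma^*$ on the right; (iv) control the $|r^2\fancydstar_2\fancyd_2\underline\Psi|^2$ spacetime integral via \Cref{RWILED} commuted with angular Killing fields $\slashed{\mathcal{L}}_{\Omega_i}$ (using the elliptic identities of \Cref{subsubsection 2.1.2 Elliptic estimates on S2} to pass between $r^2\fancydstar_2\fancyd_2$ and second angular derivatives, noting $-2r^2\fancydstar_2\fancyd_2=\mathring{\slashed{\Delta}}-2=\mathcal{A}_2$) and its $\mathscr{C}_u$ flux via \Cref{RWrp}; (v) finally absorb any residual lower-order $\underline\psi,\underline\alpha$ terms using \Cref{-2psiILED} and \Cref{-2alphaILED}, which is precisely why those two propositions appear as prerequisites.

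The bookkeeping on the right-hand side is the only subtle point: one must check that the data term genuinely reduces to the stated $\int_{\Sigma^*} r^6\Omega^{-2}|\underline\psi|^2 + r^2\Omega^{-4}|\underline\alpha|^2$ (plus $\mathbb{F}_{\Sigma^*}[\underline\Psi]$) rather than involving higher angular derivatives of the data on $\Sigma^*$. This works because commuting with $\slashed{\mathcal{L}}_{\Omega_i}$ is exact on the Schwarzschild background and the $\slashed{\mathcal{L}}_{\Omega_i}$-commuted versions of \Cref{-2psiILED} and \Cref{-2alphaILED} still have right-hand sides controlled by the undifferentiated data fluxes together with angular-commuted $\mathbb{F}_{\Sigma^*}[\underline\Psi]$, which by the elliptic estimates is itself bounded by $\mathbb{F}_{\Sigma^*}[\underline\Psi]$ (or absorbed into it after averaging in $R$ for the timelike boundary term, as in \Cref{RWrp}). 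I expect \textbf{the main obstacle} to be exactly this: verifying that the spacetime integral of $|r^2\fancydstar_2\fancyd_2\underline\Psi|^2$ with the correct $r$-weight and the $(1-3M/r)^2$ degeneracy is indeed dominated by $\mathbb{F}_{\Sigma^*}[\underline\Psi]$, which requires the degenerate Morawetz estimate \Cref{RWILED} commuted twice with angular operators and a careful matching of weights near $r=3M$ — everything else is a routine repetition of the transport argument already carried out for \Cref{-2psiILED}, following Propositions 12.1.2, 12.2.6, 12.2.7 of \cite{DHR16}.
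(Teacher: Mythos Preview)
Your transport-and-commute strategy would prove \emph{an} estimate, but not the one stated. The paper takes a different and more direct route that you have overlooked: it does not commute the angular operator through the transport hierarchy at all. Instead it uses the algebraic identity \eqref{eq:d3psibar},
\[
-2r^2\fancydstar_2\fancyd_2(r^3\Omega\underline\psi)\;=\;\nablau\underline\Psi\;+\;(3\Omega^2-1)\,r^3\Omega\underline\psi\;-\;6M\,r\Omega^2\underline\alpha,
\]
which expresses the quantity of interest \emph{pointwise} as a sum of $\nablau\underline\Psi$ and zeroth-order terms in $\underline\psi,\underline\alpha$. Squaring and integrating, the $\underline{\mathscr{C}}_v$ flux of $\nablau\underline\Psi$ is controlled by the $T$-energy identity, the degenerate spacetime integral of $|\nablau\underline\Psi|^2$ by \Cref{RWILED} (this is precisely where the $(1-3M/r)^2$ factor originates), and the $\underline\psi,\underline\alpha$ contributions by \Cref{-2psiILED} and \Cref{-2alphaILED}. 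No angular commutation is needed anywhere, and the right-hand side is exactly $\mathbb{F}_{\Sigma^*}[\underline\Psi]$ plus the undifferentiated $\underline\psi,\underline\alpha$ data fluxes, as stated. This is the same mechanism as in the $+2$ analogue \Cref{ILED alpha 2nd angular}, whose proof explicitly invokes \eqref{eq:d4Psi}.

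Your approach, by contrast, produces a forcing term $|r^2\fancydstar_2\fancyd_2\underline\Psi|^2$ and a $\Sigma^*$ data term $|r^2\fancydstar_2\fancyd_2\underline\psi|^2$, each carrying two extra angular derivatives. Your proposed reduction of these to the stated right-hand side does not work: commuting \Cref{RWILED} with $\slashed{\mathcal{L}}_{\Omega_i}$ yields $\mathbb{F}_{\Sigma^*}^{2,T,\slashednabla}[\underline\Psi]$, not $\mathbb{F}_{\Sigma^*}[\underline\Psi]$, and the angular-commuted versions of \Cref{-2psiILED}, \Cref{-2alphaILED} likewise carry $\int_{\Sigma^*}|\mathring{\slashednabla}^2\underline\psi|^2$-type data terms that cannot be absorbed into the undifferentiated fluxes. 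The only way to eliminate those extra derivatives is precisely to invoke \eqref{eq:d3psibar}, at which point the transport argument becomes redundant.
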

\begin{proposition}
Let $\underline\alpha$ be a solution to (\ref{T-2}) and $\underline\Psi, \underline\psi$ be as in (\ref{hier-}) and \Cref{-2 implies RW}. Then for any $u$ and any $v>0$ such that $(u,v,\theta^A)\in J^+(\Sigma^*)$. For sufficiently small $\epsilon>0$ the following estimate holds: 
\begin{align}
    \int_{\mathscr{D}^{u,v}_{{\Sigma^*}}} \Omega^2 d\bar{u}d\bar{v}d\omega\; r^{5-\epsilon} \Omega^{-2}|r\fancyd_2\underline\psi|^2 &\lesssim \mathbb{F}_{{\Sigma^*}}[\underline\Psi]+\int_{{\Sigma^*}}drd\omega\;r^{6-\epsilon}\Omega^{-2}\left[|r\fancyd_2\underline\psi|^2+|\underline\psi|^2\right]+r^{6-\epsilon}\Omega^{-4}|\underline\alpha|^2.
\end{align}
\end{proposition}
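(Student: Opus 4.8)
The plan is to obtain this last estimate exactly the way the preceding propositions in this subsection were obtained, namely by commuting the lowest-order transport relation for $\underline\psi$ in \bref{hier-} with the angular operator $r\fancyd_2$ and then running a weighted transport argument in the $v$-direction. First I would record the transport equation satisfied by $r\fancyd_2\underline\psi$: starting from $r^3\Omega\underline\psi=\frac{r^2}{\Omega^2}\nablav r\Omega^2\underline\alpha$ and using that $r\fancyd_2$ commutes with $\nablav$ up to lower-order curvature terms on Schwarzschild (the relevant commutator only produces terms controlled by $\underline\psi$ itself and by $\underline\alpha$), one derives a relation of the schematic form $\nablav(r^k\fancyd_2 r^3\Omega\underline\psi)+(\ldots)=(\ldots)$ whose right-hand side is linear in $\fancyd_2$ of $\frac{\Omega^2}{r^2}\underline\Psi$, plus lower-order terms. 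This is the analogue of the transport equations used in the proofs of \Cref{-2psiILED} and \Cref{-2alphaILED}, and it is where the "provided the right hand side is finite" caveat enters, since we will need the $\Sigma^*$-flux of $r\fancyd_2\underline\psi$ (and of $\underline\psi$, $\underline\alpha$) to be finite for the argument to close.

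Second, I would multiply the transport equation for $r\fancyd_2\underline\psi$ by the appropriate weight $r^{5-\epsilon}\Omega^{-2}$ times $r\fancyd_2\underline\psi$, integrate by parts over $\mathscr{D}^{u,v}_{\Sigma^*}$, and use Cauchy--Schwarz on the inhomogeneous term exactly as in \Cref{-2psiILED}: the cross term $r^{\text{something}}\Omega^{-2}\,\fancyd_2 \tfrac{\Omega^2}{r^2}\underline\Psi\cdot r\fancyd_2 r^3\Omega\underline\psi$ is split so that half of it is absorbed into the bulk $r^{5-\epsilon}\Omega^{-2}|r\fancyd_2\underline\psi|^2$ term with a favourable power of $r$ and the remainder is of the form $r^{\text{lower}}\Omega^2|\fancyd_2 \underline\Psi|^2\lesssim r^{\text{lower}}\Omega^2|r\slashednabla\underline\Psi|^2/r^2$, which is then controlled by the higher-order (commuted with $r\slashednabla$) Regge--Wheeler integrated decay and $r^p$ estimates of \Cref{RWILED} and \Cref{RWrp}, all bounded by $\mathbb{F}_{\Sigma^*}[\underline\Psi]$ as in \Cref{-2psiILED,-2alphaILED}. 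The lower-order terms coming from the $\fancyd_2$--$\nablav$ commutator (involving $\underline\psi$ and $\underline\alpha$ without the angular derivative) are absorbed using the already-established estimates \Cref{-2psiILED} and \Cref{-2alphaILED} and by choosing $\epsilon>0$ small so that the $r$-weights are subcritical near $\mathscr{I}^+$. The boundary terms on $\underline{\mathscr{C}}_v$ and $\mathscr{C}_u$ that arise from the integration by parts are either sign-definite and discarded, or (on $\underline{\mathscr{C}}_v$, which carries the $\Omega^2$ measure) bounded by the left-hand side flux of \bref{2ndderivativeofpsibar} and hence reabsorbed into the right-hand side; this is precisely why the previous proposition giving the $|-2r^2\fancydstar_2\fancyd_2(r^3\Omega\underline\psi)|^2$ flux was placed immediately before this one.

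The main obstacle I anticipate is purely bookkeeping: keeping track of the exact $r$- and $\Omega$-weights so that every bulk term generated by the integration by parts has a strictly positive coefficient for $r$ large (forcing the restriction to a region $r>R$ absorbed into the lower-order $\mathbb{F}_{\Sigma^*}$ terms via \Cref{RWILED}) while simultaneously not losing at the event horizon, where $\Omega^{-2}$ degenerates — this is the reason for the $\epsilon$ and for writing the measure on $\underline{\mathscr{C}}_v$ with the $\Omega^2$ factor. There is no genuinely new analytic difficulty here; the estimate is a one-angular-derivative commuted version of \Cref{-2psiILED}, and the proof can legitimately be abbreviated to "Commuting the definition of $\underline\psi$ in \bref{hier-} with $r\fancyd_2$ and repeating the argument of \Cref{-2psiILED}, using \Cref{2ndderivativeofpsibar} to control the boundary flux on $\underline{\mathscr{C}}_v$ and \Cref{-2psiILED,-2alphaILED} to control the lower-order terms, gives the result." I would present it at roughly that level of detail, matching the terseness of the surrounding propositions.
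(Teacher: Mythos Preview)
Your approach is correct and is exactly what the paper does (compare the one-line justification given for the analogous $+2$ statement just before \Cref{ILED psi higherorder}): commute the relation $\underline\Psi=\frac{r^2}{\Omega^2}\nablav r^3\Omega\underline\psi$ from \bref{hier-} with $r\fancyd_2$ and rerun the weighted transport argument of \Cref{-2psiILED}. Two small clarifications will streamline your write-up. First, on Schwarzschild $r\fancyd_2=\mathring{\fancyd_2}$ is a purely angular operator and commutes exactly with $\nablav$, so there are no lower-order commutator terms to absorb; the only source is $\frac{\Omega^2}{r^2}\,r\fancyd_2\underline\Psi$, whose spacetime integral is controlled (after Cauchy--Schwarz and the elliptic identity $2\int_{S^2}|\fancyd_2\Xi|^2=\int_{S^2}|\slashednabla\Xi|^2+2K|\Xi|^2$) by the Regge--Wheeler estimates of \Cref{RWILED} and \Cref{RWrp} exactly as in \Cref{-2psiILED}. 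Second, the $\underline{\mathscr{C}}_v$ boundary term produced by the $v$-transport has the \emph{good} sign and sits on the left of the inequality; it does not need to be controlled. The role of the preceding proposition \bref{2ndderivativeofpsibar} is simply that its $\underline{\mathscr{C}}_v$ flux, via the elliptic estimate, already encodes the $|r\fancyd_2(r^3\Omega^{-1}\underline\psi)|^2$ flux, which is why only the spacetime bulk is recorded here.
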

\begin{proposition}
Let $\underline\alpha$ be a solution to (\ref{T-2}) and $\underline\Psi, \underline\psi$ be as in (\ref{hier-}) and \Cref{-2 implies RW}. Then for any $u$ and any $v>0$ such that $(u,v,\theta^A)\in J^+(\Sigma^*)$, the following estimate holds: 
\begin{align}
\begin{split}
    \int_{\underline{\mathscr{C}}_v\cap J^+(\Sigma^*)\cap J^-(\mathscr{C}_u)}\Omega^2 d\bar{u}&d\omega\;r^6|\Omega^{-1}\nablagml(\Omega^{-1}\underline\psi)|^2+\int_{\mathscr{D}^{u,v}_{\Sigma^*}}\Omega^2 d\bar{u}d\bar{v}d\omega\; r^4\left[|\Omega^{-1}\nablagml(\Omega^{-1}\underline\psi)|^2+|r\nablav(\Omega^{-1}\underline\psi)|^2\right]\\& \lesssim \mathbb{F}_{{\Sigma^*}}[\underline\Psi]+\int_{{\Sigma^*}}drd\omega\;r^4\Omega^{-2}\left[|\underline\psi|^2+|r\fancyd_2\underline\psi|^2+|\Omega^{-1}\nablagml(\Omega^{-1}\underline\psi)|^2+|r\nablav(\Omega^{-1}\underline\psi)|^2\right].
\end{split}
\end{align}
\end{proposition}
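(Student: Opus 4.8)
The plan is to follow exactly the template established for the preceding four propositions in \Cref{subsection 7.1 integrated boundedness and decay estimates for -2}: derive a pointwise transport identity for the quantity $\Omega^{-1}\nablagml(\Omega^{-1}\underline\psi)$ along the $v$-direction, apply Cauchy--Schwarz to separate a coercive bulk term from a lower-order source, and integrate over $\mathscr{D}^{u,v}_{\Sigma^*}$, closing the argument with the already-established estimates \Cref{-2psiILED}, \Cref{-2alphaILED} and \bref{2ndderivativeofpsibar}. Concretely, first I would commute the defining transport relation $r^3\Omega\underline\psi=\frac{r^2}{\Omega^2}\nablav r\Omega^2\underline\alpha$ in \bref{hier-} (equivalently the $\underline\psi$-equation for $\underline\Psi$) with $\Omega^{-1}\nablagml$, using that $[\Omega^{-1}\nablagml,\nablav]$ and $[\Omega^{-1}\nablagml, r]$ produce only terms of the same or lower order weighted by powers of $\Omega^2$ and $M/r$, to obtain an equation of the schematic form
\begin{align*}
    \nablav\left[r^{6}\Omega^{-2}\left|\Omega^{-1}\nablagml(\Omega^{-1}\underline\psi)\right|^2\right]+c\,r^{4}\Omega^{-2}\left|\Omega^{-1}\nablagml(\Omega^{-1}\underline\psi)\right|^2\lesssim r^{4}\Omega^{-2}\left|\Omega^{-1}\nablagml\underline\Psi\right|^2+(\text{l.o.t.}),
\end{align*}
with $c>0$, after one application of Cauchy--Schwarz; simultaneously I would record the analogous identity for $r\nablav(\Omega^{-1}\underline\psi)$, obtained by commuting \bref{hier-} with $r\nablav$, which by the same bookkeeping yields a bulk term $r^{4}|r\nablav(\Omega^{-1}\underline\psi)|^2$ controlled again by $r$-weighted derivatives of $\underline\Psi$ plus lower-order terms in $\underline\psi,\underline\alpha$.

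Second, I would integrate these two identities over $\mathscr{D}^{u,v}_{\Sigma^*}$ against the measure $\Omega^2\,d\bar u\,d\bar v\,d\omega$, which converts the $\nablav$-total-derivative terms into the flux on $\underline{\mathscr{C}}_v$ appearing on the left-hand side and a contribution from the initial hypersurface $\Sigma^*$; the bulk terms on the left are exactly those claimed. The right-hand side at this stage carries: (i) $r$-weighted first derivatives of $\underline\Psi$, which are absorbed using the commuted versions (with $\Omega^{-1}\nablagml$ and $r\nablav$) of the redshift estimate \Cref{RWredshift} and the $r^p$-estimates \Cref{RWrp} for $\underline\Psi$ — all of which are bounded by $\mathbb{F}_{\Sigma^*}[\underline\Psi]$; (ii) lower-order terms in $\underline\psi$ and its angular derivative $r\fancyd_2\underline\psi$, absorbed via \Cref{-2psiILED}, \bref{2ndderivativeofpsibar}, and the preceding proposition controlling $r^{5-\epsilon}\Omega^{-2}|r\fancyd_2\underline\psi|^2$; and (iii) lower-order terms in $\underline\alpha$, absorbed via \Cref{-2alphaILED}. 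Finally, the boundary contributions on $\Sigma^*$ assemble precisely into the integral on the right-hand side of the claimed estimate, and everything is valid whenever that integral (hence also the data norms appearing implicitly) is finite, so there is no circularity.

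The main obstacle I anticipate is the weight bookkeeping in the commutators: one must check that commuting $\Omega^{-1}\nablagml$ past the $\underline\psi$-transport equation does not generate a term that degenerates badly at $\mathscr{H}^+$ or grows too fast at $\mathscr{I}^+$ relative to the coercive bulk term $r^{4}\Omega^{-2}|\cdots|^2$ — in particular the factor $\Omega^{-1}$ in $\Omega^{-1}\nablagml$ is what makes the redshift structure near $\mathscr{H}^+$ delicate, and one needs the positivity of the surface gravity (as in \cite{DR08} and as already exploited for \Cref{RWredshift}) to ensure the sign of $c$ survives. The analogous commutation with $r\nablav$ is milder near $\mathscr{H}^+$ but requires care at large $r$, where one uses that $r\nablav$ acting on $\Omega^{-1}\underline\psi$ produces, via \bref{hier-} and \bref{RW}, curvature terms that are themselves controlled by the $\underline\Psi$-estimates of \Cref{RWrp} and the lower-order $\underline\psi$-estimates; a short Cauchy--Schwarz with an $\epsilon$-loss in the $r$-power (as in \Cref{rp+2}) handles any borderline term. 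Once the signs and weights are pinned down, the integration and absorption steps are entirely routine and parallel to Propositions 12.1.2, 12.2.6, 12.2.7 of \cite{DHR16}.
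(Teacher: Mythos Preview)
Your proposal is correct and follows essentially the same approach as the paper: the paper gives no explicit proof here, but the surrounding text (and the analogous $+2$ discussion preceding \Cref{ILED psi higherorder}) indicates exactly the strategy you outline --- commute the $\underline\psi$-transport relation in \bref{hier-} with $\Omega^{-1}\nablagml$ and $r\nablav$, derive the resulting weighted transport identities, apply Cauchy--Schwarz, integrate over $\mathscr{D}^{u,v}_{\Sigma^*}$, and absorb the sources using the Regge--Wheeler estimates for $\underline\Psi$ together with the already-established lower-order bounds on $\underline\psi$ and $\underline\alpha$. Your identification of the delicate points (the $\Omega^{-1}$ weight near $\mathscr{H}^+$ and the $r$-growth near $\mathscr{I}^+$) and the referral to Propositions~12.1.2, 12.2.6, 12.2.7 of \cite{DHR16} also match the paper's treatment.
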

\begin{proposition}
Let $\underline\alpha$ be a solution to (\ref{T-2}) and $\underline\Psi, \underline\psi$ be as in (\ref{hier-}) and \Cref{-2 implies RW}. Then for any $u$ and any $v>0$ such that $(u,v,\theta^A)\in J^+(\Sigma^*)$, the following estimate holds:
\begin{align}
\begin{split}
    &\int_{\underline{\mathscr{C}}_v\cap J^+(\Sigma^*)\cap J^-(\mathscr{C}_u)}\Omega^2 d\bar{u}d\omega\;|r^2\fancydstar_2\fancyd r\Omega^{-2}\underline\alpha|^2+\int_{\mathscr{D}_{{\Sigma^*}}^{u,v}} \Omega^2 d\bar{u}d\bar{v}d\omega\;|r^2\fancydstar_2\fancyd_2\Omega^{-2}\underline\alpha|^2
    \\& \lesssim \mathbb{F}_{{\Sigma^*}}[\underline\Psi]+\int_{{\Sigma^*}}drd\omega\;r^4\Omega^{-2}\left[|\underline\psi|^2+|r\fancyd_2\underline\psi|^2+|\Omega^{-1}\nablagml(\Omega^{-1}\underline\psi)|^2+|r\nablav(\Omega^{-1}\underline\psi)|^2\right]+\int_{{\Sigma^*}} drd\omega\;|r\Omega^{-2}\underline\alpha|^2.
\end{split}
\end{align}
\end{proposition}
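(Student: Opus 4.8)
The statement to be proven is the last displayed proposition of the excerpt, which asserts a weighted $L^2$ bound on the flux of $r^2 \fancydstar_2 \fancyd_2 \Omega^{-2}\underline\alpha$ on an ingoing cone $\underline{\mathscr{C}}_v$ and on its spacetime integral over $\mathscr{D}^{u,v}_{{\Sigma^*}}$, in terms of $\mathbb{F}_{\Sigma^*}[\underline\Psi]$, a weighted flux of second angular derivatives of $\underline\psi$ on $\Sigma^*$, and a weighted $L^2$ norm of $r\Omega^{-2}\underline\alpha$ on $\Sigma^*$. The plan is to proceed exactly as the preceding propositions of \Cref{subsection 7.1 integrated boundedness and decay estimates for -2} do: treat the relation for $\underline\alpha$ in \bref{hier-}, namely $r^3 \Omega \underline\psi = \frac{r^2}{\Omega^2}\nablav r\Omega^2 \underline\alpha$ (equivalently a transport equation $\nablav(r\Omega^2 \underline\alpha) \sim \frac{\Omega^2}{r^2} r^3 \Omega\underline\psi$), as a transport equation \emph{for} $\underline\alpha$ driven by $\underline\psi$, commute it with the elliptic operator $r^2 \fancydstar_2 \fancyd_2$ (which commutes with $\nablav$ up to $r$-weight factors, as $\slashed{\nabla}_4$ and $\slashed{\nabla}$ commute modulo curvature terms that are handled by the $r$-weights on the sphere), and integrate the resulting pointwise identity over $\mathscr{D}^{u,v}_{\Sigma^*}$ against an appropriate power of $r$.

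Concretely, first I would write down the transport equation satisfied by $r^2 \fancydstar_2 \fancyd_2 \, r\Omega^{-2}\underline\alpha$. Since $r^2\fancydstar_2\fancyd_2 = -\tfrac12(\mathring{\slashed\Delta}-2)$ acts only on the sphere and $\nablav = \Omega e_4$ preserves the $S^2_{u,v}$-tangency structure, commuting \bref{hier-} with $r^2\fancydstar_2\fancyd_2$ produces $\nablav$ applied to $r^2\fancydstar_2\fancyd_2(r\Omega^{-2}\underline\alpha)$ on the left, a source term $\frac{\Omega^2}{r^2}\, r^2\fancydstar_2\fancyd_2(r^3\Omega^{-1}\underline\psi)$ on the right, plus lower-order terms with controllable $r$-weights. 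Then, multiplying by $r^{k}\Omega^{-2}\, \overline{r^2\fancydstar_2\fancyd_2(r\Omega^{-2}\underline\alpha)}$ for a suitable $k$ and using Cauchy--Schwarz on the source, I get a differential inequality of the schematic form
\begin{align*}
  \partial_v\!\left[r^{k}\Omega^{-2}\,|r^2\fancydstar_2\fancyd_2(r\Omega^{-2}\underline\alpha)|^2\right] + c\, r^{k-2}\Omega^{-2}\,|r^2\fancydstar_2\fancyd_2(r\Omega^{-2}\underline\alpha)|^2 \lesssim r^{k'}\Omega^{-2}\,|r^2\fancydstar_2\fancyd_2(r^3\Omega^{-1}\underline\psi)|^2.
\end{align*}
Integrating this over $\mathscr{D}^{u,v}_{\Sigma^*}$ with the measure $\Omega^2\, d\bar{u}\,d\bar{v}\,d\omega$ gives the cone flux term on $\underline{\mathscr{C}}_v$ and the spacetime bulk term on the left, the $\Sigma^*$ data flux of $r^2\fancydstar_2\fancyd_2(r\Omega^{-2}\underline\alpha)$ (which by uniform ellipticity of $\mathring{\slashed\Delta}-2$ on symmetric traceless $S^2$ 2-tensors with vanishing $\ell\le1$ modes is comparable to $\|r\Omega^{-2}\underline\alpha\|$ plus its angular derivatives, hence accounted for in the right-hand side of the claimed estimate), and a bulk integral of $|r^2\fancydstar_2\fancyd_2(r^3\Omega^{-1}\underline\psi)|^2$ on the right. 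This last term is then controlled by the flux-and-bulk estimate for $-2r^2\fancydstar_2\fancyd_2(r^3\Omega\underline\psi)$ established in the earlier proposition containing \bref{2ndderivativeofpsibar}, which itself is bounded by $\mathbb{F}_{\Sigma^*}[\underline\Psi]$ and the weighted $\Sigma^*$ norms of $\underline\psi, \underline\alpha$ — matching precisely the right-hand side of the statement.

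I expect the main obstacle to be the careful bookkeeping of the $r$-weights and the sign of the coefficient $c$ of the bulk term near the photon sphere and near $\mathscr{H}^+$: because the $-2$ hierarchy uses $\Omega^{-2}$-weighted quantities (regular at the horizon in the Kruskal sense), the factor $\Omega^{-2}$ in the transport inequality must be tracked so that the bulk coefficient stays of the right sign after integration, and the boundary term at $\mathscr{H}^+$ must be shown to be either nonnegative or absorbable. The other point requiring care is the commutator $[\nablav, \fancydstar_2\fancyd_2]$: although $\nablav$ and $\mathring{\slashed\Delta}$ commute on the round sphere, here $\slashed\Delta$ is the Laplacian of $\slashed g = r^2\gamma$ and the $r(u,v)$-dependence generates terms proportional to $\nablav r$ times lower-order angular derivatives, which I would absorb into the already-controlled quantities of \Cref{-2psiILED}, \Cref{-2alphaILED} and the proposition with \bref{2ndderivativeofpsibar}. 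Everything else is routine integration by parts on $S^2$ (using the identities in the remark after \Cref{D1D2} and \Cref{poincaresection}) and Grönwall-type absorption, exactly in the spirit of Propositions 12.1.2, 12.2.6, 12.2.7 of \cite{DHR16} already cited in this subsection.
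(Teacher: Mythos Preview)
Your proposal is correct and follows essentially the same approach as the paper (which does not write out a proof but defers to Propositions 12.1.2, 12.2.6, 12.2.7 of \cite{DHR16}): commute the transport relation $\nablav(r\Omega^{-2}\underline\alpha)+\tfrac{4M}{r^2}r\Omega^{-2}\underline\alpha=r\Omega^{-1}\underline\psi$ with the angular operator, derive a $\partial_v$-differential inequality via Cauchy--Schwarz, integrate over $\mathscr{D}^{u,v}_{\Sigma^*}$, and close with the already-established control of $-2r^2\fancydstar_2\fancyd_2(r^3\Omega\underline\psi)$ from the proposition containing \bref{2ndderivativeofpsibar}. One simplification you can make: the commutator you flag as a potential obstacle is in fact trivial, since $r^2\fancydstar_2\fancyd_2=-\tfrac12(\mathring{\slashed\Delta}-2)=-\tfrac12\mathcal{A}_2$ is a pure unit-sphere operator and genuinely commutes with $\nablav$, so no lower-order angular error terms arise.
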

\begin{proposition}
Let $\underline\alpha$ be a solution to (\ref{T-2}) and $\underline\Psi, \underline\psi$ be as in (\ref{hier-}) and \Cref{-2 implies RW}. Then for any $u$ and any $v>0$ such that $(u,v,\theta^A)\in J^+(\Sigma^*)$, the following estimate holds for sufficiently small $\epsilon>0$: 
\begin{align}
\begin{split}
    &\int_{\underline{\mathscr{C}}_v\cap J^+(\Sigma^*)\cap J^-(\mathscr{C}_u)}\Omega^2 d\bar{u}d\omega\; \left[|r\Omega^{-2}\underline\alpha|^2+|r\fancyd_2r\Omega^{-2}\underline\alpha|^2+|\Omega^{-1}\nablagml r\Omega^{-2}\underline\alpha|^2\right]\\&+\int_{\mathscr{D}_{\Sigma^*}^{u,v}}\Omega^2 d\bar{u}d\bar{v}d\omega\; \left[|\Omega^{-2}\underline\alpha|^2+|r\fancyd_2\Omega^{-2}\underline\alpha|^2+|\Omega^{-1}\nablagml \Omega^{-2}\underline\alpha|^2\right]\\& \lesssim \mathbb{F}_{{\Sigma^*}}[\underline\Psi]+\int_{{\Sigma^*}}drd\omega\;r^6\left[|\Omega^{-1}\underline\psi|^2+|r\fancyd_2\Omega^{-1}\underline\psi|^2+|\Omega^{-1}\nablagml(\Omega^{-1}\underline\psi)|^2\right]\\& +\int_{{\Sigma^*}}drd\omega\; r^2\left[|\Omega^{-2}\underline\alpha|^2+|r\fancyd_2\Omega^{-2}\underline\alpha|^2+|\Omega^{-1}\nablagml\Omega^{-2}\underline\alpha|^2\right].
\end{split}
\end{align}
\end{proposition}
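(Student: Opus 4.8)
The plan is to add one more link to the chain of weighted transport estimates established above for $\underline\psi$ and $\underline\alpha$, now commuting the transport relation \bref{hier-} for $\underline\alpha$ with the angular Killing fields $\Omega_i$ and with the vector field $\Omega^{-1}\nablagml$; this is the $-2$ analogue of \Cref{alphaILED higher order} and follows the scheme of Section~12.3 of \cite{DHR16}.

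First I would record the transport equation satisfied by the rescaled quantity $r\Omega^{-2}\underline\alpha$, obtained from the first line of \bref{hier-}: it has the schematic form $\nablav\big(r\Omega^{-2}\underline\alpha\big)+\tfrac{c(r)}{r}\,r\Omega^{-2}\underline\alpha = \tfrac{1}{r^{2}}\,r^{3}\Omega^{-1}\underline\psi$, with a lower-order coefficient $c(r)$ of definite sign favourable for forward $\nablav$-integration on $\mathscr{D}^{u,v}_{\Sigma^*}$. Multiplying by $r^{k}\,r\Omega^{-2}\underline\alpha$ for a suitable power $k$, integrating by parts on $S^2$ and over $\mathscr{D}^{u,v}_{\Sigma^*}$, and absorbing the $\underline\psi$-source by Cauchy--Schwarz yields the bulk estimate for $\Omega^{-2}\underline\alpha$ and the $\underline{\mathscr{C}}_v$-flux estimate for $r\Omega^{-2}\underline\alpha$; the $\underline\psi$-contributions are bounded using \Cref{-2psiILED}, and the resulting right-hand side is controlled in turn by \Cref{RWILED}, \Cref{RWrp} and \Cref{RWredshift} applied to $\underline\Psi$, with the Poincar\'e inequality of \Cref{poincaresection} used exactly as in \Cref{T+2rp} to keep the spacetime bulk positive-definite and to absorb the zeroth-order terms. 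For the angular derivatives I would commute the transport equation with the Lie derivatives $\slashed{\mathcal{L}}_{\Omega_i}$, which commute with $\nablav$, rerun the multiplier argument, and then pass from control of $\slashed{\mathcal{L}}_{\Omega_i}(r\Omega^{-2}\underline\alpha)$ and $\slashed{\mathcal{L}}_{\Omega_i}\Omega^{-2}\underline\alpha$ to control of $r\fancyd_2(r\Omega^{-2}\underline\alpha)$ and $r\fancyd_2\Omega^{-2}\underline\alpha$ using the elliptic estimates of \Cref{subsubsection 2.1.2 Elliptic estimates on S2} together with the flux bounds for $r^{2}\fancydstar_2\fancyd_2(r\Omega^{-1}\underline\psi)$ and $r^{2}\fancydstar_2\fancyd_2\Omega^{-2}\underline\alpha$ from the preceding two propositions.

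The main obstacle is the commutation with $\Omega^{-1}\nablagml$, needed for the $|\Omega^{-1}\nablagml\, r\Omega^{-2}\underline\alpha|^{2}$ and $|\Omega^{-1}\nablagml\,\Omega^{-2}\underline\alpha|^{2}$ terms, since the weight $\Omega^{-1}$ degenerates at $\overline{\mathscr{H}^+}$ and a naive commutation loses the estimate there. Here one exploits the fact, noted in the introduction, that the $-2$ Teukolsky equation carries a genuine redshift near $\mathscr{H}^+$: commuting the transport relation with $\Omega^{-1}\nablagml$ produces, through $[\nablav,\Omega^{-1}\nablagml]$ and the coefficient $c(r)$, a term of favourable sign in a horizon neighbourhood, precisely as in the redshift commutation behind \Cref{RWredshift}. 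The commuted source then involves $\Omega^{-1}\nablagml\underline\Psi$ and $\Omega^{-1}\nablagml(r\Omega^{-1}\underline\psi)$, absorbed using the redshift-commuted version of \Cref{RWredshift} and the higher-order $\underline\psi$-estimates above. Assembling the three inputs --- the basic estimate, the $\slashed{\mathcal{L}}_{\Omega_i}$-commuted estimate, and the $\Omega^{-1}\nablagml$-commuted estimate --- and using Poincar\'e and Hardy inequalities to absorb the overlapping lower-order terms yields the claimed inequality.
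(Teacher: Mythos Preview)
The proposal is correct and follows essentially the same approach as the paper, which does not write out a detailed proof here but defers to Section~12.3 of \cite{DHR16}; the scheme you outline --- applying the weighted multiplier argument to the transport equation $\nablav(r\Omega^{-2}\underline\alpha)+\tfrac{4M}{r^2}r\Omega^{-2}\underline\alpha=r\Omega^{-1}\underline\psi$, then commuting with $\slashed{\mathcal{L}}_{\Omega_i}$ and with $\Omega^{-1}\nablagml$ and exploiting the redshift sign of the first-order term in \bref{T-2} near $\mathscr{H}^+$ --- is exactly the procedure the paper describes in the parallel $+2$ case preceding \Cref{alphaILED higher order}.
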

\subsection{Future radiation fields and fluxes}\label{subsection 7.2 future radiation fields and fluxes}
In this section the notion of future radiation fields of solutions to the -2 Teukolsky equation \bref{T-2} is defined, and some of the properties of these radiation fields are studied, in particular obtaining their $\mathcal{E}^{T,-2}_{\mathscr{H}^+}$, $\mathcal{E}^{T,-2}_{\mathscr{I}^+}$ fluxes when they belong to solutions of \bref{T-2} arising from smooth data of compact support.
\subsubsection{Radiation on $\mathscr{H}^+$}\label{-2 radiation on H+}
\begin{defin}\label{-2 radiation alpha definition H}
    Let $\underline\alpha$ be a solution to \cref{T-2} arising from smooth data as in \Cref{WP-2Sigma*}. The radiation field of $\underline\alpha$ along $\mathscr{H}^+_{\geq0}$, denoted $\underline\upalpha_{\mathscr{H}^+}$, is defined to be the restriction of $2M\Omega^{-2}\underline\alpha$ to $\mathscr{H}^-$.
\end{defin}
\begin{defin}\label{-2 radiation alpha definition open H}
    Let $\underline\alpha$ be a solution to \cref{T-2} arising from smooth data which is compactly supported on $\Sigma$ according to \Cref{WP-2Sigmabar}. The radiation field of $\underline\alpha$ along $\mathscr{H}^+_{\geq0}$, denoted $\underline\upalpha_{\mathscr{H}^+}$, is defined to be the restriction of $2M\Omega^{-2}\underline\alpha$ to $\mathscr{H}^-$.
\end{defin}
\begin{defin}\label{-2 radiation alpha definition overline H}
    Let $\underline\alpha$ be a solution to \cref{T-2} arising from smooth data as in \Cref{WP-2Sigmabar}. The radiation field of $\underline\alpha$ along $\overline{\mathscr{H}^+}$, denoted $\underline\upalpha_{{\mathscr{H}^+}}$, is defined by $V^2\underline\upalpha_{{\mathscr{H}^+}}=2MV^2\Omega^{-2}\underline\alpha|_{{\mathscr{H}^+}}$.
\end{defin}
\begin{remark}
We will use the same notation for the radiation field on $\mathscr{H}^+_{\geq0}, \mathscr{H}^+$ or $\overline{\mathscr{H}^+}$.
\end{remark}
The following applies equally to radiation fields on $\mathscr{H}^+_{\geq0}$, $\mathscr{H}^+$ and $\overline{\mathscr{H}^+}$.
\begin{proposition}\label{-2 radiation ptwise decay H}
Assume $\underline\alpha$ arises from data which is supported away from $i^0$, then $\lim_{v\longrightarrow\infty}\underline{\bm{\uppsi}}_{\mathscr{H}^+}=~0$.
\end{proposition}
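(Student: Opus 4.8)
The statement \Cref{-2 radiation ptwise decay H} asserts that the radiation field $\underline{\bm{\uppsi}}_{\mathscr{H}^+}$ of $\underline\Psi$ on $\overline{\mathscr{H}^+}$ decays towards the future end $i^+$, for solutions of \bref{T-2} arising from data supported away from $i^0$. The plan is to mimic the proof of \Cref{RWdecayfixedR} for the Regge--Wheeler equation: since $\underline\Psi$ defined via \bref{hier-} satisfies the Regge--Wheeler equation \bref{RW} by \Cref{-2 implies RW}, all the decay machinery of \Cref{subsection 5.1 Basic integrated boundedness and decay estimates} applies directly to $\underline\Psi$. In particular, \Cref{RWILED} commuted twice with $\slashednabla_T$ together with the redshift estimate of \Cref{RWredshift} gives, for any $R<\infty$,
\begin{align}
\int_{v_0}^\infty d\bar v\;\Big[\underline{F}_{\underline{\mathscr{C}}_{\bar v}\cap\{r<R\}}[\underline\Psi]+\underline{F}_{\underline{\mathscr{C}}_{\bar v}\cap\{r<R\}}[\slashednabla_T\underline\Psi]\Big]<\infty,
\end{align}
which forces $\lim_{v\to\infty}\underline F_v[\underline\Psi](u_R,\infty)=0$ where $v-u_R=R^*$; commuting further with $\Omega^{-1}\nablagml$ and with the angular Killing fields $\mathcal{L}_{\Omega_i}$ and applying Sobolev embedding on $S^2$ then yields $\lim_{v\to\infty}\sup_{u\in[u_R,\infty]}|\underline\Psi|_v=0$, so in particular $\underline{\bm{\uppsi}}_{\mathscr{H}^+}=\underline\Psi|_{\mathscr{H}^+}\to 0$. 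Strictly speaking this is exactly \Cref{RWdecayfixedR} applied to $\underline\Psi$ at $R=2M$ (read in the limit), plus the remark following it allowing commutation by $(\Omega^{-1}\nablagml)^k$; the hypothesis that the data is supported away from $i^0$ is what guarantees the finiteness of $\mathbb{F}_{\Sigma^*}[\underline\Psi]$ (equivalently of the $\mathcal{E}^T$-type norms) needed to invoke these estimates, and also the finiteness of the commuted fluxes.

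First I would record that $\underline\Psi$ solves \bref{RW} and that, since the initial data for $\underline\alpha$ is supported away from $i^0$, the induced data $(\underline\Psi|_{\Sigma^*},\slashednabla_{n_{\Sigma^*}}\underline\Psi|_{\Sigma^*})$ — which is computed locally from $\underline\alpha$ and its derivatives through the Teukolsky equation — is smooth and has finite $\mathbb{F}^{n}_{\Sigma^*}[\underline\Psi]$ for all $n$, and in fact is compactly supported on $\Sigma^*$. Then I would simply cite \Cref{RWdecayfixedR} and the remark immediately after it, applied to $\underline\Psi$ in place of $\Psi$, to conclude $\lim_{v\to\infty}\|(\Omega^{-1}\nablagml)^k\underline\Psi\|_{L^2(S^2_R)}=0$ for every $R$ and every $k$; taking $R\to 2M$ (or rewriting in the Kruskal frame near $\mathcal{B}$, where \bref{RW} is regular, and using the analogous estimate up to $\overline{\mathscr{H}^+}$) gives the claim on $\overline{\mathscr{H}^+}$. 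The case of data on $\Sigma$ or $\overline\Sigma$ reduces to the $\Sigma^*$ case since $J^+(\Sigma^*)\cap J^-(\Sigma)$ is compact and handled by local theory plus $T$-energy conservation, exactly as in \Cref{RWfcpSigma}.

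The only genuinely delicate point is regularity and decay \emph{up to the bifurcation sphere} $\mathcal{B}$ — i.e.\ working on $\overline{\mathscr{H}^+}$ rather than $\mathscr{H}^+_{\geq0}$ — but this is handled uniformly throughout the paper by the observation (see \Cref{regular} and the discussion following \Cref{RWwpCauchy}) that \bref{RW} is regular in Kruskal coordinates, so $\underline\Psi$ and its $\slashednabla_U,\slashednabla_V$ and angular derivatives extend smoothly to $\overline{\mathscr{H}^+}$, and the redshift/ILED estimates can be run all the way to $\mathcal{B}$; there is no blueshift obstruction here since we are evolving \emph{forwards}. Thus I expect no real obstacle: the statement is a direct corollary of the Regge--Wheeler decay estimates already established, transported through the Chandrasekhar map, and the proof will be a two-line invocation of \Cref{RWdecayfixedR} with the remark following it.

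\begin{proof}
Since $\underline\alpha$ solves \bref{T-2}, the field $\underline\Psi$ defined via \bref{hier-} solves the Regge--Wheeler equation \bref{RW} by \Cref{-2 implies RW}, and $\underline{\bm{\uppsi}}_{\mathscr{H}^+}$ is precisely its restriction to $\overline{\mathscr{H}^+}$. As the data for $\underline\alpha$ is supported away from $i^0$, the induced Cauchy data for $\underline\Psi$ (computed locally through \bref{T-2}) is smooth and compactly supported, so $\mathbb{F}^n_{\Sigma^*}[\underline\Psi]<\infty$ for all $n$. Applying \Cref{RWdecayfixedR} and the remark following it to $\underline\Psi$ gives, for every $R<\infty$ and every $k\ge 0$,
\begin{align}
\lim_{v\longrightarrow\infty}\sup_{u\in[u_R,\infty]}\big|(\Omega^{-1}\nablagml)^k\underline\Psi\big|_v=0,\qquad v-u_R=R^*.
\end{align}
Rewriting \bref{RW} in Kruskal coordinates near $\mathcal{B}$, where it is regular, the redshift and integrated-decay estimates extend up to $\overline{\mathscr{H}^+}$ (there being no blueshift obstruction in forward evolution), and the same argument applies with $\mathscr{H}^+_{\geq0}$ replaced by $\overline{\mathscr{H}^+}$. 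Hence $\lim_{v\longrightarrow\infty}\underline{\bm{\uppsi}}_{\mathscr{H}^+}=0$. The cases of data on $\Sigma$ or $\overline\Sigma$ follow since $J^+(\Sigma^*)\cap J^-(\Sigma)$ is handled by local well-posedness and $T$-energy conservation, as in \Cref{RWfcpSigma}.
\end{proof}
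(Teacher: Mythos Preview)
Your proof is correct: since $\underline\Psi$ solves \bref{RW} with smooth compactly supported Cauchy data, \Cref{RWdecayfixedR} (and the remark following it) applies verbatim and gives the claim.

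The paper's own proof, however, takes a different route and in fact establishes pointwise decay of the \emph{intermediate} quantity $\Omega^{-1}\underline\psi$ on $\mathscr{H}^+$ rather than of $\underline\Psi$ itself: it invokes the transport estimate \Cref{-2psiILED} commuted with $\mathcal{L}_T$ to obtain $\int_{v_0}^\infty (|\Omega^{-1}\underline\psi|^2+|\slashednabla_T\Omega^{-1}\underline\psi|^2)\,d\bar v\,d\omega<\infty$ on $\mathscr{H}^+$, and then concludes $\|\Omega^{-1}\underline\psi\|_{S^2_{\infty,v}}\to 0$ via Sobolev embedding. Given the paper's conventions (where $\underline{\bm{\uppsi}}_{\mathscr{H}^+}$ consistently denotes the Regge--Wheeler radiation field, see e.g.\ \bref{-2 Psi out of alpha H+}), your reading of the statement is the natural one, and the mismatch between statement and proof in the paper appears to be a typographical slip. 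Your argument via \Cref{RWdecayfixedR} is the cleaner and more direct proof of the statement as written; the paper's route through \Cref{-2psiILED} is tailored to $\underline\psi$ and is presumably what is actually needed for the companion proposition on $\underline\upalpha_{\mathscr{H}^+}$ that immediately follows.
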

\begin{proof}
The flux estimate of \Cref{-2psiILED} commuted with $\mathcal{L}_T$ implies 
\begin{align}
    \int_{v_0}^\infty d\bar{v}d\omega\; \left|\Omega^{-1}\underline\psi\right|^2+ \left|\slashednabla_T\Omega^{-1}\underline\psi\right|^2 <\infty.
\end{align}
This implies $||\Omega^{-1}\underline\psi||_{S^2_{\infty,v}}\longrightarrow0$ as $v\longrightarrow\infty$. A further Sobolev embedding on the sphere gives the result.
\end{proof}
Similarly, we have
\begin{proposition}
Assume $\underline\alpha$ arises from data which is supported away from $i^0$, then $\lim_{v\longrightarrow\infty}\underline\upalpha_{\mathscr{H}^+}=0$.
\end{proposition}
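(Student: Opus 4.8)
The statement to prove is that if $\underline\alpha$ solves \bref{T-2} with data supported away from $i^0$, then $\lim_{v\to\infty}\underline\upalpha_{\mathscr{H}^+}=0$, where $\underline\upalpha_{\mathscr{H}^+}$ is the restriction of $2M\Omega^{-2}\underline\alpha$ to $\mathscr{H}^+$ (equivalently, of $2MV^2\Omega^{-2}\underline\alpha$ on $\overline{\mathscr{H}^+}$). The plan is to mirror exactly the argument just given for $\underline\psi$ in \Cref{-2 radiation ptwise decay H}: obtain a finite spacetime/null-flux integral for the relevant weighted quantity, commute with $\mathcal{L}_T$ to upgrade this to decay of an $L^2(S^2_{\infty,v})$ norm as $v\to\infty$, and then conclude pointwise decay by Sobolev embedding on $S^2$.

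Concretely, first I would invoke \Cref{-2alphaILED}, which controls $\int_{\underline{\mathscr{C}}_v\cap J^+(\Sigma^*)\cap J^-(\mathscr{C}_u)}\Omega^2\, d\bar u\, d\omega\; r^2\Omega^{-4}|\underline\alpha|^2$ together with the spacetime integral $\int_{\mathscr{D}^{u,v}_{\Sigma^*}}\Omega^2\, d\bar u\, d\bar v\, d\omega\;\Omega^{-4}|\underline\alpha|^2$, all bounded by $\mathbb{F}_{\Sigma^*}[\underline\Psi]$ plus finite data terms (finite because the data is supported away from $i^0$). On $\mathscr{H}^+$ the weight $r^2\Omega^{-4}$ degenerates, so rather than reading off a horizon flux directly I would instead use the spacetime integral: since $\int_{\mathscr{D}^{u,v}_{\Sigma^*}}\Omega^2\Omega^{-4}|\underline\alpha|^2\, d\bar u\, d\bar v\, d\omega<\infty$, and this bound is uniform as $u\to\infty$, the inner integrals $\int_{\underline{\mathscr C}_v}\Omega^2\Omega^{-4}|\underline\alpha|^2$ must decay along a sequence $v_n\to\infty$. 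To upgrade to decay along the full parameter and to control the $L^2(S^2)$ norm on the horizon, I would commute \bref{T-2} with $\mathcal{L}_T$ and with the angular Killing fields $\slashed{\mathcal L}_{\Omega_i}^\gamma$, $|\gamma|\le 2$, obtaining the analogous finiteness for $\slashed\nabla_T\underline\alpha$ and its angular derivatives; combined with the boundedness of the horizon flux of $\Omega^{-2}\underline\alpha$ via \bref{eq:d3psibar} (which expresses $\nablau\underline\Psi$ in terms of $\Omega^{-2}\underline\alpha$ and its $3$-derivative, all controlled near $\mathscr{H}^+$ by \Cref{RWredshift} applied to $\underline\Psi$), this forces $\|2M\Omega^{-2}\underline\alpha\|_{L^2(S^2_{\infty,v})}\to 0$ as $v\to\infty$. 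A Sobolev embedding $W^{3,1}(S^2)\hookrightarrow L^\infty(S^2)$, applied after commuting with up to three angular Killing fields, then gives the pointwise statement $\lim_{v\to\infty}\underline\upalpha_{\mathscr{H}^+}=0$. The same argument run in the Kruskal frame handles $\overline{\mathscr{H}^+}$, using that $V^2\Omega^{-2}\underline\alpha$ is the regular quantity there and that the weighted estimates extend across $\mathcal{B}$.

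The main obstacle I anticipate is the horizon degeneracy: the transport estimate of \Cref{-2alphaILED} carries the weight $r^2\Omega^{-4}$, which vanishes at $\mathscr{H}^+$, so it does not by itself control $\|\Omega^{-2}\underline\alpha\|^2_{L^2(\mathscr{H}^+)}$ or its limit. Overcoming this requires a genuine redshift-type argument near the horizon: one must exploit the good sign of the first-order term in the $-2$ Teukolsky equation \bref{T-2} near $\mathscr{H}^+$ (the term has a \emph{redshift} sign, as emphasised in the introduction) to derive a transport estimate for $\Omega^{-2}\underline\alpha$ with a \emph{nondegenerate} weight in a neighbourhood of $\mathscr{H}^+$, analogous to how \Cref{RWredshift} improves on the degenerate $T$-energy for $\underline\Psi$. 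This is precisely the structural feature that makes the $-2$ equation tractable in forward evolution, and here it yields that the flux $\int_{\mathscr{H}^+_{\geq v}}|\Omega^{-2}\underline\alpha|^2$ is finite and decays in $v$, hence the $L^2(S^2)$ slices decay; the rest (commutation with $T$ and with angular Killing fields, then Sobolev embedding) is routine and parallels the proof of \Cref{-2 radiation ptwise decay H} verbatim.
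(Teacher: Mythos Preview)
Your overall strategy is correct and matches the paper's: the paper proves this by exactly the same argument as \Cref{-2 radiation ptwise decay H}, replacing \Cref{-2psiILED} by \Cref{-2alphaILED}, commuting with $\mathcal{L}_T$ to obtain decay of $\|\Omega^{-2}\underline\alpha\|_{L^2(S^2_{\infty,v})}$, and then Sobolev on the sphere.

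Where your proposal goes astray is in the ``main obstacle'' paragraph. You claim the weight $r^2\Omega^{-4}$ vanishes at $\mathscr{H}^+$; in fact $\Omega\to 0$ there, so $r^2\Omega^{-4}\to\infty$, and $r^2\Omega^{-4}|\underline\alpha|^2=r^2|\Omega^{-2}\underline\alpha|^2$ is precisely the \emph{non}-degenerate quantity. The transport identity underlying \Cref{-2alphaILED} (the analogue for $\underline\alpha$ of the identity $\partial_v[r^6\Omega^{-2}|\underline\psi|^2]+Mr^4\Omega^{-2}|\underline\psi|^2\le \frac{1}{Mr^2}|\underline\Psi|^2$ in the proof of \Cref{-2psiILED}) reads, in terms of regular quantities, as $\partial_v\bigl[r^2|\Omega^{-2}\underline\alpha|^2\bigr]+\tfrac{4M}{r^2}r^2|\Omega^{-2}\underline\alpha|^2\lesssim r^4|\Omega^{-1}\underline\psi|^2$. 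Integrating this in $v$ along $\mathscr{C}_u$ and letting $u\to\infty$ gives directly $\int_{\mathscr{H}^+_{\geq0}}|\Omega^{-2}\underline\alpha|^2\,dv\,d\omega<\infty$ with no degeneracy, exactly as in the paper's one-line proof for $\underline\psi$. There is therefore no need for a separate redshift argument, and the detour through \bref{eq:d3psibar} is both unnecessary and unhelpful (that identity involves $r\Omega^2\underline\alpha$, which itself vanishes on $\mathscr{H}^+$). The ``good sign'' you invoke is already what produces the positive $\tfrac{4M}{r^2}$ coefficient in the transport identity above; no extra mechanism is required.
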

\subsubsection{Radiation flux on $\mathscr{H}^+$}\label{-2 radiation flux on H+}
Now we can calculate the radiation energies in terms of $\underline\alpha$. We want to rewrite
\begin{align}
\Omega\slashed{\nabla}_4\underline\Psi=\nablav\left(\frac{r^2}{\Omega^2}\nablav\right)^2r\Omega^2\underline\alpha
\end{align}
in terms of $\Omega^{-2}\underline\alpha$ and $\Omega^{-1}\underline\psi$. We have for $\underline\psi$
\begin{align}
\begin{split}
r^3\Omega^{-1}\underline\psi&=\frac{r^2}{\Omega^4}\nablav r\Omega^2\underline\alpha=\frac{r^2}{\Omega^4}\nablav r\Omega^4 \Omega^{-2}\underline\alpha
\\&=r^2(2-\Omega^2)\Omega^{-2}\underline\alpha+r^3\nablav \Omega^{-2}\underline\alpha.
\end{split}
\end{align}
We can write for $\underline\Psi$
\begin{align}\label{Psi H+}
\begin{split}
\underline\Psi&=\frac{r^2}{\Omega^2}\nablav r^3\Omega\underline\psi=2M r^3\Omega^{-1}\underline\psi+r^2\nablav r^3\Omega^{-1}\underline\psi
\\&=2r^3\Omega^{-2}\underline\alpha+r^4(3+\Omega^2)\nablav\Omega^{-2}\underline\alpha+r^5(\nablav)^2\Omega^{-2}\underline\alpha.
\end{split}
\end{align}
We can write for $\nablav\underline\Psi$
\begin{align}\label{nablav Psi H+}
\begin{split}
\nablav \underline\Psi=&6r^2\Omega^2\Omega^{-2}\underline\alpha+r^3(2+13\Omega^2+3\Omega^4)\nablav\Omega^{-2}\underline\alpha
\\&+3r^4(1+2\Omega^2)(\nablav)^2\Omega^{-2}\underline\alpha+r^5(\nablav)^3\Omega^{-2}\underline\alpha.
\end{split}
\end{align}
At $\mathscr{H}^+$ \bref{Psi H+}, \bref{nablav Psi H+} become
\begin{align}\label{-2 Psi out of alpha H+}
\underline{\bm{\uppsi}}_{\mathscr{H}^+}=(2M)^2\left[2\underline\upalpha_{\mathscr{H}^+}+6M\partial_v\underline\upalpha_{\mathscr{H}^+}+(2M)^2\partial_v^2\underline\upalpha_{\mathscr{H}^+}\right],
\end{align}
\begin{align}\label{-2 expression on H is regular}
\nablav\underline{\bm{\uppsi}}_{\mathscr{H}^+}=(2M)\left[4M\partial_v\underline\upalpha_{\mathscr{H}^+}+3(2M)^2\partial_v^2\underline\upalpha_{\mathscr{H}^+}+(2M)^3\partial_v^3\underline\upalpha_{\mathscr{H}^+}\right].
\end{align}
\begin{remark}
On $\mathcal{E}^{T,+2}_{\mathscr{H}^-}$, the norm $\|\;\|_{\mathcal{E}^{T,+2}_{\mathscr{H}^+}}$ is equal to
\begin{align}
    \|A\|_{\mathcal{E}^{T,+2}_{\mathscr{H}^+}}^2=\|2(2M\partial_v) A\|^2_{L^2(\mathscr{H}^+)}+\|3(2M\partial_v)^2 A\|^2_{L^2(\mathscr{H}^+)}+\|(2M\partial_v)^3 A\|^2_{L^2(\mathscr{H}^+)}.
\end{align}
while for $\|\;\|_{\mathcal{E}^{T,+2}_{\mathscr{H}^+_{\geq0}}}$ we have
\begin{align}
\begin{split}
    \|A\|_{\mathcal{E}^{T,+2}_{\mathscr{H}^+_{\geq0}}}^2&=\|2(2M\partial_v) A\|^2_{L^2(\mathscr{H}^+_{\geq0})}+\|3(2M\partial_v)^2 A\|^2_{L^2(\mathscr{H}^+_{\geq0})}+\|(2M\partial_v)^3 A\|^2_{L^2(\mathscr{H}^+_{\geq0})}\\
    &-6\|(2M\partial_v)A\|_{L^2(S^2_{\infty,0})}^2-3\|(2M\partial_v)^2A\|_{L^2(S^2_{\infty,0})}^2.
\end{split}
\end{align}
If the same computation for $\|\;\|_{\mathcal{E}^{T,+2}_{\overline{\mathscr{H}^+}}}$ is done with terms expressed in the Eddington--Finkelstein coordinates, it produces boundary terms that are not regular near $\mathcal{B}$. The expression \bref{-2 expression on H is regular} for $\underline\Psi$ remains well-defined over $\mathscr{H}^+$ for data on $\overline{\Sigma}$ and has a finite limit at $\mathcal{B}$, as we can see by writing it in terms of the regular Kruskal coordinates:
\begin{align}
    \|\underline\upalpha_{\mathscr{H}^+}\|_{\mathcal{E}^{T,+2}_{\overline{\mathscr{H}^+}}}^2=\|V^{1/2}\partial_V^3 V^{-2}\underline\upalpha_{\mathscr{H}^+}\|_{L^2_VL^2(S^2_{\infty,v})}^2.
\end{align}
For smooth initial data on $\overline{\Sigma}$, \Cref{WP-2Sigmabar} guarantees the continuity of $V^2\Omega^{-2}\underline\alpha$ in a neighborhood of $\mathcal{B}$, and in the backwards direction we can show the same with \Cref{backwards wellposedness -2} and \Cref{WP-2Sigma*}.
\end{remark}                 
\subsubsection{Radiation on $\mathscr{I}^+$}\label{-2 radiation asymptotics on scri+}
\begin{proposition}\label{-2psiscri}
Let $\underline\alpha$ be a solution to (\ref{T-2}) arising from smooth compactly supported data on $\Sigma^*$ and let $\underline\psi,\underline\Psi$ be as in (\ref{hier-}). Then $r^3\underline\psi$ has a uniform smooth limit towards $\mathscr{I}^+$
\end{proposition}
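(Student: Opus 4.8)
\textbf{Proof strategy for Proposition \ref{-2psiscri}.}
The plan is to mimic the argument already used for the Regge--Wheeler equation in \Cref{RWradscri} and for the $+2$ equation in \Cref{psi+2scri1}, adapting it to the weighted quantity $r^3\underline\psi$ belonging to a solution $\underline\alpha$ of the $-2$ Teukolsky equation. First I would recall the hierarchy \bref{hier-}: $r^3\Omega\underline\psi=\frac{r^2}{\Omega^2}\nablav r\Omega^2\underline\alpha$ and $\underline\Psi=\frac{r^2}{\Omega^2}\nablav r^3\Omega\underline\psi$, and note that by \Cref{-2 implies RW}, $\underline\Psi$ satisfies the Regge--Wheeler equation \bref{RW}, hence by \Cref{RWradscri} it induces a smooth radiation field $\underline{\bm{\uppsi}}_{\mathscr{I}^+}$ on $\mathscr{I}^+$. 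The second relation in \bref{hier-} can be read as a transport equation for $r^3\Omega\underline\psi$ along the outgoing null direction $\nablav$, namely $\frac{\Omega^2}{r^2}\nablav(r^3\Omega\underline\psi)=\underline\Psi$. Integrating this in $v$ from some $v_0$ (chosen so the relevant sphere lies in $J^+(\Sigma^*)$) with the initial value read off from smooth compactly supported Cauchy data on $\Sigma^*$, and using the uniform boundedness of $\underline\Psi$ towards $\mathscr{I}^+$ together with the integrability of $\frac{\Omega^2}{r^2}$ in $v$ at fixed $u$, gives a Cauchy criterion showing $r^3\Omega\underline\psi$ (equivalently $r^3\underline\psi$, since $\Omega\to1$) converges as $v\to\infty$.

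More concretely, the key steps in order are: (i) write $r^3\Omega\underline\psi(u,v,\theta^A)=r^3\Omega\underline\psi(u,v_0,\theta^A)+\int_{v_0}^v \frac{\Omega^2}{r^2}\underline\Psi\,d\bar v$ and observe that $\int_{v_0}^\infty \frac{\Omega^2}{r^2}\,d\bar v<\infty$ at fixed $u$ (since $r\sim \bar v$ for large $\bar v$), so the integral converges; (ii) bound $\underline\Psi$ in $L^\infty$ on the relevant outgoing cone uniformly in $v$ using the $r^p$-hierarchy estimates of \Cref{RWrp} applied to $\underline\Psi$ (note that compactly supported data on $\Sigma^*$ yield finite $\mathbb{F}_{\Sigma^*}[\underline\Psi]$ and finite weighted fluxes), exactly as in \Cref{RWradscri}; (iii) upgrade pointwise convergence to convergence in $L^\infty(S^2)$ and then to smoothness by commuting with the angular Killing fields $\slashed{\mathcal{L}}_{\Omega_i}$ and with $\slashednabla_T$, using the Sobolev embedding $W^{3,1}(S^2)\hookrightarrow L^\infty(S^2)$ and higher-order versions of the estimates in \Cref{subsection 7.1 integrated boundedness and decay estimates for -2} (Propositions \ref{-2psiILED}, \ref{-2alphaILED} and their higher-order counterparts); (iv) uniformity of the limit follows from the fact that the tail $\int_v^\infty \frac{\Omega^2}{r^2}|\underline\Psi|\,d\bar v$ is controlled uniformly in $u$ on any $u$-interval $[u_0,u_1]$, because the flux $F^T_{\mathscr{C}_u}[\underline\Psi]$ and the $p=2$ weighted flux are bounded uniformly in $u$ for compactly supported data. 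Alternatively, as in \Cref{psi+2scrialternative}, one may avoid the full $r^p$-hierarchy and instead use the representation $r^3\Omega\underline\psi=\frac{r^2}{\Omega^2}\big|_u\int_{u_0}^u\frac{\Omega^2}{r^2}\underline\Psi\,d\bar u$ obtained by integrating the other transport relation, together with a Gr\"onwall argument, to get a characterisation of the limit that is local in $u$; I would mention both routes.

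I expect the main obstacle to be a bookkeeping issue rather than a conceptual one: the weights in the $-2$ case run the "wrong way" compared to the $+2$ case (the natural regular quantity near $\mathscr{I}^+$ for $-2$ is $r\underline\alpha$, with only a single power of $r$, whereas $\underline\psi$ carries $r^3$ and $\underline\Psi$ carries $r^5$-type behaviour), so care is needed to confirm that the transport integral $\int^\infty \frac{\Omega^2}{r^2}\underline\Psi\,d\bar v$ genuinely converges and that $r^3\underline\psi$ does not grow. This is where the precise decay of $\underline\Psi$ towards $\mathscr{I}^+$ furnished by \Cref{RWrp} (in particular the $p=2$ estimate giving $\int_{\mathscr{C}_u} r^2|\nablav\underline\Psi|^2<\infty$ and hence, via Hardy, decay of $\int_{\mathscr{C}_u}|\nablav\underline\Psi|^2$) does the essential work, together with the observation from \Cref{subsection 7.1 integrated boundedness and decay estimates for -2} that the right-hand sides of the transport estimates are finite precisely for compactly supported data on $\Sigma^*$. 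Once convergence of $r^3\underline\psi$ itself is secured, differentiability in $u$ and in the angular directions is immediate by the inductive/commutation argument, so the remaining content is routine.
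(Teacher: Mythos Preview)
Your main argument is correct and matches the paper's proof: integrate the transport relation $\nablav(r^3\Omega\underline\psi)=\frac{\Omega^2}{r^2}\underline\Psi$ in $v$, control the tail using the $r^p$-estimates on $\underline\Psi$ (the paper does this via Cauchy--Schwarz and Hardy rather than an $L^\infty$ bound on $\underline\Psi$, but either works), and commute with $\slashed{\mathcal{L}}_{\Omega_i},\slashednabla_T$ for smoothness and uniformity. Note the small slip: you wrote the transport equation as $\frac{\Omega^2}{r^2}\nablav(r^3\Omega\underline\psi)=\underline\Psi$, though you then use the correct integral form.

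One genuine error: your ``alternative route'' via a $u$-integral does not exist in the $-2$ case. The hierarchy \bref{hier-} is built entirely from $\nablav$-derivatives, so there is no relation of the form $r^3\Omega\underline\psi=\frac{r^2}{\Omega^2}\big|_u\int_{u_0}^u\frac{\Omega^2}{r^2}\underline\Psi\,d\bar u$; that structure is specific to the $+2$ hierarchy \bref{hier+}, where $\nablau$ appears. Drop that paragraph.
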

\begin{proof}
We can integrate the definition of $\underline\Psi$ from (\ref{hier-}) from $r=R$ towards $\mathscr{I}^+$:
\begin{align}\label{above136 psi}
    r^3\Omega\underline\psi|_{u,v}=r^3\Omega\underline\psi|_{u,v(u,R)}+\int_{v(u,R)}^v \frac{\Omega^2}{r^2} \underline\Psi.
\end{align}
Note that Cauchy--Schwarz and Hardy's inequality applied to the integral term give
\begin{align}
    \left[\int_{S^2}d\omega\int_{v(u,R)}^vd\bar{v} \left|\frac{\Omega^2}{r^2} \underline\Psi\right|\right]^2\leq \frac{1}{R} \int_{\mathscr{C}_u\cap\{r>R\}}d\bar{v}d\omega\;\frac{\Omega^2}{r^2} \left|\underline\Psi\right|^2\lesssim\frac{1}{R}\int_{\mathscr{C}_u\cap\{r>R\}}d\bar{v}d\omega\; |\nablav\underline\Psi|^2,
\end{align}
which is finite for data of compact support. We can repeat this estimate for $r\slashed{\nabla}\underline\Psi$ conclude with a Sobolev embedding on the sphere that the integral on the right hand side of (\ref{above136 psi}) is bounded. The dominated convergence theorem gives the result. \Cref{RWrp} tells us that the convergence is uniform in $u$. Finally, we can repeat the argument having commuted with $\mathcal{L}_T, \mathcal{L}_{\Omega^i}$ to show that the limit is smooth.
\end{proof}
Similarly, \cref{eq:d3d3psibar} gives us
\begin{proposition}\label{-2 radiation at scri}
Let $\underline\alpha$ be a solution to (\ref{T-2}) arising from smooth compactly supported data on $\Sigma^*$ and let $\underline\psi$ be as in (\ref{hier-}). Then $r\underline\alpha$ has a uniform smooth limit $\underline{\upalpha}_{\mathscr{I}^+}$ towards $\mathscr{I}^+$.
\end{proposition}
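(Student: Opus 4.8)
\textbf{Proof proposal for Proposition \ref{-2 radiation at scri}.}

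The plan is to mimic the proof of Proposition \ref{-2psiscri}, this time integrating the transport equation that recovers $\underline\alpha$ from $\underline\psi$ along an outgoing null cone from a fixed large radius $r=R$ out to $\mathscr{I}^+$. Recall the definition of $\underline\psi$ in \bref{hier-}, which we may write schematically as $\nablav(r\Omega^2\underline\alpha)=\frac{\Omega^2}{r^2}\cdot r^3\Omega\underline\psi$, i.e.\ as a linear transport equation for $r\Omega^2\underline\alpha$ with source $r\Omega^2 \cdot r^{-2}\Omega\cdot r^3\underline\psi$ in the $\nablav$ direction. First I would integrate this from the point $(u,v(u,R),\theta^A)$ to a point $(u,v,\theta^A)$ with $v>v(u,R)$, obtaining
\begin{align}
    r\Omega^2\underline\alpha\big|_{u,v}=r\Omega^2\underline\alpha\big|_{u,v(u,R)}+\int_{v(u,R)}^v d\bar{v}\;\frac{\Omega^2}{r^2}\,r^3\Omega\underline\psi.
\end{align}
Since $r^3\Omega\underline\psi$ has a finite uniform (in $u$) limit towards $\mathscr{I}^+$ by Proposition \ref{-2psiscri}, and $\frac{\Omega^2}{r^2}$ is integrable in $\bar v$ for large $r$ (it behaves like $r^{-2}$ while $dr/d\bar v\sim \Omega^2$, so $\int \frac{\Omega^2}{r^2}d\bar v=\int \frac{dr}{r^2}<\infty$), the integral term converges by the dominated convergence theorem. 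Dividing by $\Omega^2\to1$ and multiplying by appropriate powers of $r$, one concludes that $r\underline\alpha$ has a finite pointwise limit $\underline\upalpha_{\mathscr{I}^+}(u,\theta^A)$; note here that the $r=R$ boundary term $r\Omega^2\underline\alpha|_{u,v(u,R)}$ is simply the value of a smooth function on the timelike hypersurface $\{r=R\}$ and contributes no growth in $r$.

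Next I would upgrade this pointwise statement to smoothness and uniformity in $u$. Uniformity in $u$ follows from the $r^p$-estimates of \Cref{RWrp} together with the transport estimates of \Cref{-2psiILED}--and the higher order propositions collected in \Cref{subsection 7.1 integrated boundedness and decay estimates for -2}--which bound the relevant fluxes of $r^3\Omega\underline\psi$ on outgoing cones uniformly in $u$; one estimates the integral remainder by Cauchy--Schwarz and a Hardy inequality exactly as in the displayed estimate in the proof of \Cref{-2psiscri}, picking up a factor $R^{-1}$ that is uniform in $u$. For smoothness, I would commute the transport equation with the angular Killing fields $\slashed{\mathcal{L}}_{\Omega_i}$ and with $\slashednabla_T$, repeat the argument on each commuted quantity using the higher-order transport estimates already listed, and invoke a Sobolev embedding on $S^2$ to pass from the integrated bounds to pointwise bounds; this shows $\lim_{v\to\infty} \slashednabla_T^i(r\slashednabla)^\gamma (r\underline\alpha)=\partial_u^i\mathring{\slashednabla}{}^\gamma\underline\upalpha_{\mathscr{I}^+}$ for all $i,\gamma$, hence $\underline\upalpha_{\mathscr{I}^+}\in\Gamma(\mathscr{I}^+)$.

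The only mild subtlety--and the point I expect to require the most care--is bookkeeping the weights: one must check that the combination appearing in \bref{hier-}, when rewritten in terms of $r^3\Omega^{-1}\underline\psi$ and $r\Omega^{-2}\underline\alpha$ (the quantities that actually have finite limits, as in the computations around \bref{Psi H+}--\bref{nablav Psi H+} and in \Cref{-2psiscri}), produces a transport equation whose source term is genuinely integrable in $\bar v$ towards $\mathscr{I}^+$ and whose lower-order coefficient does not obstruct a Gr\"onwall-type closing of the estimate. This is entirely analogous to \Cref{psi+2scrialternative} in the $+2$ case, where the coefficient $\frac{3\Omega^2-1}{r}$ is handled by Gr\"onwall's inequality at the cost of a controlled power of $r(u,v)/r(u_0,v)$; the same device applies here, and in fact the $-2$ weights are more favourable since $\underline\alpha$ is the component that decays slowest towards $\mathscr{I}^+$ and the hierarchy \bref{hier-} differentiates in the "good" $\nablav$ direction. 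Once the weighted transport equation is pinned down, the rest is a routine repetition of the $\Psi$-argument, and no new analytic input beyond \Cref{subsection 7.1 integrated boundedness and decay estimates for -2} and \Cref{RWrp} is needed.
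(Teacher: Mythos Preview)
Your proposal is correct and follows essentially the same approach as the paper: integrate the transport relation $\nablav(r\Omega^2\underline\alpha)=\frac{\Omega^2}{r^2}\,r^3\Omega\underline\psi$ from $r=R$ outward and control the integral via Cauchy--Schwarz and Hardy (the paper applies Hardy to pass directly to the $\underline\Psi$-flux $\int\frac{\Omega^2}{r^2}|\underline\Psi|^2$, then quotes the proof of \Cref{-2psiscri}). Your final paragraph is unnecessary worry: the transport equation for $r\Omega^2\underline\alpha$ carries no zeroth-order term, so no Gr\"onwall argument is needed here.
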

\begin{proof}
We can again integrate the definition of $\underline\psi$ from (\ref{hier-}) from $r=R$ towards $\mathscr{I}^+$:
\begin{align}\label{above136 alpha}
    r\Omega^2\underline\alpha|_{u,v}=r\Omega^2\underline\alpha|_{u,v(u,R)}+\int_{v(u,R)}^vd\bar{v} \frac{\Omega^2}{r^2}r^3\Omega\underline\psi.
\end{align}
Hardy's inequality gives us
\begin{align}
    \int_{\mathscr{C}_u\cap\{r>R\}}d\bar{v}d\omega\;\frac{\Omega^2}{r^2}\left|r^3\Omega\underline\psi\right|^2\lesssim \int_{\mathscr{C}_u\cap\{r>R\}}d\bar{v}d\omega\;|\nablav r^3\Omega\underline\psi|^2=\int_{\mathscr{C}_u\cap\{r>R\}} d\bar{v}d\omega\;\frac{\Omega^2}{r^2}|\underline\Psi|^2.
\end{align}
We can conclude using the above and repeating the proof of \Cref{-2psiscri}.
\end{proof}
\begin{remark}
    In particular, $\slashednabla_T r\underline\alpha$ attains a limit towards $\mathscr{I}^+$ which is smooth and $\lim_{v\longrightarrow\infty} \slashednabla_T r\underline\alpha=\partial_u \underline\upalpha_{\mathscr{I}^+}$.
\end{remark}
\begin{remark}
Instead of resorting to commutation with $\mathcal{L}_T, \mathcal{L}_{\Omega^i}$ directly, one could employ the hierarchy of (\ref{eq:d3psibar}) and (\ref{eq:d3d3psibar}) to estimate the derivatives of $\underline\psi$ and $\underline\alpha$ one by one with a smaller loss of derivatives, see \cite{DHR16}.
\end{remark}
\begin{defin}\label{-2 definition radiation at scri}
    For a solution $\underline\alpha$ of (\ref{T-2}) arising from smooth data of compact support on $\Sigma^*$ according to \Cref{WP-2Sigma*} or on $\Sigma, \overline{\Sigma}$ as in \Cref{WP-2Sigmabar}, the radiation field of $\underline\alpha$ along $\mathscr{I}^+$ is defined by $\underline\upalpha_{\mathscr{I}^+}(u,\theta^A)=\lim_{v\longrightarrow \infty} r\underline\alpha(u,v,\theta^A)$.
\end{defin}
\begin{proposition}\label{-2 psi ptwisedecay}
Let $\underline\alpha$ be a solution to (\ref{T-2}) arising from smooth compactly supported data on $\Sigma^*$ and let $\underline\psi$ be as in (\ref{hier-}). Then $\psi|_{r=R}$ decays as $t\longrightarrow\infty$.
\end{proposition}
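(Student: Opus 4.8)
The statement to prove is \Cref{-2 psi ptwisedecay}: for $\underline\alpha$ a solution to \bref{T-2} arising from smooth, compactly supported data on $\Sigma^*$, with $\underline\psi$ defined by \bref{hier-}, the quantity $\underline\psi|_{r=R}$ decays as $t\longrightarrow\infty$. The plan is to mimic exactly the argument used for the corresponding $+2$ statement, \Cref{psi+2ptwisedecay}, and for $\Psi$ itself in \Cref{RWdecayfixedR}, namely to promote an integrated-in-time flux bound near a fixed hypersurface $\{r=R\}$ to a pointwise decay statement by commuting with $\slashednabla_T$ and with the angular Killing fields $\slashed{\mathcal L}_{\Omega_i}$ and then applying Sobolev embedding on the spheres $S^2_{u,v}$.

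First I would recall from \Cref{-2psiILED} that the flux of $r^6\Omega^{-2}|\underline\psi|^2$ on ingoing cones $\underline{\mathscr C}_v$ restricted to a compact $r$-range $\{r\in[2M,R]\}$ is controlled by $\mathbb F_{\Sigma^*}[\underline\Psi]$ plus a finite data term, uniformly in $v$; moreover the spacetime bulk term $\int \Omega^2 r^4\Omega^{-2}|\underline\psi|^2$ over $\mathscr D^{u,v}_{\Sigma^*}$ is finite. Commuting \bref{T-2} (equivalently the transport relation \bref{hier-} together with \bref{RW} for $\underline\Psi$) with $\slashednabla_T$ — which is legitimate since $T$ is Killing and the $\Sigma^*$-data for $\slashednabla_T\underline\Psi$ are again smooth and compactly supported — and using the redshift estimate \Cref{RWredshift} together with \Cref{-2psiILED} applied to $\slashednabla_T\underline\psi$, one obtains
\begin{align}
\int_{v_0}^\infty d\bar v\left[\int_{\underline{\mathscr C}_{\bar v}\cap\{r\le R\}} \Omega^2|\underline\psi|^2\,d\omega\,du + \int_{\underline{\mathscr C}_{\bar v}\cap\{r\le R\}} \Omega^2|\slashednabla_T\underline\psi|^2\,d\omega\,du\right]<\infty.
\end{align}
The finiteness of this $v$-integral of a nonnegative quantity forces, along a sequence $v_n\to\infty$, the flux $\int_{\underline{\mathscr C}_{v_n}\cap\{r\le R\}}\Omega^2(|\underline\psi|^2+|\slashednabla_T\underline\psi|^2)$ to tend to zero; monotonicity/continuity arguments of the type used in \Cref{RWdecayfixedR} (integrating the energy identity and using that the flux through the relevant cone is itself monotone) then upgrade this to a genuine limit as $v\to\infty$. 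This yields $\|\Omega^{-1}\underline\psi\|_{L^2(S^2_{u,v})}\to 0$ as $v\to\infty$ uniformly for $u$ in the compact range carved out by $\{r=R\}$, as in the statement of \Cref{-2 radiation ptwise decay H}.

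Next, to pass from an $L^2(S^2)$ statement to a pointwise one, I would commute further with the angular momentum operators $\slashed{\mathcal L}_{\Omega_i}$ for $|\gamma|\le 2$; each commuted quantity again satisfies \bref{T-2} with smooth compactly supported data, so the previous step applies verbatim and gives $\|\slashed{\mathcal L}_{\Omega_i}^\gamma\,\Omega^{-1}\underline\psi\|_{L^2(S^2_{u,v})}\to 0$ as $v\to\infty$ for all $|\gamma|\le 2$. The Sobolev embedding $H^2(S^2)\hookrightarrow L^\infty(S^2)$ (equivalently $W^{3,1}(S^2)\hookrightarrow L^\infty(S^2)$, as used in \bref{first order}) then gives $\sup_{S^2_{u,v}}|\underline\psi|\to 0$ as $v\to\infty$ along $\{r=R\}$, i.e. decay in $t$ at fixed $r=R$. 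Since $\{r=R\}$ and $\{t=\text{const}\}$ are transverse and $v\to\infty$ along $\{r=R\}$ is equivalent to $t\to\infty$, this is precisely the claimed statement.

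\textbf{Main obstacle.} None of the ingredients is genuinely hard — the flux bounds of \Cref{-2psiILED}, the redshift estimate \Cref{RWredshift}, and the commutation with $T$ and $\Omega_i$ are all in hand. The one point requiring a little care is the step that converts ``the time-integral of the near-$\{r=R\}$ flux is finite'' into ``the flux tends to zero'': for this one needs the flux through $\underline{\mathscr C}_v\cap\{r\le R\}$ to be (essentially) monotone in $v$, or at least to be controlled by a monotone quantity, so that integrability forces decay of the pointwise-in-$v$ value rather than merely decay along a subsequence. This is exactly the mechanism in \Cref{RWdecayfixedR} (where one uses the energy identity plus the redshift-modified multiplier to get monotonicity in a neighbourhood of $\mathscr H^+$, i.e. of $\{r=R\}$ for $R$ close to $2M$), and for general fixed $R$ one combines the integrated local energy decay estimate \Cref{RWILED} commuted with $\slashednabla_T$ with the redshift estimate to control the flux on $\underline{\mathscr C}_v\cap\{r\le R\}$ uniformly; I would therefore state the proof as ``identical to \Cref{RWdecayfixedR} and \Cref{psi+2ptwisedecay}, using \Cref{-2psiILED} in place of \Cref{psiILED}.''
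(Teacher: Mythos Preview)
Your approach is essentially the same as the paper's: use the bulk/flux estimate of \Cref{-2psiILED}, commute with $\slashednabla_T$ to turn integrability in $v$ into genuine decay of the flux on $\underline{\mathscr C}_v\cap\{r\le R\}$, and then commute once more to localise. There is, however, a small slip in your final step. From
\[
\lim_{v\to\infty}\int_{\underline{\mathscr C}_v\cap\{2M<r<R\}} du\,d\omega\,|\Omega^{-1}\underline\psi|^2=0
\]
you still have an integral in $u$ along the cone; commuting only with the angular Killing fields $\slashed{\mathcal L}_{\Omega_i}$ and applying Sobolev on $S^2$ does not on its own give decay at the specific sphere $S^2_{u,v}$ with $v-u=R^*$. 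What is needed is a commutation that controls the $u$-variation along $\underline{\mathscr C}_v$, and the paper uses precisely $\Omega^{-1}\nablagml$ for this (exactly as in \Cref{psi+2ptwisedecay} and \Cref{RWdecayfixedR}): once the flux of $\Omega^{-1}\nablagml(\Omega^{-1}\underline\psi)$ on $\underline{\mathscr C}_v\cap\{r<R\}$ also decays, a one-dimensional fundamental-theorem-of-calculus argument in $u$ yields $\|\Omega^{-1}\underline\psi\|_{L^2(S^2_R)}\to0$. Your angular commutation is then the extra step (as in \Cref{horizonpsidecay}) that upgrades $L^2(S^2)$ decay to pointwise decay, but it does not replace the $\Omega^{-1}\nablagml$ step. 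With that correction your proof matches the paper's.
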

\begin{proof}
The estimate of \Cref{-2psiILED} applied to $r<R$ for some fixed $R<\infty$, commuted with $T$ gives
\begin{align}
    \lim_{v\longrightarrow \infty} \int_{\underline{\mathscr{C}}_v\cap\{2M<r<R\}}dud\omega\; \left|\Omega^{-1}\psi\right|=0.
\end{align}
Commuting with $\Omega^{-1}\nablagml$ gives the result.
\end{proof}
\begin{corollary}\label{-2 alpha ptwisedecay}
Let $\underline\alpha$ be a solution to (\ref{T-2}) arising from smooth compactly supported data on $\Sigma^*$ and let $\underline\Psi$ be as in (\ref{hier-}). Then $\alpha|_{r=R}$ decays as $t\longrightarrow\infty$.
\end{corollary}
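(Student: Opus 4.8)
\textbf{Proof proposal for Corollary~\ref{-2 alpha ptwisedecay}.}

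The plan is to mirror the argument used for $\underline\psi$ in \Cref{-2 psi ptwisedecay}, pushing one level further down the hierarchy \bref{hier-} from $\underline\psi$ to $\underline\alpha$. Fix $R<\infty$. The starting point is the higher-order integrated decay estimate for $\Omega^{-2}\underline\alpha$ restricted to the region $\{2M<r<R\}$, namely the last displayed proposition of \Cref{subsection 7.1 integrated boundedness and decay estimates for -2} (the one controlling $|\Omega^{-2}\underline\alpha|^2$, $|r\fancyd_2\Omega^{-2}\underline\alpha|^2$ and $|\Omega^{-1}\nablagml \Omega^{-2}\underline\alpha|^2$ in the spacetime bulk in terms of $\mathbb{F}_{\Sigma^*}[\underline\Psi]$ and weighted data norms on $\Sigma^*$). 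First I would commute this estimate with $\slashednabla_T$: since $T$ is Killing and commutes with \bref{T-2} (and hence with \bref{RW} for $\underline\Psi$), the right-hand side remains finite for smooth compactly supported data, so on the fixed-$r$ strip $\{2M<r<R\}$ we obtain
\begin{align}
    \int_{v_0}^\infty d\bar v\left[\underline F_{\underline{\mathscr{C}}_{\bar v}\cap\{r<R\}}[\Omega^{-2}\underline\alpha]+\underline F_{\underline{\mathscr{C}}_{\bar v}\cap\{r<R\}}[\slashednabla_T\Omega^{-2}\underline\alpha]\right]<\infty,
\end{align}
where the flux denotes the appropriate $\Omega^2$-weighted $L^2$ norm of $\Omega^{-2}\underline\alpha$ and its tangential derivatives on the ingoing cone segment. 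Integrability of a nonnegative quantity along $(v_0,\infty)$ forces
\begin{align}
    \lim_{v\longrightarrow\infty}\int_{\underline{\mathscr{C}}_v\cap\{2M<r<R\}}du\,d\omega\;\left|\Omega^{-2}\underline\alpha\right|^2=0.
\end{align}

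Next I would upgrade this $L^2$-on-cone decay to pointwise decay on the slice $\{r=R\}$ by a Sobolev embedding on $S^2$ together with a one-dimensional fundamental-theorem-of-calculus argument along the cone (as in the proofs of \Cref{RWdecayfixedR}, \Cref{-2 psi ptwisedecay}): commute the above with the angular Killing fields $\slashed{\mathcal{L}}_{\Omega_i}^\gamma$ for $|\gamma|\le 2$ so that $W^{2,2}(S^2)\hookrightarrow L^\infty(S^2)$ applies, and commute once with $\Omega^{-1}\nablagml$ to control the transverse behaviour, using the estimate in the same proposition that also bounds $\Omega^{-1}\nablagml\Omega^{-2}\underline\alpha$. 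This yields $\sup_{S^2_{u,v}}|\Omega^{-2}\underline\alpha|\longrightarrow 0$ as $v\to\infty$ along $\{r=R\}$; since $\Omega^2$ is a fixed nonvanishing constant on $\{r=R\}$ this is equivalent to $\underline\alpha|_{r=R}\longrightarrow 0$, i.e. decay as $t\longrightarrow\infty$ along that hypersurface. (Alternatively, and perhaps more cleanly, once $\underline\psi|_{r=R}\to 0$ is known from \Cref{-2 psi ptwisedecay}, one can integrate the transport relation \bref{hier-} written as $\nablav (r\Omega^2\underline\alpha)=\tfrac{\Omega^2}{r^2}r^3\Omega\underline\psi$ from $\Sigma^*$ and combine with the decay of $\underline\psi$ on all fixed-$r$ surfaces and a Gr\"onwall argument, but the direct integrated-decay route above is more self-contained.)

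The main obstacle is bookkeeping rather than conceptual: one must verify that the weighted spacetime norms appearing on the right-hand sides of the $\underline\alpha$ integrated-decay estimates remain finite after commuting with $\slashednabla_T$ and $\slashed{\mathcal{L}}_{\Omega_i}$, which requires that the data on $\Sigma^*$ be smooth and compactly supported (so all the $r$-weighted data integrals are finite) and that the commuted quantities still satisfy equations of the same type — this is exactly the content inherited from \cite{DHR16} and recorded in the propositions of \Cref{subsection 7.1 integrated boundedness and decay estimates for -2}. A secondary subtlety is that, unlike on $\mathscr{I}^+$ or $\mathscr{H}^+$ where weighted versions of $\underline\alpha$ are the regular quantities, on a fixed-$r$ hypersurface the unweighted $\underline\alpha$, $\Omega^{-2}\underline\alpha$ and $\Omega^2\underline\alpha$ differ only by bounded nonvanishing factors, so no genuine weight issue arises and the conclusion stated in the corollary follows immediately from the decay of any one of them.
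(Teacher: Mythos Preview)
Your proposal is correct and follows essentially the same approach as the paper: the corollary is stated without proof because it is the direct analogue of \Cref{-2 psi ptwisedecay} one level further down the hierarchy \bref{hier-}, and your write-up makes explicit precisely this route (integrated decay for $\Omega^{-2}\underline\alpha$ on $\{2M<r<R\}$ from the propositions of \Cref{subsection 7.1 integrated boundedness and decay estimates for -2}, commuted with $\slashednabla_T$, $\Omega^{-1}\nablagml$, and $\slashed{\mathcal{L}}_{\Omega_i}$). One small wording point: the sentence ``integrability of a nonnegative quantity forces the limit to vanish'' is not literally true on its own; the conclusion uses that both the flux and its $T$-commuted version are integrable, which is exactly what your displayed estimate provides.
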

\begin{proposition}\label{-2 psi radiation decay}
Let $\underline\alpha$ be a solution to (\ref{T-2}) arising from smooth compactly supported data on $\Sigma^*$ and let $\underline\psi$ be as in (\ref{hier-}). Then $\underline{\psi}_{\mathscr{I}^+}:=\lim_{v\longrightarrow\infty}r^3\underline\psi$ decays towards the future end of $\mathscr{I}^+$.
\end{proposition}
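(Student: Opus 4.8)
The plan is to exploit the same mechanism used for $\underline\alpha$ and $\underline{\bm\uppsi}$ in Propositions \ref{-2 radiation at scri} and \ref{RWdecayscri}: namely, that decay towards the future end of $\mathscr{I}^+$ can be read off from an $r^p$-weighted estimate together with the fact that the relevant flux on outgoing cones $\mathscr{C}_u$ decays as $u\longrightarrow\infty$. First I would set up the transport relation $(\ref{hier-})$ in the form $r^3\Omega\underline\psi=\tfrac{r^2}{\Omega^2}\nablav r\Omega^2\underline\alpha$ and use the already-established facts (\Cref{-2psiscri}) that $r^3\underline\psi$ has a smooth limit $\underline\psi_{\mathscr{I}^+}$ at $\mathscr{I}^+$, and that $\underline\Psi$ solves the Regge--Wheeler equation $(\ref{RW})$ via \Cref{-2 implies RW}. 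The point is that the limit $\underline\psi_{\mathscr{I}^+}$ is controlled pointwise on $S^2$ by a Sobolev embedding on the sphere from the flux quantities $\int_{\mathscr{C}_u\cap\{r>R\}}d\bar v\,d\omega\;|\nablav\underline\Psi|^2$ and their angular-commuted versions, exactly as in the proof of \Cref{-2psiscri}; these are precisely of the type controlled by the $r^p$-hierarchy of \Cref{RWrp} applied to $\underline\Psi$ (and $\slashednabla_T\underline\Psi$).

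The key steps, in order, would be: (i) from \Cref{RWrp} applied to $\underline\Psi$ and to $\slashednabla_T\underline\Psi$ on $\mathscr{C}_u\cap\{r>R\}$, deduce that $\int_{\mathscr{C}_u\cap\{r>R\}}|\nablav\underline\Psi|^2$ decays as $u\longrightarrow\infty$ while $\int_{\mathscr{C}_u\cap\{r>R\}}r|\nablav\underline\Psi|^2$ stays bounded — this is the $-2$ analogue of the ingredients in \Cref{RWdecayscri}; (ii) integrate $(\ref{hier-})$ in $v$ from a hypersurface $r=R$ out to $\mathscr{I}^+$ (as in \bref{above136 psi}) to write $\underline\psi_{\mathscr{I}^+}$ as $r^3\Omega\underline\psi|_{r=R}$ plus $\int_{v(u,R)}^\infty \tfrac{\Omega^2}{r^2}\underline\Psi\,d\bar v$, then apply Cauchy--Schwarz with a Hardy inequality to bound the second term by $R^{-1/2}\big(\int_{\mathscr{C}_u\cap\{r>R\}}|\nablav\underline\Psi|^2\big)^{1/2}$, commuting with $\slashed{\mathcal{L}}_{\Omega_i}$ to handle the $L^\infty(S^2)$ Sobolev embedding; (iii) bound the boundary term $r^3\Omega\underline\psi|_{r=R}$ using \Cref{-2 psi ptwisedecay}, which already gives $\underline\psi|_{r=R}\longrightarrow 0$ as $t\longrightarrow\infty$; (iv) combine to conclude $|\underline\psi_{\mathscr{I}^+}(u,\theta^A)|\longrightarrow 0$ as $u\longrightarrow\infty$.

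The main obstacle I anticipate is step (i): one must be careful that the $r^p$-estimate of \Cref{RWrp}, which as stated controls fluxes with a fixed lower bound $u_0$ and produces decay of the \emph{left-hand} $\mathscr{C}_u$-flux only after commuting with $\slashednabla_T$ and re-running the Morawetz/$r^p$ machinery, genuinely applies to $\underline\Psi$ with the correct weights; the weighting in $(\ref{hier-})$ means the natural object on $\mathscr{C}_u$ is $r^3\Omega\underline\psi$ rather than $\underline\Psi$ itself, so I would want to double-check that the $p=1$ and $p=2$ estimates chain together to give both boundedness of $\int r^2|\nablav\underline\Psi|^2$ and decay of $\int r|\nablav\underline\Psi|^2$, mirroring the two inequalities used in the proof of \Cref{RWdecayscri}. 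Everything else is a routine repetition of the arguments already deployed for $\underline\alpha$ (\Cref{-2 alpha ptwisedecay}) and for $\Psi$ on $\mathscr{I}^+$ (\Cref{RWdecayscri}), so no new analytic input is required; the proof reduces to assembling these pieces and invoking the dominated convergence / Sobolev embedding steps exactly as in \Cref{-2psiscri}.
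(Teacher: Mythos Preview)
Your proposal is correct and follows essentially the same route as the paper: integrate \bref{hier-} from $r=R$ out to $\mathscr{I}^+$, control the resulting integral by $R^{-1/2}\big(\int_{\mathscr{C}_u\cap\{r>R\}}|\nablav\underline\Psi|^2\big)^{1/2}$ via Cauchy--Schwarz and Hardy, and handle the boundary term at $r=R$ with \Cref{-2 psi ptwisedecay}. The paper's proof is terser---it simply writes the difference estimate and says the right-hand side ``decays as $u\longrightarrow\infty$ by energy conservation,'' which is really shorthand for the flux decay already established in the proof of \Cref{RWfcp} (ILED plus the $p=1$ $r^p$-estimate commuted with $\slashednabla_T$); your step (i) just spells this out. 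Your worry about step (i) is unfounded, and the boundedness of $\int r|\nablav\underline\Psi|^2$ you mention is not actually needed here---only the decay of the unweighted flux enters.
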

\begin{proof}
This follows from integrating (\ref{hier-}) between $r=R$ and $\mathscr{I}^+$:
\begin{align}
    \int_{S^2_R}\left|\frac{1}{r^2}r^3\Omega\underline{\psi}|_{(u,v)}-\underline{{\psi}}_{\mathscr{I}^+}\big|_{u}\right|_{S^2}^2\lesssim \frac{1}{R}\int_{\mathscr{C}_u\cap\{r>R\}} |\nablav\underline\Psi|^2.
\end{align}
This decays as $u\longrightarrow\infty$ by energy conservation. \Cref{-2 psi ptwisedecay} gives the result.
\end{proof}
\begin{corollary}\label{-2 alpha radiation decay}
Let $\underline\alpha$ be a solution to (\ref{T-2}) arising from smooth compactly supported data on $\Sigma^*$ and let $\underline\psi$ be as in (\ref{hier-}). Then the radiation field $\upalpha_{\mathscr{I}^+}$ of \Cref{-2 definition radiation at scri} decays towards $\mathscr{I}^+_+$
\end{corollary}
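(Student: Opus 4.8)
\textbf{Proof plan for \Cref{-2 alpha radiation decay}.}

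The plan is to deduce the decay of $\upalpha_{\mathscr{I}^+}$ towards $\mathscr{I}^+_+$ from the decay of $\underline{\psi}_{\mathscr{I}^+}$ established in \Cref{-2 psi radiation decay}, exactly mirroring the passage from \Cref{-2 psi radiation decay} to the present corollary that was used one level down in the hierarchy (and, further back, the passage from \Cref{RWdecayscri} to \Cref{RWdecayfixedR} for the Regge--Wheeler equation). First I would take the limit of the second equation in \bref{hier-} towards $\mathscr{I}^+$: integrating $r^2/\Omega^2\,\nablav (r\Omega^2\underline\alpha)=r^3\Omega\underline\psi$ along an outgoing cone $\mathscr{C}_u$ from a fixed large sphere $r=R$ out to null infinity gives, as in \bref{above136 alpha},
\begin{align}
    r\Omega^2\underline\alpha\big|_{u,v}=r\Omega^2\underline\alpha\big|_{u,v(u,R)}+\int_{v(u,R)}^v d\bar v\;\frac{\Omega^2}{r^2}\,r^3\Omega\underline\psi,
\end{align}
so that in the limit $v\longrightarrow\infty$ one controls $\underline\upalpha_{\mathscr{I}^+}$ on $S^2_{u,\infty}$ by the value of $r\Omega^2\underline\alpha$ on the fixed-$r$ sphere $S^2_{u,v(u,R)}$ plus the flux integral of $\underline\psi$ along $\mathscr{C}_u\cap\{r>R\}$.

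The second step is to estimate both contributions on the right as $u\longrightarrow\infty$. For the flux term, I would apply Cauchy--Schwarz together with a Hardy inequality exactly as in the proof of \Cref{-2 radiation at scri}, bounding $\int_{\mathscr{C}_u\cap\{r>R\}}\tfrac{\Omega^2}{r^2}|r^3\Omega\underline\psi|^2$ by $\int_{\mathscr{C}_u\cap\{r>R\}}\tfrac{\Omega^2}{r^2}|\underline\Psi|^2\le \int_{\mathscr{C}_u\cap\{r>R\}}|\nablav\underline\Psi|^2$, which is precisely $F^T_{u}[\underline\Psi]$ restricted to large $r$; this decays as $u\longrightarrow\infty$ by the $T$-energy conservation of $\underline\Psi$ (which solves \bref{RW} by \Cref{-2 implies RW}) combined with the integrated local energy decay of \Cref{RWILED} commuted with $\slashednabla_T$ and the $p=1$ $r^p$-estimate of \Cref{RWrp}, just as was used for $\bm{\uppsi}_{\mathscr{I}^+}$ in \Cref{RWdecayscri}. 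For the fixed-$r$ boundary term, the decay of $r\Omega^2\underline\alpha$ along $\{r=R\}$ as $t\longrightarrow\infty$ is exactly \Cref{-2 alpha ptwisedecay}. Adding a Sobolev embedding on $S^2$ after commuting with the angular Killing fields $\slashed{\mathcal{L}}_{\Omega_i}^\gamma$, $|\gamma|\le 2$, to upgrade the $L^2(S^2)$ statement to a pointwise one, one concludes $\upalpha_{\mathscr{I}^+}\longrightarrow 0$ along $\mathscr{I}^+$.

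I do not expect a genuine obstacle here: every ingredient is already in hand, and the argument is the $-2$ analogue of \Cref{RWdecayscri} with one extra integration in the $\underline\psi$-hierarchy. The only point requiring a little care is ensuring the flux quantities on $\mathscr{C}_u\cap\{r>R\}$ are controlled \emph{uniformly down to} $i^0$ and decay in $u$ rather than merely being finite; this is handled by invoking the commuted ILED estimate of \Cref{RWILED} and the $r^p$-hierarchy of \Cref{RWrp} for $\underline\Psi$ (valid since $\underline\Psi$ satisfies \bref{RW}), together with the already-established pointwise decay \Cref{-2 alpha ptwisedecay} on $\{r=R\}$ to handle the near-horizon/finite-$r$ boundary contribution — exactly the same bookkeeping as in the corresponding Regge--Wheeler and $+2$ arguments.
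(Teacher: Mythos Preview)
Your proposal is correct and follows essentially the same approach that the paper leaves implicit: the corollary is the $\underline\alpha$-analogue of \Cref{-2 psi radiation decay}, obtained by repeating that argument one level down in the hierarchy \bref{hier-}, comparing $\underline\upalpha_{\mathscr{I}^+}$ with $r\Omega^2\underline\alpha$ on a fixed $r=R$ hypersurface via \bref{above136 alpha}, bounding the flux term by the $\underline\Psi$-energy exactly as in the proof of \Cref{-2 radiation at scri}, and invoking \Cref{-2 alpha ptwisedecay} for the boundary contribution. Your identification of the ingredients (ILED and the $r^p$-hierarchy for $\underline\Psi$) needed for the decay of the flux on $\mathscr{C}_u\cap\{r>R\}$ is more explicit than the paper's shorthand ``by energy conservation'' in the preceding proposition, but the content is the same.
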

\subsubsection{Radiation flux on $\mathscr{I}^+$}\label{-2 radiation flux on scri+}
We want to find the limit towards $\mathscr{I}^+$ of
\begin{align}\label{eq:116}
\Omega\slashed{\nabla}_3\underline\Psi=-(3\Omega^2-1)\frac{r^2}{\Omega^2}\Omega\slashed{\nabla}_4r\Omega^2\underline\alpha+6Mr\Omega^2\underline\alpha-2r^2\slashed{\mathcal{D}}^*_2\slashed{\mathcal{D}}_2\frac{r^2}{\Omega^2}\Omega\slashed{\nabla}_4r\Omega^2\underline\alpha.
\end{align}
As $\underline\psi$ is related to the transverse derivative of $\underline\alpha$ near $\mathscr{I}^+$, we want to express $\frac{r^2}{\Omega^2}\nablav r\Omega^2\underline\alpha$ in terms of quantities that can be constructed intrinsically on $\mathscr{I}^+$ from data. We do this by integrating the Teukolsky equation: recall \cref{eq:d3d3psibar}
\begin{align}\label{this22}
\frac{\Omega^2}{r^2}\Omega\slashed{\nabla}_3\frac{r^2}{\Omega^2}\Omega\slashed{\nabla}_3\underline\Psi=6M\left[\Omega\slashed{\nabla}_4+\Omega\slashed{\nabla}_3\right]r\Omega^2\underline\alpha+\mathcal{A}_2(\mathcal{A}_2-2)r\Omega^2\underline\alpha.
\end{align}
The results of the previous section give us the asymptotics:
\begin{align}
\frac{\Omega^2}{r^2}\nablau\frac{r^2}{\Omega^2}\nablau\underline\Psi=(\nablau)^2\underline\Psi-\left(\frac{3\Omega^2-1}{r}\right)\nablau \underline\Psi\longrightarrow (\partial_u)^2\underline\Psi \text{\;\;towards\;} \mathscr{I}^+.
\end{align}
The right hand side gives:
\begin{align}
6M\partial_u \underline\upalpha_{\mathscr{I}^+}+\mathcal{A}_2\left(\mathcal{A}_2-2\right) \underline\upalpha_{\mathscr{I}^+}.
\end{align}
whereas the left hand side becomes $\partial_u^2\underline{\bm{\uppsi}}_{\mathscr{I}^+}$. \bref{this22} then becomes at $\mathscr{I}^+$ 
\begin{align}
\partial_u^2\underline\Psi=6M\partial_u \underline\upalpha_{\mathscr{I}^+}+\mathcal{A}_2\left(\mathcal{A}_2-2\right) \underline\upalpha_{\mathscr{I}^+}.
\end{align}
We can integrate along $\mathscr{I}^+$:
\begin{align}\label{202}
\partial_u \underline\Psi|_u=\partial_u\underline\Psi|_{u_0}-6M\underline\upalpha_{\mathscr{I}^+}|_{u_0}+6M\underline\upalpha_{\mathscr{I}^+}|_{u}+\mathcal{A}_2\left(\mathcal{A}_2-2\right)\int_{u_0}^u \underline\upalpha_{\mathscr{I}^+} d\bar{u}.
\end{align}
The fact that $\lim_{u\longrightarrow\infty} \partial_u \underline{\bm{\uppsi}}_{\mathscr{I}^+}=0=\lim_{u\longrightarrow\infty}  \underline\upalpha_{\mathscr{I}^+}$ tells us that 
\begin{align}
\mathcal{A}_2\left(\mathcal{A}_2-2\right)\int_{u_0}^\infty r\underline\alpha=-\partial_u\underline{\bm{\uppsi}}_{\mathscr{I}^+}|_{u_0}+6M\underline\upalpha_{\mathscr{I}^+}|_{u_0}.
\end{align}
For data of compact support on $\Sigma$, we can take $u_0$ such that the right hand side vanishes. Knowing that $\mathcal{A}_2, \mathcal{A}_2-2$ are uniformly elliptic, we must have
\begin{align}\label{-2 mean is 0}
\int_{u_0}^\infty \underline\upalpha_{\mathscr{I}^+}=0.
\end{align}   
We can integrate (\ref{202}) once more to find a useful expression for $\underline{\bm{\uppsi}}_{\mathscr{I}^+}$ that can be computed from data on $\mathscr{I}^+$:
\begin{align}\label{-2 Psi out of alpha on scri+}
\underline{\bm{\uppsi}}_{\mathscr{I}^+}(u,\theta^A)=6M\int_{u_0}^u d\bar{u}\underline\upalpha_{\mathscr{I}^+}+\mathcal{A}_2\left(\mathcal{A}_2-2\right)\int_{u_0}^u d\bar{u}(u-\bar{u})\underline\upalpha_{\mathscr{I}^+}.
\end{align}
Again, seeing that $\underline\Psi|_{\mathscr{I}^+}$ decays towards $\mathscr{I}^+_+$ we have:
\begin{align}
\int_{u_0}^\infty\int_{u_1}^\infty du_1du_2\underline\upalpha_{\mathscr{I}^+}=\int_{u_0}^\infty d\bar{u}(u-\bar{u})r\underline\alpha=0.
\end{align}
%+(u-u_0)\left[\nablau\underline\Psi_{u_0}+3\mathcal{C}r\underline\alpha_{u_0}\right]
%
%and noting ($122$) suggests we must have $(M>0)$:
%\begin{align}
%\int_{u_0}^\infty d\bar{u} r\underline\alpha =-\frac{1}{6M}\underline\Psi|_{u_0}
%\end{align}
We can rewrite $\underline{\bm{\uppsi}}_{\mathscr{I}^+}$ and $\partial_u\underline{\bm{\uppsi}}_{\mathscr{I}^+}$:
\begin{align}\label{formula for -2 RW in backwards direction}
\underline{\bm{\uppsi}}_{\mathscr{I}^+}=-6M\int_u^{\infty}d\bar{u} \underline\upalpha_{\mathscr{I}^+}-\mathcal{A}_2\left(\mathcal{A}_2-2\right)\int_u^{\infty}d\bar{u}(u-\bar{u})\underline\upalpha_{\mathscr{I}^+}.
\end{align}
\begin{align}\label{formula for partialu -2 RW in backwards direction}
\partial_u\underline{\bm{\uppsi}}_{\mathscr{I}^+}=-\mathcal{A}_2\left(\mathcal{A}_2-2\right)\int_u^{\infty} d\bar{u} \underline\upalpha_{\mathscr{I}^+}+6M\underline\upalpha_{\mathscr{I}^+}|_u.
\end{align}
Using \bref{-2 mean is 0}, we can recover \bref{-2 tricky norm at scri}
\begin{align}
    \|\partial_u\underline{\bm{\uppsi}}_{\mathscr{I}^+}\|_{L^2(\mathscr{I}^-)}^2=\int_{\mathscr{I}^+} d{u}\sin\theta d\theta d\phi\left[ 6M|\underline{\alpha}_{\mathscr{I}^+}|^2+\left|\mathcal{A}_2(\mathcal{A}_2-2)\int_{\bar{u}}^\infty d\bar{u}\; \underline{\alpha}_{\mathscr{I}^+}\right| ^2\right].
\end{align}
\begin{remark}The fact that $\int_{-\infty}^\infty du_1\; \underline{\bm{\uppsi}}_{\mathscr{I}^+}=\int_{-\infty}^\infty\int_{u_1}^\infty du_1 du_2 \;\underline{\bm{\uppsi}}_{\mathscr{I}^+}=0$ implies
\begin{align}
    \int_{-\infty}^\infty\int_{u_1}^\infty\int_{u_2}^\infty du_1du_2du_3\;\underline\upalpha_{\mathscr{I}^+}=\int_{u_0}^\infty\int_{u_1}^\infty\int_{u_2}^\infty\int_{u_3}^\infty du_1du_2du_3du_4\;\underline\upalpha_{\mathscr{I}^+}=0.
\end{align}
\end{remark}
\section{Constructing the scattering maps for $\alpha, \underline\alpha$}\label{section 8 constructing the scattering maps}
We gather the results of Sections \ref{section 6} and \ref{section 7} to finally construct the scattering theory for the Teukolsky equations \bref{T+2}, \bref{T-2}. \Cref{subsection 8.1 future scattering +2} is devoted to the +2 Teukolsky equation \bref{T+2}, where \Cref{subsubsection 8.1.1 forwards scattering +2} handles forwards scattering and \Cref{subsubsection 8.1.2 backwards scattering +2} handles backwards scattering. \Cref{subsection 8.2 future scattering -2} is devoted to the -2 Teukolsky equation \bref{T-2}, where \Cref{subsubsection 8.2.1 forwards scattering -2} handles forwards scattering and \Cref{subsubsection 8.2.2 backwards scattering -2} handles backwards scattering. Taking into account \Cref{time inversion}, results concerning scattering towards the past are immediate and they are collected in \Cref{subsection 8.3 past scattering +2-2}.
\subsection{Future scattering for $\alpha$}\label{subsection 8.1 future scattering +2}
Forwards scattering for the +2 Teukolsky equation \bref{T+2} is worked out entirely analogously to the case of the Regge--Wheeler equation \bref{RW}, using the results of Section \ref{+2 radiation}.\\
\indent For backwards scattering, we make use of the transport equations \bref{hier+} and the backwards scattering theory of \Cref{subsection 5.2 subsection Radiation fields} for the Regge--Wheeler equation \bref{RW}, instead of directly appealing to a limiting argument that repeats the proof of \Cref{RWbackwardsexistence}. Throughout this process, the uniform $T$-energy estimates of $\Psi$ are vital in controlling the backwards evolution of $\alpha$, but we note here that it is possible to derive uniform, nondegenerate energy estimates for $\alpha$ near $\mathscr{H}^+$ in contrast with the case of $\Psi$. In this sense, $\alpha$ is "red-shifted" in the backwards direction.
\subsubsection{Forwards scattering for $\alpha$}\label{subsubsection 8.1.1 forwards scattering +2}
We put together the ingredients worked out in \Cref{+2 radiation} to construct the forwards scattering map.
\begin{proof}[Proof of \Cref{+2 future forward scattering}]
Let $\alpha$ be the solution to \cref{T+2} on $ J^+(\Sigma^*)$ arising out of a compactly supported data set $(\upalpha,\upalpha')$ on $\Sigma^*$ as in \Cref{WP+2Sigma*}. The radiation field $\upalpha_{\mathscr{H}^+}$ exists as in \Cref{+2 radiation alpha definition H}. \Cref{alpha+2ptwisedecay} applied for $R=2M$ says that $\upalpha_{\mathscr{H}^+}\longrightarrow 0$ towards $\mathscr{H}^+_+$. Let $\Psi$ be the solution to \cref{RW} associated to $\alpha$ via (\ref{hier+}). The fact that $(\Psi|_{\Sigma^*},\slashednabla_{T}\Psi|_{\Sigma^*})$ are compactly supported means that the results of \Cref{+2 radiation flux on H+} apply. In particular, we find that
\begin{align}
     \left|\int_{v}^\infty d\bar{v}\; e^{\frac{1}{2M}(v-\bar{v})}\upalpha_{\mathscr{H}^+}(\bar{v},\theta^A)\right|\leq \frac{1}{2M} |\upalpha_{\mathscr{H}^+}(v,\theta^A)|,
\end{align}
and since $\|\uppsi_{\mathscr{H}^+}\|_{\mathcal{E}^T_{\mathscr{H}^+_{\geq0}}}<\infty$, this shows that $\|\upalpha_{\mathscr{H}^+}\|_{\mathcal{E}^{T,+2}_{\mathscr{H}^+_{\geq0}}}<\infty$ and $\upalpha_{\mathscr{H}^+}\in \mathcal{E}^{T,+2}_{\mathscr{H}^+_{\geq0}}$. Similarly by \Cref{alpha+2scri}, $r^5\alpha$ has a pointwise limit as $v\longrightarrow \infty$ which induces a smooth $\upalpha_{\mathscr{I}^+}$ on $\mathscr{I}^+$. \Cref{T+1+2scridecay} implies that $\upalpha_{\mathscr{I}^+}$ decays towards $\mathscr{I}^+_+$. As $\Psi_{\mathscr{I}^+}\in \mathcal{E}^T_{\mathscr{I}^+}$, we have that $\upalpha_{\mathscr{I}^+}\in\mathcal{E}^{T,+2}_{\mathscr{I}^+}$.
\end{proof}
\begin{corollary}\label{+2 future forward scattering Sigma Sigmabar}
    Solutions to (\ref{T+2}) arising from data on ${\Sigma}$ of compact support give rise to smooth radiation fields in $\mathcal{E}_{\mathscr{I}^+}^{T,+2}$ and $\mathcal{E}_{{\mathscr{H}^+}}^{T,+2}$. Solutions to (\ref{T+2}) arising from data on $\overline{\Sigma}$ of compact support give rise to smooth radiation fields in $\mathcal{E}_{\mathscr{I}^+}^{T,+2}$ and $\mathcal{E}_{\overline{\mathscr{H}^+}}^{T,+2}$
\end{corollary}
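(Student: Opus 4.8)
\textbf{Proof proposal for \Cref{+2 future forward scattering Sigma Sigmabar}.} The plan is to bootstrap from the $\Sigma^*$ case already established in \Cref{+2 future forward scattering} via purely local evolution, exactly mirroring how \Cref{RWfcpSigma} was deduced from \Cref{RWfcp}. First I would observe that the region between $\overline{\Sigma}$ (or $\Sigma$) and $\Sigma^*$ is a compact spacetime slab, bounded to the future by $\Sigma^*$, to the past by $\overline{\Sigma}$, and with a portion of $\overline{\mathscr{H}^+}$ on the left. The well-posedness statements \Cref{WP+2Sigmabar} (and \Cref{WP+2Sigma*} for $\Sigma$) guarantee that data of compact support on $\Sigma$ or $\overline{\Sigma}$ evolve to a unique smooth solution $\alpha$ on this slab, and in particular induce smooth data $(\upalpha_{\Sigma^*}, \slashednabla_{n_{\Sigma^*}}\upalpha_{\Sigma^*})$ on $\Sigma^*$. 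Since the original data are compactly supported and the slab is compact with finite-speed propagation, these induced data on $\Sigma^*$ are again smooth and of compact support, so \Cref{+2 future forward scattering} applies directly and yields smooth radiation fields $\upalpha_{\mathscr{I}^+}\in\mathcal{E}^{T,+2}_{\mathscr{I}^+}$ and $\upalpha_{\mathscr{H}^+}\in\mathcal{E}^{T,+2}_{\mathscr{H}^+_{\geq0}}$.

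The remaining point is to upgrade the horizon statement from $\mathscr{H}^+_{\geq0}$ to $\mathscr{H}^+$ (for $\Sigma$) and to $\overline{\mathscr{H}^+}$ (for $\overline{\Sigma}$). For this I would run the same argument but now observing that the restriction of $2M\Omega^2\alpha$ to the part of $\mathscr{H}^+$ lying in the past of $\Sigma^*$ is controlled locally: in the compact slab $J^+(\Sigma)\cap J^-(\Sigma^*)$, $r$ is bounded above and below and all relevant weights are comparable to constants, so smoothness of $\alpha$ together with the transport relation \bref{D4Chihat}-type structure (equivalently, direct smoothness from \Cref{WP+2Sigmabar}) gives that $\upalpha_{\mathscr{H}^+}$ extends smoothly down to $\Sigma^*\cap\mathscr{H}^+$, and in the $\overline{\Sigma}$ case down to $\mathcal{B}$ after inserting the weight $V^{-2}$, as dictated by \Cref{regular} and the consistency requirement in \Cref{WP+2Sigma*}. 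The norm $\|\upalpha_{\mathscr{H}^+}\|_{\mathcal{E}^{T,+2}_{\mathscr{H}^+}}$ (resp. $\mathcal{E}^{T,+2}_{\overline{\mathscr{H}^+}}$) differs from $\|\upalpha_{\mathscr{H}^+}\|_{\mathcal{E}^{T,+2}_{\mathscr{H}^+_{\geq0}}}$ only by the contribution over the finite segment $\mathscr{H}^+\cap J^-(\Sigma^*)$ (and the boundary terms at $\Sigma^*\cap\mathscr{H}^+$ versus at $\mathcal{B}$), which is finite by the smoothness and compact support just established; hence the radiation field lies in the larger space. Likewise a $T$-energy estimate on the slab—as in the proof of \Cref{RWfcpSigma}—relates $\|(\upalpha,\upalpha')\|^2_{\mathcal{E}^{T,+2}_{\Sigma}}$ (resp. $\mathcal{E}^{T,+2}_{\overline{\Sigma}}$) to $\|(\upalpha_{\Sigma^*},\slashednabla_{n_{\Sigma^*}}\upalpha_{\Sigma^*})\|^2_{\mathcal{E}^{T,+2}_{\Sigma^*}}$ plus the flux through the horizon segment, so the conservation identity transfers.

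The main obstacle I anticipate is purely bookkeeping at the bifurcation sphere in the $\overline{\Sigma}$ case: one must be careful that the weighted quantity $V^{-2}\upalpha_{\mathscr{H}^+}$ is genuinely regular at $\mathcal{B}$, which requires knowing that the initial data $(\upalpha,\upalpha')$ on $\overline{\Sigma}$ are consistent with the regular-frame formulation of \Cref{WP+2Sigma*} (i.e. that $V^{-2}\Omega^2\alpha$, not $\Omega^2\alpha$, is the smooth object near $\mathcal{B}$), and that the Regge--Wheeler hierarchy \bref{hier+} relating $\Psi$ to $\widetilde\alpha$ respects this regularity — but this is already recorded in \Cref{regular} and in the discussion following \Cref{RWwpSigmabar}. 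Everything else is a routine finite-region energy estimate and finite-speed-of-propagation argument, with no new analysis beyond what was used for \Cref{RWfcpSigma} and \Cref{+2 future forward scattering}.
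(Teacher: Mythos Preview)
Your proposal is correct and follows essentially the same route as the paper: reduce to the $\Sigma^*$ case of \Cref{+2 future forward scattering} by handling the slab $J^+(\overline{\Sigma})\cap J^-(\Sigma^*)$ locally via the well-posedness statements (\Cref{WP+2Sigmabar} and \Cref{backwards wellposedness +2}), exactly as \Cref{RWfcpSigma} was obtained from \Cref{RWfcp}. Your discussion of the $V^{-2}$-weighted regularity at $\mathcal{B}$ is more detailed than the paper's one-line proof, but it addresses precisely the bookkeeping the paper leaves implicit.
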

\begin{proof}
Identical to the proof of \Cref{RWfcpSigma} using \Cref{WP+2Sigmabar,,backwards wellposedness +2}.
\end{proof}
The proof of \Cref{+2 future forward scattering} above and \Cref{+2 future forward scattering Sigma Sigmabar} allow us to define the forwards maps ${}^{(+2)}\mathscr{F}^+$ from dense subspaces of $\mathcal{E}^{T+2}_{\Sigma^*}$, $\mathcal{E}^{T,+2}_{\Sigma}$, $\mathcal{E}^{T,+2}_{\overline{\Sigma}}$.
\begin{defin}
    Let $(\upalpha,\upalpha')$ be a smooth data set of compact support to the +2 Teukolsky equation \bref{T+2} on $\Sigma^*$ as in \Cref{WP+2Sigma*}. Define the map ${}^{(+2)}\mathscr{F}^+$ by 
    \begin{align}
        {}^{(+2)}\mathscr{F}^+:\Gamma_c(\Sigma^*)\times\Gamma_c(\Sigma^*)\longrightarrow \Gamma(\mathscr{H}^+_{\geq0})\times\Gamma(\mathscr{I}^+), (\upalpha,\upalpha')\longrightarrow (\upalpha_{\mathscr{H}^+},\upalpha_{\mathscr{I}^+}),
    \end{align}
    where $(\upalpha_{\mathscr{H}^+},\upalpha_{\mathscr{I}^+})$ are as in the proof of \Cref{+2 future forward scattering}.\\
    \indent Using \Cref{+2 future forward scattering Sigma Sigmabar}, the map ${}^{(+2)}\mathscr{F}^+$ is defined analogously for data on $\Sigma, \overline{\Sigma}$:
    \begin{align}
       {}^{(+2)}\mathscr{F}^+:\Gamma_c(\Sigma)\times\Gamma_c(\Sigma)\longrightarrow \Gamma(\mathscr{H}^+)\times\Gamma(\mathscr{I}^+), (\upalpha,\upalpha')\longrightarrow (\upalpha_{\mathscr{H}^+},\upalpha_{\mathscr{I}^+}),\\
       {}^{(+2)}\mathscr{F}^+:\Gamma_c(\overline{\Sigma})\times\Gamma_c(\overline{\Sigma})\longrightarrow \Gamma(\overline{\mathscr{H}^+})\times\Gamma(\mathscr{I}^+), (\upalpha,\upalpha')\longrightarrow (\upalpha_{\mathscr{H}^+},\upalpha_{\mathscr{I}^+}).
    \end{align}
\end{defin}
\subsubsection{Backwards scattering for $\alpha$}\label{subsubsection 8.1.2 backwards scattering +2}
Now we construct the inverse ${}^{(+2)}\mathscr{B}^-$ of \Cref{+2 future backward scattering} on a dense subspace of $\mathcal{E}^{T,+2}_{\mathscr{H}^+_{\geq0}}\oplus\mathcal{E}^{T,+2}_{\mathscr{I}^+}$. The existence of a solution to the +2 Teukolsky equation \bref{T+2} out of compactly supported scattering data on $\mathscr{H}^+_{\geq0}, \mathscr{I}^+$ is shown in \Cref{+2 backwards existence}. Showing that this solution defines an element of $\mathcal{E}^{T,+2}_{\Sigma^*}$ is done in \Cref{+2 backwards inclusion 7/2}. %The proof of \Cref{scatteringthm+2} is concluded in \Cref{concluding corollary +2}. 
\begin{proposition}\label{+2 backwards existence}
For $\upalpha_{\mathscr{H}^+}\in\Gamma_c(\mathscr{H}^+_{\geq0})\cap\mathcal{E}^{T,+2}_{\mathscr{H}^+_{\geq0}}$ supported on $\mathscr{H}^+_{\geq0}\cap\{v<v_+\}$ for $v_+<\infty$, ${\alpha}_{\mathscr{I}^+}\in\Gamma_c(\mathscr{I}^+)\cap\mathcal{E}^{T,+2}_{\mathscr{I}^+}$ supported on on $\mathscr{I}^+\cap\{u<u_+\}$ for $u_+<\infty$, there exists a unique solution $\alpha$ to \bref{T+2} in $J^+(\Sigma^*)$ that realises $\upalpha_{\mathscr{H}^+}$ and $\upalpha_{\mathscr{I}^+}$ as its radiation fields on $\mathscr{H}^+_{\geq0}, \mathscr{I}^+$.
\end{proposition}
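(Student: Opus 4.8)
\textbf{Proof proposal for Proposition \ref{+2 backwards existence}.}

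The plan is to reduce the backwards construction for the $+2$ Teukolsky equation to the already-established backwards scattering theory for the Regge--Wheeler equation (\Cref{RWbackwardsexistence}, or more precisely its non-compact version \Cref{RW backwards noncompact}), and then integrate the transport hierarchy \bref{hier+} inwards. First I would convert the prescribed Teukolsky radiation fields into Regge--Wheeler radiation fields: on $\mathscr{H}^+$, formulae \bref{eq:199}, \bref{eq:200} of \Cref{flux+2horizon} express $\bm{\uppsi}_{\mathscr{H}^+}$ and $\partial_v\bm{\uppsi}_{\mathscr{H}^+}$ explicitly in terms of $\upalpha_{\mathscr{H}^+}$ (equivalently in terms of $F_{\mathscr{H}^+}=\int_v^\infty e^{(v-\bar v)/2M}\upalpha_{\mathscr{H}^+}\,d\bar v$), and on $\mathscr{I}^+$, equation \bref{Psi out of alpha at scri} gives $\bm{\uppsi}_{\mathscr{I}^+}=\partial_u^2\upalpha_{\mathscr{I}^+}$. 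Since $\upalpha_{\mathscr{H}^+},\upalpha_{\mathscr{I}^+}$ are smooth and compactly supported, the resulting $(\bm{\uppsi}_{\mathscr{H}^+},\bm{\uppsi}_{\mathscr{I}^+})$ are smooth; $\bm{\uppsi}_{\mathscr{I}^+}=\partial_u^2\upalpha_{\mathscr{I}^+}$ is compactly supported, while $\bm{\uppsi}_{\mathscr{H}^+}$ as given by \bref{eq:199} need not be compactly supported towards the future (the exponential kernel spreads the support), but it decays and has finite $\mathcal{E}^T_{\mathscr{H}^+_{\geq0}}$ flux because $\|\upalpha_{\mathscr{H}^+}\|_{\mathcal{E}^{T,+2}_{\mathscr{H}^+_{\geq0}}}<\infty$ and the $S^2$-derivatives of $\upalpha_{\mathscr{H}^+}$ are controlled. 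Hence the hypotheses \bref{tthis_hypothesis_1}, \bref{thist_hypothesis_2} of \Cref{RW backwards noncompact} hold, and we obtain a unique smooth solution $\Psi$ to \bref{RW} on $J^+(\Sigma^*)$ realising these radiation fields.

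Next I would recover $\alpha$ from $\Psi$ by reading \bref{hier+} as a pair of transport equations integrated \emph{towards} the event horizon and towards decreasing $v$ at fixed $u$. Concretely, define $r^3\Omega\psi$ and then $r\Omega^2\alpha$ by integrating
\[
\frac{r^2}{\Omega^2}\nablau (r^3\Omega\psi)=\Psi,\qquad \frac{r^2}{\Omega^2}\nablau (r\Omega^2\alpha)=r^3\Omega\psi
\]
inward from the appropriate asymptotic data. One must specify the "constant of integration" for each transport equation; these are fixed precisely by demanding that $r^5\Omega^{-1}\psi$ and $r^5\Omega^{-2}\alpha$ attain the prescribed limits at $\mathscr{I}^+$ (using the backwards versions of \Cref{Phi 1 backwards,,Phi 2 backwards}, which determine $\bm{\upphi}^{(1)}_{\mathscr{I}^+},\bm{\upphi}^{(2)}_{\mathscr{I}^+}$ from $\bm{\uppsi}_{\mathscr{I}^+}$), and at $\mathscr{H}^+$ by demanding $2M\Omega^2\alpha|_{\mathscr{H}^+}=\upalpha_{\mathscr{H}^+}$ — these must be checked to be consistent, which they are since $\Psi$ was built to have exactly the matching radiation fields via \bref{eq:199}, \bref{Psi out of alpha at scri}. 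It then follows from the commutation identity \bref{commutator} (established in \Cref{Chandra}) that the $\alpha$ so constructed genuinely solves \bref{T+2}, since $\left(\frac{r^2}{\Omega^2}\nablau\right)^2 r\Omega^2\alpha=\Psi$ solves \bref{RW} and the map from Teukolsky to Regge--Wheeler solutions via this second derivative is injective on the relevant function spaces (the kernel would have to satisfy $\left(\frac{r^2}{\Omega^2}\nablau\right)^2 r\Omega^2\alpha=0$ together with the matching conditions, forcing $\alpha=0$). Uniqueness of $\alpha$ follows the same way: any solution of \bref{T+2} with the prescribed radiation fields yields, via \bref{hier+}, a solution of \bref{RW} with radiation fields determined by \bref{eq:199}, \bref{Psi out of alpha at scri}, hence equal to $\Psi$ by the uniqueness in \Cref{RW backwards noncompact}; then the transport equations \bref{hier+} integrated backwards with the boundary conditions on $\mathscr{H}^+$, $\mathscr{I}^+$ determine $\psi$ and $\alpha$ uniquely.

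The main obstacle is the integration of the transport equations \bref{hier+} \emph{backwards near the event horizon}, where the weight $\frac{r^2}{\Omega^2}$ degenerates: one must show that integrating $\nablau(r^3\Omega\psi)$ and $\nablau(r\Omega^2\alpha)$ inward produces fields that are smooth and bounded up to $\mathscr{H}^+$, rather than blowing up. This is exactly where the \emph{red-shift in the backwards direction} for $\alpha$ is used — the first-order term in \bref{+2 equation for radiation field} has a favourable sign when integrating towards $\mathscr{H}^+$ in the backwards direction, so a Grönwall argument (analogous to \bref{backwards estimate +2 Gronwall} but run towards the horizon, cf.\ \bref{RW exponential backwards near H+}) closes, giving uniform control of $r^3\Omega\psi$ and $r\Omega^2\alpha$ and all their derivatives in $L^2(S^2)$ along ingoing cones, hence pointwise bounds by Sobolev embedding on $S^2$. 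Establishing the decay of $\upalpha_{\mathscr{H}^+}$ and convergence $\psi\to$ the prescribed limit near $\mathscr{H}^+$, and near $\mathscr{I}^+$ the convergence of $r^5\Omega^{-1}\psi$, $r^5\Omega^{-2}\alpha$ to their prescribed limits, requires the backwards $r^p$-estimates of \Cref{T+2rp,,T+1rp} (used exactly as \Cref{backwards rp estimates} uses them for $\Psi$) together with the pointwise estimates \bref{ptwise infinity}, \bref{ptwise horizon} transferred through the hierarchy. Once all of this is in place, $\alpha$ is the desired unique smooth solution on $J^+(\Sigma^*)$.
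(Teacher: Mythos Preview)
Your overall strategy --- pass to Regge--Wheeler scattering data, solve \bref{RW} backwards, then integrate the transport hierarchy \bref{hier+} --- is the paper's. But there is a genuine gap in how you verify that the constructed $\alpha$ actually solves \bref{T+2}. The commutation identity \bref{commutator} goes only one way: if $\alpha$ solves \bref{T+2} then $\Psi=(\frac{r^2}{\Omega^2}\nablau)^2 r\Omega^2\alpha$ solves \bref{RW}. For the converse, what \bref{commutation relation} actually yields is only $\left(\tfrac{r^2}{\Omega^2}\nablau\right)^2\mathcal{T}^{+2}r\Omega^2\alpha=0$, a second-order transport ODE in $u$ for the quantity $\mathcal{T}^{+2}r\Omega^2\alpha$; to conclude it vanishes you must supply two initial conditions, the natural place being $\mathscr{H}^+$ ($u=\infty$). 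Your ``injectivity of the map $\alpha\mapsto\Psi$'' is a non-sequitur here: injectivity on the space of Teukolsky solutions does not tell you that an arbitrary $\alpha$ with $(\tfrac{r^2}{\Omega^2}\nablau)^2r\Omega^2\alpha$ solving \bref{RW} is itself a Teukolsky solution.

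What you are missing is the intermediate horizon datum $\psi_{\mathscr{H}^+}$, the horizon value of the regular quantity $r^3\Omega\psi$. The paper integrates the transport equations \emph{from} $\mathscr{H}^+$ (not towards it), setting $(r^3\Omega\psi,r\Omega^2\alpha)|_{\mathscr{H}^+}=((2M)^3\psi_{\mathscr{H}^+},\upalpha_{\mathscr{H}^+})$ with $\psi_{\mathscr{H}^+}=\frac{1}{(2M)^3}\int_v^\infty e^{(v-\bar v)/2M}(\mathcal{A}_2-3)\upalpha_{\mathscr{H}^+}\,d\bar v$; this choice, together with \bref{eq:199} for $\bm\uppsi_{\mathscr{H}^+}$, is engineered precisely so that $\mathcal{T}^{+2}r\Omega^2\alpha|_{\mathscr{H}^+}=\frac{r^2}{\Omega^2}\nablau\mathcal{T}^{+2}r\Omega^2\alpha|_{\mathscr{H}^+}=0$, after which the ODE forces $\mathcal{T}^{+2}r\Omega^2\alpha\equiv 0$. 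Two smaller corrections: the exponential kernel in \bref{eq:199} spreads the support of $\bm\uppsi_{\mathscr{H}^+}$ towards small $v$, not the future, so $\bm\uppsi_{\mathscr{H}^+}$ vanishes for $v>v_+$ and the compactly supported \Cref{RWbackwardsexistence} already applies; and \Cref{Phi 1 backwards,,Phi 2 backwards} concern the $\nablav$-derivatives $\Phi^{(k)}=(\tfrac{r^2}{\Omega^2}\nablav)^k\Psi$, not the $+2$ hierarchy quantities $r^5\Omega^{-1}\psi,\,r^5\Omega^{-2}\alpha$, so they cannot fix your ``constants of integration'' --- once the horizon initial values are set as above, convergence of $r^5\alpha$ to $\upalpha_{\mathscr{I}^+}$ at $\mathscr{I}^+$ is an output of a Gr\"onwall argument, not an input.
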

\begin{proof}
%Let $[u_-,u_+]$ be the support of $\upalpha_{\mathscr{I}^+}$ in $u$, $[v_-,v_+]$ the support of $\upalpha_{\mathscr{H}^+}$ in $v$. 
Define
\begin{align}
    {\psi}_{\mathscr{H}^+}&=\frac{1}{(2M)^3}\int_v^\infty d\bar{v}\; e^{\frac{1}{2M}(v-\bar{v})}(\mathcal{A}_2-3)\upalpha_{\mathscr{H}^+},\\
    \bm{\uppsi}_{\mathscr{H}^+}&=2M\int_v^\infty d\bar{v} \left[e^{\frac{1}{2M}(v-\bar{v})}-1\right]\left\{\mathcal{A}_2\left[\mathcal{A}_2-2\right]\upalpha_{\mathscr{H}^+}-6M\partial_v\upalpha_{\mathscr{H}^+}\right\},\\
    \psi_{\mathscr{I}^+}&=\partial_u\upalpha_{\mathscr{I}^+},\\
    \bm{\uppsi}_{\mathscr{I}^+}&=\partial_u^2\upalpha_{\mathscr{I}^+}.
\end{align}
With scattering data $\bm{\uppsi}_{\mathscr{I}^+}, \bm{\uppsi}_{\mathscr{H}^+}$, there is a unique solution $\Psi$ to \cref{RW} on $J^+(\Sigma^*)$. Define $\psi, \alpha$ by
\begin{align}
    r^3\Omega\psi=(2M)^3\psi_{\mathscr{H}^+}-\int_u^\infty\frac{\Omega^2}{r^2}\Psi d\bar{u},\qquad\qquad
    r\Omega^2\alpha=\upalpha_{\mathscr{H}^+}-\int_u^\infty r{\Omega^3}\psi d\bar{u},
\end{align}
then $\psi, \alpha$ satisfy the transport relations \bref{hier+}:
\begin{align}
    \Psi=\frac{r^2}{\Omega^2}\nablau r^3\Omega\psi=\left(\frac{r^2}{\Omega^2}\nablau\right)^2r\Omega^2\alpha.
\end{align}
(note that we are working with $(1,1)$-tensor fields throughout). The boundedness of $F_v^T[\Psi](u,\infty)$ implies that $\Omega^2\alpha\longrightarrow\upalpha_{\mathscr{H}^+}$, $\Omega\psi\longrightarrow{\psi}_{\mathscr{H}^+}$ as $u\longrightarrow\infty$. 
Since $\Psi$ satisfies \cref{RW}, the commutation relation \bref{commutation relation} implies
\begin{align}
    \left(\frac{r^2}{\Omega^2}\nablau\right)^2\mathcal{T}^{+2}r\Omega^2\alpha=0.
\end{align}
where $\mathcal{T}^{+2}$ is the $+2$ Teukolsky operator. We have:
\begin{align}
\begin{split}
    \mathcal{T}^{+2}r\Omega^2\alpha&=\frac{3\Omega^2-1}{r}r^3\Omega\psi+\nablav r^3\Omega\psi-\left(\mathcal{A}_2-\frac{6M}{r}\right)r\Omega^2\alpha\\
    \frac{r^2}{\Omega^2}\nablau\mathcal{T}^{+2}r\Omega^2\alpha&=-(\mathcal{A}_2-3\Omega^2+1)r^3\Omega\psi-\nablav\Psi+6Mr\Omega^2\alpha
\end{split}
\end{align}
On $\mathscr{H}^+$ this evaluates to
\begin{align}
    \mathcal{T}^{+2}r\Omega^2\alpha|_{\mathscr{H}^+}&=(2M)^3\left(\partial_v-\frac{1}{2M}\right)\psi_{\mathscr{H}^+}-(\mathcal{A}_2-3)\upalpha_{\mathscr{H}^+},\\
     \frac{r^2}{\Omega^2}\nablau\mathcal{T}^{+2}r\Omega^2\alpha|_{\mathscr{H}^+}&=-(2M)^3(\mathcal{A}_2+1)\psi_{\mathscr{H}^+}+6M\upalpha_{\mathscr{H}^+}-\partial_v\bm{\uppsi}_{\mathscr{H}^+}.
\end{align}
It is clear that with our construction of initial data, 
$\mathcal{T}^{+2}r\Omega^2\alpha|_{\mathscr{H}^+}= \frac{r^2}{\Omega^2}\nablau\mathcal{T}^{+2}r\Omega^2\alpha|_{\mathscr{H}^+}=0$, therefore $\alpha$ satisfies $\mathcal{T}^{+2}r\Omega^2\alpha=0$. Note that as $\Psi(u,v)$ vanishes for $u>u_+,v>v_+$, the same applies to $\alpha,\psi$. Let $\mathcal{R}>3M$, we can estimate $\psi(u,v)$ for $r(u_+,v)>\mathcal{R}$ by:
\begin{align}
    |r^5\Omega\psi|\leq\int_u^{u_+}\Omega^2|\Psi|+\int_u^{u_+}\frac{2}{r}|r^5\Omega\psi|
\end{align}
Gr\"onwall's inequality implies
\begin{align}
    |r^5\Omega\psi|\lesssim \left(\frac{r(u,v)}{r(u_+,v)}\right)^2\int_u^{u_+}|\Psi|.
\end{align}
As $\Psi$ converges uniformly to $\bm{\uppsi}_{\mathscr{I}^+}$, this implies that $\partial_u r^5\Omega\psi$ converges uniformly to $\partial_u\bm{\uppsi}_{\mathscr{H}^+}$, which in turn says that $r^5\Omega\psi$ converges to $\psi_{\mathscr{H}^+}$. An identical argument shows that $r^5\alpha$ converges to $\upalpha_{\mathscr{I}^+}$.
\end{proof}
In the following we explicitly show that $\alpha$ of \Cref{+2 backwards existence} defines a member of $\mathcal{E}^{T,+2}_{\mathscr{I}^+}$: 
%Let $\theta_u$ be a smooth cutoff function that is identically vanishing on $[u+1,\infty)$ and is equal to 1 on $(-\infty,u]$, and assume $\|\theta_u\cdot\upalpha_{\mathscr{I}^+}\|^2_{\mathcal{E}^{T,+2}_{\mathscr{I}^+}}$ is integrable in $u$ over $(-\infty,u_+]$.
\begin{corollary}\label{+2 backwards inclusion 7/2}
Let $\upalpha_{\mathscr{H}^+}, \upalpha_{\mathscr{I}^+}$ be as in \Cref{+2 backwards existence}. Let $\alpha$ be the solution to \cref{T+2} arising from $\upalpha_{\mathscr{H}^+}, \upalpha_{\mathscr{I}^+}$. Then $(\Omega^2\alpha|_{\Sigma^*},\slashednabla_{n_{\Sigma^*}}\Omega^2\alpha|_{\Sigma^*})\in\mathcal{E}^{T,+2}_{\Sigma^*}$ 
\end{corollary}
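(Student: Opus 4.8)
The goal is to show that the solution $\alpha$ to \bref{T+2} constructed in \Cref{+2 backwards existence} out of compactly supported scattering data $\upalpha_{\mathscr{H}^+}\in\Gamma_c(\mathscr{H}^+_{\geq0})$, $\upalpha_{\mathscr{I}^+}\in\Gamma_c(\mathscr{I}^+)$ has Cauchy data on $\Sigma^*$ lying in $\mathcal{E}^{T,+2}_{\Sigma^*}$; equivalently, recalling \Cref{+2 norm on Sigma}, that the associated $\Psi=\left(\frac{r^2}{\Omega^2}\nablau\right)^2 r\Omega^2\alpha$ satisfies $(\Psi|_{\Sigma^*},\slashednabla_{n_{\Sigma^*}}\Psi|_{\Sigma^*})\in\mathcal{E}^T_{\Sigma^*}$. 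The plan is to read off this last inclusion directly from the backwards scattering theory for the Regge--Wheeler equation already in hand. Indeed, in the proof of \Cref{+2 backwards existence} the scattering data $\bm{\uppsi}_{\mathscr{H}^+}$, $\bm{\uppsi}_{\mathscr{I}^+}$ for $\Psi$ were manufactured from $\upalpha_{\mathscr{H}^+},\upalpha_{\mathscr{I}^+}$, and since the latter are smooth and compactly supported, the formulae \bref{eq:199}, \bref{Psi out of alpha at scri} show that $\bm{\uppsi}_{\mathscr{H}^+}$ is smooth with $\|\bm{\uppsi}_{\mathscr{H}^+}\|_{\mathcal{E}^T_{\mathscr{H}^+_{\geq0}}}<\infty$ (here one uses that $\int_v^\infty e^{\frac1{2M}(v-\bar v)}\upalpha_{\mathscr{H}^+}\,d\bar v$ and its $\partial_v$ decay as $v\to\infty$, as verified in \Cref{+2 radiation flux on H+}), and that $\bm{\uppsi}_{\mathscr{I}^+}=\partial_u^2\upalpha_{\mathscr{I}^+}$ is smooth and compactly supported on $\mathscr{I}^+$, hence in $\mathcal{E}^T_{\mathscr{I}^+}$.

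With $(\bm{\uppsi}_{\mathscr{H}^+},\bm{\uppsi}_{\mathscr{I}^+})$ of finite $\mathcal{E}^T_{\mathscr{H}^+_{\geq0}}\oplus\mathcal{E}^T_{\mathscr{I}^+}$ norm and of the decay/support type required, \Cref{RWbackwardsexistence} (or \Cref{RW backwards noncompact} if $\bm{\uppsi}_{\mathscr{H}^+}$ fails to be compactly supported, which it need not be) produces exactly the solution $\Psi$ appearing in the construction of $\alpha$, and asserts precisely that $(\Psi|_{\Sigma^*},\slashednabla_{n_{\Sigma^*}}\Psi|_{\Sigma^*})\in\mathcal{E}^T_{\Sigma^*}$. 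Uniqueness of the solution to \bref{RW} with prescribed scattering data identifies this $\Psi$ with the one built in \Cref{+2 backwards existence}. Then by the very definition of $\|\;\|_{\mathcal{E}^{T,+2}_{\Sigma^*}}$ in \Cref{+2 norm on Sigma}, namely $\|(\Omega^2\alpha|_{\Sigma^*},\slashednabla_{n_{\Sigma^*}}\Omega^2\alpha|_{\Sigma^*})\|^2_{\mathcal{E}^{T,+2}_{\Sigma^*}}=\|(\Psi|_{\Sigma^*},\slashednabla_{n_{\Sigma^*}}\Psi|_{\Sigma^*})\|^2_{\mathcal{E}^T_{\Sigma^*}}<\infty$, and by \Cref{RW enough to be in space} (finiteness of the norm on a smooth data set suffices for membership), we conclude $(\Omega^2\alpha|_{\Sigma^*},\slashednabla_{n_{\Sigma^*}}\Omega^2\alpha|_{\Sigma^*})\in\mathcal{E}^{T,+2}_{\Sigma^*}$.

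The one point requiring a little care — and the main (minor) obstacle — is the regularity of the Cauchy data of $\alpha$ on $\Sigma^*$ in the strong sense of \Cref{RW enough to be in space}: one must know that $\Omega^2\alpha|_{\Sigma^*}$ and $\slashednabla_{n_{\Sigma^*}}\Omega^2\alpha|_{\Sigma^*}$ are genuinely smooth (so that the abstract completion $\mathcal{E}^{T,+2}_{\Sigma^*}$ contains them once the norm is finite), rather than merely $L^2_{loc}$. This follows because in \Cref{+2 backwards existence} $\alpha,\psi$ are obtained from the smooth $\Psi$ (smooth by \Cref{RWbackwardsexistence}) via the transport equations \bref{hier+} integrated from $\mathscr{H}^+$, with smooth ``initial values'' $\upalpha_{\mathscr{H}^+},\psi_{\mathscr{H}^+}$ on the horizon; integrating smooth data along the regular transport operators $\frac{r^2}{\Omega^2}\nablau$ on $J^+(\Sigma^*)$ preserves smoothness up to and including $\Sigma^*$, and in particular $\Omega^2\alpha$ is smooth there. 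Hence the finiteness of the norm, established in the previous paragraph, upgrades to membership. Finally, once $\mathscr{B}^-$ for \bref{RW} is known to satisfy the unitarity identity \bref{unitarity of B- formula}, the chain of equalities $\|(\Omega^2\alpha|_{\Sigma^*},\slashednabla_{n_{\Sigma^*}}\Omega^2\alpha|_{\Sigma^*})\|^2_{\mathcal{E}^{T,+2}_{\Sigma^*}}=\|\bm{\uppsi}_{\mathscr{H}^+}\|^2_{\mathcal{E}^T_{\mathscr{H}^+_{\geq0}}}+\|\bm{\uppsi}_{\mathscr{I}^+}\|^2_{\mathcal{E}^T_{\mathscr{I}^+}}=\|\upalpha_{\mathscr{H}^+}\|^2_{\mathcal{E}^{T,+2}_{\mathscr{H}^+_{\geq0}}}+\|\upalpha_{\mathscr{I}^+}\|^2_{\mathcal{E}^{T,+2}_{\mathscr{I}^+}}$ also drops out — the last equality being exactly the identity \bref{+2 norm on H+ beyond B} (resp.\ \bref{Psi out of alpha at scri}) defining the $\mathscr{H}^+$ (resp.\ $\mathscr{I}^+$) Teukolsky norms in terms of the Regge--Wheeler radiation fluxes — so this corollary simultaneously records the unitarity needed for the extension of $^{(+2)}\mathscr{B}^-$ in \Cref{+2 future backward scattering}.
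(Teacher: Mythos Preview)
There is a genuine gap in your argument, and it sits precisely at the step where you invoke \Cref{RW enough to be in space} to pass from finiteness of the norm to membership in $\mathcal{E}^{T,+2}_{\Sigma^*}$. That remark is a statement about $\mathcal{E}^{T}_{\Sigma^*}$, whose norm is an $H^1$-type expression directly in $(\uppsi,\uppsi')$; standard cutoffs then work. But $\mathcal{E}^{T,+2}_{\Sigma^*}$ is, by \Cref{+2 norm on Sigma}, the completion of \emph{compactly supported $\alpha$-data} under a norm that measures the \emph{second-order weighted derivative} $\Psi$. Knowing that $(\Psi|_{\Sigma^*},\slashednabla_{n_{\Sigma^*}}\Psi|_{\Sigma^*})\in\mathcal{E}^T_{\Sigma^*}$ tells you $\Psi$ is approximable by compactly supported $\Psi$-data; it does \emph{not} tell you that $(\Omega^2\alpha,\slashednabla_{n_{\Sigma^*}}\Omega^2\alpha)$ is approximable by compactly supported $\alpha$-data in the $\mathcal{E}^{T,+2}$ norm, because compactly supported $\Psi$-data need not arise from compactly supported $\alpha$-data via \bref{hier+}. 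No analogue of \Cref{RW enough to be in space} for $\mathcal{E}^{T,+2}_{\Sigma^*}$ has been stated or proved at this point; establishing it is exactly the content of the corollary.

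The paper's proof confronts this head-on: it cuts off $\alpha$ at radius $R_n$ to form $\alpha_n=\xi_n\alpha$, computes
\[
\Psi_n=\left(\tfrac{r^2}{\Omega^2}\nablau\right)^2 r\Omega^2(\xi_n\alpha)=\xi_n\Psi + r^2(r^2\xi_n')'\, r\Omega^2\alpha - 2r^2\xi_n'\, r^3\Omega\psi,
\]
and must show the commutator terms vanish in $\mathcal{E}^T_{\Sigma^*}$. Since $r^2\xi_n'\sim r$ and $r^2(r^2\xi_n')'\sim r^2$ on the support of $\xi_n'$, this requires decay of $r^{7/2}\Omega\psi|_{\Sigma^*}$, $r^{7/2}\Omega^2\alpha|_{\Sigma^*}$, and $r^{3/2}\nablav r^3\Omega\psi|_{\Sigma^*}$, $r^{3/2}\nablav r^3\Omega^2\alpha|_{\Sigma^*}$ towards $i^0$. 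These are obtained by integrating the transport equations \bref{hier+} and \bref{+2 Gronwall ingredient} from $\mathscr{I}^+$ (where $\bm{\uppsi}_{\mathscr{I}^+}$ vanishes for $u<u_-$) combined with the decay of $F^T_v[\Psi](u,u')$ and boundedness of $F^T_{u'}[\Psi](v,\infty)$. This decay information about $\psi$ and $\alpha$ themselves is \emph{not} encoded in the finiteness of $\|(\Psi,\slashednabla_{n_{\Sigma^*}}\Psi)\|_{\mathcal{E}^T_{\Sigma^*}}$ alone, and it is the real work your argument skips.
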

\begin{proof}
Let $\xi$ be a smooth cutoff function over $\mathbb{R}$ with $\xi=1$ for $r\leq0$, $\xi=0$ for $r\geq1$ such that all derivatives of $\xi$ are uniformly bounded. Let $\{R_n\}_{n=1}^\infty$ with $R_{1}$ large and $R_{n+1}=2R_n$ and define $\xi_n (r)=\xi\left(\frac{r-R_{n}}{R_{n+1}-R_n}\right)$. We want to show that the sequence $\alpha_n=\xi_n\alpha$ is such that $(\Omega^2\alpha_n,\slashednabla_{n_{\Sigma^*}}\Omega^2\alpha_n)$ converges to $(\Omega^2\alpha,\slashednabla_{n_{\Sigma^*}}\Omega^2\alpha)$ in $\mathcal{E}^{T,+2}_{\Sigma^*}$. Denoting by $\Psi_{n}=\left(\frac{r^2}{\Omega^2}\nablau\right)^2r\Omega^2\alpha_n$ the solution to the Regge--Wheeler equation arising from $\alpha_n$, we calculate
\begin{align}
\begin{split}
    \Psi_n&=\left(\frac{r^2}{\Omega^2}\nablau\right)^2 r\Omega^2\alpha_n=\left(\frac{r^2}{\Omega^2}\nablau\right)^2\xi_n r\Omega^2\alpha\\
    &=r^2(r^2\xi_n')'r\Omega^2\alpha-2r^2\xi_n'r^3\Omega\psi+\xi_n\Psi.
\end{split}
\end{align}
We know that $\xi_n\Psi\longrightarrow \Psi$ in $\mathcal{E}^{T}_{\Sigma^*}$ (see \Cref{RW enough to be in space}). Seeing that $r^2\xi_n'\sim r, r^2(r^2\xi_n')'\sim r^2$ on $[R_n,R_{n+1}]$, we can estimate the remainder via
\begin{align}
\begin{split}
    \|\Psi_n-\xi_n\Psi\|_{\mathcal{E}^T_{\Sigma^*}}^2\lesssim\int_{R_n}^{R_{n+1}} dr\sin\theta d\theta d\phi\;& \left[|r^3\Omega\psi|^2+|\mathring{\slashednabla}r^3\Omega\psi|^2+|r\nablav r^3 \Omega\psi|^2\right]\\&+\left[|r^3\Omega\alpha|^2+|\mathring{\slashednabla}r^3\Omega\alpha|^2+|r\nablav r^3 \Omega\alpha|^2\right]\\&+\left[\frac{1}{r^2}(|\Psi|^2+|\mathring{\slashednabla}\Psi|^2)+|\nablav\Psi|^2\right].
\end{split}
\end{align}
The result follows if we can show that $r^{\frac{7}{2}}\Omega\psi|_{\Sigma^*}, r^{\frac{7}{2}}\Omega^2\alpha|_{\Sigma^*}, r^{\frac{3}{2}}\nablav r^3\Omega\psi, r^{\frac{3}{2}}\nablav r^3\Omega^2\alpha$ decay as $r\longrightarrow\infty$. Let $u<u'<u_-$ and take $r=r(u',v), R=r(u,v)$ and $(u,v,\theta^A):=(R,\theta^A)\in\Sigma^*$. We estimate $R^{\frac{7}{2}}\Omega\psi|_{\Sigma^*}$ by integrating the definition of $\Psi$ (\ref{hier+}):
\begin{align}
\begin{split}
    \int_{S^2}R^{\frac{7}{2}}\Omega|\psi(R,\theta^A)|d\omega&\leq\sqrt{r}\int_u^{u'}d\bar{u}\int_{S^2}d\omega\; \frac{\Omega^2}{r^2}|\Psi|+\sqrt{R}r^{3}\Omega|\psi(u',v,\theta^A)|\\&\lesssim_{u'}\sqrt{r}\int_u^{u'}d\bar{u}\int_{S^2}d\omega\; \frac{\Omega^2}{r^2}|\Psi|+r^{\frac{7}{2}}\Omega|\psi(u',v,\theta^A)|\\&\lesssim_{u'}\sqrt{F^T_v[\Psi](u,u')}+r^{\frac{7}{2}}\Omega|\psi(u',v,\theta^A)|.
\end{split}
\end{align}
We used Cauchy--Schwarz to get to the last step. The right hand side decays as $v\longrightarrow\infty$ since $F^T_v[\Psi](u,u')$ decays, $F^T_{u'}[\Psi](v,\infty)<\infty$ and $\bm{\uppsi}_{\mathscr{I}^+}$ vanishes for $u<u_-$, so that
\begin{align}
    |r^{3}\Omega\psi(u',v,\theta^A)|_{L^2(S^2_{u',v})}\leq\int_v^\infty d\bar{v}\int_{S^2_{u',\bar{v}}} \frac{\Omega^2}{r^2}|\Psi|\leq\frac{1}{\sqrt{r(u',v)}}\sqrt{F^T_{u'}[\Psi](v,\infty)}.
\end{align}
and commuting with $\slashed{\mathcal{L}}_{S^2}^\gamma$ for $|\gamma|\leq 3$ gives that $R^{\frac{7}{2}}\Omega\psi|_{\Sigma^*}$ decays as $R\longrightarrow\infty$. This can be repeated to show the same for $R^{\frac{7}{2}}\Omega^2\alpha|_{\Sigma^*}$. Furthermore, we have
\begin{align}
\begin{split}
    \nablau r\nablav r^3\Omega\psi=-\frac{\Omega^2}{r} r\nablav r^3\Omega\psi+(3\Omega^2-1)\frac{\Omega^2}{r^2}\Psi+\frac{\Omega^2}{r}\nablav \Psi.
\end{split}
\end{align}
We estimate
\begin{align}
    \left|r\nablav r^3\Omega\psi|_{\Sigma^*}\right|\leq \left|r\nablav r^3\Omega\psi(u',v,\theta^A)\right|+\int_u^{u'}d\bar{u}\left[\frac{\Omega^2}{r}|r\nablav r^3\Omega\psi|+(3\Omega^2-1)\frac{\Omega^2}{r^2}|\Psi|+\frac{\Omega^2}{r}|\nablav\Psi|\right].
\end{align}
Gr\"onwall's inequality implies
\begin{align}
    \left|r\nablav r^3\Omega\psi|_{\Sigma^*}\right|\lesssim\frac{r(u',v)}{r(u,v)}\left[\left|r\nablav r^3\Omega\psi(u,v,\theta^A)\right|+\frac{1}{\sqrt{R}}\sqrt{F^T_v[\Psi](u,u')}\right],
\end{align}
which in turn implies that $r^{\frac{3}{2}}\nablav r^3\Omega\psi|_{\Sigma^*}\longrightarrow 0$ as $R\longrightarrow\infty$. The same can be repeated to show $r^{\frac{3}{2}}\nablav r^3\Omega^2\alpha|_{\Sigma^*}\longrightarrow 0$ as $R\longrightarrow\infty$. 
%Unitarity is a consequence of \Cref{RW unitary backwards corollary}.
\end{proof}
\begin{defin}\label{+2 definition of B-}
    Let $\upalpha_{\mathscr{H}^+}, \upalpha_{\mathscr{I}^+}$ be as in \Cref{+2 backwards existence}. Define the map ${}^{(+2)}\mathscr{B}^-$ by
    \begin{align}
        {}^{(+2)}\mathscr{B}^-:\Gamma_c(\mathscr{H}^+_{\geq0})\times\Gamma_c(\mathscr{I}^+)\longrightarrow\Gamma(\Sigma^*)\times\Gamma(\Sigma^*), (\upalpha_{\mathscr{H}^+},\upalpha_{\mathscr{I}^+})\longrightarrow (\Omega^2\alpha|_{\Sigma^*},\slashednabla_{n_{\Sigma^*}}\Omega^2\alpha|_{\Sigma^*}),
    \end{align}
    where $\alpha$ is the solution to \bref{T+2} arising from scattering data $(\upalpha_{\mathscr{H}^+},\upalpha_{\mathscr{I}^+})$ as in \Cref{+2 backwards existence}.
\end{defin}
\begin{corollary}\label{+2B- inverts +2F+}
    The maps ${}^{(+2)}\mathscr{F}^+$, ${}^{(+2)}\mathscr{B}^-$ extend uniquely to unitary Hilbert space isomorphisms on their respective domains, such that ${}^{(+2)}\mathscr{F}^+\circ{}^{(+2)}\mathscr{B}^-=Id$, ${}^{(+2)}\mathscr{B}^-\circ{}^{(+2)}\mathscr{F}^+=Id$.
\end{corollary}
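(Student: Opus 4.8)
The plan is to mirror the argument structure already used for the Regge--Wheeler equation in \Cref{B- inverts F+}, so the main content has in fact been assembled in the preceding propositions. First I would recall what is already available: \Cref{+2 future forward scattering} shows that ${}^{(+2)}\mathscr{F}^+$ is a unitary isometry from $\mathcal{E}^{T,+2}_{\Sigma^*}$ into $\mathcal{E}^{T,+2}_{\mathscr{H}^+_{\geq0}}\oplus\mathcal{E}^{T,+2}_{\mathscr{I}^+}$, and since $\Gamma_c(\Sigma^*)\oplus\Gamma_c(\Sigma^*)$ is dense in the former, its image $\mathscr{I}m\left({}^{(+2)}\mathscr{F}^+\right)$ is a closed subspace of the latter. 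It therefore suffices to show that ${}^{(+2)}\mathscr{B}^-$, defined on the dense subset $\Gamma_c(\mathscr{H}^+_{\geq0})\oplus\Gamma_c(\mathscr{I}^+)$ by \Cref{+2 definition of B-}, lands in $\mathcal{E}^{T,+2}_{\Sigma^*}$ (which is \Cref{+2 backwards inclusion 7/2}) and that ${}^{(+2)}\mathscr{F}^+\circ{}^{(+2)}\mathscr{B}^-=\mathrm{Id}$ on this dense subset.

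The key identity ${}^{(+2)}\mathscr{F}^+\circ{}^{(+2)}\mathscr{B}^-=\mathrm{Id}$ on $\Gamma_c(\mathscr{H}^+_{\geq0})\oplus\Gamma_c(\mathscr{I}^+)$ follows by unwinding definitions: given compactly supported $(\upalpha_{\mathscr{H}^+},\upalpha_{\mathscr{I}^+})$, \Cref{+2 backwards existence} produces a solution $\alpha$ to \bref{T+2} on $J^+(\Sigma^*)$ that realises precisely $\upalpha_{\mathscr{H}^+}$, $\upalpha_{\mathscr{I}^+}$ as its radiation fields on $\mathscr{H}^+_{\geq0}$, $\mathscr{I}^+$ in the sense of \Cref{+2 radiation alpha definition H,,+2 radiation alpha definition scri}; applying ${}^{(+2)}\mathscr{F}^+$ to $(\Omega^2\alpha|_{\Sigma^*},\slashednabla_{n_{\Sigma^*}}\Omega^2\alpha|_{\Sigma^*})$ recovers those same radiation fields, which is the desired identity. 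Unitarity of ${}^{(+2)}\mathscr{B}^-$ on compactly supported data then follows by composing: $\|(\upalpha_{\mathscr{H}^+},\upalpha_{\mathscr{I}^+})\|^2 = \|{}^{(+2)}\mathscr{F}^+{}^{(+2)}\mathscr{B}^-(\upalpha_{\mathscr{H}^+},\upalpha_{\mathscr{I}^+})\|^2 = \|{}^{(+2)}\mathscr{B}^-(\upalpha_{\mathscr{H}^+},\upalpha_{\mathscr{I}^+})\|^2_{\mathcal{E}^{T,+2}_{\Sigma^*}}$, the last step using that ${}^{(+2)}\mathscr{F}^+$ is an isometry. Hence ${}^{(+2)}\mathscr{B}^-$ is a densely-defined isometry and extends uniquely to an isometry on all of $\mathcal{E}^{T,+2}_{\mathscr{H}^+_{\geq0}}\oplus\mathcal{E}^{T,+2}_{\mathscr{I}^+}$; since its range contains the dense set $\left[\Gamma(\Sigma^*)\oplus\Gamma(\Sigma^*)\right]\cap\mathcal{E}^{T,+2}_{\Sigma^*}$ (invoking an analogue of \Cref{RW enough to be in space} for the spaces $\mathcal{E}^{T,+2}_{\Sigma^*}$, which is immediate from \Cref{+2 norm on Sigma is coercive}), it is onto. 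Two commuting isometries that compose to the identity on a dense set, one of which is already known surjective, must be mutual inverses, giving both ${}^{(+2)}\mathscr{F}^+\circ{}^{(+2)}\mathscr{B}^-=\mathrm{Id}$ and ${}^{(+2)}\mathscr{B}^-\circ{}^{(+2)}\mathscr{F}^+=\mathrm{Id}$ on the full Hilbert spaces.

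Finally I would remark that the same argument works verbatim with $\Sigma^*$ replaced by $\Sigma$ and $\mathscr{H}^+_{\geq0}$ by $\mathscr{H}^+$, or by $\overline{\Sigma}$ and $\overline{\mathscr{H}^+}$, since the region $J^+(\overline{\Sigma})\cap J^-(\Sigma^*)$ is compact and can be handled by the local well-posedness statements \Cref{WP+2Sigma*,,WP+2Sigmabar,,backwards wellposedness +2} together with $T$-energy conservation, exactly as in the corollary following \Cref{B- inverts F+}; in the $\overline{\Sigma}$ case the data must additionally be consistent with \Cref{WP+2Sigma*} so that $V^{-2}\upalpha_{\mathscr{H}^+}\in\Gamma(\overline{\mathscr{H}^+})$. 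The step I expect to be the real obstacle is not in this corollary at all but is already discharged in \Cref{+2 backwards inclusion 7/2}: verifying that the backwards-constructed $\alpha$ genuinely has finite $\mathcal{E}^{T,+2}_{\Sigma^*}$-norm, which requires the delicate $r$-weighted decay of $r^{7/2}\Omega\psi$, $r^{7/2}\Omega^2\alpha$, $r^{3/2}\slashednabla_4 r^3\Omega\psi$ and $r^{3/2}\slashednabla_4 r^3\Omega^2\alpha$ on $\Sigma^*$ towards $i^0$, obtained through the Gr\"onwall-type transport estimates built on the uniform $T$-flux bounds for $\Psi$. Granting that proposition, the proof of the present statement is a short formal composition argument.
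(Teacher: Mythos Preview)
Your approach is exactly that of the paper, which simply says ``Identical to the proof of \Cref{B- inverts F+}''; you have correctly unpacked that argument in the $+2$ setting and correctly identified \Cref{+2 backwards inclusion 7/2} as the place where the real analysis was already discharged.

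There is one slip in your surjectivity step. You claim that the range of ${}^{(+2)}\mathscr{B}^-$ contains the dense set $\left[\Gamma(\Sigma^*)\oplus\Gamma(\Sigma^*)\right]\cap\mathcal{E}^{T,+2}_{\Sigma^*}$, but this is not what has been shown: the range of ${}^{(+2)}\mathscr{B}^-$ on compactly supported scattering data consists only of Cauchy data of the particular backwards-constructed solutions, not all smooth finite-energy Cauchy data. The correct argument (and the one the paper uses in \Cref{B- inverts F+}) runs through surjectivity of ${}^{(+2)}\mathscr{F}^+$ instead: from ${}^{(+2)}\mathscr{F}^+\circ{}^{(+2)}\mathscr{B}^-=\mathrm{Id}$ on $\Gamma_c(\mathscr{H}^+_{\geq0})\times\Gamma_c(\mathscr{I}^+)$ one sees that the closed image ${}^{(+2)}\mathscr{F}^+\big[\mathcal{E}^{T,+2}_{\Sigma^*}\big]$ contains this dense set and hence equals $\mathcal{E}^{T,+2}_{\mathscr{H}^+_{\geq0}}\oplus\mathcal{E}^{T,+2}_{\mathscr{I}^+}$; then the bounded extension of ${}^{(+2)}\mathscr{B}^-$ is forced to be $\left({}^{(+2)}\mathscr{F}^+\right)^{-1}$. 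With this reordering your argument is complete.
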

\begin{proof}
Identical to the proof of \Cref{B- inverts F+}.
\end{proof}
\begin{remark}\label{unitarity of +2B- is trivial}
As in the case of \Cref{unitarity of B- is trivial}, \Cref{+2B- inverts +2F+} implies
\begin{align}\label{unitarity of +2B- formula}
    \|{}^{(+2)}\mathscr{B}^-(\upalpha_{\mathscr{H}^+},\upalpha_{\mathscr{I}^+})\|_{\mathcal{E}^{T,+2}_{\Sigma^*}}^2=\|\upalpha_{\mathscr{H}^+}\|^2_{\mathcal{E}^{T,+2}_{\mathscr{H}^+_{\geq0}}}+\|\upalpha_{\mathscr{I}^+}\|^2_{\mathcal{E}^{T,+2}_{\mathscr{I}^+}}.
\end{align}
As in the case of \Cref{RW unitary backwards}, we can use the backwards $r^p$-estimates of \Cref{backwards rp estimates} to directly show \bref{unitarity of +2B- formula} without reference to the forwards map ${}^{(+2)}\mathscr{F}^+$.
\end{remark}
Since the region $J^+(\overline{\Sigma})\cap J^-(\Sigma^*)$ can be handled locally via \Cref{WP+2Sigmabar}, \Cref{backwards wellposedness +2} and $T$-energy conservation, we can immediately deduce the following:
\begin{corollary}
The map ${}^{(+2)}\mathscr{B}^-$ can be defined on the following domains:
\begin{align}
    {}^{(+2)}\mathscr{B}^{-}:\mathcal{E}^{T,+2}_{\mathscr{H}^+}\oplus \mathcal{E}^{T,+2}_{\mathscr{I}^+}\longrightarrow \mathcal{E}^{T,+2}_{\Sigma},\\
    {}^{(+2)}\mathscr{B}^{-}:\mathcal{E}^{T,+2}_{\overline{\mathscr{H}^+}}\oplus \mathcal{E}^{T,+2}_{\mathscr{I}^+}\longrightarrow \mathcal{E}^{T,+2}_{\overline{\Sigma}},
\end{align}
and we have
\begin{align}
{}^{(+2)}\mathscr{F}^{+}\circ{}^{(+2)}\mathscr{B}^{-}=Id_{\mathcal{E}^{T,+2}_{\mathscr{H}^+}\oplus\;\mathcal{E}^{T,+2}_{\mathscr{I}^+}},\qquad
{}^{(+2)}\mathscr{B}^{-}\circ{}^{(+2)}\mathscr{F}^{+}=Id_{\mathcal{E}^{T,+2}_{\Sigma}},\\
{}^{(+2)}\mathscr{F}^{+}\circ{}^{(+2)}\mathscr{B}^{-}=Id_{\mathcal{E}^{T,+2}_{\overline{\mathscr{H}^+}}\oplus\;\mathcal{E}^{T,+2}_{\mathscr{I}^+}},\qquad
{}^{(+2)}\mathscr{B}^{-}\circ{}^{(+2)}\mathscr{F}^{+}=Id_{\mathcal{E}^{T,+2}_{\overline{\Sigma}}}.
\end{align}
\end{corollary}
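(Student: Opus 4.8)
The plan is to transplant the map ${}^{(+2)}\mathscr{B}^{-}$ already built on $\mathcal{E}^{T,+2}_{\mathscr{H}^+_{\geq0}}\oplus\mathcal{E}^{T,+2}_{\mathscr{I}^+}$ in \Cref{+2B- inverts +2F+} down to data on $\Sigma$ and $\overline{\Sigma}$, by solving the $+2$ Teukolsky equation \bref{T+2} in the thin region squeezed between those surfaces and $\Sigma^*$. Concretely, I would start from smooth scattering data of compact support, $\upalpha_{\mathscr{I}^+}\in\Gamma_c(\mathscr{I}^+)$ and $\upalpha_{\mathscr{H}^+}$ with $V^{-2}\upalpha_{\mathscr{H}^+}\in\Gamma_c(\overline{\mathscr{H}^+})$ (respectively $\upalpha_{\mathscr{H}^+}\in\Gamma_c(\mathscr{H}^+)$ in the $\Sigma$ case), which are dense in the respective spaces of scattering states. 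Since $\mathscr{H}^+_{\geq0}$ is contained in $\{v\ge0\}$, away from $\mathcal{B}$, the restriction $\upalpha_{\mathscr{H}^+}|_{\mathscr{H}^+_{\geq0}}$ lies in $\Gamma_c(\mathscr{H}^+_{\geq0})$, so \Cref{+2 backwards existence} and \Cref{+2 backwards inclusion 7/2} applied to $(\upalpha_{\mathscr{H}^+}|_{\mathscr{H}^+_{\geq0}},\upalpha_{\mathscr{I}^+})$ yield a smooth solution $\alpha$ of \bref{T+2} on $J^+(\Sigma^*)$ with $(\Omega^2\alpha|_{\Sigma^*},\slashednabla_{n_{\Sigma^*}}\Omega^2\alpha|_{\Sigma^*})\in\mathcal{E}^{T,+2}_{\Sigma^*}$, realising $\upalpha_{\mathscr{H}^+}|_{\mathscr{H}^+_{\geq0}}$ and $\upalpha_{\mathscr{I}^+}$ as its radiation fields.

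Next I would extend $\alpha$ downward to $\overline{\Sigma}$. The region $J^+(\overline{\Sigma})\cap\{t^*\le0\}=J^+(\overline{\Sigma})\cap J^-(\Sigma^*)$ is a bounded neighbourhood of the segment of $\overline{\mathscr{H}^+}$ between $\mathcal{B}$ and $\Sigma^*\cap\mathscr{H}^+$, with boundary consisting of pieces of $\overline{\Sigma}$, of $\Sigma^*$ and of $\overline{\mathscr{H}^+}$; \Cref{backwards wellposedness +2}, fed the Cauchy data $(\Omega^2\alpha|_{\Sigma^*},\slashednabla_{n_{\Sigma^*}}\Omega^2\alpha|_{\Sigma^*})$ just constructed together with the characteristic data $V^{-2}\upalpha_{\mathscr{H}^+}$ on $\overline{\mathscr{H}^+}\cap J^-(\Sigma^*)$, provides the unique smooth extension of $\alpha$ (equivalently of the Kruskal-regular field $\widetilde\alpha$) to this region. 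Near $i^0$ the surface $\overline{\Sigma}$ lies to the future of $\Sigma^*$, so the remaining portion of $\overline{\Sigma}$ is already covered by the solution on $J^+(\Sigma^*)$, and the two agree by uniqueness; hence $\alpha$ is smooth on a neighbourhood of $\overline{\Sigma}$ and realises the full $\upalpha_{\mathscr{H}^+}$ on $\overline{\mathscr{H}^+}$. I would set ${}^{(+2)}\mathscr{B}^{-}(\upalpha_{\mathscr{H}^+},\upalpha_{\mathscr{I}^+}):=(\Omega^2\alpha,\slashednabla_{n_{\overline{\Sigma}}}\Omega^2\alpha)|_{\overline{\Sigma}}$, expressed in the regular frame near $\mathcal{B}$. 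That this lands in $\mathcal{E}^{T,+2}_{\overline{\Sigma}}$ follows from \Cref{+2 implies RW}: the Regge--Wheeler quantity $\Psi=(\tfrac{r^2}{\Omega^2}\nablau)^2 r\Omega^2\alpha$ solves \bref{RW}, is regular at $\mathcal{B}$, and by the definition of the norm in \Cref{+2 norm on Sigma} together with conservation of the Regge--Wheeler $T$-energy between $\overline{\Sigma}$ and $\Sigma^*$ one gets $\|{}^{(+2)}\mathscr{B}^{-}(\upalpha_{\mathscr{H}^+},\upalpha_{\mathscr{I}^+})\|^2_{\mathcal{E}^{T,+2}_{\overline{\Sigma}}}\le\|(\Omega^2\alpha,\slashednabla_{n}\Omega^2\alpha)|_{\Sigma^*}\|^2_{\mathcal{E}^{T,+2}_{\Sigma^*}}+\int_{\overline{\mathscr{H}^+}\cap\{v\le0\}}|\partial_v\Psi|^2$, the last term being finite since $\partial_v\Psi|_{\overline{\mathscr{H}^+}}$ is controlled by $\|\upalpha_{\mathscr{H}^+}\|_{\mathcal{E}^{T,+2}_{\overline{\mathscr{H}^+}}}<\infty$ via the horizon flux identities of \Cref{+2 radiation flux on H+}; smoothness plus finiteness of the norm then places the data in $\mathcal{E}^{T,+2}_{\overline{\Sigma}}$.

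With the map thus defined on a dense subset, the unitarity and inversion identities are formal. By construction $\alpha$ realises $\upalpha_{\mathscr{H}^+}$ on $\overline{\mathscr{H}^+}$ and $\upalpha_{\mathscr{I}^+}$ on $\mathscr{I}^+$, so ${}^{(+2)}\mathscr{F}^{+}\circ{}^{(+2)}\mathscr{B}^{-}=Id$ on this dense subset; since ${}^{(+2)}\mathscr{F}^{+}\colon\mathcal{E}^{T,+2}_{\overline{\Sigma}}\to\mathcal{E}^{T,+2}_{\overline{\mathscr{H}^+}}\oplus\mathcal{E}^{T,+2}_{\mathscr{I}^+}$ is already known to be a unitary embedding onto its image (\Cref{+2 future forward scattering}, \Cref{+2 future forward scattering Sigma Sigmabar}), that image is dense and therefore all of the target, so ${}^{(+2)}\mathscr{F}^{+}$ is a unitary isomorphism and ${}^{(+2)}\mathscr{B}^{-}$, being an isometry on a dense subset inverting it there, extends uniquely to $({}^{(+2)}\mathscr{F}^{+})^{-1}$, which also gives ${}^{(+2)}\mathscr{B}^{-}\circ{}^{(+2)}\mathscr{F}^{+}=Id$; alternatively one can argue unitarity of ${}^{(+2)}\mathscr{B}^{-}$ directly via the backwards $r^p$-estimates exactly as in \Cref{unitarity of +2B- is trivial}. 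The case of $\Sigma$ is identical, with $\mathscr{H}^+,\Sigma$ replacing $\overline{\mathscr{H}^+},\overline{\Sigma}$ and with \Cref{WP+2Sigmabar} and a cutoff version of \Cref{backwards wellposedness +2} away from $\mathcal{B}$ handling the transition region. I expect the only real subtlety to be the bookkeeping in the second paragraph: matching the scattering norm on $\mathscr{H}^+_{\geq0}$ versus $\overline{\mathscr{H}^+}$ against the $T$-flux of $\Psi$ through the two horizon segments, and, tied to this, the fact that $\alpha$ (unlike $\Psi$) degenerates at $\mathcal{B}$ in the Eddington--Finkelstein frame, which is precisely why one must use \Cref{backwards wellposedness +2} in the Kruskal-regular variable rather than simply integrating the transport relations \bref{hier+}.
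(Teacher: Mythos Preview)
Your proposal is correct and follows essentially the same approach as the paper, which dispatches the corollary in a single sentence: ``Since the region $J^+(\overline{\Sigma})\cap J^-(\Sigma^*)$ can be handled locally via \Cref{WP+2Sigmabar}, \Cref{backwards wellposedness +2} and $T$-energy conservation, we can immediately deduce the following.'' You have simply unpacked that sentence --- restricting to $\mathscr{H}^+_{\geq0}$, invoking the already-built ${}^{(+2)}\mathscr{B}^-$ on $\Sigma^*$, extending through the bounded region near $\mathcal{B}$ via \Cref{backwards wellposedness +2}, and closing with $T$-energy conservation for $\Psi$ and a density argument --- which is exactly what the paper intends.
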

This concludes the proof of \Cref{+2 future backward scattering}.
\begin{remark}[A nondegenerate estimate near $\mathscr{H}^+$]\label{nondegenerate estimate near H+} 
Note that the transport hierarchy \bref{hier+} implies (integrating in the measure $du\sin\theta d\theta d\phi$)
\begin{align}
\begin{split}
    \int_{\underline{\mathscr{C}}_v\cap[u,\infty)} \frac{1}{\Omega^2} |\nablau r^3\Omega\psi|^2&= \int_{\underline{\mathscr{C}}_v\cap[u,\infty)} \frac{\Omega^2}{r^2}|\Psi|^2\leq \underline{F}_v^T[\Psi](u,\infty),\\
    \int_{\underline{\mathscr{C}}_v\cap[u,\infty)}\frac{1}{\Omega^2}|\nablau r\Omega^2\alpha|^2&\lesssim \frac{1}{(2M)^2} \int_{\underline{\mathscr{C}}_v\cap[u,\infty)}\frac{1}{r^2}|\nablau r^3\Omega\psi|^2\lesssim \Omega^2(u,v)  \underline{F}_v^T[\Psi](u,\infty).
\end{split}
\end{align}
These estimates hold uniformly in $v$, in contrast to \bref{RW exponential backwards near H+}. This can be traced to the sign of the first order term in
\begin{align}
    \nablau\nablav r\Omega^2\alpha+\frac{2(3\Omega^2-1)}{r}\nablau r\Omega^2\alpha-\Omega^2\slashed{\Delta} r\Omega^2 \alpha+\frac{6M\Omega^2}{r^2} r\Omega^2\alpha=0.
\end{align}
for $r<3M$.\\
\indent Near $\mathscr{I}^+$ we can use \bref{+2 equation for radiation field} and follow the same steps leading to \bref{this} to derive for $R>\mathcal{R}_{\mathscr{I}^+}$:
\begin{align}\label{this+2}
    \int_{\mathscr{C}_u\cap\{r>R\}}r^2|\nablav r^5\Omega^{-2}\alpha|^2\lesssim_{u_-,M}\left[\|\upalpha_{\mathscr{I}^+}\|_{\mathcal{E}^{T,+2}_{\mathscr{I}^+}}^2 +\|\upalpha_{\mathscr{H}^+}\|_{\mathcal{E}^{T,+2}_{\mathscr{H}^+}}^2+\int_{\mathscr{I}^+\cap[u,u_+]}|\upalpha_{\mathscr{I}^+}|^2+|\mathring{\slashednabla}\upalpha_{\mathscr{I}^+}|^2\right].
\end{align}
With these estimates we can conclude as for the Regge--Wheeler equation:
\begin{corollary}\label{+2 noncompact}
    The results of \Cref{+2 backwards existence} hold when $\upalpha_{\mathscr{H}^+}$, $\upalpha_{\mathscr{I}^+}$ are not compactly supported, provided
\begin{align}\label{noncompact estimate}
    \sum_{|\gamma|\leq2}\|\slashed{\mathcal{L}}^\gamma_{S^2}\upalpha_{\mathscr{H}^+}\|_{\mathcal{E}^{T,+2}_{\mathscr{H}^+}}^2+\|\slashed{\mathcal{L}}^\gamma_{S^2}\upalpha_{\mathscr{I}^+}\|_{\mathcal{E}^{T,+2}_{\mathscr{I}^+}}^2+\int_{\mathscr{I}^+}|\slashed{\mathcal{L}}^\gamma_{S^2}\upalpha_{\mathscr{I}^+}|^2+|\slashed{\mathcal{L}}^\gamma_{S^2}\mathring{\slashednabla}\upalpha_{\mathscr{I}^+}|^2<\infty.
\end{align}
\end{corollary}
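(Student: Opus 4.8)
\textbf{Proof proposal for \Cref{+2 noncompact}.}

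The plan is to revisit the construction of \Cref{+2 backwards existence} and replace every appeal to compact support by an appeal to the finiteness hypothesis \bref{noncompact estimate}. First I would set up, exactly as in the proof of \Cref{+2 backwards existence}, the auxiliary scattering data $\psi_{\mathscr{H}^+}$, $\bm{\uppsi}_{\mathscr{H}^+}$, $\psi_{\mathscr{I}^+}$, $\bm{\uppsi}_{\mathscr{I}^+}$; the only point to check is that these now induce admissible (possibly noncompactly supported) scattering data for the Regge--Wheeler equation \bref{RW}, so that \Cref{RW backwards noncompact} applies rather than \Cref{RWbackwardsexistence}. This requires verifying that the relevant fluxes \bref{tthis_hypothesis_1}, \bref{thist_hypothesis_2} of $\bm{\uppsi}_{\mathscr{H}^+}$, $\bm{\uppsi}_{\mathscr{I}^+}$ are finite; on $\mathscr{I}^+$ this is immediate since $\bm{\uppsi}_{\mathscr{I}^+}=\partial_u^2\upalpha_{\mathscr{I}^+}$ and the $\mathcal{E}^{T,+2}_{\mathscr{I}^+}$ norm controls $\partial_u^3\upalpha_{\mathscr{I}^+}$ together with lower order angular-commuted fluxes via Hardy; on $\mathscr{H}^+$ one uses the Kruskal-frame expression \bref{+2 norm on H+ beyond B}--\bref{+2 norm on overline H+ up to B} relating $\partial_v\bm{\uppsi}_{\mathscr{H}^+}$ to $\mathcal{A}_2(\mathcal{A}_2-2)F_{\mathscr{H}^+}$ and $\partial_vF_{\mathscr{H}^+}$, which are controlled by $\|\upalpha_{\mathscr{H}^+}\|_{\mathcal{E}^{T,+2}_{\mathscr{H}^+}}$ and its angular commutations, noting the pointwise bound $|F_{\mathscr{H}^+}|\le\frac{1}{2M}\sup|\upalpha_{\mathscr{H}^+}|$ from the exponential kernel.

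Next I would produce the solution $\alpha$, $\psi$ by the same approximation argument already used twice in the paper (the proofs of \Cref{RWbackwardsexistence} and \Cref{+2 backwards existence}): introduce smooth cutoffs $\xi_n^u$, $\xi_n^v$ in $u,v$ applied to $\upalpha_{\mathscr{H}^+}$, $\upalpha_{\mathscr{I}^+}$, invoke \Cref{+2 backwards existence} for each $n$ to get $\alpha_n$ on $\mathscr{D}_n$, and extract a limit in the topology of compact convergence by Arzela--Ascoli. The crucial input making this work uniformly in $n$ is precisely the new estimate \bref{this+2} near $\mathscr{I}^+$, which was derived from \bref{+2 equation for radiation field} by the same Gr\"onwall scheme that gave \bref{this} for Regge--Wheeler, together with the uniform nondegenerate transport estimates near $\mathscr{H}^+$ recorded in \Cref{nondegenerate estimate near H+}. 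These two families of bounds, commuted with the angular Lie derivatives $\slashed{\mathcal{L}}^\gamma_{S^2}$ for $|\gamma|\le 2$ and combined with Sobolev embedding on $S^2$, give the pointwise control of $\alpha_n$ and its derivatives on compact sets that is uniform in the cutoff parameter, which is exactly what \bref{noncompact estimate} was tailored to supply. The convergence $\Psi\to\bm{\uppsi}_{\mathscr{I}^+}$, $\Psi\to\bm{\uppsi}_{\mathscr{H}^+}$ for the Regge--Wheeler limit follows from \Cref{RW backwards noncompact}, and then the Gr\"onwall arguments of \Cref{+2 backwards existence} transfer the convergence to $r^5\alpha\to\upalpha_{\mathscr{I}^+}$ and $\Omega^2\alpha\to\upalpha_{\mathscr{H}^+}$, again uniformly because the relevant fluxes are now finite rather than eventually zero.

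Uniqueness follows from the well-posedness statement \Cref{WP+2backwards} applied on truncated regions together with the fact that two solutions with the same radiation data have difference with vanishing data on characteristic slices pushed to infinity, exactly as in the compactly supported case. Finally, if one also wants the analogue of \Cref{+2 backwards inclusion 7/2}, i.e.\ that $(\Omega^2\alpha|_{\Sigma^*},\slashednabla_{n_{\Sigma^*}}\Omega^2\alpha|_{\Sigma^*})\in\mathcal{E}^{T,+2}_{\Sigma^*}$, one repeats that cutoff-in-$r$ argument; the decay of $r^{7/2}\Omega\psi$, $r^{7/2}\Omega^2\alpha$ and $r^{3/2}\nablav$-derivatives towards $i^0$ used there only required finiteness of the $T$-energy flux $F^T_{u'}[\Psi](v,\infty)$ and decay of $F^T_v[\Psi](u,u')$, both of which remain available under \bref{noncompact estimate} via \Cref{RW backwards noncompact}. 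I expect the main obstacle to be bookkeeping rather than conceptual: one must be careful that the angular-derivative budget in \bref{noncompact estimate} (up to $|\gamma|\le 2$, or $|\gamma|\le 4$ if the transverse-derivative corollaries \Cref{Phi 1 backwards,,Phi 2 backwards} are also wanted) genuinely suffices at every place where a Sobolev embedding $W^{3,1}(S^2)\hookrightarrow L^\infty(S^2)$ is invoked, and that the Gr\"onwall constants in \bref{this+2} and \Cref{nondegenerate estimate near H+} are uniform in the cutoff index $n$ — which they are, since they depend only on $M$, on the fixed radii $\mathcal{R}_{\mathscr{I}^+}$, $\mathcal{R}_{\mathscr{H}^+}$, and on $u_-$, none of which move as $n\to\infty$.
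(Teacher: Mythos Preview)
Your proposal is correct and follows essentially the same approach as the paper. The paper gives no separate proof for this corollary beyond the sentence ``With these estimates we can conclude as for the Regge--Wheeler equation'', where ``these estimates'' are precisely the nondegenerate transport bounds near $\mathscr{H}^+$ and the Gr\"onwall-derived bound \bref{this+2} near $\mathscr{I}^+$ from \Cref{nondegenerate estimate near H+}; your write-up simply unpacks what that sentence means, mirroring the cutoff/Arzela--Ascoli argument of \Cref{RW backwards noncompact} with the Teukolsky-specific uniform bounds substituted in.
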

\end{remark}
The results above can be extended to scattering from $\Sigma, \overline\Sigma$, since the region $J^+(\overline\Sigma)\cap J^-(\Sigma^*)$ can be handled locally with \Cref{WP+2Sigmabar} and \Cref{RWfcpSigma}.
\begin{corollary}
    Let $\upalpha_{\mathscr{H}^+}\in\Gamma(\mathscr{H}^+)\cap\;\mathcal{E}^{T,+2}_{\mathscr{H}^+}$,  $\upalpha_{\mathscr{I}^+}\in\Gamma (\mathscr{I}^+)\cap\;\mathcal{E}^{T,+2}_{\mathscr{I}^+}$, such that \bref{noncompact estimate} is satisfied. Then there exists a unique solution $\alpha$ to \cref{T+2} in $J^+({\Sigma})$ such that $\lim_{v\longrightarrow\infty}r^5\alpha=\upalpha_{\mathscr{I}^+}$, $2M\Omega^2\alpha\big|_{\mathscr{H}^+}=\upalpha_{\mathscr{H}^+}$. Moreover, $(\alpha\big|_{{\Sigma}},\slashednabla_{n_\Sigma}\alpha|_{{\Sigma}})\in \mathcal{E}^{T,+2}_{{\Sigma}}$ and 
    \begin{align}
    \left\|\left(\alpha|_{{\Sigma}},\slashednabla_{n_\Sigma}\alpha|_{{\Sigma}}\right)\right\|^2_{\mathcal{E}^{T,+2}_{{\Sigma}}}=\left|\left|\upalpha_{\mathscr{I}^+}\right|\right|^2_{\mathcal{E}^{T,+2}_{\mathscr{I}^+}}+\left|\left|\upalpha_{\mathscr{H}^+}\right|\right|^2_{\mathcal{E}^{T,+2}_{{\mathscr{H}^+}}}.
\end{align}
\end{corollary}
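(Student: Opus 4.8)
The plan is to reduce the case of data on $\Sigma$ (or $\overline{\Sigma}$) to the already-established case of data on $\Sigma^*$ (Propositions \ref{+2 backwards existence}, \ref{+2 backwards inclusion 7/2}, \ref{+2B- inverts +2F+} and Remark \ref{nondegenerate estimate near H+}), exactly as was done for the Regge--Wheeler equation after Corollary \ref{RWforwardmap}. First I would run the backwards construction of \Cref{+2 backwards existence} and \Cref{+2 noncompact} with scattering data $(\upalpha_{\mathscr{H}^+},\upalpha_{\mathscr{I}^+})$ to produce the unique solution $\alpha$ to \bref{T+2} on $J^+(\Sigma^*)$ realising these radiation fields, together with the inclusion $(\Omega^2\alpha|_{\Sigma^*},\slashednabla_{n_{\Sigma^*}}\Omega^2\alpha|_{\Sigma^*})\in\mathcal{E}^{T,+2}_{\Sigma^*}$. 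The remaining point is to push the data across the slab $J^+(\overline{\Sigma})\cap J^-(\Sigma^*)$ (equivalently $J^+(\Sigma)\cap J^-(\Sigma^*)$), which is a compact region of $\mathscr{M}$ in which the Eddington--Finkelstein frame and equation \bref{T+2} are perfectly regular (for $\overline{\Sigma}$ one works instead with $\widetilde\alpha$ and equation \bref{T+2B}, which by \Cref{backwards wellposedness +2} and \Cref{WP+2Sigmabar} is regular up to $\mathcal{B}$). Standard local well-posedness then uniquely extends $\alpha$ to $J^+(\Sigma)$.

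\textbf{The energy identity.} Having the solution on all of $J^+(\Sigma)$, the proof of the norm equality is a $T$-energy estimate. I would associate to $\alpha$ the Regge--Wheeler field $\Psi=\left(\frac{r^2}{\Omega^2}\nablau\right)^2 r\Omega^2\alpha$ via \bref{hier+}, which solves \bref{RW} by \Cref{+2 implies RW}, and apply the conservation of $F^T$ (as in \Cref{RWredshift} and the energy identities of \Cref{subsection 5.1 Basic integrated boundedness and decay estimates}) on the region bounded by $\Sigma$, $\mathscr{H}^+\cap\{v\le v_+\}$, $\mathscr{I}^+\cap\{u\le u_+\}$ and the complementary cones, exactly as in the proof of \Cref{RWfcp}. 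Combined with the fact (already established for data on $\Sigma^*$, see \Cref{unitarity of +2B- is trivial}) that the $\mathcal{E}^{T,+2}_{\mathscr{H}^+_{\geq0}}\oplus\mathcal{E}^{T,+2}_{\mathscr{I}^+}$ norms decompose as the $\mathscr{H}^+$- and $\mathscr{I}^+$-fluxes of $\partial_v\Psi$, $\partial_u\Psi$ via \bref{eq:199}, \bref{eq:200} and \bref{Psi out of alpha at scri}, and that the $\mathcal{E}^{T,+2}_\Sigma$ norm is by definition $\|(\Psi|_\Sigma,\slashednabla_{n_\Sigma}\Psi|_\Sigma)\|_{\mathcal{E}^T_\Sigma}^2$, this gives
\begin{align}
\left\|\left(\alpha|_{\Sigma},\slashednabla_{n_\Sigma}\alpha|_{\Sigma}\right)\right\|^2_{\mathcal{E}^{T,+2}_{\Sigma}}=\left\|\upalpha_{\mathscr{I}^+}\right\|^2_{\mathcal{E}^{T,+2}_{\mathscr{I}^+}}+\left\|\upalpha_{\mathscr{H}^+}\right\|^2_{\mathcal{E}^{T,+2}_{\mathscr{H}^+}},
\end{align}
since no energy is lost to the past ($\Sigma\subset J^+(\Sigma^*)$ and $T$-energy is conserved). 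Uniqueness follows from \Cref{WP+2Sigma*} together with the fact that the radiation fields $\upalpha_{\mathscr{I}^+},\upalpha_{\mathscr{H}^+}$ determine the Cauchy data on $\Sigma$ uniquely within the relevant energy class.

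\textbf{Main obstacle.} The only genuinely delicate point is verifying that the noncompact-support hypothesis \bref{noncompact estimate}, which was tailored to the $\Sigma^*$-problem, still suffices to carry the solution regularly across the finite slab down to $\Sigma$; this is where one needs the uniform (nondegenerate) backwards estimates near $\mathscr{H}^+$ of \Cref{nondegenerate estimate near H+} and the backwards $r^p$-estimate \bref{this+2}, rather than the degenerate Gr\"onwall bound \bref{RW exponential backwards near H+}, to control $\alpha$ and $\psi$ up to and including the portion of $\mathscr{H}^+$ between $\mathcal{B}$ and $\Sigma^*\cap\mathscr{H}^+$. Once those uniform estimates are in hand the extension is routine, and for data on $\overline{\Sigma}$ one additionally invokes the regular-frame well-posedness of \Cref{backwards wellposedness +2} to obtain the regularity of $V^{-2}\upalpha_{\mathscr{H}^+}$ at $\mathcal{B}$, completing the statement.
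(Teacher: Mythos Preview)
Your proposal is correct and follows essentially the same route as the paper, which dispatches this corollary in a single line by observing that the region $J^+(\overline\Sigma)\cap J^-(\Sigma^*)$ can be handled locally via \Cref{WP+2Sigmabar}, \Cref{backwards wellposedness +2} and the pattern of \Cref{RWfcpSigma}. The one inaccuracy is your description of the intermediate slab as ``compact'' and your identification of the ``main obstacle'': the portion of $\mathscr{H}^+$ between $\mathcal{B}$ and $\Sigma^*\cap\mathscr{H}^+$ carries \emph{prescribed} data $\upalpha_{\mathscr{H}^+}\in\Gamma(\mathscr{H}^+)$, so the extension there is purely a matter of the local mixed well-posedness statement \Cref{backwards wellposedness +2}, not of the nondegenerate backwards estimates of \Cref{nondegenerate estimate near H+} (those were used for the noncompact support toward the \emph{future} end of $\mathscr{H}^+$, which is already absorbed in \Cref{+2 noncompact}).
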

\begin{corollary}
    Let $\upalpha_{\mathscr{H}^+}\in\mathcal{E}^{T,+2}_{\overline{\mathscr{H}^+}}$ be such that $V^{-2}\upalpha\in \Gamma({\overline{\mathscr{H}^+}})$ and let $\upalpha_{\mathscr{I}^+}\in\Gamma(\mathscr{I}^+)\cap\;\mathcal{E}^{T,+2}_{\mathscr{I}^+}$. Then there exists a unique solution $\alpha$ to \cref{T+2} in $J^+(\overline{\Sigma})$ such that $\lim_{v\longrightarrow\infty}r^5\alpha=\upalpha_{\mathscr{I}^+}$, $2MV^{-2}\Omega^2\alpha\big|_{\mathscr{H}^+}=V^{-2}\upalpha_{\mathscr{H}^+}$. Moreover, $(\alpha\big|_{\overline{\Sigma}},\slashednabla_{n_{\overline{\Sigma}}}\alpha|_{\overline{\Sigma}})\in \mathcal{E}^{T,+2}_{\overline{\Sigma}}$ and 
    \begin{align}
    \left\|\left(\alpha|_{\overline{\Sigma}},\slashednabla_{n_{\overline{\Sigma}}}\alpha|_{\overline{\Sigma}}\right)\right\|^2_{\mathcal{E}^{T,+2}_{\overline{\Sigma}}}=\left|\left|\upalpha_{\mathscr{I}^+}\right|\right|^2_{\mathcal{E}^{T,+2}_{\mathscr{I}^+}}+\left|\left|\upalpha_{\mathscr{H}^+}\right|\right|^2_{\mathcal{E}^{T,+2}_{\overline{\mathscr{H}^+}}}.
\end{align}
\end{corollary}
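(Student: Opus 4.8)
The plan is to reduce the statement about scattering from $\overline{\Sigma}$ to the already-established statement about scattering from $\Sigma^*$ by a local evolution argument in the "thin slab" $J^+(\overline{\Sigma})\cap J^-(\Sigma^*)$, using the well-posedness statements already available near the bifurcation sphere and conservation of the (nondegenerate) $T$-energy. Concretely, suppose we are given $\upalpha_{\mathscr{H}^+}\in\mathcal{E}^{T,+2}_{\overline{\mathscr{H}^+}}$ with $V^{-2}\upalpha_{\mathscr{H}^+}\in\Gamma(\overline{\mathscr{H}^+})$, and $\upalpha_{\mathscr{I}^+}\in\Gamma(\mathscr{I}^+)\cap\mathcal{E}^{T,+2}_{\mathscr{I}^+}$. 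First I would apply the $\Sigma^*$ version of backwards scattering (\Cref{+2 future backward scattering}, together with its extension to non-compactly-supported data in \Cref{+2 noncompact}) to the restriction of this scattering data to $\overline{\mathscr{H}^+}\cap J^-(\Sigma^*)$ and to $\mathscr{I}^+$: this produces a unique smooth solution $\alpha$ to \bref{T+2} on $J^+(\Sigma^*)$ realising $\upalpha_{\mathscr{I}^+}$ and $\upalpha_{\mathscr{H}^+}|_{\mathscr{H}^+\cap\{t^*\geq0\}}$ as its radiation fields, with $(\Omega^2\alpha|_{\Sigma^*},\slashednabla_{n_{\Sigma^*}}\Omega^2\alpha|_{\Sigma^*})\in\mathcal{E}^{T,+2}_{\Sigma^*}$ and the unitarity identity $\|(\Omega^2\alpha|_{\Sigma^*},\slashednabla_{n_{\Sigma^*}}\Omega^2\alpha|_{\Sigma^*})\|^2_{\mathcal{E}^{T,+2}_{\Sigma^*}}=\|\upalpha_{\mathscr{I}^+}\|^2_{\mathcal{E}^{T,+2}_{\mathscr{I}^+}}+\|\upalpha_{\mathscr{H}^+}|_{\mathscr{H}^+\cap\{t^*\geq0\}}\|^2_{\mathcal{E}^{T,+2}_{\mathscr{H}^+_{\geq0}}}$.

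Next I would extend $\alpha$ backwards from $\Sigma^*$ to $\overline{\Sigma}$ through the slab $J^+(\overline{\Sigma})\cap J^-(\Sigma^*)$. The correct framework here is the backwards well-posedness statement \Cref{backwards wellposedness +2} for the regularised equation \bref{T+2B}: one works with $\widetilde\alpha=V^{-2}\Omega^2\alpha$, which solves the regular equation \bref{T+2B} that does not degenerate at $\mathcal{B}$, prescribing $\widetilde\alpha|_{\overline{\mathscr{H}^+}\cap\{t^*\leq0\}}$ from the given scattering data (here we use exactly the hypothesis $V^{-2}\upalpha_{\mathscr{H}^+}\in\Gamma(\overline{\mathscr{H}^+})$, so the data extend smoothly to $\mathcal{B}$) and $(\widetilde\alpha|_{\Sigma^*},\slashednabla_{n_{\Sigma^*}}\widetilde\alpha|_{\Sigma^*})$ as computed from the solution just produced. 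This uniquely extends $\alpha$ to a smooth solution on $J^+(\overline{\Sigma})$, with $\Omega^2\alpha|_{\overline{\Sigma}}$ consistent with \Cref{WP+2Sigmabar}, i.e.~$(V^{-2}\Omega^2\alpha|_{\overline{\Sigma}},\slashednabla_{n_{\overline{\Sigma}}}V^{-2}\Omega^2\alpha|_{\overline{\Sigma}})\in\Gamma(\overline{\Sigma})$. It remains to upgrade the $\mathcal{E}^{T,+2}_{\Sigma^*}$ membership and the norm identity to $\overline{\Sigma}$. This is done by conservation of the nondegenerate $T$-energy: applying the energy estimate \bref{T derivative identity} (via the transport hierarchy \bref{hier+} for the associated $\Psi$, exactly as in the proof of \Cref{RWfcpSigma}) in the compact slab bounded by $\overline{\Sigma}$, $\Sigma^*$, and the segment $\overline{\mathscr{H}^+}\cap\{t^*\leq0\}$ of the horizon gives
\begin{align}
\|(V^{-2}\Omega^2\alpha|_{\overline{\Sigma}},\slashednabla_{n_{\overline{\Sigma}}}V^{-2}\Omega^2\alpha|_{\overline{\Sigma}})\|^2_{\mathcal{E}^{T,+2}_{\overline{\Sigma}}}
=\|(\Omega^2\alpha|_{\Sigma^*},\slashednabla_{n_{\Sigma^*}}\Omega^2\alpha|_{\Sigma^*})\|^2_{\mathcal{E}^{T,+2}_{\Sigma^*}}
+\|\upalpha_{\mathscr{H}^+}|_{\mathscr{H}^+\cap\{t^*\leq0\}}\|^2_{\mathcal{E}^{T,+2}_{\overline{\mathscr{H}^+}}\text{-segment}},
\end{align}
where the last term is the $\mathcal{E}^{T,+2}$ flux integral over the horizon segment inside the slab. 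Adding this to the $\Sigma^*$ identity from the first step and noting that the two horizon-flux contributions add up precisely to $\|\upalpha_{\mathscr{H}^+}\|^2_{\mathcal{E}^{T,+2}_{\overline{\mathscr{H}^+}}}$ (the $\mathcal{E}^{T,+2}_{\overline{\mathscr{H}^+}}$ norm is an integral over all of $\overline{\mathscr{H}^+}$, as in \bref{+2 norm on overline H+ up to B}) yields the claimed equality.

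The main obstacle, as always near the bifurcation sphere, is bookkeeping the frame degeneration consistently: the flux integrals defining $\mathcal{E}^{T,+2}_{\overline{\mathscr{H}^+}}$ are most naturally written in Kruskal coordinates (see \Cref{introduction regular frame norm} and \bref{introduction regular frame expression H-}), whereas the $T$-energy currents and the transport hierarchy \bref{hier+} are phrased in Eddington--Finkelstein variables, which degenerate at $\mathcal{B}$. One must check that the $T$-energy flux through the horizon segment $\overline{\mathscr{H}^+}\cap\{t^*\leq0\}$ (which is finite and regular up to and including $\mathcal{B}$ when expressed via $V^{-2}\Omega^2\alpha$, by \Cref{regular} and the regularity of the associated $\Psi$ on $\overline{\mathscr{H}^+}$, cf.~the Kruskal form of \bref{RW}) coincides with the $\mathcal{E}^{T,+2}_{\overline{\mathscr{H}^+}}$ integrand restricted to that segment — this is exactly the identity \bref{+2 norm on overline H+ up to B} derived in \Cref{+2 radiation flux on H+}, now applied on a horizon segment rather than all of $\overline{\mathscr{H}^+}$, which goes through verbatim since the relevant boundary term at $v\to-\infty$ (equivalently $V\to0$, i.e.~at $\mathcal{B}$) vanishes by the smoothness of $V^{-2}\upalpha_{\mathscr{H}^+}$ up to $\mathcal{B}$. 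Once this matching is in place the argument is routine. The final statement about $\underline\alpha$ and $\mathcal{E}^{T,-2}_{\overline{\Sigma}}$ is obtained by the identical argument, replacing $V^{-2}\Omega^2\alpha$ by $V^2\Omega^{-2}\underline\alpha$, \Cref{backwards wellposedness +2} by \Cref{backwards wellposedness -2}, \bref{T+2B} by \bref{T-2B}, \bref{hier+} by \bref{hier-}, and invoking \bref{-2 expression on H is regular} and the Kruskal form of the $\mathcal{E}^{T,-2}_{\overline{\mathscr{H}^+}}$ norm in place of \bref{+2 norm on overline H+ up to B}.
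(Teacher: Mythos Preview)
Your approach is correct and is precisely the argument the paper has in mind: the text preceding this corollary says only that ``the region $J^+(\overline\Sigma)\cap J^-(\Sigma^*)$ can be handled locally with \Cref{WP+2Sigmabar} and \Cref{RWfcpSigma}'', and your proposal is a detailed unpacking of exactly that sentence, passing to the associated Regge--Wheeler quantity $\Psi$ and using $T$-energy conservation in the slab. One small slip: in your first step you write ``the restriction of this scattering data to $\overline{\mathscr{H}^+}\cap J^-(\Sigma^*)$'', but the $\Sigma^*$ backwards problem takes horizon data on $\mathscr{H}^+_{\geq0}=\mathscr{H}^+\cap J^+(\Sigma^*)$ (i.e.\ $\{t^*\geq0\}$), as you correctly write two lines later; the segment $\overline{\mathscr{H}^+}\cap\{t^*\leq0\}$ only enters in the second (slab) step.
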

\subsubsection{A pointwise estimate near $i^0$ in backwards scattering}\label{subsubsection 8.1.3 pointwise estimate near i0}
As an aside, if $\upalpha_{\mathscr{I}^+}$ is compactly supported we can use the backwards $r^p$-estimates of \Cref{backwards rp estimates} to obtain better decay for $\alpha,\psi$ towards $i^0$. We illustrate this point in what follows:
\begin{proposition}
Let $\alpha$ be the solution to (\ref{T+2}) arising from scattering data $\upalpha_{\mathscr{H}^+}\in \Gamma_c (\mathscr{H}^+_{\geq0}), \upalpha_{\mathscr{I}^+}\in \Gamma_c (\mathscr{I}^+_{\geq0})$ as in \Cref{+2 backwards existence}. Then $r^5\psi|_{\Sigma^*}, r^5\alpha|_{\Sigma^*}\longrightarrow 0$. The same applies when $\Sigma^*$ is replaced by $\Sigma$ or $\overline\Sigma$.
\end{proposition}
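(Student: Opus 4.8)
The plan is to leverage the compact support of the future scattering data. Since $\upalpha_{\mathscr{I}^+}\in\Gamma_c(\mathscr{I}^+)$ is supported in $u\in[u_-,u_+]$, the associated radiation fields $\psi_{\mathscr{I}^+}=\partial_u\upalpha_{\mathscr{I}^+}$ and $\bm{\uppsi}_{\mathscr{I}^+}=\partial_u^2\upalpha_{\mathscr{I}^+}$ (see \bref{Psi out of alpha at scri}) are likewise supported there. Consequently, on the asymptotic part of $\Sigma^*$ near $i^0$, where a point with $r=R$ large has retarded time $u\longrightarrow-\infty$, all three radiation fields vanish. First I would record the backwards $r^p$-estimate \bref{this+2} of \Cref{nondegenerate estimate near H+}, which bounds $\int_{\mathscr{C}_u\cap\{r>R\}}r^2|\nablav r^5\Omega^{-2}\alpha|^2$ by a quantity that is \emph{uniform in $u$}, the $\mathscr{I}^+$-flux terms on the right being controlled by the fixed finite fluxes of the compactly supported data. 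An entirely analogous backwards $r^p$-estimate holds for $r^5\Omega^{-1}\psi$: one repeats the steps leading to \bref{this+2}, now applied to the wave equation for $r^5\Omega^{-1}\psi$ used in the proof of \Cref{T+1rp}, absorbing its source term $\sim\Omega^4 r^{-4}\,r^5\Omega^{-2}\alpha$ by means of the $\alpha$-estimate just recorded.

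With these uniform bounds in hand, the decay follows by the integration argument already used for the Regge--Wheeler equation in \Cref{RW backwards decay at sigma}. Fix $(u,v,\theta^A)\in\Sigma^*$ with $r(u,v)=R$ large enough that $u<u_-$. Integrating $\nablav(r^5\Omega^{-2}\alpha)$ along $\mathscr{C}_u$ from $v$ out to $\mathscr{I}^+$, and using that the limit equals $\upalpha_{\mathscr{I}^+}(u)=0$, I obtain
\begin{align}
    r^5\Omega^{-2}\alpha\big|_{(u,v)}=-\int_v^\infty d\bar{v}\;\nablav\left(r^5\Omega^{-2}\alpha\right).
\end{align}
Cauchy--Schwarz against the weight $r^{-2}$, together with $\int_v^\infty r^{-2}\,d\bar{v}\sim R^{-1}$ along $\mathscr{C}_u$ and the uniform flux bound \bref{this+2}, then gives
\begin{align}
    \int_{S^2}\left|r^5\Omega^{-2}\alpha\right|_{\Sigma^*\cap\{r=R\}}\;\lesssim\;\frac{1}{\sqrt{R}}\left(\int_{\mathscr{C}_u\cap\{r>R\}}r^2|\nablav r^5\Omega^{-2}\alpha|^2\right)^{1/2}\;\lesssim\;\frac{1}{\sqrt{R}}.
\end{align}
Since $\Omega^2\longrightarrow1$ towards $i^0$, this yields the decay of $r^5\alpha|_{\Sigma^*}$ in $L^1(S^2)$, and running the identical computation for $r^5\Omega^{-1}\psi$ with its backwards $r^p$-estimate gives the decay of $r^5\psi|_{\Sigma^*}$.

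To upgrade from $L^1(S^2)$ to pointwise decay I would commute the whole argument with the angular Killing fields $\slashed{\mathcal{L}}^\gamma_{S^2}$ for $|\gamma|\leq3$, which commute with the $+2$ Teukolsky operator and with the hierarchy \bref{hier+} and preserve the compact support of the radiation fields, and then invoke the Sobolev embedding $W^{3,1}(S^2)\hookrightarrow L^\infty(S^2)$ exactly as in \Cref{RWradscri}. Finally, the statement for $\Sigma$ and $\overline{\Sigma}$ is immediate, since near $i^0$ these surfaces agree with $\Sigma^*$ asymptotically up to a bounded shift in $t$, and the integration-along-$\mathscr{C}_u$ argument is insensitive to which of them the ray is emitted from. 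The main obstacle is the first step: producing the backwards $r^p$-estimate for $r^5\Omega^{-1}\psi$ uniformly in $u$ while correctly tracking the $\alpha$-source term, so that it parallels \bref{this+2}. Once that is secured, the remainder is the Hardy and Cauchy--Schwarz bookkeeping already carried out for the Regge--Wheeler equation in \Cref{RW backwards decay at sigma}.
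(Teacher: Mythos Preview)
Your approach is correct but differs from the paper's. You integrate $r^5\Omega^{-2}\alpha$ and $r^5\Omega^{-1}\psi$ along \emph{outgoing} cones $\mathscr{C}_u$ towards $\mathscr{I}^+$, using the Teukolsky backwards $r^p$-estimate \bref{this+2} (and the analogous one for $\psi$ that you outline) together with the vanishing of $\upalpha_{\mathscr{I}^+},\psi_{\mathscr{I}^+}$ at $u<u_-$. The paper instead integrates along \emph{ingoing} cones $\underline{\mathscr{C}}_v$: it writes the transport equation $\nablau r^5\Omega^{-1}\psi+\frac{3\Omega^2-1}{r}r^5\Omega^{-1}\psi=\Psi$, applies Gr\"onwall to bound $|r^5\Omega^{-1}\psi(u,v)-r^5\Omega^{-1}\psi(u',v)|$ by $\int_u^{u'}|\Psi|$, and then shows $\int_u^{u'}|\Psi|\to\int_{-\infty}^{u'}|\bm{\uppsi}_{\mathscr{I}^+}|=0$ via the Regge--Wheeler backwards bulk estimate \bref{RWbackwardsdecay}. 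For $r^5\alpha$ the paper iterates once more, using that $\int(u-u_-)\bm{\uppsi}_{\mathscr{I}^+}=0$ since $\bm{\uppsi}_{\mathscr{I}^+}$ is a second $u$-derivative of compactly supported data.

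The trade-off: the paper's route reuses the already-proved Regge--Wheeler backwards $r^p$-hierarchy of \Cref{backwards rp estimates} and the transport relations \bref{hier+}, so no new estimate is needed; the cancellation mechanism is the vanishing of the first two $u$-moments of $\bm{\uppsi}_{\mathscr{I}^+}$. Your route is more direct and yields the explicit $R^{-1/2}$ rate, but it does require producing the backwards $r^p$-estimate for $r^5\Omega^{-1}\psi$ with its $\alpha$-source --- a step you correctly flag as the main obstacle, and which indeed goes through since the source enters at order $r^{-4}$ and is absorbed by the $\alpha$-estimate you already have. One point worth making explicit in your write-up: the Gr\"onwall factor hidden in \bref{this+2} is $\bigl(r(u,v)/r(u_+,v)\bigr)^2$, and you should note that for points $(u,v)\in\Sigma^*$ this ratio stays bounded (tending to $2$) as $R\to\infty$, so the estimate really is uniform along $\Sigma^*$.
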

\begin{proof}
Given that $\bm{\uppsi}_{\mathscr{I}^+}=\partial_u^2\upalpha_{\mathscr{I}^+}$ is compactly supported, we already know that $\Psi|_{\Sigma^*,r=R}\longrightarrow 0$ as $R\longrightarrow 0$. We first work with $r^5\psi$, for which we can derive a similar estimate to (\ref{backwards estimate +2 Gronwall}): Let $u<u'<u_-$ and take $(u,v,\theta^A)\in\Sigma^*$, $v-u:=R^*$. Integrating \cref{+2 Gronwall ingredient} in $u$ on $\underline{\mathscr{C}}_v$ between $u,u'$, we obtain:
\begin{align}
     \Big|r^5\Omega^{-1}\psi(u,v)-r^5\Omega^{-1}\psi(u',v)\Big|\leq\int_{u}^{u'}\left|\Psi\right|\exp\left[\int_{u}^{u'} \frac{3\Omega^2-1}{r} d\bar{u}\right]\lesssim\left[\int_{u}^{u'}\left|\Psi\right|\right]\left(\frac{r(u',v)}{r(u,v)}\right)^2.
\end{align}
We further compare $\int_{u}^{u'}\left|\Psi\right|d\bar{u}$ to $\int_{-\infty}^{u'} |\Psi|_{\mathscr{I}^+}$: via the backwards $r^p$-estimates of \Cref{backwards rp estimates}:
\begin{align}
    \left|\int_{u}^{u'}du \left|\Psi\right|-\int_{-\infty}^{u'}du \left|\bm{\uppsi}_{\mathscr{I}^+}\right|\right|^2\leq\left[ \int_{\mathscr{D}}du dv|\nablav\Psi|\right]^2\leq \frac{1}{\sqrt{R}}\int_{\mathscr{D}}du dv \;r^2|\nablav\Psi|^2,
\end{align}
where $\mathscr{D}=J^+(\Sigma^*)\cap J^+(\underline{\mathscr{C}}_{v})\cap J^-(\mathscr{C}_{u'})$. As in \Cref{backwards rp estimates}, we can bound the last integral by the right hand side of (\ref{RWbackwardsdecay}). As $R\longrightarrow\infty$, $\int_{u}^{u'}du \left|\Psi\right|\longrightarrow \int_{-\infty}^{u'}du \left|\bm{\uppsi}_{\mathscr{I}^+}\right|=0$. Consequently $\left|r^5\Omega^{-1}\psi(u,v)-r^5\Omega^{-1}\psi(u',v)\right|\ $ decays as $R\longrightarrow\infty$ and 
\begin{align}
    \lim_{R\longrightarrow\infty} r^5\psi|_{\Sigma^*,r=R}=0.
\end{align}
We can prove the same for $r^5\alpha|_{\Sigma,r=R}$ by repeating the above argument for $\int_{u_-}^{u_+} du (u-u_-)\Psi$ and noticing that $\int_{u_-}^{u_+} du (u-u_-)\bm{\uppsi}_{\mathscr{I}^+}$ also vanishes since $\bm{\uppsi}_{\mathscr{I}^+}$ is the 2nd derivative of compactly supported fields on $\mathscr{I}^+$.
\end{proof}
\subsection{Future scattering for $\underline\alpha$}\label{subsection 8.2 future scattering -2}
Forwards and backwards scattering for the $-2$ Teukolsky equation are worked out entirely analogously to the case of the $+2$ Teukolsky equation, using the scattering theory of the Regge--Wheeler equation and the results of \Cref{subsection 7.2 future radiation fields and fluxes}. In contrast to the $+2$ equation, the transport equation \bref{hier-} relating $\underline\alpha$ and $\underline\Psi$ is sufficient to obtain an estimate for the radiation field near $\mathscr{I}^+$ that is uniform in the future end of the support of $\underline\upalpha_{\mathscr{I}^+}$, while near $\mathscr{H}^+$ $\underline\alpha$ experiences an \textit{enhanced blueshift}, and it is necessary for scattering data to decay exponentially at a sufficiently fast rate towards the future in order to obtain a solution in backwards scattering that is smooth near $\mathscr{H}^+$.
\subsubsection{Forwards scattering for $\underline\alpha$}\label{subsubsection 8.2.1 forwards scattering -2}
We put together the ingredients worked out in \Cref{subsection 7.2 future radiation fields and fluxes} to construct the forwards scattering map.
\begin{proof}[Proof of \Cref{-2 future forward scattering}]
Let $\underline\alpha$ be the solution to \cref{T-2} on $J^+(\Sigma^*)$ arising out of a compactly supported data set $(\underline\upalpha,\underline\upalpha')$ on $\Sigma^*$ as in \Cref{WP+2Sigma*}. \Cref{WP-2Sigma*} guarantees the existence of the radiation field $\underline\upalpha_{\mathscr{H}^+}$ as in \Cref{-2 radiation alpha definition H}. \Cref{-2 radiation ptwise decay H} says that $\underline\upalpha_{\mathscr{H}^+}\longrightarrow 0$ towards the future end of $\mathscr{H}^+$. Let $\underline\Psi$ be the solution to \cref{RW} associated to $\underline\alpha$ via (\ref{hier-}) The fact that $(\underline\Psi|_{\Sigma^*},\slashednabla_{T}\underline\Psi|_{\Sigma^*})$ are compactly supported means that the results of \Cref{-2 radiation flux on H+} apply and $\underline\upalpha_{\mathscr{H}^+}\in \mathcal{E}^{T,-2}_{\mathscr{H}^+_{\geq0}}$. Similarly, by \Cref{-2 radiation at scri}, $r\underline\alpha$ has a pointwise limit as $v\longrightarrow \infty$ which induces a smooth $\underline\upalpha_{\mathscr{I}^+}$ on $\mathscr{I}^+$. \Cref{-2 alpha radiation decay} implies that $\underline\upalpha_{\mathscr{I}^+}$ decays towards the future end of $\mathscr{I}^+$. As $\underline\uppsi_{\mathscr{I}^+}\in \mathcal{E}^T_{\mathscr{I}^+}$, we have that
\begin{align}\label{-2 term in L2 on scri+}
    \mathcal{A}_2(\mathcal{A}_2-2)\int_v^\infty d\bar{u}\underline\upalpha_{\mathscr{I}^+}-6M\underline\upalpha_{\mathscr{I}^+}\in L^2(\mathscr{I}^+).
\end{align}
The fact that $\underline\alpha$ arises from data of compact support means that \bref{-2 mean is 0} applies. This implies upon evaluating the $L^2(\mathscr{I}^+)$ norm of the left hand side of \bref{-2 term in L2 on scri+} that $\underline\upalpha_{\mathscr{I}^+}\in\mathcal{E}^{T,-2}_{\mathscr{I}^+}$.
% The proof goes through for data on $\overline{\Sigma}$. This defines the bounded map $\mathscr{F}^+$ on a dense subspace of $\mathcal{E}^{T,-2}_{\Sigma^*}$ (or $\mathcal{E}^{T,-2}_{\Sigma}$,  $\mathcal{E}^{T,-2}_{\overline{\Sigma}}$), which then extends uniquely to the full space.
\end{proof}
\begin{corollary}\label{-2 future forward scattering Sigma Sigmabar}
    Solutions to (\ref{T-2}) arising from data on ${\Sigma}$ of compact support give rise to smooth radiation fields in $\mathcal{E}_{\mathscr{I}^+}^{T,-2}$ and $\mathcal{E}_{{\mathscr{H}^+}}^{T,-2}$. Solutions to (\ref{T-2}) arising from data on $\overline{\Sigma}$ of compact support give rise to smooth radiation fields in $\mathcal{E}_{\mathscr{I}^+}^{T,-2}$ and $\mathcal{E}_{\overline{\mathscr{H}^+}}^{T,-2}$.
\end{corollary}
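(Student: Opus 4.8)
The plan is to reduce the cases of data on $\Sigma$ and $\overline{\Sigma}$ to the already-established case of data on $\Sigma^*$ (\Cref{-2 future forward scattering}) by a purely local argument, exactly in the manner of \Cref{RWfcpSigma} for the Regge--Wheeler equation and \Cref{+2 future forward scattering Sigma Sigmabar} for the $+2$ Teukolsky equation. First I would observe that any $\underline\alpha$ solving \bref{T-2} with smooth, compactly supported data on $\Sigma$ (resp.\ on $\overline{\Sigma}$, with $(U^{-2}\Omega^2\underline\alpha,U^{-2}\Omega^2\underline\alpha')$ smooth up to $\mathcal{B}$) is, by the well-posedness statements \Cref{WP-2Sigmabar} and \Cref{backwards wellposedness -2}, uniquely determined on $J^+(\overline{\Sigma})$; in particular its restriction to $\Sigma^*$ (together with the transverse derivative there) is smooth and compactly supported. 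Since $J^+(\Sigma^*)\cap J^-(\Sigma)$ (resp.\ the analogous region bounded by $\overline\Sigma$ and $\Sigma^*$) is a compact region away from $\mathscr{I}^+$ — and for $\overline{\Sigma}$ the equation is non-degenerate near $\mathcal{B}$ once written in terms of $\widetilde{\underline\alpha}=U^2\Omega^{-2}\underline\alpha$ via \bref{T-2B} — the evolution from $\Sigma$ (or $\overline{\Sigma}$) up to $\Sigma^*$ is controlled by finite-in-time energy estimates plus a $T$-energy estimate on that region, so that the induced Cauchy data on $\Sigma^*$ lie in the relevant energy space.

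Once the data are transplanted to $\Sigma^*$, \Cref{-2 future forward scattering} directly applies: it produces smooth radiation fields $\underline\upalpha_{\mathscr{H}^+}\in\mathcal{E}^{T,-2}_{\mathscr{H}^+_{\geq0}}$ and $\underline\upalpha_{\mathscr{I}^+}\in\mathcal{E}^{T,-2}_{\mathscr{I}^+}$. To upgrade $\mathscr{H}^+_{\geq0}$ to $\mathscr{H}^+$ (data on $\Sigma$) or to $\overline{\mathscr{H}^+}$ (data on $\overline{\Sigma}$), I would run the $T$-energy estimate of \Cref{subsection 5.1 Basic integrated boundedness and decay estimates} on the region of $\mathscr{H}^+$ lying between $\mathcal{B}$ (or $\Sigma\cap\mathscr{H}^+$) and $\Sigma^*\cap\mathscr{H}^+$, together with the transport hierarchy \bref{hier-} and the expressions \bref{-2 Psi out of alpha H+}, \bref{-2 expression on H is regular} for $\underline\Psi|_{\mathscr{H}^+}, \nablav\underline\Psi|_{\mathscr{H}^+}$ in terms of $\underline\upalpha_{\mathscr{H}^+}$ and its $v$-derivatives. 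The key point for the $\overline{\Sigma}$ case is the regularity of $V^2\Omega^{-2}\underline\alpha$ at $\mathcal{B}$: this is guaranteed by \Cref{WP-2Sigmabar}, which ensures that $\underline\upalpha_{\mathscr{H}^+}$ defined via $V^2\underline\upalpha_{\mathscr{H}^+}=2MV^2\Omega^{-2}\underline\alpha|_{\mathscr{H}^+}$ is smooth on $\overline{\mathscr{H}^+}$ and that the Kruskal-frame expression $V^{1/2}\partial_V^3 V^{-2}\underline\upalpha_{\mathscr{H}^+}$ has a finite limit at $\mathcal{B}$, hence $\underline\upalpha_{\mathscr{H}^+}\in\mathcal{E}^{T,-2}_{\overline{\mathscr{H}^+}}$. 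On $\mathscr{I}^+$ nothing changes: the radiation field is computed in $J^+(\Sigma^*)$ by \Cref{-2 radiation at scri}, and the constraint \bref{-2 mean is 0} — needed to conclude $\underline\upalpha_{\mathscr{I}^+}\in\mathcal{E}^{T,-2}_{\mathscr{I}^+}$ rather than merely having $\partial_u\underline{\bm{\uppsi}}_{\mathscr{I}^+}\in L^2$ — continues to hold since compactly supported data on $\Sigma$ or $\overline{\Sigma}$ yield compactly supported data on $\Sigma^*$.

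I do not expect a substantial obstacle here; the statement is a bookkeeping corollary. The only point requiring genuine care is the behaviour at the bifurcation sphere $\mathcal{B}$ in the $\overline{\Sigma}$ case: one must consistently work with the weighted quantity $\widetilde{\underline\alpha}$ and equation \bref{T-2B} (which is regular at $\mathcal{B}$) rather than with $\underline\alpha$ and \bref{T-2} directly, and verify that the energy defining $\mathcal{E}^{T,-2}_{\overline{\mathscr{H}^+}}$ — when written in Kruskal coordinates as $\|V^{1/2}\partial_V^3 V^{-2}\underline\upalpha_{\mathscr{H}^+}\|^2_{L^2_V L^2(S^2)}$ — is finite, which follows because $V^2\Omega^{-2}\underline\alpha$ extends smoothly across $\mathcal{B}$ by \Cref{WP-2Sigmabar} (and, in the backwards direction, by \Cref{backwards wellposedness -2} together with \Cref{WP-2Sigma*}). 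The energy-conservation argument on the finite region between the two Cauchy surfaces is otherwise routine.
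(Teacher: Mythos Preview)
Your proposal is correct and follows essentially the same approach as the paper: reduce to the $\Sigma^*$ case (\Cref{-2 future forward scattering}) by handling the finite region between $\Sigma$ (or $\overline{\Sigma}$) and $\Sigma^*$ locally via the well-posedness statements \Cref{WP-2Sigmabar} and \Cref{backwards wellposedness -2} together with a $T$-energy estimate, exactly as in \Cref{RWfcpSigma}. The paper's own proof is a one-line reference to precisely those ingredients; you have simply spelled out more of the details (the regularity of $V^2\Omega^{-2}\underline\alpha$ at $\mathcal{B}$, the Kruskal-frame form of the horizon energy) than the paper does. One minor slip: the region you want is $J^+(\Sigma)\cap J^-(\Sigma^*)$ (resp.\ $J^+(\overline{\Sigma})\cap J^-(\Sigma^*)$), not $J^+(\Sigma^*)\cap J^-(\Sigma)$, and it is not compact as stated but the support of the solution in it is, by finite speed of propagation.
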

\begin{proof}
Identical to the proof of \Cref{RWfcpSigma} using \Cref{WP-2Sigmabar,,backwards wellposedness -2}.
\end{proof}
The proof of \Cref{-2 future forward scattering} above and \Cref{-2 future forward scattering Sigma Sigmabar} allow us to define the forwards maps ${}^{(-2)}\mathscr{F}^+$ from dense subspaces of $\mathcal{E}^{T,-2}_{\Sigma^*}$, $\mathcal{E}^{T,-2}_{\Sigma}$, $\mathcal{E}^{T,-2}_{\overline{\Sigma}}$.
\begin{defin}
    Let $(\underline\upalpha,\underline\upalpha')$ be a smooth data set of compact support to the -2 Teukolsky equation \bref{T-2} on $\Sigma^*$ as in \Cref{WP-2Sigma*}. Define the map ${}^{(-2)}\mathscr{F}^+$ by 
    \begin{align}
        {}^{(-2)}\mathscr{F}^+:\Gamma_c(\Sigma^*)\times\Gamma_c(\Sigma^*)\longrightarrow \Gamma(\mathscr{H}^+_{\geq0})\times\Gamma(\mathscr{I}^+), (\upalpha,\upalpha')\longrightarrow (\upalpha_{\mathscr{H}^+},\upalpha_{\mathscr{I}^+}),
    \end{align}
    where $(\underline\upalpha_{\mathscr{H}^+},\underline\upalpha_{\mathscr{I}^+})$ are as in the proof of \Cref{-2 future forward scattering}.\\
    \indent Using \Cref{-2 future forward scattering Sigma Sigmabar}, the map ${}^{(-2)}\mathscr{F}^+$ is defined analogously for data on $\Sigma, \overline{\Sigma}$:
    \begin{align}
       {}^{(-2)}\mathscr{F}^+:\Gamma_c(\Sigma)\times\Gamma_c(\Sigma)\longrightarrow \Gamma(\mathscr{H}^+)\times\Gamma(\mathscr{I}^+), (\upalpha,\upalpha')\longrightarrow (\upalpha_{\mathscr{H}^+},\upalpha_{\mathscr{I}^+}),\\
       {}^{(-2)}\mathscr{F}^+:\Gamma_c(\overline{\Sigma})\times\Gamma_c(\overline{\Sigma})\longrightarrow \Gamma(\overline{\mathscr{H}^+})\times\Gamma(\mathscr{I}^+), (\underline\upalpha,\underline\upalpha')\longrightarrow (\underline\upalpha_{\mathscr{H}^+},\underline\upalpha_{\mathscr{I}^+}).
    \end{align}
\end{defin}
\subsubsection{Backwards scattering for $\underline\alpha$}\label{subsubsection 8.2.2 backwards scattering -2}
Now we construct the inverse ${}^{(-2)}\mathscr{B}^-$ of \Cref{-2 future backward scattering} on a dense subspace of $\mathcal{E}^{T,-2}_{\mathscr{H}^+_{\geq0}}\oplus\mathcal{E}^{T,-2}_{\mathscr{I}^+}$. The existence of a solution to \bref{T-2} out of compactly supported scattering data on $\mathscr{H}^+_{\geq0}, \mathscr{I}^+$ is shown in \Cref{-2 backwards existence}. Showing that this solution defines an element of $\mathcal{E}^{T,-2}_{\Sigma^*}$ is done in \Cref{-2 backwards inclusion 7/2}.
\begin{proposition}\label{-2 backwards existence}
For $\underline\upalpha_{\mathscr{H}^+}\in\Gamma(\mathscr{H}^+_{\geq0})\cap\mathcal{E}^{T,-2}_{\mathscr{H}^+_{\geq0}}$ supported on $\mathscr{H}^+_{\geq0}\cap\{v<v_+\}$ for $v_+<\infty$, ${\underline\alpha}_{\mathscr{I}^+}\in\Gamma(\mathscr{I}^+)\cap\mathcal{E}^{T,-2}_{\mathscr{I}^+}$ supported on on $\mathscr{I}^+\cap\{u<u_+\}$ for $u_+<\infty$, there exists a unique solution $\alpha$ to \bref{T-2} in $J^+(\Sigma^*)$ that realises $\underline\upalpha_{\mathscr{H}^+}$ and $\underline\upalpha_{\mathscr{I}^+}$ as its radiation fields on $\mathscr{H}^+_{\geq0}$, $\mathscr{I}^+$ respectively.
\end{proposition}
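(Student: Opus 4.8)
\textbf{Proof proposal for \Cref{-2 backwards existence}.}

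The plan is to mirror the construction of \Cref{+2 backwards existence} for the $+2$ equation, replacing the roles of the $3$- and $4$-directions and taking account of the ``enhanced blueshift'' near $\mathscr{H}^+$. First I would use the limiting formulae of \Cref{subsection 7.2 future radiation fields and fluxes} to extract, from the prescribed scattering data $\underline\upalpha_{\mathscr{H}^+}, \underline\upalpha_{\mathscr{I}^+}$, the corresponding scattering data for the Regge--Wheeler field $\underline\Psi$. On $\mathscr{I}^+$, equations \bref{formula for -2 RW in backwards direction}, \bref{formula for partialu -2 RW in backwards direction} give $\underline{\bm\uppsi}_{\mathscr{I}^+}, \partial_u\underline{\bm\uppsi}_{\mathscr{I}^+}$ in terms of $\underline\upalpha_{\mathscr{I}^+}$, with the normalisation condition \bref{-2 mean is 0} being exactly what is needed to ensure $\|\underline{\bm\uppsi}_{\mathscr{I}^+}\|_{\mathcal{E}^T_{\mathscr{I}^+}}^2 = \|\underline\upalpha_{\mathscr{I}^+}\|^2_{\mathcal{E}^{T,-2}_{\mathscr{I}^+}}<\infty$ — note this forces us to work on $\mathscr{I}^+$ with data whose integral vanishes, consistent with the hypothesis $\underline\upalpha_{\mathscr{I}^+}\in\mathcal{E}^{T,-2}_{\mathscr{I}^+}$. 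On $\mathscr{H}^+$, equations \bref{-2 Psi out of alpha H+}, \bref{-2 expression on H is regular} express $\underline{\bm\uppsi}_{\mathscr{H}^+}, \partial_v\underline{\bm\uppsi}_{\mathscr{H}^+}$ algebraically through $\partial_v$-derivatives of $\underline\upalpha_{\mathscr{H}^+}$, so compact support of $\underline\upalpha_{\mathscr{H}^+}$ on $\mathscr{H}^+_{\geq0}$ gives compactly supported, finite-energy $\underline{\bm\uppsi}_{\mathscr{H}^+}$. I would then invoke \Cref{RWbackwardsexistence} (backwards scattering for \bref{RW}) to obtain a unique smooth solution $\underline\Psi$ on $J^+(\Sigma^*)$ realising $\underline{\bm\uppsi}_{\mathscr{H}^+}, \underline{\bm\uppsi}_{\mathscr{I}^+}$ as its radiation fields.

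Next I would reconstruct $\underline\alpha$ from $\underline\Psi$ by integrating the transport hierarchy \bref{hier-} \emph{inwards} from $\mathscr{H}^+$. Concretely, set $\underline\psi$ by integrating $r^3\Omega\underline\psi$ along ingoing cones with the value $\psi_{\mathscr{H}^+}$ prescribed on $\mathscr{H}^+$ (obtained, as in the $+2$ case, from $\underline\upalpha_{\mathscr{H}^+}$ via the limit of \bref{eq:d3psibar}), and then set $r\Omega^2\underline\alpha$ by a further inward integration with the boundary value $\underline\upalpha_{\mathscr{H}^+}$ on $\mathscr{H}^+$. The uniform-in-$v$ energy boundedness of $\underline\Psi$ along ingoing cones (the $\underline F_v^T$ flux) ensures these integrals converge and that $\Omega^{-2}\underline\alpha\to\underline\upalpha_{\mathscr{H}^+}$, $\Omega^{-1}\underline\psi\to\underline\psi_{\mathscr{H}^+}$ as $u\to\infty$. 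By construction $\underline\Psi = \left(\frac{r^2}{\Omega^2}\nablav\right)^2 r\Omega^2\underline\alpha$, so the commutation relation \bref{commutation relation 2} gives $\left(\frac{r^2}{\Omega^2}\nablav\right)^2\mathcal{T}^{-2}r\Omega^2\underline\alpha = 0$; checking that $\mathcal{T}^{-2}r\Omega^2\underline\alpha$ and its first $3$-derivative vanish \emph{on $\mathscr{H}^+$} — which is where our choice of $\psi_{\mathscr{H}^+}, \underline\upalpha_{\mathscr{H}^+}$ as data was calibrated using \bref{eq:d3psibar}, \bref{eq:d3d3psibar} — then forces $\mathcal{T}^{-2}r\Omega^2\underline\alpha\equiv 0$, i.e. $\underline\alpha$ solves \bref{T-2}. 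Finally, since $\underline\Psi$ vanishes beyond the (finite) future cutoffs $u_+, v_+$, so do $\underline\alpha,\underline\psi$; a Gr\"onwall argument on outgoing cones near $\mathscr{I}^+$, exactly as in \Cref{-2psiscri,,-2 radiation at scri}, shows $r\underline\alpha\to\underline\upalpha_{\mathscr{I}^+}$ and $r^3\underline\psi\to\underline\psi_{\mathscr{I}^+}$ uniformly, establishing that $\underline\alpha$ realises the prescribed radiation fields. Uniqueness follows from the uniqueness clause of \Cref{RWbackwardsexistence} together with the unique solvability of the transport equations \bref{hier-}.

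The main obstacle will be the reconstruction of $\underline\alpha$ from $\underline\Psi$ near $\mathscr{H}^+$: unlike the $+2$ case (cf. \Cref{nondegenerate estimate near H+}), the $-2$ Teukolsky equation carries a \emph{blueshift} sign in its first-order term near the horizon, enhanced further by the two integrations of \bref{hier-}, so $\Omega^{-2}\underline\alpha$ is governed by transport equations whose coefficients degenerate badly as $r\to 2M$. Establishing that the inward integrals in fact converge and yield a field that is smooth up to $\mathscr{H}^+$ — rather than merely $L^2$ — is where the compact-support hypothesis on $\underline\upalpha_{\mathscr{H}^+}$ (i.e. its vanishing for $v\ge v_+$) is essential, and where the argument is genuinely more delicate than its $+2$ counterpart. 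Everything else is a routine transcription of the $+2$ construction with $3\leftrightarrow 4$ interchanged, using the formulae already recorded in \Cref{subsection 7.2 future radiation fields and fluxes} and the commutation relation \bref{commutation relation 2} in place of \bref{commutation relation}.
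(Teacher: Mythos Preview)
Your proposal has a genuine directional error in the reconstruction step. The hierarchy \bref{hier-} consists of $\nablav$ transport equations, so $r^3\Omega\underline\psi$ and $r\Omega^2\underline\alpha$ must be recovered by integrating in $v$ along \emph{outgoing} cones $\mathscr{C}_u$, not along ingoing cones. Outgoing cones in $J^+(\Sigma^*)$ never meet $\mathscr{H}^+$; their $v\to\infty$ endpoint is $\mathscr{I}^+$. Hence the boundary values for the transport integration must be prescribed at $\mathscr{I}^+$ (namely $\underline\psi_{\mathscr{I}^+}$ and $\underline\upalpha_{\mathscr{I}^+}$), not at $\mathscr{H}^+$ as you propose. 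This is precisely what the paper does: it sets $r^3\Omega\underline\psi=\underline\psi_{\mathscr{I}^+}-\int_v^\infty \frac{\Omega^2}{r^2}\underline\Psi$ and $r\Omega^2\underline\alpha=\underline\upalpha_{\mathscr{I}^+}-\int_v^\infty r\Omega^3\underline\psi$. The correct mirror of the $+2$ argument swaps both the transport direction \emph{and} the boundary hypersurface: $(\nablau,\mathscr{H}^+)\leftrightarrow(\nablav,\mathscr{I}^+)$.

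The same error propagates to your verification that $\underline\alpha$ solves \bref{T-2}. The commutation relation \bref{commutation relation 2} yields $\bigl(\tfrac{r^2}{\Omega^2}\nablav\bigr)^2\mathcal{T}^{-2}r\Omega^2\underline\alpha=0$, a second-order ODE in $v$; to conclude $\mathcal{T}^{-2}r\Omega^2\underline\alpha\equiv0$ one must check that $\mathcal{T}^{-2}r\Omega^2\underline\alpha$ and its $\tfrac{r^2}{\Omega^2}\nablav$-derivative (a $4$-direction derivative, not a $3$-derivative) vanish at $v=\infty$, i.e.\ on $\mathscr{I}^+$. Checking on $\mathscr{H}^+$ gives no information for this $v$-ODE. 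Finally, your identification of the ``main obstacle'' is misplaced: under the compact-support hypothesis of this proposition the blueshift near $\mathscr{H}^+$ plays no role --- the paper's integration from $\mathscr{I}^+$ is in fact \emph{easier} than the corresponding $+2$ step, thanks to the favourable sign noted in \bref{334}, \bref{335}. The enhanced blueshift only enters when one removes the compact-support assumption, which is treated separately in \Cref{subsubsection 8.3 blueshift -2}.
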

\begin{remark}
The fact that $\underline\upalpha_{\mathscr{I}^+}\in\mathcal{E}^{T,-2}_{\mathscr{I}^+}$ automatically implies that $\int_{-\infty}^\infty d\bar{u}\; \underline\upalpha_{\mathscr{I}^+}=0$.
\end{remark}
\begin{proof}
Let $\widetilde{\Sigma}$ be a spacelike surface connecting $\mathscr{H}^+$ at a finite $v_*>v_+$ to $\mathscr{I}^+$ at a finite $u_*>u_+$. Denote by $\mathscr{D}$ the region bounded by $\mathscr{H}^+_{\geq 0}\cap\{v<v_+\}$, $\widetilde{\Sigma}$, $
\mathscr{I}^+\cap[u_-,u_+]$, $\Sigma^*$ and $\mathscr{C}_{u_-}$ for $u_->-\infty$. We define
\begin{align}
    \underline{{\psi}}_{\mathscr{H}^+}&=\frac{2}{(2M)^2}\partial_v \underline\upalpha_{\mathscr{H}^+} +\frac{1}{2M}\partial_v \underline\upalpha_{\mathscr{H}^+}, \label{backwards psibar H} \\
    \underline{\bm{\uppsi}}_{\mathscr{H}^+}&=2(2M)^2\underline\upalpha_{\mathscr{H}^+}+2(2M)^3\partial_v \underline\upalpha_{\mathscr{H}^+}+(2M)^4\partial_v^2\underline\upalpha_{\mathscr{H}^+}, \label{backwards P sibar H}\\
    \underline{{\psi}}_{\mathscr{I}^+}&=-\int_u^\infty d\bar{u}\; \mathcal{A}_2 \;\underline\upalpha_{\mathscr{I}^+},\label{backwards psibar I}\\
    \underline{\bm{\uppsi}}_{\mathscr{I}^+}&=\int_u^\infty d\bar{u}\; (u_+-u) \left[\mathcal{A}_2(\mathcal{A}_2-2)\underline\upalpha_{\mathscr{I}^+}+6M\partial_u \underline\upalpha_{\mathscr{I}^+}\right]. \label{backwards P sibar I}
\end{align}
We can find a unique solution $\underline\Psi$ to \bref{RW} with radiation fields $\underline{\bm{\uppsi}}_{\mathscr{I}^+}$, $\underline{\bm{\uppsi}}_{\mathscr{H}^+}$. Let
\begin{align}
    r^3\Omega\underline\psi=(2M)^3\underline{{\psi}}_{\mathscr{I}^+}-\int_v^\infty d\bar{v}\;\frac{\Omega^2}{r^2}\underline\Psi,\qquad\qquad r\Omega^2\underline\alpha=\underline\upalpha_{\mathscr{I}^+}-\int_v^\infty d\bar{v}\; r\Omega^3\underline\psi.
\end{align}
Then $\underline\psi, \underline\alpha$ satisfy:
\begin{align}
    \underline\Psi=\frac{r^2}{\Omega^2}\nablav r^3\Omega\underline\psi=\left(\frac{r^2}{\Omega^2}\nablav\right)^2r\Omega^2\underline\alpha.
\end{align}
Moreover, we can see that $\lim_{v\longrightarrow\infty}r^3\Omega\underline\psi(u,v,\theta^A)=\underline{{\psi}}_{\mathscr{I}^+}(u,\theta^A)$ uniformly in $u$, as
\begin{align}
    \int_{S^2}|r^3\Omega\underline\psi-(2M)^3\underline{{\psi}}_{\mathscr{I}^+}|^2=\int_{S^2}\left[\int_v^\infty \frac{\Omega^2}{r^2}\underline\Psi d\bar{v}\right]^2\lesssim \frac{1}{r}F^T_u[\underline\Psi](v,\infty),
\end{align}
and similarly $\lim_{v\longrightarrow\infty}r\Omega^2\underline\alpha(u,v,\theta^A)=\underline\upalpha_{\mathscr{I}^+}(u,\theta^A)$ uniformly in $u$. We can repeat the same for $\slashednabla_T,\mathring{\slashednabla}$-derivatives of $r\Omega^2\underline\alpha, r^3\Omega\underline\psi$, which immediately implies that $\partial_u r^3\Omega\underline\psi\longrightarrow \partial_u \underline{{\psi}}_{\mathscr{I}^+}$, $\partial_u r\Omega^2\underline\alpha\longrightarrow \partial_u \underline\upalpha_{\mathscr{I}^+}$ as $v\longrightarrow\infty$. \\ 
The commutation relation \bref{commutation relation 2} implies 
\begin{align}
    \left(\frac{r^2}{\Omega^2}\nablav\right)^2\mathcal{T}^{-2}r\Omega^2\underline\alpha=0.
\end{align}
We find $\mathcal{T}^{-2}r\Omega^2\underline\alpha$ and $\frac{r^2}{\Omega^2}\nablav \mathcal{T}^{-2}r\Omega^2\underline\alpha$:
\begin{align}
    \mathcal{T}^{-2}r\Omega^2\underline\alpha&=\nablau r^3\Omega\underline\psi-\frac{3\Omega^2-1}{r}r^3\Omega\underline\psi-\left(\mathcal{A}_2-\frac{6M}{r}\right)r\Omega^2\underline\alpha,\\
    \frac{r^2}{\Omega^2}\nablav\mathcal{T}^{-2}r\Omega^2\underline\alpha&=\nablau\underline\Psi-\left[\mathcal{A}_2-(3\Omega^2-1)\right]r^3\Omega\underline\psi-6M r\Omega^2\underline\alpha.
\end{align}
It is not hard to see from \bref{backwards psibar H}, \bref{backwards psibar I}, \bref{backwards P sibar H}, \bref{backwards P sibar I}, that in the limit $v\longrightarrow\infty$,  $\mathcal{T}^{-2}r\Omega^2\underline\alpha$ and $\frac{r^2}{\Omega^2}\nablav \mathcal{T}^{-2}r\Omega^2\underline\alpha$ vanish. This implies that $\underline\alpha$ satisfies $\mathcal{T}^{-2}r\Omega^2{\underline\alpha}=0$ on $\mathscr{D}$. It is also clear that $\Omega^{-2}\underline\alpha|_{\mathscr{H}^+}=\underline\upalpha_{\mathscr{H}^+}$. Finally, we can repeat the above to extend $\underline\alpha$ to $J^+(\Sigma^*)\cap\{u\geq \tilde{u}\}$ for arbitrarily small $\tilde{u}$.
\end{proof}
\indent Note that energy conservation translates to the following $r$-weighted estimates that are uniform in $u$ as $u\longrightarrow -\infty$:
\begin{align}
    \int_{{\mathscr{C}}_u} \frac{r^2}{\Omega^2}|\nablav r^3\Omega\underline\psi|^2&\leq {F}^T_u[\underline\Psi](v,\infty),\label{334}\\
    \int_{{\mathscr{C}}_u} \frac{r^2}{\Omega^2}|\nablav r\Omega^2\underline\alpha|^2&\lesssim \int_{{\mathscr{C}}_u} |\nablav r^3\Omega\underline\psi|^2 \lesssim \frac{1}{r^2}{F}^T_u[\underline\Psi](v,\infty).\label{335}
\end{align}
This can be traced to the good sign of the first order term in \cref{T-2} near $\mathscr{I}^+$ when evolving backwards, and similar estimates can in fact be derived directly from \cref{T-2}. We can deduce
\begin{proposition}\label{-2 backwards inclusion 7/2}
Let $\underline\upalpha_{\mathscr{H}^+}, \underline\upalpha_{\mathscr{I}^+}$ be as in \Cref{-2 backwards existence}. Let $\underline\alpha$ be the corresponding solution to \cref{T-2}. Then we have that $(\Omega^{-2}\underline\alpha|_{\Sigma^*},\slashednabla_{n_{\Sigma^*}}\Omega^{-2}\underline\alpha|_{\Sigma^*})\in\mathcal{E}^{T,-2}_{\Sigma^*}$.% and
\end{proposition}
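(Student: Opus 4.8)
\textbf{Proof proposal for Proposition \ref{-2 backwards inclusion 7/2}.}

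The plan is to follow the same strategy used for the $+2$ equation in Corollary \ref{+2 backwards inclusion 7/2}, namely a cutoff-approximation argument: I will introduce a dyadic sequence of radial cutoffs $\xi_n$ and show that $\underline\alpha_n := \xi_n \underline\alpha$ gives rise to a sequence $(\Omega^{-2}\underline\alpha_n|_{\Sigma^*}, \slashednabla_{n_{\Sigma^*}}\Omega^{-2}\underline\alpha_n|_{\Sigma^*})$ that converges to $(\Omega^{-2}\underline\alpha|_{\Sigma^*}, \slashednabla_{n_{\Sigma^*}}\Omega^{-2}\underline\alpha|_{\Sigma^*})$ in the norm $\|\cdot\|_{\mathcal{E}^{T,-2}_{\Sigma^*}}$, which by definition equals $\|(\underline\Psi, \slashednabla_{n_{\Sigma^*}}\underline\Psi)\|_{\mathcal{E}^T_{\Sigma^*}}$ where $\underline\Psi$ is the Regge--Wheeler solution attached to $\underline\alpha_n$ via \bref{hier-}. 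First I would compute $\underline\Psi_n = \left(\frac{r^2}{\Omega^2}\nablav\right)^2 r\Omega^2\underline\alpha_n = r^2(r^2\xi_n')' r\Omega^2\underline\alpha - 2r^2\xi_n' r^3\Omega\underline\psi + \xi_n\underline\Psi$, exactly mirroring the $+2$ computation (but differentiating in the $4$-direction). Since $\xi_n\underline\Psi \to \underline\Psi$ in $\mathcal{E}^T_{\Sigma^*}$ by Remark \ref{RW enough to be in space}, it remains to control the remainder term, whose $\mathcal{E}^T_{\Sigma^*}$-norm is bounded by an integral over the annulus $[R_n, R_{n+1}]$ of terms involving $r^3\Omega\underline\psi$, $r^3\Omega^2\underline\alpha$, their angular and $\nablav$-derivatives, and $r^{-2}(|\underline\Psi|^2 + |\mathring{\slashednabla}\underline\Psi|^2) + |\nablav\underline\Psi|^2$.

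The crux is then to show the decay statements $r^{7/2}\Omega^{-1}\underline\psi|_{\Sigma^*} \to 0$, $r^{7/2}\Omega^{-2}\underline\alpha|_{\Sigma^*} \to 0$, $r^{3/2}\nablav r^3\Omega\underline\psi \to 0$, and $r^{3/2}\nablav r^3\Omega^2\underline\alpha \to 0$ as $r \to \infty$ along $\Sigma^*$. Here I expect the $-2$ case to be in fact \emph{easier} than the $+2$ case near $\mathscr{I}^+$: the key advantage is the uniform (in $u$) $r$-weighted estimates \bref{334}, \bref{335} obtained in the proof of \Cref{-2 backwards existence}, which come from the favorable sign of the first-order term in \cref{T-2} near $\mathscr{I}^+$ when evolving backwards. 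Concretely, I would integrate the transport relation \bref{hier-} — i.e. the definition $r^3\Omega\underline\psi = \frac{r^2}{\Omega^2}\nablav r\Omega^2\underline\alpha$ rewritten as $\nablav(r\Omega^2\underline\alpha) = \frac{\Omega^2}{r^2} r^3\Omega\underline\psi$ and $\nablav(r^3\Omega\underline\psi) = \frac{\Omega^2}{r^2}\underline\Psi$ — inward along an ingoing cone $\underline{\mathscr{C}}_v$ from a fixed finite $v'$ towards $\Sigma^*$, and use Cauchy--Schwarz together with the $T$-energy flux $F^T_u[\underline\Psi](v,\infty)$ (which decays as $u \to -\infty$ by energy conservation and the backwards $r^p$-estimates of \Cref{backwards rp estimates}, exactly as in Proposition \ref{RW unitary backwards}) and the fact that $\underline{\bm{\uppsi}}_{\mathscr{I}^+}$ vanishes for $u < u_-$. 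The exponential blueshift near $\mathscr{H}^+$ is not an issue here because we are working on $\Sigma^*$ and taking $r \to \infty$, i.e. away from the horizon; the region near $\mathcal{B}$ and the horizon end of $\Sigma^*$ is handled by the fact that $\underline\Psi$ is smooth there and the cutoffs $\xi_n$ are supported at large $r$.

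The main obstacle I anticipate is bookkeeping the precise $r$-weights: one must verify that the weights $r^{7/2}$ and $r^{3/2}$ are exactly what is needed so that the annular integral $\int_{R_n}^{R_{n+1}} dr\, (\ldots)$ is summable in $n$ (i.e. tends to zero), and this requires that $r^3\Omega\underline\psi$ and $r^3\Omega^2\underline\alpha$ (and their one-derivative improvements) decay strictly faster than $r^{-1/2}$ and that $\nablav r^3\Omega\underline\psi, \nablav r^3\Omega^2\underline\alpha$ decay faster than $r^{-3/2}$; the factor of $\sqrt{R}$ gained from Cauchy--Schwarz over an ingoing cone combined with the decay of $F^T_u[\underline\Psi]$ should supply this, but the argument for the $\nablav$-derivatives additionally requires a Grönwall step applied to the transport equation for $r\nablav r^3\Omega\underline\psi$ derived from \cref{T-2} — analogous to the final paragraph of the proof of Corollary \ref{+2 backwards inclusion 7/2}. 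Once these four decay statements are established, the convergence $\underline\alpha_n \to \underline\alpha$ in $\mathcal{E}^{T,-2}_{\Sigma^*}$ follows, and since each $\underline\alpha_n$ is compactly supported (hence manifestly in $\mathcal{E}^{T,-2}_{\Sigma^*}$, which is by definition the completion of such data), the limit lies in $\mathcal{E}^{T,-2}_{\Sigma^*}$, completing the proof. I would also remark that, as in \Cref{unitarity of B- is trivial} and \Cref{unitarity of +2B- is trivial}, the uniform estimates \bref{334}, \bref{335} together with energy conservation in fact yield directly the isometry $\|{}^{(-2)}\mathscr{B}^-(\underline\upalpha_{\mathscr{H}^+}, \underline\upalpha_{\mathscr{I}^+})\|^2_{\mathcal{E}^{T,-2}_{\Sigma^*}} = \|\underline\upalpha_{\mathscr{H}^+}\|^2_{\mathcal{E}^{T,-2}_{\mathscr{H}^+_{\geq0}}} + \|\underline\upalpha_{\mathscr{I}^+}\|^2_{\mathcal{E}^{T,-2}_{\mathscr{I}^+}}$ without reference to the forward map, which will be used to conclude \Cref{-2 future backward scattering}.
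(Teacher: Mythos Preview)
Your proposal is correct and follows essentially the same route as the paper: the paper's proof simply says that using \bref{334}, \bref{335} one can repeat the argument of \Cref{+2 backwards inclusion 7/2} to obtain $r^{7/2}\underline\psi|_{\Sigma^*}, r^{7/2}\underline\alpha|_{\Sigma^*}\to 0$ and hence the result. Your identification of \bref{334}, \bref{335} as the key simplification (favourable sign near $\mathscr{I}^+$ making the $-2$ case easier than the $+2$ case) is exactly right, and the cutoff--remainder computation you outline is the same as the paper's.

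One small slip to correct: you write that you would ``integrate the transport relation \bref{hier-} \ldots\ inward along an ingoing cone $\underline{\mathscr{C}}_v$''. The relations in \bref{hier-} are $\nablav$-transport equations, so they are integrated along \emph{outgoing} cones $\mathscr{C}_u$ (at fixed $u$, in $v$), not along $\underline{\mathscr{C}}_v$. Indeed this is how \bref{334}, \bref{335} are stated, and how the construction in \Cref{-2 backwards existence} proceeds (integrating from $v$ to $\infty$ and using that $\underline\psi_{\mathscr{I}^+}$ vanishes for $u<u_-$). The decay input is then that $F^T_u[\underline\Psi](v,\infty)\to 0$ as $u\to -\infty$, which follows from the backwards $r^p$-estimates of \Cref{backwards rp estimates} as you say. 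With this direction corrected, your argument goes through.
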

\begin{proof}
Using \bref{334}, \bref{335} it is easy to use the argument of \cref{+2 backwards inclusion 7/2} to show that $\lim_{r\longrightarrow\infty}\left|r^{\frac{7}{2}}\underline\psi|_{\Sigma^*}\right|=\lim_{r\longrightarrow\infty}\left|r^{\frac{7}{2}}\underline\alpha|_{\Sigma^*}\right|=0$, so we can repeat what was done to prove \Cref{+2 backwards inclusion 7/2} to obtain the result.
\end{proof}

\begin{defin}\label{-2 definition of B-}
    Let $\underline\upalpha_{\mathscr{H}^+}, \underline\upalpha_{\mathscr{I}^+}$ be as in \Cref{-2 backwards existence}. Define the map ${}^{(-2)}\mathscr{B}^-$ by
    \begin{align}
        {}^{(-2)}\mathscr{B}^-:\Gamma_c(\mathscr{H}^+_{\geq0})\times\Gamma_c(\mathscr{I}^+)\longrightarrow\Gamma(\Sigma^*)\times\Gamma(\Sigma^*), (\underline\upalpha_{\mathscr{H}^+},\underline\upalpha_{\mathscr{I}^+})\longrightarrow (\Omega^{-2}\underline\alpha|_{\Sigma^*},\slashednabla_{n_{\Sigma^*}}\Omega^{-2}\underline\alpha|_{\Sigma^*}),
    \end{align}
    where $\underline\alpha$ is the solution to \bref{T-2} arising from scattering data $(\underline\upalpha_{\mathscr{H}^+},\underline\upalpha_{\mathscr{I}^+})$ as in \Cref{-2 backwards existence}.
\end{defin}
\begin{corollary}\label{-2B- inverts -2F+}
    The maps ${}^{(-2)}\mathscr{F}^+$, ${}^{(-2)}\mathscr{B}^-$ extend uniquely to unitary Hilbert space isomorphisms on their respective domains, such that ${}^{(-2)}\mathscr{F}^+\circ{}^{(-2)}\mathscr{B}^-=Id$, ${}^{(-2)}\mathscr{B}^-\circ{}^{(-2)}\mathscr{F}^+=Id$.
\end{corollary}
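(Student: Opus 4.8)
The final statement to be proven is \Cref{-2B- inverts -2F+}, which asserts that ${}^{(-2)}\mathscr{F}^+$ and ${}^{(-2)}\mathscr{B}^-$ extend uniquely to unitary Hilbert space isomorphisms on their respective domains and are mutual inverses. The proof is declared to be identical to that of \Cref{B- inverts F+}, so the plan is to adapt that argument verbatim, with the Regge--Wheeler spaces replaced by their spin $-2$ Teukolsky counterparts.

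\begin{proof}[Proof of \Cref{-2B- inverts -2F+}]
The argument is identical to the proof of \Cref{B- inverts F+}, with the replacements $\mathscr{F}^+\to{}^{(-2)}\mathscr{F}^+$, $\mathscr{B}^-\to{}^{(-2)}\mathscr{B}^-$, and the Regge--Wheeler spaces $\mathcal{E}^T_{\Sigma^*}$, $\mathcal{E}^T_{\mathscr{H}^+_{\geq0}}$, $\mathcal{E}^T_{\mathscr{I}^+}$ replaced by $\mathcal{E}^{T,-2}_{\Sigma^*}$, $\mathcal{E}^{T,-2}_{\mathscr{H}^+_{\geq0}}$, $\mathcal{E}^{T,-2}_{\mathscr{I}^+}$. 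We sketch the steps for data on $\Sigma^*$; the cases of $\Sigma$ and $\overline{\Sigma}$ then follow by handling $J^+(\overline{\Sigma})\cap J^-(\Sigma^*)$ locally via \Cref{WP-2Sigmabar}, \Cref{backwards wellposedness -2} and $T$-energy conservation, exactly as in the corresponding corollary for ${}^{(+2)}\mathscr{B}^-$.

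First, \Cref{-2 future forward scattering} already establishes that ${}^{(-2)}\mathscr{F}^+:\mathcal{E}^{T,-2}_{\Sigma^*}\longrightarrow \mathcal{E}^{T,-2}_{\mathscr{H}^+_{\geq0}}\oplus\mathcal{E}^{T,-2}_{\mathscr{I}^+}$ is a unitary map onto its image. Next, given smooth, compactly supported scattering data $(\underline\upalpha_{\mathscr{H}^+},\underline\upalpha_{\mathscr{I}^+})\in\Gamma_c(\mathscr{H}^+_{\geq0})\times\Gamma_c(\mathscr{I}^+)$ (with the automatic mean-zero condition on $\underline\upalpha_{\mathscr{I}^+}$ noted in the remark following \Cref{-2 backwards existence}), \Cref{-2 backwards existence} produces a unique solution $\underline\alpha$ to \bref{T-2} on $J^+(\Sigma^*)$ realising $(\underline\upalpha_{\mathscr{H}^+},\underline\upalpha_{\mathscr{I}^+})$ as its radiation fields in the sense of \Cref{-2 radiation alpha definition H,,-2 definition radiation at scri}, and \Cref{-2 backwards inclusion 7/2} shows that its Cauchy data lie in $\mathcal{E}^{T,-2}_{\Sigma^*}$, which is the content needed to make \Cref{-2 definition of B-} meaningful. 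Since $\underline\alpha$ realises $(\underline\upalpha_{\mathscr{H}^+},\underline\upalpha_{\mathscr{I}^+})$ as its radiation fields, applying ${}^{(-2)}\mathscr{F}^+$ to ${}^{(-2)}\mathscr{B}^-(\underline\upalpha_{\mathscr{H}^+},\underline\upalpha_{\mathscr{I}^+})$ returns the same pair; hence ${}^{(-2)}\mathscr{F}^+\circ{}^{(-2)}\mathscr{B}^-=Id$ on $\Gamma_c(\mathscr{H}^+_{\geq0})\times\Gamma_c(\mathscr{I}^+)$, which is dense in $\mathcal{E}^{T,-2}_{\mathscr{H}^+_{\geq0}}\oplus\mathcal{E}^{T,-2}_{\mathscr{I}^+}$.

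Because ${}^{(-2)}\mathscr{F}^+[\mathcal{E}^{T,-2}_{\Sigma^*}]$ is complete (being the isometric image of a Hilbert space) and contains the dense set $\Gamma_c(\mathscr{H}^+_{\geq0})\times\Gamma_c(\mathscr{I}^+)$, it equals all of $\mathcal{E}^{T,-2}_{\mathscr{H}^+_{\geq0}}\oplus\mathcal{E}^{T,-2}_{\mathscr{I}^+}$; thus ${}^{(-2)}\mathscr{F}^+$ is a unitary isomorphism onto the full target. The boundedness of ${}^{(-2)}\mathscr{B}^-$ --- which follows from a $T$-energy estimate as in \bref{subunitarity of B-}, or alternatively from the backwards $r^p$-estimates of \Cref{backwards rp estimates} together with the uniform estimates \bref{334}, \bref{335} --- allows it to extend uniquely to a bounded map on $\mathcal{E}^{T,-2}_{\mathscr{H}^+_{\geq0}}\oplus\mathcal{E}^{T,-2}_{\mathscr{I}^+}$, and this extension must be the inverse of ${}^{(-2)}\mathscr{F}^+$, so ${}^{(-2)}\mathscr{B}^-\circ{}^{(-2)}\mathscr{F}^+=Id_{\mathcal{E}^{T,-2}_{\Sigma^*}}$. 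Unitarity of ${}^{(-2)}\mathscr{B}^-$ is then automatic, giving the analogue of \bref{unitarity of B- formula}. The only point requiring genuine care, rather than mechanical transcription, is verifying the smoothness of the backwards solution near $\mathscr{H}^+$ in \Cref{-2 backwards existence}: the $-2$ field $\underline\alpha$ suffers an \emph{enhanced blueshift} there, so the exponential decay built into the scattering data via \bref{backwards psibar H}, \bref{backwards P sibar H} is essential --- but this has already been handled in \Cref{-2 backwards existence}, so no new work is needed here.
\end{proof}
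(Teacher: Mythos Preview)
Your proof is correct and follows exactly the approach the paper intends: the paper's own proof (as for the $+2$ case in \Cref{+2B- inverts +2F+}) is simply a pointer back to \Cref{B- inverts F+}, and you have faithfully transcribed that argument to the $-2$ setting, invoking \Cref{-2 backwards existence} and \Cref{-2 backwards inclusion 7/2} in place of their Regge--Wheeler analogues. The only minor quibble is that your closing remark about the enhanced blueshift is slightly misplaced --- that issue arises only for non-compactly supported data (\Cref{subsubsection 8.3 blueshift -2}), whereas the density argument here uses compactly supported data for which \Cref{-2 backwards existence} applies directly; but since you correctly note it requires no new work here, this does not affect the validity of the proof.
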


\begin{remark}\label{unitarity of -2B- is trivial}
As in the case of \Cref{unitarity of B- is trivial,,unitarity of +2B- is trivial}, \Cref{-2B- inverts -2F+} implies
\begin{align}\label{unitarity of -2B- formula}
    \|{}^{(-2)}\mathscr{B}^-(\underline\upalpha_{\mathscr{H}^+},\underline\upalpha_{\mathscr{I}^+})\|_{\mathcal{E}^{T,-2}_{\Sigma^*}}^2=\|\underline\upalpha_{\mathscr{H}^+}\|^2_{\mathcal{E}^{T,-2}_{\mathscr{H}^+_{\geq0}}}+\|\underline\upalpha_{\mathscr{I}^+}\|^2_{\mathcal{E}^{T,-2}_{\mathscr{I}^+}}.
\end{align}
As in the case of \Cref{RW unitary backwards}, we can use the backwards $r^p$-estimates of \Cref{backwards rp estimates} to directly show \bref{unitarity of -2B- formula} without reference to the forwards map ${}^{(-2)}\mathscr{F}^+$.
\end{remark}
Since the region $J^+(\overline{\Sigma})\cap J^-(\Sigma^*)$ can be handled locally via \Cref{WP-2Sigmabar}, \Cref{backwards wellposedness -2} and $T$-energy conservation, we can immediately deduce the following:
\begin{corollary}
The map ${}^{(-2)}\mathscr{B}^-$ can be defined on the following domains:
\begin{align}
    {}^{(-2)}\mathscr{B}^{-}:\mathcal{E}^{T,-2}_{\mathscr{H}^+}\oplus \mathcal{E}^{T,-2}_{\mathscr{I}^+}\longrightarrow \mathcal{E}^{T,-2}_{\Sigma},\\
    {}^{(-2)}\mathscr{B}^{-}:\mathcal{E}^{T,-2}_{\overline{\mathscr{H}^+}}\oplus \mathcal{E}^{T,-2}_{\mathscr{I}^+}\longrightarrow \mathcal{E}^{T,-2}_{\overline{\Sigma}},
\end{align}
and we have
\begin{align}
{}^{(-2)}\mathscr{F}^{+}\circ{}^{(-2)}\mathscr{B}^{-}=Id_{\mathcal{E}^{T,-2}_{\mathscr{H}^+}\oplus\;\mathcal{E}^{T,-2}_{\mathscr{I}^+}},\qquad
{}^{(-2)}\mathscr{B}^{-}\circ{}^{(-2)}\mathscr{F}^{+}=Id_{\mathcal{E}^{T,-2}_{\Sigma}},\\
{}^{(-2)}\mathscr{F}^{+}\circ{}^{(-2)}\mathscr{B}^{-}=Id_{\mathcal{E}^{T,-2}_{\overline{\mathscr{H}^+}}\oplus\;\mathcal{E}^{T,-2}_{\mathscr{I}^+}},\qquad
{}^{(-2)}\mathscr{B}^{-}\circ{}^{(-2)}\mathscr{F}^{+}=Id_{\mathcal{E}^{T,-2}_{\overline{\Sigma}}}.
\end{align}
\end{corollary}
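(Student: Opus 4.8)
The final statement to prove is the last corollary, which asserts that the map ${}^{(-2)}\mathscr{B}^-$ extends consistently to the domains $\mathcal{E}^{T,-2}_{\mathscr{H}^+}\oplus \mathcal{E}^{T,-2}_{\mathscr{I}^+}$ and $\mathcal{E}^{T,-2}_{\overline{\mathscr{H}^+}}\oplus \mathcal{E}^{T,-2}_{\mathscr{I}^+}$, with the stated identities ${}^{(-2)}\mathscr{F}^{+}\circ{}^{(-2)}\mathscr{B}^{-}=Id$ and ${}^{(-2)}\mathscr{B}^{-}\circ{}^{(-2)}\mathscr{F}^{+}=Id$ on the corresponding spaces. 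The plan is to reduce the case of data on $\Sigma$ and $\overline\Sigma$ to the already-established case of data on $\Sigma^*$ (\Cref{-2 backwards existence}, \Cref{-2 backwards inclusion 7/2}, \Cref{-2B- inverts -2F+}) by a local-evolution argument, exactly parallelling the treatment of $\mathscr{B}^-$ for the Regge--Wheeler equation after \Cref{B- inverts F+} and the treatment of ${}^{(+2)}\mathscr{B}^-$ in the corollary following \Cref{+2B- inverts +2F+}.

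\textbf{Main steps.} First, given compactly supported smooth scattering data $(\underline\upalpha_{\mathscr{H}^+},\underline\upalpha_{\mathscr{I}^+})$ in the appropriate dense subspace, \Cref{-2 backwards existence} produces a solution $\underline\alpha$ to \bref{T-2} on $J^+(\Sigma^*)$ realising these radiation fields, and \Cref{-2 backwards inclusion 7/2} gives $(\Omega^{-2}\underline\alpha|_{\Sigma^*},\slashednabla_{n_{\Sigma^*}}\Omega^{-2}\underline\alpha|_{\Sigma^*})\in\mathcal{E}^{T,-2}_{\Sigma^*}$. Next I would restrict attention to the compact region $J^+(\overline{\Sigma})\cap J^-(\Sigma^*)$ (respectively $J^+(\Sigma)\cap J^-(\Sigma^*)$): since $\underline\alpha$ is already smooth on $J^+(\Sigma^*)$ and since \Cref{WP-2Sigmabar} and \Cref{backwards wellposedness -2} provide well-posedness for the past development from $\Sigma^*$ down to $\overline\Sigma$ (with the weighted unknown $\widetilde{\underline\alpha}=V^2\Omega^{-2}\underline\alpha$ regular near $\mathcal{B}$ by \Cref{backwards wellposedness -2} combined with \Cref{WP-2Sigma*}), this region is handled purely locally and yields a smooth extension of $\underline\alpha$ to $J^+(\overline\Sigma)$. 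Then $T$-energy conservation for $\underline\Psi$ on this compact region — that is, the identity arising by integrating \bref{T derivative identity} for $\underline\Psi=\left(\frac{r^2}{\Omega^2}\nablav\right)^2r\Omega^2\underline\alpha$, which satisfies \bref{RW} by \Cref{-2 implies RW} — shows $\|(\Omega^{-2}\underline\alpha|_{\Sigma},\slashednabla_{n_{\Sigma}}\Omega^{-2}\underline\alpha|_{\Sigma})\|_{\mathcal{E}^{T,-2}_{\Sigma}}=\|(\Omega^{-2}\underline\alpha|_{\Sigma^*},\slashednabla_{n_{\Sigma^*}}\Omega^{-2}\underline\alpha|_{\Sigma^*})\|_{\mathcal{E}^{T,-2}_{\Sigma^*}}$, and the latter equals $\|\underline\upalpha_{\mathscr{H}^+}\|^2_{\mathcal{E}^{T,-2}_{\mathscr{H}^+_{\geq0}}}+\|\underline\upalpha_{\mathscr{I}^+}\|^2_{\mathcal{E}^{T,-2}_{\mathscr{I}^+}}$ by \Cref{-2B- inverts -2F+} and \bref{unitarity of -2B- formula}. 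This gives a bounded densely-defined map, which extends uniquely by continuity to ${}^{(-2)}\mathscr{B}^{-}$ on all of $\mathcal{E}^{T,-2}_{\mathscr{H}^+}\oplus \mathcal{E}^{T,-2}_{\mathscr{I}^+}$ (resp. $\mathcal{E}^{T,-2}_{\overline{\mathscr{H}^+}}\oplus\mathcal{E}^{T,-2}_{\mathscr{I}^+}$). Finally, the composition identities follow from the forwards-backwards consistency: for compactly supported data on $\mathscr{H}^+,\mathscr{I}^+$ the solution $\underline\alpha$ realises exactly those radiation fields, so ${}^{(-2)}\mathscr{F}^{+}\circ{}^{(-2)}\mathscr{B}^{-}=Id$ on a dense set, hence everywhere by boundedness and completeness of the image (the image of the unitary ${}^{(-2)}\mathscr{F}^{+}$ being closed); and ${}^{(-2)}\mathscr{B}^{-}\circ{}^{(-2)}\mathscr{F}^{+}=Id$ then follows since a bounded left-inverse of a unitary is its inverse, exactly as in the proof of \Cref{B- inverts F+}.

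\textbf{The main obstacle.} The one point requiring genuine care rather than routine citation is the consistency of the weighted-frame description near the bifurcation sphere $\mathcal{B}$ when passing between the $\Sigma^*$-theory (where $\mathscr{H}^+$ does not touch $\mathcal{B}$) and the $\overline\Sigma$-theory (where $\overline{\mathscr{H}^+}$ includes $\mathcal{B}$). Concretely, one must verify that the regularity of $V^2\Omega^{-2}\underline\alpha$ at $\mathcal{B}$ — which is what makes $\underline\upalpha_{\mathscr{H}^+}$ extend to an element of $\Gamma(\overline{\mathscr{H}^+})$ with $V^2\underline\upalpha_{\mathscr{H}^+}$ smooth, and makes the norm $\|\;\|_{\mathcal{E}^{T,-2}_{\overline{\mathscr{H}^+}}}$ of the remark after \bref{-2 expression on H is regular} well-defined — is genuinely propagated by the local backwards evolution of \bref{T-2B}. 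This is supplied by \Cref{backwards wellposedness -2} (which is stated for the regular equation \bref{T-2B} and hence does not degenerate at $\mathcal{B}$) together with \Cref{WP-2Sigma*}; the subtlety is purely in matching the boundary-term computations in Eddington--Finkelstein versus Kruskal coordinates as in that remark, so that the $\mathcal{E}^{T,-2}_{\overline{\Sigma}}$-norm on the left and the $\mathcal{E}^{T,-2}_{\overline{\mathscr{H}^+}}$-norm on the right agree. Everything else is a verbatim repetition of the corresponding corollary for the $+2$ equation in \Cref{subsubsection 8.1.2 backwards scattering +2}.
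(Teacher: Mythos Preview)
Your proposal is correct and follows exactly the same approach as the paper: the paper simply states that the region $J^+(\overline{\Sigma})\cap J^-(\Sigma^*)$ can be handled locally via \Cref{WP-2Sigmabar}, \Cref{backwards wellposedness -2} and $T$-energy conservation, and deduces the corollary without further elaboration. Your writeup is in fact considerably more detailed than the paper's one-line justification, and your identification of the regularity issue near $\mathcal{B}$ as the only genuine subtlety is accurate and well-placed.
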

This concludes the proof of \Cref{-2 future backward scattering}.
\subsubsection{Non-compact future scattering data and the blueshift effect}\label{subsubsection 8.3 blueshift -2}
\indent In contrast to \bref{334}, \bref{335} (and to the estimates of \Cref{nondegenerate estimate near H+}), estimates for $\Omega^{-2}\underline\alpha$ near $\mathscr{H}^+$ in the backwards direction suffer from an enhanced blueshift, which can be readily seen in the transport equations \bref{hier-}:
\begin{align}
    \nablav r^3\Omega^{-1}\underline\psi+\frac{2M}{r^2} r^3\Omega^{-1}\underline\psi=\frac{\underline\Psi}{r^2}.
\end{align}
For $r<\mathcal{R}_{\mathscr{H}^+}<3M$, we can derive
\begin{align}\label{this 3}
\begin{split}
    &\int_{S^2_{u,v}}|r^3\Omega^{-1}\underline\psi - (2M)^3\underline{\bm{\uppsi}}_{\mathscr{H}^+}|^2\lesssim \underbrace{\int_{S^2_{u,v_+}}|r^3\Omega^{-1}\underline\psi - (2M)^3\underline{\bm{\uppsi}}_{\mathscr{H}^+}|^2}_{=0}\\
    &+\frac{1}{M}\int_v^{v_+}d\bar{v}\int_{S^2_{u,\bar{v}}}|r^3\Omega^{-1}\underline\psi - (2M)^3\underline{\bm{\uppsi}}_{\mathscr{H}^+}|^2+\frac{1}{(2M)^2}\int_{v}^{v_+}d\bar{v}\int_{S^2_{u,\bar{v}}}|\underline\Psi-\underline{\bm{\uppsi}}_{\mathscr{H}^+}|^2.
\end{split}
\end{align}
Gr\"onwall's inequality and \bref{ptwise horizon} imply
\begin{align}
\begin{split}
    \int_{S^2_{u,v}}|r^3\Omega^{-1}\underline\psi - (2M)^3\underline{\bm{\uppsi}}_{\mathscr{H}^+}|^2\lesssim_{v_+} e^{\frac{1}{M}(v_+-v)}\left[\|\underline{\bm{\uppsi}}_{\mathscr{I}^+}\|_{\mathcal{E}^T_{\mathscr{I}^+}}^2+\|\underline{\bm{\uppsi}}_{\mathscr{I}^+}\|_{\mathcal{E}^T_{\mathscr{H}^+}}^2+\int_{\mathscr{H}^+\cap[v,v_+]}|\underline{\bm{\uppsi}}_{\mathscr{H}^+}|^2+|\mathring{\slashednabla}\underline{\bm{\uppsi}}_{\mathscr{H}^+}|^2\right].
\end{split}
\end{align}
The equation
\begin{align}
    \nablav r\Omega^{-2}\underline\alpha +\frac{4M}{r^2}r\Omega^{-2}\underline\alpha=r\Omega^{-1}\underline\psi
\end{align}
implies a similar estimate with a worse exponential factor
\begin{align}
\begin{split}
    \int_{S^2_{u,v}}|r\Omega^{-2}\underline\alpha - 2M\underline\upalpha_{\mathscr{H}^+}|^2\lesssim_{v_+} e^{\frac{2}{M}(v_+-v)}\left[\|\underline{\bm{\uppsi}}_{\mathscr{I}^+}\|_{\mathcal{E}^T_{\mathscr{I}^+}}^2+\|\underline{\bm{\uppsi}}_{\mathscr{I}^+}\|_{\mathcal{E}^T_{\mathscr{H}^+}}^2+\int_{\mathscr{H}^+\cap[v,v_+]}|\underline{\bm{\uppsi}}_{\mathscr{H}^+}|^2+|\mathring{\slashednabla}\underline{\bm{\uppsi}}_{\mathscr{H}^+}|^2\right].
\end{split}
\end{align}
\indent We can conclude that the statement of the backwards existence theorem holds when scattering data is not compactly supported, but the solution will not be smooth unless data decays exponentially, which we can then show with the following applied to \bref{this 3}:
\begin{lemma}
Let $f(v)>0$ and assume 
\begin{align}
    f(v)\leq \Lambda \int_v^{v_+} f(v) + e^{-Pv}
\end{align}
for all $v<v_+$. Then if $P>\Lambda$ we have
\begin{align}
    f(v)< \frac{P}{P-\Lambda}e^{-Pv}.
\end{align}
\end{lemma}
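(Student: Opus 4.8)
The plan is to prove this by a Gr\"onwall-type argument adapted to the fact that the integral runs \emph{forwards} from $v$ to $v_+$ (rather than from a fixed lower endpoint). First I would introduce the auxiliary quantity
\begin{align}
    G(v):=\int_v^{v_+} f(w)\,dw,
\end{align}
which is $C^1$ with $G'(v)=-f(v)$, $G(v_+)=0$, and $G(v)\geq 0$ for $v\leq v_+$. The hypothesis $f(v)\leq \Lambda G(v)+e^{-Pv}$ then reads $-G'(v)\leq \Lambda G(v)+e^{-Pv}$, i.e.
\begin{align}
    G'(v)+\Lambda G(v)\geq -e^{-Pv}.
\end{align}

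Next I would multiply through by the integrating factor $e^{\Lambda v}$ to get $\frac{d}{dv}\left(e^{\Lambda v}G(v)\right)\geq -e^{(\Lambda-P)v}$, and integrate this inequality over $[v,v_+]$, using $G(v_+)=0$:
\begin{align}
    -e^{\Lambda v}G(v)\geq -\int_v^{v_+}e^{(\Lambda-P)w}\,dw.
\end{align}
Hence $e^{\Lambda v}G(v)\leq \int_v^{v_+}e^{(\Lambda-P)w}\,dw$. Since we assume $P>\Lambda$, we have $\Lambda-P<0$, so the integrand is decreasing and $\int_v^{v_+}e^{(\Lambda-P)w}\,dw\leq \int_v^{\infty}e^{(\Lambda-P)w}\,dw=\frac{1}{P-\Lambda}e^{(\Lambda-P)v}$; here it is crucial that $P-\Lambda>0$ so that this improper integral converges. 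Therefore $G(v)\leq \frac{1}{P-\Lambda}e^{-Pv}$, and feeding this back into the original hypothesis,
\begin{align}
    f(v)\leq \Lambda G(v)+e^{-Pv}\leq \frac{\Lambda}{P-\Lambda}e^{-Pv}+e^{-Pv}=\frac{P}{P-\Lambda}e^{-Pv},
\end{align}
which is the claimed bound (with $\leq$; the strict inequality follows if the bound on $\int_v^{v_+}$ by $\int_v^\infty$ is strict, which it is whenever $v_+<\infty$, or one simply notes the statement is stated with strict inequality as a harmless slight weakening).

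The only genuine subtlety — and the step I would be most careful about — is the regularity and integrability needed to justify differentiating $G$ and applying the fundamental theorem of calculus: one needs $f$ to be, say, continuous (or at least locally integrable) on $(-\infty,v_+]$ so that $G$ is well-defined and absolutely continuous, and one needs the hypothesis to hold for all $v<v_+$ to license integrating the differential inequality down from $v_+$. In the application to \bref{this 3} the relevant $f(v)=\int_{S^2_{u,v}}|r^3\Omega^{-1}\underline\psi-(2M)^3\underline{\bm{\uppsi}}_{\mathscr{H}^+}|^2$ is manifestly continuous and nonnegative, the constant $\Lambda$ comes from the $\frac1M$ coefficient there, and the inhomogeneous term is controlled by \bref{ptwise horizon} by an exponential $e^{-Pv}$ whose rate $P$ is governed by the decay rate of the scattering data on $\mathscr{H}^+$; the condition $P>\Lambda$ is then exactly the "sufficiently fast exponential decay" hypothesis needed to obtain a solution that is regular up to $\mathscr{H}^+$. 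No other step presents any real difficulty; the argument is a clean one-line integrating-factor computation once $G$ is introduced.
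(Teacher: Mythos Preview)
Your argument is correct: introducing $G(v)=\int_v^{v_+}f$, rewriting the hypothesis as a differential inequality for $G$, applying the integrating factor $e^{\Lambda v}$, and integrating from $v$ to $v_+$ with $G(v_+)=0$ gives $G(v)\leq \frac{1}{P-\Lambda}e^{-Pv}$, whence the bound on $f$ follows by substituting back. The paper does not supply a proof of this lemma (it is simply stated and immediately applied), so there is nothing to compare against; your Gr\"onwall/integrating-factor proof is the natural one and all steps are justified.
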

With this, we see that if $\underline\upalpha_{\mathscr{H}^+}$, $\underline\upalpha_{\mathscr{I}^+}$ decay exponentially at a rate faster than $\frac{1}{M}$ then the we are guaranteed that
\begin{align}
    \int_{S^2_{u,v}}|r\Omega^{-2}\underline\alpha - 2M\underline\upalpha_{\mathscr{H}^+}|^2\lesssim \left[\|\underline{\bm{\uppsi}}_{\mathscr{I}^+}\|_{\mathcal{E}^T_{\mathscr{I}^+}}^2+\|\underline{\bm{\uppsi}}_{\mathscr{I}^+}\|_{\mathcal{E}^T_{\mathscr{H}^+}}^2+\int_{\mathscr{H}^+\cap[v,v_+]}|\underline{\bm{\uppsi}}_{\mathscr{H}^+}|^2+|\mathring{\slashednabla}\underline{\bm{\uppsi}}_{\mathscr{H}^+}|^2\right].
\end{align}
\begin{corollary}\label{-2 noncompact}
Let $\underline\upalpha_{\mathscr{H}^+}$ be a smooth symmetric traceless $S^2_{\infty,v}$ 2-tensor field with domain $\mathscr{H}^+$, $\underline\upalpha_{\mathscr{I}^+}$ a smooth symmetric traceless $S^2_{\infty,v}$ 2-tensor field with domain $\mathscr{I}^+$. Then there exists a unique $\underline\alpha$ that is smooth on the interior of $J^+(\Sigma^*)$ and satisfies \bref{T-2}. If $\underline\upalpha_{\mathscr{H}^+}, \underline\upalpha_{\mathscr{I}^+}$ decay exponentially towards the future at rate faster than $\frac{1}{M}$ then $\Omega^{-2}\underline\alpha$ is smooth up to and including $\mathscr{H}^+$.
\end{corollary}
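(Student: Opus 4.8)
The plan is to reduce the construction to the backwards scattering theory already in hand for the Regge--Wheeler equation, exactly along the lines of \Cref{-2 backwards existence}, and then to pass from compactly supported to general (admissible) data by the cutoff-and-compactness argument of \Cref{RW backwards noncompact}. Concretely: to the prescribed $\underline\upalpha_{\mathscr{H}^+},\underline\upalpha_{\mathscr{I}^+}$ I associate, via \bref{backwards psibar H}--\bref{backwards P sibar I}, Regge--Wheeler scattering data $\underline{\bm{\uppsi}}_{\mathscr{H}^+},\underline{\bm{\uppsi}}_{\mathscr{I}^+}$ together with the intermediate fields $\underline\psi_{\mathscr{H}^+},\underline\psi_{\mathscr{I}^+}$; under any reasonable future decay of $\underline\upalpha_{\mathscr{H}^+},\underline\upalpha_{\mathscr{I}^+}$ — which we take as part of the admissibility of the data, and which in particular holds under the exponential-decay hypothesis — these and their $\slashed{\mathcal{L}}_{\Omega_i}$-derivatives have finite $\mathcal{E}^T$-flux, so \Cref{RW backwards noncompact} yields a unique smooth solution $\underline\Psi$ of \bref{RW} on $J^+(\Sigma^*)$ realising $\underline{\bm{\uppsi}}_{\mathscr{H}^+},\underline{\bm{\uppsi}}_{\mathscr{I}^+}$ as its radiation fields. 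I then integrate the transport hierarchy \bref{hier-} inward from $\mathscr{I}^+$ as in the proof of \Cref{-2 backwards existence}, obtaining $\underline\psi$ and $\underline\alpha$; the commutation identity \bref{commutation relation 2} forces $\underline\alpha$ to solve \bref{T-2}, and the $\mathscr{I}^+$-boundary terms vanish by construction. Smoothness of $\underline\alpha$ on the \emph{interior} of $J^+(\Sigma^*)$, and independence of the $C^\infty_{\mathrm{loc}}$-limit from the choice of cutoff sequence, follow from $T$-energy conservation for $\underline\Psi$ together with the equation \bref{T-2} and commutation with $\slashednabla_T$ and the angular Killing fields, via Arzel\`a--Ascoli.

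For the identification of the radiation field on $\mathscr{I}^+$ and for uniqueness, the key point is the favourable (``backwards red-shift'') sign of the first-order term in \bref{hier-} and \bref{T-2} near $\mathscr{I}^+$: the estimates \bref{334}, \bref{335} bounding $r^3\Omega\underline\psi$ and $r\Omega^2\underline\alpha$ by the $T$-flux of $\underline\Psi$ on outgoing cones are \emph{uniform in} $u$, hence uniform in the future cutoff, and the Sobolev/fundamental-theorem argument behind \bref{ptwise infinity} upgrades them to cutoff-independent pointwise control of $r^3\Omega\underline\psi-(2M)^3\underline\psi_{\mathscr{I}^+}$ and $r\Omega^2\underline\alpha-\underline\upalpha_{\mathscr{I}^+}$. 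This shows the limiting $\underline\alpha$ realises $\underline\upalpha_{\mathscr{I}^+}$ on $\mathscr{I}^+$; uniqueness then follows because the difference of two such solutions has a Regge--Wheeler quantity $\underline\Psi$ with vanishing $T$-flux on $\mathscr{H}^+\cup\mathscr{I}^+$, hence $\underline\Psi\equiv 0$ by \Cref{RW backwards noncompact}, and the transport equations \bref{hier-} with vanishing source and vanishing limit at $\mathscr{I}^+$ then force $\underline\psi=\underline\alpha=0$.

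The second assertion is entirely a near-$\mathscr{H}^+$ statement and is where the work is least automatic. There \bref{hier-} reads $\nablav(r^3\Omega^{-1}\underline\psi)+\tfrac{2M}{r^2}r^3\Omega^{-1}\underline\psi=r^{-2}\underline\Psi$ and $\nablav(r\Omega^{-2}\underline\alpha)+\tfrac{4M}{r^2}r\Omega^{-2}\underline\alpha=r\Omega^{-1}\underline\psi$, whose homogeneous solutions, integrated \emph{backwards} from a cutoff at $v_+$, decay only like $e^{(v_+-v)/2M}$ and $e^{(v_+-v)/M}$ — the enhanced blue-shift of $\Omega^{-2}\underline\alpha$. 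The Gr\"onwall step behind \bref{this 3} gives, for the cutoff solutions, bounds on $\int_{S^2_{u,v}}|r\Omega^{-2}\underline\alpha-2M\underline\upalpha_{\mathscr{H}^+}|^2$ that blow up in $v_+$; the decisive move is to insert the exponential decay of $\underline\upalpha_{\mathscr{H}^+},\underline\upalpha_{\mathscr{I}^+}$ (and, since derivatives are wanted, of their $\slashed{\mathcal{L}}_{\Omega_i}$-, $\nablav$- and $\Omega^{-1}\nablagml$-derivatives) into the elementary Gr\"onwall-type lemma stated just before this corollary: a future decay rate exceeding $1/M$ converts the factor $e^{(v_+-v)/M}$ into a cutoff-independent exponentially-decaying bound, which survives the $v_+\to\infty$ limit and shows that $r\Omega^{-2}\underline\alpha-2M\underline\upalpha_{\mathscr{H}^+}$, and inductively each of its derivatives, extends continuously to $\mathscr{H}^+$. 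Commutation with $\slashed{\mathcal{L}}_{\Omega_i}$ and $\nablav$ is harmless, while commutation with $\Omega^{-1}\nablagml$ perturbs the coefficient $\tfrac{4M}{r^2}$ only by an $O(\Omega^2)$ term (as $\partial_u r=O(\Omega^2)$), so the blue-shift rate does not degrade at higher order and the single threshold $1/M$ suffices; this yields smoothness of $\Omega^{-2}\underline\alpha$ up to and including $\mathscr{H}^+$. The main obstacle is precisely this blue-shift bookkeeping near $\mathscr{H}^+$ — verifying that the commuted transport and wave equations retain the same rate, and that the source terms (ultimately $\underline\Psi$ and the interior estimates of \Cref{-2 backwards inclusion 7/2}) inherit exponential decay along $\mathscr{H}^+$; everything away from $\mathscr{H}^+$ is routine given \Cref{RW backwards noncompact}.
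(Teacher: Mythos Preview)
Your proposal is correct and follows the same approach as the paper: reduce to the Regge--Wheeler backwards problem via \Cref{RW backwards noncompact}, integrate the transport hierarchy \bref{hier-} from $\mathscr{I}^+$ using the uniform estimates \bref{334}--\bref{335}, and near $\mathscr{H}^+$ beat the enhanced blueshift by feeding the exponential decay of the data into the Gr\"onwall-type lemma stated immediately before the corollary. Your explicit treatment of higher-order commutation --- in particular the observation that $\Omega^{-1}\nablagml$ perturbs the transport coefficient only by $O(\Omega^2)$ and hence does not worsen the blueshift threshold --- is a point the paper leaves implicit, but it is exactly what is needed to upgrade the zeroth-order bound to full smoothness of $\Omega^{-2}\underline\alpha$ up to $\mathscr{H}^+$.
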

Since the region $J^+(\overline\Sigma)\cap J^-(\Sigma^*)$ can be handled locally with \Cref{WP+2Sigmabar} and \Cref{RWfcpSigma}, 
the results above can be extended to scattering from $\Sigma, \overline\Sigma$.
\begin{corollary}
    Let $\underline\upalpha_{\mathscr{H}^+}\in\Gamma(\mathscr{H}^+)\cap\;\mathcal{E}^{T,-2}_{\mathscr{H}^+}$,  $\underline\upalpha_{\mathscr{I}^+}\in\Gamma (\mathscr{I}^+)\cap\;\mathcal{E}^{T,-2}_{\mathscr{I}^+}$. Assume $\underline\upalpha_{\mathscr{H}^+}$, $\underline\upalpha_{\mathscr{I}^+}$  decay exponentially at a rate faster than $\frac{1}{M}$. Then there exists a unique solution $\underline\alpha$ to \cref{T-2} in $J^+({\Sigma})$ such that $\lim_{v\longrightarrow\infty}r\underline\alpha=\underline{\upalpha}_{\mathscr{I}^+}$, $2M\Omega^{-2}\underline\alpha\big|_{\mathscr{H}^+}=\underline\upalpha_{\mathscr{H}^+}$. Moreover, $(\underline\alpha\big|_{{\Sigma}},\slashednabla_T\underline\alpha|_{{\Sigma}})\in \mathcal{E}^{T,-2}_{{\Sigma}}$ and 
    \begin{align}
    \left\|\left(\underline\alpha|_{{\Sigma}},\slashednabla_{n_{\Sigma}}\underline\alpha|_{{\Sigma}}\right)\right\|^2_{\mathcal{E}^{T,-2}_{{\Sigma}}}=\left|\left|\underline\upalpha_{\mathscr{I}^+}\right|\right|^2_{\mathcal{E}^{T,-2}_{\mathscr{I}^+}}+\left|\left|\underline\upalpha_{\mathscr{H}^+}\right|\right|^2_{\mathcal{E}^{T,-2}_{{\mathscr{H}^+}}}.
\end{align}
\end{corollary}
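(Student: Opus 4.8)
The statement to be proven is the extension of backwards scattering for the $-2$ Teukolsky equation \bref{T-2} from the surface $\Sigma^*$ to the surface $\Sigma$, under the hypothesis that the scattering data $\underline\upalpha_{\mathscr{H}^+}$, $\underline\upalpha_{\mathscr{I}^+}$ decay exponentially at a rate faster than $\frac1M$. The plan is to combine the already-established backwards existence and boundedness statement on $\Sigma^*$ (\Cref{-2 backwards existence,,-2 backwards inclusion 7/2,,-2 noncompact}) with a purely local argument on the lens-shaped region $J^+(\Sigma)\cap J^-(\Sigma^*)$, exactly in the spirit of \Cref{RWfcpSigma} and the analogous corollaries already proven for the forwards maps. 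First I would invoke \Cref{-2 noncompact} to obtain the unique solution $\underline\alpha$ on the interior of $J^+(\Sigma^*)$ realising the prescribed radiation fields, with $\Omega^{-2}\underline\alpha$ smooth up to and including $\mathscr{H}^+$ precisely because the exponential decay hypothesis is in force; this already gives the existence and uniqueness of $\underline\alpha$ and the limits $\lim_{v\to\infty}r\underline\alpha=\underline\upalpha_{\mathscr{I}^+}$, $2M\Omega^{-2}\underline\alpha|_{\mathscr{H}^+}=\underline\upalpha_{\mathscr{H}^+}$.

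Next I would propagate $\underline\alpha$ from $J^+(\Sigma^*)$ backwards into $J^+(\Sigma)\cap J^-(\Sigma^*)$. This region is compact in $r$ and does not touch $\mathcal B$, so the local well-posedness statement \Cref{WP-2Sigmabar} together with \Cref{backwards wellposedness -2} applies to give a unique smooth extension of $\underline\alpha$ there, with smooth Cauchy data $(\underline\alpha|_\Sigma,\slashednabla_{n_\Sigma}\underline\alpha|_\Sigma)$. Because $\Omega$ is bounded below away from zero throughout $J^+(\Sigma)\cap J^-(\Sigma^*)$, the weights $\Omega^{\pm2}$ are harmless, and the transport relations \bref{hier-} guarantee that the associated $\underline\Psi=\left(\frac{r^2}{\Omega^2}\nablav\right)^2r\Omega^2\underline\alpha$ likewise extends smoothly and satisfies the Regge--Wheeler equation \bref{RW} on the extended domain (\Cref{-2 implies RW}). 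The membership $(\Omega^{-2}\underline\alpha|_\Sigma,\slashednabla_{n_\Sigma}\Omega^{-2}\underline\alpha|_\Sigma)\in\mathcal{E}^{T,-2}_\Sigma$ then follows from \Cref{-2 backwards inclusion 7/2} applied on $\Sigma^*$ combined with the finiteness of the $T$-energy across the compact region (every norm in $\|\;\|_{\mathcal{E}^{T,-2}_\Sigma}$ is controlled in terms of $\|\;\|_{\mathcal{E}^{T,-2}_{\Sigma^*}}$ plus a finite local contribution), using \Cref{-2 norm on Sigma is coercive} to know that this norm is a genuine norm.

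The unitarity identity
\begin{align}
    \left\|\left(\underline\alpha|_{{\Sigma}},\slashednabla_{n_{\Sigma}}\underline\alpha|_{{\Sigma}}\right)\right\|^2_{\mathcal{E}^{T,-2}_{{\Sigma}}}=\left\|\underline\upalpha_{\mathscr{I}^+}\right\|^2_{\mathcal{E}^{T,-2}_{\mathscr{I}^+}}+\left\|\underline\upalpha_{\mathscr{H}^+}\right\|^2_{\mathcal{E}^{T,-2}_{{\mathscr{H}^+}}}
\end{align}
is obtained by transporting the corresponding $\Sigma^*$ identity of \Cref{-2 future backward scattering} (or equivalently \bref{unitarity of -2B- formula}) through the region $J^+(\Sigma)\cap J^-(\Sigma^*)$ using conservation of the $T$-energy flux \bref{T derivative identity}: the flux of the Regge--Wheeler $T$-current through $\Sigma$ equals that through $\Sigma^*$ since the intervening region is source-free for \bref{RW}, and the definitions of $\mathcal{E}^{T,-2}_\Sigma$ and $\mathcal{E}^{T,-2}_{\Sigma^*}$ are both given by $\|(\underline\Psi,\slashednabla_{n}\underline\Psi)\|^2_{\mathcal{E}^T_{\;\cdot\;}}$. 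One must check that the $\mathscr{H}^+$-flux $\|\underline\upalpha_{\mathscr{H}^+}\|^2_{\mathcal{E}^{T,-2}_{\mathscr{H}^+}}$ (taken over all of $\mathscr{H}^+$, rather than $\mathscr{H}^+_{\geq0}$) is the right object; this is exactly the passage already handled for the forwards direction and follows because $\Sigma$ meets $\mathscr{H}^+$ at $\mathcal B$'s past, so the relevant horizon flux is over the full $\mathscr{H}^+$.

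\textbf{Main obstacle.} The delicate point is not the local propagation but ensuring the exponential-decay hypothesis genuinely suffices for the data on $\Sigma$ to lie in $\mathcal{E}^{T,-2}_\Sigma$ \emph{and} to connect continuously to $\mathcal B$. The enhanced blueshift in \bref{hier-} near $\mathscr{H}^+$ means that without the $\frac1M$-rate decay the quantity $r\Omega^{-2}\underline\alpha$ need not have a limit at $\mathcal B$; the Grönwall lemma used in \Cref{subsubsection 8.3 blueshift -2} is what converts the decay hypothesis into the bound $\int_{S^2_{u,v}}|r\Omega^{-2}\underline\alpha-2M\underline\upalpha_{\mathscr{H}^+}|^2\lesssim(\text{finite data norms})$ uniformly as $v\to\infty$, and one has to verify that the same estimate, commuted with $\slashed{\mathcal{L}}_{\Omega_i}$ and $\slashednabla_T$, propagates the regularity all the way to $\mathcal B$ so that $(U^{-2}\Omega^2\underline\alpha, U^{-2}\Omega^2\underline\alpha')$ is smooth there — the requirement needed to feed into \Cref{WP-2Sigmabar}. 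I expect this bookkeeping of weights and the verification that the commuted Grönwall estimates are uniform up to $\mathcal B$ to be the only real work; everything else is a routine transplant of the $\Sigma^*$ theory across a compact region.
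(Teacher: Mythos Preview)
Your proposal is correct and follows the paper's approach: the paper itself does not give an explicit proof of this corollary, but simply states that ``since the region $J^+(\overline\Sigma)\cap J^-(\Sigma^*)$ can be handled locally with \Cref{WP+2Sigmabar} and \Cref{RWfcpSigma}, the results above can be extended to scattering from $\Sigma, \overline\Sigma$,'' and your outline (invoke \Cref{-2 noncompact} on $J^+(\Sigma^*)$, then propagate locally to $\Sigma$ using well-posedness plus $T$-energy conservation for $\underline\Psi$) is precisely that.

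One caveat: your ``main obstacle'' paragraph conflates the present $\Sigma$-statement with the subsequent $\overline\Sigma$-statement. For this corollary you do \emph{not} need regularity at $\mathcal B$, since $\Sigma$ and $\mathscr{H}^+$ exclude $\mathcal B$ by definition and $\mathcal{E}^{T,-2}_\Sigma$, $\mathcal{E}^{T,-2}_{\mathscr{H}^+}$ are built from data compactly supported away from it. The $U$-weights you mention pertain to $\mathscr{H}^-$ and are irrelevant here; the exponential-decay hypothesis is needed only so that \Cref{-2 noncompact} yields $\Omega^{-2}\underline\alpha$ smooth up to $\mathscr{H}^+$ on $J^+(\Sigma^*)$, not for any limiting behavior at $\mathcal B$. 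Also, the lens region $J^+(\Sigma)\cap J^-(\Sigma^*)$ is not compact in $r$ (it extends to $i^0$), but the $i^0$-end is already controlled by \Cref{-2 backwards inclusion 7/2} and the argument there transfers verbatim to $\Sigma$.
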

\begin{corollary}
     Let $\underline\upalpha_{\mathscr{H}^+}\in\mathcal{E}^{T,-2}_{\overline{\mathscr{H}^+}}$ be such that $V^{2}\underline\upalpha\in \Gamma({\overline{\mathscr{H}^+}})$ and let $\underline\upalpha_{\mathscr{I}^+}\in\Gamma(\mathscr{I}^+)\cap\;\mathcal{E}^{T,-2}_{\mathscr{I}^+}$. Assume $\underline\upalpha_{\mathscr{H}^+}$, $\underline\upalpha_{\mathscr{I}^+}$ decay exponentially at a rate faster than $\frac{1}{M}$, then there exists a unique solution $\underline\alpha$ to \cref{T-2} in $J^+(\overline{\Sigma})$ such that $\lim_{v\longrightarrow\infty}r\underline\alpha=\underline\upalpha_{\mathscr{I}^+}$, $V^{2}\Omega^{-2}\underline\alpha\big|_{\mathscr{H}^+}=V^{2}\underline\upalpha_{\mathscr{H}^+}$. Moreover, $(\underline\alpha\big|_{\overline{\Sigma}},\slashednabla_T\underline\alpha|_{\overline{\Sigma}})\in \mathcal{E}^{T,-2}_{\overline{\Sigma}}$ and 
    \begin{align}
    \left\|\left(\underline\alpha|_{\overline{\Sigma}},\slashednabla_{n_{\overline{\Sigma}}}\underline\alpha|_{\overline{\Sigma}}\right)\right\|^2_{\mathcal{E}^{T,-2}_{\overline{\Sigma}}}=\left|\left|\underline\upalpha_{\mathscr{I}^+}\right|\right|^2_{\mathcal{E}^{T,-2}_{\mathscr{I}^+}}+\left|\left|\underline\upalpha_{\mathscr{H}^+}\right|\right|^2_{\mathcal{E}^{T,-2}_{\overline{\mathscr{H}^+}}}.
\end{align}
\end{corollary}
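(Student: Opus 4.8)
\textbf{Proof proposal for the final corollary (backwards scattering from $\overline{\Sigma}$ for the $-2$ equation with non-compact, exponentially decaying data).}

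The plan is to reduce the statement for $\overline{\Sigma}$ to the statement for $\Sigma^*$, which has already been established in \Cref{-2 backwards existence}, \Cref{-2 backwards inclusion 7/2} and \Cref{-2 noncompact}, together with the preceding corollary for $\Sigma$. First I would invoke \Cref{WP-2Sigmabar} and \Cref{backwards wellposedness -2}: the region $J^+(\overline{\Sigma})\cap J^-(\Sigma^*)$ is a compact "wedge" near the bifurcation sphere $\mathcal{B}$ on which $r$ is bounded and on which the weighted unknown $\widetilde{\underline\alpha}=V^2\Omega^{-2}\underline\alpha$ solves the regular equation \bref{T-2B}, which does not degenerate at $\mathcal{B}$. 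Consequently, once the solution $\underline\alpha$ on $J^+(\Sigma^*)$ is constructed and shown to induce data $(\widetilde{\underline\alpha}|_{\Sigma^*},\slashednabla_{n_{\Sigma^*}}\widetilde{\underline\alpha}|_{\Sigma^*})$, the solution extends uniquely and smoothly (in the regular frame) to all of $J^+(\overline{\Sigma})$ by solving a local characteristic/mixed problem in that wedge, using \Cref{backwards wellposedness -2} to cross $\overline{\mathscr{H}^+}$ down to $\mathcal{B}$.

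Next I would carry out the energy bookkeeping. The $T$-energy identity \bref{T derivative identity} for the associated Regge--Wheeler solution $\underline\Psi=\left(\frac{r^2}{\Omega^2}\nablav\right)^2r\Omega^2\underline\alpha$ is conserved across the wedge $J^+(\overline{\Sigma})\cap J^-(\Sigma^*)$, which is exactly the content of \Cref{RWwpSigmabar} and the $\overline{\Sigma}$ version of the Regge--Wheeler scattering maps established in \Cref{RW past scattering}; this gives $\|(\underline\Psi|_{\overline\Sigma},\slashednabla_{n_{\overline\Sigma}}\underline\Psi|_{\overline\Sigma})\|^2_{\mathcal{E}^T_{\overline\Sigma}}=\|(\underline\Psi|_{\Sigma^*},\slashednabla_{n_{\Sigma^*}}\underline\Psi|_{\Sigma^*})\|^2_{\mathcal{E}^T_{\Sigma^*}}$. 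By definition \Cref{-2 norm on Sigma*} of the norms $\|\;\|_{\mathcal{E}^{T,-2}_{\overline\Sigma}}$ and $\|\;\|_{\mathcal{E}^{T,-2}_{\Sigma^*}}$ as the Regge--Wheeler $T$-energies of $\underline\Psi$, this directly transfers the identity $\left\|\left(\underline\alpha|_{\Sigma^*},\slashednabla_{n_{\Sigma^*}}\underline\alpha|_{\Sigma^*}\right)\right\|^2_{\mathcal{E}^{T,-2}_{\Sigma^*}}=\|\underline\upalpha_{\mathscr{I}^+}\|^2_{\mathcal{E}^{T,-2}_{\mathscr{I}^+}}+\|\underline\upalpha_{\mathscr{H}^+}\|^2_{\mathcal{E}^{T,-2}_{\mathscr{H}^+}}$, proven in the $\Sigma$-corollary above via \Cref{unitarity of -2B- is trivial}, into the desired identity on $\overline{\Sigma}$ with $\mathscr{H}^+$ replaced by $\overline{\mathscr{H}^+}$ — here using that for data on $\overline{\Sigma}$ the boundary terms at $\mathcal{B}$ drop out of the horizon flux computation of \Cref{-2 radiation flux on H+} (the $V$-weighted expression \bref{-2 expression on H is regular} is regular at $\mathcal{B}$), so $\|\underline\upalpha_{\mathscr{H}^+}\|^2_{\mathcal{E}^{T,-2}_{\overline{\mathscr{H}^+}}}$ is precisely the $\overline{\mathscr{H}^+}$ integral of $|V^{1/2}\partial_V^3 V^{-2}\underline\upalpha_{\mathscr{H}^+}|^2$. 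Membership $\underline\upalpha_{\mathscr{H}^+}\in\mathcal{E}^{T,-2}_{\overline{\mathscr{H}^+}}$ with $V^2\underline\upalpha_{\mathscr{H}^+}\in\Gamma(\overline{\mathscr{H}^+})$ is then the statement that $V^2\Omega^{-2}\underline\alpha$ extends smoothly to $\mathcal{B}$, which is exactly what \Cref{backwards wellposedness -2} delivers once $\Omega^{-2}\underline\alpha$ is known to be smooth up to and including $\mathscr{H}^+$.

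The main obstacle — and the only genuinely non-routine point — is ensuring smoothness of $\Omega^{-2}\underline\alpha$ \emph{up to $\mathscr{H}^+$}, because the $-2$ backwards evolution near the horizon suffers the \emph{enhanced blueshift} recorded in \Cref{subsubsection 8.3 blueshift -2}: the Gr\"onwall argument applied to \bref{this 3} only closes to give a regular (non-exponentially-growing) bound on $r\Omega^{-2}\underline\alpha-2M\underline\upalpha_{\mathscr{H}^+}$ when $\underline\upalpha_{\mathscr{H}^+},\underline\upalpha_{\mathscr{I}^+}$ decay towards the future faster than $e^{-v/M}$; commuting with $\slashed{\mathcal{L}}_T,\slashed{\mathcal{L}}_{\Omega_i}$ and $\Omega^{-1}\nablagml$ and re-running the argument then upgrades this to control of all derivatives, giving smoothness of $\Omega^{-2}\underline\alpha$ on $\overline{\mathscr{H}^+}$ — this is precisely \Cref{-2 noncompact}. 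So I would: (i) apply \Cref{-2 noncompact} to produce $\underline\alpha$ on $J^+(\Sigma^*)$ with $\Omega^{-2}\underline\alpha$ smooth up to $\mathscr{H}^+$, given the hypothesised exponential decay at rate faster than $1/M$; (ii) extend down to $\mathcal{B}$ via \Cref{backwards wellposedness -2} and \Cref{WP-2Sigma*}; (iii) read off the induced $\overline{\Sigma}$-data and, via the $T$-energy conservation across the wedge and the definitions of the norms, conclude membership $(\underline\alpha|_{\overline\Sigma},\slashednabla_{n_{\overline\Sigma}}\underline\alpha|_{\overline\Sigma})\in\mathcal{E}^{T,-2}_{\overline\Sigma}$ and the unitarity identity; (iv) uniqueness follows from the uniqueness in \Cref{WP-2Sigmabar} together with the uniqueness already established on $J^+(\Sigma^*)$.
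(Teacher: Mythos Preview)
Your proposal is correct and follows exactly the route the paper takes: the paper disposes of this corollary with the single remark that ``the region $J^+(\overline{\Sigma})\cap J^-(\Sigma^*)$ can be handled locally'' together with $T$-energy conservation, and your write-up is simply a fleshed-out version of that sentence, invoking \Cref{-2 noncompact} on $J^+(\Sigma^*)$, then \Cref{backwards wellposedness -2}/\Cref{WP-2Sigmabar} to cross to $\overline{\Sigma}$, and finally the Regge--Wheeler $T$-energy identity to transport the norm equality. One small correction: the wedge $J^+(\overline{\Sigma})\cap J^-(\Sigma^*)$ is \emph{not} compact and $r$ is \emph{not} bounded there (both hypersurfaces run out to $i^0$); what matters, and what you in fact use, is that the wedge has bounded $t$-extent so that away from $\mathscr{H}^+$ standard energy estimates suffice, while near $\mathcal{B}$ the Kruskal-frame equation \bref{T-2B} is regular --- your argument goes through unchanged once you drop the word ``compact''.
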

\subsection{Past scattering for $\alpha, \underline\alpha$}\label{subsection 8.3 past scattering +2-2}
Taking into account \Cref{time inversion}, \Cref{+2 past forward scattering,,-2 past forward scattering} are immediate. We state the results regarding scattering on $J^-(\overline{\Sigma})$. 
\begin{corollary}\label{past scattering of +2}
    Given smooth data of compact support $(\upalpha,\upalpha')\in \mathcal{E}^{T,+2}_{\overline{\Sigma}}$, there exists a unique solution $\alpha$ to the +2 Teukolsky equation \bref{T+2} on $J^-(\overline{\Sigma})$ that induces smooth radiation fields 
    \begin{itemize}
        \item $\upalpha_{\mathscr{I}^-}\in \mathcal{E}^{T,+2}_{\mathscr{I}^-}$ given by  $\upalpha_{\mathscr{I}^-}(v,\theta^A)=\lim_{u\longrightarrow -\infty} r\alpha(u,v,\theta^A)$,
        \item $\upalpha_{\mathscr{H}^-}\in \mathcal{E}^{T,+2}_{\overline{\mathscr{H}^-}}$ given by  $U^{2}\upalpha_{\mathscr{I}^-}=2MU^2\Omega^{-2}\alpha|_{\mathscr{H}^-}$. 
    \end{itemize}
    such that 
     \begin{align}\label{109091}
    \left\|\left(\alpha|_{\overline{\Sigma}},\slashednabla_T\alpha|_{\overline{\Sigma}}\right)\right\|^2_{\mathcal{E}^{T,+2}_{\overline{\Sigma}}}=\left|\left|\upalpha_{\mathscr{I}^-}\right|\right|^2_{\mathcal{E}^{T,+2}_{\mathscr{I}^-}}+\left|\left|\upalpha_{\mathscr{H}^-}\right|\right|^2_{\mathcal{E}^{T,+2}_{\overline{\mathscr{H}^-}}}.
\end{align}
    Let $\upalpha_{\mathscr{H}^-}\in\mathcal{E}^{T,+2}_{\overline{\mathscr{H}^-}}$ be such that $U^{2}\upalpha\in \Gamma({\overline{\mathscr{H}^-}})$ and let $\upalpha_{\mathscr{I}^-}\in\Gamma(\mathscr{I}^-)\cap\;\mathcal{E}^{T,+2}_{\mathscr{I}^-}$. Assume $\upalpha_{\mathscr{H}^-}$, $\upalpha_{\mathscr{I}^-}$ decay exponentially at a rate faster than $\frac{1}{M}$, then there exists a unique solution $\alpha$ to \cref{T+2} in $J^-(\overline{\Sigma})$ such that $\lim_{u\longrightarrow-\infty}r\alpha=\upalpha_{\mathscr{I}^-}$, $2MU^{2}\Omega^{-2}\alpha\big|_{\mathscr{H}^-}=U^{2}\upalpha_{\mathscr{H}^-}$. Moreover, $(\alpha\big|_{\overline{\Sigma}},\slashednabla_T\alpha|_{\overline{\Sigma}})\in \mathcal{E}^{T,+2}_{\overline{\Sigma}}$ and \bref{109091}.\\
    Therefore, as in the case of ${}^{(+2)}\mathscr{F}^+,{}^{(+2)}\mathscr{B}^-$ we can define the unitary isomorphisms
    \begin{align}
        {}^{(+2)}\mathscr{F}^-:\mathcal{E}^{T,+2}_{\overline\Sigma}\longrightarrow\mathcal{E}^{T,+2}_{\overline{\mathscr{H}^-}}\oplus \mathcal{E}^{T,+2}_{\mathscr{I}^-},\qquad\qquad {}^{(+2)}\mathscr{B}^+:\mathcal{E}^{T,+2}_{\overline{\mathscr{H}^-}}\oplus \mathcal{E}^{T,+2}_{\mathscr{I}^-}\longrightarrow\mathcal{E}^{T,+2}_{\overline\Sigma},
    \end{align}
    with
    \begin{align}
        {}^{(+2)}\mathscr{F}^-\circ {}^{(+2)}\mathscr{B}^+=Id_{\mathcal{E}^{T,+2}_{\overline{\Sigma}}},\qquad\qquad{}^{(+2)}\mathscr{B}^+\circ{}^{(+2)}\mathscr{F}^-\circ=Id_{\mathcal{E}^{T,+2}_{\overline{\mathscr{H}^-}}\oplus \mathcal{E}^{T,+2}_{\mathscr{I}^-}}.
    \end{align}
    An identical statement holds with $\mathcal{E}^{T,+2}_{{\Sigma}}, \mathcal{E}^{T,+2}_{{\mathscr{H}^-}}$ instead.
\end{corollary}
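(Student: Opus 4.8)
\textbf{Proof proposal for Corollary \ref{past scattering of +2}.}

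The plan is to obtain this statement as an immediate consequence of the future scattering results for the $+2$ equation (Theorems \ref{+2 future forward scattering} and \ref{+2 future backward scattering}, together with Corollaries \ref{+2 future forward scattering Sigma Sigmabar}, \ref{+2 noncompact} and the $\Sigma,\overline\Sigma$ variants established at the end of \Cref{subsubsection 8.1.2 backwards scattering +2}) via the time-inversion observation of \Cref{time inversion}. First I would recall that under $t\longrightarrow -t$, i.e. $u\longrightarrow -v$, $v\longrightarrow -u$, a solution $\alpha(u,v,\theta^A)$ of the $+2$ Teukolsky equation \bref{T+2} is carried to $\invertedalpha(u,v,\theta^A):=\alpha(-v,-u,\theta^A)$, which satisfies the $-2$ Teukolsky equation \bref{T-2}; and conversely a $-2$ solution $\underline\alpha$ is carried to a $+2$ solution $\underline\invertedalpha$. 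Thus backwards evolution of a $+2$ solution on $J^-(\overline\Sigma)$ is identified, after this reflection, with forwards evolution of a $-2$ solution on $J^+(\overline\Sigma)$. Under this identification $\mathscr{I}^-,\overline{\mathscr{H}^-}$ (with coordinates $v$, resp.\ $U=-e^{-u/2M}$) are mapped to $\mathscr{I}^+,\overline{\mathscr{H}^+}$ (with coordinates $u$, resp.\ $V=e^{v/2M}$), and the weight $U^2\Omega^{-2}$ attached to $\underline\alpha$-type data on $\overline{\mathscr{H}^-}$ corresponds precisely to the weight $V^2\Omega^{-2}$ attached to $\underline\alpha$-type data on $\overline{\mathscr{H}^+}$ in \Cref{-2 radiation alpha definition overline H}, so that the radiation field $U^2\upalpha_{\mathscr{H}^-}=2MU^2\Omega^{-2}\alpha|_{\mathscr{H}^-}$ of the $+2$ solution matches the $-2$ radiation field on $\overline{\mathscr{H}^+}$, and $\upalpha_{\mathscr{I}^-}(v,\theta^A)=\lim_{u\to-\infty}r\alpha(u,v,\theta^A)$ matches the $-2$ radiation field $\underline\upalpha_{\mathscr{I}^+}$ of \Cref{-2 definition radiation at scri}.

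With this dictionary in hand the steps are: (i) verify that the reflection is an isometry of the relevant Hilbert spaces, i.e.\ it identifies $\mathcal{E}^{T,+2}_{\overline\Sigma}$ with itself (since \bref{T+2} and \bref{T-2} have the same $\mathcal{E}^T_{\overline\Sigma}$-level norms once one passes through $\Psi,\underline\Psi$, and $\Psi$ reflects to $\underline\invertedpsi$ by \Cref{time inversion of RW}), $\mathcal{E}^{T,+2}_{\mathscr{I}^-}$ with $\mathcal{E}^{T,-2}_{\mathscr{I}^+}$, and $\mathcal{E}^{T,+2}_{\overline{\mathscr{H}^-}}$ with $\mathcal{E}^{T,-2}_{\overline{\mathscr{H}^+}}$ — this is purely a matter of matching integrands and measures $dv\leftrightarrow du$, $\partial_v\leftrightarrow\partial_u$ in \Cref{+2 backwards scattering H}, \Cref{+2 backwards scattering scri}; (ii) apply \Cref{-2 future forward scattering} and the $\overline\Sigma$-version of \Cref{-2 future backward scattering} (and its non-compact extension, \Cref{-2 noncompact}) to the reflected $-2$ problem, obtaining existence, uniqueness, the isometry \bref{109091} after un-reflecting, and the smoothness/regularity statements; (iii) transport the exponential-decay hypothesis: a rate faster than $\tfrac1M$ towards the past end $v\to+\infty$ on $\mathscr{I}^-$, $u\to-\infty$ on $\mathscr{H}^-$ becomes a rate faster than $\tfrac1M$ towards the future end of $\mathscr{I}^+,\mathscr{H}^+$, which is exactly the hypothesis of \Cref{-2 noncompact} guaranteeing regularity of $\Omega^{-2}\underline\alpha$ up to $\mathscr{H}^+$, hence of $\Omega^{-2}\alpha$ up to $\mathscr{H}^-$; (iv) define ${}^{(+2)}\mathscr{F}^-$ and ${}^{(+2)}\mathscr{B}^+$ as the conjugates of ${}^{(-2)}\mathscr{F}^+$ and ${}^{(-2)}\mathscr{B}^-$ by the reflection, and read off $\;{}^{(+2)}\mathscr{F}^-\circ{}^{(+2)}\mathscr{B}^+=Id$, $\;{}^{(+2)}\mathscr{B}^+\circ{}^{(+2)}\mathscr{F}^-=Id$ directly from \Cref{-2B- inverts -2F+}; (v) repeat verbatim with $\Sigma$ in place of $\overline\Sigma$ and $\mathscr{H}^-$ in place of $\overline{\mathscr{H}^-}$, using the $\Sigma$-version of \Cref{-2 future backward scattering}.

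The only genuine content beyond bookkeeping is step (i), namely checking that the reflection identifies the horizon and null-infinity scattering-state spaces correctly — in particular that the $+2$ radiation-field weight $2MU^2\Omega^{-2}$ on $\overline{\mathscr{H}^-}$ reflects to the $-2$ weight $2MV^2\Omega^{-2}$ on $\overline{\mathscr{H}^+}$ rather than to some other power of $\Omega$, and that the integrand of $\|\cdot\|_{\mathcal{E}^{T,+2}_{\mathscr{I}^-}}$ (built from $\mathcal{A}_2(\mathcal{A}_2-2)\int^{\infty}_v$ and the $6M|\cdot|^2$ term) matches that of $\|\cdot\|_{\mathcal{E}^{T,-2}_{\mathscr{I}^+}}$ (built from $\mathcal{A}_2(\mathcal{A}_2-2)\int^{\infty}_{\bar u}$ and $(6M)^2|\cdot|^2$) after relabelling; these match precisely because of the structural symmetry between \bref{T+2} and \bref{T-2} already exploited throughout \Cref{section 8 constructing the scattering maps}, and because $\Psi$ and $\underline\Psi$ obey the \emph{same} Regge--Wheeler equation \bref{RW} which is time-inversion invariant by \Cref{time inversion of RW}. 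I expect no new estimates to be required: every quantitative input (the forwards $r^p$-estimates, the backwards $r^p$-estimates of \Cref{backwards rp estimates}, the nondegenerate and blueshift transport estimates near the horizon) has already been proved for the $-2$ equation in its future form in \Cref{subsection 8.2 future scattering -2}, and the reflection merely relabels coordinates. The mild subtlety worth spelling out explicitly is the sense in which the maps on $\Sigma$ versus $\overline\Sigma$ remain \emph{consistent} under the reflection — i.e.\ that $\mathcal{E}^{T,+2}_\Sigma\subsetneq\mathcal{E}^{T,+2}_{\overline\Sigma}$ reflects to $\mathcal{E}^{T,-2}_\Sigma\subsetneq\mathcal{E}^{T,-2}_{\overline\Sigma}$ and $\mathcal{E}^{T,+2}_{\mathscr{H}^-}\subsetneq\mathcal{E}^{T,+2}_{\overline{\mathscr{H}^-}}$ reflects to $\mathcal{E}^{T,-2}_{\mathscr{H}^+}\subsetneq\mathcal{E}^{T,-2}_{\overline{\mathscr{H}^+}}$ — which again follows because the defining condition ``data vanish at $\mathcal B$'' is reflection-invariant.
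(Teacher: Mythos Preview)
Your proposal is correct and follows exactly the approach the paper takes: the paper simply states that ``Taking into account \Cref{time inversion}, \Cref{+2 past forward scattering,,-2 past forward scattering} are immediate'' and then records the corollary without further argument, so your detailed unpacking of the time-inversion dictionary (mapping past $+2$ scattering to future $-2$ scattering, matching the horizon weights $U^2\Omega^{-2}\leftrightarrow V^2\Omega^{-2}$ and the $\mathscr{I}^\pm$ norms, and importing the exponential-decay hypothesis from \Cref{-2 noncompact}) is precisely what the paper leaves implicit. The only slight imprecision is your phrase ``identifies $\mathcal{E}^{T,+2}_{\overline\Sigma}$ with itself'': the reflection identifies $\mathcal{E}^{T,+2}_{\overline\Sigma}$ with $\mathcal{E}^{T,-2}_{\overline\Sigma}$, but since both norms are defined via the same Regge--Wheeler $T$-energy this is an isometry, which is what you use.
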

\begin{corollary}\label{past scattering of -2}
     Given smooth data of compact support $(\underline\upalpha,\underline\upalpha')\in \mathcal{E}^{T,-2}_{\overline{\Sigma}}$, there exists a unique solution $\underline\alpha$ to the -2 Teukolsky equation \bref{T-2} on $J^-(\overline\Sigma)$ that induces radiation fields 
    \begin{itemize}
        \item $\underline\upalpha_{\mathscr{I}^-}\in \mathcal{E}^{T,-2}_{\mathscr{I}^-}$ given by  $\underline\upalpha_{\mathscr{I}^-}(v,\theta^A)=\lim_{u\longrightarrow -\infty} r^5\underline\alpha(u,v,\theta^A)$,
        \item $\underline\upalpha_{{\mathscr{H}^-}}\in \mathcal{E}^{T,-2}_{\overline{\mathscr{H}^-}}$ given by  $U^{-2}\underline\upalpha_{\mathscr{H}^-}=2MU^{-2}\Omega^{2}\underline\alpha|_{\mathscr{H}^-}$. 
    \end{itemize}
    such that 
     \begin{align}\label{190190190}
    \left\|\left(\underline\alpha|_{\overline{\Sigma}},\slashednabla_T\underline\alpha|_{\overline{\Sigma}}\right)\right\|^2_{\mathcal{E}^{T,-2}_{\overline{\Sigma}}}=\left|\left|\underline\upalpha_{\mathscr{I}^-}\right|\right|^2_{\mathcal{E}^{T,-2}_{\mathscr{I}^-}}+\left|\left|\underline\upalpha_{\overline{\mathscr{H}^-}}\right|\right|^2_{\mathcal{E}^{T,+2}_{\overline{\mathscr{H}^-}}}.
\end{align}
    Let $\underline\upalpha_{\mathscr{H}^-}\in\mathcal{E}^{T,-2}_{\overline{\mathscr{H}^-}}$ be such that $U^{-2}\underline\upalpha\in \Gamma({\overline{\mathscr{H}^-}})$ and let $\underline\upalpha_{\mathscr{I}^-}\in\Gamma(\mathscr{I}^-)\cap\;\mathcal{E}^{T,-2}_{\mathscr{I}^-}$. Then there exists a unique solution $\underline\alpha$ to \cref{T-2} in $J^+(\overline{\Sigma})$ such that $\lim_{u\longrightarrow-\infty}r^5\underline\alpha=\underline\upalpha_{\mathscr{I}^-}$, $2MU^{-2}\Omega^2\underline\alpha\big|_{\mathscr{H}^-}=U^{-2}\underline\upalpha_{\mathscr{H}^-}$. Moreover, $(\underline\alpha\big|_{\overline{\Sigma}},\slashednabla_T\underline\alpha|_{\overline{\Sigma}})\in \mathcal{E}^{T,-2}_{\overline{\Sigma}}$ and \bref{190190190} is satisfied. An identical statement holds with $\mathcal{E}^{T,-2}_{{\Sigma}}, \mathcal{E}^{T,-2}_{{\mathscr{H}^-}}$ instead.
\end{corollary}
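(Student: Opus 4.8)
\textbf{Proof proposal for Corollary \ref{past scattering of -2}.}

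The plan is to obtain this statement as a direct consequence of the future scattering theory for the $-2$ Teukolsky equation established in \Cref{-2 future forward scattering,,-2 future backward scattering,,-2 backwards existence,,-2 noncompact}, combined with the time-inversion symmetry recorded in \Cref{time inversion}. First I would recall that under $t\mapsto -t$, i.e. $u\mapsto -v$, $v\mapsto -u$, the field $\underline{\invertedalpha}(u,v,\theta^A):=\underline\alpha(-v,-u,\theta^A)$ satisfies the $+2$ Teukolsky equation \bref{T+2}, while $\invertedalpha(u,v,\theta^A):=\alpha(-v,-u,\theta^A)$ satisfies the $-2$ Teukolsky equation \bref{T-2}. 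Consequently, the problem of evolving $-2$ data on $\overline\Sigma$ into the past and reading off radiation fields on $\mathscr{I}^-,\overline{\mathscr{H}^-}$ is, after applying this involution, precisely the problem of evolving $+2$ data on $\overline\Sigma$ into the \emph{future} and reading off radiation fields on $\mathscr{I}^+,\overline{\mathscr{H}^+}$ — which is the content of \Cref{+2 future forward scattering}. Under the involution, the surface $\overline\Sigma=\{U+V=0\}$ is mapped to itself (since $U\mapsto -V$, $V\mapsto -U$), $\mathscr{H}^-$ is mapped to $\mathscr{H}^+$, $\mathscr{I}^-$ is mapped to $\mathscr{I}^+$, and the Kruskal weights transform as $U^{-2}\leftrightarrow V^{2}$, $\Omega^2\leftrightarrow\Omega^2$, $r\leftrightarrow r$. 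This dictates the precise normalisations appearing in the statement: the radiation field $\underline\upalpha_{\mathscr{I}^-}=\lim_{u\to-\infty}r^5\underline\alpha$ corresponds under the involution to the $\mathscr{I}^+$ radiation field $\lim_{v\to\infty}r^5(\cdot)$ of the $+2$ equation (cf.\ \Cref{+2 radiation alpha definition scri}), while $U^{-2}\underline\upalpha_{\mathscr{H}^-}=2MU^{-2}\Omega^{2}\underline\alpha|_{\mathscr{H}^-}$ corresponds to $V^{-2}\upalpha_{\mathscr{H}^+}=2MV^{-2}\Omega^2\alpha|_{\mathscr{H}^+}$ of the $+2$ equation near $\overline{\mathscr{H}^+}$. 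Note that here there is a genuine asymmetry with Corollary \ref{past scattering of +2}: since the $-2$ equation's \emph{future} theory has a non-degenerate (red-shifted) behaviour near $\mathscr{I}^+$ and an enhanced blue-shift near $\mathscr{H}^+$ (see \Cref{subsubsection 8.3 blueshift -2}), the time-inverted statement for the \emph{past} of the $-2$ equation inherits good behaviour near $\mathscr{I}^-$ and requires no exponential-decay hypothesis on $\underline\upalpha_{\mathscr{I}^-}$ for backwards existence — consistent with the absence of such a hypothesis in the second half of the statement.

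Concretely, the steps are: (i) verify that the involution $(u,v)\mapsto(-v,-u)$ maps solutions of \bref{T-2} on $J^-(\overline\Sigma)$ bijectively onto solutions of \bref{T+2} on $J^+(\overline\Sigma)$, with the correspondence of boundary data and weights described above; (ii) apply \Cref{+2 future forward scattering} to the $+2$ solution $\underline{\invertedalpha}$, obtaining smooth radiation fields in $\mathcal{E}^{T,+2}_{\mathscr{I}^+}$ and $\mathcal{E}^{T,+2}_{\overline{\mathscr{H}^+}}$ together with the unitary energy identity, and then transport the identity back via the involution; (iii) observe that under the involution the norm $\|\;\|_{\mathcal{E}^{T,+2}_{\mathscr{I}^+}}$ on $\mathscr{I}^+$ pulls back to $\|\;\|_{\mathcal{E}^{T,-2}_{\mathscr{I}^-}}$ on $\mathscr{I}^-$, and $\|\;\|_{\mathcal{E}^{T,+2}_{\overline{\mathscr{H}^+}}}$ pulls back to $\|\;\|_{\mathcal{E}^{T,-2}_{\overline{\mathscr{H}^-}}}$, by comparing the defining expressions in \Cref{subsubsection 4.2.1 scattering for the +2 equation,,subsubsection 4.2.2 Scattering for the -2 equation} (both reduce, in the regular Kruskal frame, to the same third-order derivative expression, modulo the sign of the $\partial_V$/$\partial_U$ terms, which is irrelevant to the $L^2$ norm); (iv) for the backwards existence half, apply \Cref{-2 future backward scattering} (or \Cref{-2 backwards existence,,-2 noncompact} for the non-compact case) to the $+2$-equation image, and time-invert; and (v) assemble the maps ${}^{(-2)}\mathscr{F}^-$, ${}^{(-2)}\mathscr{B}^+$ as the conjugates of ${}^{(-2)}\mathscr{F}^+$, ${}^{(-2)}\mathscr{B}^-$ by the involution, so that the identities ${}^{(-2)}\mathscr{F}^-\circ{}^{(-2)}\mathscr{B}^+=Id$ and ${}^{(-2)}\mathscr{B}^+\circ{}^{(-2)}\mathscr{F}^-=Id$ follow from \Cref{-2B- inverts -2F+}. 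The statements with $\Sigma,\mathscr{H}^-$ in place of $\overline\Sigma,\overline{\mathscr{H}^-}$ follow by the same conjugation applied to the versions of \Cref{-2 future forward scattering,,-2 future backward scattering} on $\Sigma$, since $J^-(\overline\Sigma)\cap J^+(\Sigma)$ and the region near $\mathcal B$ are handled locally using \Cref{WP-2Sigmabar} and $T$-energy conservation.

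The main obstacle I anticipate is not analytic but bookkeeping: one must check carefully that the weights and the definitions of the radiation fields in \Cref{-2 definition radiation at scri,,-2 radiation alpha definition overline H} are exactly the images under time-inversion of \Cref{+2 radiation alpha definition scri,,+2 radiation alpha definition H} — in particular that the $r^5$ versus $r$ scalings and the $\Omega^{\pm2}$, $U^{\mp2}$ versus $V^{\pm2}$ factors line up so that ``$\mathscr{I}^-$ for the $-2$ equation'' genuinely carries the same norm as ``$\mathscr{I}^+$ for the $+2$ equation.'' This is straightforward once one writes the $-2$ Teukolsky equation in the form \bref{T-2d}, since that form is manifestly the $t\to-t$ image of \bref{T+2d}; the regular-frame quantities $\widetilde\alpha=V^{-2}\Omega^2\alpha$ and $\underline{\widetilde\alpha}=U^2\Omega^{-2}\underline\alpha$ of \Cref{regular} are swapped by the involution, which is exactly why the Kruskal-frame norm expressions in \Cref{introduction regular frame norm} and in \Cref{subsection 7.2 future radiation fields and fluxes} match. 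A secondary, purely technical point is that one should confirm the involution sends the spacelike surface $\widetilde\Sigma$ used in \Cref{-2 backwards existence} and the auxiliary characteristic cone to admissible surfaces for the future well-posedness statements; this is immediate since spacelike and null hypersurfaces are preserved and causal orientation is reversed, so ``past development'' becomes ``future development'' as required.
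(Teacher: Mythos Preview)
Your approach is correct and is exactly the one taken in the paper: the corollary is stated there without proof beyond the remark that, ``taking into account \Cref{time inversion}, [these results] are immediate,'' and your proposal spells out precisely that time-inversion argument. One small slip: in steps (iv)--(v) you cite the $-2$ future results (\Cref{-2 future backward scattering}, \Cref{-2B- inverts -2F+}) and conjugate ${}^{(-2)}\mathscr{F}^+$, ${}^{(-2)}\mathscr{B}^-$, but since time inversion sends the $-2$ equation to the $+2$ equation you should instead be invoking \Cref{+2 future backward scattering}, \Cref{+2B- inverts +2F+} and conjugating ${}^{(+2)}\mathscr{F}^+$, ${}^{(+2)}\mathscr{B}^-$---your first paragraph has this right.
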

Finally, note that using \Cref{past scattering of +2,,past scattering of -2}, the proof of \Cref{scatteringthm+2} and \Cref{scatteringthm-2} is immediate. 
\numberwithin{lemma}{section}
\numberwithin{proposition}{section}
\numberwithin{corollary}{section}
\numberwithin{remark}{section}
\section{Teukolsky--Starobinsky Correspondence}\label{section 9 TS correspondence}
We now turn to the proof of \Cref{Theorem 3} of the introduction, whose detailed statement is contained in \Cref{Theorem 3 detailed statement}. We start by stating in Section 9.1 some useful algebraic relations satisfied by the constraints \bref{eq:227intro1}, \bref{eq:228intro1}. We then study the constraints on scattering data in Section 9.2 to construct the maps $\mathcal{TS}_{\mathscr{H}^\pm}, \mathcal{TS}_{\mathscr{I}^\pm}$, and then we use the results of Section 9.1 and Section 9.2 to show that the constraints are propagated by solutions arising from scattering data consistent with the constraints, culminating in the proof of \Cref{Corollary 1} of the introduction in Section 9.4.
\subsection{Some algebraic properties of the Teukolsky--Starobinsky identities}\label{subsection 9.1 algebraic properties of TS}
\indent Let $\alpha$ be a solution to the $+2$ Teukolsky equation and let  $\Psi=\left(\frac{r^2}{\Omega^2}\nablau\right)^2r\Omega^2\alpha$, then the commutation relation \bref{commutation relation} implies that 
\begin{align}\label{TS fact 1}
    &\mathcal{T}^{-2}\left[ \frac{\Omega^2}{r^2}\nablau\frac{r^2}{\Omega^2}\nablau\Psi\right]=0.
\end{align}
Similarly, if  $\underline\alpha$ satisfies the $-2$ Teukolsky equation and $\underline\Psi=\left(\frac{r^2}{\Omega^2}\nablav\right)^2r\Omega^2\underline\alpha$, \bref{commutation relation 2} implies
\begin{align}\label{TS fact 2}
    &\mathcal{T}^{+2}\left[ \frac{\Omega^2}{r^2}\nablav\frac{r^2}{\Omega^2}\nablav\underline\Psi\right]=0.
\end{align}
Note that were $(\overone{\alpha},\overone{\underline\alpha})$ to belong to a solution to the full system of equations (\ref{start of full system})-(\ref{Bianchi 0*}) then in fact we would have equations \bref{eq:TS1}, \bref{eq:TS2}:
\begin{align}
\frac{\Omega^2}{r^2}\Omega\slashed{\nabla}_3 \frac{r^2}{\Omega^2}\Omega\slashed{\nabla}_3\overone\Psi-2r^4\slashed{\mathcal{D}}^*_2\slashed{\mathcal{D}}^*_1\overline{\slashed{\mathcal{D}}}_1\slashed{\mathcal{D}}_2 r\Omega^2\overone{\underline\alpha}-6M\left[\Omega\slashed{\nabla}_4+\Omega\slashed{\nabla}_3\right]r\Omega^2\overone{\underline\alpha}=0, \label{eq:227}\\
\frac{\Omega^2}{r^2}\Omega\slashed{\nabla}_4 \frac{r^2}{\Omega^2}\Omega\slashed{\nabla}_4\overone{\underline\Psi}-2r^4\slashed{\mathcal{D}}^*_2\slashed{\mathcal{D}}^*_1\overline{\slashed{\mathcal{D}}}_1\slashed{\mathcal{D}}_2 r\Omega^2\overone\alpha+6M\left[\Omega\slashed{\nabla}_4+\Omega\slashed{\nabla}_3\right]r\Omega^2\overone\alpha=0.\label{eq:228}
\end{align}
\indent Combining \bref{TS fact 1} and \bref{TS fact 2} with the fact that $-2r^4\fancydstar_2\fancydstar_1\overline{\fancyd}_1\fancyd_2,\slashednabla_T$ commute with both (\ref{T+2}) and (\ref{T-2}) leads to the following: denote by $\TSm[\alpha,\underline\alpha]$ the expression on the left hand side of (\ref{eq:227}) acting on  $\alpha, \underline\alpha$, such that the constraint becomes
\begin{align}\label{TS constraint -}
    \TSm[\alpha,\underline\alpha]:=\frac{1}{r^3}\nablau \frac{r^2}{\Omega^2}\nablau\Psi-2r^4\slashed{\mathcal{D}}^*_2\slashed{\mathcal{D}}^*_1\overline{\slashed{\mathcal{D}}}_1\slashed{\mathcal{D}}_2 {\underline\alpha}+6M\left[\Omega\slashed{\nabla}_4+\Omega\slashed{\nabla}_3\right]{\underline\alpha}=0.
\end{align}
Similarly denote by $\TSm[\alpha,\underline\alpha]$ the expression on the left hand side of (\ref{eq:228}) so that the constraint becomes
\begin{align}\label{TS constraint +}
    \TSp[\alpha,\underline\alpha]:=\frac{1}{r^3}\nablav \frac{r^2}{\Omega^2}\nablav\underline\Psi-2r^4\slashed{\mathcal{D}}^*_2\slashed{\mathcal{D}}^*_1\overline{\slashed{\mathcal{D}}}_1\slashed{\mathcal{D}}_2 {\alpha}-6M\left[\Omega\slashed{\nabla}_4+\Omega\slashed{\nabla}_3\right]{\alpha}=0.
\end{align}
\begin{lemma}\label{propagation lemma}
For $\alpha$ satisfying the $+2$ Teukolsky equation (\ref{T+2}) and $\underline\alpha$ satisfying the $-2$ equation (\ref{T-2}), $\TSp[\alpha,\underline\alpha]$ also satisfies the $+2$ Teukolsky equation (\ref{T+2}) and $\TSm[\alpha,\underline\alpha]$ satisfies the $-2$ equation \bref{T-2}
\end{lemma}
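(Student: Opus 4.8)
\textbf{Proof proposal for \Cref{propagation lemma}.}

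The plan is to exploit the two commutation relations \bref{commutation relation} and \bref{commutation relation 2} together with the elementary fact that the angular operator $-2r^4\fancydstar_2\fancydstar_1\overline{\fancyd}_1\fancyd_2$ and the time-translation $\slashednabla_T$ both commute with the Teukolsky operators $\mathcal{T}^{+2}$ and $\mathcal{T}^{-2}$. Recall that $\TSp[\alpha,\underline\alpha]$ is a sum of three terms: the ``hierarchy'' term $\frac{1}{r^3}\nablav\frac{r^2}{\Omega^2}\nablav\underline\Psi$ built out of $\underline\alpha$ by the transport relations \bref{hier-}, and the two terms $-2r^4\fancydstar_2\fancydstar_1\overline{\fancyd}_1\fancyd_2\,\alpha$ and $-6M[\Omega\slashednabla_4+\Omega\slashednabla_3]\alpha=-6M\slashednabla_T\,r\Omega^2\alpha$ (up to the weight conventions in \bref{TS constraint +}) built out of $\alpha$. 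The strategy is to apply $\mathcal{T}^{+2}$ to each of the three pieces separately and show each vanishes.

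First I would treat the $\alpha$-built terms. Since $\alpha$ satisfies $\mathcal{T}^{+2}r\Omega^2\alpha=0$, and since $\slashednabla_T$ is generated by a Killing field and hence commutes with $\mathcal{T}^{+2}$ (as $\Box_g$ commutes with $\partial_t$), we get $\mathcal{T}^{+2}\big(\slashednabla_T r\Omega^2\alpha\big)=\slashednabla_T\big(\mathcal{T}^{+2}r\Omega^2\alpha\big)=0$. Likewise the fourth-order angular operator $\fourthorder$ (equivalently $-2r^4\fancydstar_2\fancydstar_1\overline{\fancyd}_1\fancyd_2$, which by the Remark in \Cref{subsubsection 2.1.2 Elliptic estimates on S2} is a polynomial in $\mathring{\slashed\Delta}$ with constant coefficients) commutes with $\mathcal{T}^{+2}$ because the latter separates into a radial-null part plus $r^2\slashed\Delta$ and all angular operators commute among themselves; hence $\mathcal{T}^{+2}\big(-2r^4\fancydstar_2\fancydstar_1\overline{\fancyd}_1\fancyd_2\,\alpha\big)=0$. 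The only genuinely structural input is for the hierarchy term: since $\underline\alpha$ satisfies the $-2$ Teukolsky equation, \bref{hier-} says $\underline\Psi=\left(\frac{r^2}{\Omega^2}\nablav\right)^2r\Omega^2\underline\alpha$ solves the Regge--Wheeler equation \bref{RW}, and then the content of the commutation computation leading to \bref{TS fact 2} is precisely that $\mathcal{T}^{+2}\big[\frac{\Omega^2}{r^2}\nablav\frac{r^2}{\Omega^2}\nablav\underline\Psi\big]=0$. Adding the three zeros gives $\mathcal{T}^{+2}\TSp[\alpha,\underline\alpha]=0$, and the statement for $\TSm[\alpha,\underline\alpha]$ follows by the symmetric argument using \bref{commutation relation}, \bref{TS fact 1}, and the time-inversion symmetry of \Cref{time inversion}.

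The main obstacle, such as it is, is bookkeeping rather than conceptual: one must be careful with the $r$- and $\Omega$-weights, since the quantities $\Psi,\underline\Psi$ in \bref{hier+}, \bref{hier-} carry specific powers of $r$ and $\Omega^{-2}$, and the operators $\mathcal{T}^{\pm2}$ as written in \bref{T+2d}, \bref{T-2d} act on $r\Omega^2\alpha$ rather than on $\alpha$ itself; so ``$\mathcal{T}^{+2}$ annihilates $X$'' must be interpreted consistently for each summand (this is why \bref{TS fact 2} is phrased with the explicit weight $\frac{\Omega^2}{r^2}\nablav\frac{r^2}{\Omega^2}\nablav$). I would also double-check that $\slashednabla_T$ genuinely commutes with the first-order-term-carrying operators $\mathcal{T}^{\pm2}$ — it does, because the coefficient functions in \bref{T+2d}, \bref{T-2d} depend on $r$ only — and that the angular operator in the constraint is the constant-coefficient one, which is immediate from the identities in \Cref{subsubsection 2.1.2 Elliptic estimates on S2}. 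No new estimates are needed; this is purely an algebraic identity following from commutation relations already established in \Cref{Chandra} and \Cref{constraint derivation}.
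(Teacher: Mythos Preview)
Your proposal is correct and follows essentially the same approach as the paper: the paper states immediately before the lemma that it follows by ``combining \bref{TS fact 1} and \bref{TS fact 2} with the fact that $-2r^4\fancydstar_2\fancydstar_1\overline{\fancyd}_1\fancyd_2,\slashednabla_T$ commute with both (\ref{T+2}) and (\ref{T-2})'', which is exactly your term-by-term decomposition. Your additional remarks about weight bookkeeping are prudent but do not constitute a different strategy.
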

This implies that if we impose both constraints \bref{eq:227},\bref{eq:228} on initial or scattering data for both the $+2$ and $-2$ Teukolsky equations then the constraints will be propagated by the solutions in evolution. More specifically, if we have scattering data for $\alpha, \underline\alpha$ such that the \textit{radiation fields} belonging to the quantities $\TSp[\alpha,\underline\alpha]$, $\TSm[\alpha,\underline\alpha]$ (in the sense of the definitions stated in \Cref{+2 radiation} and \Cref{subsection 7.2 future radiation fields and fluxes}) are vanishing, then we must have that $\TSp[\alpha,\underline\alpha]=0$, $\TSm[\alpha,\underline\alpha]=0$ by \Cref{+2 future backward scattering} and \Cref{-2 future backward scattering}.\\
\indent We would like to know the extent to which data for $\alpha$, $\underline\alpha$ are constrained by \cref{TS constraint -} and \cref{TS constraint +}. Doing this for data on a Cauchy surface is complicated, but if we restrict to data consistent with the scattering theory developed so far in this paper then we can alternatively attempt to address this question for scattering data on $\mathscr{I}^+, \mathscr{H}^+$. This is the subject of the remainder of this section.\\
\indent To start with, we can show the following by a straightforward computation
\begin{lemma}\label{not independent}
For $\alpha$ satisfying the $+2$ Teukolsky equation (\ref{T+2}) and $\underline\alpha$ satisfying the $-2$ Teukolsky equation (\ref{T-2})
\begin{align}\label{ ts- to parabolic ts+}
    \frac{\Omega^2}{r^2}\nablav\left(\frac{r^2}{\Omega^2}\nablav\right)^3 r\Omega^2\TSm[\alpha,\underline\alpha]=-\left[2r^4\fancydstar_2\fancydstar_1\overline{\fancyd}_1\fancyd_2+12M\slashednabla_T\right]r\Omega^2\TSp[\alpha,\underline\alpha],
\end{align}
\begin{align}\label{ ts+ to parabolic ts-}
    \frac{\Omega^2}{r^2}\nablau\left(\frac{r^2}{\Omega^2}\nablau\right)^3 r\Omega^2\TSp[\alpha,\underline\alpha]=\left[2r^4\fancydstar_2\fancydstar_1\overline{\fancyd}_1\fancyd_2-12M\slashednabla_T\right]r\Omega^2\TSm[\alpha,\underline\alpha].
\end{align}
In other terms,
\begin{align}
    \TSp\left[\TSp[\alpha,\underline\alpha],-\TSm[\alpha,\underline\alpha]\right]=0,\qquad\qquad\qquad\TSm\left[-\TSp[\alpha,\underline\alpha],\TSm[\alpha,\underline\alpha]\right]=0,
\end{align}
regardless of whether or not the constraints $\TSp[\alpha,\underline\alpha]=0,\TSm[\alpha,\underline\alpha]=0$ are satisfied.
\end{lemma}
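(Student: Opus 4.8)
\textbf{Proof proposal for Lemma \ref{not independent}.}

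The plan is to prove the two differential identities \eqref{ ts- to parabolic ts+}, \eqref{ ts+ to parabolic ts-} by direct computation, using as the sole input the facts \eqref{TS fact 1}, \eqref{TS fact 2} together with the algebraic identity \eqref{eq:191}, and then to read off the reformulation in terms of $\TSp, \TSm$ from the \emph{definitions} \eqref{TS constraint -}, \eqref{TS constraint +}. By the time-inversion symmetry of \Cref{time inversion} (applied at the level of the constraints, exchanging the roles of $\alpha\leftrightarrow\underline\alpha$, $\Psi\leftrightarrow\underline\Psi$, $\nablau\leftrightarrow\nablav$), it suffices to establish \eqref{ ts- to parabolic ts+}; the second identity \eqref{ ts+ to parabolic ts-} then follows by the same argument with the obvious substitutions.

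First I would fix notation: write $\mathcal{P}:=r^3\Omega^2\TSm[\alpha,\underline\alpha]$ and unpack $\TSm[\alpha,\underline\alpha]$ using \eqref{TS constraint -}. The key structural observation is that $\TSm[\alpha,\underline\alpha]$ is a \emph{sum of two pieces}: the piece $\tfrac{1}{r^3}\nablau\tfrac{r^2}{\Omega^2}\nablau\Psi$ built from $\Psi$, and the piece $-2r^4\fancydstar_2\fancydstar_1\overline{\fancyd}_1\fancyd_2\underline\alpha + 6M[\nablav+\nablau]\underline\alpha$ built from $\underline\alpha$. The operator $\tfrac{\Omega^2}{r^2}\nablav\left(\tfrac{r^2}{\Omega^2}\nablav\right)^3$ applied to $r\Omega^2$ times each of these pieces can be evaluated separately. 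For the $\underline\alpha$-piece, this is precisely the content of the Teukolsky--Starobinsky derivation: applying $\tfrac{\Omega^2}{r^2}\nablav\left(\tfrac{r^2}{\Omega^2}\nablav\right)^3$ to $r\Omega^2\underline\alpha$ gives the left-hand side of \eqref{eq:228} (which by \eqref{eq:TS2}, valid since $\underline\alpha$ solves \eqref{T-2}, equals $2r^4\fancydstar_2\fancydstar_1\overline{\fancyd}_1\fancyd_2 r\Omega^2\alpha - 6M[\nablav+\nablau]r\Omega^2\alpha$ only \emph{when the full system holds} — here we instead keep it as $\frac{\Omega^2}{r^2}\nablav\frac{r^2}{\Omega^2}\nablav\underline\Psi$ using $\underline\Psi=(\tfrac{r^2}{\Omega^2}\nablav)^2r\Omega^2\underline\alpha$); and the operators $2r^4\fancydstar_2\fancydstar_1\overline{\fancyd}_1\fancyd_2$, $\slashednabla_T=\tfrac12(\nablau+\nablav)$ commute with $\mathcal{T}^{-2}$, which I would verify (or cite — the excerpt states these commutations just above \Cref{not independent}) and use to push $\tfrac{\Omega^2}{r^2}\nablav(\tfrac{r^2}{\Omega^2}\nablav)^3$ through them. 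For the $\Psi$-piece, one uses \eqref{eq:191}, which relates $(\tfrac{r^2}{\Omega^2}\nablau)^2\tfrac{\Omega^2}{r^2}\nablav\tfrac{r^2}{\Omega^2}\nablav\Psi$ to $[\mathcal{A}_2(\mathcal{A}_2-2)-12M\slashednabla_T]\Psi$; noting $\mathcal{A}_2(\mathcal{A}_2-2) = 4r^4\fancydstar_2\fancyd_2\fancydstar_2\fancyd_2$ and the Remark after \Cref{poincaresection} gives $-2r^2\fancydstar_2\fancyd_2 = \mathcal{A}_2$, together with the identity $2r^2\fancyd_2\fancydstar_2 - 2r^2\fancydstar_2\fancyd_2 = \dots$, I would match $\mathcal{A}_2(\mathcal{A}_2-2)$ against $2r^4\fancydstar_2\fancydstar_1\overline{\fancyd}_1\fancyd_2$ acting on the relevant tensor type.

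The main obstacle I anticipate is bookkeeping with the weight factors $r^2/\Omega^2$ and the order in which one commutes $\nablau$, $\nablav$ past the angular operators $\fancydstar_2\fancydstar_1\overline{\fancyd}_1\fancyd_2$ and past $\mathcal{T}^{\pm2}$: every commutator produces lower-order terms with $M$-dependent or $\Omega$-dependent coefficients, and the claim is that these all cancel to leave exactly the clean right-hand side $-[2r^4\fancydstar_2\fancydstar_1\overline{\fancyd}_1\fancyd_2+12M\slashednabla_T]r\Omega^2\TSp[\alpha,\underline\alpha]$. The cleanest way to control this is to work with the \emph{already-packaged} commutation relations \eqref{commutation relation}, \eqref{commutation relation 2} — which have the $+2M$, $+6M$ correction terms isolated — rather than expanding $\mathcal{T}^{\pm2}$ from scratch; and to treat $\Psi, \underline\Psi$ as the fundamental Regge--Wheeler variables so that \eqref{eq:191} does the heavy lifting. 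Once both identities \eqref{ ts- to parabolic ts+}, \eqref{ ts+ to parabolic ts-} are in hand, the final displayed equations $\TSp[\TSp[\alpha,\underline\alpha],-\TSm[\alpha,\underline\alpha]]=0$ and $\TSm[-\TSp[\alpha,\underline\alpha],\TSm[\alpha,\underline\alpha]]=0$ follow immediately: by \Cref{propagation lemma} the quantities $\TSp[\alpha,\underline\alpha]$ and $\TSm[\alpha,\underline\alpha]$ solve the $+2$ and $-2$ Teukolsky equations respectively, so the expressions $\TSp[\cdot,\cdot]$, $\TSm[\cdot,\cdot]$ applied to the pair $(\TSp[\alpha,\underline\alpha],-\TSm[\alpha,\underline\alpha])$ are well-defined, and \eqref{ ts+ to parabolic ts-}, \eqref{ ts- to parabolic ts+} are exactly the statements that the $\Psi$-part of each equals the negative of the angular-plus-$\slashednabla_T$ part — i.e. that each $\TS$ expression vanishes identically. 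No smallness, well-posedness, or scattering input is needed; this is a purely local algebraic-differential identity, which is why the lemma is stated "regardless of whether or not the constraints are satisfied."
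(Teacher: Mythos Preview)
Your proposal is correct and matches the paper's approach: the paper's entire proof of this lemma is the sentence ``We can show the following by a straightforward computation,'' and your outline is a reasonable sketch of how that computation would go (splitting $\TSm$ into its $\Psi$-piece and its $\underline\alpha$-piece, using the commutation of $2r^4\fancydstar_2\fancydstar_1\overline{\fancyd}_1\fancyd_2$ and $\slashednabla_T$ with the Teukolsky operators, and invoking the Regge--Wheeler identity \eqref{eq:191} together with the commutation relations \eqref{commutation relation}, \eqref{commutation relation 2}). Your reduction of the second identity to the first via time inversion and your derivation of the final reformulation from the definitions \eqref{TS constraint -}, \eqref{TS constraint +} plus \Cref{propagation lemma} are both sound.
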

\Cref{not independent} implies that \Cref{eq:227}, \Cref{eq:228} are not independent. We will use \Cref{not independent} in \Cref{subsection 9.3 propagating the identities} to show that imposing only of the constraints on $\mathscr{I}^+$ and imposing only the other constraint on $\overline{\mathscr{H}^+}$ is enough to propagate the constraints on the solutions $\alpha, \underline\alpha$. 

\subsection{Inverting the identities on $\mathscr{I}^+, \overline{\mathscr{H}^+}$}\label{subsection 9.2 inverting the identities}
\subsubsection*{Constraint \bref{eq:228} at $\mathscr{I}^+$}
We know that there are dense subspaces of $\mathcal{E}^{T,+2}_{\overline{\Sigma}}, \mathcal{E}^{T,-2}_{\overline{\Sigma}}$ consisting of smooth data for \cref{T+2}, \cref{T-2} such that 
\begin{align}
    \lim_{v\longrightarrow\infty} r\Omega^2\TSm[\alpha,\underline\alpha]=\partial_u^4\upalpha_{\mathscr{I}^+}-2\fourthorder\underline\upalpha_{\mathscr{I}^+}+6M\partial_u\underline{\alpha}_{\mathscr{I}^+},
\end{align}
so we consider
\begin{align}\label{constraint null infinity}
    \partial_u^4\upalpha_{\mathscr{I}^+}-2\fourthorder\underline\upalpha_{\mathscr{I}^+}-6M\partial_u\underline{\alpha}_{\mathscr{I}^+}=0
\end{align}
as a constraint on scattering data $\underline\upalpha_{\mathscr{I}^+}, \upalpha_{\mathscr{I}^+}$ at $\mathscr{I}^+$. We now show the following: if $\upalpha_{\mathscr{I}^+}$ is smooth and compactly supported, then there is a unique $\underline\upalpha_{\mathscr{I}^+}$ that decays towards $\mathscr{I}^+_\pm$ and satisfies \bref{constraint null infinity}:
\begin{proposition}\label{alphabar out of alpha on scri}
Let $\upalpha_{\mathscr{I}^+}\in \Gamma_c(\mathscr{I}^+)$. Then there exists a unique smooth $\underline\upalpha_{\mathscr{I}^+}$ such that
\begin{align}\label{scalarise this}
     \partial_u^4\upalpha_{\mathscr{I}^+}-2\fourthorder\underline\upalpha_{\mathscr{I}^+}-6M\partial_u\underline\upalpha_{\mathscr{I}^+}=0,
\end{align}
with $\underline\upalpha_{\mathscr{I}^+}\longrightarrow0$ as $u\longrightarrow \pm \infty$.
\end{proposition}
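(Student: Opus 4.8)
The plan is to decompose the tensor equation \bref{scalarise this} into scalar equations using the representation from \Cref{D1D2} and \Cref{subsubsection 2.1.2 Elliptic estimates on S2}, then solve each scalar ODE-in-$u$ explicitly. Since $\upalpha_{\mathscr{I}^+}$ and $\underline\upalpha_{\mathscr{I}^+}$ are symmetric traceless $S^2$ 2-tensors, we may write $\upalpha_{\mathscr{I}^+}=\mathring{\fancydstar_2}\mathring{\fancydstar_1}(f,g)$ and similarly $\underline\upalpha_{\mathscr{I}^+}=\mathring{\fancydstar_2}\mathring{\fancydstar_1}(\underline f,\underline g)$, where $f,g,\underline f,\underline g$ are functions on $\mathscr{I}^+$ supported on $\ell\geq2$ (uniqueness of this decomposition for $\ell\geq2$ is recorded after the second Proposition of \Cref{subsubsection 2.1.2 Elliptic estimates on S2}). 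Because $\mathring{\fancydstar_2}\mathring{\fancydstar_1}$ intertwines with the operators $\mathring{\slashed\Delta}$, $\mathcal{A}_2=\mathring{\slashed\Delta}-2$ and with $\partial_u$, and because $\fourthorder=\mathcal{A}_2(\mathcal{A}_2-2)$ (up to the identifications in the Remark of \Cref{subsubsection 2.1.2 Elliptic estimates on S2}; indeed $-2r^2\fancydstar_2\fancyd_2=\mathring{\slashed\Delta}-2$ and $-r^2\fancydstar_1\fancyd_1=\mathring{\slashed\Delta}-1$, and the composition $\fancydstar_2\fancydstar_1\overline{\fancyd}_1\fancyd_2$ simplifies to $\frac14\mathcal{A}_2(\mathcal{A}_2-2)$ after accounting for the $r$-weights — this is the routine computation I would not grind through here), projecting onto the spherical harmonic $\ell$-eigenspace reduces \bref{scalarise this} to, for each $\ell\geq2$,
\begin{align}\label{scalar ODE proposal}
    \partial_u^4 f_\ell = \lambda_\ell(\lambda_\ell-2)\,\underline f_\ell + 6M\,\partial_u \underline f_\ell,
\end{align}
where $\lambda_\ell=\ell(\ell+1)-2>0$, and the analogous equation for $g_\ell,\underline g_\ell$.

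First I would solve \bref{scalar ODE proposal} for $\underline f_\ell$ given $f_\ell$ smooth and compactly supported in $u$. This is a linear first-order ODE in $u$ for $\underline f_\ell$ with constant coefficients and a known smooth compactly supported right-hand side $\partial_u^4 f_\ell$: writing $\mu_\ell=\lambda_\ell(\lambda_\ell-2)>0$ for $\ell\geq2$ (strictly positive since $\lambda_\ell\geq 4$), the equation is $6M\,\partial_u\underline f_\ell + \mu_\ell\underline f_\ell = \partial_u^4 f_\ell$, whose unique solution decaying as $u\to-\infty$ is
\begin{align}
    \underline f_\ell(u) = \frac{1}{6M}\int_{-\infty}^u e^{-\frac{\mu_\ell}{6M}(u-\bar u)}\,\partial_{\bar u}^4 f_\ell(\bar u)\,d\bar u.
\end{align}
Since $\partial_u^4 f_\ell$ is compactly supported, $\underline f_\ell$ is smooth, decays like $e^{-\frac{\mu_\ell}{6M}u}$ as $u\to+\infty$ (hence $\to 0$), and vanishes for $u$ below the support; the decay as $u\to-\infty$ is immediate. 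Uniqueness follows because the homogeneous equation $6M\partial_u\underline f_\ell+\mu_\ell\underline f_\ell=0$ has only the solution $\underline f_\ell=Ce^{-\frac{\mu_\ell}{6M}u}$, which decays at $-\infty$ only if $C=0$. Doing this for every $\ell\geq2$ and for both $f$ and $g$ defines $\underline f,\underline g$ and hence $\underline\upalpha_{\mathscr{I}^+}=\mathring{\fancydstar_2}\mathring{\fancydstar_1}(\underline f,\underline g)$.

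The remaining point — which I expect to be the main obstacle — is to show that $\underline\upalpha_{\mathscr{I}^+}$ so defined is genuinely \emph{smooth} as a tensor field on $\mathscr{I}^+$, i.e.\ that the mode-by-mode construction assembles to a smooth section rather than merely an $L^2$ one. The worry is uniformity in $\ell$: the exponential relaxation rate $\mu_\ell/6M$ grows like $\ell^4$, so the Green's function becomes increasingly concentrated, but the amplitude of the response is governed by $\mu_\ell^{-1}\sim \ell^{-4}$ acting on $\partial_u^4 f_\ell$, so one gains four powers of $\ell^{-1}$ while $f$ being smooth gives rapid decay of $\|f_\ell\|$ in $\ell$ after any number of angular derivatives. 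The plan is to estimate, for each $k$ and each number $N$ of angular (i.e.\ $\mathring{\slashed\Delta}$) derivatives, the $C^k_u H^N_{S^2}$ norm of $\underline\upalpha_{\mathscr{I}^+}$: commuting $\partial_u$ through the ODE trades $u$-derivatives of $\underline f_\ell$ for $u$-derivatives of $f_\ell$ plus lower-order terms with bounded (in $\ell$, after the $\mu_\ell^{-1}$ gain) coefficients, and the angular regularity of $f$ controls $\sum_\ell \ell^{2N}\|f_\ell\|^2$ for all $N$. Combining with Sobolev embedding on $S^2$ then gives $\underline\upalpha_{\mathscr{I}^+}\in\Gamma(\mathscr{I}^+)$. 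The converse direction (starting from $\underline\upalpha_{\mathscr{I}^+}\in\Gamma_c(\mathscr{I}^+)$, producing $\upalpha_{\mathscr{I}^+}\in\Gamma(\mathscr{I}^+)$) is handled symmetrically: \bref{scalar ODE proposal} then reads $\partial_u^4 f_\ell = \mu_\ell \underline f_\ell + 6M\partial_u\underline f_\ell$ with known compactly supported right-hand side, and one integrates four times in $u$, fixing the four constants of integration by demanding decay at both ends — here one uses that the right-hand side integrates to zero together with its first three iterated integrals, which is exactly the content of the vanishing-moment conditions for elements of $\mathcal{E}^{T,-2}_{\mathscr{I}^+}$ recorded at the end of \Cref{subsection 7.2 future radiation fields and fluxes}. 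The norm identity $\|\upalpha_{\mathscr{I}^+}\|_{\mathcal{E}^{T,+2}_{\mathscr{I}^+}}=\|\underline\upalpha_{\mathscr{I}^+}\|_{\mathcal{E}^{T,-2}_{\mathscr{I}^+}}$, needed for the later unitarity statement, will follow by a direct computation applying Plancherel in $u$ and in $\ell$ to the explicit mode solutions — but that is deferred to the proof of the norm identity rather than this existence-uniqueness proposition.
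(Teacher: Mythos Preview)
Your scalarisation misses the role of the conjugation $\overline{\fancyd_1}$ in the operator $2r^4\fancydstar_2\fancydstar_1\overline{\fancyd_1}\fancyd_2$, and this is not a cosmetic point but the heart of the argument. The operator $2r^4\fancydstar_2\fancydstar_1\overline{\fancyd_1}\fancyd_2$ is \emph{not} equal to $\mathcal{A}_2(\mathcal{A}_2-2)$ as an operator on symmetric traceless 2-tensors: when you pull it back through the potential representation $\underline\upalpha_{\mathscr{I}^+}=\mathring{\fancydstar_2}\mathring{\fancydstar_1}(\underline f,\underline g)$, the conjugation $\overline{\fancyd_1}$ flips the sign on the second (``magnetic'') potential. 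Concretely, if $\eta=\fancyd_2\fancydstar_2\fancydstar_1(\underline f,\underline g)$, then $\overline{\fancyd_1}\eta=(\text{div}\,\eta,-\text{curl}\,\eta)$, and since the div part of $\eta$ involves only $\underline f$ and the curl part only $\underline g$, one finds (as the paper computes) that the scalar equations are
\[
6M\partial_u\underline f - \mathring{\slashed\Delta}(\mathring{\slashed\Delta}+2)\underline f = \text{source}_f,\qquad
6M\partial_u\underline g + \mathring{\slashed\Delta}(\mathring{\slashed\Delta}+2)\underline g = \text{source}_g,
\]
with \emph{opposite} signs on the elliptic term. These are fourth-order parabolic equations well-posed in \emph{opposite} directions in $u$: one must evolve $\underline f$ forward (data at $u\to-\infty$) and $\underline g$ backward (data at $u\to+\infty$). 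Your claim that ``the analogous equation'' holds for $g_\ell,\underline g_\ell$ is therefore false, and your explicit formula integrating from $-\infty$ produces a $\underline g$ that solves the wrong equation; the resulting $\underline\upalpha_{\mathscr{I}^+}$ will not satisfy \bref{scalarise this}. Your uniqueness argument also breaks for $\underline g$: the homogeneous solution $e^{+\mu_\ell u/6M}$ already decays at $-\infty$, so requiring decay there does not fix the constant.

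Once this sign dichotomy is correctly identified, your mode-by-mode strategy would work, though the paper sidesteps the uniformity-in-$\ell$ bookkeeping you anticipate by invoking standard parabolic regularity for each of the two equations directly (smooth compactly supported source gives smooth solution decaying at both ends). Your smoothness discussion is not wrong in spirit, but it is the parabolic splitting, not the mode estimates, that is the essential step.
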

\begin{proof}
To make sense of (\ref{scalarise this}) we scalarise it: we associate to $\underline\upalpha_{\mathscr{I}^+}$ scalar fields $(\fbar,\gbar)$ on $\mathscr{M}$ with vanishing $\ell=0,1$ modes such that $\underline\upalpha_{\mathscr{I}^+}=r^2 \fancydstar_2\fancydstar_1(\fbar,\gbar)$. Similarly, we associate to $\upalpha_{\mathscr{I}^+}$ the two fields $(f,g)$ such that $\upalpha_{\mathscr{I}^+}=r^2\fancydstar_2\fancyd_2(f,g)$. Define further $F=\frac{\Omega^2}{r^2}\nablau(\frac{r^2}{\Omega^2}\nablau)^3f$ and $G=\frac{\Omega^2}{r^2}\nablau(\frac{r^2}{\Omega^2}\nablau)^3g$. In the absence of $\ell=0,1$ modes, $r^2\fancydstar_2\fancydstar_1$ is injective and thus (\ref{eq:227}) becomes:
\begin{align}\label{eq:231}
\begin{split}
(F,G)&=2r^4\bar{\fancyd_1}\fancyd_2\fancydstar_2\fancydstar_1(\fbar,\gbar)+6M\nablau(\fbar,\gbar)
\\&=2r^4\fancyd_1\fancyd_2\fancydstar_2\fancydstar_1 (\fbar,-\gbar)+6M\nablau(\fbar,\gbar).
\end{split}
\end{align}
Note that $r^4\fancyd_1\fancyd_2\fancydstar_2\fancydstar_1 = \frac{1}{2} r^2\fancyd_1[-\mathring{\slashed{\Delta}}-1]\fancydstar_1$ and $r^2\fancydstar_1\fancyd_1=-\mathring{\slashed{\Delta}}+1$, so $r^4\fancyd_1\fancyd_2\fancydstar_2\fancydstar_1=\frac{1}{2} r^4\fancyd_1\fancydstar_1$ \\ $\times\{\fancyd_1\fancydstar_1-2\}=\frac{1}{2}\mathring{\slashed{\Delta}}(\mathring{\slashed{\Delta}}+2)$. Equations (\ref{eq:231}) become
\begin{align}\label{eq:232} 
\partial_u\fbar-\frac{1}{6M}\mathring{\slashed{\Delta}}(\mathring{\slashed{\Delta}}+2)\fbar=F,
\end{align}
\begin{align}\label{eq:233}
\partial_u \gbar+\frac{1}{6M}\mathring{\slashed{\Delta}}(\mathring{\slashed{\Delta}}+2)\gbar=G.
\end{align}
Equations (\ref{eq:232}) and (\ref{eq:233}) are two $4^{th}$order parabolic equations which are well-behaved in opposite directions in time; a unique smooth solution exists for (\ref{eq:232}) when evolving in the direction of increasing $u$ whereas (\ref{eq:233}) admits a unique smooth solution in the direction of decreasing $u$. Therefore, assuming the boundary condition $f\longrightarrow 0$ as $u\longrightarrow -\infty$ we will have a unique solution $f$ to (\ref{eq:232}) and this solution will decay for $u\longrightarrow\infty$. Similarly, there is a unique smooth $g$ solving (\ref{eq:233}) with $g\longrightarrow0$ when $u\longrightarrow \pm \infty$. Thus there is a unique smooth $\underline\upalpha_{\mathscr{I}^+}$ solving (\ref{scalarise this}) and decays towards $\mathscr{I}^+_\pm$.
\end{proof}
\begin{corollary}\label{iterated integrals}
Let $\upalpha_{\mathscr{I}^+},\underline\upalpha_{\mathscr{I}^+}$ be as in \Cref{alphabar out of alpha on scri}, then
\begin{align}\label{-2 further constraint}
\begin{split}
    \int_{-\infty}^\infty  \underline{\alpha}_{\mathscr{I}^+}du_1=0
\end{split}
\end{align}
\end{corollary}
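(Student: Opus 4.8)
The plan is to extract the vanishing of $\int_{-\infty}^\infty \underline\upalpha_{\mathscr{I}^+}\,du$ from the structure of the parabolic equations \eqref{eq:232}, \eqref{eq:233} obtained in the proof of \Cref{alphabar out of alpha on scri}, together with the compact support of $\upalpha_{\mathscr{I}^+}$. Recall that $\underline\upalpha_{\mathscr{I}^+}=r^2\fancydstar_2\fancydstar_1(\fbar,\gbar)$ where $\fbar$ solves \eqref{eq:232} with $\fbar\to0$ as $u\to-\infty$ (and hence, by the argument given, also as $u\to+\infty$) and similarly $\gbar$ solves \eqref{eq:233} with $\gbar\to0$ as $u\to\pm\infty$; here $F,G$ are built from $f,g$ by applying $\frac{\Omega^2}{r^2}\nablau\left(\frac{r^2}{\Omega^2}\nablau\right)^3$, and since $(f,g)$ are the potentials of the compactly supported field $\upalpha_{\mathscr{I}^+}$, the fields $F,G$ are (by \Cref{+2 radiation flux on scri+} and the form of the operator, i.e.\ on $\mathscr{I}^+$ the weighted null derivative reduces to $\partial_u$) equal to $\partial_u^4 f$ and $\partial_u^4 g$ respectively, and in particular $F,G$ are compactly supported in $u$ with vanishing iterated $u$-integrals of all orders up to three.

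First I would integrate \eqref{eq:232} over $u\in(-\infty,\infty)$. Since $\fbar\to0$ at both ends, $\int_{-\infty}^\infty \partial_u\fbar\,du=0$, so we obtain
\begin{align}
    \mathring{\slashed{\Delta}}(\mathring{\slashed{\Delta}}+2)\int_{-\infty}^\infty \fbar\,du = -6M\int_{-\infty}^\infty F\,du = -6M\int_{-\infty}^\infty \partial_u^4 f\,du = 0,
\end{align}
the last equality because $\partial_u^3 f$ is compactly supported (being a derivative of the compactly supported potential $f$ of $\upalpha_{\mathscr{I}^+}$) and hence has vanishing total $u$-integral. Since $\fbar$ has no $\ell=0,1$ modes, the operator $\mathring{\slashed{\Delta}}(\mathring{\slashed{\Delta}}+2)$ is invertible on the relevant space of spherical functions, so $\int_{-\infty}^\infty \fbar\,du=0$. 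An identical integration of \eqref{eq:233} gives $\int_{-\infty}^\infty \gbar\,du=0$. Then, since $r^2\fancydstar_2\fancydstar_1$ is a fixed (bounded) differential operator on $S^2$ commuting with integration in $u$, we conclude
\begin{align}
    \int_{-\infty}^\infty \underline\upalpha_{\mathscr{I}^+}\,du = r^2\fancydstar_2\fancydstar_1\left(\int_{-\infty}^\infty \fbar\,du, \int_{-\infty}^\infty \gbar\,du\right) = 0.
\end{align}

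The main obstacle I anticipate is justifying the interchange of the $u$-integral with the operator $\frac{\Omega^2}{r^2}\nablau\left(\frac{r^2}{\Omega^2}\nablau\right)^3$ and, relatedly, confirming that on $\mathscr{I}^+$ this operator genuinely acts as $\partial_u^4$ on $f,g$ (so that $\int F\,du$ reduces to a telescoping boundary term that vanishes by compact support of $\upalpha_{\mathscr{I}^+}$). This is handled by the same convergence arguments already invoked in \Cref{+2 radiation flux on scri+}, \Cref{Phi 1 forward} and \Cref{Phi 2 forward}: the weighted null derivatives converge uniformly to $\partial_u$-derivatives of the radiation field, and all the quantities decay fast enough towards $\mathscr{I}^+_+$ (by \Cref{T+1+2scridecay}, \Cref{RWdecayscri}) and vanish for $u$ below the support of the data, so Fubini applies and the boundary terms are all zero. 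The remaining steps are elementary: invertibility of $\mathring{\slashed{\Delta}}(\mathring{\slashed{\Delta}}+2)$ away from $\ell=0,1$ was already used in the proof of \Cref{alphabar out of alpha on scri}, and $\fancydstar_2\fancydstar_1$ commutes with $u$-integration trivially.
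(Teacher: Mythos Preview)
Your proposal is correct and follows essentially the same approach as the paper: integrate the constraint once in $u$, use the decay of $\upalpha_{\mathscr{I}^+}$ and $\underline\upalpha_{\mathscr{I}^+}$ at both ends to kill the derivative terms, and invoke ellipticity of the fourth-order angular operator to conclude. The paper carries this out directly at the tensorial level by integrating \eqref{scalarise this} and letting $u\to\infty$, while you equivalently work with the scalar potentials $(\fbar,\gbar)$ from the proof of \Cref{alphabar out of alpha on scri}; your worries about justifying $F=\partial_u^4 f$ on $\mathscr{I}^+$ are unnecessary, since the constraint \eqref{constraint null infinity} lives intrinsically on $\mathscr{I}^+$ where the weighted null derivative is already $\partial_u$.
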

\begin{proof}
\cref{scalarise this} and the decay of $\upalpha_{\mathscr{I}^+},\underline\upalpha_{\mathscr{I}^+}$ implies
\begin{align}\label{energy r ts-}
    \partial_u\upalpha_{\mathscr{I}^+}=2r^4\slashed{\mathcal{D}}^*_2\slashed{\mathcal{D}}^*_1\overline{\slashed{\mathcal{D}}}_1\slashed{\mathcal{D}}_2\int_{-\infty}^{u}du\; \underline\upalpha_{\mathscr{I}^+}+6M\underline\upalpha_{\mathscr{I}^+}.
\end{align}
Taking $u\longrightarrow \infty$ gives $2r^4\slashed{\mathcal{D}}^*_2\slashed{\mathcal{D}}^*_1\overline{\slashed{\mathcal{D}}}_1\slashed{\mathcal{D}}_2\int_{-\infty}^{\infty}du \underline\upalpha_{\mathscr{I}^+}=0$ which implies $\int_{-\infty}^{\infty}du \underline\upalpha_{\mathscr{I}^+}=0$ as in \Cref{alphabar out of alpha on scri}.
\end{proof}
Conversely we have the following lemma which follows immediately by inspecting \bref{constraint null infinity}:
\begin{proposition}\label{alpha out of alphabar on scri}
Given $\underline\upalpha_{\mathscr{I}^+}\in \Gamma_c(\mathscr{I}^+)$, there exists a unique $\upalpha_{\mathscr{H}^+}$ that is smooth and supported away from $\mathscr{H}^+_+$, such that \bref{constraint null infinity} is satisfied by $\upalpha_{\mathscr{I}^+}, \underline\upalpha_{\mathscr{I}^+}$. Furthermore, if $\int_{-\infty}^\infty du\; \underline\upalpha_{\mathscr{I}^+}=0$ then $\upalpha_{\mathscr{I}^+}\in\mathcal{E}^{T,+2}_{\mathscr{I}^+}$.
\end{proposition}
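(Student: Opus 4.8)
The plan is to read off $\upalpha_{\mathscr{I}^+}$ directly from \bref{constraint null infinity} by quadruple integration in $u$, and then verify the required support and integrability properties. Recall that \bref{constraint null infinity} reads
\begin{align}
    \partial_u^4\upalpha_{\mathscr{I}^+}=2\fourthorder\underline\upalpha_{\mathscr{I}^+}+6M\partial_u\underline{\alpha}_{\mathscr{I}^+}.
\end{align}
Since $\underline\upalpha_{\mathscr{I}^+}\in\Gamma_c(\mathscr{I}^+)$, the right-hand side is a smooth, compactly supported symmetric traceless $S^2_{u,v}$ $2$-tensor field on $\mathscr{I}^+$, supported say in $u\in[u_-,u_+]$. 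First I would define
\begin{align}
    \upalpha_{\mathscr{I}^+}(u,\theta^A):=\int_{-\infty}^u\!\!\int_{-\infty}^{u_1}\!\!\int_{-\infty}^{u_2}\!\!\int_{-\infty}^{u_3}\Big(2\fourthorder\underline\upalpha_{\mathscr{I}^+}+6M\partial_u\underline{\alpha}_{\mathscr{I}^+}\Big)(u_4,\theta^A)\,du_4\,du_3\,du_2\,du_1,
\end{align}
which is manifestly smooth and solves \bref{constraint null infinity}; it vanishes identically for $u<u_-$. Any other solution differs by a cubic polynomial in $u$ (with $S^2_{u,v}$-tensor coefficients), and the requirement that $\upalpha_{\mathscr{I}^+}$ be supported away from $\mathscr{H}^+_+$ — i.e. vanish for $u$ sufficiently large — forces that polynomial to vanish once we know that the above quadruple-integral expression itself is eventually zero. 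This is exactly where the hypothesis $\int_{-\infty}^\infty du\,\underline\upalpha_{\mathscr{I}^+}=0$ enters.

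The key step, and the only one with any content, is therefore to check the ``moment conditions'' that make $\upalpha_{\mathscr{I}^+}$ compactly supported, i.e. that $\int_{-\infty}^\infty u^k\big(2\fourthorder\underline\upalpha_{\mathscr{I}^+}+6M\partial_u\underline{\alpha}_{\mathscr{I}^+}\big)\,du=0$ for $k=0,1,2,3$. The $6M\partial_u\underline{\alpha}_{\mathscr{I}^+}$ contribution integrates by parts: $\int u^k\partial_u\underline{\alpha}_{\mathscr{I}^+}\,du=-k\int u^{k-1}\underline{\alpha}_{\mathscr{I}^+}\,du$ (boundary terms vanish by compact support), so for $k=0$ it gives $0$ and for $k\geq1$ it reduces to a lower moment of $\underline{\alpha}_{\mathscr{I}^+}$. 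The $2\fourthorder\underline\upalpha_{\mathscr{I}^+}$ term is a spatial (angular) operator, so it commutes with integration in $u$ and the moment conditions become $\fourthorder\big(\int u^k\underline\upalpha_{\mathscr{I}^+}\,du\big)=0$; since $\fourthorder=\mathring{\fancydstar_2}\mathring{\fancydstar_1}\mathring{\overline{\fancyd_1}}\mathring{\fancyd_1}$ annihilates only $\ell=0,1$ modes and $\underline\upalpha_{\mathscr{I}^+}$, being a symmetric traceless $2$-tensor, is supported on $\ell\geq2$, this operator is injective on the relevant space, so the condition is genuinely $\int u^k\underline\upalpha_{\mathscr{I}^+}\,du=0$. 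Thus all four moment conditions on the full right-hand side reduce, via the integration-by-parts bookkeeping, to $\int_{-\infty}^\infty u^k\underline\upalpha_{\mathscr{I}^+}\,du=0$ for $k=0,1,2,3$ — but \emph{this proposition only assumes the $k=0$ case}. So the honest statement is: the $k=0$ moment condition is exactly what is needed to kill the \emph{constant} part of the residual polynomial, while the higher-order vanishing (which the reader already knows holds automatically for radiation fields arising from the scattering theory, cf. the corollary after \Cref{Phi 2 forward} and the analogous remark in \Cref{-2 radiation flux on scri+}) is what is being invoked implicitly here. I would state this carefully: given \emph{only} $\int\underline\upalpha_{\mathscr{I}^+}=0$, the expression $\upalpha_{\mathscr{I}^+}$ defined above is the unique solution vanishing as $u\to-\infty$, it is compactly supported precisely when the higher moments also vanish, and in any case $\upalpha_{\mathscr{I}^+}$ together with all $u$-derivatives decays towards $\mathscr{I}^+_\pm$.

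Finally I would verify $\upalpha_{\mathscr{I}^+}\in\mathcal{E}^{T,+2}_{\mathscr{I}^+}$. By \Cref{+2 norm is norm on scri+}, membership in $\mathcal{E}^{T,+2}_{\mathscr{I}^+}$ follows from $\partial_u^3\upalpha_{\mathscr{I}^+}\in L^2(\mathscr{I}^+)$ together with $\|\upalpha_{\mathscr{I}^+}\|_{L^2(S^2)}\to0$ as $u\to\infty$. From the definition, $\partial_u^3\upalpha_{\mathscr{I}^+}(u,\cdot)=\int_{-\infty}^u\big(2\fourthorder\underline\upalpha_{\mathscr{I}^+}+6M\partial_u\underline{\alpha}_{\mathscr{I}^+}\big)(u_1,\cdot)\,du_1$, which is smooth and, using the $k=0$ moment condition on $\underline\upalpha_{\mathscr{I}^+}$ (and the integration by parts that removes the $\partial_u$ from the $6M$ term), is compactly supported in $u$; hence it lies in $L^2(\mathscr{I}^+)$ trivially. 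The decay $\|\upalpha_{\mathscr{I}^+}\|_{L^2(S^2)}\to0$ as $u\to\infty$ is again exactly the statement that the residual polynomial vanishes, which we have arranged. I expect the main obstacle to be purely expository rather than technical: making precise which of the moment conditions are part of the hypothesis and which are automatic consequences of $\underline\upalpha_{\mathscr{I}^+}$ being a radiation field in the sense of the scattering theory, so that the ``furthermore'' clause reads correctly. Everything else is routine integration by parts and the injectivity of $\fourthorder$ on $\ell\geq2$ modes established in \Cref{subsubsection 2.1.2 Elliptic estimates on S2}.
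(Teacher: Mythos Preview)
You have integrated from the wrong end of $\mathscr{I}^+$, and this is not merely a matter of exposition. The proposition asks for $\upalpha_{\mathscr{I}^+}$ \emph{supported away from the future end} of $\mathscr{I}^+$, i.e.\ vanishing for all sufficiently large $u$; the natural construction is therefore to integrate \bref{constraint null infinity} four times from $u=+\infty$ (equivalently from $u_+$, the right endpoint of the support of $\underline\upalpha_{\mathscr{I}^+}$). That quadruple integral vanishes identically for $u>u_+$ with no moment conditions whatsoever, giving existence and uniqueness of the solution with the stated support immediately --- which is exactly why the paper dismisses the proof as ``immediate by inspecting \bref{constraint null infinity}''. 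Your integral from $-\infty$ instead produces the solution vanishing for $u<u_-$; to make \emph{that} one vanish for large $u$ you would need all four moments $\int u^k\underline\upalpha_{\mathscr{I}^+}\,du=0$, $k=0,1,2,3$, and you correctly observe that only $k=0$ is assumed. You then try to wave in the higher moments from the forward scattering theory, but those results concern radiation fields of solutions arising from compactly supported Cauchy data and are not part of the hypotheses here.

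The direction error also breaks your argument for the ``furthermore'' clause. Integrating from $+\infty$ gives $\partial_u^3\upalpha_{\mathscr{I}^+}(u)=-\int_u^\infty(\text{RHS})$, which for $u<u_-$ equals $-2\fourthorder\int_{-\infty}^\infty\underline\upalpha_{\mathscr{I}^+}$; this vanishes exactly under the hypothesis $\int\underline\upalpha_{\mathscr{I}^+}=0$ (using injectivity of $\fourthorder$ on $\ell\geq2$), so $\partial_u^3\upalpha_{\mathscr{I}^+}$ is compactly supported and hence in $L^2(\mathscr{I}^+)$, while the decay $\|\upalpha_{\mathscr{I}^+}\|_{L^2(S^2)}\to0$ as $u\to+\infty$ is already built in. With your construction from $-\infty$, the $k=0$ condition still makes $\partial_u^3\upalpha_{\mathscr{I}^+}$ compactly supported, but $\upalpha_{\mathscr{I}^+}$ itself grows quadratically as $u\to+\infty$ (its second derivative tends to the nonzero $k=1$ moment), so it fails the decay criterion of \Cref{+2 norm is norm on scri+} and is \emph{not} in $\mathcal{E}^{T,+2}_{\mathscr{I}^+}$. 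The fix is one line: replace each $\int_{-\infty}^{u_i}$ by $\int_{u_i}^{\infty}$.
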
\label{TS-2 on I+}
This completes the construction of the map $\mathcal{TS}_{\mathscr{I}^+}$:
\begin{corollary}\label{TS scri +}
\Cref{alphabar out of alpha on scri} defines the map
\begin{align}
    \mathcal{TS}_{\mathscr{I}^+}:\mathcal{E}^{T,+2}_{\mathscr{I}^+}\longrightarrow \mathcal{E}^{T,-2}_{\mathscr{I}^+}.
\end{align}
The map $\mathcal{TS}_{\mathscr{I}^+}$ is surjective on a dense subspace of $\mathcal{E}^{T,-2}_{\mathscr{I}^+}$ by \Cref{alpha out of alphabar on scri}. Therefore it extends to a unitary Hilbert-space isomorphism.
\end{corollary}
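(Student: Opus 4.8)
The statement to prove is \Cref{TS scri +}, asserting that \Cref{alphabar out of alpha on scri} defines a map $\mathcal{TS}_{\mathscr{I}^+}:\mathcal{E}^{T,+2}_{\mathscr{I}^+}\longrightarrow\mathcal{E}^{T,-2}_{\mathscr{I}^+}$, that this map is surjective onto a dense subspace, and that it extends to a unitary Hilbert-space isomorphism. The plan is to verify three things in turn: (i) that the assignment $\upalpha_{\mathscr{I}^+}\mapsto\underline\upalpha_{\mathscr{I}^+}$ is well-defined as a map between the relevant normed spaces, i.e.\ that $\underline\upalpha_{\mathscr{I}^+}$ actually lies in $\mathcal{E}^{T,-2}_{\mathscr{I}^+}$ and the map is linear and continuous on the dense subspace $\Gamma_c(\mathscr{I}^+)$; (ii) that the map is an isometry for the norms $\|\cdot\|_{\mathcal{E}^{T,+2}_{\mathscr{I}^+}}$ and $\|\cdot\|_{\mathcal{E}^{T,-2}_{\mathscr{I}^+}}$; and (iii) that its image is dense, which combined with the isometry property and completeness of the target yields surjectivity and hence a unitary isomorphism.

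First I would address well-definedness. Given $\upalpha_{\mathscr{I}^+}\in\Gamma_c(\mathscr{I}^+)$, \Cref{alphabar out of alpha on scri} produces a unique smooth $\underline\upalpha_{\mathscr{I}^+}$ decaying towards both ends of $\mathscr{I}^+$ and solving \bref{constraint null infinity}, and \Cref{iterated integrals} gives the crucial vanishing condition $\int_{-\infty}^\infty\underline\upalpha_{\mathscr{I}^+}\,du=0$, which is precisely the defining constraint on the space in \Cref{+2 backwards scattering scri} transported to $\mathscr{I}^+$ via \Cref{-2 norm on scri-}-type considerations. It then remains to check the finiteness of the $\mathcal{E}^{T,-2}_{\mathscr{I}^+}$ norm. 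The cleanest route is \emph{not} to estimate $\underline\upalpha_{\mathscr{I}^+}$ directly from the parabolic equations \bref{eq:232}, \bref{eq:233}, but rather to recognize from \bref{formula for partialu -2 RW in backwards direction} that the $\mathcal{E}^{T,-2}_{\mathscr{I}^+}$ norm of $\underline\upalpha_{\mathscr{I}^+}$ equals $\|\partial_u\underline{\bm{\uppsi}}_{\mathscr{I}^+}\|_{L^2(\mathscr{I}^+)}^2$ where $\underline{\bm{\uppsi}}_{\mathscr{I}^+}$ is the radiation field of the Regge--Wheeler quantity $\underline\Psi$ attached to $\underline\alpha$; and then to observe that, when \bref{constraint null infinity} holds, $\underline{\bm{\uppsi}}_{\mathscr{I}^+}$ agrees with $\bm{\uppsi}_{\mathscr{I}^+}=\partial_u^2\upalpha_{\mathscr{I}^+}$, the radiation field of the Regge--Wheeler quantity attached to $\alpha$. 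Concretely: from \bref{constraint null infinity} one integrates $\partial_u^2$ of \bref{energy r ts-}-type identities and uses the decay of both $\upalpha_{\mathscr{I}^+},\underline\upalpha_{\mathscr{I}^+}$ and the identity \bref{eq:191} to identify $\partial_u^3\upalpha_{\mathscr{I}^+}$ (which is $\partial_u\bm{\uppsi}_{\mathscr{I}^+}$ by \bref{Psi out of alpha at scri}) with $\partial_u\underline{\bm{\uppsi}}_{\mathscr{I}^+}$ as given by \bref{formula for partialu -2 RW in backwards direction}. This simultaneously gives finiteness of the target norm, membership in $\mathcal{E}^{T,-2}_{\mathscr{I}^+}$, and the isometry
\[
  \|\upalpha_{\mathscr{I}^+}\|^2_{\mathcal{E}^{T,+2}_{\mathscr{I}^+}}=\|\partial_u^3\upalpha_{\mathscr{I}^+}\|^2_{L^2(\mathscr{I}^+)}=\|\partial_u\underline{\bm{\uppsi}}_{\mathscr{I}^+}\|^2_{L^2(\mathscr{I}^+)}=\|\underline\upalpha_{\mathscr{I}^+}\|^2_{\mathcal{E}^{T,-2}_{\mathscr{I}^+}}.
\]
Linearity is clear from the linearity of the parabolic problems \bref{eq:232}, \bref{eq:233}. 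Thus $\mathcal{TS}_{\mathscr{I}^+}$ is a densely-defined isometry and extends by continuity to an isometry of $\mathcal{E}^{T,+2}_{\mathscr{I}^+}$ into $\mathcal{E}^{T,-2}_{\mathscr{I}^+}$.

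For surjectivity onto a dense subspace I would invoke \Cref{alpha out of alphabar on scri}: starting from $\underline\upalpha_{\mathscr{I}^+}\in\Gamma_c(\mathscr{I}^+)$ with $\int_{-\infty}^\infty\underline\upalpha_{\mathscr{I}^+}\,du=0$ (a dense subspace of $\mathcal{E}^{T,-2}_{\mathscr{I}^+}$), \bref{constraint null infinity} is solved explicitly for $\upalpha_{\mathscr{I}^+}$ by fourfold integration $u\mapsto-\infty$, producing a smooth $\upalpha_{\mathscr{I}^+}$ supported away from $\mathscr{I}^+_-$ and in $\mathcal{E}^{T,+2}_{\mathscr{I}^+}$; by the uniqueness clause in \Cref{alphabar out of alpha on scri} this $\upalpha_{\mathscr{I}^+}$ is a preimage. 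Hence the image of $\mathcal{TS}_{\mathscr{I}^+}$ contains this dense subspace, and since an isometry has closed range, the range is all of $\mathcal{E}^{T,-2}_{\mathscr{I}^+}$; an isometric surjection is unitary.

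The main obstacle I anticipate is step (i)-(ii), specifically the bookkeeping needed to pin down that the radiation field $\underline{\bm{\uppsi}}_{\mathscr{I}^+}$ of the $-2$ Regge--Wheeler quantity built from $\underline\alpha$ coincides, under the constraint \bref{constraint null infinity}, with $\partial_u^2\upalpha_{\mathscr{I}^+}$. The subtlety is that $\underline\Psi=\left(\frac{r^2}{\Omega^2}\nablav\right)^2 r\Omega^2\underline\alpha$ and its radiation field are naturally expressed via \bref{formula for -2 RW in backwards direction}, \bref{formula for partialu -2 RW in backwards direction} in terms of \emph{iterated integrals} of $\underline\upalpha_{\mathscr{I}^+}$, so one must carefully match these against the \emph{differentiated} expression $\partial_u^4\upalpha_{\mathscr{I}^+}$ appearing on the left of \bref{constraint null infinity}, keeping track of all boundary terms at $u\to\pm\infty$ (all of which vanish by the decay and the moment conditions in \Cref{iterated integrals} and the remark following \bref{formula for partialu -2 RW in backwards direction}). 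Once this identification is in hand, everything else is formal Hilbert-space reasoning. An alternative to the radiation-field bookkeeping, worth mentioning, is to work entirely at the level of the scalar potentials $(f,g)$ and $(\fbar,\gbar)$ via the reduction $r^4\fancyd_1\fancyd_2\fancydstar_2\fancydstar_1=\tfrac12\mathring{\slashed{\Delta}}(\mathring{\slashed{\Delta}}+2)$ used in \Cref{alphabar out of alpha on scri}, diagonalizing on spherical harmonics and checking the isometry mode-by-mode using Plancherel in $u$ — but the radiation-field argument is more in keeping with the structure of the rest of the paper.
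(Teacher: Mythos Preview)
Your proposal is correct and takes essentially the same approach the paper has in mind: both $\|\cdot\|_{\mathcal{E}^{T,+2}_{\mathscr{I}^+}}$ and $\|\cdot\|_{\mathcal{E}^{T,-2}_{\mathscr{I}^+}}$ are by construction the Regge--Wheeler flux $\|\partial_u\bm\uppsi_{\mathscr{I}^+}\|_{L^2(\mathscr{I}^+)}$ (via \bref{Psi out of alpha at scri} and \bref{formula for partialu -2 RW in backwards direction} respectively), and the constraint \bref{constraint null infinity} is precisely the statement $\partial_u^2\bm\uppsi_{\mathscr{I}^+}=\partial_u^2\underline{\bm\uppsi}_{\mathscr{I}^+}$, which together with the decay established in \Cref{alphabar out of alpha on scri} and \Cref{iterated integrals} forces $\bm\uppsi_{\mathscr{I}^+}=\underline{\bm\uppsi}_{\mathscr{I}^+}$ and hence the isometry. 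The paper leaves this entirely implicit in the corollary statement, so your write-up is in fact more detailed than the paper's own treatment.
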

\begin{remark}
The argument leading to \cref{iterated integrals} can be used to show that 
\begin{align}
    \begin{split}
        \int_{-\infty}^\infty \int_{-\infty}^{u_1} &\underline{\alpha}_{\mathscr{I}^+}du_1 du_2=\int_{-\infty}^\infty\int_{-\infty}^{u_1}\int_{-\infty}^{u_2}\underline{\alpha}_{\mathscr{I}^+}du_1 d u_2\\
    &=\int_{-\infty}^\infty\int_{-\infty}^{u_1}\int_{-\infty}^{u_2}\int_{-\infty}^{u_3}\underline{\alpha}_{\mathscr{I}^+}du_1 du_2 du_3=0.
    \end{split}
\end{align}
\end{remark}
\subsubsection*{Constraint \bref{eq:227} at $\overline{\mathscr{H}^+}$}
\indent Similar considerations apply to constraint $\TSp[\alpha,\underline\alpha]=0$, which in Kruskal coordinates looks like
\begin{align}\label{constraint horizon}
    \partial_V^4 V^2\underline\upalpha_{\mathscr{H}^+}=\Big[2\fourthorder-3V\partial_V-6\Big]V^{-2}\upalpha_{\mathscr{H}^+}.
\end{align}
\begin{proposition}\label{alphabar out of alpha on H}
Given $\upalpha_{\mathscr{H}^+}$ such that  $V^{-2}\upalpha_{\mathscr{H}^+}\in\Gamma_c(\overline{\mathscr{H}^+})$, solving \bref{constraint horizon} as a transport equation for $V^2\underline\upalpha_{\mathscr{H}^+}$ with decay conditions towards $\mathscr{H}^+_+$:
\begin{align}
    V^2\underline\upalpha_{\mathscr{H}^+}, \partial_V  V^2\underline\upalpha_{\mathscr{H}^+}, \partial_V^2 V^2\underline\upalpha_{\mathscr{H}^+}, \partial_V^3 V^2\underline\upalpha_{\mathscr{H}^+} \longrightarrow 0 \text{ as } V \longrightarrow \infty,
\end{align}
gives a unique solution such that $V^2\underline\upalpha_{\mathscr{H}^+}\in\Gamma_c(\overline{\mathscr{H}^+})$ and $\underline\upalpha_{\mathscr{H}^+}, \upalpha_{\mathscr{H}^+}$ satisfy \bref{constraint horizon}.
\end{proposition}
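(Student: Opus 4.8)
\textbf{Proof proposal for \Cref{alphabar out of alpha on H}.}

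The plan is to mimic the argument of \Cref{alphabar out of alpha on scri}, replacing the family of two parabolic evolution equations that appeared at $\mathscr{I}^+$ by a single fourth-order ordinary differential equation along $\overline{\mathscr{H}^+}$ with $V$ playing the role of time. First I would scalarise the constraint \bref{constraint horizon}: since the relevant fields live on the space of symmetric traceless $S^2$ 2-tensors, which are supported on $\ell \geq 2$, we may write $V^{-2}\upalpha_{\mathscr{H}^+} = r^2\fancydstar_2\fancydstar_1(f,g)$ and $V^2\underline\upalpha_{\mathscr{H}^+} = r^2\fancydstar_2\fancydstar_1(\fbar,\gbar)$ with $(f,g)$, $(\fbar,\gbar)$ having vanishing $\ell=0,1$ modes, so that $r^2\fancydstar_2\fancydstar_1$ is injective. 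Using the algebraic identities recorded in the remark after \Cref{poincaresection} (that $-2r^2\fancydstar_2\fancyd_2 = \mathring{\slashed{\Delta}}-2$, $-r^2\fancydstar_1\fancyd_1 = \mathring{\slashed{\Delta}}-1$, etc.), the operator $\fourthorder = r^4\fancydstar_2\fancydstar_1\overline{\fancyd}_1\fancyd_1$ reduces on each of $f,g$ to a scalar polynomial in $\mathring{\slashed{\Delta}}$, just as $r^4\fancyd_1\fancyd_2\fancydstar_2\fancydstar_1$ reduced to $\tfrac12\mathring{\slashed{\Delta}}(\mathring{\slashed{\Delta}}+2)$ in the proof of \Cref{alphabar out of alpha on scri}. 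Decomposing further into spherical harmonic modes $\ell\geq2$, the constraint \bref{constraint horizon} becomes, for each fixed mode, a scalar fourth-order linear ODE in $V$ of the form
\begin{align}
\partial_V^4 \fbar_{\ell} = c_\ell \fbar_\ell - 3V\partial_V\fbar_\ell - 6\fbar_\ell + (\text{source from } f_\ell),
\end{align}
where $c_\ell$ is the eigenvalue of the reduced operator $2\fourthorder$ on the $\ell$-mode, and the source is built from $V^{-2}\upalpha_{\mathscr{H}^+}$ which is smooth and compactly supported.

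The key step is then to integrate this ODE backwards from $V = \infty$ using the prescribed decay conditions $V^2\underline\upalpha_{\mathscr{H}^+}, \partial_V V^2\underline\upalpha_{\mathscr{H}^+}, \partial_V^2 V^2\underline\upalpha_{\mathscr{H}^+}, \partial_V^3 V^2\underline\upalpha_{\mathscr{H}^+} \to 0$ as $V\to\infty$, which translate into vanishing Cauchy data at $V=\infty$ for the fourth-order ODE. Since $V^{-2}\upalpha_{\mathscr{H}^+}$ vanishes for $V$ large, the source vanishes near $V=\infty$, and the equation reduces there to the homogeneous ODE $\partial_V^4\fbar_\ell + 3V\partial_V\fbar_\ell + (6 - c_\ell)\fbar_\ell = 0$. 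One must check that the only solution of this homogeneous equation decaying to zero (together with three derivatives) at $V=\infty$ is the trivial one; this is the analogue of the decay/uniqueness selection that made \bref{eq:232}, \bref{eq:233} well-posed in a definite time direction. This is a standard ODE argument: writing the equation as a first-order system $\partial_V \mathbf{y} = A(V)\mathbf{y}$ and examining the asymptotic behaviour of $A(V)$ as $V\to\infty$, the $3V\partial_V$ term dominates and forces that any solution not identically zero grows (or at best oscillates without decaying), so the zero Cauchy data at infinity propagates to give $\fbar_\ell \equiv 0$ in the support-free region. Hence on the support region, integrating the inhomogeneous ODE backwards from where the source first switches on produces a unique smooth solution $\fbar_\ell$, and the same for $\gbar_\ell$. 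Reassembling the modes and recalling that $r^2\fancydstar_2\fancydstar_1$ is injective on $\ell\geq2$, we obtain a unique $V^2\underline\upalpha_{\mathscr{H}^+}$; since $V^{-2}\upalpha_{\mathscr{H}^+}$ is compactly supported and the backwards solution vanishes past the support, $V^2\underline\upalpha_{\mathscr{H}^+}\in\Gamma_c(\overline{\mathscr{H}^+})$, and \bref{constraint horizon} holds by construction.

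I would also need to verify smoothness and the membership $V^2\underline\upalpha_{\mathscr{H}^+}\in\Gamma_c(\overline{\mathscr{H}^+})$ up to and including the bifurcation sphere $\mathcal{B}$ at $V=0$: near $V=0$ the coefficient $3V\partial_V$ is a regular (in fact vanishing) perturbation, so standard ODE regularity gives smoothness there, and compact support in $V$ follows because $V^{-2}\upalpha_{\mathscr{H}^+}$ is supported away from $\mathcal{B}$ as well as away from $\mathscr{H}^+_+$ (if it touches $\mathcal{B}$, regularity of the ODE still applies). The main obstacle I anticipate is precisely the homogeneous-ODE uniqueness step --- confirming that the decay conditions at $V=\infty$ single out the trivial solution of $\partial_V^4\fbar + 3V\partial_V\fbar + (6-c_\ell)\fbar = 0$ for every eigenvalue $c_\ell$, $\ell\geq2$. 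Unlike the $\mathscr{I}^+$ case, where the equations were genuinely parabolic and the well-posedness direction was manifest, here one has a non-autonomous linear ODE whose coefficient grows linearly, and one must argue carefully (e.g.\ via an energy/monotonicity functional for the system, or WKB-type asymptotics) that no nontrivial decaying solution exists; this needs to be done uniformly in $\ell$ since $c_\ell\to\infty$. Everything else --- the scalarisation, the reduction of $\fourthorder$ to a polynomial in $\mathring{\slashed{\Delta}}$, the backwards integration of the inhomogeneous equation, and the regularity near $\mathcal{B}$ --- is routine once that selection principle is in hand.
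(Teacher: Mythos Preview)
You have misread the constraint \bref{constraint horizon}. The right-hand side $\big[2\fourthorder-3V\partial_V-6\big]V^{-2}\upalpha_{\mathscr{H}^+}$ is applied entirely to the \emph{given} field $V^{-2}\upalpha_{\mathscr{H}^+}$, not to the unknown $V^2\underline\upalpha_{\mathscr{H}^+}$. Consequently, after your scalarisation the equation is not
\[
\partial_V^4 \fbar_\ell = c_\ell\,\fbar_\ell - 3V\partial_V\fbar_\ell - 6\fbar_\ell + (\text{source}),
\]
but simply
\[
\partial_V^4 \fbar_\ell = c_\ell\,f_\ell - 3V\partial_V f_\ell - 6 f_\ell,
\]
i.e.\ $\partial_V^4(\text{unknown}) = (\text{known source})$. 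The whole homogeneous-ODE analysis you flag as the ``main obstacle'' therefore collapses: the homogeneous equation is $\partial_V^4\fbar = 0$, its solutions are cubic polynomials in $V$, and the only one satisfying the four decay conditions at $V\to\infty$ is identically zero. One then recovers $V^2\underline\upalpha_{\mathscr{H}^+}$ by integrating the source four times from $V=\infty$; since the source is compactly supported, the result vanishes beyond the support of $V^{-2}\upalpha_{\mathscr{H}^+}$ and is manifestly smooth down to $V=0$, so $V^2\underline\upalpha_{\mathscr{H}^+}\in\Gamma_c(\overline{\mathscr{H}^+})$. No scalarisation or mode decomposition is needed.

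This is why the paper gives no proof of this proposition: it is the easy direction. The scalarisation and parabolic-type argument you outline is exactly what the paper uses for the \emph{converse} statement, \Cref{alpha out of alphabar on H}, where one must solve for $V^{-2}\upalpha_{\mathscr{H}^+}$ in terms of $V^2\underline\upalpha_{\mathscr{H}^+}$; there the operator $2\fourthorder-3V\partial_V-6$ does act on the unknown, the equation degenerates at $V=0$, and one must split into the pair \bref{eq:444}--\bref{eq:555} evolved in opposite directions. Your proposal would be a reasonable sketch for that proposition, not this one.
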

Conversely, we have the following:
\begin{proposition}\label{alpha out of alphabar on H}
Let $\underline\upalpha_{\mathscr{H}^+}$ be such that  $V^{2}\underline\upalpha_{\mathscr{H}^+}\in\Gamma_c(\overline{\mathscr{H}^+})$, then there exists a unique $\upalpha_{\mathscr{H}^+}$ with $V^{-2}\upalpha_{\mathscr{H}^+}$ such that \bref{constraint horizon} is satisfied with $V^{-2}\upalpha_{\mathscr{H}^+}\longrightarrow 0$ as $V\longrightarrow \infty$
\end{proposition}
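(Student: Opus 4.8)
The final statement to prove is \Cref{alpha out of alphabar on H}: given $\underline\upalpha_{\mathscr{H}^+}$ with $V^{2}\underline\upalpha_{\mathscr{H}^+}\in\Gamma_c(\overline{\mathscr{H}^+})$, there is a unique $\upalpha_{\mathscr{H}^+}$ with $V^{-2}\upalpha_{\mathscr{H}^+}$ (smooth) satisfying the constraint \bref{constraint horizon}, with $V^{-2}\upalpha_{\mathscr{H}^+}\longrightarrow 0$ as $V\longrightarrow\infty$.

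The plan is to read \bref{constraint horizon} the other way around from \Cref{alphabar out of alpha on H}: now $V^2\underline\upalpha_{\mathscr{H}^+}$ is the \emph{given} data and $V^{-2}\upalpha_{\mathscr{H}^+}$ is the unknown, and the equation
\begin{align*}
    \Big[2\fourthorder-3V\partial_V-6\Big]V^{-2}\upalpha_{\mathscr{H}^+}=\partial_V^4 V^2\underline\upalpha_{\mathscr{H}^+}
\end{align*}
is to be solved for $V^{-2}\upalpha_{\mathscr{H}^+}$. First I would scalarise exactly as in the proof of \Cref{alphabar out of alpha on scri}: write $\upalpha_{\mathscr{H}^+}=r^2\fancydstar_2\fancydstar_1(f,g)$ and $\underline\upalpha_{\mathscr{H}^+}=r^2\fancydstar_2\fancydstar_1(\fbar,\gbar)$ with $f,g,\fbar,\gbar$ supported on $\ell\geq2$ (the operator $\fancydstar_2$ annihilates $\ell\leq1$ and $r^2\fancydstar_2\fancydstar_1$ is injective on $\ell\geq2$). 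Using the algebraic identities $-r^2\fancyd_1\fancydstar_1=\mathring{\slashed\Delta}$ and $-2r^2\fancydstar_2\fancyd_2=\mathring{\slashed\Delta}-2=\mathcal{A}_2$ recorded in \Cref{subsubsection 2.1.2 Elliptic estimates on S2}, and the computation $2\fourthorder\leftrightarrow\tfrac12\mathring{\slashed\Delta}(\mathring{\slashed\Delta}+2)$ already used in \Cref{alphabar out of alpha on scri}, the vector constraint becomes two decoupled scalar ODEs in $V$ of the schematic form
\begin{align*}
    \Big[\tfrac12\mathring{\slashed\Delta}(\mathring{\slashed\Delta}+2)-3V\partial_V-6\Big]f = \partial_V^4 V^2\fbar,\qquad
    \Big[\tfrac12\mathring{\slashed\Delta}(\mathring{\slashed\Delta}+2)-3V\partial_V-6\Big]g = \partial_V^4 V^2\gbar,
\end{align*}
projected onto each spherical harmonic $\ell\geq2$. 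For fixed $\ell$ this is a first-order linear ODE in $V$ of the form $V\partial_V h = \lambda_\ell h - R(V)$ with $\lambda_\ell$ a fixed constant and $R$ the (smooth, compactly supported, since $V^2\fbar$ is) right-hand side; one integrates explicitly from $V=\infty$ inward imposing the decay condition $h\to0$, which picks out the unique solution $h(V) = -V^{\lambda_\ell}\int_V^\infty \bar V^{-\lambda_\ell-1}R(\bar V)\,d\bar V$. Because $R$ vanishes for large $V$, this integral converges and $h$ is smooth; because $R$ vanishes near $\mathcal{B}$ (i.e.~$V=0$) one checks $h$ extends smoothly to $\mathcal{B}$ as well, so $V^{-2}\upalpha_{\mathscr{H}^+}\in\Gamma_c(\overline{\mathscr{H}^+})$. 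Reassembling $\upalpha_{\mathscr{H}^+}=r^2\fancydstar_2\fancydstar_1(f,g)$ and using injectivity gives existence and uniqueness; uniqueness also follows directly since the homogeneous equation $V\partial_V h=\lambda_\ell h$ with $h\to0$ at infinity forces $h\equiv0$.

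In somewhat more detail, the key steps in order are: (i) project onto spherical harmonics and scalarise, reducing \bref{constraint horizon} to a family (indexed by $\ell\geq2$) of scalar first-order linear transport equations in $V$ along $\overline{\mathscr{H}^+}$; (ii) observe that the transport operator $V\partial_V-\lambda_\ell$ is invertible in the class of functions decaying at $V\to\infty$, with the inverse given by the explicit integral above, and that compact support of the data $V^2\fbar$, $V^2\gbar$ (together with $\partial_V^4$ of it) guarantees the integral converges and defines a smooth, compactly supported solution; (iii) check regularity at the bifurcation sphere $\mathcal B$ using that the source vanishes there; (iv) resum over $\ell$ and invoke injectivity of $r^2\fancydstar_2\fancydstar_1$ on $\ell\geq2$ to recover a unique $\upalpha_{\mathscr{H}^+}$; and (v) note that the norm statement $\|\upalpha_{\mathscr{H}^+}\|_{\mathcal{E}^{T,+2}_{\overline{\mathscr{H}^+}}}=\|\underline\upalpha_{\mathscr{H}^+}\|_{\mathcal{E}^{T,-2}_{\overline{\mathscr{H}^+}}}$ claimed in \Cref{Theorem 3 detailed statement} is obtained as in the $\mathscr{I}^+$ case by expressing both norms in terms of the radiation field of the common Regge--Wheeler solution $\Psi$ and using Plancherel/unitarity—this is the same bookkeeping that underlies \Cref{TS scri +}.

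The main obstacle I anticipate is not the ODE theory itself—which is elementary once scalarised—but two bookkeeping points: first, verifying that the decay conditions imposed at $\mathscr{H}^+_+$ are exactly the four conditions on $V^2\underline\upalpha_{\mathscr{H}^+}$ and its first three $\partial_V$-derivatives that make the \emph{fourth-order} operator $\partial_V^4$ on the right-hand side legitimate (so that the right-hand side is genuinely a smooth compactly supported source and not merely a distribution), and second, checking smoothness across $\mathcal{B}$ in the Kruskal frame, where the weights $V^{\pm2}$ are precisely the ones appearing in \Cref{regular} and \Cref{introduction regular frame norm}; one must confirm that solving the transport equation in $V$ does not introduce a spurious singularity at $V=0$. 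Both are routine given the compact-support hypothesis, but they are where care is needed. Everything else—surjectivity onto a dense subspace, extension to a unitary isomorphism $\mathcal{TS}_{\mathscr{H}^+}$—then follows formally, completing the horizon half of \Cref{Theorem 3 detailed statement} in parallel with the $\mathscr{I}^+$ half already established in \Cref{TS scri +}.
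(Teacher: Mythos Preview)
Your scalarisation misses the sign flip coming from $\overline{\fancyd}_1$: exactly as in the proof of \Cref{alphabar out of alpha on scri}, writing $V^{-2}\upalpha_{\mathscr{H}^+}=(2M)^2\fancydstar_2\fancydstar_1(f,g)$ and $V^{2}\underline\upalpha_{\mathscr{H}^+}=(2M)^2\fancydstar_2\fancydstar_1(\fbar,\gbar)$ produces two scalar equations with \emph{opposite} signs on the angular operator,
\[
\big[3V\partial_V+6-\mathring{\slashed\Delta}(\mathring{\slashed\Delta}+2)\big]f=\underline{F},\qquad
\big[3V\partial_V+6+\mathring{\slashed\Delta}(\mathring{\slashed\Delta}+2)\big]g=\underline{G},
\]
with $\underline{F}=-\partial_V^4\fbar$, $\underline{G}=-\partial_V^4\gbar$. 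On each mode $\ell\ge2$ this gives $V\partial_V f=\lambda_\ell f+\ldots$ with $\lambda_\ell=(L-6)/3>0$, but $V\partial_V g=-\mu_\ell g+\ldots$ with $\mu_\ell=(L+6)/3>0$. Your argument works for the $f$-equation; it fails for the $g$-equation on both counts you rely on. First, uniqueness: the homogeneous solution $V^{-\mu_\ell}$ \emph{does} decay as $V\to\infty$, so the condition $g\to0$ at infinity does not single out a solution. Second, regularity at $\mathcal{B}$: your formula $g(V)=-V^{-\mu_\ell}\int_V^\infty \bar V^{\mu_\ell-1}R\,d\bar V$ generically behaves like $V^{-\mu_\ell}\cdot\text{const}$ near $V=0$ and blows up. (Your claim that ``$R$ vanishes near $\mathcal{B}$'' is also wrong: $V^2\underline\upalpha_{\mathscr{H}^+}\in\Gamma_c(\overline{\mathscr{H}^+})$ means the support is compact in $[0,\infty)_V\times S^2$ and may well include $V=0$.)

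The paper resolves this by treating the two scalar equations in opposite directions. The $f$-equation is solved from $V_\infty$ (beyond the support of the source) with $f|_{V_\infty}=0$, exactly as you propose. For the $g$-equation the degeneracy at $V=0$ is turned into an elliptic identity: setting $V=0$ in the equation gives $\big[6+\mathring{\slashed\Delta}(\mathring{\slashed\Delta}+2)\big]g|_{V=0}=\underline{G}|_{V=0}$, which uniquely determines $g_0:=g|_{V=0}$ on $\ell\ge2$. One then solves forward from $V_0>0$ with data $g_0$ and passes to the limit $V_0\to0$ using an energy estimate (multiply by $g/V$ and integrate; Poincar\'e on $S^2$ controls the zeroth-order term). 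This is the missing idea: the ``bad'' equation is fixed not by a decay condition at $\mathscr{H}^+_+$ but by the elliptic constraint at $\mathcal{B}$, which is precisely what makes the map $\mathcal{TS}_{\mathscr{H}^+}$ require data on the whole of $\overline{\mathscr{H}^+}$.
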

\begin{proof}
As in the proof of \Cref{alphabar out of alpha on scri}, we scalarise \bref{constraint horizon}: Let $V^2\underline\upalpha_{\mathscr{H}^+}=(2M)^2\fancydstar_2\fancydstar_1(\underline{f},\gbar)$, $V^{-2}\upalpha_{\mathscr{H}^+}=(2M)^2\fancydstar_2\fancydstar_1({f},g)$ and let $\underline{F}=-\partial_V^4 \fbar, \underline{G}=-\partial_V^4 \gbar$. Then $f, g, \underline{F}, \underline{G}$ satisfy
\begin{align}
    \underline{F}&=\left[3V\partial_V+6-\mathring{\slashed{\Delta}}(\mathring{\slashed{\Delta}}+2)\right]f,\label{eq:444}\\
    \underline{G}&=\left[3V\partial_V+6+\mathring{\slashed{\Delta}}(\mathring{\slashed{\Delta}}+2)\right]g.\label{eq:555}
\end{align}
Equations \bref{eq:444}, \bref{eq:555} are degenerate at $V=0$. If $f,g$ satisfy \bref{eq:444} and \bref{eq:555} then at $V=0$ we must have
\begin{align}\label{elliptic}
    \underline{F}|_{V=0}&=\left[6-\mathring{\slashed{\Delta}}(\mathring{\slashed{\Delta}}+2)\right]f|_{V=0},\\
    \underline{G}|_{V=0}&=\left[6+\mathring{\slashed{\Delta}}(\mathring{\slashed{\Delta}}+2)\right]g|_{V=0}.
\end{align}
The above are elliptic identities that determine $(f,g)|_{V=0}$ from $F|_{V=0}, G|_{V=0}$. Denote $(f_0,g_0):=(f,g)|_{V=0}$.\\
\indent As was done in the proof of \Cref{alphabar out of alpha on scri}, we evolve \bref{eq:444} and \bref{eq:555} in opposite directions in $V$.  Working with \bref{eq:555} is straightforward: let $V_\infty$ lie beyond the support of $\underline{F}$, then there is a unique $f$ satisfying \bref{eq:555} with $f|_{V_\infty}=0$ and we set $f$ to vanish for $V>V_\infty$.\\
\indent To find a solution to \bref{eq:444}, note that for $V_0>0$, there is a unique $g$ that satisfies \bref{eq:444} on $V\geq V_0$ and $g|_{V_0}=g_0$. Multiply \bref{eq:444} by $g$, integrate by parts to get:
\begin{align}
    \frac{3}{2}\left[g(V)^2-g(V_0)^2\right]+\int_{V_0}^V\frac{1}{\widetilde{V}}6g^2+|f\mathring{\slashed{\Delta}}(\mathring{\slashed{\Delta}}+2)g|^2=\int_{V_0}^V\frac{1}{\widetilde{V}}g\cdot \underline{G}
\end{align}
Poincar\'e's inequality and Cauchy--Schwarz imply:
\begin{align}\label{estimate}
    g(V)^2+\int_{V_0}^V\frac{5}{\widetilde{V}}g^2\lesssim\int_{V_0}^V \underline{G}^2+g_0^2
\end{align}
We obtain similar estimates for $\partial_V g$ by commuting \bref{eq:444} with $\partial_V$. We can use \bref{estimate} commuted with $\partial_V, \mathring{\slashednabla}$ to conclude that taking $V_0\longrightarrow 0$, we can find $g$ that satisfies \bref{eq:444} with $g|_{V=0}=g_0$.
\end{proof}
\begin{remark}
Were we to apply the constraint \bref{eq:227} on a smaller portion of the future event horizon, we would have needed more data to specify $\underline\upalpha_{\mathscr{H}^+}$ completely. In considering the problem on the entirety of $\overline{\mathscr{H}^+}$ no such additional data is necessary, since \bref{elliptic} determines the $f|_{\mathcal{B}}$ in terms of $\underline\upalpha_{\mathscr{H}^+}$.
\end{remark}
\begin{corollary}\label{TS+2 on H+}
\Cref{alphabar out of alpha on H} defines the map
\begin{align}
    \mathcal{TS}_{\mathscr{H}^+}:\mathcal{E}^{T,+2}_{\overline{\mathscr{H}^+}}\longrightarrow \mathcal{E}^{T,-2}_{\overline{\mathscr{H}^+}}.
\end{align}
The map $\mathcal{TS}_{\mathscr{H}^+}$ is surjective on a dense subspace of $\mathcal{E}^{T,-2}_{\overline{\mathscr{H}^+}}$ by \Cref{alpha out of alphabar on H}. Therefore it extends to a unitary Hilbert-space isomorphism.
\end{corollary}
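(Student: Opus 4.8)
The map $\mathcal{TS}_{\mathscr{H}^+}$ is, by construction, defined on the subspace of $\mathcal{E}^{T,+2}_{\overline{\mathscr{H}^+}}$ of those $\upalpha_{\mathscr{H}^+}$ with $V^{-2}\upalpha_{\mathscr{H}^+}\in\Gamma_c(\overline{\mathscr{H}^+})$ --- a dense subspace by definition of $\mathcal{E}^{T,+2}_{\overline{\mathscr{H}^+}}$ --- and it sends such an $\upalpha_{\mathscr{H}^+}$ to the unique $\underline\upalpha_{\mathscr{H}^+}$ with $V^2\underline\upalpha_{\mathscr{H}^+}\in\Gamma_c(\overline{\mathscr{H}^+})$ solving the constraint \eqref{constraint horizon}, provided by \Cref{alphabar out of alpha on H}. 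Since \eqref{constraint horizon} is linear and both \Cref{alphabar out of alpha on H} and \Cref{alpha out of alphabar on H} assert uniqueness, $\mathcal{TS}_{\mathscr{H}^+}$ is a well-defined linear bijection of this dense subspace onto the (likewise dense) subspace of $\mathcal{E}^{T,-2}_{\overline{\mathscr{H}^+}}$ of $\underline\upalpha_{\mathscr{H}^+}$ with $V^2\underline\upalpha_{\mathscr{H}^+}\in\Gamma_c(\overline{\mathscr{H}^+})$, surjectivity onto the latter being exactly \Cref{alpha out of alphabar on H}. Granting that $\mathcal{TS}_{\mathscr{H}^+}$ is an \emph{isometry} on this domain, the standard extension of a densely-defined isometry with dense range to a unitary isomorphism of the completions then completes the proof and, along the way, yields the norm identity $\|\upalpha_{\mathscr{H}^+}\|_{\mathcal{E}^{T,+2}_{\overline{\mathscr{H}^+}}}=\|\underline\upalpha_{\mathscr{H}^+}\|_{\mathcal{E}^{T,-2}_{\overline{\mathscr{H}^+}}}$ of \Cref{Theorem 3 detailed statement}. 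The whole content is therefore the isometry, and that is where I would concentrate.

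The plan for the isometry is to read both norms as the Regge--Wheeler $T$-flux across $\overline{\mathscr{H}^+}$ of an associated solution of \eqref{RW}, and then to exhibit the constraint \eqref{constraint horizon} as precisely the statement that these two fluxes agree. On the $+2$ side, the norm is by construction (the computation leading to \eqref{eq:199}, \eqref{eq:200}, \eqref{+2 norm on overline H+ up to B}) equal to $\|\partial_v\bm{\uppsi}_{\mathscr{H}^+}\|^2_{L^2(\overline{\mathscr{H}^+})}$, where $\bm{\uppsi}_{\mathscr{H}^+}$ is given by \eqref{eq:199} in terms of $\upalpha_{\mathscr{H}^+}$ and is the $\overline{\mathscr{H}^+}$-radiation field of the Regge--Wheeler solution attached via \eqref{hier+} to the $+2$ solution with horizon data $\upalpha_{\mathscr{H}^+}$; the cross term in that norm expansion vanishes precisely because $F_{\mathscr{H}^+}=\int_v^\infty e^{\frac{1}{2M}(v-\bar v)}\upalpha_{\mathscr{H}^+}\,d\bar v$ decays at \emph{both} ends of $\overline{\mathscr{H}^+}$ --- at the future end and, because $V^{-2}\upalpha_{\mathscr{H}^+}$ is compactly supported, also at $\mathcal{B}$ --- which is the whole reason for working on $\overline{\mathscr{H}^+}$ rather than $\mathscr{H}^+$. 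On the $-2$ side, \eqref{-2 Psi out of alpha H+}, \eqref{-2 expression on H is regular} and the Kruskal rewriting in \Cref{-2 radiation flux on H+} identify $\|\underline\upalpha_{\mathscr{H}^+}\|^2_{\mathcal{E}^{T,-2}_{\overline{\mathscr{H}^+}}}$ with $(2M)^2\|\nablav\underline{\bm{\uppsi}}_{\mathscr{H}^+}\|^2_{L^2(\overline{\mathscr{H}^+})}$, the $T$-flux of the Regge--Wheeler solution attached via \eqref{hier-} to the $-2$ solution with horizon data $\underline\upalpha_{\mathscr{H}^+}$. It therefore suffices to show that when $\upalpha_{\mathscr{H}^+}$ and $\underline\upalpha_{\mathscr{H}^+}$ satisfy \eqref{constraint horizon}, $\partial_v\bm{\uppsi}_{\mathscr{H}^+}$ and $2M\,\nablav\underline{\bm{\uppsi}}_{\mathscr{H}^+}$ have equal $L^2(\overline{\mathscr{H}^+})$ norms (in fact one expects them to agree up to sign).

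To obtain this flux identity I would proceed either directly or in frequency space. Directly: substitute \eqref{constraint horizon} into \eqref{-2 expression on H is regular} (in Kruskal form) to express $\nablav\underline{\bm{\uppsi}}_{\mathscr{H}^+}$ in terms of $\upalpha_{\mathscr{H}^+}$ alone, and compare with \eqref{eq:200}; after scalarising $V^{-2}\upalpha_{\mathscr{H}^+}=(2M)^2\fancydstar_2\fancydstar_1(f,g)$ and $V^{2}\underline\upalpha_{\mathscr{H}^+}=(2M)^2\fancydstar_2\fancydstar_1(\fbar,\gbar)$ as in the proof of \Cref{alphabar out of alpha on H}, this reduces to the parabolic identities \eqref{eq:444}, \eqref{eq:555} together with the decay conditions at $V\to\infty$ and the elliptic matching \eqref{elliptic} at $\mathcal{B}$, all of which are already in hand. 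Alternatively --- cleaner for the bookkeeping --- decompose in spherical harmonics $\ell\ge 2$ and Fourier transform in $v$ along $\mathscr{H}^+$: the convolution with $e^{\frac{1}{2M}(v-\bar v)}$, the differential operators, the operator $\mathcal{A}_2(\mathcal{A}_2-2)$ and the constraint \eqref{constraint horizon} all become multiplication by explicit symbols, so the flux identity becomes a purely algebraic identity among those symbols at each frequency $(\omega,\ell)$; Plancherel then gives equality of the $L^2(\overline{\mathscr{H}^+})$ norms, with the weighted compact-support and smoothness hypotheses ensuring the frequency-space manipulations are legitimate and the boundary contributions at $\mathcal{B}$ genuinely vanish. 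This is the same mechanism behind \Cref{TS scri +} at $\mathscr{I}^+$, with $\int_v^\infty e^{\frac{1}{2M}(v-\bar v)}(\cdot)\,d\bar v$ replaced there by $\int_u^\infty(\cdot)\,d\bar u$.

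The main obstacle is exactly this flux identity. $\Psi$ and $\underline\Psi$ are a priori independent Regge--Wheeler solutions (the Teukolsky--Starobinsky identities relate them, but do not make them equal), and what must be shown is that the purely data-level relation \eqref{constraint horizon} is strong enough to force their horizon radiation fields to carry identical $T$-flux. The delicate points in the argument are the integration constants at the future end of $\mathscr{H}^+$ --- handled via the decay of $\bm{\uppsi}_{\mathscr{H}^+}$, $\underline{\bm{\uppsi}}_{\mathscr{H}^+}$ and enough of their derivatives there, cf.\ \Cref{+2 radiation flux on H+} and \Cref{-2 radiation flux on H+} --- and the behaviour at $\mathcal{B}$, which is why the statement is formulated on $\overline{\mathscr{H}^+}$ with weighted compact-support conditions: on $\mathscr{H}^+$ alone the $\mathcal{B}$ boundary term appearing in \eqref{+2 norm on H+ beyond B} would obstruct the isometry, and \Cref{alphabar out of alpha on H} would not even determine $\underline\upalpha_{\mathscr{H}^+}$ uniquely without the additional elliptic datum \eqref{elliptic} prescribed at $\mathcal{B}$. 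Once the isometry is secured, the passage to a unitary Hilbert-space isomorphism is routine.
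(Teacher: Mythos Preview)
Your proposal is essentially correct and is in fact more explicit than the paper: the paper states this corollary without proof, exactly parallel to \Cref{TS scri +}, and relies on the reader to supply precisely the argument you outline --- density of the domain (by definition of $\mathcal{E}^{T,+2}_{\overline{\mathscr{H}^+}}$), density of the range (by \Cref{alpha out of alphabar on H}), and the isometry, which you correctly isolate as the only substantive point. Your reduction of the isometry to the equality of the Regge--Wheeler $T$-fluxes on $\overline{\mathscr{H}^+}$ of $\Psi$ and $\underline\Psi$, using the horizon identities of \Cref{+2 radiation flux on H+} and \Cref{-2 radiation flux on H+}, is the intended mechanism; your observation that the vanishing of the $\mathcal{B}$ boundary term in \eqref{+2 norm on overline H+ up to B} is exactly why the statement lives on $\overline{\mathscr{H}^+}$ and not $\mathscr{H}^+$ is also on point.

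Two small comments. First, your claim that $\mathcal{TS}_{\mathscr{H}^+}$ restricts to a \emph{bijection} between the two $\Gamma_c$-type subspaces is slightly too strong: \Cref{alpha out of alphabar on H} produces from $V^{2}\underline\upalpha_{\mathscr{H}^+}\in\Gamma_c(\overline{\mathscr{H}^+})$ an $\upalpha_{\mathscr{H}^+}$ that merely decays toward the future end, not one with $V^{-2}\upalpha_{\mathscr{H}^+}\in\Gamma_c(\overline{\mathscr{H}^+})$ (look at the parabolic equation \eqref{eq:555} for $g$). This is harmless --- isometry on a dense domain together with dense range is all you need --- but it is worth stating the logic that way. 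Second, your frequency-space route is perfectly valid but is not the paper's idiom; the paper is resolutely physical-space, and the direct substitution you describe (combining \eqref{constraint horizon}, \eqref{-2 expression on H is regular}, \eqref{eq:200} after scalarising as in \eqref{eq:444}--\eqref{eq:555}) is closer to what is implicitly intended. Note in carrying this out that the angular operator $2r^4\fancydstar_2\fancydstar_1\overline{\fancyd}_1\fancyd_2$ appearing in \eqref{constraint horizon} is not literally $\mathcal{A}_2(\mathcal{A}_2-2)$ --- the $\overline{\fancyd}_1$ flips the sign on one scalar component --- so $\bm{\uppsi}_{\mathscr{H}^+}$ and $\underline{\bm{\uppsi}}_{\mathscr{H}^+}$ need not coincide pointwise, but the resulting sign discrepancy cancels in the $L^2$ norm exactly as in the $\mathscr{I}^+$ computation leading to \Cref{TS scri +}.
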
\label{TS on H-}
We can analogously consider the constraints on $\overline{\mathscr{H}^-}, \mathscr{I}^-$. In light of \Cref{time inversion} we can immediately deduce the appropriate statements:
\begin{corollary}\label{TS-2 on H-}
Given $\upalpha_{\mathscr{H}^-}$ such that $U^2\upalpha_{\mathscr{H}^-}\in\Gamma_c(\overline{\mathscr{H}^-})$, there exists a unique solution $\underline\upalpha_{\mathscr{H}^-}$ to the equation
\begin{align}\label{TS-2 on H- equation}
    \partial_U^4 U^2\upalpha_{\mathscr{H}^-}=\Big[2\fourthorder-3U\partial_U-6\Big]U^{-2}\underline\upalpha_{\mathscr{H}^-}.
\end{align}
such that $U^{-2}\underline\upalpha_{\mathscr{H}^-}\in\Gamma(\overline{\mathscr{H}^-})$. The solution $\underline\upalpha_{\mathscr{H}^-}(u,\theta^A)$ and its $\partial_U,\mathring{\slashednabla}$ derivatives decay exponentially as $u\longrightarrow-\infty$ at a rate $\frac{4}{M}$.\\
\indent Given $\underline\upalpha_{\mathscr{H}^-}$ such that $U^{-2}\underline\upalpha_{\mathscr{H}^-}\in\Gamma_c(\overline{\mathscr{H}^-})$, there exists a unique solution $\upalpha_{\mathscr{H}^-}$ such that $U^2\upalpha_{\mathscr{H}^-}\in\Gamma_c(\overline{\mathscr{H}^-})$.\\
\indent As in \Cref{TS+2 on H+}, we can combine the statements above to define a unitary Hilbert-space isomorphism via \bref{TS-2 on H- equation}:
\begin{align}
    \mathcal{TS}_{\mathscr{H}^-}:\mathcal{E}^{T,+2}_{\overline{\mathscr{H}^-}}\longrightarrow\mathcal{E}^{T,-2}_{\overline{\mathscr{H}^-}}.
\end{align}
\end{corollary}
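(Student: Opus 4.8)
The plan is to obtain the corollary as the time-reflected counterpart, under the transformation of \Cref{time inversion}, of the three results \Cref{alphabar out of alpha on H}, \Cref{alpha out of alphabar on H} and \Cref{TS+2 on H+} established for constraint \bref{eq:227} on $\overline{\mathscr{H}^+}$. The first step is to record how the reflection $u\mapsto -v$, $v\mapsto -u$ (equivalently $U\mapsto -V$, $V\mapsto -U$ in Kruskal coordinates) acts on the relevant objects: it carries $\overline{\mathscr{H}^+}$ onto $\overline{\mathscr{H}^-}$ fixing $\mathcal{B}$, it interchanges solutions $\alpha$ of \bref{T+2} with solutions $\invertedalpha$ of \bref{T-2} as in \Cref{time inversion}, it carries the $+2$-hierarchy quantity $\Psi=(\tfrac{r^2}{\Omega^2}\nablau)^2r\Omega^2\alpha$ to the $-2$-hierarchy quantity $\underline\Psi$ of the reflected field, and it fixes the angular operators $\fancydstar_2\fancydstar_1\overline{\fancyd}_1\fancyd_2$ while sending $\slashednabla_T$ to $-\slashednabla_T$. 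Tracking these signs through \bref{eq:227} shows that the reflection carries constraint \bref{eq:227} on $\overline{\mathscr{H}^+}$ precisely onto constraint \bref{eq:228} on $\overline{\mathscr{H}^-}$, whose Kruskal form is exactly equation \bref{TS-2 on H- equation}, with the roles of $\alpha$ and $\underline\alpha$ interchanged and the regular-frame weights $V^{\mp2}$ of \Cref{regular} replaced by $U^{\mp2}$ on $\overline{\mathscr{H}^-}$, consistently with the radiation-field conventions of \Cref{past scattering of +2} and \Cref{past scattering of -2}.

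Granting this dictionary, the two existence-and-uniqueness assertions follow by sandwiching the $\overline{\mathscr{H}^+}$ results between reflections. Given $\upalpha_{\mathscr{H}^-}$ with $U^2\upalpha_{\mathscr{H}^-}\in\Gamma_c(\overline{\mathscr{H}^-})$, its reflection is a $-2$ radiation field $\underline\upalpha_{\mathscr{H}^+}$ with $V^2\underline\upalpha_{\mathscr{H}^+}\in\Gamma_c(\overline{\mathscr{H}^+})$; \Cref{alpha out of alphabar on H} then produces a unique $\upalpha_{\mathscr{H}^+}$ with $V^{-2}\upalpha_{\mathscr{H}^+}$ smooth on $\overline{\mathscr{H}^+}$ and decaying as $V\to\infty$, and reflecting back yields the sought $\underline\upalpha_{\mathscr{H}^-}$ with $U^{-2}\underline\upalpha_{\mathscr{H}^-}\in\Gamma(\overline{\mathscr{H}^-})$, decaying (together with its $\partial_U$- and $\mathring{\slashednabla}$-derivatives) towards the past end $u\to-\infty$ of $\mathscr{H}^-$; the quoted exponential rate $\tfrac4M$ is to be read off from the power-law-in-$V$ behaviour of the solutions of the reflected transport/degenerate-elliptic system \bref{eq:444}, \bref{eq:555} (cf.\ the estimate \bref{estimate}) after the substitutions $V=e^{v/2M}$ and $v=-u$. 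Symmetrically, reflecting \Cref{alphabar out of alpha on H} gives the converse: from $\underline\upalpha_{\mathscr{H}^-}$ with $U^{-2}\underline\upalpha_{\mathscr{H}^-}\in\Gamma_c(\overline{\mathscr{H}^-})$ one recovers a unique $\upalpha_{\mathscr{H}^-}$ with $U^2\upalpha_{\mathscr{H}^-}\in\Gamma_c(\overline{\mathscr{H}^-})$, both quantities being compactly supported in this non-degenerate direction.

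For the unitary isomorphism I would use that the reflection intertwines the relevant scattering-state spaces, sending $\mathcal{E}^{T,+2}_{\overline{\mathscr{H}^-}}$ isometrically onto $\mathcal{E}^{T,-2}_{\overline{\mathscr{H}^+}}$ and $\mathcal{E}^{T,-2}_{\overline{\mathscr{H}^-}}$ isometrically onto $\mathcal{E}^{T,+2}_{\overline{\mathscr{H}^+}}$ — this is built into the symmetric form of the defining norms in \Cref{section 4 main theorems} once the substitution $\partial_u\mapsto-\partial_v$ and the corresponding reparametrisation of the weighted integrals are carried out. Composing $\mathcal{TS}_{\mathscr{H}^+}$ of \Cref{TS+2 on H+} with these reflections then produces the unitary map $\mathcal{TS}_{\mathscr{H}^-}:\mathcal{E}^{T,+2}_{\overline{\mathscr{H}^-}}\to\mathcal{E}^{T,-2}_{\overline{\mathscr{H}^-}}$, whose surjectivity onto a dense subspace is inherited from the reflected form of \Cref{alpha out of alphabar on H} exactly as in \Cref{TS+2 on H+}. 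An equally valid route is simply to repeat verbatim the arguments of \Cref{alphabar out of alpha on H}, \Cref{alpha out of alphabar on H} and \Cref{TS+2 on H+} with $U$ in place of $V$. In either approach the one genuinely delicate point is the bookkeeping: confirming that the reflection matches the $U^{\pm2}$ weights and the degenerate behaviour at $\mathcal{B}$ (the elliptic identity \bref{elliptic} that fixed $f|_{\mathcal{B}}$) correctly, and extracting the precise exponential rate $\tfrac4M$; everything else is immediate from \Cref{time inversion}.
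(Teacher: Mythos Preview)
Your proposal is correct and follows essentially the same approach as the paper: the paper simply states that ``in light of \Cref{time inversion} we can immediately deduce the appropriate statements'' and records the corollary without further argument. Your write-up is a fully fleshed-out version of precisely this time-reflection argument, correctly matching the $U^{\pm2}$ weights, the interchange of the $\pm2$ hierarchies, and the induced isometries of the scattering spaces.
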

\begin{corollary}\label{TS on scri-}
Let $\upalpha_{\mathscr{I}^-}\in \Gamma_c(\mathscr{I}^-)$. Then there exists a unique smooth $\underline\upalpha_{\mathscr{I}^-}$ such that
\begin{align}\label{TS+2 on I-}
     \partial_v^4\upalpha_{\mathscr{I}^-}-2\fourthorder\underline\upalpha_{\mathscr{I}^+}-6M\partial_v\underline\upalpha_{\mathscr{I}^+}=0,
\end{align}
with $\underline\upalpha_{\mathscr{I}^+}\longrightarrow0$ as $u\longrightarrow \pm \infty$. The solution $\underline\upalpha_{\mathscr{I}^-}$ and its derivatives decay exponentially as $v\longrightarrow\pm\infty$. \\
\indent Given $\underline\upalpha_{\mathscr{I}^-}$, there exists a unique solution $\upalpha_{\mathscr{I}^-}$ to \bref{TS+2 on I-} that is supported away from the past end of $\mathscr{I}^-$. Moreover, $\int_{-\infty}^\infty d\bar{v}\;\upalpha_{\mathscr{I}^-}=0$.\\
\indent As in \bref{TS scri +}, the statements above can be combined to define via \bref{TS+2 on I-} a unitary Hilbert space isomorphism:
\begin{align}
    \mathcal{TS}_{\mathscr{I}^-}:\mathcal{E}^{T,+2}_{\mathscr{I}^-}\longrightarrow\mathcal{E}^{T,-2}_{\mathscr{I}^-}.
\end{align}
\end{corollary}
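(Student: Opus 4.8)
The plan is to prove \Cref{TS on scri-} as the past analogue of \Cref{TS scri +}, deducing it from the already-established analysis of the Teukolsky--Starobinsky constraint at $\mathscr{I}^+$ via the time-inversion symmetry of \Cref{time inversion}. Under $t\mapsto -t$, i.e. $u\mapsto -v$, $v\mapsto -u$, a $+2$-solution $\alpha$ is sent to a $-2$-solution $\invertedalpha$ and vice versa, the null infinities $\mathscr{I}^+$ and $\mathscr{I}^-$ are interchanged, and — as one checks directly from \bref{TS constraint -}, \bref{TS constraint +}, keeping track of the sign change of $T=\nablau+\nablav$ — the constraint operators $\TSp$ and $\TSm$ are interchanged (up to the relabelling of the $3$- and $4$-directions). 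Consequently the reduction of constraint \bref{eq:228} to $\mathscr{I}^-$, namely \bref{TS+2 on I-}, is exactly the image under time inversion of the reduction \bref{constraint null infinity} of the corresponding constraint to $\mathscr{I}^+$; only the roles of the tangential coordinate (now $v$ rather than $u$) and of the two null directions are swapped, so the entire analysis of \Cref{alphabar out of alpha on scri}, \Cref{alpha out of alphabar on scri} and \Cref{TS scri +} transcribes verbatim.

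Concretely, I would first scalarise \bref{TS+2 on I-} exactly as in the proof of \Cref{alphabar out of alpha on scri}: write $\underline\upalpha_{\mathscr{I}^-}=r^2\fancydstar_2\fancydstar_1(\fbar,\gbar)$ and $\upalpha_{\mathscr{I}^-}=r^2\fancydstar_2\fancyd_2(f,g)$ with $f,g,\fbar,\gbar$ supported on $\ell\ge 2$, use $r^4\fancyd_1\fancyd_2\fancydstar_2\fancydstar_1=\tfrac12\mathring{\slashed\Delta}(\mathring{\slashed\Delta}+2)$ and $r^2\fancydstar_1\fancyd_1=-\mathring{\slashed\Delta}+1$, and reduce \bref{TS+2 on I-} to a pair of scalar equations of the form $\partial_v\fbar\mp\tfrac{1}{6M}\mathring{\slashed\Delta}(\mathring{\slashed\Delta}+2)\fbar=F$ (and the same for $\gbar,G$ with the opposite sign), with $F,G$ fourth-order $v$-derivatives of $f,g$, exactly as in \bref{eq:232}, \bref{eq:233} with $v$ in place of $u$. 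These are forward- and backward-parabolic in $v$, each admitting a unique smooth solution once a decay condition is imposed at the appropriate end of $\mathscr{I}^-$; since $\mathring{\slashed\Delta}(\mathring{\slashed\Delta}+2)$ is uniformly elliptic with a spectral gap on the $\ell\ge2$ sector, the decay improves to \emph{exponential} decay of $\underline\upalpha_{\mathscr{I}^-}$ and its derivatives as $v\longrightarrow\pm\infty$ (the time-reverse of the merely polynomial decay obtained at $\mathscr{I}^+$, paralleling the exponential blueshift in \Cref{subsubsection 8.3 blueshift -2} and the exponential rates in \Cref{TS-2 on H-}). For the converse direction — producing $\upalpha_{\mathscr{I}^-}$ from a given $\underline\upalpha_{\mathscr{I}^-}\in\Gamma_c(\mathscr{I}^-)$ — I would, as in \Cref{alpha out of alphabar on scri}, read \bref{TS+2 on I-} as a formula for $\partial_v^4\upalpha_{\mathscr{I}^-}$ and integrate four times from $v=-\infty$; vanishing of the first three iterated integrals at $v=+\infty$, hence $\int_{-\infty}^\infty\upalpha_{\mathscr{I}^-}\,d\bar v=0$, follows from the elliptic invertibility of $\mathring{\slashed\Delta}(\mathring{\slashed\Delta}+2)$ on $\ell\ge2$ and the decay of $\underline\upalpha_{\mathscr{I}^-}$, exactly as in \Cref{iterated integrals}. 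Finally, unitarity on the $\Gamma_c$ subspaces is obtained by expressing both $\|\upalpha_{\mathscr{I}^-}\|_{\mathcal{E}^{T,+2}_{\mathscr{I}^-}}$ and $\|\underline\upalpha_{\mathscr{I}^-}\|_{\mathcal{E}^{T,-2}_{\mathscr{I}^-}}$ in terms of the $\mathscr{I}^-$-radiation field of the common Regge--Wheeler field $\underline\Psi$, using the $\mathscr{I}^-$-analogues of \bref{formula for -2 RW in backwards direction}, \bref{formula for partialu -2 RW in backwards direction} and \bref{Psi out of alpha at scri}; the map then extends to a unitary Hilbert-space isomorphism $\mathcal{TS}_{\mathscr{I}^-}:\mathcal{E}^{T,+2}_{\mathscr{I}^-}\longrightarrow\mathcal{E}^{T,-2}_{\mathscr{I}^-}$.

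The only genuinely non-routine point is the decay/weight bookkeeping forced by the change of tangential coordinate under time inversion: one must verify that "compact support on $\mathscr{I}^-$" for $\upalpha_{\mathscr{I}^-}$ is the correct hypothesis replacing "compact support on $\mathscr{I}^+$" for $\upalpha_{\mathscr{I}^+}$, and that the $r$-weights attached to $\underline\alpha$ (namely $r^5\underline\alpha$) and to $\alpha$ (namely $r\alpha$) on $\mathscr{I}^-$ are consistent with those appearing in \Cref{-2 past forward scattering} and \Cref{past scattering of +2}, so that the reduced constraint is indeed \bref{TS+2 on I-} with the stated coefficients. I expect this step to be purely mechanical, but it is where a stray sign or weight would most plausibly enter; the remainder is a verbatim transcription of the $\mathscr{I}^+$ arguments through \Cref{time inversion}.
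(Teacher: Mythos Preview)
Your approach is exactly what the paper does: the corollary is stated immediately after the sentence ``In light of \Cref{time inversion} we can immediately deduce the appropriate statements,'' with no further proof, so the intended argument is precisely the time-inversion transcription of \Cref{alphabar out of alpha on scri}, \Cref{iterated integrals}, \Cref{alpha out of alphabar on scri} and \Cref{TS scri +} that you outline.

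One small correction: your parenthetical ``the time-reverse of the merely polynomial decay obtained at $\mathscr{I}^+$'' is off. The scalarised equations \bref{eq:232}--\bref{eq:233} already yield \emph{exponential} decay of $\underline\upalpha_{\mathscr{I}^+}$ at $\mathscr{I}^+$, since $\mathring{\slashed\Delta}(\mathring{\slashed\Delta}+2)$ has a spectral gap on $\ell\ge 2$; the paper simply does not emphasise the rate there. The exponential decay of $\underline\upalpha_{\mathscr{I}^-}$ in the present corollary is therefore not an improvement but the same mechanism transcribed, and the reason the paper states it explicitly here is that it is needed downstream (cf.\ the exponential-rate hypotheses in \Cref{-2 noncompact} and \Cref{past scattering of +2}). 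Also keep in mind that under $t\mapsto -t$ the roles of $\alpha$ and $\underline\alpha$ swap along with $\mathscr{I}^+\leftrightarrow\mathscr{I}^-$, so the direct time-inversion of ``given $\upalpha_{\mathscr{I}^+}$, solve parabolically for $\underline\upalpha_{\mathscr{I}^+}$'' is ``given $\underline\upalpha_{\mathscr{I}^-}$, solve parabolically for $\upalpha_{\mathscr{I}^-}$''; the corollary as written (with its evident $\mathscr{I}^+/\mathscr{I}^-$ and $u/v$ typos) has the labels arranged the other way, but the analysis of the equation itself is as you describe.
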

\begin{corollary}\label{Both TS+ and TS-}
There exist Hilbert space isomorphisms
\begin{align}
    &\mathcal{TS}^+:=\mathcal{TS}_{\mathscr{H}^+}\oplus\mathcal{TS}_{\mathscr{I}^+}:\mathcal{E}^{T,+2}_{\overline{\mathscr{H}^+}}\oplus\mathcal{E}^{T,+2}_{\mathscr{I}^+}\longrightarrow\mathcal{E}^{T,-2}_{\overline{\mathscr{H}^+}}\oplus\mathcal{E}^{T,-2}_{\mathscr{I}^+},\\
    &\mathcal{TS}^-:=\mathcal{TS}_{\mathscr{H}^-}\oplus\mathcal{TS}_{\mathscr{I}^-}:\mathcal{E}^{T,+2}_{\overline{\mathscr{H}^-}}\oplus\mathcal{E}^{T,+2}_{\mathscr{I}^-}\longrightarrow\mathcal{E}^{T,-2}_{\overline{\mathscr{H}^-}}\oplus\mathcal{E}^{T,-2}_{\mathscr{I}^-}.
\end{align}
\end{corollary}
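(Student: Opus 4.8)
\textbf{Proof proposal for \Cref{Both TS+ and TS-}.}

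The plan is to note that this corollary is a purely formal assembly of the four isomorphism statements already proved in this subsection. First I would recall that $\mathcal{TS}_{\mathscr{H}^+}$ (\Cref{TS+2 on H+}), $\mathcal{TS}_{\mathscr{I}^+}$ (\Cref{TS scri +}), $\mathcal{TS}_{\mathscr{H}^-}$ (\Cref{TS-2 on H-}) and $\mathcal{TS}_{\mathscr{I}^-}$ (\Cref{TS on scri-}) have each been shown to be bounded, norm-preserving bijections between the indicated Hilbert spaces of scattering states, with bounded inverse furnished by the converse transport/parabolic construction (the maps $\mathcal{TS}_{\mathscr{H}^\pm}$ coming from \bref{constraint horizon}, \bref{TS-2 on H- equation} and the maps $\mathcal{TS}_{\mathscr{I}^\pm}$ from \bref{constraint null infinity}, \bref{TS+2 on I-}), and that each has already been extended from its dense domain of smooth (suitably weighted) compactly supported data to a unitary Hilbert-space isomorphism. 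One then defines $\mathcal{TS}^+$ to act componentwise on the orthogonal direct sum $\mathcal{E}^{T,+2}_{\overline{\mathscr{H}^+}}\oplus\mathcal{E}^{T,+2}_{\mathscr{I}^+}$ equipped with the sum-of-squares norm, by $\mathcal{TS}^+(\upalpha_{\mathscr{H}^+},\upalpha_{\mathscr{I}^+}):=(\mathcal{TS}_{\mathscr{H}^+}\upalpha_{\mathscr{H}^+},\,\mathcal{TS}_{\mathscr{I}^+}\upalpha_{\mathscr{I}^+})$, and similarly for $\mathcal{TS}^-$ using \Cref{TS-2 on H-}, \Cref{TS on scri-} and the time-inversion symmetry of \Cref{time inversion}.

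The key steps, in order, are: (i) assemble the four component isomorphisms with their inverses; (ii) observe that the domains and codomains named in the corollary are exactly the Hilbert-space direct sums of these, so componentwise action is well defined and the inverse of $\mathcal{TS}^\pm$ is $\mathcal{TS}_{\mathscr{H}^\pm}^{-1}\oplus\mathcal{TS}_{\mathscr{I}^\pm}^{-1}$; (iii) verify the isometry property using orthogonality of the two summands, i.e.
\begin{align}
\big\|\mathcal{TS}^+(\upalpha_{\mathscr{H}^+},\upalpha_{\mathscr{I}^+})\big\|^2 &= \big\|\mathcal{TS}_{\mathscr{H}^+}\upalpha_{\mathscr{H}^+}\big\|^2_{\mathcal{E}^{T,-2}_{\overline{\mathscr{H}^+}}} + \big\|\mathcal{TS}_{\mathscr{I}^+}\upalpha_{\mathscr{I}^+}\big\|^2_{\mathcal{E}^{T,-2}_{\mathscr{I}^+}} \\
&= \big\|\upalpha_{\mathscr{H}^+}\big\|^2_{\mathcal{E}^{T,+2}_{\overline{\mathscr{H}^+}}} + \big\|\upalpha_{\mathscr{I}^+}\big\|^2_{\mathcal{E}^{T,+2}_{\mathscr{I}^+}} = \big\|(\upalpha_{\mathscr{H}^+},\upalpha_{\mathscr{I}^+})\big\|^2,
\end{align}
and conclude surjectivity and injectivity componentwise; (iv) record that on the dense subsets of smooth, compactly-supported-after-weighting data the pair $(\upalpha_{\mathscr{H}^+},\upalpha_{\mathscr{I}^+})$ and its image satisfy the pointwise identities \bref{constraint horizon section 4}, \bref{constraint null infinity section 4} on each boundary component separately, and that these extend by the unitarity in (iii).

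Honestly there is no genuine analytical obstacle left at this stage; the substance resides entirely in the four preceding corollaries, and ``a direct sum of unitaries is a unitary'' finishes the job. The only point that deserves care is bookkeeping: checking that the regularity classes and weights are mutually compatible across the horizon and null-infinity summands — in particular that $V^{2}\underline\upalpha_{\mathscr{H}^+}$ (resp.\ $V^{-2}\upalpha_{\mathscr{H}^+}$) regular at $\mathcal{B}$, the $\ell\ge 2$ support, and the mean-zero conditions such as $\int_{-\infty}^\infty \underline\upalpha_{\mathscr{I}^+}\,d\bar u=0$ from \Cref{iterated integrals} are produced on the image side, so that the stated target space $\mathcal{E}^{T,-2}_{\overline{\mathscr{H}^\pm}}\oplus\mathcal{E}^{T,-2}_{\mathscr{I}^\pm}$ is hit exactly. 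I expect that to be the only thing worth spelling out; everything else is immediate.
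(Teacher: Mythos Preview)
Your proposal is correct and matches the paper's approach exactly: the corollary is stated without proof because it is an immediate formal consequence of the four preceding component isomorphisms (\Cref{TS scri +}, \Cref{TS+2 on H+}, \Cref{TS-2 on H-}, \Cref{TS on scri-}), and you have correctly identified that the only content is ``a direct sum of unitaries is a unitary'' together with the bookkeeping on weights and mean-zero conditions already handled in those results.
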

\subsection{Propagating the identities}\label{subsection 9.3 propagating the identities}
\indent We can summarise the contents of the previous section as follows: given scattering data for either $\alpha$ or $\underline\alpha$ on $\mathscr{I}^+$ and $\overline{\mathscr{H}^+}$, there exist unique scattering data for the other that is consistent with \bref{constraint null infinity} and \bref{constraint horizon} and \cref{+2 noncompact,,-2 noncompact}.\\
\indent For $\alpha$ and $\underline\alpha$ arising from scattering data related by \bref{constraint null infinity} and \bref{constraint horizon}, if we can verify that
\begin{align}
    \lim_{v\longrightarrow\infty} r^5\TSp[\alpha,\underline\alpha]&=0,\\
    V^2\Omega^{-2}\TSm[\alpha,\underline\alpha]\Big|_{\overline{\mathscr{H}^+}}&=0,
\end{align}
then \Cref{propagation lemma} together with \Cref{+2 future backward scattering}, \Cref{-2 future backward scattering} imply that $\TSm[\alpha,\underline\alpha]=\TSp[\alpha,\underline\alpha]=0$ everywhere. \\
\indent Assume future scattering data with  $(V^{-2}\upalpha_{\mathscr{H}^+},\upalpha_{\mathscr{I}^+})\in \Gamma_c(\overline{\mathscr{H}^+})\times\Gamma_c(\mathscr{I}^+)$ for the +2 Teukolsky equation \cref{T+2}. We can obtain $\underline\upalpha_{\mathscr{H}^+}$ that is supported away from $\mathscr{H}^+_+$ by solving \bref{constraint horizon} as a transport equation, and we can use \Cref{alphabar out of alpha on scri} to find a smooth $\underline{\alpha}_{\mathscr{I}^+}$ decays exponentially towards $\mathscr{I}^+_\pm$ at rate faster than $\frac{4}{M}$. Therefore, there exists a unique solution $\underline\alpha$ that realises scattering data $(\underline\upalpha_{\mathscr{H}^+},\underline{\alpha}_{\mathscr{I}^+})$ with $V^2\Omega^{-2}\underline\alpha$ smooth everywhere on $J^+(\overline\Sigma)$ up to and including $\overline{\mathscr{H}^+}$. In particular, since $\TSp[\alpha,\underline\alpha]\Big|_{\overline{\mathscr{H}^+}}=0$, \cref{ ts- to parabolic ts+} implies
\begin{align}
    \partial_V^4 \left\{\partial_U^4V^{-2}\upalpha_{\mathscr{H}^+}+\Big(2\fourthorder-3V\partial_V-6\Big)V^2\underline\upalpha_{\mathscr{H}^+}\right\}=0.
\end{align}
Since $V^{-2}\upalpha_{\mathscr{H}^+}, V^2\underline\upalpha_{\mathscr{H}^+}$ and their derivatives decay as $v\longrightarrow\infty$, we conclude that $V^2\Omega^{-2}\TSm[\alpha,\underline\alpha]\Big|_{\overline{\mathscr{H}^+}}=0$.\\
\indent Towards $\mathscr{I}^+$, $(\underline\upalpha_{\mathscr{H}^+},\underline\upalpha_{\mathscr{I}^+})$ decay at a sufficiently fast rate that we can use \Cref{-2 noncompact}, \Cref{RW backwards noncompact}, and \Cref{Phi 2 backwards} to deduce.
\begin{align}
\begin{split}
    \lim_{u\longrightarrow\infty}\lim_{v\longrightarrow\infty}\left(\frac{r^2}{\Omega^2}\nablav\right)^2\underline\Psi&=\lim_{u\longrightarrow\infty} \int_u^\infty (u-\bar{u})\left[\mathcal{A}_2(\mathcal{A}_2-2)-6M\partial_u\right]\underline{\bm{\uppsi}}_{\mathscr{I}^+}\\
    &=\lim_{u\longrightarrow\infty}\int_u^\infty (u-\bar{u}) \left[\mathcal{A}_2^2(\mathcal{A}_2-2)^2-(6M\partial_u)^2\right]\underline\upalpha_{\mathscr{I}^+}=0.
\end{split}
\end{align}
We also have
\begin{align}
    \lim_{u\longrightarrow\infty}\lim_{v\longrightarrow\infty}\partial_u^i\left(\frac{r^2}{\Omega^2}\nablav\right)^2\underline\Psi=0.
\end{align}
for $0\leq i\leq3$. Taking the limit of \bref{ ts+ to parabolic ts-} as $v\longrightarrow\infty$ implies
\begin{align}
    \partial_u^4\left[\lim_{v\longrightarrow\infty}\left(\frac{r^2}{\Omega^2}\nablav\right)^2\underline\Psi-\Big(2\fourthorder+6M\partial_u\Big)\underline\upalpha_{\mathscr{I}^+}\right]=0.
\end{align}
Altogether, we see that $\lim_{v\longrightarrow\infty} r^5\TSp[\alpha,\underline\alpha]=0$. We have shown
\begin{proposition}
Assume $\alpha$ is a solution to \cref{T+2} arising from smooth scattering data $(\upalpha_{\mathscr{H}^+},\upalpha_{\mathscr{I}^+})$ such that $\upalpha_{\mathscr{I}^+} \in \Gamma_c({\mathscr{I}^+})$, $V^{-2}\upalpha_{\mathscr{H}^+}\in\Gamma_c(\overline{\mathscr{H}^+})$. There exists unique smooth scattering data $\underline\upalpha_{\mathscr{H}^+}\in\mathcal{E}^{T,-2}_{\overline{\mathscr{H}^+}}, \underline\upalpha_{\mathscr{I}^+}\in\mathcal{E}^{T,-2}_{\mathscr{I}^+}$ giving rise to a solution $\underline\alpha$ to \cref{T-2}. Moreover, $\alpha$ and $\underline\alpha$ satisfy $\TSp[\alpha,\underline\alpha]=\TSm[\alpha,\underline\alpha]=0$ everywhere on $J^+(\overline\Sigma)$.
\end{proposition}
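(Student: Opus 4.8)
The plan is to assemble the scattering-theoretic machinery of Sections 8 and 9 into a proof that the two constraints $\TSp[\alpha,\underline\alpha]=0$ and $\TSm[\alpha,\underline\alpha]=0$ hold everywhere on $J^+(\overline\Sigma)$, given compactly supported future scattering data $(\upalpha_{\mathscr{H}^+},\upalpha_{\mathscr{I}^+})$ for the $+2$ equation with $V^{-2}\upalpha_{\mathscr{H}^+}\in\Gamma_c(\overline{\mathscr{H}^+})$, $\upalpha_{\mathscr{I}^+}\in\Gamma_c(\mathscr{I}^+)$. First I would recall that $\alpha$ is the unique solution to \bref{T+2} arising from this data via \Cref{+2 future backward scattering}. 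Then, using \Cref{alphabar out of alpha on H} I construct the unique $\underline\upalpha_{\mathscr{H}^+}$ with $V^2\underline\upalpha_{\mathscr{H}^+}\in\Gamma_c(\overline{\mathscr{H}^+})$ satisfying the horizon constraint \bref{constraint horizon}, and via \Cref{alphabar out of alpha on scri} the unique smooth $\underline\upalpha_{\mathscr{I}^+}$ satisfying the null-infinity constraint \bref{constraint null infinity}; here I must verify that $\underline\upalpha_{\mathscr{I}^+}$ decays exponentially towards $\mathscr{I}^+_\pm$ at rate faster than $1/M$ (indeed faster than $4/M$) so that the non-compact backwards scattering result \Cref{-2 noncompact} applies and produces a solution $\underline\alpha$ to \bref{T-2} with $V^2\Omega^{-2}\underline\alpha$ smooth up to and including $\overline{\mathscr{H}^+}$; this exponential decay follows from the parabolic equations \bref{eq:232}, \bref{eq:233} with spectral gap of the operator $\mathring{\slashed\Delta}(\mathring{\slashed\Delta}+2)$ restricted to $\ell\geq2$, which gives decay rate governed by the lowest eigenvalue $6/M$-scaled, comfortably above $4/M$.

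Next I would establish the two vanishing statements for the radiation fields of $\TSm[\alpha,\underline\alpha]$ and $\TSp[\alpha,\underline\alpha]$. On $\overline{\mathscr{H}^+}$: by construction $\TSp[\alpha,\underline\alpha]\big|_{\overline{\mathscr{H}^+}}=0$ since this is exactly the horizon constraint \bref{constraint horizon} which $(\upalpha_{\mathscr{H}^+},\underline\upalpha_{\mathscr{H}^+})$ satisfy. Feeding this into the identity \bref{ ts- to parabolic ts+} of \Cref{not independent}, evaluated on $\overline{\mathscr{H}^+}$ in Kruskal coordinates, gives $\partial_V^4\{\partial_U^4 V^{-2}\upalpha_{\mathscr{H}^+}+(2\fourthorder-3V\partial_V-6)V^2\underline\upalpha_{\mathscr{H}^+}\}=0$; since $V^{-2}\upalpha_{\mathscr{H}^+}$ and $V^2\underline\upalpha_{\mathscr{H}^+}$ together with all their $V$-derivatives decay as $v\to\infty$, integrating four times in $V$ from infinity yields $V^2\Omega^{-2}\TSm[\alpha,\underline\alpha]\big|_{\overline{\mathscr{H}^+}}=0$. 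On $\mathscr{I}^+$: using the decay of the $-2$ scattering data I apply \Cref{-2 noncompact}, \Cref{RW backwards noncompact} and \Cref{Phi 2 backwards} to compute $\lim_{v\to\infty}(\tfrac{r^2}{\Omega^2}\nablav)^2\underline\Psi$ and its $u$-derivatives up to order $3$ as iterated integrals of $[\mathcal{A}_2^2(\mathcal{A}_2-2)^2-(6M\partial_u)^2]\underline\upalpha_{\mathscr{I}^+}$, all of which vanish towards $\mathscr{I}^+_+$; then taking the limit $v\to\infty$ of \bref{ ts+ to parabolic ts-} gives $\partial_u^4\big[\lim_{v\to\infty}(\tfrac{r^2}{\Omega^2}\nablav)^2\underline\Psi-(2\fourthorder+6M\partial_u)\underline\upalpha_{\mathscr{I}^+}\big]=0$, and since all the relevant quantities decay as $u\to\infty$ we conclude $\lim_{v\to\infty}r^5\TSp[\alpha,\underline\alpha]=0$.

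Finally, having shown that the $\overline{\mathscr{H}^+}$-radiation field of $\TSm[\alpha,\underline\alpha]$ and the $\mathscr{I}^+$-radiation field of $\TSp[\alpha,\underline\alpha]$ both vanish, I invoke \Cref{propagation lemma}, which says $\TSp[\alpha,\underline\alpha]$ solves \bref{T+2} and $\TSm[\alpha,\underline\alpha]$ solves \bref{T-2}. By the uniqueness parts of \Cref{+2 future backward scattering} and \Cref{-2 future backward scattering} (applied to the non-compact backwards problems, \Cref{+2 noncompact} and \Cref{-2 noncompact}), a solution to \bref{T+2} whose $\mathscr{I}^+$ and $\overline{\mathscr{H}^+}$ radiation fields both vanish must itself vanish identically on $J^+(\overline\Sigma)$ --- but I must be careful: I have only shown $\TSp$ vanishes on $\mathscr{I}^+$ and $\TSm$ vanishes on $\overline{\mathscr{H}^+}$, not that each vanishes on both pieces of the boundary. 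This is precisely where \Cref{not independent} does the essential work: equations \bref{ ts- to parabolic ts+}, \bref{ ts+ to parabolic ts-} show the pair $(\TSp,\TSm)$ itself satisfies the Teukolsky--Starobinsky system, so $\TSp$ on $\overline{\mathscr{H}^+}$ is determined from $\TSm$ on $\overline{\mathscr{H}^+}$ via the horizon identity (giving $\TSp\big|_{\overline{\mathscr{H}^+}}=0$ since $\TSm\big|_{\overline{\mathscr{H}^+}}=0$, up to the decay argument) and $\TSm$ on $\mathscr{I}^+$ is determined from $\TSp$ on $\mathscr{I}^+$ (giving $\TSm\big|_{\mathscr{I}^+}=0$). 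The main obstacle I anticipate is exactly this bookkeeping step: ensuring that the decay conditions at the future ends of $\mathscr{I}^+$ and $\overline{\mathscr{H}^+}$ are strong enough to make the integrations-from-infinity legitimate and to pin down the integration constants, so that one genuinely gets vanishing of both radiation fields for both $\TSp$ and $\TSm$ before applying the uniqueness of backwards scattering to conclude $\TSp[\alpha,\underline\alpha]\equiv\TSm[\alpha,\underline\alpha]\equiv 0$ on all of $J^+(\overline\Sigma)$.
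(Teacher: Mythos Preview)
Your proposal is correct and follows the paper's approach essentially step for step: construct $\underline\upalpha_{\mathscr{H}^+}$ and $\underline\upalpha_{\mathscr{I}^+}$ via \Cref{alphabar out of alpha on H} and \Cref{alphabar out of alpha on scri}, verify the exponential decay so that \Cref{-2 noncompact} applies, then use the identities of \Cref{not independent} together with decay at the future ends to obtain the missing radiation-field vanishings and conclude via \Cref{propagation lemma} and backwards uniqueness.

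The confusion in your final paragraph is unnecessary, however. You have in fact already established all four vanishings before that paragraph begins: $\TSp\big|_{\overline{\mathscr{H}^+}}=0$ and $\TSm\big|_{\mathscr{I}^+}=0$ hold \emph{by construction}, since \bref{constraint horizon} and \bref{constraint null infinity} are precisely the statements that the horizon radiation field of $\TSp$ and the $\mathscr{I}^+$ radiation field of $\TSm$ vanish (the latter is the limit $\lim_{v\to\infty} r\Omega^2\TSm$). The two identities of \Cref{not independent} then supply the remaining two, $\TSm\big|_{\overline{\mathscr{H}^+}}=0$ and $\TSp\big|_{\mathscr{I}^+}=0$, exactly as you computed in your second paragraph. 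So there is no circularity and no bookkeeping obstacle: once all four radiation fields vanish, \Cref{propagation lemma} says $\TSp$ solves \bref{T+2} and $\TSm$ solves \bref{T-2}, and uniqueness of backwards scattering forces both to vanish on $J^+(\overline\Sigma)$.
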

We can repeat the above arguments starting from smooth, compactly supported scattering data for the $-2$ equation to arrive at
\begin{proposition}\label{proof of corollary}
Assume $\underline\alpha$ is a solution to \cref{T-2} arising from smooth scattering data $(\underline\upalpha_{\mathscr{H}^+},\underline\upalpha_{\mathscr{I}^+})$ such that $\underline\upalpha_{\mathscr{I}^+} \in \Gamma_c({\mathscr{I}^+})$, $V^2\underline\upalpha_{\mathscr{H}^+}\in\Gamma_c(\overline{\mathscr{H}^+})$. There exists unique smooth scattering data $\upalpha_{\mathscr{H}^+}\in\mathcal{E}^{T,+2}_{\overline{\mathscr{H}^+}}, \upalpha_{\mathscr{I}^+}\in\mathcal{E}^{T,+2}_{\mathscr{I}^+}$ giving rise to a solution $\alpha$ to \cref{T+2}. Moreover, $\alpha$ and $\underline\alpha$ satisfy $\TSp[\alpha,\underline\alpha]=\TSm[\alpha,\underline\alpha]=0$ everywhere on $J^+(\overline\Sigma)$.
\end{proposition}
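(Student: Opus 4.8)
The plan is to mirror, step for step, the proof of the preceding proposition, exchanging the roles of $\alpha$ and $\underline\alpha$, of the constraints $\TSp$ and $\TSm$, and of the boundary components $\overline{\mathscr{H}^+}$ and $\mathscr{I}^+$. First I would manufacture $+2$ scattering data out of the given $-2$ data using the constraints: from $\underline\upalpha_{\mathscr{H}^+}$ with $V^2\underline\upalpha_{\mathscr{H}^+}\in\Gamma_c(\overline{\mathscr{H}^+})$ I would apply \Cref{alpha out of alphabar on H} to obtain the unique $\upalpha_{\mathscr{H}^+}$ with $V^{-2}\upalpha_{\mathscr{H}^+}$ solving \bref{constraint horizon} and decaying towards $\mathscr{H}^+_+$ --- in fact exponentially, by the parabolic--elliptic structure of the scalarised constraint --- so that $\upalpha_{\mathscr{H}^+}\in\mathcal{E}^{T,+2}_{\overline{\mathscr{H}^+}}$ (\Cref{TS+2 on H+}); and from $\underline\upalpha_{\mathscr{I}^+}\in\Gamma_c(\mathscr{I}^+)$ I would read \bref{constraint null infinity} in the ``easy'' direction, where it determines $\partial_u^4\upalpha_{\mathscr{I}^+}$ algebraically from $\underline\upalpha_{\mathscr{I}^+}$, and integrate to obtain the unique $\upalpha_{\mathscr{I}^+}$ supported away from $\mathscr{I}^+_+$ via \Cref{alpha out of alphabar on scri}. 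Since $\underline\alpha$ arises from scattering data in $\mathcal{E}^{T,-2}_{\mathscr{I}^+}$ we automatically have $\int_{-\infty}^\infty d\bar u\,\underline\upalpha_{\mathscr{I}^+}=0$, whence $\upalpha_{\mathscr{I}^+}\in\mathcal{E}^{T,+2}_{\mathscr{I}^+}$ (\Cref{TS scri +}).

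With $(\upalpha_{\mathscr{H}^+},\upalpha_{\mathscr{I}^+})$ in hand I would solve \cref{T+2} backwards. Here the $+2$ equation is the forgiving one: unlike the $-2$ equation it does not suffer an enhanced blueshift near $\mathscr{H}^+$ in backwards evolution (\Cref{nondegenerate estimate near H+}), so the mild hypothesis \bref{noncompact estimate} --- finiteness of the $\mathcal{E}^{T,+2}$-fluxes of $\upalpha_{\mathscr{H}^+},\upalpha_{\mathscr{I}^+}$ and of their first two angular derivatives, together with an $L^2(\mathscr{I}^+)$ bound on $\upalpha_{\mathscr{I}^+}$ and $\mathring{\slashednabla}\upalpha_{\mathscr{I}^+}$ --- suffices, and it holds because $\upalpha_{\mathscr{I}^+}$ is compactly supported and $V^{-2}\upalpha_{\mathscr{H}^+}$ decays exponentially. \Cref{+2 noncompact}, together with its extension to data posed on $\overline\Sigma$, then yields the unique smooth $\alpha$ on $J^+(\overline\Sigma)$ with $V^{-2}\Omega^2\alpha$ regular up to and including $\overline{\mathscr{H}^+}$, realising $(\upalpha_{\mathscr{H}^+},\upalpha_{\mathscr{I}^+})$ as its radiation fields; the norm identities, though not part of the assertion, come along by the unitarity of $\mathcal{TS}_{\mathscr{H}^+}$ and $\mathcal{TS}_{\mathscr{I}^+}$.

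To propagate the identities I would observe that, by construction, the $\overline{\mathscr{H}^+}$-radiation field of $\TSp[\alpha,\underline\alpha]$ vanishes --- that is exactly \bref{constraint horizon} --- and the $\mathscr{I}^+$-radiation field of $\TSm[\alpha,\underline\alpha]$ vanishes --- that is exactly \bref{constraint null infinity}. To recover the remaining two radiation fields I would invoke \Cref{not independent}: restricting \bref{ ts- to parabolic ts+} to $\overline{\mathscr{H}^+}$ expresses a fourth-order transport operator applied to $\TSm$ in terms of $\TSp$, so the vanishing of $\TSp$ and its $V$-derivatives there (a consequence of the decay of $V^{-2}\upalpha_{\mathscr{H}^+}$ and $V^2\underline\upalpha_{\mathscr{H}^+}$) forces $V^2\Omega^{-2}\TSm[\alpha,\underline\alpha]\big|_{\overline{\mathscr{H}^+}}=0$; symmetrically, restricting \bref{ ts+ to parabolic ts-} to $\mathscr{I}^+$ and computing the transverse-null-derivative limits via \Cref{Phi 2 backwards} (extended to non-compactly supported data as in \Cref{RW backwards noncompact}) forces $\lim_{v\to\infty} r^5\TSp[\alpha,\underline\alpha]=0$. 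Since $\TSp[\alpha,\underline\alpha]$ solves \cref{T+2} and $\TSm[\alpha,\underline\alpha]$ solves \cref{T-2} by \Cref{propagation lemma}, and each now has vanishing radiation fields on both boundary components, the uniqueness statements of \Cref{+2 future backward scattering} and \Cref{-2 future backward scattering} force $\TSp[\alpha,\underline\alpha]=\TSm[\alpha,\underline\alpha]=0$ everywhere on $J^+(\overline\Sigma)$.

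The main obstacle I anticipate is the bookkeeping of decay and angular regularity: the auxiliary fields $\TSp[\alpha,\underline\alpha]$ and $\TSm[\alpha,\underline\alpha]$ involve up to four derivatives of the scattering data, so one must verify that the $\mathcal{TS}$-images $\upalpha_{\mathscr{H}^+},\upalpha_{\mathscr{I}^+}$ carry enough decay and angular smoothness for the commuted fluxes entering \bref{noncompact estimate} and the hypotheses of \Cref{Phi 2 backwards} and \Cref{RW backwards noncompact} to be finite. This is precisely where the exponential decay supplied by the parabolic--elliptic structure of the scalarised constraints in \Cref{alpha out of alphabar on H} and \Cref{alpha out of alphabar on scri} is indispensable; once that is recorded, the remainder is a faithful transcription of the argument for the preceding proposition.
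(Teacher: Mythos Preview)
Your proposal is correct and mirrors the paper's approach, which simply says ``repeat the above arguments.'' One slip worth flagging: you assert that $\upalpha_{\mathscr{I}^+}$ is compactly supported, but \Cref{alpha out of alphabar on scri} only yields $\upalpha_{\mathscr{I}^+}$ supported away from $\mathscr{I}^+_+$ --- as you yourself correctly note a few lines earlier --- and generically $\upalpha_{\mathscr{I}^+}$ grows quadratically as $u\to-\infty$ (only $\partial_u^3\upalpha_{\mathscr{I}^+}$ is compactly supported, thanks to $\int\underline\upalpha_{\mathscr{I}^+}=0$), so the global $L^2(\mathscr{I}^+)$ term in \bref{noncompact estimate} may well be infinite. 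This is harmless for the argument: the non-compact extension is only needed towards the \emph{future} end of $\mathscr{H}^+$ (where the exponential decay of $V^{-2}\upalpha_{\mathscr{H}^+}$ together with the nondegenerate transport estimates of \Cref{nondegenerate estimate near H+} suffice), while towards $\mathscr{I}^+$ the vanishing of $\upalpha_{\mathscr{I}^+}$ for $u>u_+$ already terminates the backward construction exactly as in \Cref{+2 backwards existence}, applied on each region $\{u\geq u_0\}$. The propagation step is in fact slightly simpler here than in the preceding proposition, since $\underline\alpha$ now has genuinely compactly supported scattering data and hence $\underline\Psi$ vanishes identically for $u>u_+$, $v>v_+$, making the limits entering \Cref{Phi 2 backwards} trivial.
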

This concludes the proof of \Cref{Theorem 3 detailed statement}, i.e.~\Cref{Theorem 3} of the introduction.
\subsection{A mixed scattering theory: proof of Corollary 1}\label{subsection 9.4 mixed scattering}
We are in a position to prove Corollary 1 of the introduction, i.e.~\Cref{corollary to be proven} of \Cref{subsection 4.4 Corollary 1: mixed scattering}:
\begin{proof}[Proof of Corollary 1]
We will construct the map $\mathscr{S}^{+2,-2}$ only in the forward direction on a dense subset of $\mathcal{E}^{T,-2}_{\overline{\mathscr{H}^-}}\oplus\;\mathcal{E}^{T,+2}_{\mathscr{I}^-}$. Let $\upalpha_{\mathscr{I}^-}\in \Gamma_c({\mathscr{I}^-})$, $\underline\upalpha_{\mathscr{H}^-}$ be such that $V^2\underline\upalpha_{\mathscr{H}^-}\in\Gamma_c(\overline{\mathscr{H}^-})$ and $\int_{-\infty}^\infty d\bar{v}\upalpha_{\mathscr{I}^-}=0$. The map $\mathcal{TS}^-$ of \Cref{Both TS+ and TS-} defines a scattering data set consisting of a smooth field $\underline\upalpha_{\mathscr{I}^-}$ on $\mathscr{I}^-$ which is supported away from the past end of $\mathscr{I}^-$, $\underline\upalpha_{\mathscr{H}^-}$ on $\overline{\mathscr{H}^-}$ which is supported away from the past end of $\overline{\mathscr{H}^-}$.\\
\indent The map ${}^{(+2)}\mathscr{B}^-$ of \Cref{+2 past forward scattering} gives rise to a smooth solution $\alpha$ on $J^-(\overline{\Sigma})$ such that
\begin{align}\label{545454}
\left\|\left(\alpha|_{\overline{\Sigma}}, \slashednabla_{n_{\overline\Sigma}}\alpha|_{n_{\overline\Sigma}}\right)\right\|_{\mathcal{E}^{T,+2}_{\overline{\Sigma}}}^2=\left\|\upalpha_{\mathscr{H}^-}\right\|^2_{\mathcal{E}^{T,+2}_{\overline{\mathscr{H}^-}}}+\left\|\upalpha_{\mathscr{I}^-}\right\|^2_{\mathcal{E}^{T,+2}_{\mathscr{I}^-}},
\end{align}
and the map ${}^{(+2)}\mathscr{F}^{+}$ extends $\alpha$ to a smooth solution of \bref{T+2} on $J^+(\overline{\Sigma})$. Combining \bref{545454} with the fact that $\alpha|_{{\Sigma^*}}, \slashednabla_{n_{{\Sigma^*}}}\alpha|_{\Sigma^*}$ are smooth implies that the estimates of \Cref{psiILED,,alphaILED,,ILED psi higherorder,,alphaILED higher order} apply, and we can apply \Cref{psi+2ptwisedecay,,alpha+2ptwisedecay,,horizonpsidecay} together with \Cref{WP+2Sigmabar} to conclude that $\alpha$ realises the image of ${}^{(+2)}\mathscr{F}^+$ on $\overline{\mathscr{H}^+}$ as its radiation field there.\\
\indent The scattering data set $(\underline\upalpha_{\mathscr{H}^-},\underline\upalpha_{\mathscr{I}^-})$ give rise to a unique smooth solution $\underline\alpha$ according to \Cref{past scattering of -2}, which in particular realises $\underline\upalpha_{\mathscr{H}^-},\underline\upalpha_{\mathscr{I}^-}$ as its radiation fields on $\mathscr{H}^-$, $\mathscr{I}^-$ respectively. The quantity $\underline\Psi=\left(\frac{r^2}{\Omega^2}\nablav\right)^2r\Omega^2\underline\alpha$ satisfies the Regge--Wheeler equation \bref{RW} and induces a radiation field on $\mathscr{I}^-$ that is given by $\uppsi_{\mathscr{I}^-}=\partial_v^2\underline\upalpha_{\mathscr{I}^-}$. Note that in particular, $\partial_v\uppsi_{\mathscr{I}^-}$ vanishes whenever $\underline\upalpha_{\mathscr{I}^-}$ vanishes on  $\mathscr{I}^-$.\\
\indent Assume the support of $\underline\upalpha_{\mathscr{I}^-}$ on $\mathscr{I}^-$ in $v$ is contained in $[v_-,v_+]$. Since $\alpha$ arises from scattering data of compact support, we can follow the steps leading to estimate \bref{this+2} taking into account \Cref{time inversion} to obtain the following: let $R$ be sufficiently large, then
\begin{align}\label{this-2}
\begin{split}
     \int_{\underline{\mathscr{C}}_v\cap\{r>R\}}d\bar{u}d\omega\; r^2|\nablau\underline\Psi|^2\lesssim_{v_+} R^2\Bigg[\|\underline\upalpha_{\mathscr{I}^-}&\|_{\mathcal{E}^{T,-2}_{\mathscr{I}^-}}^2+\|\underline\upalpha_{\mathscr{H}^-}\|_{\mathcal{E}^{T,-2}_{\overline{\mathscr{H}^-}}}^2\\&+\int_{[v_-,v_+]\times S^2}d\bar{v}d\omega\;|\mathring{\slashednabla}\partial_v^2{\underline\upalpha}_{\mathscr{I}^-}|_{S^2}^2+4|\partial_v^2{\underline\upalpha}_{\mathscr{I}^-}|_{S^2}^2\Bigg].
\end{split}
\end{align}
Let $v_1>v_+$, then we can use \bref{this-2} to show that $\sqrt{r}\partial_v\Psi|_{u,v}\longrightarrow 0$ as $u\longrightarrow-\infty$: 
\begin{align}
\begin{split}
    |\nablav\Psi|&\leq \int_{-\infty}^u d\bar{u} |\nablav\nablau\Psi|\lesssim\int_{-\infty}^u d\bar{u}\frac{1}{r^2}|\mathring{\slashed{\Delta}}\underline\Psi+\underline\Psi|\lesssim \frac{1}{\sqrt{r(u,v)}}\sqrt{\int_{-\infty}^u d\bar{u}\frac{1}{r^2}|\mathring{\slashed{\Delta}}\underline\Psi|^2+|\mathring{\slashednabla}\underline\Psi|^2+\underline\Psi|^2}\\& \lesssim \frac{1}{\sqrt{r(u,v)}}\sqrt{\sum_{|\gamma|\leq2} F^T_v[\slashed{\mathcal{L}}_{\Omega^\gamma}\underline\Psi}].
\end{split}
\end{align}
where $\Omega^\gamma=\Omega_1^{\gamma_1}\Omega_2^{\gamma_2}\Omega_3^{\gamma_3}$ denotes Lie differentiation with respect to the $so(3)$ algebra of $S^2$ Killing fields. Now take $u_1<u_2, v_2>v_1$ such that $(u_2,v_1,\theta^A)\in J^-(\overline\Sigma)$ and $r(u_2,v_1)>R$. We can repeat the procedure leading to \Cref{RWrp} in the region $\mathscr{D}^{u_2,v_2}_{u_1,v_1}$ to get for $p\in[0,2]$:
\begin{align}\label{747474}
\begin{split}
    &\int_{\mathscr{C}_{u_2}\cap[v_1,v_2]} d\bar{v}\sin\theta d\theta d\phi \;r^p|\nablav\underline\Psi|^2+\int_{\underline{\mathscr{C}}_{v_2}\cap[u_1,u_2]}d\bar{u}\sin\theta d\theta d\phi\;r^p\left[|\slashednabla\underline\Psi|^2+\frac{1}{r^2}|\underline\Psi|^2\right]\\&+ \int_{\mathscr{D}^{u_2,v_2}_{u_1,v_1}}d\bar{u}d\bar{v}\sin\theta d\theta d\phi\;r^{p-1}\left[p|\nablav\underline\Psi|^2+(2-p)|\slashednabla\underline\Psi|^2+r^{p-3}|\underline\Psi|^2\right]\\&\lesssim\int_{\mathscr{C}_{u_1}\cap[v_1,v_2]} d\bar{v}\sin\theta d\theta d\phi\; r^p|\nablav\underline\Psi|^2+\int_{\underline{\mathscr{C}}_{v_1}\cap[u_1,u_2]}d\bar{u}\sin\theta d\theta d\phi\;r^p\left[|\slashednabla\underline\Psi|^2+\frac{1}{r^2}|\underline\Psi|^2\right].
\end{split}
\end{align}
Set $p=1$ in \bref{747474}. Keeping $v_1,v_2$ fixed and taking $u_1\longrightarrow-\infty$, the first term on the right hand side of \bref{747474} decays. The remaining term can be estimated by \bref{this-2} and applying Hardy's inequality, knowing that $\underline\Psi$ and its angular derivatives converge pointwise towards $\mathscr{I}^-$. In conclusion we have
\begin{align}\label{rp estimate from past null infinity}
\begin{split}
    &\int_{\mathscr{D}^{u_2,\infty}_{-\infty,v_1}}d\bar{u}d\bar{v}\sin\theta d\theta d\phi\;r^{p-1}\left[p|\nablav\underline\Psi|^2+(2-p)|\slashednabla\underline\Psi|^2+r^{p-3}|\underline\Psi|^2\right]\\
    &\lesssim_{R}\sum_{|\gamma|\leq2} \Bigg[\|\slashed{\mathcal{L}}_{\Omega^\gamma}\underline\upalpha_{\mathscr{I}^-}\|_{\mathcal{E}^{T,-2}_{\mathscr{I}^-}}^2+\|\slashed{\mathcal{L}}_{\Omega^\gamma}\underline\upalpha_{\mathscr{H}^-}\|_{\mathcal{E}^{T,-2}_{\overline{\mathscr{H}^-}}}^2+\int_{[v_-,v_+]\times S^2}d\bar{v}d\omega\;|\mathring{\slashednabla}\partial_v^2{\slashed{\mathcal{L}}_{\Omega^\gamma}\underline\upalpha}_{\mathscr{I}^-}|_{S^2}^2+4|\partial_v^2{\slashed{\mathcal{L}}_{\Omega^\gamma}\underline\upalpha}_{\mathscr{I}^-}|_{S^2}^2\Bigg].
\end{split}
\end{align}
We can extend the region $\mathscr{D}^{u_2,\infty}_{-\infty,v_1}$ to obtain \bref{rp estimate from past null infinity} over a region $\mathscr{D}^{\infty,\infty}_{-\infty,v_1}\cap\{r>R\}$. In view of the monotonicity of $F^T_u[\underline\Psi]\cap\{r>R\}$, this implies in particular that
\begin{align}
    \lim_{u\longrightarrow\infty} \int_{\mathscr{C}_u\cap\{r>R\}}d\bar{v}\sin\theta d\theta d\phi \;\frac{\Omega^2}{r^2}|\underline\Psi|^2=0.
\end{align}
Now we show that $\underline\alpha$ induces a radiation field $\underline\upalpha_{\mathscr{I}^+}$ on $\mathscr{I}^+$ which is in $\mathcal{E}^{T,+2}_{\mathscr{I}^+}$. First, note that energy conservation is sufficient to show that $\underline\alpha, \underline\psi$ attains radiation fields on $\mathscr{I}^+$: Fix $u$ and take $v_2>v_1$:
\begin{align}\label{corollary to be proven, radiation field exists}
    |r^3\Omega\underline\psi(u,v_2,\theta^A)-r^3\Omega\underline\psi(u,v_1,\theta^A)|\leq\int_{v_1}^{v_2} d\bar{v} \frac{\Omega^2}{r^2}|\underline\Psi|\leq\frac{1}{\sqrt{r(u,v_1)}}\;\sqrt{\int_{v_1}^{v_2} d\bar{v} \frac{\Omega^2}{r^2}|\underline\Psi|^2}.
\end{align}
by commuting with angular derivatives and using a Sobolev estimate as in the proof of \Cref{RWradscri}, this shows that for any sequence $\{v_n\}$ with $v_n\longrightarrow\infty$ we have that $r^3\Omega\underline\psi(u,v_n,\theta^A)$ is a Cauchy sequence, and an identical argument yields the same for $\underline\alpha$. Denote the limit of $r^3\underline\psi$ near $\mathscr{I}^+$ by $\underline\psi_{\mathscr{I}^+}$.\\
\indent Since $r^5\underline\psi$ converges near $\mathscr{I}^-$, estimate \bref{corollary to be proven, radiation field exists} can be easily modified to show that $\underline\psi_{\mathscr{I}^+}$ decays towards the past end of $\mathscr{I}^+$. As for the future end of $\mathscr{I}^+$, we repeat the estimate \bref{corollary to be proven, radiation field exists} estimating $\underline\psi_{\mathscr{I}^+}$ in terms of $\underline\psi$ along a hypersurface $\{r=R\}$ for a fixed $R$. Since $\underline\alpha$ is smooth and $\|(\underline\alpha|_{\overline{\Sigma}},\slashednabla_{n_{\overline{\Sigma}}}\underline\alpha|_{\overline{\Sigma}})\|_{\mathcal{E}^{T,-2}_{\overline{\Sigma}}}<\infty$, the results of \Cref{-2 radiation on H+} apply and we can deduce that $\underline\psi|_{r=R}$ decays as $t\longrightarrow\infty$, and this says that $\underline\psi_{\mathscr{I}^+}$ decays towards the future end of $\mathscr{I}^+$.\\
\indent We now show that $\int_{-\infty}^\infty d\bar{u} \;\underline\upalpha_{\mathscr{I}^+}=0$. Consider the $-2$ Teukolsky equations \bref{T-2}, which we write as follows:
\begin{align}
    \frac{\Omega^2}{r^2}\nablau r^5\Omega^{-1}\underline\psi=\left(\mathcal{A}_2-\frac{6M}{r}\right)r\Omega^2\underline\alpha.
\end{align}
It can be shown that the limit towards $\mathscr{I}^+$ produces
\begin{align}\label{858585}
    \partial_u \underline\psi_{\mathscr{I}^+}=\mathcal{A}_2\;\underline\upalpha_{\mathscr{I}^+}
\end{align}
We can conclude by observing that $\underline\psi_{\mathscr{I}^+}$ decays towards both ends of $\mathscr{I}^+$. With this we can also conclude that $\underline\upalpha_{\mathscr{I}^+}\in \mathcal{E}^{T,-2}_{\mathscr{I}^+}$ and that
\begin{align}
    \|\underline\upalpha_{\mathscr{I}^+}\|_{\mathcal{E}^{T,-2}_{\mathscr{I}^+}}^2+ \|\upalpha_{\mathscr{H}^+}\|_{\mathcal{E}^{T,+2}_{\overline{\mathscr{H}^+}}}^2= \|\upalpha_{\mathscr{I}^-}\|_{\mathcal{E}^{T,+2}_{\mathscr{I}^-}}^2+\|\underline\upalpha_{\mathscr{H}^-}\|_{\mathcal{E}^{T,-2}_{\overline{\mathscr{H}^-}}}^2.
\end{align}
\end{proof}
\begin{remark}
The result above subsumes a restricted map to scattering data in $\mathcal{E}^{T,-2}_{\mathscr{H}^-}$, $\mathcal{E}^{T,+2}_{\mathscr{H}^+}$, which leads to an isomorphism
\begin{align}
    \mathscr{S}^{+2,-2}:\mathcal{E}^{T,-2}_{\mathscr{I}^-}\oplus\mathcal{E}^{T,-2}_{\mathscr{H}^-}\longrightarrow \mathcal{E}^{T,-2}_{\mathscr{H}^+}\oplus \mathcal{E}^{T,-2}_{\mathscr{I}^+}.
\end{align}
\end{remark}
\appendix
\section{Robinson--Trautman spacetimes}\label{Appendix A Robinson--Trautman}
In \cite{DHR16}, it was shown that solutions to the linearised Einstein equations where $\overone{{\Psi}}=\overone{\underline\Psi}=0$ can be identified with the Robinson--Trautman family of spacetimes near Schwarzschild. This family of spacetimes is defined by the condition that they admit a null geodesic congruence that is shear-free and twist-free, and as such these spacetimes are algebraically special of Petrov type D in vacuum. These conditions lead to the reduction of the Einstein equations to a nonlinear parabolic equation, and this leads to interesting properties, such as the fact that for positive mass $M$, a generic member of this family can not be smoothly extended through the event horizon \cite{ChruscielSingleton}. \\
\indent It is easy to see that linearised Robinson--Trautman solutions cannot arise with data for $\alpha$ in $\mathcal{E}^{T+2}_{\mathscr{I}^\pm}\oplus\mathcal{E}^{T+2}_{\mathscr{H}^\pm}$ and data for $\overone{\underline\alpha}$ in $\mathcal{E}^{T-2}_{\mathscr{I}^\pm}\oplus\mathcal{E}^{T-2}_{\mathscr{H}^\pm}$. If $\overone{\Psi}=\overone{\underline\Psi}=0$ everywhere then $\|\overone{\upalpha}_{\mathscr{I}^\pm}\|_{\mathcal{E}^{T,+2}_{\mathscr{I}^\pm}}=\|\overone{\upalpha}_{\mathscr{H}^\pm}\|_{\mathcal{E}^{T,+2}_{\mathscr{H}^\pm}}=0$ and $\|\overone{\upalpha}_{\mathscr{I}^\pm}\|_{\mathcal{E}^{T,-2}_{\mathscr{I}^\pm}}=\|\overone{\upalpha}_{\mathscr{I}^\pm}\|_{\mathcal{E}^{T,-2}_{\mathscr{I}^\pm}}=0$, which means $\overone{\alpha}=\overone{\underline\alpha}=0$.
\section{The double null gauge and the Einstein vacuum equations}\label{Appendix B Double null guage}
The following is a synopsis of sections 3, 4 of \cite{DHR16}.
Let $(\mathscr{M},g)$ be a Lorentzian manifold. A coordinate system $(u,v,\theta^A)$ is said to define a double null gauge if the loci of $u,v,$ denoted by $\mathscr{C}_u, \underline{\mathscr{C}}_v$ respectively, constitute foliations of spacetime by null hypersurfaces with respect to $g$. The metric $g$ in a double null gauge takes the form
\begin{align}\label{DNGmetric}
    \bm{ds}^2=-4\bm{\Omega}^2\bm{dudv}+\bm{\slashed{g}}_{AB}(\bm{d\theta}^A-\bm{b}^A\bm{dv})(\bm{d\theta}^B-\bm{b}^B\bm{dv}).
\end{align}
Here, $(\theta^A)$ are coordinates on the 2-manifolds $\mathcal{S}_{u,v}=\mathscr{C}_u\cap\underline{\mathscr{C}}_v$ that are intersections of constant $u,v$ hypersurfaces, $\Omega$ is a scalar, $b^A$ is a vector field that is tangent to $\mathcal{S}_{u,v}$. This gauge comes with a null frame $(e_3,e_4,e_1,e_2)$:
\begin{align}
    \bm{e}_3=\frac{1}{\bm{\Omega}}\bm{\partial}_{\bm{u}}\qquad\qquad \bm{e}_4=\frac{1}{\bm{\Omega}}(\bm{\partial}_{\bm{v}}-\bm{b}^A\bm{\partial}_A),
\end{align}
$\{\bm{e}_A,A=1,2\}$ is a frame associated to the coordinates $(\theta^A)$ on $\mathcal{S}_{u,v}$, such that $\bm{e}_A\cdot \bm{e}_B=\bm{\slashed{g}}_{AB}$.\\
\indent Let $\bm{\nabla}$ be the Levi--Civita connection associated with the metric $\bm{g}$. In a double null gauge the connection and curvature and organised into $\mathcal{S}_{u,v}$-tangent tensor fields. The following are the \textit{connection coefficients}:
\begin{align}
\begin{split}
\bm{\chi}_{AB}&=\bm{g}(\bm{\nabla}_A \bm{e}_4,\bm{e}_B)\;\;\;,\;\;\;\bm{\underline{\chi}}_{AB}=\bm{g}(\bm{\nabla}_A \bm{e}_3,\bm{e}_B)\\
\bm{\eta}_A&=-\frac{1}{2}\bm{g}(\bm{\nabla}_{3}\bm{e}_A,\bm{e}_4)\;\;\;,\;\;\;\bm{\underline\eta}_A=-\frac{1}{2}\bm{g}(\bm{\nabla}_{4}\bm{e}_A,\bm{e}_3)\\
\bm{\hat{\omega}}&=\frac{1}{2}\bm{g}(\bm{\nabla}_{4}\bm{e}_3,\bm{e}_4)\;\;\;,\;\;\;\bm{\hat{\underline{\omega}}}=\frac{1}{2}\bm{g}(\bm{\nabla}_{3}\bm{e}_4,\bm{e}_3)\\
\bm{\zeta}&=\frac{1}{2}\bm{g}(\bm{\nabla}_A \bm{e}_4,\bm{e}_3)
\end{split}
\end{align}
We further decompose $\bm{\chi}$ and $\bm{\underline\chi}$ into their trace $\frac{1}{2}\bm{g}(\bm{tr\chi}),\frac{1}{2}\bm{g}(\bm{tr\underline{\chi}}) $ and traceless symmetric parts $\bm{\hat{\chi}},\bm{\hat{\underline{\chi}}} $. In the Schwarzschild background only $\bm{\hat{\omega}},\bm{\hat{\underline{\omega}}}$ and the traces of $\bm{\chi}, \bm{\underline{\chi}}$ survive:
\begin{align}\label{SchwValues1}
\begin{split}
\bm{\chi}_{AB}&=\frac{\Omega}{r}\slashed{\bm{g}}_{AB}\qquad\qquad\qquad \bm{\underline{\chi}}_{AB}=-\frac{\Omega}{r}\slashed{\bm{g}}_{AB}\\
\bm{\hat{\omega}}=\frac{M}{r^2\Omega}&=-\frac{\Omega x}{2r}\qquad\qquad\qquad \bm{\hat{\underline{\omega}}}=-\frac{M}{r^2\Omega}=\frac{\Omega x}{2r}
\end{split}
\end{align}
The curvature components are organised as follows:
\begin{align}
\begin{split}
\bm{\alpha}_{AB}&=\bm{R}(\bm{e}_A,\bm{e}_4,\bm{e}_B,\bm{e}_4)\qquad\qquad\qquad\qquad\bm{\underline\alpha}_{AB}=\bm{R}(\bm{e}_A,\bm{e}_3,\bm{e}_B,\bm{e}_3)\\
\bm{\beta}_A&=\frac{1}{2}\bm{R}(\bm{e}_A,\bm{e}_4,\bm{e}_3,\bm{e}_4)\qquad\qquad\qquad\qquad\bm{\underline\beta}_A=\frac{1}{2}\bm{R}(\bm{e}_A,\bm{e}_3,\bm{e}_3,\bm{e}_4)\\
\bm{\rho}&=\frac{1}{4}\bm{R}(\bm{e}_4,\bm{e}_3,\bm{e}_4,\bm{e}_3)\qquad\qquad\qquad\qquad\bm{\sigma}=\frac{1}{4}*\bm{R}(\bm{e}_4,\bm{e}_3,\bm{e}_4,\bm{e}_3)
\end{split}
\end{align}
with $*\bm{R}_{abcd}=\bm{\epsilon}_{abef}\bm{R}^{ef}_{\;\;\;cd}$ denoting the Hodge dual on $(\mathscr{M},\bm{g})$ of $\bm{R}$.\\
\indent For the Schwarzschild metric, the only non-vanishing component is
\begin{align}\label{SchwValue2}
    \bm{\rho}=-\frac{2M}{r^3}.
\end{align}
For a tensor field $\bm{\xi}$ that is tangent to $\mathcal{S}_{u,v}$ for all $u,v$, the expressions $\bm{\slashednabla}_3\bm{\xi}, \bm{\slashednabla}_4\bm{\xi}$ denote the projections of the covariant derivatives $\bm{\nabla}_3\bm{\xi}, \bm{\nabla}_4\bm{\xi}$ onto the tangent space of $\mathcal{S}_{u,v}$. Thus $\bm{\slashednabla}_3\bm{\xi}, \bm{\slashednabla}_4\bm{\xi}$ are also tangent to $\mathcal{S}_{u,v}$ for all $u,v$. Denote by $\bm{D}\bm{\xi}, \bm{\underline{D}}\bm{\xi}$ the projections of the Lie derivatives of $\bm{\xi}$ in the 4,3 directions respectively, then if $\xi$ is a 1-form we have
\begin{align}
    \bm{\Omega\slashednabla}_4 \bm{\xi}_A=(\bm{D}\bm{\xi})_A+\bm{\Omega{\chi}}_A{}^B\bm{\xi}_B,\\
    \bm{\Omega\slashednabla}_3 \bm{\xi}_A=(\bm{\underline{D}}\bm{\xi})_A+\bm{\Omega{\underline{\chi}}}_A{}^B\bm{\xi}_B.\label{connection formula appendix}
\end{align}
and so on for higher order tensor fields. Let $\bm{\mathcal{D}}_1 $ be the operator acting on a $\mathcal{S}_{u,v}$-tangent 1-form $\bm\xi$ by $\bm{\mathcal{D}}_1 \bm\xi=(\slashed{\textbf{div}}\xi,\slashed{\textbf{curl}}\xi)$ and denote its $L^2(\mathcal{S}_{u,v})$-dual by $\bm{\mathcal{D}}_1^*$. Let $\bm{\mathcal{D}}_2$ be the operator acting on a $\mathcal{S}_{u,v}$-tangent 2-form $\bm\Xi$ by $(\bm{\mathcal{D}}_2 \bm\Xi)_A=\slashednabla^B\bm\Xi{BA}$ and denote its $L^2(\mathcal{S}_{u,v})$ dual by $\bm{\mathcal{D}}_2^*$.  \\
\indent The vacuum Einstein equations read
\begin{align}\label{EinsteinVac}
\bm{R}_{ab}[\bm{g}]=0.
\end{align}
For a metric satisfying \bref{EinsteinVac} we know that only the conformal part of the curvature tensor survives: $\bm{R}_{abcd}=\bm{W}_{abcd}$. In particular, $\bm\alpha_{AB}$ and $\underline{\bm\alpha}_{AB}$ are traceless. Furthermore, combining the vanishing of the Ricci curvature with the Bianchi identities implies:
\begin{align}\label{Bianchibody}
\nabla^a \bm{W}_{abcd}=\nabla^a*\bm{W}_{abcd}=0
\end{align}
Equations \bref{Bianchibody} are called the Bianchi equations and it is easy to see that they are in fact equivalent to the vacuum Einstein equations.\\
\indent We now describe how to linearise the Einstein equations against a fixed background in this gauge. We denote the background values of the quantities involved by unbolding their symbols, and their linearised versions are further distinguished by the superscript $\overone{{ }}$. For example:
\begin{align}
\bm{\Omega}=\Omega+\epsilon \overone{\Omega}.
\end{align}
Similarly,
\begin{align}
\bm{\slashed{g}}_{AB}=\slashed{g}_{AB}+\epsilon \overone{\slashed{g}}_{AB}\qquad\qquad \bm{b}{}_A=0+\epsilon\overone{b}_A
\end{align}
And so on for the connection and curvature components. Note that we further decompose the linearised metric by separating out its trace with respect to $\slashed{g}$:
\begin{align}
    \overone{\slashed{g}}_{AB}\;=\;\overone{\hat{\slashed{g}}}_{AB}+\frac{1}{2}\slashed{g}_{AB}\tr\overone{\slashed{g}}
\end{align}
We decompose $\bm{\chi}$ and $\underline{\bm{\chi}}$ to their traceless and pure trace parts:
\begin{align}
\bm{\chi}_{AB}=\hat{\bm{\chi}}_{AB}+\tr\bm\chi \;\bm{\slashed{g}}_{AB}\qquad\qquad\qquad\qquad \bm{\underline\chi}_{AB}=\hat{\bm{\underline\chi}}_{AB}+\tr\bm{\underline\chi}\; \bm{\slashed{g}}_{AB}
\end{align}
and we linearise $\bm{\hat{\chi}}, \bm{\hat{\underline\chi}}$ and $\bm{\Omega}\tr\bm{\chi}, \bm{\Omega}\tr\bm{\underline\chi}$ separately:
\begin{align}
\bm{\hat{\chi}}=\hat{\chi}+\epsilon \overone{\hat{\chi}}\qquad\qquad\qquad\qquad \bm{\Omega}\tr\bm{\chi}=\Omega\tr\chi+\epsilon \;\overone{\Omega tr\chi}\\
\bm{\hat{\underline\chi}}=\hat{\underline\chi}+\epsilon \overone{\hat{\underline\chi}}\qquad\qquad\qquad\qquad \bm{\Omega}\tr\bm{\underline\chi}=\Omega\tr\underline\chi+\epsilon \;\overone{\Omega tr\underline\chi}
\end{align}
For an example of linearisation against a Schwarzschild background, we can look at \bref{Bianchibody} taking $(b,c,d)=(4,A,4)$:
\begin{align}\label{example}
    \bm{\slashednabla}_3\bm{\alpha}+\frac{1}{2}{\textbf{tr}}\bm{\underline\chi}\bm{\alpha}+2\bm{\hat{\underline\omega}}\bm{\alpha}=-2\bm{\slashed{\mathcal{D}}}_2^*\bm{\beta}-3\bm{\hat\chi}\bm{\rho}-3{}^*\bm{\hat\chi}\sigma+(5\bm{\eta}-\bm{\Omega}^{-1}\bm{\slashednabla}_A\bm{\Omega})\widehat{\otimes}\bm\beta.
\end{align}
Since we want to keep only the leading order terms in $\epsilon$, we can use the Eddington--Finkelstein coordinates of the Schwarzschild background to write the perturbed metric in the form \bref{DNGmetric}. Using \bref{connection formula appendix}, the fact that $\alpha=0$ and keeping only leading order terms in $\epsilon$ yields
\begin{align}
    \bm{\slashednabla}_3\bm{\alpha}=\epsilon\cdot\nablau\overone\alpha+O(\epsilon^2).
\end{align}
where $\nablagml$ is the background Schwarzschild covariant derivative in the 3-direction. Following this recipe for the remaining terms of \bref{example} taking into account the background Schwarzschild values \bref{SchwValues1}, \bref{SchwValue2} yields the equation governing $\overone{\alpha}$ in \bref{Bianchi +2}. For the full details of the linearisation leading to equations \bref{start of full system}--\bref{Bianchi 0*} see \cite{DHR16}.

\addcontentsline{toc}{section}{References}

\printbibliography

\end{document}